\def\O{\Omega}  
\def\R{\mathbb{R}} 
\def\C{\mathbb{C}}
\def\N{\mathbb{N}}
\def\Tpq{\mathfrak{T}^{p}_{q}}
\def\g{\mathfrak{g}}
\newtheorem{theorem}{Théorème}[section]
\newtheorem{lemma}[theorem]{Lemme}
\newtheorem{proposition}[theorem]{Proposition}
\newtheorem{defi}{Définition}
\newtheorem{rmq}{Remarque}
\newtheorem{proprietes}[theorem]{Propriétés}
\newtheorem{propriete}[theorem]{Propriété}
\title{De la structure de jauge des équations de Maxwell à l'étude des théories au delà du modèle standard grâce à la mesure du couplage trilinéaire du champ de Higgs}
\author{Valdo Tatitscheff\\
	\textit{valdo.tatitscheff@ens.fr}}
\date{03/09/2015}
\begin{document}
\makeatletter
\begin{titlepage}
	\centering
	{\large \textsc{École Normale Supérieure de Paris}}\\
	\textsc{Département de Mathématiques et Applications - Département de Physique}\\
	\vspace{3cm}
	{\large\textbf{	\@date\\
			Mémoire de Licence}}\\
	\vfill
	{\LARGE \textbf{\@title}} \\
	\vspace{7mm}
	{\large \@author} \\
	\vspace{5mm}
	\textbf{Encadrant en mathématique : Thierry Lévy (UPMC)}\\
	\textbf{Encadrant en physique : Roberto Salerno (CNRS, CMS)}
	\vspace{5mm}
	\newenvironment{abstract}%
    {\thispagestyle{empty}\null\vfill\begin{center}%
    \bfseries\abstractname\end{center}}%
    {\vfill\null}
        \begin{abstract}
        Le concept de théorie de jauge a progressivement émergé durant le 20ème siècle, lors de la construction du Modèle Standard de la physique des particules. Ce mémoire aborde divers aspects des théories de jauge. Tout d'abord, une interprétation "moderne" des équations de Maxwell en termes de formes différentielles motive une définition géométrique des théories de jauge comme une étude des connexions sur les fibrés principaux. Après un rapide résumé de l'histoire des théories de jauge en physique des particules, la construction géométrique précédente est appliquée à l'étude de la théorie électrofaible, et du Modèle Standard. Si ce dernier (dont la construction s'est achevée dans les années 1980) a aujourd'hui fait ses preuves expérimentales, il ne demeure pas moins incomplet, au moins d'un point de vue théorique. Le boson de Higgs, observé pour la première fois en 2012 au LHC, est une particle au sujet de laquelle on n'a aujourd'hui que très peu d'informations. Des indications intéressantes quant'à aux biais par lesquels il est le plus judicieux d'étendre le Modèle Standard pourraient être données si l'on observait des différences notoires entre la valeur mesurée de certains paramètres du boson de Higgs, et celle prévue par le Modèle Standard. Le couplage trilinéaire du boson de Higgs fait partie de ces paramètres intéressants. Ainsi, la dernière partie est une étude phénoménologique de l'impact de variations de cette quantité sur les distributions typiquement observables par le détecteur de particules CMS, au LHC.
        \end{abstract}
	\vfill
\end{titlepage}
\makeatother

\chapter*{Introduction}

\subsection*{Les équations de Maxwell}
Notre étude débute au milieu du $XIX^{eme}$ siècle, lorsque James Clerk Maxwell, en se basant principalement sur les travaux de Faraday et Ampère, développe un cadre mathématique à l'électromagnétisme. Il rassemble (en 1965) en vingt équations différentielles à vingt variables \cite{max1}, les lois qui décrivent le comportement des champs électriques et magnétiques, et leur interaction avec la matière. Elles prévoient en particulier l'existence d'une onde, perturbation du champ électromagnétique, se déplaçant dans le vide à une vitesse finie, accessible expérimentalement. Maxwell calcule avec les données de l'époque 310 740 000 $m.s^{-1}$. Pour citer son papier de 1965 : "\textit{The general equations are next applied to the case of a magnetic disturbance propagated through a non-conductive field, and it is shown that the only disturbances which can be so propagated are those which are transverse to the direction of propagation, and that the velocity of propagation is the velocity v, found from experiments such as those of Weber, which expresses the number of electrostatic units of electricity which are contained in one electromagnetic unit. This velocity is so nearly that of light, that it seems we have strong reason to conclude that light itself (including radiant heat, and other radiations if any) is an electromagnetic disturbance in the form of waves propagated through the electromagnetic field according to electromagnetic laws}". C'est une véritable révolution. De plus, ces idées ont directement mené à la relativité restreinte. Nous verrons plus tard en quoi ces équations donnaient en fait également un avant-goût de la physique quantique.\\\\ 
Dans son traité de 1973 reprenant en particulier ces travaux, Maxwell a modifié ses équations en utilisant des notations quaternioniques \cite{max2} ce qui réduit leur nombre à huit. Ce n'est que plus tard qu'Heaviside et Gibbs introduisent les notations vectorielles qui ont perduré jusqu'à aujourd'hui, et dérivent les fameuses "quatre équations de Maxwell".\\\\
L'apport de la géométrie différentielle au 20ème siècle permet enfin de réécrire ces équations sous la forme de deux équations seulement, plus générales que les équations de Maxwell dans le sens où elles sont définies dans un espace-temps courbe (à peu près) quelconque. C'est l'objet du premier chapitre que de dériver ces deux équations grâce au langage géométrique.

\subsection*{De nouvelles dimensions pour géométriser les interactions...}
Nous verrons au cours de notre travail, que pour dériver une géométrie agréable pour décrire des interactions, il est nécessaire de "rajouter des dimensions au dessus de l'espace(-temps)".
En guise d'exemple, considérons le cas particulier de la gravitation. On veut trouver un environnement adapté pour décrire la gravitation ; on demande une géométrie de l'espace(-temps) telle qu'un objet soumis uniquement à son poids suive les géodésiques de cette géométrie. Sur un espace courbe "quelconque", une courbe géodésique entre deux points minimise la longueur du parcours. Ces courbes particulières vérifient une propriété très forte : deux géodésiques tangentes en un point sont forcément confondues. Si on s'intéresse aux trajectoires suivies par deux boulets de canon dans notre espace $\R^{3}$, elles peuvent être tangentes en un point sans toutefois être confondues, par conséquent les deux boulets ne suivent pas les géodésiques de la géométrie euclidienne à trois dimensions. En ajoutant une dimension temporelle, la condition de tangence sur les trajectoires dans l'espace à quatre dimensions ajoute la condition d'égalité des vitesses des deux boulets au point de tangence. On ne peut alors plus affirmer que les deux masses ne suivent pas les géodésiques de cette géométrie. La relativité générale dit même en fait que les chemins suivis par des objets massifs soumis uniquement à la gravité sont des géodésiques pour cette géométrie imposée sur l'espace temps. Ainsi \textbf{rajouter une dimension permet de géométriser la gravitation}.

\begin{figure}[!h]
	\centering
	\includegraphics{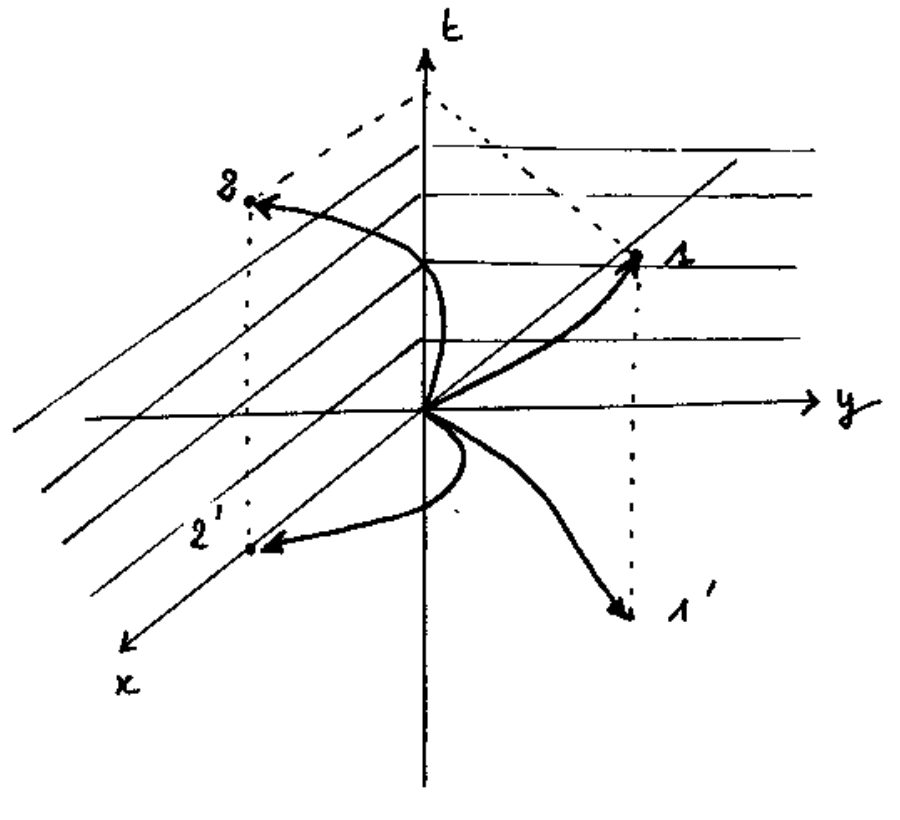}
	\caption{Rajouter une dimension temporelle lève l'indétermination liée à la vitesse du mobile. Ici on a représenté une dimension de la surface de la Terre, l'axe x, avec la normale verticale que constitue l'axe y.}
\end{figure}

Si maintenant les boulets sont chargés, de charges opposées, et plongés dans un champ électromagnétique, les trajectoires seront à nouveau différentes. Nous verrons qu'une solution est de rajouter une \textbf{dimension de charge}. L'espace résultant à cinq dimensions, étudié par Kaluza (1921) puis Klein (1926) et surtout Weyl, a en fait une structure de \textbf{fibré principal} de groupe $U(1)$ que nous définirons dans le second chapitre. Une \textbf{connexion} ou \textbf{potentiel de jauge} munit cet espace d'une bonne géométrie et nous permet de parler de la composante de charge d'une géodésique de cet espace à 5 dimensions. Si une géodésique a une charge q, sa projection sur l'espace temps est la trajectoire (non géodésique) d'un objet de charge q soumis à la force du potentiel de jauge. L'équivariance du potentiel de jauge sous l'action de $U(1)$ assure en fait la conservation de la charge. Les trois ouvrages principalement utilisés pour la partie mathématique sont les livres de S. Bleecker \textit{Gauge theory and variational principles} \cite{bleecker}, de R. Coquereaux \textit{Espaces fibrés et connexions} \cite{coq} et de S. Kobayashi et K. Nomizu \textit{Foundations of Differential Geometry} \cite{kobay}.

\subsection*{Présentation du mémoire}
C'est lors de la XIXème édition du Séminaire Poincaré que j'ai rencontré Yves Sirois, qui présentait "la découverte du boson H au LHC". J'ai pu, grâce à lui, faire mon stage d'un mois au laboratoire Leprince-Ringuet (LLR), sur le campus de l'école Polytechnique, dans la collaboration CMS, et encadré par Roberto Salerno, sur le sujet "Etude de la production double de bosons de Higgs au delà du modèle standard". Dans le cadre du cursus mixte maths-physique de licence de l'ENS, M. Bouttier et M. Kashani Poor m'ont fait rencontrer Thierry Lévy, qui a accepté de m'encadrer pour la partie mathématique de mon mémoire, avec la problématique suivante : "En quoi les équations de Maxwell contiennent-elle intrinsèquement des germes de théories quantiques de champs ?".\\\\
Afin d'essayer de répondre à cette dernière question, et présenter les résultats obtenus à l'issue du stage de physique expérimentale au sein de la collaboration CMS, ce travail, qui résume mes recherches personnelles et mes travaux de stage durant ma dernière année de licence 2014-2015, est divisé en quatre parties successives.\\\\
La première partie traite deux idées principales. Tout d'abord, nous montrons que les équations de Maxwell sont relativistes, ce qui passe par l'introduction d'objets chers à la relativité (voir les notes historiques du début du second chapitre), les tenseurs, dont nous rediscuterons la signification profonde, en lien avec \textbf{l'existence d'un objet}, au sens mathématique, lorsque les propos auront été nourris par des concepts tels que la \textbf{bonne définition au sens de la jauge} pour des champs de particules. Ensuite, il s'agit d'introduire le cadre de géométrie différentielle, les notions allant de pair avec le concept de variété, permettant de réécrire les équations de Maxwell sous leur forme moderne (deux équations). Ce cadre permet, comme nous l'avons dit, de généraliser les équations de Maxwell ; il devient possible de faire de l'électromagnétisme dans un espace-temps courbé.\\\\
Le deuxième partie comprend tout d'abord un retour sur l'histoire de l'émergence du concept de jauge dans les théories des interactions, puis introduit les objets centraux des théories de jauge : espaces fibrés, connexions, courbure. Les équations de Maxwell ne sont qu'un cas particulier de ces théories ; c'est là que nous verrons aussi ce que les équations de Maxwell ont d'intrinsèquement quantique.\\\\
Dans le troisième chapitre, nous proposons une approche (un peu) originale de la théorie standard électrofaible, ou modèle de Glashow-Salam-Weinberg, dans la mesure où elle est toujours traitée, dans les ouvrages de théories quantiques des champs, de manière "physique", avec le poids de l'histoire et des traditions, ce qui ne met pas forcément en valeur la structure de jauge qui, nous l'espérons, sera apparue comme fondamentale avec le regard des chapitres précédents. Le but est donc de dériver la théorie standard électrofaible en partant de la structure de jauge comme fondement absolu. Comme pour l'émergence de la jauge, l'histoire est indissociable de la manière dont s'est construite la théorie que nous connaissons aujourd'hui, c'est pourquoi nous présentons avant tout les grandes étapes de la découverte et de l'étude des interactions faibles, jusqu'aux miracles les plus récents. Après le modèle standard électrofaible (sans la masse), le mécanisme de Higgs est présenté, puis la brisure de la symétrie $SU(2)$, et l'apparition de la masse, mathématiquement, dans la théorie, mais aussi physiquement, dans l'Univers, une nanoseconde après le Big Bang. Enfin, un rapide tableau des grandes familles de théories "Beyond the Standard Model" (BSM) est dressé.\\\\ 
Le quatrième et dernier chapitre, dont la partie précédente constitue le cadre théorique, résume les travaux et résultats du stage au LLR, qui répondent à la problématique : Comment peut-on accéder à la mesure de l'auto-couplage $\lambda$ du champ de Higgs en analysant la productions di-Higgs par fusion de gluons au LHC, dans la détecteur CMS ?\\\\

La présentation pour laquelle j'ai opté n'est sans doute pas la plus efficace possible. Cependant, je voulais faire un exposé le plus accessible possible et surtout, cohérent. J'ai donc rajouté l'introduction de concepts qui préparent à l'étude, pour que l'entrée en matière soit la plus douce possible. Les notes historiques en particulier ne sont pas non plus indispensables à la géométrie et la physique à proprement parler, mais sont importantes pour saisir la motivation des définitions. Dans ce mémoire, j'ai surtout repris des concepts et des idées préexistantes, et mon apport personnel, surtout dans la partie mathématique, passe essentiellement par les remarques dans lesquelles je donne une interprétation qui me tient à cœur des objets utilisés, et par la structure globale de l'exposé. Un fait qui m'a marqué en étudiant les différents concepts, est que l'intuition dans de tels domaines est certes longue à acquérir, mais esthétique et en fait très naturelle en regard des développements du siècle dernier qui ont mené à ces théories. Les raisonnements qui ont mené Weyl à de telles avancées mathématiques semblent d'ailleurs poussés par une grande intuition et un sens physique très fin (introduction du fibré des échelles, du fibré des phases...). La vision qui en résulte, de ce qu'est une interaction, est également intéressante. On comprend mieux la nature d'un boson de jauge, du photon en électromagnétisme par exemple, qui définit simplement un transport canonique de la phase des fonctions d'ondes quantiques dans l'espace-temps, et fait ainsi interagir les charges.\\\\
Le troisième chapitre est une initiative complètement personnelle, que j'ai voulu mener à terme, déjà parce que je trouvais l'idée intéressante, mais aussi car le format de l'exercice - le stage-mémoire du cursus mixte maths-physique de l'ENS - semble tout à fait s'y prêter. J'ai eu une chance inouïe (j'ai été bien aiguillé) de pouvoir faire un mémoire de mathématique sur un sujet extrêmement proche de ce que j'allais étudier durant mon stage de physique. J'ai donc voulu, au lieu de rendre un mémoire de géométrie d'un côté, et un rapport de stage de l'autre, présenter un seul dossier, cohérent, d'où la troisième partie qui sert de "transition" entre le mémoire et le rapport de stage. Les discussions que j'ai pu avoir, les personnes que j'ai pu écouter notamment lors du séminaire Higgs Hunting 2015 au LAL à Orsay, m'ont poussé à poursuivre cette idée. Il en résulte un dossier beaucoup trop long certes, mais de nombreuses sous-parties pourront être passées par les lecteurs déjà familiers avec les concepts. J'espère qu'il sera plus lisible pour les autres.

\subsection*{Remerciements}

Je tiens tout d'abord à remercier mes encadrants, qui ont fait don avec une infinie gentillesse, de beaucoup de temps, de conseils, d'explications, afin de m'aiguiller sur le chemin que j'ai suivi, et dont ce mémoire est l'aboutissement. Je leur suis redevable de tout ce que j'ai pu comprendre de ces belles théories. Merci à Yves Sirois bien sur de m'avoir permis de faire un tel stage au sein du laboratoire, pour ses réponses éclairées et enjouées à mes questions, et ses encouragements. Merci à M. Bouttier et M. Kashani Poor d'avoir rendu ce travail qui m'a fait découvrir des domaines des sciences tout à fait passionnants possible. J'ai également profité d'un accueil plus que convivial au LLR, et ai passé grâce à cela six semaines tout à fait merveilleuses, internationales, et pleines de découvertes ; merci à ceux que j'ai pu embêter avec mes discussions interminables, Luca Mastrolorenzo, Florian Beaudette, Luca Cadamuro, Raphaël Duque, François Bacher, et tous les autres dont la liste complète remplirait bien plus que cette page. Merci enfin à mes parents pour leur relecture bienveillante et pas toujours amusante.

\tableofcontents

\chapter{Les équations de Maxwell généralisées}

\textit{"La libération du carcan de l'espace et du temps est une aspiration du poète et du mystique, mais ce sont les mathématiciens qui l'ont réalisé"} (Eddington)

\section{Équations de Maxwell et relativité}

Nous allons voir dans cette partie que les équations de Maxwell ne sont pas invariantes par action du groupe de Galilée. On introduit les notations tensorielles d'emblée pour, entre autres, alléger les calculs.

\subsection{Calcul tensoriel dans des espaces vectoriels réels de dimension finie}

\paragraph{Approche "intuitive"}

Dans l'espace plat $V=\R^{n}$ muni d'une base $(e_{i})_{i\in [|1,n|]}$, un vecteur $v$ de composantes $v^{i}$ s'écrit $$v=\sum_{1}^{n}v^{i}e_{i}$$ En dimension finie, V est canoniquement isomorphe à son espace dual (l'espace des formes linéaires sur V) noté $V^{*}$. On note $(e^{i})_{i\in [|1,n|]}$ la base duale associée à $(e_{i})_{i\in [|1,n|]}$ ; une forme linéaire quelconque $f \in V^{*}$ s'écrit dans cette base $$f=\sum_{1}^{n}f_{i}e^{i}$$

\begin{rmq}
	Un vecteur $v$ (qu'on s'imagine comme une petite flèche), tout comme une forme linéaire $f$ (qu'on peut visualiser comme un ensemble de lignes de niveau), existe sans même qu'on ait besoin de choisir une base pour pouvoir écrire ses composantes, par conséquent, ses coordonnées ne changent pas n'importe comment quand on décide de regarder "l'objet" d'une manière différente (dans une base différente). C'est ce qui motive les définitions suivantes. Les notions sont introduites dans ce cadre qui n'est qu'un cas très particulier de la théorie présentée par les géomètres italiens Levi-Civita et Ricci dans leur papier de 1900 : \cite{levicivita}. 
\end{rmq}

Soit $P$ = $(P^{i}_{j})_{i,j\in [|1,n|]}$ la matrice de passage de la base $(e_{k})_{k\in [|1,n|]}$ à la base $(e'_{k})_{k\in [|1,n|]}$. Soit $v$ un vecteur de $V$, de coordonnées $(v^{i})_{i\in [|1,n|]}$ dans la base de départ et $(v'^{i})_{i\in [|1,n|]}$ dans la base d'arrivée. On exprime 'les anciennes coordonnées en fonction des nouvelles', c'est-à-dire que \textbf{pour les coordonnées d'un vecteur} les formules de changement de base sont :
\begin{equation}
v^{i}=\sum_{1}^{n}P^{i}_{j}v'^{j}
\end{equation} 
pour tout $i \in [|1,n|]$. Par définition de la matrice de passage, on exprime par contre 'les nouveaux vecteurs de base en fonction des anciens', c'est-à-dire : 
\begin{equation}
e'_{j}=\sum_{1}^{n}P_{j}^{i}e_{i}
\end{equation}
On veut donner une définition rigoureuse de ces propriétés, voici la définition historique des géomètres italiens :
\begin{defi}
On appelle système d'ordre m un ensemble de fonctions des n vecteurs de base et à valeurs dans un espace vectoriel réel en correspondance bijective avec $[|1,n|]^m$.
\end{defi}

\begin{rmq}
	Cette définition est à comprendre au sens suivant : les fonctions servent à décrire un objet, par exemple un vecteur (et ce sont ses coordonnées), dans ce cas il y a n fonctions des n vecteurs de base. Si on veut décrire un endomorphisme, il faut $nm$ fonctions des $n$ vecteurs de base s'il est à valeurs dans un espace de dimension $m$.
\end{rmq}

\begin{defi}
On dit qu'un système d'ordre m est covariant (et ses éléments seront désignés par des symboles $X_{i_1...i_m}$) si les éléments dans la nouvelle base $(e'_{k})_{k\in [|1,n|]}$ s'expriment par rapport à ceux de l'ancienne base $(e_{k})_{k\in [|1,n|]}$ par les formules : $$X'_{i_1...i_m}=\sum_{a_1=1}^{n}...\sum_{a_m=1}^{n}X_{a_1...a_m}P_{i_1}^{a_1}...P_{i_m}^{a_m}$$ où P est la matrice de passage $P=(P_i^j)$ (lignes i et colonnes j).\\
On dit qu'un système d'ordre m est contravariant (et ses éléments seront désignés par des symboles $X^{i_1...i_m}$) si les éléments dans la nouvelle base $(e'_{k})_{k\in [|1,n|]}$ s'expriment par rapport à ceux de l'ancienne base $(e_{k})_{k\in [|1,n|]}$ par les formules : $$X'^{i_1...i_m}=\sum_{a_1=1}^{n}...\sum_{a_m=1}^{n}X^{a_1...a_m}(P^{-1})^{i_1}_{a_1}...(P^{-1})^{i_m}_{a_m}$$ où P est la matrice de passage $P=(P_i^j)$ (lignes i et colonnes j).
\end{defi}
On vient donc de voir que les coordonnées d'un vecteur forment une famille contravariante tandis que les vecteurs de base forment une famille covariante. Les notations utilisées pour les différents objets duaux ne sont pas anodines puisque les cordonnées des vecteurs du dual $V^{*}$ forment une famille covariante tandis l'ensemble des vecteurs de la base duale est une famille contravariante.

Introduisons la convention de sommation d'Einstein qui consiste à supprimer dans l'écriture des équations le signe somme, si cette dernière porte sur un indice répété dans un produit de grandeurs contravariantes (indice en haut) et covariantes (indice en bas). Dans le cadre de cette convention, la dimension de l'espace considéré étant connue et fixée, on notera un vecteur $v$ dans la base $(e_{i})_{i\in [|1,n|]}$ : $v^{i}e_{i}$, une forme linéaire dans la base duale associée $(e^{i})_{i\in [|1,n|]}$ : $f_{i}e^{i}$, etc...

\paragraph{Approche algébrique}

\begin{defi}
Soient $U$ et $V$ deux espaces vectoriels de dimension finie sur $\R$. Soit $M(U,V)$ l'espace vectoriel sur $\R$ dont les vecteurs de base sont les couples $(u,v)_{u \in U, v \in V}$. On regarde alors ces couples comme des objets fondamentaux, aucune opération algébrique n'est a priori définie sur les couples eux-mêmes : par exemple, $M(U,V)$ contient toutes les combinaisons linéaires finies de ces couples, mais pour tout réel $\lambda$ différent de $1$ ou $0$ on a : $\lambda\cdot(u,v)\neq (\lambda u,\lambda v)$ car $\lambda\cdot(u,v)$ représente $\lambda$ fois le vecteur de base $(u,v)$ tandis que $(\lambda u,\lambda v)$ est \textbf{un autre} vecteur de base de $M(U,V)$. $M(U,V)$ est le produit libre et non le produit cartésien de $U$ et $V$.
\end{defi}
\begin{rmq}
	Cette définition, bien qu'aride, définit un espace beaucoup plus gros que $U\times V$. Cette définition se transpose naturellement à des espaces vectoriels quelconques. Considérons par exemple $V$ et $W$ deux espaces vectoriels de dimension $2$ sur le corps à trois élément $\mathbb{F}_3$. L'espace vectoriel $V\times W$ est un espace vectoriel de dimension $4$ tandis que $M(V,W)$ est de dimension $81$ ! \\\\
	Dans le cas d'espaces vectoriels de dimension finie sur $\R$, $M(U,V)$ est de dimension indénombrable. Le passage au quotient permet ensuite, en "tordant" $M(U,V)$, d'obtenir des espaces de dimension finie comme $U\otimes V$. Si on change l'espace par lequel on quotiente, on arrive facilement à d'autres espaces, par exemple $U\times V$.
\end{rmq}
\begin{defi}
Soit $N$ le sous-espace vectoriel de $M(U,V)$ engendré par les éléments de la forme $(u+u',v)-(u,v)-(u',v)$ ou $(u,v+v')-(u,v)-(u,v')$ ou $(ru,v)-r\cdot(u,v)$ ou enfin $(u,rv)-r\cdot(u,v)$. On pose : \framebox{$U\otimes V=M(U,V)/N$}.
\end{defi}

L'image d'un couple $(u,v)$ par la projection canonique de $M(U,V)$ sur $U\otimes V$ est notée $u\otimes v$. On définit l'application bilinéaire canonique de $U\times V$ dans $U\otimes V$ par $\phi(u,v)=u\otimes v$ $\forall u\in U, v\in V$.\\\\

Le théorème suivant motive l'introduction des espaces tensoriels et peut même servir de définition, malheureusement pas constructive, du produit tensoriel de deux espaces vectoriels.
\begin{theorem}
Soit $f$ une application bilinéaire de $U\times V$ dans un espace vectoriel réel de dimension finie $W$. Alors $f$ se factorise de manière unique en $f=\tilde{f}\circ\phi$ avec $\tilde{f}:U\otimes V \rightarrow W$ où $\tilde{f}$ est linéaire.
\end{theorem}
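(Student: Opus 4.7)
Le plan est d'exploiter la construction de $U\otimes V$ comme quotient de l'espace libre $M(U,V)$. L'id�e centrale est de relever d'abord $f$ en une application lin�aire $F: M(U,V) \to W$ au niveau de l'espace libre, puis de montrer que $F$ s'annule sur le sous-espace $N$, ce qui la fera descendre au quotient $U \otimes V = M(U,V)/N$.

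Plus pr�cis�ment, puisque par d�finition les couples $(u,v)$, $u \in U$, $v \in V$, forment une base de $M(U,V)$, il existe une unique application lin�aire $F: M(U,V) \to W$ telle que $F((u,v)) = f(u,v)$ pour tout $(u,v) \in U \times V$. Il s'agit ensuite de v�rifier l'inclusion $N \subset \ker F$, ce qu'il suffit de faire sur les g�n�rateurs de $N$. Par bilin�arit� de $f$, on a par exemple
\begin{align*}
F\bigl((u+u',v) - (u,v) - (u',v)\bigr) &= f(u+u',v) - f(u,v) - f(u',v) = 0, \\
F\bigl((ru,v) - r\cdot(u,v)\bigr) &= f(ru,v) - r\, f(u,v) = 0,
\end{align*}
et les deux autres types de g�n�rateurs se traitent de mani�re strictement analogue. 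L'application $F$ se factorise donc � travers la projection canonique en une application lin�aire $\tilde{f}: U\otimes V \to W$ v�rifiant $\tilde{f}(u\otimes v) = f(u,v)$, c'est-�-dire $\tilde{f} \circ \phi = f$.

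Pour l'unicit�, si $\tilde{g}:U\otimes V \to W$ est une autre application lin�aire telle que $\tilde{g}\circ\phi = f$, alors $\tilde{g}(u\otimes v) = f(u,v) = \tilde{f}(u\otimes v)$ pour tout couple $(u,v)$. Les tenseurs d�composables $u\otimes v$ engendrent $U\otimes V$ comme espace vectoriel, puisqu'ils sont les images par la projection canonique des vecteurs de base $(u,v)$ de $M(U,V)$ ; on en d�duit par lin�arit� que $\tilde{g} = \tilde{f}$.

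Le point d�licat principal est la premi�re �tape : il est essentiel que les couples $(u,v)$ forment une \emph{base} de $M(U,V)$, et non seulement une famille g�n�ratrice, pour pouvoir prescrire $F((u,v)) = f(u,v)$ sans contradiction. C'est la libert� de $M(U,V)$ qui justifie la construction par quotient et qui la distingue essentiellement d'une construction na�ve sur $U \times V$, o� les relations lin�aires pr�existantes entre les couples interdiraient une telle d�finition.
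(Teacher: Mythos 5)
Votre d�monstration est correcte et compl�te : c'est l'argument standard de la propri�t� universelle, que le m�moire omet en renvoyant simplement � Kobayashi--Nomizu. Vous relevez $f$ en une application lin�aire $F$ sur l'espace libre $M(U,V)$ (ce qui est l�gitime pr�cis�ment parce que les couples $(u,v)$ y forment une base), vous v�rifiez que $F$ s'annule sur les g�n�rateurs de $N$ par bilin�arit�, donc passe au quotient, et vous obtenez l'unicit� du fait que les tenseurs d�composables engendrent $U\otimes V$ comme images de la base de $M(U,V)$ par la projection surjective. Rien � redire ; votre remarque finale sur le r�le essentiel de la libert� de $M(U,V)$ identifie bien le seul point o� l'argument pourrait achopper.
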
	

$$
\xymatrix{
	U\times V \ar[r]^f \ar[d]_\phi & W \\
	U\otimes V \ar[ur]_{\tilde{f}} & 
}
$$
\begin{proof}[Preuve]
	La preuve, technique, est omise ; on peut la trouver par exemple dans \cite{kobay}.
\end{proof}
\begin{proposition}
	Il y a un isomorphisme unique $\phi : V \otimes W \rightarrow W \otimes V$ tel que pour tous $ v \in V$ et $w \in W$, $\phi(v\otimes w) = w \otimes v$.
\end{proposition}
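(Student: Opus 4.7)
Le plan est d'appliquer deux fois le Th\'eor\`eme pr\'ec\'edent (propri\'et\'e universelle du produit tensoriel) pour construire $\phi$ et un inverse candidat, puis de v\'erifier qu'ils sont bien r\'eciproques l'un de l'autre.

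Dans un premier temps, je consid\'ererais l'application $f : V \times W \to W \otimes V$ d\'efinie par $f(v,w) = w \otimes v$. Une v\'erification imm\'ediate montre qu'elle est bilin\'eaire, car les relations qui d\'efinissent le quotient $W \otimes V = M(W,V)/N$ assurent pr\'ecis\'ement la bilin\'earit\'e de $(w,v) \mapsto w \otimes v$, d'o\`u celle de sa compos\'ee avec la transposition des arguments. Le Th\'eor\`eme fournit alors une unique application lin\'eaire $\tilde f : V \otimes W \to W \otimes V$ factorisant $f$, v\'erifiant $\tilde f(v \otimes w) = w \otimes v$. Ceci produit le morphisme cherch\'e et \'etablit d\'ej\`a l'unicit\'e r\'eclam\'ee par l'\'enonc\'e, puisque deux applications lin\'eaires co\"incidant sur les tenseurs simples, qui engendrent lin\'eairement $V \otimes W$, sont n\'ecessairement \'egales.

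Par sym\'etrie parfaite du r\^ole des deux espaces, je construirais ensuite de la m\^eme fa\c con $\tilde g : W \otimes V \to V \otimes W$ envoyant $w \otimes v$ sur $v \otimes w$. Il resterait alors \`a montrer que $\tilde g \circ \tilde f = \mathrm{id}_{V \otimes W}$ et $\tilde f \circ \tilde g = \mathrm{id}_{W \otimes V}$, ce qui assurerait que $\phi := \tilde f$ est un isomorphisme.

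C'est ici que se niche la seule subtilit\'e du raisonnement. On v\'erifie imm\'ediatement que $\tilde g \circ \tilde f$ envoie chaque tenseur simple $v \otimes w$ sur lui-m\^eme, mais un \'el\'ement g\'en\'eral de $V \otimes W$ n'est qu'une combinaison lin\'eaire finie de tels tenseurs, et une telle d\'ecomposition n'est pas unique ; il serait donc na\"if de conclure directement \`a partir des tenseurs simples. L'argument propre consiste \`a invoquer une nouvelle fois la partie unicit\'e du Th\'eor\`eme : les deux applications lin\'eaires $\tilde g \circ \tilde f$ et $\mathrm{id}_{V \otimes W}$ factorisent toutes deux la m\^eme bilin\'eaire canonique $(v,w) \mapsto v \otimes w$ de $V \times W$ dans $V \otimes W$, donc doivent co\"incider. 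Le m\^eme raisonnement s'applique \`a $\tilde f \circ \tilde g$, ce qui ach\`eve la construction de l'isomorphisme annonc\'e.
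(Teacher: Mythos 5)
Votre preuve est correcte et suit essentiellement la m\^eme d\'emarche que celle du texte (qui se contente d'une ligne : factoriser l'application bilin\'eaire $(v,w)\mapsto w\otimes v$ par la propri\'et\'e universelle), en explicitant simplement la construction de l'inverse et la v\'erification des composées. Notez seulement une petite incoh\'erence interne : vous affirmez d'abord, \`a juste titre, que deux applications lin\'eaires co\"incidant sur les tenseurs simples (qui engendrent $V\otimes W$) sont \'egales, puis vous qualifiez de \og na\"if \fg{} ce m\^eme argument pour $\tilde g\circ\tilde f=\mathrm{id}$ --- la non-unicit\'e de la d\'ecomposition n'y change rien, et le d\'etour par l'unicit\'e dans la propri\'et\'e universelle, bien que valable, est superflu.
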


\begin{proof}[Preuve]
	On considère l'application bilinéaire $f:V\times W \rightarrow W\otimes V$ qui a $(v,w)$ associe $w\otimes v$ et qui se factorise dans $V\otimes W$.
\end{proof}

De la même façon, on a la :
\begin{proposition}
	Il y a un unique isomorphisme de $(U\otimes V)\otimes W$ sur $U\otimes (V\otimes W)$ tel que pour tous $u \in U, v\in V$ et $w \in W$, $\phi$ associe $u \otimes (v\otimes w)$ à $(u\otimes v)\otimes w$. 
\end{proposition}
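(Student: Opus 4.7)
Le plan est d'appliquer deux fois le th�or�me de factorisation universelle (Th�or�me ci-dessus) pour construire $\phi$, puis de proc�der de m�me pour construire son inverse.

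Fixons d'abord $w \in W$. L'application $f_w : U \times V \rightarrow U\otimes(V\otimes W)$ d�finie par $f_w(u,v) = u \otimes (v \otimes w)$ est bilin�aire, donc se factorise de mani�re unique en une application lin�aire $\tilde{f}_w : U \otimes V \rightarrow U\otimes(V\otimes W)$ telle que $\tilde{f}_w(u\otimes v) = u \otimes (v\otimes w)$. Consid�rons alors $g : (U\otimes V)\times W \rightarrow U\otimes (V\otimes W)$ d�finie par $g(t,w) = \tilde{f}_w(t)$. La lin�arit� en $t$ est imm�diate puisque $\tilde{f}_w$ est lin�aire ; la lin�arit� en $w$ se v�rifie d'abord sur les tenseurs purs $t = u\otimes v$ (o� elle d�coule de la bilin�arit� de $\otimes$ dans $V\otimes W$), puis s'�tend par lin�arit� en $t$. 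Par le th�or�me de factorisation appliqu� � $g$, on obtient une unique application lin�aire $\phi : (U\otimes V)\otimes W \rightarrow U\otimes (V\otimes W)$ v�rifiant $\phi((u\otimes v)\otimes w) = u\otimes (v\otimes w)$ pour tous $u,v,w$.

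Pour construire l'application r�ciproque, on sym�trise la construction pr�c�dente : pour chaque $u\in U$ fix�, on factorise l'application bilin�aire $(v,w)\mapsto (u\otimes v)\otimes w$ � travers $V\otimes W$, puis on factorise le tout � travers $U\otimes(V\otimes W)$ pour obtenir $\psi : U\otimes (V\otimes W)\rightarrow (U\otimes V)\otimes W$ v�rifiant $\psi(u\otimes (v\otimes w)) = (u\otimes v)\otimes w$.

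Il reste � v�rifier que $\psi\circ\phi = \mathrm{Id}$ et $\phi\circ\psi = \mathrm{Id}$. Par construction, ces compositions fixent tous les tenseurs purs (de la forme $(u\otimes v)\otimes w$ ou $u\otimes(v\otimes w)$ respectivement) ; or ces tenseurs purs engendrent lin�airement l'espace consid�r�, donc les compositions co�ncident avec l'identit�. L'unicit� de $\phi$ d�coule du m�me argument : toute application lin�aire $(U\otimes V)\otimes W \rightarrow U\otimes (V\otimes W)$ est d�termin�e par ses valeurs sur les g�n�rateurs $(u\otimes v)\otimes w$. L'unique point d�licat du raisonnement, comme dans la preuve de la commutativit� pr�c�dente, est de ne pas succomber � la tentation de d�finir $\phi$ directement sur les tenseurs purs et de l'� �tendre par lin�arit� �, car ces derniers ne forment pas une base libre : il faut imp�rativement passer par la propri�t� universelle pour garantir que $\phi$ est bien d�finie.
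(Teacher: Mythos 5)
Votre preuve est correcte et suit essentiellement la m�me d�marche que celle sugg�r�e par le texte (qui renvoie, par la mention \og de la m�me fa�on \fg, � la technique de factorisation universelle employ�e pour la commutativit�) : vous construisez $\phi$ et son inverse par applications it�r�es de la propri�t� universelle, puis concluez en v�rifiant les compositions sur les tenseurs purs qui engendrent l'espace. Votre remarque finale sur le danger de d�finir $\phi$ directement sur les tenseurs purs est pertinente et bienvenue.
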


et on prouve également :
\begin{proposition}
	Soient $f_{i}:U_{i}\rightarrow V_{j}, i=1,2$ des applications linéaires. Alors il existe une unique application linéaire $f:U_{1}\otimes U_{2}\rightarrow V_{1}\otimes V_{2}$ telle que pour tous $u_{1}\in U_{1}$ et $u_{2}\in U_{2}$, $f(u_{1}\otimes u_{2})=f(u_{1})\otimes f(u_{2})$.
\end{proposition}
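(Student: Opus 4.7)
Le plan naturel est d'appliquer la propri\'et\'e universelle du produit tensoriel (th\'eor\`eme de factorisation pr\'ec\'edent) avec pour espace but $V_1 \otimes V_2$. Je commence par d\'efinir l'application $g : U_1 \times U_2 \rightarrow V_1 \otimes V_2$ par $g(u_1, u_2) := f_1(u_1) \otimes f_2(u_2)$. L'\'etape cl\'e est de v\'erifier que $g$ est bilin\'eaire~; cela d\'ecoule imm\'ediatement de la lin\'earit\'e de $f_1$ et $f_2$ combin\'ee \`a la bilin\'earit\'e de l'application canonique $V_1 \times V_2 \rightarrow V_1 \otimes V_2$. Par exemple, en la premi\`ere variable,
\[
g(\lambda u_1 + u_1', u_2) = \bigl(\lambda f_1(u_1) + f_1(u_1')\bigr) \otimes f_2(u_2) = \lambda\, g(u_1, u_2) + g(u_1', u_2),
\]
et le m\^eme calcul s'\'ecrit en la seconde variable.

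Une fois $g$ reconnue bilin\'eaire, la propri\'et\'e universelle fournit sans effort suppl\'ementaire une unique application lin\'eaire $f : U_1 \otimes U_2 \rightarrow V_1 \otimes V_2$ telle que $g = f \circ \phi$, o\`u $\phi : U_1 \times U_2 \rightarrow U_1 \otimes U_2$ d\'esigne l'application bilin\'eaire canonique. En \'evaluant cette \'egalit\'e sur un couple $(u_1, u_2)$, on obtient
\[
f(u_1 \otimes u_2) = f\bigl(\phi(u_1, u_2)\bigr) = g(u_1, u_2) = f_1(u_1) \otimes f_2(u_2),
\]
qui est exactement la relation annonc\'ee.

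L'unicit\'e de $f$ se lit directement dans la propri\'et\'e universelle, mais on peut aussi la retrouver \`a la main~: deux applications lin\'eaires qui co\"incident sur les tenseurs purs co\"incident partout, car l'image de $\phi$ engendre lin\'eairement $U_1 \otimes U_2$ (par construction via le quotient de $M(U_1, U_2)$). Il n'y a pas de v\'eritable obstacle ici : tout le poids technique a d\'ej\`a \'et\'e absorb\'e par la propri\'et\'e universelle admise plus haut, et la seule vigilance requise consiste \`a s'assurer que l'on applique bien cette propri\'et\'e avec $V_1 \otimes V_2$ comme espace d'arriv\'ee (de dimension finie, donc dans le cadre de l'\'enonc\'e du th\'eor\`eme).
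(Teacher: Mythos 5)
Votre preuve est correcte et suit exactement la m\'ethode que le texte emploie pour les propositions voisines (factorisation d'une application bilin\'eaire bien choisie par la propri\'et\'e universelle), le texte laissant d'ailleurs cette d\'emonstration-ci implicite. Rien \`a redire~: la v\'erification de la bilin\'earit\'e de $(u_1,u_2)\mapsto f_1(u_1)\otimes f_2(u_2)$ et l'argument d'unicit\'e via les tenseurs purs sont pr\'ecis\'ement ce qu'il fallait.
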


De plus la propriété de linéarité du produit tensoriel de deux vecteurs se propage au produit tensoriel de deux espaces vectoriels.

\begin{propriete}
Le produit tensoriel est distributif pour la somme directe.
\end{propriete}	

Le produit tensoriel de deux espaces vectoriels $V$ et $W$ possède une base induite des bases $(v_{i})_{i \in [|1,m|]}$ et $(w_{i})_{i \in [|1,n|]}$ respectives de $V$ et $W$, donnée par $(v_{i}\otimes w_{j})_{(i,j)\in [|1,m|]\times [|1,n|]}$. En effet $V=\bigoplus_{1}^{m}V_{i}$ et $W=\bigoplus_{1}^{n}W_{i}$ où les $V_{i}$ et les $W_{j}$ sont les espaces engendrés respectivement pour $v_{i}$ et $w_{j}$, pour $i\in [|1,m|]$ et $j\in[|1,n|]$. Alors $V \otimes W=\bigoplus_{i=1, j=1}^{i=m, j=n}V_{i}\otimes W_{j}$ d'après la propriété précédente et la factorisation de l'application linéaire $f_{i,j}:V_{i}\times W_{j} \rightarrow \R$ qui à $(\lambda v_{i},\mu w_{j})$ associe $\lambda\mu$ ($\R$ est vu comme espace vectoriel de dimension $1$)

On définit alors différents espaces tensoriels sur un espace vectoriel fixé $V$ :

\begin{defi}
Pour un entier positif $r$, on appelle $\mathfrak{T}^{r}=V^{\otimes r}$ \textbf{espace tensoriel contravariant de degré r}. Un élément de $\mathfrak{T}^{r}$ est appelé tenseur contravariant de degré $r$. Si $r=1$, $\mathfrak{T}^{1}=V$. Par convention on écrit même $\mathfrak{T}^{0}=\R$. \\
De la même manière, pour tout entier $s$ positif, $\mathfrak{T}_{s}=(V^{*})^{\otimes s}$ est appelé \textbf{espace tensoriel covariant de degré $s$} et ses éléments tenseurs covariants de degré $s$. On a $\mathfrak{T}_{1}=V^{*}$ et par convention $\mathfrak{T}_{0}=\R$.
\end{defi}

Si $(e_{i})_{i\in[|1,n|]}$ est une base de $V$ et si $(e^{i})_{i\in[|1,n|]}$ est la base duale associée, tout tenseur $K$ contravariant d'ordre $r$ s'écrit (en convention d'Einstein) de manière unique :
$$
K = K^{i_{1}...i_{r}}e_{i_{1}}\otimes ... \otimes e_{i_{r}}
$$
et tout tenseur $L$ covariant d'ordre $s$ s'exprime de manière unique par :
$$
L = L_{j_{1}...j_{s}}e^{j_{1}}\otimes ...\otimes e^{j_{s}}
$$
 $K^{i_{1}...i_{r}}$ et $L_{j_{1}...j_{s}}$ sont respectivement les composantes de $K$ et $L$ par rapport à la base $(e_{i})_{i\in[|1,n|]}$.
 
\paragraph{Liens entre les deux approches}

Soient $(e_{i})_{i\in[|1,n|]}$ et $(\tilde{e_{i}})_{i\in[|1,n|]}$ deux bases de $V$ reliées par la transformation $\tilde{e_{i}}=A_{j}^{i}e_{i}$. Le changement de base duale associé dans $V^{*}$ s'écrit $\tilde{e^{i}}=B_{j}^{i}e{i}$ où $B=A^{-1}$.
Si $K$ est un tenseur contravariant d'ordre $k$, on a $\tilde{K}^{i_{1}...i_{r}}=A_{j_{1}}^{i_{1}}...A_{j_{r}}^{i_{r}}K^{i_{1}...i_{r}}$ et de même pour le tenseur $L$ $s$ fois covariant : $\tilde{L}_{i_{1}...i_{s}}=B_{i_{1}}^{j_{1}}...B_{i_{r}}^{j_{r}}L_{i_{1}...i_{s}}$ ce qui correspond bien à ce qui est attendu.

\begin{defi}
	L'espace tensoriel mixte de type $(r,s)$ ou espace tensoriel $r$ fois contravariant et $s$ fois covariant est le produit tensoriel $\mathfrak{T}^{r}_{s}=\mathfrak{T}^{r}\otimes \mathfrak{T}_{s}$. On a le même type de propriétés que pour les tenseurs contravariant ou covariant : expression des coordonnées dans la base induite, formules de changement de base ...
\end{defi}

\paragraph{Produit de deux tenseurs}
Posons $\mathfrak{T}=\bigoplus_{r,s=0}^{\infty}\mathfrak{T}^{r}_{s}$. On munit $T$ d'une structure d'algèbre $\N$-graduée : par la propriété de factorisation universelle du produit tensoriel, il existe une unique application linéaire de $\mathfrak{T}^{r}_{s}\times\mathfrak{T}^{p}_{q}$ dans $\mathfrak{T}^{r+p}_{s+q}$ qui envoie $(v_{1}\otimes...\otimes v_{r}\otimes v^{1}\otimes ... \otimes v^{s}, w_{1}\otimes ... \otimes w_{p}\otimes w^{1}\otimes ... \otimes w^{q})$ sur $(v_{1}\otimes...\otimes v_{r}\otimes v^{1}\otimes ... \otimes v^{s}\otimes w_{1}\otimes ... \otimes w_{p}\otimes w^{1}\otimes ... \otimes w^{q})$

\begin{defi}
	On définit la contraction d'indice comme suit : à chaque couple $(i,j)_{i\in [|1,r|],\ j\in [|1,s|]}$ est associé l'unique application de $\mathfrak{T}^{r}_{s}$ dans $\mathfrak{T}^{r-1}_{s-1}$ qui envoie $v_{1}\otimes ... \otimes v_{r}\otimes v^{1}\otimes ...\otimes v^{s}$ sur $v^{j}(v_{i}) v_{1}\otimes ...\otimes v_{i-1}\otimes v_{i+1}\otimes ... \otimes v_{r}\otimes v^{1}\otimes ... \otimes v^{j-1}\otimes v^{j+1}\otimes...\otimes v^{s}$.
\end{defi}

\paragraph{Interprétation comme applications multilinéaires}
\begin{proposition}
	$\mathfrak{T}_{r}$ est canoniquement isomorphe à l'espace vectoriel des application $r$-linéaires de $V^{\times r}$ dans $\R$.
\end{proposition}
\begin{proposition}
	$\mathfrak{T}^{r}$ est canoniquement isomorphe à l'espace vectoriel des application $r$-linéaires de $(V^{*})^{\times r}$ dans $\R$.
\end{proposition}

\begin{defi}
	Posons $\mathfrak{T}^{0}_{0}(V,W)=F$. Pour $p,q>0$, $\Tpq(V,W)$ est identifié à l'espace des fonctions multilinéaires de $(V^{*})^{\times p} \times V^{\times q}$ dans $W$. $\Tpq(V,\R)$ est noté $\Tpq(V)$. $\forall f \in \Tpq(V)$ $f$ s'écrit (en convention d'Einstein) $f=f^{i_{1}...i_{p}}_{j_{1}...j_{q}}v_{i_{1}}\otimes ... \otimes v_{i_{p}}\otimes v^{j_{1}}\otimes ... \otimes v^{j_{q}}$. 
\end{defi}
\paragraph{Algèbre extérieure}
\begin{defi}
	Définissons $\bigwedge^{k}(V,W)$ comme le sous-espace de $\mathfrak{T}^{0}_{q}(V,W)$ des applications multilinéaires totalement antisymétriques de $E$ dans $F$. On note $\bigwedge^{k}(V)=\bigwedge^{k}(V,\R)$. Soit $\omega \in \bigwedge^{k}(V)$. $\omega$ s'écrit : $\omega=\omega_{i_{1}...i_{k}} v^{i_{1}}\otimes ...\otimes v^{i_{k}}$ où $\omega_{i_{1}...i_{k}}\in \R$ est antisymétrique en les indices $i_1, ..., i_k$.
\end{defi}

Enfin il existe un produit qui munit $\bigwedge(M)=\bigoplus_{k=0}^{\infty}\bigwedge^k(M)$ d'une structure d'algèbre $\N$-graduée.

\begin{defi}[Produit extérieur]
	Pour $\alpha\in\bigwedge^i(E)$ et pour $\beta\in\bigwedge^j(E)$, on définit $\alpha\wedge\beta\in\bigwedge^{i+j}$ par:$$(\alpha\wedge\beta)(u_1, ..., u_{i+j})=\frac{1}{i!j!}\sum_{\sigma\in\mathfrak{S}_{i+j}}(-1)^{\sigma}\alpha(u_{\sigma(1)}, ..., u_{\sigma(i)})\beta(u_{\sigma(i+1)}, ..., u_{\sigma(i+j)})
	$$ Pour $\alpha\in\bigwedge^0(E)$, on pose $\alpha\wedge\beta=\alpha\beta$.
\end{defi}

\begin{rmq}
	Les tenseurs interviennent naturellement beaucoup en physique pour la raison suivante : les objets manipulés, comme les vecteurs, les endomorphismes, les formes, ont une existence intrinsèque ; cependant la manière de les décrire dépend de la base dans laquelle on les regarde. C'est cette propriété fondamentale que respectent les tenseurs. Inversement, si une grandeur suit les mêmes formules de changement de base qu'un tenseur d'ordre $(p,q)$, alors c'est un tenseur d'ordre $(p,q)$ et l'objet existe par delà les bases utilisées pour le représenter. \textbf{Toute loi physique peut en fait s'écrire comme une égalité de tenseurs} puisqu'on cherche à décrire des objets intrinsèques.
\end{rmq}

\subsection{Non invariance par transformations de Galilée}
Dans tout ce paragraphe, on se place dans $\R^{3}$ muni du produit scalaire euclidien et de la métrique associée $g_{ij}$.
Rappelons les quatre équations de Maxwell exprimées sous leur forme classique : 
\begin{equation}
\vec{\nabla} \cdot \vec{E} = \frac{\rho}{\epsilon_{0}} \leftrightarrow \partial_{i}E^{i} = \frac{\rho}{\epsilon_{0}}
\end{equation}
\begin{equation}
\vec{\nabla} \times \vec{B} = \mu_0j+\frac{1}{c^{2}}\frac{\partial \vec{E} }{\partial t} \leftrightarrow c^{2}\epsilon^{ijk}\partial_{i}B_{j} = \mu_0c^2j+\partial_{t}E^{k}
\end{equation}
\begin{equation}
\vec{\nabla} \cdot \vec{B} = 0 \leftrightarrow \partial_{i}B^{i} = 0
\end{equation}
\begin{equation}
\vec{\nabla} \times \vec{E} = - \frac{\partial \vec{B}}{\partial t} \leftrightarrow \epsilon^{ijk}\partial_{i}E_{j} = \partial_{t}B^{k}
\end{equation}
Plaçons nous d'emblée dans le vide. Considérons une transformation de Galilée : Soit ($\mathfrak{R'}$) un référentiel qui s'éloigne du référentiel galiléen ($\mathfrak{R}$), à la vitesse $\vec{V}$ constante.
\begin{figure}[!h]
	\centering
	\includegraphics{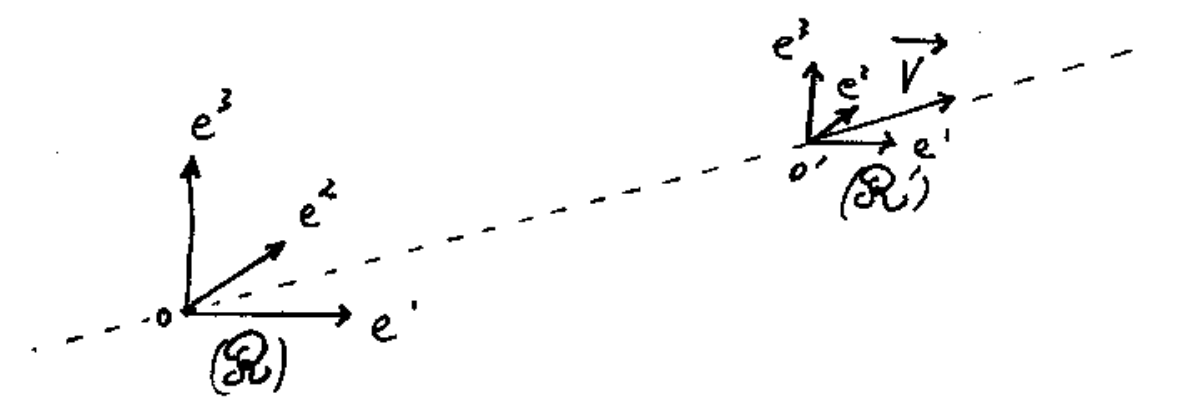}
	\caption{Schéma de la situation}
\end{figure}
On désigne les coordonnées d'un point dans le référentiel ($\mathfrak{R}$) par $(x^{i})_{i\in{1,2,3}}$ et dans ($\mathfrak{R'}$) par $(x'^{i})_{i\in{1,2,3}}$. On a les relations suivantes entre les coordonnées :
  
\[
\left\{
\begin{array}{r c l}
x'^{1} &=& x^{1} + V^{1}t\\
x'^{2} &=& x^{2}+V^{2}t\\
x'^{3} &=& x^{3}+V^{3}t
\end{array}
\right.
\]
 
Le principe de relativité galiléenne affirme que les lois physiques sont invariantes par changement de référentiel galiléen. Une particule chargée de charge $q$ et de vitesse $\vec{u}$ subit, lorsqu'elle est soumise à une champ électromagnétique extérieur, une force dite force de Lorentz qui s'écrit, dans le référentiel ($\mathfrak{R}$) : $\vec{F}=q(\vec{E}+\vec{u}\times\vec{B})$. \\
Dans ($\mathfrak{R'}$) on a donc $\vec{F'}=q(\vec{E'}+\vec{u}\times\vec{B'}+\vec{V}\times\vec{B'})$. Le principe de relativité galiléenne impose $\vec{F'}=\vec{F}$. Cela doit être vrai pour toutes les vitesses $\vec{u}$, d'où $\vec{B}=\vec{B'}$, et par conséquent on déduit $\vec{E'}=\vec{E}-\vec{V}\times\vec{B}$.
\begin{proposition}
	L'équation de Maxwell-Gauss $\partial_{i}E^{i}=0$ n'est pas invariante par action du groupe de Galilée.
\end{proposition}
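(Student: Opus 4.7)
L'id�e est d'exhiber une situation physique dans laquelle les deux membres de l'�quation de Maxwell-Gauss en charge nulle cessent d'�tre �gaux apr�s passage au r�f�rentiel $(\mathfrak{R}')$. Je m'appuie sur les lois de transformation tout juste �tablies : $\vec{B}'=\vec{B}$ et $\vec{E}'=\vec{E}-\vec{V}\times\vec{B}$. Comme $\vec{V}$ est constante et que les coordonn�es spatiales sont reli�es par $x'^{i}=x^{i}+V^{i}t$, les d�riv�es partielles spatiales se transportent trivialement, i.e.\ $\partial'_{i}=\partial_{i}$ (seule la d�riv�e temporelle acquiert un terme de convection $-V^{i}\partial_{i}$, dont on n'aura pas besoin ici).

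Premi�re �tape : calculer $\partial'_{i}E'^{i}$ dans $(\mathfrak{R}')$. Gr�ce � la remarque pr�c�dente, on a
\begin{equation*}
\partial'_{i}E'^{i}=\partial_{i}E^{i}-\partial_{i}(\vec{V}\times\vec{B})^{i}.
\end{equation*}
La constance de $\vec{V}$ permet de r�duire le second terme � $\vec{V}\cdot(\vec{\nabla}\times\vec{B})$ par la formule de divergence d'un produit vectoriel. Si l'on suppose que les quatre �quations de Maxwell valent dans $(\mathfrak{R})$ (dans le vide, sans source), l'�quation de Maxwell-Amp�re identifie $\vec{\nabla}\times\vec{B}$ � $\tfrac{1}{c^{2}}\partial_{t}\vec{E}$, et l'on obtient l'identit� cl�
\begin{equation*}
\partial'_{i}E'^{i}=\partial_{i}E^{i}-\frac{1}{c^{2}}\,\vec{V}\cdot\partial_{t}\vec{E}.
\end{equation*}

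Seconde �tape : exhiber un champ concret qui satisfait $\partial_{i}E^{i}=0$ dans $(\mathfrak{R})$ mais pour lequel le terme correctif est non nul. Le choix le plus simple est une onde plane monochromatique se propageant selon $\hat{e}_{1}$, avec $\vec{E}=E_{0}\cos(kx^{1}-\omega t)\,\hat{e}_{2}$ et $\vec{B}=(E_{0}/c)\cos(kx^{1}-\omega t)\,\hat{e}_{3}$, $\omega=ck$ ; la divergence de $\vec{E}$ y est manifestement nulle. Il suffit alors de choisir un boost transverse, par exemple $\vec{V}=V\hat{e}_{2}$, pour que $\vec{V}\cdot\partial_{t}\vec{E}$ soit proportionnel � $\sin(kx^{1}-\omega t)$, donc non identiquement nul, ce qui contredit l'invariance pr�sum�e.

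La v�ritable subtilit� n'est pas calculatoire mais conceptuelle : il faut s'assurer que le terme parasite ne puisse pas �tre absorb� dans une red�finition compatible des champs. C'est ici qu'intervient la contrainte issue de la force de Lorentz, d�j� utilis�e pour fixer sans ambigu�t� les lois de transformation $\vec{B}'=\vec{B}$ et $\vec{E}'=\vec{E}-\vec{V}\times\vec{B}$ : ces lois �tant impos�es, il n'y a plus de libert� de red�finition, et l'obstruction $\tfrac{1}{c^{2}}\vec{V}\cdot\partial_{t}\vec{E}$ est intrins�que. Le contre-exemple suffit alors � conclure que l'�quation de Maxwell-Gauss n'est pas Galil�o-invariante, ce qui pr�pare naturellement le passage au cadre relativiste et tensoriel d�velopp� dans la suite du chapitre.
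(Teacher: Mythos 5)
Votre preuve suit pour l'essentiel la m\^eme d\'emarche que celle du m\'emoire : m\^emes lois de transformation d\'eduites de la force de Lorentz et du principe de relativit\'e galil\'eenne, m\^eme observation $\partial'_{i}=\partial_{i}$, et m\^eme terme d'obstruction $\vec{V}\cdot(\vec{\nabla}\times\vec{B})$. Vous compl\'etez utilement la derni\`ere \'etape en exhibant une onde plane et un boost transverse pour lesquels ce terme ne s'annule pas, l\`a o\`u le m\'emoire se contente d'affirmer qu'il est non nul d\`es que $\vec{E}'$ n'est pas constant ; seul le signe de votre identit\'e interm\'ediaire est discutable, sans cons\'equence pour la conclusion.
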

\begin{proof}[Preuve.]
On a, d'après les équations de Maxwell : 
$$
\partial_{i}E^{i}=0
$$
d'où, d'après ce qui a été énoncé sur les formules de changement de référentiel : 
$$
\partial_{i}'(E'^{i}+\epsilon^{jki}V_{j}B_{k})=0
$$
où $\partial_{i}'$ est la dérivée selon $x'^{i}$ (on a $\partial_{i}'=(\partial_{i}'x^{i})\cdot \partial_{i}=\partial_{i}$). On obtient donc :
$$
\partial_{i}'E'^{i}+V_{j}\epsilon^{jki}\partial_{i}'B_{k}=0
$$
ce qui donne :
$$
\partial_{i}'E'^{i}=V_{k}\epsilon^{ijk}\partial_{i}'B_{j}
$$
Autrement dit :
$$
\vec{\nabla'}\cdot\vec{E'}=\vec{V}\cdot(\vec{\nabla'}\times\vec{B'})
$$
qui est non nul si $\vec{E'}$ n'est pas constant.
\end{proof}

\begin{proposition}
	L'équation de Maxwell-Ampère $c^{2}\epsilon^{ijk}\partial_{i}B_{j} = \partial_{t}E^{k}$ n'est pas invariante par transformation de Galilée.
\end{proposition}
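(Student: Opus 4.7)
The plan is to reproduce the strategy used for Maxwell–Gauss: inject the Galilean transformation laws for $\vec{E}$ and $\vec{B}$ into the equation, translate the differential operators, and then collect the spurious terms that spoil the original form.

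First, I would record how the derivatives transform. Since $t'=t$ and $x'^{i}=x^{i}+V^{i}t$, the chain rule gives $\partial'_{i}=\partial_{i}$, but for the time derivative the composition yields $\partial_{t}=\partial'_{t}+V^{l}\partial'_{l}$, because re-expressing a function at fixed $\vec{x}$ in terms of fixed $\vec{x}'$ introduces a convective contribution. This is the one non-trivial piece compared to the proof of the previous proposition, where only spatial derivatives appeared.

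Second, using the field relations $\vec{B}=\vec{B}'$ and $\vec{E}=\vec{E}'+\vec{V}\times\vec{B}'$ already established from the invariance of the Lorentz force, I would substitute into $c^{2}\epsilon^{ijk}\partial_{i}B_{j}=\partial_{t}E^{k}$ and expand. The left-hand side becomes immediately $c^{2}\epsilon^{ijk}\partial'_{i}B'_{j}$. On the right-hand side, applying $\partial_{t}=\partial'_{t}+V^{l}\partial'_{l}$ to $E^{k}=E'^{k}+\epsilon^{klm}V_{l}B'_{m}$ via Leibniz yields, in addition to the desired term $\partial'_{t}E'^{k}$, three extra contributions
$$\epsilon^{klm}V_{l}\,\partial'_{t}B'_{m}\;+\;V^{l}\,\partial'_{l}E'^{k}\;+\;\epsilon^{klm}V_{l}V^{n}\,\partial'_{n}B'_{m}.$$
Rewriting everything in vector notation, this is essentially the statement that in $\mathfrak{R}'$ one finds $c^{2}\vec{\nabla}'\times\vec{B}'=\partial'_{t}\vec{E}'+\vec{V}\times\partial'_{t}\vec{B}'+(\vec{V}\cdot\vec{\nabla}')\vec{E}'+(\vec{V}\cdot\vec{\nabla}')(\vec{V}\times\vec{B}')$, so the form of the equation has changed.

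The delicate step is arguing that, for generic fields, this surplus cannot vanish identically; this is where I expect the only real subtlety. To settle it cleanly, I would exhibit one configuration on which the extra terms are nonzero — for instance, a uniform static magnetic field together with a position-dependent electric field, which kills all terms but $V^{l}\partial'_{l}E'^{k}$ and makes the breaking manifest. This proves that the Maxwell–Ampère equation does not reduce in $\mathfrak{R}'$ to $c^{2}\epsilon^{ijk}\partial'_{i}B'_{j}=\partial'_{t}E'^{k}$, hence is not invariant under the Galilean group.
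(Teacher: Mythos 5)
Your proposal follows exactly the paper's argument: substitute the Galilean field transformations $\vec{B}=\vec{B}'$, $\vec{E}=\vec{E}'+\vec{V}\times\vec{B}'$, use the chain rule $\partial_{t}=\partial_{t'}+V^{l}\partial'_{l}$ (the only nontrivial point compared with the Maxwell--Gauss case), and observe that the resulting extra terms do not cancel for generic fields, concluding with a concrete counterexample. This is correct and is essentially the paper's own proof, with your explicit choice of a uniform static $\vec{B}'$ and position-dependent $\vec{E}'$ being a slightly more concrete version of the paper's ``contre-exemple simple''.
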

\begin{proof}[Preuve.]
L'équation donne, compte tenu des formules de changement de référentiel :
$$
\epsilon^{ijk}\partial_{i}'B'_{j} = \frac{1}{c^{2}}\partial_{t}(E'^{k}-\epsilon^{ijk}V_{i}B_{j})
$$
d'où :
$$
\epsilon^{ijk}\partial_{i}'B'_{j} = \frac{1}{c^{2}}(\partial_{t}'E'^{k}-v^{i}\partial_{i}'E'^{k}-\epsilon^{ijk}V_{i}\partial_{t}'B_{j}+\epsilon^{ijk}V_{i}V^{l}\partial_{l}'B_{j})
$$
car on passe du système de coordonnées ($x^{1}, x^{2}, x^{3}, t$) au système ($x'^{1}, x'^{2}, x'^{3}, t'$) où $t=t'$, cependant $\partial_{t}=(\frac{\partial x'^{i}}{\partial t})\partial_{x'^{i}} + \partial_{t'}$. 
A priori, on n'a pas 
$$
v^{i}\partial_{i}'E'^{k}+\epsilon^{ijk}V_{i}\partial_{t}'B_{j}=\epsilon^{ijk}V_{i}V^{l}\partial_{l}'B_{j}
$$
et on peut prendre un contre-exemple simple ($\vec{V}=V\vec{e_{1}}$) pour s'en convaincre.
\end{proof}

\begin{rmq}
	Les deux autres équations de Maxwell sont invariantes par transformation de Galilée. \color{red}Ce fait trahit la structure profonde des équations de Maxwell, sur laquelle nous reviendrons en essayant de généraliser ces équations à un espace courbe, avec le moins d'hypothèses possibles. \color{black} 
\end{rmq}

\subsection{Invariance par transformations de Lorentz}
Dans ce paragraphe, on se place dans le cadre naturel de la relativité restreinte : l'espace-temps de Minkowski, c'est-à-dire $\R^{4}$ muni d'une forme bilinéaire symétrique $f$, dont la forme quadratique associée est de signature $(1,3)$, la \textbf{métrique de Minkowski}. Autrement dit, il existe une base $(e_{i})_{i\in[|0,3|]}$ telle que : $f(e_{0},e_{0})=-1$ et $$\forall i \in [|1,3|]\ \ f(e_{i},e_{i})=1 $$ et $f(e_{\mu},e_{\nu})=0$ si $\mu \neq \nu$. On note $\eta$ la matrice de f $(\eta_{\alpha \beta})_{\alpha,\beta \in [|0,3|]}$ telle que $\eta_{\alpha \beta}=f(e_{\alpha},e_{\beta})$. 
\begin{defi}
	On appelle groupe de Lorentz le groupe $\mathcal{O}(1,3)$ des endomorphismes de l'espace vectoriel $\R^{4}$ qui préservent la métrique. Il se représente naturellement comme les matrices A telles que $\eta^{T}A\eta=A$. $\mathcal{O}(1,3)$ a quatre composantes connexes, selon que le sens du temps est préservé (transformations orthochrones) ou non, et selon la conservation du signe des volumes (les transformation de déterminant $1$ préservent ce signe). Le groupe $SO(1,3)^{+}$ des transformations propres orthochrones est le groupe de Lorentz restreint. Le quotient $\frac{O(1,3)}{SO(1,3)^{+}}$ est isomorphe au 'Klein Viergruppe', et en fait : $O(1,3) \simeq SO(1,3)^{+}\rtimes(1,P,T,PT)$, où $P$ et $T$ sont respectivement les opérateurs d'inversion spatiale de renversement du temps : \\
	P = $\begin{pmatrix} 1 & 0 & 0 & 0 \\ 0 & -1 & 0 & 0 \\ 0 & 0 & -1 & 0 \\ 0 & 0 & 0 & -1 \end{pmatrix}$ et T = $\begin{pmatrix} -1 & 0 & 0 & 0 \\ 0 & 1 & 0 & 0 \\ 0 & 0 & 1 & 0 \\ 0 & 0 & 0 & 1 \end{pmatrix}$
\end{defi}	
Pour simplifier, on considère la situation de changement de référentiels précédente, où $\vec{V}$ est dirigée selon l'axe de vecteur directeur $\vec{e_{1}}$. La transformation des coordonnées d'un évènement s'écrit alors : 
\[
\left\{
\begin{array}{r c l}
x'^{0} &=& \gamma(x^{0} - \beta x^{1})\\
x'^{1} &=& \gamma(x^{1} - \beta x^{0})\\
x'^{2} &=& x^{2}\\
x'^{3} &=& x^{3}
\end{array}
\right.
\]
et la transformation des champs associée :
\[
\left\{
\begin{array}{r c l}
E_{x} &=& E_{x}'\\
E_{y} &=& \gamma(E_{y}'+VB_{z}')\\
E_{z} &=& \gamma(E_{z}'-VB_{y}')\\
B_{x} &=& B_{x}'\\
B_{y} &=& \gamma(B_{y}'-\frac{V}{c^{2}}E_{z}')\\
B_{z} &=& \gamma(B_{z}'+\frac{V}{c^{2}}E_{y}')
\end{array}
\right.
\]
Les équations de Maxwell sont invariantes par action du groupe de Lorentz restreint. Cependant, nous ne faisons pas apparaître les preuves ici puisqu'elles seront données dans un cadre beaucoup plus général dans la suite.

\section{Équations de Maxwell sur une variété}

Avant d'introduire les fibrés principaux pour géométriser les interactions électromagnétiques, nous devons définir "proprement" le champ électromagnétique ; et cela passe par une définition si générale qu'elle permet en fait de définir le champ électromagnétique sur une variété quelconque.

\subsection{Variétés, champs de tenseurs et k-formes}
Le champ électrique, le champ magnétique et le champ électromagnétique sont des exemples de formes différentielles sur une variété. Bien que relativement abstraites, les formes différentielles sont une notion unificatrice forte. Commençons par donner la définition très naturelle de variété, correspondant à la généralisation de la description du globe terrestre (par exemple) sous la forme de cartes rassemblées en atlas (il est impossible de décrire toute la Terre de manière correcte (sans la déchirer) à l'aide d'une seule carte plane).

\paragraph{Variétés, espaces tangents et 1-formes}
\begin{defi}[Variété $C^{\infty}$]
Soit $M$ un ensemble muni d'un atlas $(U_{i}, \phi_{i})_{i \in I}$, c'est-à-dire de la donnée d'un recouvrement de M $(U_{i})_{i \in I}$, et de bijections $\phi_{i}:U_{i}\rightarrow\R^{n}$ appelées cartes, telles que l'image de $U_{i}$ par $\phi_{i}$ est ouverte dans $\R^{n}$.  On suppose que pour tout $i, j \in I$, les applications $\phi_{i}\circ\phi_{j}^{-1}:\phi_{j}(U_{i}\bigcap U_{j})\rightarrow \phi_{i}(U_{i}\bigcap U_{j})$ sont lisses, c'est -à-dire $C^{\infty}$. M est muni de la topologie engendrée par les $(\phi_{i})_{i \in I}$ que l'on prend séparée (T2). On appelle dimension de la variété M l'entier $n$. On notera souvent $M^n$ pour désigner la variété et donner d'emblée sa dimension.
\end{defi}
\begin{figure}[!h]
	\centering
	\includegraphics{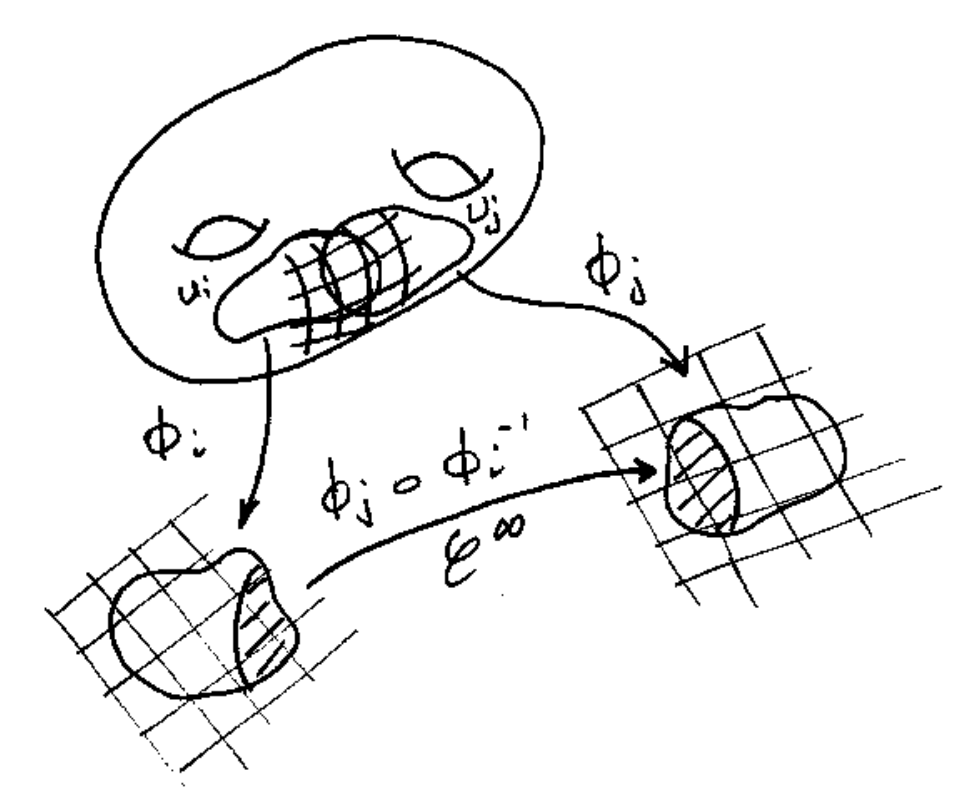}
	\caption{Définition d'une variété différentielle}
\end{figure}

\begin{defi}
	Soit $x\in M^d$. Une courbe passant par $x$ est une application lisse $\gamma:[a,b]\rightarrow M, a<0<b$ telle que $\gamma(0)=x$. On dit que les courbes $\gamma_{1}$ et $\gamma_{2}$ passant par $x$ sont équivalentes si pour une carte $\phi$ sur un voisinage de $x$ on a $(\phi\circ\gamma_{1})'(0)=(\phi\circ\gamma_{2})'(0)$. Une classe d'équivalence de courbes passant par $x$ est appelé \textbf{vecteur tangent en $x$}. L'ensemble de tous les vecteurs tangents en $x$ est noté $T_{x}M$, et est naturellement isomorphe à $\R^d$ en tant qu'espace vectoriel. Soit $f\in \mathcal{C}^{\infty}(M,\R)$ que nous noterons désormais $\mathcal{C}^{\infty}(M)$. On appelle \textbf{dérivée de f le long de $Y_{x}$} avec $\gamma'(0)=Y_{x}\in T_{x}M$ la dérivée $(f\circ\gamma)'(0)$ et on note $Y_{x}[f]$.
\end{defi}

On peut à présent définir ce qu'est un champ de vecteurs sur la variété $M$ :
\begin{defi}
	Posons $TM=\bigcup_{x\in M}T_{x}M$. Un champ de vecteurs autonome sur $M$ est une fonction $Y:M\rightarrow TM$ telle que $\forall x\in M, Y_{x} \in T_{x}M$. De plus il faut que ce champ varie de manière lisse, au sens suivant : $\forall f \in \mathcal{C}^{\infty}(M)$, on impose que $x\mapsto Y_{x}[f]$ soit dans $\mathcal{C}^{\infty}(M)$. On note cette fonction $Y[f]$, c'est la dérivée de $f$ le long du champ de vecteur $Y$. Notons $\Gamma(TM)$ l'ensemble des champs de vecteurs sur $M$. Nous n'utiliserons que des champs de vecteurs autonomes et appellerons champ de vecteurs un champ de vecteurs autonome.
\end{defi}

La donnée d'une carte $(U,\phi)$ au voisinage de $x\in M$ induit une base naturelle du plan tangent en tout $y\in U$ :

\begin{defi}
	Soit $\phi:U\rightarrow \R^{n}$ une carte définie au voisinage de $x$. Les \textbf{champs de vecteurs coordonnées} sont définis par : $$(\partial_{i})_{x} = \frac{d}{dt}\phi^{-1}(\phi(x)+te_{i})_{(t=0)}$$ où $e_{i}$ est le i-ème vecteur de la base canonique de $\R^{n}$. Soit $Y\in \Gamma(TM)$. Restreint à $U$, on peut toujours écrire $Y=a^{i}\partial_{i}$ et on a $a^{i}\in \mathcal{C}^{\infty}(M)$.
\end{defi} 

On énonce maintenant sans démonstration le théorème fondamental suivant qui donne l'existence du flot d'un champ de vecteurs sur une variété.

\begin{theorem}
	Soit $X$ un champ de vecteurs autonome sur $M$. Pour tout $x_0\in M$, il existe un ouvert $I$ de $\R$ contenant $0$, un ouvert $U$ de $M$ contenant $x_0$ et une application flot local $\phi : I\times U \rightarrow M$ $\mathcal{C}^{\infty}$ en ses deux variables, c'est-à-dire vérifiant $\forall x \in U$, $\phi(0,x)=x$ et $\forall x \in U$ l'application $t\rightarrow \phi(t,x)$ est la solution locale de l'équation différentielle $\partial_t\phi(t,x)=X(\phi(t,x))$ dont l'existence est donnée par le théorème de Cauchy-Lipschitz.
\end{theorem}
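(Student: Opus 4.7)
Le plan est de se ramener au th�or�me classique de Cauchy-Lipschitz � param�tre dans $\R^n$ par le biais d'une carte locale, puis de v�rifier que l'objet construit est ind�pendant du choix de la carte dans un voisinage de $x_0$.

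Plus pr�cis�ment, je commencerais par choisir une carte $(U_0,\varphi)$ contenant $x_0$, avec $\varphi:U_0\to\Omega\subset\R^d$ un diff�omorphisme. Le champ $X$ se lit dans cette carte sous la forme $X=a^i\partial_i$ avec $a^i\in\mathcal{C}^\infty(U_0)$ d'apr�s la d�finition des champs de vecteurs coordonn�es donn�e plus haut. On en d�duit un champ de vecteurs tir� en arri�re $\tilde{X}:\Omega\to\R^d$ d�fini par $\tilde{X}(y)=(a^1(\varphi^{-1}(y)),\ldots,a^d(\varphi^{-1}(y)))$, qui est de classe $\mathcal{C}^\infty$ sur $\Omega$ puisque $\varphi^{-1}$ l'est et que les composantes $a^i$ le sont.

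Le th�or�me de Cauchy-Lipschitz classique, appliqu� au syst�me diff�rentiel $\dot{y}(t)=\tilde{X}(y(t))$ avec condition initiale $y(0)=\varphi(x_0)$, fournit alors l'existence d'un intervalle ouvert $I$ contenant $0$, d'un voisinage ouvert $V$ de $\varphi(x_0)$ dans $\Omega$, et d'une application $\tilde{\phi}:I\times V\to\Omega$ qui est $\mathcal{C}^\infty$ en ses deux variables (c'est la version du th�or�me avec d�pendance lisse par rapport aux conditions initiales) telle que $\tilde{\phi}(0,y)=y$ et $\partial_t\tilde{\phi}(t,y)=\tilde{X}(\tilde{\phi}(t,y))$. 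Quitte � restreindre $I$ et $V$, on peut supposer que $\tilde{\phi}(I\times V)\subset\Omega$, ce qui autorise le retour � la vari�t�. Posons alors $U=\varphi^{-1}(V)$ et $\phi:I\times U\to M$ par $\phi(t,x)=\varphi^{-1}(\tilde{\phi}(t,\varphi(x)))$. La r�gularit� $\mathcal{C}^\infty$ en $(t,x)$ r�sulte de la composition de fonctions lisses, et l'�quation $\partial_t\phi(t,x)=X(\phi(t,x))$ s'obtient en d�rivant sous $\varphi^{-1}$ et en utilisant que, par construction, $X$ et $\tilde{X}$ sont $\varphi$-li�s.

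La principale difficult� technique est la d�pendance $\mathcal{C}^\infty$ conjointe en $(t,x)$, qui ne fait pas partie de l'�nonc� le plus �l�mentaire de Cauchy-Lipschitz : il faut invoquer la version am�lior�e donnant la lissit� du flot par rapport aux conditions initiales (obtenue classiquement par un argument d'it�ration de Picard � param�tre, ou par r�currence sur l'ordre de d�rivation en appliquant le th�or�me � l'�quation aux variations). Une fois ceci admis, le reste est essentiellement formel. Je noterais enfin que l'unicit� locale du flot dans la carte permet de v�rifier a posteriori que la construction est ind�pendante du choix de la carte sur l'intersection des voisinages, ce qui rend $\phi$ intrins�quement attach� � $X$ et � $x_0$.
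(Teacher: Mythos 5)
Le texte énonce ce théorème sans démonstration, il n'y a donc pas de preuve du papier à laquelle comparer la vôtre. Votre argument est la démonstration standard et il est correct : lecture du champ dans une carte, application du théorème de Cauchy-Lipschitz avec dépendance $\mathcal{C}^{\infty}$ par rapport aux conditions initiales, puis retour à la variété par $\varphi^{-1}$ en utilisant que $X$ et $\tilde{X}$ sont $\varphi$-liés ; vous identifiez d'ailleurs à juste titre que tout le contenu non trivial réside dans la version « à paramètre » de Cauchy-Lipschitz (régularité conjointe en $(t,x)$), le reste étant formel.
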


L'unicité de la solution locale associée au problème de Cauchy résulte aussi de l'unicité dans le théorème de Cauchy-Lipschitz. Si toutes les solutions maximales de l'équation différentielle considérées sont définies sur tout $\R$, le champ de vecteurs est dit complet, et le flot est défini sur tout $\R$ pour la variable temps. Le théorème de sortie des compacts donne directement le théorème suivant.

\begin{theorem}
	Tout champ de vecteur à support compact est complet.
\end{theorem}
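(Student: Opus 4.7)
\medskip

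\textbf{Stratégie.} La preuve se ramène à appliquer le théorème de sortie des compacts qui vient d'être énoncé : il suffit de montrer que toute courbe intégrale maximale de $X$ reste dans un compact de $M$. Puisque $X$ est à support compact $K=\mathrm{supp}(X)$, le candidat naturel est $K$ lui-même. Je distingue selon que le point de départ est ou non dans $K$, en utilisant de manière cruciale le fait que $X\equiv 0$ sur l'ouvert $M\setminus K$.

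\medskip

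\textbf{Premier cas.} Soit $\phi:(a,b)\to M$ une courbe intégrale maximale de $X$ avec $\phi(0)=x_0$. Si $x_0\notin K$, alors $x_0$ admet un voisinage ouvert sur lequel $X$ s'annule identiquement, donc la courbe constante $t\mapsto x_0$ est solution du problème de Cauchy associé. L'unicité dans Cauchy-Lipschitz force $\phi$ à coïncider avec cette courbe constante, définie sur $\R$ tout entier ; en particulier $(a,b)=\R$.

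\medskip

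\textbf{Second cas.} Supposons maintenant $x_0\in K$. Je montre que $\phi(t)\in K$ pour tout $t\in(a,b)$. Fixons $t_1\in(0,b)$ et raisonnons par l'absurde en supposant $\phi(t_1)\notin K$. L'ensemble $\{t\in[0,t_1] : \phi(t)\in K\}$ est fermé (par fermeture de $K$ et continuité de $\phi$) et contient $0$ ; notons $t_0$ sa borne supérieure. On a alors $\phi(t_0)\in K$, et $\phi(t)\in M\setminus K$ pour tout $t\in(t_0,t_1]$. Mais sur cet intervalle $X\circ\phi$ s'annule, donc $\phi$ y est constante ; par continuité en $t_0$, cette constante vaut $\phi(t_0)\in K$, ce qui contredit $\phi(t_1)\notin K$. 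Le cas $t_1<0$ est symétrique.

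\medskip

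\textbf{Conclusion et obstacle principal.} La trajectoire reste ainsi confinée au compact $K$ sur tout son intervalle maximal de définition. Le théorème de sortie des compacts donne alors $(a,b)=\R$, c'est-à-dire que $X$ est complet. Le point délicat est le piège du second cas : il faut exclure le scénario où la trajectoire frôlerait le bord de $K$ pour s'en échapper en temps fini ; l'argument ci-dessus l'écarte en exploitant le fait que, dès qu'on sort de $K$, la trajectoire se fige instantanément, ce qui entre en conflit avec la continuité qui l'a précisément fait sortir.
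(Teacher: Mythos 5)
Votre preuve est correcte et suit exactement la voie que le texte indique sans la détailler (\og Le théorème de sortie des compacts donne directement le théorème suivant \fg) : vous montrez que toute courbe intégrale maximale reste confinée dans le support compact de $X$, puis vous concluez par le théorème de sortie des compacts. L'argument du second cas, qui exclut proprement qu'une trajectoire s'échappe de $K$ en exploitant le fait que $X$ s'annule hors de $K$ et la continuité de la courbe, comble précisément le détail que le texte laisse implicite.
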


On va maintenant voir comment un champ de vecteurs est transporté par un difféomorphisme.
\begin{defi}
	Soit $f:M\rightarrow N$ un difféomorphisme lisse et $x\in M$. On définit le \textbf{poussé en avant} de $f$ en $x$ comme $f_{*x}:T_{x}M\rightarrow T_{f(x)}N$ par $f_{*x}(\gamma'(0))=(f\circ\gamma)'(0)$. On définit ainsi $f_{*}Y$ le champ de vecteurs sur $N$ poussé en avant de $Y$ champ de vecteurs sur $M$ par $f$.
$$
\xymatrix{
	TM \ar[r]^{f_{*}} \ar[d]_\pi & TN \ar[d]_{\pi'}\\
	M \ar[r]^f & N
	}
$$
\end{defi}

On peut définir de la même manière les $1$-formes sur $M$ :

\begin{defi}
Notons $\mathfrak{T}^{p}_{q}(M)=\bigcup_{x\in M}\mathfrak{T}^{p}_{q}(T_{x}M)$. Une $1$-forme sur $M$ est une fonction $\alpha:M\rightarrow\mathfrak{T}^{0}_{1}(TM)$ telle que $\forall x \in M$ on a $\alpha_{x}\in\mathfrak{T}^{0}_{1}(T_{x}M)$. On demande aussi que cette application varie de manière lisse sur $M$ : $\forall Y \in \Gamma(TM)$ la fonction $\alpha(Y)$ donnée par $\alpha(Y)(x)=\alpha_{x}(Y_{x})$ doit être dans $\mathcal{C}^{\infty}(M)$. On note $\O^{1}(M)$ l'ensemble des $1$-formes sur $M$.
\end{defi}

\paragraph{Généralisation du gradient d'une fonction} 
Soit $f$ une fonction $\mathcal{C}^{\infty}$ sur $\R^{n}$. La dérivée directionnelle de $f$ dans la direction $v$ est le produit scalaire du gradient de $f$ avec le vecteur $v$ : $vf=\vec{\nabla}\cdot\vec{v}$. On veut définir, pour tout $f\in\mathcal{C}^{\infty}(M)$ un objet noté $df$ qui joue le rôle du gradient dans $\R^{n}$.

Le gradient d'une fonction est un champ de vecteurs, donc on aurait envie que $df$ soit un champ de vecteurs sur $M$. Cependant le problème vient du produit scalaire défini naturellement sur $\R^{n}$, mais pas sur notre variété $M$. L'objet qui donne la manière de prendre le produit scalaire de deux vecteurs tangents est appelé \textbf{métrique}, et nous le définirons rigoureusement plus tard. Cependant, il est avantageux d'avoir la différentielle $df$ de $f$ de manière indépendante d'une quelconque métrique sur $M$ : de nombreuses métriques sont par exemple solutions de l'équation d'Einstein, et il n'y a pas de choix canonique. En essayant de conserver les propriétés du gradient, on définit alors \textbf{la différentielle} ou \textbf{dérivée extérieure} de la manière suivante.
\begin{defi}
	La différentielle de $f\in \mathcal{C}^{\infty}(M)$ est la $1$-forme $df$ qui à un champ de vecteurs $Y$ sur $M$ associe la dérivée de $f$ le long de $Y$. Évaluée en $x\in M$ on a donc $df(Y)_{x}=Y_{x}[f]$. On vérifie immédiatement qu'il s'agit bien d'une $1$-forme.
\end{defi}
La fonction $d:\mathcal{C}^{\infty}\rightarrow \O^{1}(M)$ qui a une fonction associe sa dérivée extérieure vérifie les propriétés suivantes : Pour $f,g,h \in \mathcal{C}^{\infty}, \alpha \in \R$ on a $d(f+g)=df+dg$ et aussi $d(\alpha f)=\alpha df$, autrement dit $d$ est $\R$-linéaire, on a évidemment $(f+g)dh=fdh+gdh$ et ce qui s'appelle la \textbf{règle de Leibniz} $d(fg)=fdg+gdf$.

\paragraph{Vecteurs cotangents}
Un champ de vecteurs associe à chaque point $x\in M$ un vecteur de l'espace tangent en $x$ $T_{x}M$. De la même manière, une $1$-forme associe à chaque point $x\in M$ un objet appelé \textbf{vecteur cotangent}. Un vecteur cotangent est une forme linéaire sur $T_{x}M$. On peut le visualiser comme des hyperplans "de niveau" parallèles au voisinage du point considéré (l'image réciproque des entiers relatifs par la forme linéaire). Lorsqu'on prend l'image d'un vecteur tangent $\vec{v}$ par le vecteur cotangent df considéré, $df(\vec{v})$ est en quelque sorte le nombre d'hyperplans que croise le vecteur $\vec{v}$, avec une orientation, parce que $df(\vec{v})$ peut être négatif. Il faut donc "marquer" l'hyperplan correspondant à un $+1$. 

\begin{figure}[!h]
	\centering
	\includegraphics{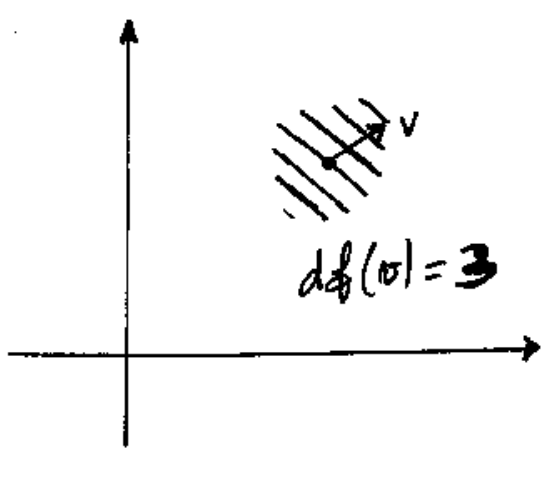}
	\caption{Un vecteur cotangent peut être vu comme un ensemble de lignes de niveau}
\end{figure}

\paragraph{Champs de tenseurs, k-formes}
\begin{defi}
	Un champ de tenseurs de type $(p,q)$ sur $M$ est une fonction $S:M\rightarrow \Tpq(M)$ telle que $S_{x}\in\Tpq(T_{x}M)$ et telle que pour tous $Y_1, ..., Y_q \in \Gamma(TM), \alpha_1, ..., \alpha_p \in \O^{1}(M)$, la fonction $S(\alpha_1, ..., \alpha_p, Y_1, ..., Y_p)$ donnée par $$S(\alpha_1, ..., \alpha_p, Y_1, ..., Y_p)(x)=S(\alpha_{1x}, ..., \alpha_{px}, Y_{1x}, ..., Y_{px})$$ est dans $\mathcal{C}^{\infty}(M)$. L'espace de tous les champs de tenseurs de type $(p,q)$ sur $M$ est noté $\Tpq(M)$.
\end{defi}
\begin{defi}
	Une $k$-forme sur $M$ est un champ de tenseurs $\omega$ de type $(0,k)$ sur $M$ tel que $\forall x\in M$ $\omega_{x} \in \bigwedge^{k}(M)$.
\end{defi}
On note $\O^k(M)$ l'ensemble des $k$-formes sur la variété $M$.
On peut multiplier les formes différentielles entre elles par un produit extérieur point par point :
\begin{defi}
	Pour $\alpha \in \O^i(M)$ et $\beta \in \O^j(M)$, on définit $\alpha\wedge\beta\in\O^{i+j}(M)$ par $(\alpha\wedge\beta)_{x}=\alpha_{x}\wedge\beta_{x}$.
\end{defi}

On peut exprimer localement les $k$-formes dans une base particulière : si $\phi:U\rightarrow\R^n$ est une carte, alors localement, une base des champs de vecteurs est $(\partial_{i})_{i\in[|1,n|]}$. Une base des $k$-formes sur $U$ est alors donnée par $(dx^i)_{i\in[|1,n|]}$ où $dx^i(\partial_j)=\delta_i^j$. Alors toute $k$-forme $\omega\in\O^k(M)$ s'écrit sur $U$ : $$ \omega = \frac{1}{k!}\omega_{i_{1}...i_{k}}dx^{i_1}\wedge...\wedge dx^{i_k}$$ où $\omega_{i_{1}...i_{k}}=\omega(\partial_{i_1}, ..., \partial_{i_k}) \in \mathcal{C}^{\infty}(M)$.

\paragraph{Dérivée extérieure, tiré en arrière}
\begin{defi}
	On a vu que si $f\in\mathcal{C}^{\infty}(M)$, alors $df\in\O^1(M)$ est définie par $df(Y)=Y[f]$ pour tout champ de vecteurs sur $M$. Pour $\omega\in\O^k(M)$, on définit $d\omega$ comme la $(k+1)$-forme qui s'exprime, restreinte à $U$, par 
	$$ d\omega=\frac{1}{k!}d(\omega_{i_1...i_k})\wedge dx^{i_1}\wedge...\wedge dx^{i_k}$$ $$ d\omega=\frac{1}{k!}\partial_i(\omega_{i_1...i_k})dx^i\wedge dx^{i_1}\wedge...\wedge dx^{i_k}$$
\end{defi}

\begin{rmq}
	Il existe une définition de la dérivée extérieure qui ne fait pas appel à un système local de coordonnées : pour tous $X_1, ..., X_{k+1} \in \Gamma(TM)$, on a $$d\omega(X_1, ..., X_{k+1})=\sum_{i=1}^{k+1}(-1)^{i+1}X_i[\omega(X_1,...,\tilde{X_i},...,X_{k+1})]$$ $$+ \sum_{1\neq i < j \neq n}(-1)^{i+j}\omega([X_i, X_j], X_1, ..., \tilde{X_i}, ..., \tilde{X_j}, ..., K_{k+1})$$ où il faut omettre les termes surmontés d'un tilde, et où $[A, B]$ est le crochet de Lie des champs de vecteurs $A$ et $B$, défini par $[A, B]_{x}[f]=A_x[B[f]]-B_x[A[f]]$.
\end{rmq}

\begin{proprietes}
	On a pour $\alpha\in\O^i(M)$ et $\beta\in\O^j(M)$ : $$d(\alpha\wedge\beta)=d\alpha\wedge\beta + (-1)^i\alpha\wedge d\beta$$ et $$d^2=d\circ d=0$$
\end{proprietes}	

\begin{defi}
	On dit qu'une forme différentielle $\omega$ est \textbf{exacte} si elle est la dérivée extérieure d'une autre forme $\epsilon$ i.e $\omega=d\epsilon$. On dit qu'elle est \textbf{fermée} si $d\omega=0$.
\end{defi}

De la même manière qu'on peut pousser en avant les champs de vecteurs par un difféomorphisme, on peut tirer en arrière les k-formes :

\begin{defi}
	Soit $f:M\rightarrow N$ un difféomorphisme lisse, et soit $\omega \in \O^{k}(N)$. Le tiré en arrière $f^{*}\omega$ de $\omega$ par f est dans $\O^{k}(M)$ et est défini par $(f^{*}\omega)_{x}(Y_1, ..., Y_k)=\omega_{f(x)}(f_{*x}(Y_1), ..., f_{*x}(Y_k))$ pour $Y_1, ..., Y_k \in \Gamma(TM)$.
\end{defi}
$$
\xymatrix{
	\O^{k}(M) \ar[d]_\pi & \O^{k}(N) \ar[l]_{f^{*}} \ar[d]_{\pi'}\\
	M \ar[r]^f & N
}
$$

Le tiré en arrière est en fait très naturel pour les formes sur M car : 
\begin{proposition}
	On a, sous les mêmes hypothèses, $d(f^{*}\omega)=f^{*}d\omega$. De plus $f^{*}(\alpha\wedge\beta)=f^{*}\alpha\wedge f^{*}\beta$ et on a la fonctorialité $(f\circ g)^{*}\omega = g^{*}f^{*}\omega$.
\end{proposition}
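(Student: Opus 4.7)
Mon plan est d'\'etablir les trois identit\'es s\'epar\'ement, en commen\c{c}ant par celles qui d\'ecoulent directement de la d\'efinition ponctuelle et en laissant la commutation avec $d$ pour la fin. Pour la fonctorialit\'e, je partirais de la r\`egle de cha\^{\i}ne pour le pouss\'e en avant : si $\gamma$ repr\'esente $Y_x \in T_xM$, alors $(f\circ g)\circ\gamma = f\circ(g\circ\gamma)$, d'o\`u imm\'ediatement $(f\circ g)_{*x} = f_{*g(x)} \circ g_{*x}$. En injectant cette identit\'e dans la d\'efinition du tir\'e en arri\`ere, on obtient $((f\circ g)^{*}\omega)_x(Y_1,\dots,Y_k) = \omega_{f(g(x))}(f_{*g(x)}g_{*x}Y_1,\dots,f_{*g(x)}g_{*x}Y_k) = (g^{*}(f^{*}\omega))_x(Y_1,\dots,Y_k)$, ce qui conclut.

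Pour la compatibilit\'e avec $\wedge$, je travaillerais fibre par fibre. En chaque $x\in M$, la d\'efinition du produit ext\'erieur exprime $(\alpha\wedge\beta)_{f(x)}(f_{*x}Y_1,\dots,f_{*x}Y_{i+j})$ comme une somme altern\'ee sur les permutations $\sigma\in\mathfrak{S}_{i+j}$ de produits $\alpha_{f(x)}(f_{*x}Y_{\sigma(1)},\dots)\cdot\beta_{f(x)}(f_{*x}Y_{\sigma(i+1)},\dots)$. Chaque facteur se r\'e\'ecrit respectivement comme $(f^{*}\alpha)_x(Y_{\sigma(1)},\dots)$ et $(f^{*}\beta)_x(Y_{\sigma(i+1)},\dots)$, et on reconstitue ainsi $(f^{*}\alpha\wedge f^{*}\beta)_x(Y_1,\dots,Y_{i+j})$.

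Pour la commutation $d\circ f^{*} = f^{*}\circ d$, je proc\'ederais par r\'ecurrence sur le degr\'e $k$ de $\omega$, en m'appuyant sur la d\'efinition locale donn\'ee plus haut. En degr\'e $0$, pour $g\in\mathcal{C}^{\infty}(N)$ et $Y\in\Gamma(TM)$, la cha\^{\i}ne $d(f^{*}g)(Y) = Y[g\circ f] = (f_*Y)[g] = dg(f_*Y) = (f^{*}dg)(Y)$ traite le cas initial. Pour les formes coordonn\'ees $dx^i$, le cas pr\'ec\'edent donne $f^{*}(dx^i) = d(f^{*}x^i)$, donc $d(f^{*}dx^i) = d^2(f^{*}x^i) = 0$, qui co\"{\i}ncide avec $f^{*}(d(dx^i)) = f^{*}(0) = 0$. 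En degr\'e $k$ g\'en\'eral, on \'ecrit localement dans une carte $\omega = \frac{1}{k!}\omega_{i_1\dots i_k}\,dx^{i_1}\wedge\dots\wedge dx^{i_k}$ ; la compatibilit\'e avec $\wedge$ d\'ej\`a \'etablie donne $f^{*}\omega = \frac{1}{k!}(f^{*}\omega_{i_1\dots i_k})\,f^{*}dx^{i_1}\wedge\dots\wedge f^{*}dx^{i_k}$, et la r\`egle de Leibniz combin\'ee aux deux \'etapes pr\'ec\'edentes (qui annulent les d\'eriv\'ees des facteurs $f^{*}dx^{i_j}$) conclut par un calcul direct.

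L'obstacle principal sera de s'assurer que le calcul local de la derni\`ere \'etape est globalement coh\'erent, autrement dit que les identit\'es prouv\'ees carte par carte se recollent bien. Cela revient au fond \`a la v\'erification que la d\'efinition de $d\omega$ via coordonn\'ees locales ne d\'epend pas de la carte choisie, point \'equivalent \`a la formule intrins\`eque rappel\'ee dans la remarque pr\'ec\'edente. Cette v\'erification, routini\`ere mais fastidieuse, est le c\oe ur technique de l'affaire et peut \^etre soit admise comme classique, soit obtenue directement via la loi de transformation des champs de tenseurs sous les changements de cartes.
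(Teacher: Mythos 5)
Votre d\'emonstration est correcte et suit exactement la voie que le texte se contente de r\'esumer par \og c'est direct \`a partir des d\'efinitions du tir\'e en arri\`ere et du wedge \fg : vous ne faites qu'expliciter les trois v\'erifications (fonctorialit\'e via la r\`egle de cha\^{\i}ne pour le pouss\'e en avant, compatibilit\'e avec $\wedge$ fibre par fibre, commutation avec $d$ par calcul local \`a partir du cas des fonctions). Le point que vous signalez \`a la fin --- l'ind\'ependance de $d\omega$ vis-\`a-vis de la carte --- est bien le seul v\'eritable contenu technique, et il est effectivement couvert par la formule intrins\`eque rappel\'ee dans la remarque qui pr\'ec\`ede la proposition.
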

\begin{proof}[Preuve]
	C'est direct à partir des définitions du tiré en arrière et du wedge.
\end{proof}

\paragraph{Liens entre la dérivée extérieure et les opérateurs différentiels}
\begin{propriete}
	Pour une fonction $f\in\mathcal{C}^{\infty}(\R^{3})$, on a $\vec{\nabla}\times(\vec{\nabla f})=0$ et pour $\vec{A}$ un champ de vecteurs sur $\R^3$, on a $\vec{\nabla}\cdot(\vec{\nabla}\times\vec{A})=0$.
\end{propriete}
Cette propriété des opérateurs différentiels n'est pas sans rappeler que pour la dérivée extérieure, $d\circ d=0$.

On a déjà vu que la dérivée extérieure jouait le même rôle que le gradient, au choix d'une métrique près. Pour $f\in\mathcal{C}^{\infty}(\R^{3})$, on peut écrire :
$$ df = (\partial_xf)dx+(\partial_yf)dy+(\partial_zf)dz$$Calculons alors dans le cas le plus général la dérivée extérieure d'une $1$-forme quelconque : $$\omega=\omega_xdx+\omega_ydy+\omega_zdz$$ ce qui donne : $$d\omega = (\partial_y\omega_z-\partial_z\omega_y)dy\wedge dz +(\partial_z\omega_x-\partial_x\omega_z)dz\wedge dx +(\partial_x\omega_y-\partial_y\omega_x)dx\wedge dy
$$
Autrement dit, la dérivée extérieure d'une $1$-forme est essentiellement le rotationnel, à condition de pouvoir se ramener à un vecteur, ce qui est possible avec une métrique et l'opérateur de Hodge que nous définirons plus tard. Pour une $2$-forme quelconque :
$$ \omega=\omega_{xy}dx\wedge dy+\omega_{yz}dy\wedge dz+\omega_{zx}dz\wedge dx$$un calcul rapide donne :
$$d\omega=(\partial_z\omega_{xy}+\partial_z\omega_{xy}+\partial_z\omega_{xy})dx\wedge dy\wedge dz$$ qui est la divergence un tout petit peu déguisée ! On peut résumer le tout en notant : 
$$ Gradient \leftrightarrow d:\O^{0}(\R^3)\rightarrow\O^1(\R^3)$$
$$ Rotationnel \leftrightarrow d:\O^{1}(\R^3)\rightarrow\O^2(\R^3)$$
$$ Divergence \leftrightarrow d:\O^{2}(\R^3)\rightarrow\O^3(\R^3)$$

\subsection{Crochet de Lie de deux champs de vecteurs}

\begin{figure}[!h]
	\centering
	\includegraphics{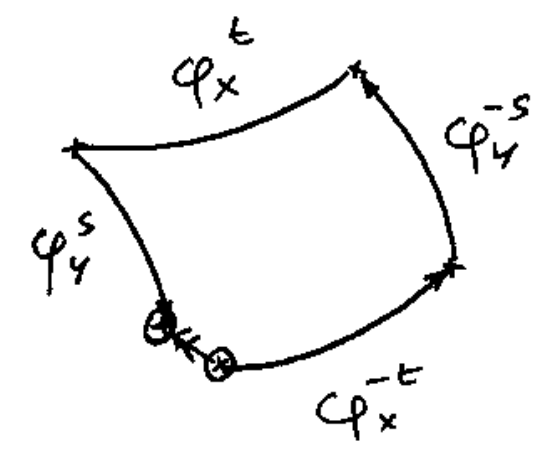}
	\caption{Le crochet de Lie permet de mesurer le défaut de commutation de deux champs de vecteurs}
\end{figure} 

Soit $M^d$ une variété, $x_0\in M$. Soient deux champs de vecteurs X et Y définis localement en $x_0$ et $\phi_X$, $\phi_Y$ leurs flots respectifs. On veut décrire à quel point les flots de ces champs de vecteurs commutent. On définit l'application $$\Phi : \R^2 \rightarrow M$$ localement au voisinage de $(0,0)$ par : $$\Phi(s,t)=\phi_Y^s\circ\phi_X^t\circ\phi_Y^{-s}\circ\phi_X^{-t}(x_0)$$
Comme pour tous les $s$, $t$ dans un voisinage adapté de $0$, $\Phi(0,t)=\Phi(s,0)=0$, on a $$\partial_1\Phi(0,0)=\partial_2\Phi(0,0)=0$$ La dérivée seconde $d^2\Phi(0,0)$ est donc bien définie et $d^2\Phi(0,0)\dot (s,t)=st\partial^2_{12}\Phi(0,0)$.
On pose : $$[X,Y](x_0)=\partial^2_{12}\Phi(0,0)$$
C'est bien un champ de vecteur lisse au sens défini plus haut. En considérant la courbe $\gamma:t\rightarrow \Phi(t,t)$ définie sur un voisinage convenable de $0$, de dérivée nulle en $0$, on a $[X,Y](x_0)=\frac{1}{2}\frac{d^2}{dt^2}\gamma(0)$.

Par exemple, sur $\R^2$, les champs de vecteurs $X=(1,0)$ et $Y=(0,x)$ ne commutent pas. On peut calculer leur crochet : $[X,Y](0)=(0,-1)$.

\begin{figure}[!h]
	\centering
	\includegraphics{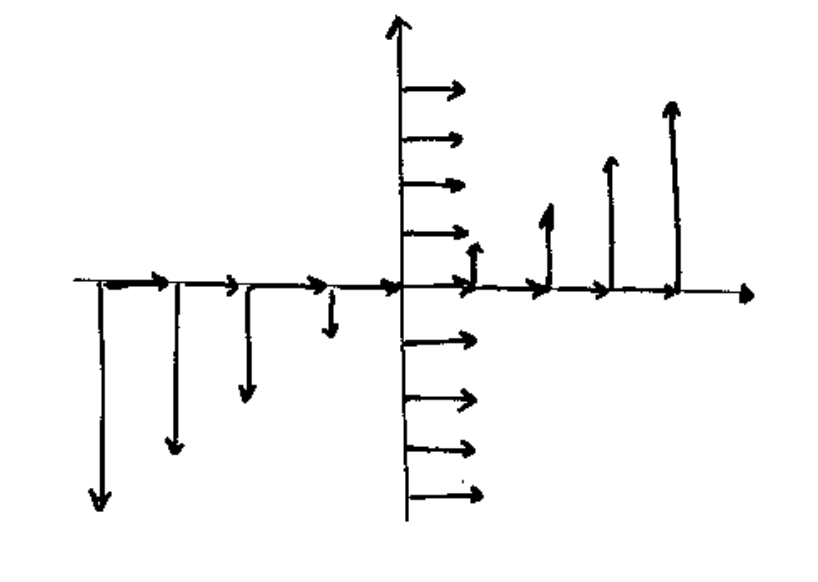}
	\caption{Deux champs de vecteurs qui ne commutent pas}
\end{figure}

\begin{theorem}[Stabilité par poussé en avant]
	Soit $f:M\rightarrow N$ un difféomorphisme de variétés, et $X$, $Y$ deux champs de vecteurs sur M. Alors $[f_*X,f_*Y]=f_*[X,Y]$.
\end{theorem}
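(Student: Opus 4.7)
The plan is to go through the flow-based definition of the Lie bracket that was just introduced. The essential ingredient is a naturality property of flows under push-forward, and once this is in place, the theorem follows from a short chain-rule computation.

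First I would establish the following lemma: for any vector field $X$ on $M$ with local flow $\phi_X^t$, the local flow of $f_*X$ on $N$ is $\phi_{f_*X}^t = f\circ\phi_X^t\circ f^{-1}$. To prove it, fix $y\in N$, set $x = f^{-1}(y)$, and consider the curve $\gamma(t) = f(\phi_X^t(x))$. One checks directly that $\gamma(0) = y$ and that $\gamma'(t) = f_{*\phi_X^t(x)}(X_{\phi_X^t(x)}) = (f_*X)_{\gamma(t)}$, using the very definition of the push-forward of a vector field. Uniqueness in the Cauchy--Lipschitz theorem (the existence of flows result stated above) then forces $\gamma(t) = \phi_{f_*X}^t(y)$.

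Next I would apply the bracket definition at the point $y = f(x_0)\in N$. Writing
$$\Phi_N(s,t) = \phi_{f_*Y}^s\circ\phi_{f_*X}^t\circ\phi_{f_*Y}^{-s}\circ\phi_{f_*X}^{-t}(y),$$
the lemma lets me replace every $\phi_{f_*Z}^u$ by $f\circ\phi_Z^u\circ f^{-1}$; the $f^{-1}$ of one factor cancels the $f$ of the next, and I obtain
$$\Phi_N(s,t) = f\bigl(\phi_Y^s\circ\phi_X^t\circ\phi_Y^{-s}\circ\phi_X^{-t}(x_0)\bigr) = (f\circ\Phi_M)(s,t),$$
where $\Phi_M$ is the analogous map at $x_0$ on $M$.

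The last step is to take the mixed partial at $(0,0)$. By definition, $[X,Y](x_0) = \partial^2_{12}\Phi_M(0,0)$, and similarly $[f_*X,f_*Y](y) = \partial^2_{12}\Phi_N(0,0) = \partial^2_{12}(f\circ\Phi_M)(0,0)$. The chain rule for second derivatives gives in general
$$\partial^2_{12}(f\circ\Phi_M)(0,0) = f_{*\Phi_M(0,0)}\bigl(\partial^2_{12}\Phi_M(0,0)\bigr) + d^2f_{\Phi_M(0,0)}\bigl(\partial_1\Phi_M(0,0),\partial_2\Phi_M(0,0)\bigr),$$
but we already observed that $\partial_1\Phi_M(0,0) = \partial_2\Phi_M(0,0) = 0$, so the bilinear correction vanishes and only the linear term survives. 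Hence $[f_*X,f_*Y](y) = f_{*x_0}[X,Y](x_0) = (f_*[X,Y])(y)$, which is the claim.

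The only delicate point is the last chain-rule step: one has to justify that the second-order correction involving $d^2f$ really disappears, which relies crucially on the vanishing of the first partials of $\Phi_M$ at the origin — exactly the fact that made the definition of $[X,Y]$ well posed in the first place. Everything else is formal manipulation of flows, and the naturality lemma is the only substantive computation.
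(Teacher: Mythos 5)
Your proof is correct and follows essentially the same route as the paper: both rest on the naturality identity $f\circ\phi_Z^{t}\circ f^{-1}=\phi_{f_*Z}^{t}$ and the resulting relation $\Phi_N=f\circ\Phi_M$ between the flow commutators on $N$ and on $M$. The only difference is in how the derivative is extracted at the end: the paper reparametrizes the diagonal curve by $r(t)=\mathrm{sign}(t)\sqrt{|t|}$ so that a single first derivative at $0$ produces the bracket, whereas you take the mixed partial $\partial^{2}_{12}$ directly and use the vanishing of the first partials of $\Phi_M$ at the origin to kill the $d^{2}f$ correction term --- a point you rightly identify as the one delicate step and justify correctly.
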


\begin{proof}[Preuve]
	Soit $x_0\in M$. On appelle $r(t):sign(t)\sqrt{|t|}$. On considère la courbe $x(t)=\Phi(r(t),r(t))$ qui donne $x'(0)=[X,Y](x_0)$. Alors : $$f\circ x(t)=f\circ\phi_Y^{r(t)}\circ f^{-1}\circ f\circ\phi_X^{r(t)}\circ f^{-1}\circ f\circ\phi_Y^{-r(t)}\circ f\circ\phi_X^{-r(t)}\circ f^{-1}\circ f(x)$$ $$f\circ x(t)=\phi_{f_*Y}^{r(t)}\circ\phi_{f_*X}^{r(t)}\circ\phi_{f_*Y}^{-r(t)}\circ\phi_{f_*X}^{r(t)}(f(x))$$ ce qui donne la formule voulue en dérivant en 0.
\end{proof}

\begin{theorem}
	On a $[X,Y](x_0)=\frac{d}{ds}_{|s=0}((\phi_Y^s)_*X(x))$
\end{theorem}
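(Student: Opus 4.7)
The plan is to compute the mixed partial derivative $\partial_s \partial_t \Phi(0,0)$ directly by two applications of the chain rule, recognize the result as $\frac{d}{ds}|_{s=0} ((\phi_Y^s)_* X)(x_0)$, and invoke Schwarz's theorem to match it with the definition $[X, Y](x_0) = \partial^2_{12}\Phi(0, 0)$.

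I would first regroup $\Phi(s, t) = \phi_Y^s(H(s, t))$ with $H(s, t) = \phi_X^t \circ \phi_Y^{-s} \circ \phi_X^{-t}(x_0)$; the key simplification is that $H(s, 0) = \phi_Y^{-s}(x_0)$, so the chain rule in $t$ at $t = 0$ gives
\[
\partial_t \Phi(s, 0) = d\phi_Y^s|_{\phi_Y^{-s}(x_0)} \cdot \partial_t H(s, 0).
\]
Computing $\partial_t H(s, 0)$ through its two occurrences of $t$, using $\partial_t \phi_X^{\pm t}(x) = \pm X(\phi_X^{\pm t}(x))$ and $\phi_X^0 = \mathrm{id}$, yields
\[
\partial_t H(s, 0) = X(\phi_Y^{-s}(x_0)) - d\phi_Y^{-s}|_{x_0}(X(x_0)).
\]
Substituting and using the cocycle identity $d\phi_Y^s|_{\phi_Y^{-s}(x_0)} \circ d\phi_Y^{-s}|_{x_0} = \mathrm{id}$, the two contributions combine into
\[
\partial_t \Phi(s, 0) = ((\phi_Y^s)_* X)(x_0) - X(x_0),
\]
where the first term is exactly the definition of the pushforward evaluated at $x_0$.

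Differentiating in $s$ at $s = 0$ then kills the constant contribution $X(x_0)$ and leaves $\frac{d}{ds}|_{s=0} ((\phi_Y^s)_* X)(x_0)$. Since $\Phi$ is smooth near the origin, Schwarz's theorem identifies $\partial_s \partial_t \Phi(0, 0)$ with $\partial^2_{12}\Phi(0, 0) = [X, Y](x_0)$, which is the claimed identity. The main obstacle is carefully bookkeeping the several chain-rule terms at $t = 0$ and recognizing that two of them collapse, thanks to the flow identity $\phi_Y^s \circ \phi_Y^{-s} = \mathrm{id}$, into the pushforward $(\phi_Y^s)_* X$; once this is done the remainder is routine differential calculus.
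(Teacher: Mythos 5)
Your proof is correct and follows the same overall strategy as the paper's: both reduce the statement to the identity $\partial_2\Phi(s,0)=((\phi_Y^s)_*X)(x_0)-X(x_0)$ and then differentiate in $s$ at $s=0$. The difference lies in how that intermediate identity is obtained. The paper invokes the naturality of flows under pushforward, namely $\phi_Y^s\circ\phi_X^t\circ\phi_Y^{-s}=\phi^t_{(\phi_Y^s)_*X}$, rewrites $\Phi(s,t)=\phi^t_{(\phi_Y^s)_*X}\circ\phi_X^{-t}(x_0)$, and reads off the $t$-derivative at $t=0$ as the difference of the two generating fields at $x_0$. You instead perform a bare-hands chain-rule computation on $\Phi(s,t)=\phi_Y^s(H(s,t))$, and the cancellation $d\phi_Y^s\circ d\phi_Y^{-s}=\mathrm{id}$ reassembles the surviving terms into the pushforward. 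Your route is more elementary and self-contained, since the conjugation identity used by the paper is itself usually proved either by exactly such a computation or by uniqueness of integral curves, at the cost of somewhat heavier bookkeeping; the paper's route is shorter once that lemma is granted. Both arguments end by appealing to the symmetry of the mixed second derivative of $\Phi$ at the origin, which is legitimate here because $\partial_1\Phi$ and $\partial_2\Phi$ vanish along the coordinate axes, so the second differential of $\Phi$ at $(0,0)$ is well defined independently of the chart and Schwarz's theorem applies.
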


\begin{proof}[Preuve]
	$$\phi_Y^s\circ\phi_X^t\circ\phi_Y^{-s}=\phi_{(\phi_Y^s)_*X}^t$$
	$$\Phi(s,t)=\phi^t_{(\phi_Y^s)_*X}\circ\phi_X^{-t}(x_0)$$
	$$\partial_2\Phi(s,0)=(\phi_Y^s)_*X(x)-X(x)$$
\end{proof}

En coordonnées on a donc : $([X,Y])_i=\partial_j(Y_i)X_j - \partial_j(X_i)Y_j$ avec la convention de sommation d'Einstein. Cela nous permet d'exprimer le crochet de Lie d'une troisième manière, encore différente, en termes de dérivées le long de champs de vecteurs. On se rend ici compte de l'intérêt de la notation $\partial_i$ pour les champs de vecteurs canoniques locaux.

\begin{theorem}
	Soient $X$ et $Y$ deux champs de vecteurs sur $M$. Le crochet $[X,Y]$ est l'unique champ de vecteurs défini sur $M$ tel que pour toute fonction $f$ lisse sur $M$, on ait $[X,Y]_x[f]=X_x[Y[f]]-Y_x[X[f]]$. 
\end{theorem}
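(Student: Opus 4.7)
La preuve se divise naturellement en deux parties : l'unicit� et la v�rification de la formule pour le crochet $[X,Y]$ d�j� d�fini.

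Pour l'unicit�, je montrerais d'abord que deux champs de vecteurs $Z_1$ et $Z_2$ sur $M$ qui agissent identiquement comme d�rivations sur $\mathcal{C}^{\infty}(M)$ sont �gaux. En effet, au voisinage d'un point $x$, on peut consid�rer une carte $\phi=(x^1,\dots,x^n)$ et prolonger les fonctions coordonn�es $x^i$ en fonctions globalement lisses sur $M$ gr�ce � une partition de l'unit� (en les modifiant hors d'un petit voisinage de $x$). Comme $Z_j[x^i](x)$ donne la $i$-�me composante de $Z_j$ dans la base $(\partial_i)_x$, l'�galit� des d�rivations impose l'�galit� des champs en tout point.

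Pour l'existence, le plus efficace est de travailler en coordonn�es et de s'appuyer sur l'expression locale $([X,Y])^i = X^j \partial_j(Y^i) - Y^j \partial_j(X^i)$ d�j� obtenue dans la section pr�c�dente. Je calculerais alors, pour $f\in\mathcal{C}^{\infty}(M)$ et localement au voisinage de $x$ :
\begin{align*}
X_x[Y[f]] - Y_x[X[f]] &= X^j \partial_j(Y^i \partial_i f) - Y^j \partial_j(X^i \partial_i f)\\
&= X^j \partial_j(Y^i) \partial_i f + X^j Y^i \partial_j\partial_i f - Y^j \partial_j(X^i)\partial_i f - Y^j X^i \partial_j\partial_i f.
\end{align*}
Par le th�or�me de Schwarz, les termes en d�riv�es secondes se compensent ($\partial_i\partial_j f = \partial_j \partial_i f$), et il reste exactement $\bigl(X^j \partial_j(Y^i) - Y^j \partial_j(X^i)\bigr)\partial_i f = ([X,Y])^i \partial_i f = [X,Y]_x[f]$.

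Le point d�licat de l'argument, et en m�me temps son int�r�t conceptuel, est pr�cis�ment cette annulation des d�riv�es secondes : a priori l'expression $f\mapsto X[Y[f]] - Y[X[f]]$ fait intervenir des d�riv�es d'ordre deux et pourrait d�finir un op�rateur diff�rentiel du second ordre, mais l'antisym�trie combin�e � la sym�trie des d�riv�es secondes la ram�ne � une d�rivation du premier ordre, donc � un champ de vecteurs. Une fois cette v�rification locale faite, l'ind�pendance du r�sultat vis-�-vis de la carte est automatique puisque $[X,Y]$ a �t� d�fini intrins�quement via les flots, et l'unicit� d�montr�e plus haut conclut.
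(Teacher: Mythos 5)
Votre preuve est correcte et suit essentiellement la même démarche que celle du papier, qui se contente d'indiquer « on fait le calcul direct dans les cartes » en s'appuyant sur l'expression locale $([X,Y])^i = X^j\partial_j(Y^i)-Y^j\partial_j(X^i)$ déjà établie. Vous explicitez ce calcul (avec la compensation des dérivées secondes par Schwarz) et ajoutez un argument d'unicité soigné, ce qui complète utilement l'esquisse du texte sans en changer la stratégie.
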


\begin{proof}[Preuve]
	On fait le calcul direct dans les cartes pour arriver au résultat.
\end{proof}

Le crochet vérifie certaines propriétés algébriques intéressantes ; par exemple, pour trois champs de vecteurs $X$, $Y$, et $Z$ sur $M$, on a (égalité de Jacobi) : $$[X,[Y,Z]]+[Y,[Z,X]]+[Z,[X,Y]]=0$$

\subsection{La première paire d'équations}
On cherche à réécrire la première paire d'équations de Maxwell en trouvant une formulation qui les généralise à n'importe quelle variété. Il s'agit des équations données par : $$\vec{\nabla}\cdot\vec{B}=0$$ et $$\vec{\nabla}\times\vec{E}+\partial_t\vec{B}=0$$ 
\paragraph{Cas statique}
On a vu que la divergence est le produit extérieur sur les $2$-formes dans $\R^3$ et que le rotationnel le produit extérieur sur les $1$-formes dans $\R^3$. Ainsi plutôt que de considérer les champs électrique $\vec{E}=(E_x, E_y, E_z)$ et magnétique $\vec{B}=(B_x, B_y, B_z)$ comme des vecteurs, on va les considérer respectivement comme une $1$-forme et une $2$-forme :
$$
B=B_xdy\wedge dz +B_ydz\wedge dx+B_zdx\wedge dy
$$
$$
E=E_xdx+E_ydy+E_zdz
$$
Ainsi la première paire d'équations de Maxwell s'écrit simplement $dE=0$, $dB=0$.
\paragraph{Cas général}
Dans le cas où les champs ne sont pas statiques, il faut les penser comme des objets de l'espace temps et non plus seulement de l'espace. On se place dans $\R^4$ avec les coordonnées standards $(x^0, x^1, x^2, x^3) = (t,x,y,z)$. On prend toujours :
$$
E=E_xdx+E_ydy+E_zdz
$$
$$
B=B_xdy\wedge dz +B_ydz\wedge dx+B_zdx\wedge dy
$$
On définit alors le champ électromagnétique $F$ comme une $2$-forme sur $\R^4$, par :
$$
F=B+E\wedge dt
$$
qu'on peut décomposer sur la base canonique des $2$-formes :
$$
F=\frac{1}{2}F_{\mu\nu}dx^\mu\wedge dx^\nu
$$
et sous forme matricielle, en prenant la base canonique de $\R^4\otimes\R^4$ donnée par la famille : $(dx^i\otimes dx^j)_{i,j\in[|0,3|]})$ si $((dx^j)_{j\in[|0,3|]})$ est la base canonique de $\bigwedge(\R^4)$ : $$F = \begin{pmatrix} 0 & -E_x & -E_y & -E_z \\ E_x & 0 & B_z & -B_y \\ E_y & -B_z & 0 & B_x \\ E_z & B_y & -B_x & 0 \end{pmatrix}$$
\begin{theorem}
	La première paire d'équations de Maxwell s'écrit juste : $$dF=0$$
\end{theorem}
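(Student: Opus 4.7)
Le plan est de calculer directement $dF$ \`a partir de sa d\'efinition en s\'eparant les contributions spatiales et temporelles, puis d'identifier les coefficients avec les deux \'equations de Maxwell voulues. J'utilise d'abord la r\`egle de Leibniz et le fait que $d(dt)=0$ (cons\'equence de $d^2=0$) pour \'ecrire $dF = dB + d(E\wedge dt) = dB + dE\wedge dt$.

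Je calcule ensuite $dB$ sur $\mathbb{R}^4$ en tenant compte du fait que les $B_i$ d\'ependent maintenant aussi du temps : $dB_i = \partial_t B_i\, dt + \partial_x B_i\, dx + \partial_y B_i\, dy + \partial_z B_i\, dz$. En wedgant avec les $2$-formes spatiales correspondantes, les termes purement spatiaux se regroupent (apr\`es permutation cyclique) en la $3$-forme $(\partial_x B_x + \partial_y B_y + \partial_z B_z)\, dx\wedge dy\wedge dz = (\vec{\nabla}\cdot\vec{B})\, dx\wedge dy\wedge dz$, tandis que les termes temporels donnent $\partial_t B_x\, dt\wedge dy\wedge dz + \partial_t B_y\, dt\wedge dz\wedge dx + \partial_t B_z\, dt\wedge dx\wedge dy$. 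Parall\`element, dans $dE\wedge dt$, tout terme de $dE$ contenant d\'ej\`a $dt$ dispara\^it et seules les d\'eriv\'ees spatiales des $E_i$ subsistent ; elles se regroupent en le rotationnel comme d\'ej\`a calcul\'e plus haut dans le chapitre, et apr\`es r\'eordonnancement pour placer $dt$ en premi\`ere position (deux transpositions, donc pas de changement de signe), on obtient $(\vec{\nabla}\times\vec{E})_x\, dt\wedge dy\wedge dz + (\vec{\nabla}\times\vec{E})_y\, dt\wedge dz\wedge dx + (\vec{\nabla}\times\vec{E})_z\, dt\wedge dx\wedge dy$.

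En sommant, $dF$ appara\^it comme combinaison lin\'eaire des quatre $3$-formes ind\'ependantes $dx\wedge dy\wedge dz$, $dt\wedge dy\wedge dz$, $dt\wedge dz\wedge dx$, $dt\wedge dx\wedge dy$, avec pour coefficients respectifs $\vec{\nabla}\cdot\vec{B}$ et les trois composantes de $\partial_t\vec{B} + \vec{\nabla}\times\vec{E}$. L'annulation de $dF$ \'equivaut donc exactement au syst\`eme $\vec{\nabla}\cdot\vec{B} = 0$ et $\vec{\nabla}\times\vec{E} + \partial_t\vec{B} = 0$, c'est-\`a-dire \`a la premi\`ere paire d'\'equations de Maxwell. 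Le seul point d\'elicat est le suivi rigoureux des signes lors des permutations dans $\bigwedge^3(\mathbb{R}^4)$, mais le choix cyclique des orientations pour les composantes de $B$ dans la d\'efinition de $F$ assure pr\'ecis\'ement que toutes ces permutations sont paires, de sorte qu'aucun signe parasite n'appara\^it.
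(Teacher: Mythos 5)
Votre preuve est correcte et suit essentiellement la m\^eme d\'emarche que celle du texte : d\'ecomposer $dF=dB+dE\wedge dt$, s\'eparer les contributions spatiales et temporelles de la d\'eriv\'ee ext\'erieure (le texte note cela $d_s\omega$ et $dt\wedge\partial_t\omega$), utiliser $dt\wedge dt=0$ pour \'eliminer la partie temporelle de $dE$, puis identifier les coefficients des $3$-formes ind\'ependantes avec $\vec{\nabla}\cdot\vec{B}$ et $\partial_t\vec{B}+\vec{\nabla}\times\vec{E}$. Vous \^etes simplement plus explicite sur le suivi des signes en coordonn\'ees, ce qui ne change rien au fond de l'argument.
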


\begin{proof}[Preuve]
	Tout d'abord on a : $$dF=d(B+E\wedge dt)$$ Or si $\omega=\omega_Idx^I$ où $I$ parcourt l'ensemble des multi-indices (par exemple pour $\omega=F$, $\omega_{I}=\frac{1}{2}F_{\mu\nu}$ et $dx^I=dx^\mu\wedge dx^\nu$) alors $d\omega=\partial_\mu\omega_Idx^\mu\wedge dx^I$. On peut donc décomposer $d\omega$ en une partie spatiale $d_s\omega=\partial_i\omega_Idx^i\wedge dx^I$ où $i$ parcourt $[|1,3|]$ et une partie temporelle $dt\wedge\partial_t\omega=\partial_0\omega_Idx^0\wedge dx^I$. Par conséquent :
	$$
	dF=d_sB+dt\wedge\partial_tB+(d_sE+dt\wedge\partial_tE)\wedge dt
	$$
	$$
	dF=d_sB+(\partial_tB+d_sE)\wedge dt
	$$
	d'où le système d'équations :
	\[
	\left\{
	\begin{array}{l}
	d_sB = 0 \\
	\partial_tB + d_sE = 0
	\end{array}
	\right.
	\]
	qui est équivalent aux deux premières équations de Maxwell.	
\end{proof}
\begin{rmq}
		La grande force du langage des formes différentielles est sa généralité, qui permet de définir un champ magnétique $F$ comme une $2$-forme sur n'importe quelle variété $M$. Les deux premières équations de Maxwell la contraignent juste à être fermée. On peut voir l'espace temps comme un fibré en droite sur la sous-variété d'espace, cependant ce fibré n'est pas nécessairement trivial : autrement dit on ne peut pas forcément écrire $M = \R\times S$ où $S$ est la variété d'espace. Les notions de champ électrique et de champ magnétique peuvent alors perdre leur sens ; seule subsiste celle de champ électromagnétique.
\end{rmq}
\begin{rmq}
	Si l'espace temps $M$ est trivial, il peut s'écrire $M=\R\times S$ de bien des façons ! En relativité, les différents repères inertiels correspondent à différentes trivialisations de cette nature, reliées par les transformations de Lorentz. C'est pourquoi le champ électrique et le champ magnétique sont mélangés lorsqu'ils subissent une transformation de Lorentz.
\end{rmq}
\begin{rmq}
	La première paire d'équations n'implique pas le mesure de distances dans l'espace-temps. On n'a pas encore parlé de la métrique de Minkowski, seulement de $\R^4$ ! C'est pour cela que ces équations sont invariantes par transformation de Galilée. La deuxième paire d'équations, elle, a besoin d'une métrique de manière implicite.
\end{rmq}

\subsection{Métrique, élément de volume et opérateur de Hodge}
\paragraph{Le cas de l'espace plat}
\begin{defi}
	Une métrique sur $E$ où $E$ est un espace vectoriel réel est une $2$-forme $g\in \mathfrak{T}^0_2(E)$ telle que $g$ est \textbf{symétrique} et \textbf{non dégénérée} (c'est-à-dire que si $\forall v \in E$, $g(u,v)=0$ alors $u=0$).
\end{defi}
\begin{proposition}
	Il existe une base $(e_i)_{i\in[|1,n|]}$ orthonormale pour $g$ au sens suivant : $$g(e_i,e_j)=\pm \delta_{ij} \forall i, j\in [|1,n|]$$
\end{proposition}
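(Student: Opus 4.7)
Je proc\'ederais par r\'ecurrence sur $n = \dim E$, en adaptant l'id\'ee de Gram-Schmidt au cas d'une forme bilin\'eaire qui n'est pas n\'ecessairement d\'efinie positive. La difficult\'e principale par rapport au cas euclidien standard est qu'on ne peut pas simplement prendre n'importe quel vecteur non nul et le normaliser : il faut d'abord garantir l'existence d'un vecteur $v$ tel que $g(v,v) \neq 0$, c'est-\`a-dire d'un vecteur non isotrope.

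Pour le pas initial, en dimension $1$, la non-d\'eg\'en\'erescence de $g$ entra\^ine imm\'ediatement que pour tout vecteur non nul $v$, $g(v,v) \neq 0$ (sinon $g(v, \cdot)$ serait identiquement nulle et $v$ annulerait toute la forme). On pose alors $e_1 = v / \sqrt{|g(v,v)|}$ et l'on a bien $g(e_1, e_1) = \pm 1$.

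Pour le pas de r\'ecurrence, supposant le r\'esultat acquis jusqu'en dimension $n-1$, je commencerais par \'etablir l'existence d'un vecteur non isotrope dans $E$, via un argument de polarisation : si $g(v,v) = 0$ pour tout $v \in E$, alors pour tous $u, v \in E$,
\begin{equation*}
g(u,v) = \tfrac{1}{2}\bigl(g(u+v, u+v) - g(u,u) - g(v,v)\bigr) = 0,
\end{equation*}
ce qui contredit la non-d\'eg\'en\'erescence de $g$. On dispose donc de $v$ avec $g(v,v) \neq 0$, et l'on pose $e_1 = v/\sqrt{|g(v,v)|}$, de sorte que $g(e_1, e_1) = \pm 1$. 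On consid\`ere alors l'hyperplan orthogonal $H = \{ w \in E : g(e_1, w) = 0\}$. Puisque $g(e_1, e_1) \neq 0$, tout $u \in E$ se d\'ecompose de mani\`ere unique sous la forme $u = \lambda e_1 + w$ avec $\lambda = g(e_1, u)/g(e_1, e_1)$ et $w \in H$, ce qui donne $E = \R e_1 \oplus H$ et en particulier $\dim H = n-1$.

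Il reste \`a v\'erifier que la restriction $g|_H$ reste non d\'eg\'en\'er\'ee pour pouvoir appliquer l'hypoth\`ese de r\'ecurrence : si $w \in H$ annule $g(w, \cdot)$ sur $H$, alors pour tout $u = \lambda e_1 + w' \in E$ on a $g(w, u) = \lambda g(w, e_1) + g(w, w') = 0$, donc $w = 0$ par non-d\'eg\'en\'erescence de $g$ sur $E$. L'hypoth\`ese de r\'ecurrence appliqu\'ee \`a $(H, g|_H)$ fournit alors une base $(e_2, \dots, e_n)$ orthonormale au sens voulu, et $(e_1, e_2, \dots, e_n)$ convient. L'obstacle principal est bien la premi\`ere \'etape -- l'existence d'un vecteur non isotrope -- qui repose crucialement sur la non-d\'eg\'en\'erescence et sur la caract\'eristique nulle du corps de base (l'identit\'e de polarisation exige de pouvoir diviser par $2$).
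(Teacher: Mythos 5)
Votre preuve est correcte et suit essentiellement la m\^eme d\'emarche que celle du texte : polarisation pour exhiber un vecteur non isotrope, d\'ecomposition $E=\R e_1\oplus e_1^{\perp}$, puis r\'ecurrence sur l'orthogonal, la normalisation \'etant possible car on travaille sur $\R$. Vous \^etes m\^eme un peu plus soigneux que le texte en v\'erifiant explicitement que la restriction de $g$ \`a l'hyperplan orthogonal reste non d\'eg\'en\'er\'ee, point que la preuve du m\'emoire laisse implicite en \'ecrivant simplement qu'on \og applique le m\^eme raisonnement \`a $x^{\perp}$\fg.
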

\begin{proof}[Preuve]
	La forme $g$ bilinéaire symétrique $g$ étant non nulle, et sachant que :
	$$
	g(x,y)=\frac{1}{2}(g(x+y,x+y)-g(x,x)-g(y,y))
	$$
	on en déduit qu'il existe un $x\in E$ tel que $g(x,x)$ est non nul (autrement dit il existe un $x$ non isotrope). Par conséquent $\R x\cap x^{\perp}=\emptyset$ donc $\R x \oplus x^{\perp} = E$. On applique alors le même raisonnement à $x^{\perp}$ et on obtient un base orthogonale $(e_i)_{i\in [|1,n|]}$ qui permet d'écrire, en posant $g(e_{i},e_i)=\alpha_i$ : 
	$$
	g(x,x) = \alpha_1(x^{1})^2+...+\alpha_n(x^{n})^2
	$$
	Et comme on est dans $\R$, on en déduit le résultat voulu.
	\end{proof}
	
\begin{defi}
	Un élément de volume sur $E$ relativement à $g$ est un $\mu\in\bigwedge^n(E)$ donné par $e^1\wedge...\wedge e^n$ si $(e_i)_{i\in [|1,n|]}$ est une base orthonormale. On a $e^i\in\bigwedge^1(E)$ et $\mu(e_1, ..., e_n)=1$. $\mu$ dépend d'un facteur $\pm1$ du choix d'une base orthonormée. Un classe d'équivalence de bases pour ce critère est appelé une \textbf{orientation} de $E$. Si $\mu$ est une orientation de $E$, une base $(v_i)_{i\in[|1,n|]}$ est dite \textbf{orientée positivement} si $\mu(v_1, ..., v_n)=1$.
\end{defi}

\begin{defi}
	Une métrique $g$ sur $E$ induit une métrique $\tilde{g}\in\mathfrak{T}^0_2(\bigwedge^k(E))$ définie comme suit. Soit $(v_i)_{i\in[|1,n|]}$ une base de $E$, et posons $g_{ij}=g(v_i, v_j)$. Soit $g^{ij}$ l'élément de matrice $(i,j)$ de l'inverse de $(g_{ij})_{i,j\in[|1,n|]}$. Pour $\alpha, \beta \in \bigwedge^k(E)$, on définit $\tilde{g}(\alpha,\beta)$ en termes de composants (pour la base $(v_i)_{i\in[|1,n|]}$) par :
	$$ 
	\tilde{g}(\alpha,\beta)=\frac{1}{k!}g^{i_1j_1}...g^{i_kj_k}\alpha_{i_1...i_k}\beta{j_1...j_k}
	$$
	$\tilde{g}(\alpha, \beta)$ est indépendant du choix de la base. Si $\alpha, \beta \in \bigwedge^0(E)$, on pose $\tilde{g}(\alpha, \beta)=\alpha\beta$
\end{defi}
\begin{theorem}
	Soient $g$ une métrique sur $E$ de dimension $n$ et $\mu$ un élément de volume de $E$ relativement à $g$. Il y a un unique isomorphisme $*:\bigwedge^k(E)\rightarrow\bigwedge^{n-k}(E)$ tel que $$\forall \alpha,\beta\in\bigwedge^k(E)\ \ \alpha\wedge*\beta=\tilde{g}(\alpha,\beta)\mu$$
\end{theorem}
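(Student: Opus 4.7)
Mon approche consiste � traiter d'abord l'unicit�, puis � construire $*$ explicitement dans une base orthonorm�e positivement orient�e, et enfin � en d�duire la propri�t� d'isomorphisme.

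Pour l'unicit�, je supposerais que $*_1$ et $*_2$ v�rifient tous deux l'identit� caract�ristique. On aurait alors $\alpha\wedge(*_1\beta-*_2\beta)=0$ pour tous $\alpha,\beta\in\bigwedge^k(E)$. Il suffirait d'invoquer la non-d�g�n�rescence de l'appariement $\bigwedge^k(E)\times\bigwedge^{n-k}(E)\to\bigwedge^n(E)$ donn� par le produit ext�rieur, c'est-�-dire le fait que si $\gamma\in\bigwedge^{n-k}(E)$ v�rifie $\alpha\wedge\gamma=0$ pour tout $\alpha\in\bigwedge^k(E)$, alors $\gamma=0$. Cette propri�t� se d�montre en d�composant $\gamma$ dans la base $(e^{j_1}\wedge\cdots\wedge e^{j_{n-k}})_{j_1<\cdots<j_{n-k}}$ et en le testant contre les monomi�res compl�mentaires, qui produisent $\pm\mu$. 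Il s'ensuit $*_1=*_2$.

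Pour l'existence, je fixerais une base orthonorm�e $(e_i)_{i\in[|1,n|]}$ positivement orient�e, et je noterais $\eta_i=g(e_i,e_i)\in\{-1,+1\}$. Pour un multi-indice croissant $I=(i_1<\cdots<i_k)$ de compl�mentaire $J=(j_1<\cdots<j_{n-k})$ dans $\{1,\ldots,n\}$, je poserais
\[
*(e^{i_1}\wedge\cdots\wedge e^{i_k})=\eta_{i_1}\cdots\eta_{i_k}\,\varepsilon_I\,e^{j_1}\wedge\cdots\wedge e^{j_{n-k}},
\]
o� $\varepsilon_I\in\{-1,+1\}$ d�signe la signature de la permutation $(i_1,\ldots,i_k,j_1,\ldots,j_{n-k})$, puis j'�tendrais $*$ par lin�arit�. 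Par bilin�arit� des deux membres, la v�rification de $\alpha\wedge*\beta=\tilde{g}(\alpha,\beta)\mu$ se ram�ne aux couples de monomi�res $(e^I,e^K)$ de la base : si $I\neq K$, le membre de droite s'annule par orthogonalit� de la base, et celui de gauche aussi puisque $|I|=|K|=k$ et $I\neq K$ entra�nent $I\cap K^c\neq\emptyset$ ; si $I=K$, le calcul direct donne $e^I\wedge *e^I=\eta_{i_1}\cdots\eta_{i_k}\,\varepsilon_I^2\,e^1\wedge\cdots\wedge e^n=\tilde{g}(e^I,e^I)\,\mu$.

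La propri�t� d'isomorphisme d�coulerait enfin de l'�galit� $\dim\bigwedge^k(E)=\binom{n}{k}=\binom{n}{n-k}=\dim\bigwedge^{n-k}(E)$ combin�e � l'injectivit� de $*$ : si $*\beta=0$, alors $\tilde{g}(\alpha,\beta)\mu=\alpha\wedge *\beta=0$ pour tout $\alpha\in\bigwedge^k(E)$, et la non-d�g�n�rescence de $\tilde{g}$, h�rit�e de celle de $g$, impose $\beta=0$. L'obstacle principal que j'anticipe est le suivi des signes dans la v�rification de l'identit� caract�ristique, notamment lorsque plusieurs $\eta_i$ valent $-1$ et que la permutation $\varepsilon_I$ doit compenser exactement la r�organisation des facteurs $e^{j_\ell}$ au sein du produit ext�rieur ; cette v�rification reste n�anmoins purement combinatoire. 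On observera en outre que l'op�rateur ainsi construit ne d�pend pas du choix particulier de la base orthonorm�e, puisqu'il est caract�ris� intrins�quement par l'identit� �nonc�e, ce qui boucle proprement la preuve.
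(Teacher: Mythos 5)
Votre preuve est correcte, mais elle suit pour l'existence un chemin r�ellement diff�rent de celui du texte. Le texte obtient existence et unicit� d'un seul coup par un argument de dualit� : l'appariement $\gamma\mapsto\phi_\gamma$, o� $\phi_\gamma(\alpha)\mu=\alpha\wedge\gamma$, est une injection de $\bigwedge^{n-k}(E)$ dans $\bigwedge^{k}(E)^{*}$, donc un isomorphisme par �galit� des dimensions ; $*\beta$ est alors d�fini comme l'unique $\gamma$ repr�sentant la forme lin�aire $\alpha\mapsto\tilde{g}(\alpha,\beta)$. Vous utilisez la m�me non-d�g�n�rescence de l'appariement ext�rieur, mais seulement pour l'unicit� et l'injectivit�, et vous construisez $*$ explicitement sur les mon�mes d'une base orthonorm�e orient�e, avec la v�rification combinatoire des signes $\eta_{i_1}\cdots\eta_{i_k}\varepsilon_I$. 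Les deux approches sont valables : celle du texte est plus courte et ind�pendante de toute base, l'existence tombant gratuitement de la surjectivit� ; la v�tre a l'avantage de produire d'embl�e la formule explicite de $*$ en base orthonorm�e, que le texte doit de toute fa�on �tablir ensuite dans un th�or�me s�par� (l'expression de $*\omega$ via $|g|^{1/2}$ et le symbole $\epsilon$). Votre remarque finale sur l'ind�pendance vis-�-vis de la base choisie, cons�quence de l'unicit�, est bien le point qui rend la construction explicite l�gitime.
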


\begin{proof}[Preuve]
	Pour $\gamma\in\bigwedge^{n-k}(E)$, définissons $\phi_\gamma :\bigwedge^k(E)\rightarrow\R$ par $\phi_\gamma(\alpha)\mu=\alpha\wedge\mu$. Si $\phi_\gamma(\alpha) = 0$ $\forall \alpha\in\bigwedge^k(E)$ alors $\gamma=0$. Par conséquent, $\gamma \mapsto \phi_\gamma$ définit une application linéaire injective de $\bigwedge^{n-k}(E)$ dans $\bigwedge^{k}(E)^{*}$ (espace dual). Comme $dim(\bigwedge^{n-k}(E))=dim(\bigwedge^{k}(E))$ c'est en fait un isomorphisme. Par conséquent, pour tout $\beta\in\bigwedge^k(E)$ il existe $\gamma\in\bigwedge^{n-k}(E)$ tel que $\phi_\gamma(\alpha)=\tilde{g}(\alpha, \beta)$ $\forall \alpha\in\bigwedge^k(E)$. $\gamma$ est renommé $*\beta$. Donc :
	$$
	\alpha\wedge*\beta=\alpha\wedge\gamma=\phi_\gamma(\alpha)\mu=\tilde{g}(\alpha,\beta)\mu
	$$
	L'opérateur $*$ est clairement un isomorphisme.
\end{proof}

Soit $g$ une métrique arbitraire sur $E$, soit $g^{ij}$ définie comme ci-dessus relativement à la base arbitraire de $E$ $(v_i)_{i\in[|1,n|]}$. Pour $\omega\in\bigwedge^k(E)$ de composantes $\omega_{i_1...i_k}$, on définit : 
$$
\omega^{i_1...i_k}=g^{i_1j_1}...g^{i_kj_k}\omega_{i_1...i_k}
$$
Posons $|g|=|det(g_{ij})|$. Si $(e_i)_{i\in[|1,n|]}$ est une base orthonormale pour $g$, posons $g(e_i,e_j)=\eta_{ij}$ et $(-1)^g=det(\eta_{ij})$. Si $v_j=a_j^ie_i$, alors $g_{ij}=a^k_ia^l_j\eta_{kl}$ ou en terme de matrice : $G=A^T\eta A$. Par conséquent, $|detA|=|g|^{1/2}$, donc $\mu\equiv e^1\wedge...\wedge e^n=|g|^{1/2}v^1\wedge...\wedge v^n$ si la base $(v_i)_{i\in[|1,n|]}$ est orientée positivement par rapport à $\mu$.
On pose finalement $$\epsilon_{i_1...i_n}=\sum_{\sigma\in\mathfrak{S}_n}(-1)^\sigma e^{i_1}(e_{\sigma(1)})...e^{i_n}(e_{\sigma(n)})$$ On peut calculer le résultat suivant :
\begin{theorem}
	Si $g$ est une métrique sur $E$ d'orientation $\mu\in\bigwedge^n(E)$, et soit $(v_i)_{i\in[|1,n|]}$ une base orientée positivement de $E$. Pour $\omega=\omega_{i_1...i_k}v^{i_1}\otimes...\otimes v^{i_k}\in\bigwedge^k(E)$, on a :$$ *\omega=|g|^{1/2}\frac{1}{k!}\omega^{j_1...j_k}\epsilon_{j_1...j_kj_{k+1}...j_n}v^{j_{k+1}}\otimes...\otimes v^{j_n}$$ 
\end{theorem}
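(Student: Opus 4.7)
Mon plan est de vérifier la formule annoncée par un calcul direct, en exploitant la caractérisation de $*$ établie dans le théorème précédent : comme $*\omega$ est l'unique élément de $\bigwedge^{n-k}(E)$ satisfaisant $\alpha \wedge *\omega = \tilde{g}(\alpha, \omega)\mu$ pour toute $\alpha \in \bigwedge^k(E)$, il suffit de poser $\gamma$ égal au membre de droite proposé et de vérifier qu'il satisfait cette identité caractéristique.

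Premièrement, je décompose une $k$-forme test arbitraire sous la forme $\alpha = \frac{1}{k!}\alpha_{i_1 \ldots i_k} v^{i_1} \wedge \ldots \wedge v^{i_k}$ (composantes antisymétriques). Le produit $\alpha \wedge \gamma$ se ramène alors à des termes du type $v^{i_1} \wedge \ldots \wedge v^{i_k} \wedge v^{j_{k+1}} \wedge \ldots \wedge v^{j_n}$, que l'antisymétrie totale du produit extérieur exprime comme $\epsilon^{i_1 \ldots i_k j_{k+1} \ldots j_n}\, v^1 \wedge \ldots \wedge v^n$. En utilisant $\mu = |g|^{1/2}\, v^1 \wedge \ldots \wedge v^n$ (identité établie juste avant l'énoncé du théorème pour une base positivement orientée), le facteur $|g|^{1/2}$ apparaissant dans la définition de $\gamma$ se simplifie exactement avec le $|g|^{-1/2}$ provenant du passage à $\mu$.

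L'étape clé est alors l'identité combinatoire de contraction des symboles de Levi-Civita : la somme sur les indices $j_{k+1}, \ldots, j_n$ du produit $\epsilon^{i_1 \ldots i_k j_{k+1} \ldots j_n} \epsilon_{l_1 \ldots l_k j_{k+1} \ldots j_n}$ vaut $(n-k)!$ fois le delta de Kronecker totalement antisymétrisé sur les $k$ indices restants. Contractée avec $\alpha_{i_1 \ldots i_k}$ et $\omega^{l_1 \ldots l_k}$ (tous deux antisymétriques), cette identité régénère après simplification le facteur $k!\, \alpha_{i_1 \ldots i_k}\, \omega^{i_1 \ldots i_k}$, et l'expression finale est bien $\tilde{g}(\alpha, \omega)\mu$ d'après la définition de $\tilde{g}$. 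L'unicité du théorème précédent permet alors d'identifier $\gamma$ à $*\omega$.

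Le principal obstacle est de nature comptable : il faut suivre rigoureusement les facteurs $1/k!$ et $1/(n-k)!$ introduits par les diverses antisymétrisations et s'assurer qu'ils se combinent correctement avec les $k!$ et $(n-k)!$ produits par la contraction des $\epsilon$. Un examen attentif du placement du facteur $|g|^{1/2}$ est également nécessaire pour que l'identité reste valable en signature indéfinie (typiquement lorentzienne, cas qui nous intéresse pour la suite), puisque c'est bien $|g| = |\det(g_{ij})|$ et non $\det(g_{ij})$ qui intervient dans la définition de $\mu$ ; aucun signe supplémentaire lié à $(-1)^g$ ne vient parasiter le calcul tant que l'on ne monte ni ne descend d'indice sur les symboles de Levi-Civita eux-mêmes.
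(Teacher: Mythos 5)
Votre démonstration est correcte. Notez que le texte ne fournit aucune preuve de cet énoncé (il est simplement annoncé par \og on peut calculer le résultat suivant \fg), de sorte qu'il n'y a pas d'argument du papier auquel comparer le vôtre ; la voie que vous choisissez --- poser $\gamma$ égal au membre de droite, vérifier $\alpha\wedge\gamma=\tilde{g}(\alpha,\omega)\mu$ pour toute $\alpha\in\bigwedge^k(E)$, puis conclure par l'unicité établie dans le théorème précédent --- est la voie naturelle et comble cette lacune. Les deux points de vigilance que vous signalez sont exactement les bons : d'une part le décompte des factorielles se referme bien, puisque $\frac{1}{k!}\cdot\frac{1}{(n-k)!}\cdot\frac{1}{k!}\cdot(n-k)!\cdot k!=\frac{1}{k!}$, ce qui redonne précisément le $\frac{1}{k!}$ de la définition de $\tilde{g}(\alpha,\omega)$ ; d'autre part le symbole $\epsilon$ du papier est défini combinatoirement (c'est la signature de la permutation), si bien qu'aucun facteur $(-1)^g$ ou $|g|^{\pm1}$ ne s'introduit dans la contraction des deux symboles de Levi-Civita, et le $|g|^{1/2}$ de $\gamma$ se compense exactement avec le $|g|^{-1/2}$ issu de $v^1\wedge\ldots\wedge v^n=|g|^{-1/2}\mu$. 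Une seule précision mériterait d'être explicitée dans une rédaction complète : la formule de l'énoncé est écrite avec des produits tensoriels, et il faut utiliser la convention du papier (un élément de $\bigwedge^{n-k}(E)$ à coefficients antisymétriques écrit en $\otimes$ vaut $\frac{1}{(n-k)!}$ fois le même écrit en $\wedge$) pour faire apparaître le $\frac{1}{(n-k)!}$ que votre contraction des $\epsilon$ vient ensuite absorber.
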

De la même façon, un calcul donne le résultat suivant :
\begin{theorem}
	Pour $\omega\in\bigwedge^k(E)$ on a $**\omega=(-1)^g(-1)^{k(n-k)}\omega$
\end{theorem}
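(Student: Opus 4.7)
Pour démontrer $**\omega=(-1)^g(-1)^{k(n-k)}\omega$, je me placerais dans une base $g$-orthonormée $(e_i)_{i\in[|1,n|]}$ et j'utiliserais la linéarité des deux applications successives de l'opérateur $*$ pour me ramener au cas où $\omega$ est un élément de la base canonique de $\bigwedge^k(E)$ associée. Pour un multi-indice strictement croissant $I=(i_1<\cdots<i_k)$, je note $e^I=e^{i_1}\wedge\cdots\wedge e^{i_k}$ ; les $e^I$ (pour $I$ parcourant les parties à $k$ éléments de $[|1,n|]$) forment une base de $\bigwedge^k(E)$.

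Pour chaque tel $I$, j'introduirais le multi-indice complémentaire ordonné $J=(j_1<\cdots<j_{n-k})$ et $\epsilon_{IJ}\in\{\pm 1\}$ la signature de la permutation envoyant $(1,\ldots,n)$ sur $(i_1,\ldots,i_k,j_1,\ldots,j_{n-k})$. En partant de la formule explicite du théorème précédent, et en utilisant que dans une base orthonormée $|g|=1$ et $g^{aa}=\eta_{aa}$, je montrerais la formule clé
$$*e^I=\epsilon_{IJ}\,\eta_{i_1i_1}\cdots\eta_{i_ki_k}\,e^J.$$
En l'appliquant une seconde fois à cette égalité, on obtient immédiatement
$$**e^I=\epsilon_{IJ}\,\epsilon_{JI}\,\Bigl(\prod_{a=1}^n\eta_{aa}\Bigr)\,e^I.$$

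Il ne resterait alors plus qu'à identifier les deux coefficients apparus. Le produit des $\eta_{aa}$ sur tous les indices vaut par définition $\det\eta=(-1)^g$. Pour passer de la permutation $(I,J)$ à la permutation $(J,I)$, on échange un bloc de $k$ éléments avec un bloc de $n-k$ éléments, ce qui produit le signe $(-1)^{k(n-k)}$, d'où $\epsilon_{IJ}\epsilon_{JI}=(-1)^{k(n-k)}$. La conclusion suit alors par linéarité.

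L'étape qui demandera le plus d'attention est l'établissement de la formule $*e^I=\epsilon_{IJ}\,\eta_I\,e^J$, où l'on a posé $\eta_I:=\eta_{i_1i_1}\cdots\eta_{i_ki_k}$ : il faut partir du théorème précédent, expliciter les composantes $\omega^{j_1\ldots j_k}$ pour $\omega=e^I$ en tenant compte de l'antisymétrie, et vérifier soigneusement que le facteur $1/k!$ compense exactement la sommation sur les $k!$ permutations des indices de $I$. Cette vérification est purement combinatoire, mais c'est véritablement le cœur calculatoire de la preuve ; une fois ce point acquis, le reste se réduit à l'identification de deux signes élémentaires.
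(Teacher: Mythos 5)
Votre démonstration est correcte et correspond exactement au calcul que le texte laisse au lecteur (le mémoire se contente d'annoncer « un calcul donne le résultat suivant » sans le détailler) : réduction par linéarité aux éléments $e^I$ d'une base orthonormée positivement orientée, formule $*e^I=\epsilon_{IJ}\,\eta_I\,e^J$ déduite du théorème précédent, puis identification des deux signes $\epsilon_{IJ}\epsilon_{JI}=(-1)^{k(n-k)}$ et $\prod_a\eta_{aa}=\det\eta=(-1)^g$. Rien à redire, si ce n'est qu'il faut effectivement mener avec soin la vérification combinatoire du facteur $\frac{1}{k!}$ que vous signalez vous-même comme le c\oe ur du calcul.
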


\paragraph{Sur une variété}

\begin{defi}
	Une métrique sur une variété $M^m$ est un champ de tenseurs $g\in\mathfrak{T}^0_2$ tel que $g_x$ qui est une métrique sur $T_{x}M$ soit symétrique et non dégénérée pour tout $x\in M$.
\end{defi}

\begin{rmq}
	La signature est localement constante donc constante sur les composantes connexes de $M$.
\end{rmq}

Une variété munie d'une métrique est appelée \textbf{variété semi-riemannienne}. Si $g$ est définie positive (ou définie négative) on parle de \textbf{variété riemannienne}. Si la signature de la métrique est du type $(n-1,1)$, on dit que la métrique est lorentzienne, et on parle de \textbf{variété lorentzienne}.

Les notions que nous avons défini grâce à la métrique en espace plat se transposent directement à la variété.

\begin{rmq}
	La métrique désigne un moyen d'associer à chaque vecteur une forme linéaire et inversement (ce que correspond à monter et descendre des indices). C'est grâce à la métrique qu'on peut par exemple assimiler un produit extérieur de deux vecteurs dans $\R^3$ au (pseudo -) vecteur produit vectoriel des deux (c'est l'étoile de Hodge). La première paire d'équations de Maxwell ne dépend donc de la métrique que si on veut l'écrire sous sa forme classique, on n'en a besoin que pour assimiler $E$ et $B$ à des vecteurs. Naturellement $E$ est plutôt une $1$-forme, qu'on peut visualiser comme des hyperplans parallèles en un point (sortes de lignes de niveau). On l'assimile à un vecteur orthogonal à ces hyperplans, grâce à la métrique qui définit la notion même d'orthogonalité : $<u|v>=g_{\alpha\beta}u^\alpha v^\beta$
\end{rmq}	

\begin{defi}
	Une $n$-forme $v$ nulle part nulle sur une variété $M^n$ est appelée une orientation de $M$. La paire $(M,v)$ est appelée \textbf{variété orientable}
\end{defi}

\begin{rmq}
	L'existence de cette forme n'est pas assurée : la surface de Klein est par exemple une variété de dimension 2 non orientable
\end{rmq}
\begin{defi}
	Si $M^n$ est une variété orientée avec une métrique $g$, il y a un élément de volume canonique sur $M$ défini comme suit. Pour chaque carte $\phi_\alpha:U_\alpha\rightarrow\R^n$, on pose $g_{\mu\nu}=g(\partial_\mu, \partial_\nu)$ et on définit $vol=|det(g_{\mu\nu})|^{1/2}dx^{1}\wedge...\wedge dx^{n}$.
\end{defi}	

\begin{rmq}
	Cette définition est cohérente. En effet si $(U', \phi')$ est une autre carte telle que $U\cap U'$ n'est pas vide, en définissant $vol'=|det(g'_{\mu\nu})|^{1/2}dx'^{1}\wedge...\wedge dx'^{n}$. Sur l'intersection on a $dx'^{\mu}=\frac{\partial x'^{\nu}}{\partial x^{\mu}}dx^{\mu}$. Par conséquent, $dx'^{1}\wedge...\wedge dx'^{n}=(detT)dx^{1}\wedge...\wedge dx^{n}$.
	Par ailleurs :
	$$
	g'_{\mu\nu}=g(\partial'_\mu, \partial'_\nu)=g(\frac{\partial x^{\alpha}}{\partial x'^\mu}\partial_\alpha,\frac{\partial x^{\nu}}{\partial x'^{\nu}}\partial_\beta)=(T^{-1})_\mu^\alpha(T^{-1})_\nu^\beta g_{\alpha\beta}
	$$
	Les deux cartes étant orientées, on en déduit donc :
	$$
	|detg'_{\mu\nu}|^{1/2}=(detT)^{-1}|detg_{\mu\nu}|^{1/2}
	$$
	et donc on a bien $vol$=$vol'$.
\end{rmq}

\subsection{La deuxième paire d'équations}

Il y a une forte symétrie entre les deux couples d'équations de Maxwell ;  dans le vide on obtient même le deuxième à partir du premier en substituant $-\vec{B}$ à $\vec{E}$ et $\vec{E}$ à $\vec{B}$ ! 
\[
\left\{
\begin{array}{r c l}
\vec{\nabla}\cdot\vec{B} &=& 0\\
\vec{\nabla}\times\vec{E} &=& -\frac{\partial\vec{E}}{\partial t}\\
\end{array}
\right.
\]
\[
\left\{
\begin{array}{r c l}
\vec{\nabla}\cdot\vec{E} &=& \rho\\
\vec{\nabla}\times\vec{B} &=& \frac{\partial\vec{E}}{\partial t}+\vec{j}\\
\end{array}
\right.
\]
Cependant, pour le premier couple nous avons interprété $\vec{E}$ comme une $1$-forme et $\vec{B}$ comme une $2$-forme puisqu'on considérait respectivement leur rotationnel et leur divergence, or le rotationnel et la divergence peuvent être interprétés comme des restrictions de l'opérateur de différentiation extérieure. Or dans le deuxième couple, c'est cette fois la \textbf{divergence de $\vec{E}$} et le \textbf{rotationnel de $\vec{B}$} que l'on regarde, comme s'il fallait cette fois \textbf{interpréter $\vec{E}$ comme une $2$-forme et $\vec{B}$ comme une $1$-forme} ! C'est ce que fait l'opérateur "$*$" de Hodge, à condition d'avoir défini une métrique et un choix d'orientation.

\paragraph{Dans l'espace de Minkowski}

On considère tout d'abord le cas "habituel", où la variété $M$ est l'espace-temps de Minkowski $\R^4$ muni d'une métrique lorentzienne. On peut alors décomposer le champ électromagnétique en champs électrique et magnétique : $$ F=B+E\wedge dt$$ La métrique de Minkowski : $$\eta(v,w)=-v^0w^0+v^1w^1+v^2w^2+v^3w^3$$ permet alors de définir un opérateur de Hodge "$*$" et de considérer $*F$ donnée sous forme matricielle dans $\R^4\otimes\R^4$ par :
$$*F = \begin{pmatrix} 0 & B_x & B_y & B_z \\ -B_x & 0 & E_z & -E_y \\ -B_y & -E_z & 0 & E_x \\ -B_z & E_y & -E_x & 0 \end{pmatrix}$$ 
Le deuxième couple d'équations de Maxwell fait aussi intervenir la distribution de charges et de courants, qu'il faut exprimer en termes de formes différentielles : la densité de courant relativiste $$\tilde{j}=\rho\partial_0+ j^1\partial_1+ j^2\partial_2+ j^3\partial_3$$ canoniquement associée à la 1-forme $$j=-\rho dx^0+ j_1dx^1+ j_2dx^2+ j_3dx^3$$ est en réalité l'étoile d'une $3$-forme que nous appellerons désormais courant : $$J=-j_1dx^2\wedge dx^3\wedge dt-j_2dx^3\wedge dt\wedge dx^1-j_3dt\wedge dx^1\wedge dx^2+\rho dx^1\wedge dx^2\wedge dx^3$$ car : 
\[
\left\{
\begin{array}{r c l}
*dx\wedge dy\wedge dz &=& -dt\\
*dx\wedge dy\wedge dt &=& -dz\\
*dz\wedge dx\wedge dt &=& -dy\\
*dy\wedge dz\wedge dt &=& -dx
\end{array}
\right.
\]

\begin{rmq}
	Après discussion avec M. Lévy, il est en effet plus naturel de définir le courant comme une $3$-forme que comme une $1$-forme puisque chacune des composantes a alors la bonne nature pour être intégrée sur un volume d'espace-temps : par exemple, $j_1$ représente une quantité qui traverse une surface élémentaire $dx^2\wedge dx^3$ pendant $dt$. De même, si on intègre $\rho$ sur un volume spatial on obtient la charge totale dans ce volume.
\end{rmq}
\begin{theorem}
	Le deuxième couple d'équations de Maxwell s'écrit :  $d*F=J$
\end{theorem}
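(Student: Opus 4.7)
La d\'emarche suit exactement le m\^eme esprit que celle utilis\'ee pour la premi\`ere paire d'\'equations, en interpr\'etant cette fois $\vec{E}$ comme une $2$-forme spatiale et $\vec{B}$ comme une $1$-forme spatiale, ce qui est exactement le r\^ole de l'\'etoile de Hodge. \`A partir de l'expression matricielle de $*F$ donn\'ee juste au-dessus, un calcul direct dans la base canonique des $2$-formes de $\R^4$ montre que
$$ *F = \tilde{E} - \tilde{B}\wedge dt, $$
o\`u $\tilde{E} = E_x\, dy\wedge dz + E_y\, dz\wedge dx + E_z\, dx\wedge dy$ est la $2$-forme image de $\vec{E}$ par l'\'etoile de Hodge spatiale, et $\tilde{B} = B_x\, dx + B_y\, dy + B_z\, dz$ est la $1$-forme naturellement associ\'ee \`a $\vec{B}$.

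Je calculerai alors $d*F$ exactement comme pour $dF$, en d\'ecomposant la diff\'erentielle ext\'erieure en sa partie spatiale et sa partie temporelle, $d = d_s + dt\wedge\partial_t$, et en utilisant $dt\wedge dt = 0$ pour obtenir
$$ d*F = d_s\tilde{E} + (\partial_t\tilde{E} - d_s\tilde{B})\wedge dt. $$
Il reste alors \`a identifier cette expression, terme \`a terme, avec celle de $J$. La partie purement spatiale $d_s\tilde{E}$ est une $3$-forme qui, d'apr\`es le dictionnaire entre d\'eriv\'ee ext\'erieure et op\'erateurs diff\'erentiels \'etabli plus haut, vaut $(\vec{\nabla}\cdot\vec{E})\, dx\wedge dy\wedge dz$ ; son \'egalit\'e avec la composante $\rho\, dx^1\wedge dx^2\wedge dx^3$ de $J$ est exactement l'\'equation de Maxwell-Gauss. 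Dans la partie en facteur de $dt$, $d_s\tilde{B}$ s'identifie au rotationnel de $\vec{B}$ et $\partial_t\tilde{E}$ \`a la d\'eriv\'ee temporelle de $\vec{E}$ (via le m\^eme dictionnaire, modulo l'identification par Hodge) ; la comparaison avec les termes en $j_i$ de $J$ produit alors l'\'equation de Maxwell-Amp\`ere.

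La seule v\'eritable difficult\'e de la preuve tient au suivi coh\'erent des signes, qui proviennent d'une part de la signature lorentzienne $(-,+,+,+)$ intervenant dans la d\'efinition de l'\'etoile de Hodge, et d'autre part des anticommutations des produits ext\'erieurs faisant intervenir $dt$ ; ce n'est pas un obstacle conceptuel mais une comptabilit\'e \`a mener avec soin, en particulier pour \'etablir l'expression de $*F$ annonc\'ee ci-dessus. Une strat\'egie alternative, plus compacte, consisterait \`a remarquer que $J$ n'est autre que l'\'etoile de Hodge dans $\R^4$ de la $1$-forme $j$ du courant, de sorte que l'\'equation $d*F = J$ se r\'e\'ecrit $\partial_\mu F^{\mu\nu} = j^\nu$ en composantes, qui est la formulation standard des deux derni\`eres \'equations de Maxwell ; cette voie masque cependant la g\'eom\'etrie au profit d'un formalisme de codiff\'erentielle que l'on n'a pas encore introduit.
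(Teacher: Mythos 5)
Votre démarche est correcte et coïncide pour l'essentiel avec celle du mémoire : on y écrit de même $*F=\tilde E-\tilde B\wedge dt$ (soit $*_SE-*_SB\wedge dt$), on calcule $d*F$ en séparant partie spatiale et partie temporelle, puis on identifie la $3$-forme purement spatiale avec Maxwell--Gauss et les termes en facteur de $dt$ avec Maxwell--Amp\`ere. Le seul écart est de présentation : le mémoire effectue explicitement le calcul composante par composante là où vous en donnez le plan, mais les étapes et les expressions intermédiaires sont les mêmes.
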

\begin{proof}[Preuve]
	On a : $$*F=-B_xdx\wedge dt-B_ydy\wedge dt-B_zdz\wedge dt+E_xdy\wedge dz+E_ydz\wedge dx+E_zdx\wedge dy$$ et par conséquent : 
	$$ d*F=(\partial_xE_x+\partial_yE_y+\partial_zE_z)dx\wedge dy\wedge dz + (\partial_yB_x-\partial_xB_y+\partial_tE_z)dx\wedge dy\wedge dt $$
	$$+(\partial_xB_z-\partial_zB_x+\partial_tE_y)dz\wedge dx\wedge dt+(\partial_zB_x-\partial_xB_z+\partial_tE_y)dy\wedge dz\wedge dt$$
	ce qui donne bien les deux dernières équations de Maxwell.

\end{proof}		
Cette deuxième équation très synthétique, peut servir de définition des équations de Maxwell sur des variétés. Si $J$ est une $3$-forme interprétée comme la $3$-forme courant électromagnétique, alors la condition $d*F=J$ a un sens sur la variété, et constitue la deuxième et dernière équation de Maxwell.
Résumons maintenant les idées auxquelles nous sommes arrivés, en rajoutant de plus en plus de structure sur la variété afin d'arriver finalement aux équations de Maxwell "classiques".
\begin{center}
	\fbox{\begin{minipage}{140mm}
			Soit $M$ une variété quelconque. 
			\begin{enumerate}
			\item On définit le \textbf{champ électromagnétique $F$ comme une $2$-forme sur $M$}, et le \textbf{courant $J$ comme une $1$-forme sur $M$}.
			\item La première équation de Maxwell s'écrit $dF=0$.
			\item A condition de supposer que \textbf{$M$ est une variété semi-riemannienne orientée} on peut alors écrire la deuxième équation de Maxwell comme $d*F=J$.
			\item Pour pouvoir introduire les champ électrique et magnétique, il faut supposer que $M=\R\times S$ où $S$ est la variété d'espace, et écrire $F=B+E\wedge dt$. De même on écrit $J=j+\rho dx^1\wedge dx^2\wedge dx^3$. La première équation de Maxwell se sépare alors en 
			$d_SB=0$ et $\partial_tB+d_SE=0$.
			\item  Si on suppose de plus que l'espace $S$ est de dimension $3$, que la métrique sur $M$ est lorentzienne, et que la restriction de la métrique à l'espace $S$ est riemannienne, en notant $*_S$ l'opérateur étoile de Hodge sur les formes différentielles sur $S$, on a $*F=*_SE-*_SB\wedge dt$ d'où $d*F=*_S\partial_tE\wedge dt+d_S*_SE-d_S*_SB\wedge dt$ donc $*d*F=-\partial_tE+*_Sd_S*_SE\wedge dt +*_Sd_S*_SB$ et en écrivant $*d*F=*J$ on obtient la deuxième équation de Maxwell sous sa forme "historique" :
			\[
			\left\{
			\begin{array}{r c l}
			*_Sd_S*_SE &=& \rho\\
			-\partial_tE+*_Sd_S*_SB &=& j\\
			\end{array}
			\right.
			\]
			
			\end{enumerate}
		\end{minipage}}
\end{center}

\chapter{Une théorie géométrique des interactions}

Dans ce chapitre, nous allons donner une approche plus intrinsèque et plus générale aux équations de Maxwell. Nous aboutirons à la forme 'récente' de ces équations, sans trop rentrer dans les détails pour l'équation inhomogène, le but étant plutôt de présenter un cadre géométrique agréable pour la description des interactions élémentaires et d'acquérir une certaine intuition à ce sujet, que de mener tous les raisonnements de la manière la plus abstraite possible. Les structures que nous allons développer mettent en valeur la nature de jauge des théories modernes des interactions : la relativité générale qui, même si nous l'évoquerons peu, est décrite de manière conventionnelle avec ces idées, l'électrodynamique quantique (QED), la théorie électrofaible (modèle de Glashow-Weinberg-Salam) et la chromodynamique quantique (QCD). Néanmoins, nous ne nous arrêterons pas sur la quantification des théories classiques, qui sort du cadre de ces travaux. Dans un premier temps, nous présenterons les idées décisives qui ont mené à l'émergence des concepts que nous étudions en les replaçant dans leur concept historique, puis nous proposerons une approche géométrique aux théories des interactions.\\\\
\textit{"Une géométrie ne peut pas être plus vraie qu'une autre, elle peut seulement être plus commode"} (Poincaré, 1902)

\section{Rapports entre les théories de jauge et les interactions}

Le concept de champ de jauge gouverne aujourd'hui la physique des hautes énergies. Il est même universellement accepté qu'\textbf{une bonne théorie des interactions fondamentales doit être une théorie de jauge}. Cependant, lors de l'élaboration de ces théories, ce n'était pas encore le cas, et c'est seulement grâce à des idées remarquables de Weyl, Dirac, Aharanov et Bohm, Yang et Mills que ces points de vu ont pu émerger.
\subsection{Invariance de jauge, apparition des charges et indices pour la quantification}
Dans toute théorie d'interaction, il est question de charges - au sens large. On sait par exemple que la charge électrique est conservée par les interactions électromagnétiques. Pour l'interaction faible, on parle d'isospin faible et d'hypercharge faible, pour l'interaction forte, de couleurs.
On ne peut pas vraiment parler d'interaction si on ne parle pas de charges, qui codent l'intensité de l'interaction pour la particule considérée.\\
C'est le théorème de Noether (1918) qui est en fait à l'origine du lien entre invariance de jauge et existence de charges conservées.\\\\
Pour un observable $O$ dans un système de hamiltonien $H$, l'équation du mouvement dans la description de Heisenberg s'écrit : 
$$i\hbar\frac{dO(t)}{dt}=[O(t),H]$$
On considère un état $\ket{\psi}$. On fait subir au système une transformation unitaire $U$ : $\ket{\psi}'=U\ket{\psi}$. L'observable O est alors transformée selon $O'=UOU^\dagger$. Comme nous le verrons plus tard, on peut écrire $U=e^{i T}$ où T est une matrice hermitienne ($T=T^\dagger$) correspondant au vecteur tangent à la courbe $\alpha\rightarrow e^{i\alpha T}$ en $\alpha=0$. La dérivée étant encore évaluée en $0$ : 
$$\frac{dO'}{d\alpha}=i[T,O]$$
car $(1+i\alpha T)P(1-i\alpha T)=O+i\alpha[T,O]$ au premier ordre en $\alpha$ ; et si le hamiltonien est invariant sous la transformation U, $\frac{dH'}{d\alpha}=0$ car $H'=H$ pour tout $\alpha$ et en prenant O=H dans la formule ci-dessus, donc on obtient que $T$ est une intégrale première de l'évolution hamiltonienne du système.

Dérivons le théorème de Noether dans le cas particulier des théories des champs : soit un système de lagrangien $\mathcal{L}$ fonction d'un nombre fini n de champs $\phi_r$ (dépendant de manière régulière de 4 variables $x^\alpha$) et de leur dérivées $\phi_{r,\alpha}=\frac{\partial\phi}{\partial x^{\alpha}}$. Sous une transformation unitaire on suppose que l'on a :
$$\phi_r\rightarrow \phi_r'(x)=\phi_r(x)+\delta \phi_r(x)$$
Or de manière très générale : 
$$\delta\mathcal{L}=\frac{\partial\mathcal{L}}{\partial \phi_r}\delta\phi_r + \frac{\partial\mathcal{L}}{\partial \phi_{r,\alpha}}\delta\phi_{r,\alpha}$$
Si les champs sont des solutions 'physiques', ils vérifient l'équation d'Euler-Lagrange : 
$$\frac{\partial\mathcal{L}}{\partial \phi_r}=\partial_\alpha(\frac{\partial\mathcal{L}}{\partial \phi_{r,\alpha}})$$
ce qui permet de réécrire : 
$$\delta\mathcal{L}=\partial_\alpha(\frac{\partial\mathcal{L}}{\partial \phi_{r,\alpha}}\delta\phi_r)$$
Si maintenant on suppose que le lagrangien est invariant sous cette transformation, c'est-à-dire $\delta\mathcal{L}=0$, en posant $f^\alpha=\frac{\partial\mathcal{L}}{\partial \phi_{r,\alpha}}\delta\phi_r$ on obtient l'équation de conservation : 
$$\partial_\alpha f^\alpha=0$$
et posant $F^\alpha=\int d^3\vec{x}f^\alpha(\vec{x},t)$, $F^0$ est une intégrale première de l'évolution. Remarquons que $$F^0=c\int d^3\vec{x}\pi_r(x)\delta\phi_r(x)$$
Ainsi l'invariance de lagrangien sous des transformations continues impose la conservation de certaines quantités lors de l'évolution.

Pour des champs complexes se transformant comme $\phi_r'=e^{i\epsilon}\phi_r$ et $\phi_r^{\dagger'}=e^{-i\epsilon}\phi_r^\dagger$ on obtient : 
$$F^0=i\epsilon c\int d^3\vec{x}[\pi_r(x)\delta\phi_r(x)-\pi_r^\dagger(x)\delta\phi_r^\dagger(x)]$$
On pose : 
$$Q=-i\frac{q}{\hbar}\int d^3\vec{x}[\pi_r(x)\delta\phi_r(x)-\pi_r^\dagger(x)\delta\phi_r^\dagger(x)]$$
Dans la théorie quantifiée, on a
$$
[\pi_s(x),\phi_r(y)]=i\hbar\delta_{rs}\delta^4(x-y) \\
$$
$$
[\pi_s^\dagger(x),\phi_r^\dagger(y)]=i\hbar\delta_{rs}\delta^4(x-y) \\
$$
$$
[\pi_s^\dagger(x),\phi_r(y)]=0 \\
$$
$$
[\pi_s(x),\phi_r^\dagger(y)]=0
$$
où $\pi_r(x)=\frac{1}{c^2}\frac{d\phi^\dagger}{dt}$.
Cela conduit à la relation :
$$[Q,\phi_r(x)]:-\frac{iq}{\hbar}\int d^3\vec{x'}[\pi_s(x'),\phi_r(x)]\phi_s(x')=-q\phi_r(x)$$
qui permet de voir que si $\ket{Q'}$ est vecteur propre de $Q$ avec valeur propre $Q'$, alors $\phi_r(x)\ket{Q'}$ est vecteur propre de $Q$ avec valeur propre $(Q'-q)$, comme si $\phi_r$ avait retiré une charge $q$ en agissant sur l'état. C'est un germe de la quantification des théories des champs.

\subsection{L'idée de connexion et l'héritage de Levi-Civita}
Les théories d'Einstein de la relativité restreinte, et générale, entre 1905 et 1916, ont eu un énorme impact sur la mathématique et les mathématiciens \cite{varada}. Élie Cartan ou Hermann Weyl (qui - en bon élève de Hilbert - avait déjà fait de grandes contributions à la théorie spectrale des opérateurs différentiels, mais avait développé une manière très géométrique de penser la physique, en se basant sur les travaux des géomètres italiens du début du siècle : Ricci, Levi-Civita ...) en particulier, ont été bouleversés par l'affirmation que la gravitation n'est rien d'autre que la manifestation de la courbure de l'espace-temps. Weyl était fasciné par les moyens selon lesquels la géométrie différentielle permettait d'expliquer la nature, et avait déjà publié \textit{Die Idee der Riemanschen Fläschen} en 1913.\\\\ 
L'une des principales difficultés à faire de la physique ou de la mathématique sur des variétés est qu'il n'existe pas de système de coordonnées universel ; il faut montrer que les résultats obtenus ne dépendent pas des coordonnées choisies. Comme nous l'avons vu antérieurement, la solution consiste à exprimer les lois de la physique en termes de tenseurs et de leurs dérivées. Cependant, cela implique de pouvoir comparer des vecteurs tangents en deux point différents. Il faut en fait exhiber un isomorphisme pour tout couple d'espaces tangents, et c'est l'idée même de connexion. Sur des variétés riemanniennes, de tels isomorphismes peuvent être fixés en définissant un \textbf{transport parallèle}, autrement dit une manière canonique de transporter un vecteur tangent en un point $p$ en un vecteur tangent en un point $p'$ en suivant la courbe $\gamma$ qui va de $p$ à $p'$. Cet objet permet alors de définir la différentiation de champs vectoriels et tensoriels le long de la courbe.\\\\
\textbf{La connexion de Levi-Civita} est un cas particulier intéressant : soit $(M,g)$ une variété riemannienne (g est la métrique) plongée dans un espace euclidien. De plus, on demande que $g$ soit la restriction de la métrique de l'espace euclidien à la variété $M$. Soit $p\in M$, $Y$ un champ de vecteurs défini au voisinage de $p$ et $X\in T_pM$. Pour obtenir la dérivée covariante de $p$ de $Y$ dans la direction $X$, on prolonge $Y$ en un champ $Y'$ défini au voisinage de $p$, (dans $E$ cette fois) puis on calcule la dérivée directionnelle de $Y'$ dans la direction $X$ dans E. La dérivée covariante cherchée est la projection de la dérivée directionnelle obtenue sur l'espace tangent en $p$ à $M$. On définit alors le transport parallèle le long d'une courbe $\gamma:[0,1]\rightarrow M$ avec $\gamma(0)=p$ et $\gamma'(0)=X$ de la manière suivante : si les $Y(t)$ sont des vecteurs tangents à $M$ en $\gamma(t)$, il sont les transportés parallèlement de $Y$, à condition que la dérivée covariante de Y(t) dans la direction $\gamma'(t)$ soit nulle pour tout $t\in[0,1]$. Cette condition est équivalente aux équations différentielles du premier ordre : 
$$\frac{dY^\mu}{dt}+\Gamma^\mu_{\nu\lambda}(\gamma(t))\gamma^{'\lambda}(t)Y^\nu(t)=0$$
où on a fait apparaitre les "célèbres" symboles de Christoffel $\Gamma^\mu_{\nu\lambda}$. On peut montrer que la connexion de Levi-Civita conserve la norme des vecteurs tangents.

\subsection{De la géométrie riemannienne à la géométrie de Weyl}
Weyl a tout d'abord remarqué que le concept de connexion était en fait généralisable à une variété abstraite, non plongée dans un espace métrique. Plus encore, il a remarqué qu'\textbf{une connexion peut être définie indépendamment de toute métrique}. De là est née la théorie des connexions affines.\\\\
Weyl a également observé que la conservation de la longueur des vecteurs lors du transport parallèle n'était naturelle ni du point du vue mathématique, ni du point de vue physique. Mathématiquement, on s'autorise à changer la direction lors du transport parallèle, il n'y a aucune raison de ne pas aussi changer la norme : "\textit{Beim Herumfahren eines Vektors längs einer geschlossenen Kurve durch fortgesetzte infinitesimale Parallelverscheibungkehrt dieser im allgemeinen in einer andern Lage zurück ; seine Richtung hat sich geändert. Warum nicht auch seine Länge ?}" \cite{weyl1}. Physiquement, les valeurs de longueurs ne sont fixées qu'après le choix d'une échelle de référence, une unité de longueur. Si la variété de base est l'espace-temps où se trouvent plusieurs observateurs, l'invariance de la norme par transport parallèle est en fait l'affirmation que tous les observateurs ont la même référence pour les longueurs, alors qu'ils peuvent être fort éloignés en temps et en espace. Pour que les observateurs soient d'accord sur une échelle de longueur, il faut que l'information soit transmise entre eux, et l'échelle n'a aucune raison de rester la même lors de ce transport. D'où la remarquable observation de Weyl, que la géométrie riemannienne ne peut pas être considérée comme une vraie géométrie infinitésimale puisque la métrique permet de comparer non seulement la norme de deux vecteurs au même point, mais de deux vecteurs en deux points différents de la variété. "\textit{A truly infinitesimal geometry must recongnise only the principle of transference of a length from one point to another infinitely near to the first}"\cite{varada}.

\paragraph{Le fibré des échelles}

Pour Weyl, la métrique de l'espace-temps est donc définie à une échelle près. En considérant l'espace fibré au dessus de l'espace-temps $M$ dont les fibres sont difféomorphes à $R^{+*}$ (chaque point de l'espace-temps est remplacé par une copie de $R^{+*}$), Weyl s'est aperçu qu'on pouvait également définir une connexion sur cet espace grâce à une 1-forme :
$$A=A_\mu dx^\mu$$ qui est l'analogue de la forme dont les coefficients sont les symboles de Christoffel, en dimension 1. On a donc une \textbf{dérivée covariante} qui est une dérivation qui finalement ne prend en compte que les "véritables" variations de longueur une fois que l'échelle est fixée, et un \textbf{transport parallèle} qui donne un sens à la longueur d'un objet transporté dans l'espace temps. Le transport de longueur est équivalent à l'équation : 
$$\frac{ds}{dt}+A_\mu(\gamma(t))\gamma^{'\mu}(t)s(t)=0$$ dont une solution est : $$s(t)=\exp(-\int_{\gamma_t}A_\mu dx^\mu)=\exp(-\int_{\gamma_t}A)$$
où $\gamma_t$ est la restriction de $\gamma$ à $[0,t]$. Soit maintenant une fonction $$g:M\rightarrow \R^{+*}$$ de changement d'échelle : dans ces nouvelles unités, les échelles sont données par : $$s'(t)=s(t)g(\gamma(t))$$ et pour obtenir la même équation pour $s'$ que pour $s$, en remplaçant $A$ par $A'$, il faut faire la transformation $$A'=A-d(\log(g))$$ 
Weyl a appelé cette étape un \textbf{changement de jauge}, et est ainsi arrivé au \textbf{principe d'invariance de jauge}, qui affirme que \textbf{les lois de la physique doivent être invariantes non seulement lors de transformations de coordonnées, mais aussi lors de changement de jauge}, appelés transformations de jauge.

\subsection{L'électromagnétisme, conséquence de la physique quantique}
En 1925, la toute jeune physique quantique révolutionne l'idée de fibré des échelles de Weyl : l'apparition d'une nouvelle constante, $\hbar$, fixe une \textbf{échelle de longueurs universelle}. Il faut donc abandonner le transfert d'échelles comme source de l'électromagnétisme. L'idée que Weyl développe alors présente \textbf{l'électromagnétisme comme conséquence de la physique quantique}. En physique quantique, les fonctions d'ondes sont définies à une phase près. De la meme manière que pour le fibré des échelles, il n'y a aucune raison de fixer une référence identique en tous les points de l'espace-temps. Il faut a priori laisser cette référence varier, et s'intéresser au transfert de phase axiomatique (transport parallèle) le long de chemins dans l'espace-temps. Pour ce faire, le fibré des échelles et remplacé par le fibré des phases. En chaque point de l'espace-temps, l'ensemble des phases forme un groupe isomorphe au groupe $U(1)$. Weyl propose alors de voir l'électromagnétisme comme une connexion sur le fibré des phases.
La connexion est localement définie par la 1-forme : 
$$-iA_\mu dx^\mu$$
Le transport parallèle de phases est défini par l'équation : 
$$\frac{ds}{dt}-iA_\mu(\gamma(t))\gamma^{'\mu}(t)s(t)=0$$
dont une solution est : 
$$s(t)=\exp(i\int_{\gamma_t}A_\mu dx^\mu)$$
Les transformations de jauge sont définies par : 
$$g:x\in M\mapsto g(x)\in U(1)$$
et on peut écrire, au moins localement si l'espace-temps n'est pas simplement connexe : 
$$g=e^{i\beta}$$ où les fonctions $\beta$ sont lisses. Lors de tels changements de jauge, le potentiel de jauge $A$ varie selon : 
$$A'=A-d(\log g)$$
Ces idées sont discutées dans le papier historique de 1929 \cite{weyl29}.
\begin{rmq}
	Avec une telle approche, c'est-à-dire en présentant l'électromagnétisme comme une théorie de jauge, elle devient une conséquence à la fois de la physique quantique et de la relativité restreinte. Autrement dit, c'est presque déjà une théorie quantique des champs ! Même si, lorsque Maxwell a dérivé ses équations et dans les années qui ont suivi, la structure intrinsèque de la théorie n'était pas bien comprise (en même temps, ni la relativité ni la physique quantique n'étaient connues), \textbf{ces égalités contenaient déjà intrinsèquement l'essentiel de la relativité et de la physique quantique}. La quantification de la théorie classique des champs est donc possible, et conduit à l'électrodynamique quantique. 
\end{rmq}

Dirac a utilisé ce cadre théorique pour son étude des monopoles magnétiques présentée dans son célèbre article de 1931 \cite{dirac}.\\\\

Précisions que Weyl avait eu l'intuition de pouvoir unifier l'électromagnétisme avec la relativité générale grâce au paradigme du fibré des échelles, mais cela devient impossible avec le fibré des phases à cause de l'introduction de la structure quantique pour décrire l'électromagnétisme. Si on veut pouvoir définir les champs de particules dans le cas où l'espace-temps $M$ est courbé par la gravitation, il faut introduire des champs spinoriels sur des variétés courbes. Cela impose des contraintes de nature topologique sur $M$ (qui doit être un "spin manifold"). D'autres contraintes apparaissent, et finalement, l'espace-temps doit être couplé à une \textbf{très petite variété compacte} (c'est la compactification) de dimension 6 qui est la sous-variété réelle d'une variété de Calabi-Yau complexe de dimension 3 (complexe). D'après Varadarajan, au Congrès International de Mathématique de 1986, Witten a commencé sa conférence en remarquant que la tentative de Weyl d'unifier gravitation et électromagnétisme n'avait pas fonctionné parce que les formes de matière inclues dans la théorie n'étaient pas assez diversifiées \cite{witten}. Weyl avait, de manière intéressante, remarqué la même chose puisque dans la préface de la première édition américaine de \textit{Space, Time and Matter} on peut trouver les mots suivants : \textit{Since then, a unitary fiels theory, so it seems to me, should encompass at least three fields : electromagnetic, graviational and electronic. Ultimately the wave fiels of other elementary particles will have to be included too - unless quantum physics succeds in interpreting them all as different quantum states of one particle}...

\subsection{Les théories de Yang-Mills}

Dans les années 50, Yang et Mills ont essayé de comprendre la conservation du spin isotopique dans les interactions fortes et de \textbf{l'exprimer d'une manière analogue à la conservation de la charge en électromagnétisme}. L'aboutissement de leur travaux a été leur article de 1954 \cite{yangmi} qui est une avancée décisive dans le développement du concept de jauge. Partant de l'observation que le proton et le neutron, si on néglige les effets électromagnétiques, sont indistinguables, ils reprennent l'idée de 1932 de Heisenberg qui propose que le proton et le neutron sont en réalité deux états d'une même particule, le nucléon. Mathématiquement, on introduit un \textbf{espace interne} pour le nucléon (l'espace de spin isotopique), qui est un espace de Hilbert de dimension 2 avec la représentation standard de $SU(2)$. Les idées qui suivent sont le généralisation de celles de Weyl. Il faut prendre une référence dans cet espace de spin isotopique, qui n'a aucune raison d'être identique en tout point de l'espace-temps. On définit dans cet espace une connexion, donc un transport canonique de spin isotopique. Yang et Mills affirment alors que les lois de la physique doivent être invariantes par rotation dans l'espace de spin isotopique, ce qui conduit aux équation dites de Yang-Mills. La différence essentielle avec les travaux de Weyl est que cette théorie de jauge est \textbf{non-abélienne}. Le lien entre ces idées "physiques" est la géométrie différentielle sous-jacente n'a été fait que dans les années 1970.\\\\
Entre-temps, en 1959, Aharanov et Bohm ont écrit un article fameux, où ils discutent de savoir si \textbf{oui ou non, les potentiels ont un sens physique} \cite{ahabohm}. Ils suggèrent l'affirmative, même en électromagnétisme. Plus précisément, si l'espace-temps n'est pas simplement connexe, ils montrent que des effets électromagnétiques doivent se faire sentir même si le champ électromagnétique est nul, et proposent pour vérifier cela leur célèbre expérience éponyme. Le temps y est découplé, et la variété d'espace est $M=\frac{\R^3}{L}$ où L est une droite de l'espace décrivant un solénoïde infiniment fin. Le champ magnétique est nul à l'extérieur du solénoïde, mais \textbf{la phase des particules qui voyagent à proximité du solénoïde dépend elle du potentiel vecteur}, qui lui, ne l'est pas. Ils prédisent alors que la figure d'interférence observées derrière le solénoïde doit dépendre de l'intensité qui parcourt ce dernier. L'expérience est faite, et est concluante. Ce n'est donc pas par hasard que les équations de Yang-Mills font intervenir les potentiels de jauge directement, et non seulement le champ de force.

\section{Groupes de Lie et actions de groupe}
\subsection{Généralités sur les groupes de Lie}
\begin{defi}
	Soit G une variété de dimension n et un groupe tel que les opérations de multiplication $G\times G\rightarrow G$ donnée par $(g,g')\mapsto gg'$ et d'inversion $G\rightarrow G$ donnée par $g\mapsto g^{-1}$ soient lisses. Alors G est appelé groupe de Lie de dimension n
\end{defi}
Les exemples les plus courants de groupes de Lie les groupes de matrices complexes comme le groupe linéaire complexe $GL_n(\C)$, le groupe unitaire $U_n(\C)$, le groupe unitaire unimodulaire $SU_n(\C)$ ou réels comme $GL_n(\R)$, $O_n(\R)$, $SO_n(\R)$...

\begin{defi}
	Soit $L_g:G\rightarrow G$ l'application définie par $L_g(g')=gg'$ (multiplication à gauche) qui est évidemment un difféomorphisme. Soit e l'élément unité de G, et soit $A\in T_eG$. On définit le champ de vecteur invariant à gauche $\overline{A}$ engendré par A, par $\overline{A}_g=(L_g)_*(A)$.
\end{defi}

\begin{figure}[!h]
	\centering
	\includegraphics{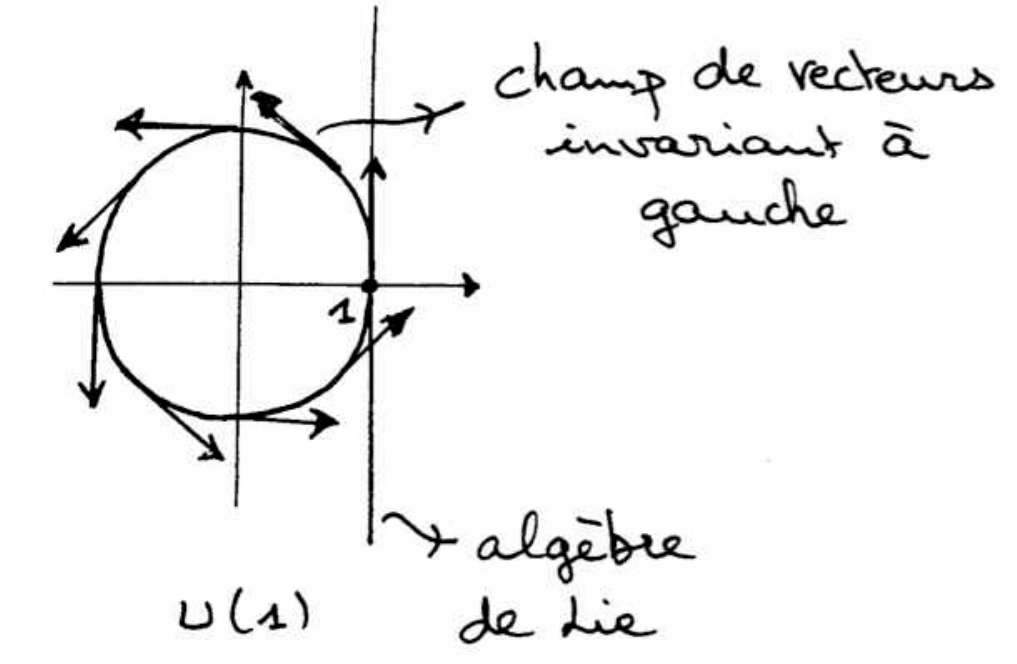}
	\caption{Le groupe de Lie $U(1)$}
\end{figure}

\paragraph{L'algèbre de Lie associée à un groupe de Lie}

Grâce à la structure de groupe qui permet de définir de manière canonique des champs de vecteurs à partir d'un vecteur tangent fixé, le crochet de Lie se transpose naturellement en une application crochet sur $T_eG = \g$ : soient $A,B\in\g$, on définit $[A,B]\in\g$ par $[A,B]=[\overline{A},\overline{B}]_e$. On a les propriétés immédiates : $$[A,B]=-[B,A]$$ et $$[A,[B,C]]+[B,[C,A]]+[C,[A,B]]=0$$
Le crochet $[.,.]:\g\times \g\rightarrow \g$ munit l'espace tangent en l'identité de $G$ d'une structure d'algèbre ; l'algèbre $\g$ est appelée algèbre de Lie du groupe $G$.

On peut prouver que $\overline{A}$ est un champ de vecteurs complet sur $M$. 

Soit $(\phi_t)$ le flot du champ de vecteurs $\overline{A}\in\g$.
Prouvons que $\phi_{s+t}(e)=\phi_{t}(e)\phi_{s}(e)$ :

Soit $s\in \R$ fixé et posons $\gamma_1(t)=\phi_{s+t}(e)$ ainsi que $\gamma_2(t)=\phi_{s}(e)\phi_{t}(e)$. On a $\gamma_1'(t)=\overline{A}_{\phi_{s+t}(e)}$ et $\gamma_2'(t)=(L_{\phi_{s}(e)})_*(\overline{A}_{\phi_{t}(e)})=\overline{A}_{\phi_{s}(e)\phi_{t}(e)}$. Donc $\gamma_1$ et $\gamma_2$ sont des courbes intégrales du même champ de vecteurs $\overline{A}$ qui coïncident en $0$, et donc sur tout leur ensemble de définition. Donc $\phi_.(e):\R\rightarrow G$ est un morphisme de groupes. \\\\Soit une courbe $\sigma:\R\rightarrow G$ (c'est aussi un morphisme de groupes), alors $\psi_t:G\rightarrow G$, défini par $\psi_t(g)=g\sigma(t)$, est un groupe de difféomorphismes de $G$ à un paramètre tel que : $$\overline{B}_g=\frac{d}{dt}\psi_t(g)_{|t=0}$$ définit le champ de vecteurs invariant à gauche déterminé par $B=\overline{B}_e$. Il y a donc une correspondance biunivoque entre $A\in\g$ et $\gamma$.

\begin{defi}
	On définit l'application exponentielle : $$exp:\g\rightarrow G$$ par $$exp(A)=\gamma(1)$$ Notons que $\gamma(t)=exp(tA)$ et $\phi_t(g)=g\gamma(t)=gexp(tA)$.
\end{defi}

\begin{figure}[!h]
	\centering
	\includegraphics{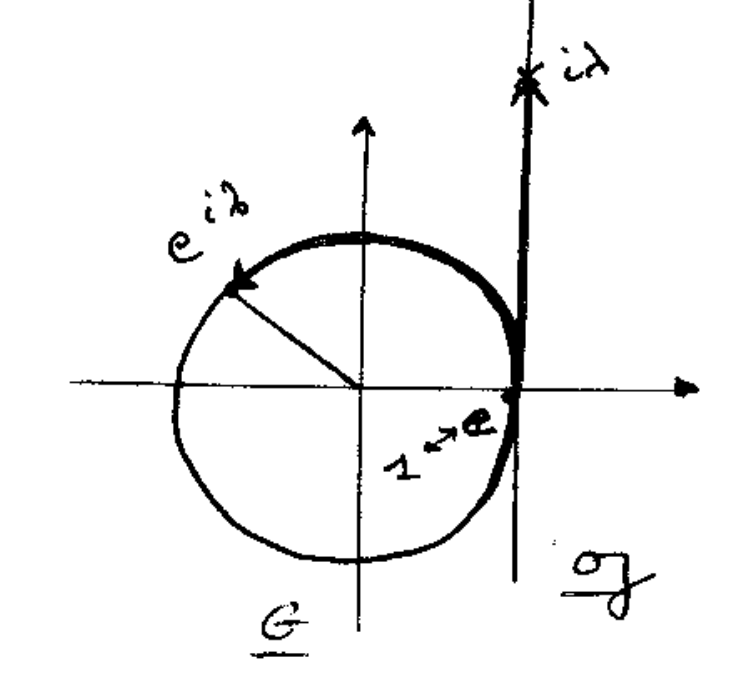}
	\caption{Rapports entre groupe de Lie et algèbre de Lie}
\end{figure}

Définissons maintenant les sous-groupe de Lie d'un groupe de Lie.

\begin{defi}
	Soit $G$ un groupe de Lie. Un sous-groupe de Lie $H$ de $G$ est un sous-groupe de $G$ qui est aussi une sous-variété de $G$ (une sous-variété $N$ de dimension $k$ d'une variété $M$ de dimension $d$ est une variété de dimension $k$ munie de la topologie induite par la topologie de $M$, c'est-à-dire que $V$ est un ouvert de $N$ si et seulement si $V$ s'écrit $V'\cap N$ où $V'$ est un ouvert de $M$ ; on demande aussi que l'application tangente $i_{*x}$ de l'inclusion soit injective pour tout $x\in N$).
\end{defi}

Les morphismes $\gamma:\R\rightarrow H$ peuvent être vus comme des morphismes $\gamma:\R\rightarrow G$, donc $exp:\mathfrak{h}\rightarrow H$ est la restriction de $exp:\g\rightarrow G$.

\paragraph{La forme de Maurer-Cartan}
Cet objet permet d'associer, de manière canonique, à un vecteur tangent à $G$ en $g\in G$, une "direction", c'est-à-dire un vecteur de l'algèbre de Lie. Il s'agit d'une 1-forme $\omega$ définie sur $TG$ et à valeurs dans l'algèbre de Lie $\g$ telle que $\forall v\in T_gG$ : $$\omega(v)=(L_{g^{-1}})_{*}v $$
La forme de Maurer-Cartan appartient donc à $\bigwedge^{1}(G)\otimes\g$.\\\\
Cette idée de pouvoir définir \textit{de manière canonique} une direction lorsqu'on se déplace dans le groupe est rendue possible par la structure de groupe, et fondamentale ! C'est une propriété fondamentale qui justifie, entre autres, la définition des fibrés principaux.

\paragraph{Distribution d'espaces tangents}

La proposition suivante révèle les contraintes fortes qu'impose la structure de groupe ; Posons tout d'abord quelques définitions pour pouvoir l'énoncer.
\begin{defi}
	Soit $M$ une variété . Une distribution $H$ de dimension $k$ de sous-espaces tangents à $M$ est la donnée, pour tout $x\in M$ de sous-espaces tangents $H(x)$ variant de manière lisse au sens suivant : localement, $H$ doit être engendré par $k$ champs de vecteurs linéairement indépendants. On dit que la distribution est \textbf{intégrable} si elle peut être redressée dans les cartes (dire que H est intégrable en $x_0$ revient à dire : il existe $X_1, ..., X_k$ k champs de vecteurs engendrant H localement en $x_0$ et une carte $(U,\phi)$ en $x_0$ telle que $\phi_*(X_i)=e_i$ où $e_i$ est le i-ème vecteur canonique de $\R^k$). Autrement dit, il faut que ce soit la distribution des sous-espaces tangents à une sous-variété de $M$. On dit que la distribution est \textbf{involutive} si elle est stable par crochet, c'est-à-dire, si $X$ et $Y$ sont deux champs de vecteurs à valeur dans $H$, alors $[X,Y]$ est encore un champ de vecteur à valeurs dans $H$. 
\end{defi}

\begin{figure}[!h]
	\centering
	\caption{Une distribution est intégrable si elle peut être redressée dans des cartes}
	\includegraphics{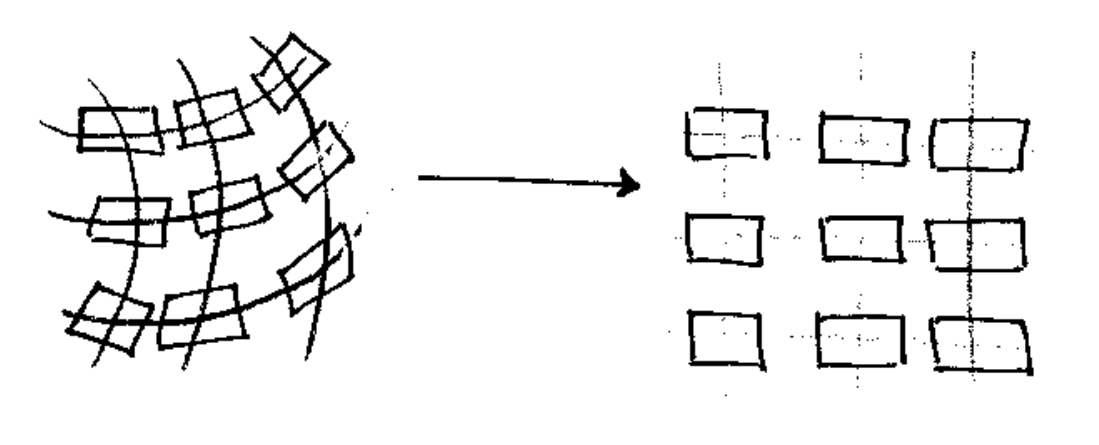}
\end{figure}

Le théorème suivant permet de faire le lien entre ces deux notions ; nous l'énonçons sans preuve.

\begin{theorem}[Frobenius]
	Une distribution de sous-espaces est intégrable si et seulement si elle est involutive.
\end{theorem}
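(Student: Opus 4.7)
Le plan est de traiter s\'epar\'ement les deux implications. La direction directe, ``int\'egrable implique involutive'', est presque imm\'ediate : dans une carte qui redresse $H$ en les premi\`eres directions coordonn\'ees $\partial_1, \ldots, \partial_k$, tout champ de vecteurs $X$ \`a valeurs dans $H$ s'\'ecrit $X = a^i \partial_i$ pour $i \in [|1,k|]$. L'expression coordonn\'ee du crochet rappel\'ee plus haut, $([X,Y])_l = \partial_j(Y_l)X_j - \partial_j(X_l)Y_j$, donne imm\'ediatement que les composantes de $[X,Y]$ en $\partial_l$ pour $l > k$ sont identiquement nulles, donc $[X,Y] \in H$ et la distribution est bien involutive.

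Pour la r\'eciproque, je commencerais par fixer un rep\`ere local $X_1, \ldots, X_k$ engendrant $H$ au voisinage d'un point $x_0$, puis je me ram\`enerais \`a un rep\`ere sous forme normalis\'ee $X_i = \partial_i + \sum_{j > k} a_i^j \partial_j$. Ceci est possible car la matrice des composantes des $X_i$ dans la base $(\partial_\mu)$ est de rang $k$ : apr\`es une permutation des coordonn\'ees de la carte pour placer un bloc $k \times k$ inversible ``en haut \`a gauche'', il suffit de multiplier le rep\`ere par l'inverse de ce bloc, qui est une matrice de fonctions $\mathcal{C}^\infty$ ; on obtient la forme voulue sans changer la distribution engendr\'ee.

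L'\'etape d\'ecisive est alors d'exploiter l'involutivit\'e pour obtenir $[X_i, X_j] = 0$. Comme les coefficients de chaque $X_i$ le long de $\partial_l$ pour $l \leq k$ valent $\delta_i^l$, ce sont des constantes, et la formule coordonn\'ee du crochet montre que $[X_i, X_j]$ n'a aucune composante sur $\partial_1, \ldots, \partial_k$. Or par involutivit\'e $[X_i, X_j] = \sum_m f_{ij}^m X_m$, et les composantes de cette combinaison sur les $\partial_l$ pour $l \leq k$ valent pr\'ecis\'ement $f_{ij}^l$ ; l'identification force donc $f_{ij}^m = 0$ pour tout $m$, soit $[X_i, X_j] = 0$.

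Il reste alors \`a construire une carte redressante \`a partir de ce rep\`ere commutatif. Je choisirais une sous-vari\'et\'e locale $S$ de dimension $n-k$ transverse \`a $H$ en $x_0$, param\'etr\'ee par une immersion $\sigma$, puis l'application $\Phi(t^1, \ldots, t^k, y) = \phi_1^{t^1} \circ \cdots \circ \phi_k^{t^k}(\sigma(y))$, o\`u $\phi_i^t$ d\'esigne le flot de $X_i$. La diff\'erentielle de $\Phi$ en l'origine envoie $\partial_{t^i}$ sur $(X_i)_{x_0}$ et $\partial_{y^j}$ sur une base de $T_{x_0}S$ ; l'ensemble forme une base de $T_{x_0}M$ puisque $S$ est transverse \`a $H$. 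Le th\'eor\`eme d'inversion locale fournit le diff\'eomorphisme cherch\'e, et dans ces coordonn\'ees $X_i = \partial_{t^i}$, ce qui redresse $H$. L'obstacle principal me semble \^etre la justification rigoureuse de la commutation des flots des $X_i$ \`a partir de $[X_i, X_j] = 0$ : elle s'appuie sur le th\'eor\`eme d\'ej\`a \'enonc\'e dans le chapitre pr\'ec\'edent, $[X,Y](x_0) = \frac{d}{ds}\big|_{s=0}((\phi_Y^s)_* X)(x_0)$, qui entra\^ine $(\phi_Y^s)_* X = X$ pour tout $s$ lorsque $[X,Y] = 0$, et par int\'egration $\phi_X^t \circ \phi_Y^s = \phi_Y^s \circ \phi_X^t$, ce qui l\'egitime l'ordre arbitraire des flots dans la d\'efinition de $\Phi$.
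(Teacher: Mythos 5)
Le m\'emoire \'enonce ce th\'eor\`eme explicitement \emph{sans preuve} (\og nous l'\'enon\c{c}ons sans preuve \fg), il n'y a donc pas d'argument de r\'ef\'erence auquel comparer le v\^otre ; votre proposition doit \^etre jug\'ee sur pi\`eces, et elle est correcte. C'est la d\'emonstration classique : le sens direct par l'expression coordonn\'ee du crochet dans une carte redressante, puis, pour la r\'eciproque, la normalisation du rep\`ere en $X_i=\partial_i+\sum_{j>k}a_i^j\partial_j$, l'annulation des crochets $[X_i,X_j]$ obtenue en identifiant les composantes sur $\partial_1,\ldots,\partial_k$ (c'est bien l\`a que l'involutivit\'e est indispensable, et votre identification $f_{ij}^l=0$ est juste), et enfin le redressement par la compos\'ee des flots appliqu\'ee \`a une transversale. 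Le seul point qui m\'eriterait d'\^etre compl\`etement r\'edig\'e est celui que vous signalez vous-m\^eme : passer de $[X,Y]=0$ \`a $(\phi_Y^s)_*X=X$ pour tout $s$ demande un peu plus que la d\'eriv\'ee en $s=0$ donn\'ee par le th\'eor\`eme du chapitre~1 ; il faut observer que $s\mapsto(\phi_Y^{-s})_*X$ a une d\'eriv\'ee en tout $s$ \'egale au tir\'e en arri\`ere de $[Y,X]$, donc est constante, ce qui l\'egitime ensuite la commutation des flots et l'\'egalit\'e $\Phi_*\partial_{t^i}=X_i$ en tout point du domaine et pas seulement \`a l'origine. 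Avec ce compl\'ement, la preuve est compl\`ete.
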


On a le théorème suivant concernant les groupes de Lie : 

\begin{theorem}
	Soient $G$ et $G'$ deux groupes de Lie, et $F:G\rightarrow G'$ un morphisme lisse de groupes de Lie ($F$ doit être lisse dans les cartes en plus d'être un morphisme de groupes). Alors $F_{*e}:\g\rightarrow\g'$ est une fonction linéaire telle que $F_{*e}([A,B])=[F_{*e}A,F_{*e}B]$. Autrement dit, $F_{*e}$ est un morphisme d'algèbres de Lie. 
\end{theorem}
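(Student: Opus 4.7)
L'application $F_{*e}$ est linéaire par définition même (c'est la différentielle d'une application lisse entre variétés en un point). Le cœur du travail est donc de montrer la compatibilité avec les crochets. L'idée directrice est que, puisque $F$ est un morphisme de groupes, les champs de vecteurs invariants à gauche sur $G$ et $G'$ sont $F$-reliés au sens convenable, et cette relation se transmet aux crochets.

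Premièrement, pour $A\in\g$, je compare $\overline{A}$ sur $G$ et $\overline{F_{*e}A}$ sur $G'$. Pour tout $g\in G$, le fait que $F$ soit un morphisme donne l'identité $F\circ L_g = L_{F(g)}\circ F$, d'où, en différentiant en l'élément neutre,
$$F_{*g}(\overline{A}_g) = F_{*g}((L_g)_{*e}A) = (F\circ L_g)_{*e}A = (L_{F(g)}\circ F)_{*e}A = (L_{F(g)})_{*e'}(F_{*e}A) = \overline{F_{*e}A}_{F(g)}.$$
Autrement dit, $\overline{A}$ et $\overline{F_{*e}A}$ sont $F$-reliés.

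Deuxièmement, je montre que la relation de $F$-connexion entre champs de vecteurs est stable par crochet, même quand $F$ n'est pas un difféomorphisme. La stabilité sous poussé en avant rappelée dans le texte est énoncée pour les difféomorphismes, ce qui sera le principal obstacle si l'on veut l'utiliser directement ; je contourne cet obstacle en utilisant la caractérisation du crochet comme dérivation. Dire que $X$ sur $G$ et $Y$ sur $G'$ sont $F$-reliés équivaut à $X[\phi\circ F]=Y[\phi]\circ F$ pour toute $\phi\in\mathcal{C}^{\infty}(G')$. Si $X\sim_F Y$ et $X'\sim_F Y'$, un calcul direct donne
$$[X,X'][\phi\circ F] = X[X'[\phi\circ F]] - X'[X[\phi\circ F]] = X[Y'[\phi]\circ F] - X'[Y[\phi]\circ F] = [Y,Y'][\phi]\circ F,$$
donc $[X,X']\sim_F[Y,Y']$.

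Il ne reste qu'à appliquer cette observation aux champs invariants à gauche $\overline{A}$, $\overline{B}$ d'une part, et $\overline{F_{*e}A}$, $\overline{F_{*e}B}$ d'autre part, puis à évaluer en $e$. On obtient $F_{*e}([\overline{A},\overline{B}]_e)=[\overline{F_{*e}A},\overline{F_{*e}B}]_{e'}$, ce qui, par définition du crochet de Lie sur $\g$ et $\g'$, s'écrit exactement $F_{*e}([A,B])=[F_{*e}A,F_{*e}B]$, et termine la preuve.
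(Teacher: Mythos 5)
Votre preuve est correcte et suit le m\^eme squelette que celle du texte : on montre d'abord, en diff\'erentiant l'identit\'e $F\circ L_g = L_{F(g)}\circ F$ issue de la propri\'et\'e de morphisme, que $\overline{A}$ et $\overline{F_{*e}A}$ sont $F$-reli\'es, puis on transporte le crochet et on \'evalue en l'\'el\'ement neutre. La diff\'erence est dans la seconde \'etape : le texte conclut en invoquant son th\'eor\`eme $[f_*X,f_*Y]=f_*[X,Y]$, qui n'a \'et\'e \'enonc\'e et d\'emontr\'e (via les flots et leur conjugaison par $f$) que pour un \emph{diff\'eomorphisme}, alors que $F$ n'en est pas un en g\'en\'eral ; le \og{}pouss\'e en avant\fg{} $F_*X$ d'un champ de vecteurs n'est d'ailleurs m\^eme pas d\'efini si $F$ n'est pas bijective. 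Vous avez identifi\'e cet obstacle et l'avez contourn\'e en d\'emontrant l'\'enonc\'e r\'eellement n\'ecessaire, \`a savoir que la relation de $F$-liaison entre champs de vecteurs est stable par crochet, via la caract\'erisation du crochet comme commutateur de d\'erivations ($X[\phi\circ F]=Y[\phi]\circ F$). Votre calcul est juste, et cette version comble en fait une petite lacune de l'argument du texte ; l'approche par d\'erivations est aussi plus \'el\'ementaire que celle par flots, au prix de devoir admettre l'\'equivalence entre les deux d\'efinitions du crochet, qui est bien \'etablie plus haut dans le chapitre.
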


\begin{proof}[Preuve]
	On a : $$ F\circ L_g(g')=F(gg')=F(g)F(g')=(L_{F(g)}\circ F)(g')$$ Donc $$F_{*g}(\overline{A}_g)=F_{*g}(L_{g*}A)=L_{F(g)*e'}(F_{*e}A)=(\overline{F_{*e}A)}_{F(g)}$$ D'après le théorème 3.6, on a le résultat voulu car on a prouvé que $F_*(\overline{A})=\overline{(F_{*e}A)}$.
\end{proof}

Ce théorème implique, en l'appliquant à l'inclusion naturelle de $H$ dans $G$, que le crochet sur $\mathfrak{h}$ est juste la restriction du crochet à $\g$.\\
\fbox{Les espaces tangents à $H$ forment une distribution de sous-espaces qui est involutive, donc intégrable.}

\paragraph{Représentations adjointes}
\begin{defi}
	Soit $g\in G$. Soit $Ad_g:G\rightarrow G$ \textbf{l'isomorphisme} lisse dit \textbf{adjoint} donné par $Ad_g(g')=gg'g^{-1}$. Le théorème précédent implique l'existence d'un isomorphisme induit sur $\g$ noté $\mathfrak{Ad}_g:\g\rightarrow\g$ c'est-à-dire $\mathfrak{Ad}_g=Ad_{g*e}$. On a donc un morphisme $\mathfrak{Ad}:G\rightarrow GL(\g)$ et d'après le théorème précédent, un autre morphisme induit $\mathfrak{ad}:\g\rightarrow\mathfrak{Gl(\g)}$.
\end{defi}

\begin{theorem}
	Soient $A,B\in \g$. On a : $$ \mathfrak{ad}(A)(B)=\frac{\partial^2}{\partial s\partial t}_{|s,t=0}(exp(tA)exp(sB)exp(-tA)=[A,B]$$
\end{theorem}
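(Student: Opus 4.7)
Le plan se décompose en deux étapes. Pour la première égalité, je déroule simplement les définitions : comme $\mathfrak{ad}=\mathfrak{Ad}_{*e}$, on a $\mathfrak{ad}(A)(B)=\tfrac{d}{dt}\big|_{t=0}\mathfrak{Ad}_{\exp(tA)}(B)$. Puis $\mathfrak{Ad}_g(B)=(Ad_g)_{*e}(B)$ s'obtient en dérivant en $s=0$ l'image de la courbe $s\mapsto\exp(sB)$ (de vitesse initiale $B$ en $e$) par $Ad_g$, soit $\mathfrak{Ad}_g(B)=\tfrac{d}{ds}\big|_{s=0}\,g\exp(sB)g^{-1}$. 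En substituant $g=\exp(tA)$ et en combinant les deux dérivations, j'obtiens la formule annoncée avec la dérivée mixte.

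Pour la seconde égalité, je m'appuie sur la caractérisation du crochet par les flots déjà prouvée dans le chapitre : $[A,B]=[\overline{A},\overline{B}]_e=\partial^2_{12}\Phi(0,0)$, avec $\Phi(s,t)=\phi^s_{\overline{B}}\circ\phi^t_{\overline{A}}\circ\phi^{-s}_{\overline{B}}\circ\phi^{-t}_{\overline{A}}(e)$. Comme le flot du champ invariant à gauche $\overline{X}$ est la translation à droite par $\exp(tX)$, un calcul direct donne $\Phi(s,t)=\exp(-tA)\exp(-sB)\exp(tA)\exp(sB)$. Il me reste alors à comparer les dérivées mixtes en $(0,0)$ de $\Phi$ et de $H(s,t):=\exp(tA)\exp(sB)\exp(-tA)$. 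L'observation clé est l'identité algébrique immédiate $\Phi(s,t)=H(-s,-t)\exp(sB)$ ; appliquer la règle de Leibniz à ce produit, puis utiliser que $H(0,t)=e$ pour tout $t$ force $\partial_t H\big|_{(0,0)}=0$, fait disparaître tous les termes croisés hormis $\partial_t\partial_s H\big|_{(0,0)}$. On en déduit $\partial^2_{ts}\big|_{(0,0)}\Phi=\partial^2_{ts}\big|_{(0,0)}H$, c'est-à-dire $[A,B]=\mathfrak{ad}(A)(B)$.

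Le point délicat sera cette dernière étape : la règle de Leibniz pour un produit dans un groupe de Lie général n'est pas immédiate puisque la multiplication y est non linéaire et non abélienne. Je la justifierai proprement en travaillant dans la carte exponentielle au voisinage de $e$, où la formule de Baker--Campbell--Hausdorff à l'ordre deux $\exp(X)\exp(Y)=\exp\bigl(X+Y+\tfrac{1}{2}[X,Y]+\cdots\bigr)$ permet de développer $\Phi$ et $H$ au premier ordre non trivial et d'identifier, dans les deux cas, un même terme croisé $st\,[A,B]$, ce qui confirme l'égalité de manière indépendante.
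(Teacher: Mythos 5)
Votre preuve est correcte mais, pour la seconde \'egalit\'e, elle suit un chemin r\'eellement diff\'erent de celui du texte. Le texte part de $[A,B]=[\overline{A},\overline{B}]_e$ et invoque la caract\'erisation du crochet par la d\'eriv\'ee de Lie le long du flot, $[\overline{A},\overline{B}]_e=\frac{d}{dt}_{|t=0}(\phi_{-t})_*(\overline{B}_{\phi_t(e)})$ avec $\phi_t(g)=g\exp(tA)$ : le pouss\'e en avant fait appara\^itre directement $\frac{d}{ds}_{|s=0}\exp(tA)\exp(sB)\exp(-tA)=\mathfrak{Ad}_{\exp(tA)}(B)$ comme d\'eriv\'ee int\'erieure, et une derni\`ere d\'erivation en $t$ donne toute la cha\^ine d'\'egalit\'es d'un coup. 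Vous revenez au contraire \`a la d\'efinition premi\`ere du crochet par le commutateur des flots, $[A,B]=\partial^2_{12}\Phi(0,0)$ avec $\Phi(s,t)=\exp(-tA)\exp(-sB)\exp(tA)\exp(sB)$, puis vous ramenez cette d\'eriv\'ee mixte \`a celle de $H(s,t)=\exp(tA)\exp(sB)\exp(-tA)$ gr\^ace \`a l'identit\'e $\Phi(s,t)=H(-s,-t)\exp(sB)$ ; c'est plus long, mais cela \'evite le th\'eor\`eme interm\'ediaire sur $(\phi^s_Y)_*X$ et ne repose que sur la d\'efinition d'origine. Votre \'etape d\'elicate (la r\`egle de Leibniz pour le produit) se justifie d'ailleurs sans Baker--Campbell--Hausdorff : dans une carte quelconque autour de $e$, le seul terme crois\'e parasite de $\partial_s\partial_t\big(H(-s,-t)\exp(sB)\big)$ porte en facteur $\partial_2H(0,0)=0$ (cons\'equence de $H(0,t)=e$ pour tout $t$), et les d\'eriv\'ees secondes en jeu sont bien d\'efinies ind\'ependamment de la carte parce que les courbes de points de base concern\'ees restent en $e$, ou parce que le vecteur que l'on red\'erive s'annule au point consid\'er\'e. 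Je vous d\'econseille en revanche de fonder cette \'etape sur BCH \`a l'ordre deux : cette formule n'est \'etablie nulle part dans le texte et son terme $\frac{1}{2}[X,Y]$ encode pr\'ecis\'ement l'identification du crochet d'alg\`ebre de Lie avec le d\'efaut de commutation dans le groupe, c'est-\`a-dire ce que le th\'eor\`eme cherche \`a d\'emontrer ; l'invoquer ici friserait la circularit\'e, mais votre argument principal n'en a heureusement pas besoin.
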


\begin{proof}[Preuve]
	Soit $\phi_t$ le flot de $\overline{A}$. On a vu que $\phi_t(g)=gexp(tA)$. On a alors : 
	$$[A,B]=[\overline{A},\overline{B}]_e=\frac{d}{dt}_{|t=0}(\phi_{-t})_*(\overline{B}_{\phi_t(e)})=\frac{d}{dt}_{|t=0}(\phi_{-t})_*(\frac{d}{ds}_{|s=0}\phi_t(e)exp(sB)) $$ 
	$$ [A,B]=\frac{\partial^2}{\partial s \partial t}(exp(tA)exp(sB)exp(-tA))=\frac{d}{dt}\mathfrak{Ad}(exp(tA))(B)=\mathfrak{ad}(A)(B)$$
\end{proof}

Pour les représentations matricielles des groupes, ce théorème montre que l'objet commutateur de deux matrices, et la notion définie dans cette partie, coïncident.

Soit $V$ un espace vectoriel de dimension finie $m$. En considérant $GL(V)$ comme un groupe de matrices, il est simple de montrer que c'est bien un groupe de Lie. $\mathfrak{Gl}(V)$ peut être identifié à l'ensemble des endomorphismes de $V$, la correspondance étant donnée par : $A\rightleftharpoons\frac{d}{dt}(I+tA)_{|t=0}$. 
Pour $A\in \mathfrak{Gl}(V)$ posons : $$ Exp(A)=I+A+\frac{A^2}{2!}+...$$
La somme converge puisqu'on est dans un espace de Banach, et $$Exp((t+s)A)=Exp(tA)Exp(sA)$$ En particulier, $Exp(tA)Exp(-tA)=I$ donc $Exp(tA)\in GL(V)$. $Exp$ est donc l'application exponentielle pour $GL(V)$.

On a alors $$[A,B]=\frac{\partial^2}{\partial s\partial t}(Exp(tA)Exp(sB)Exp(-tA))_{|s,t=0}=AB-BA$$

\paragraph{Constantes de structure}
\begin{defi}
	Soit $(e_i)_{i\in[|1,n|]}$ un base de l'algèbre de Lie $\g$ de $G$. Les constantes de structure $c_{ij}^k\in\R$ sont définies par $[\frac{e_a}{i},\frac{e_b}{i}]=c_{ab}^c\frac{e_c}{i}$ (en convention d'Einstein). \\L'antisymétrie du crochet entraine $c_{ij}^k=-c_{ji}^k$ et l'identité de Jacobi : $$\forall h,i,j,k\in [|1,n|] : c_{im}^hc_{jk}^m + c_{km}^hc_{ij}^m + c_{jm}^hc_{ki}^m=0$$
\end{defi}
Par exemple l'algèbre de Lie de $SU(2)$, dans sa représentation standard, est engendrée par les matrices : $$
\frac{i}{2}\sigma_1=\frac{1}{2}\begin{pmatrix}
0 & i \\
i & 0 \\
\end{pmatrix}
\frac{i}{2}\sigma_2=\frac{1}{2}\begin{pmatrix}
0 & 1 \\
-1 & 0 \\
\end{pmatrix}
\frac{i}{2}\sigma_3=\frac{1}{2}\begin{pmatrix}
i & 0 \\
0 & -i \\
\end{pmatrix}
$$
où $\sigma_1, \sigma_2, \sigma_3$ sont les matrices de Pauli, qui vérifient $[\sigma_i,\sigma_j]=2i\epsilon^{ijk}\sigma_k$ où $\epsilon^{ijk}$ est le tenseur totalement antisymétrique avec $\epsilon^{123}=1$. Les constantes de structure de $\mathfrak{su}(2)$ sont donc $f^{abc}=2i\epsilon^{abc}$.

\paragraph{L'algèbre de Lie de $SU(n)$}
Le calcul de l'algèbre de Lie de $SU(n)$ est relativement simple, et surtout, ces groupes sont fréquemment utilisés en physique des particules ($SU(2)$ est le groupe de jauge des interactions faibles, $SU(3)$ celui des interactions fortes).

Notons $GL_n(\C)$ l'ensemble des matrices $n\times n$ à coefficients complexes. Pour $A\in GL_n(\C)$, on note $A^\dagger$ la matrice conjuguée de la transposée de $A$. On a : $$SU(n)=\{A\in GL_n(\C)|AA^\dagger=1 ; \det A=1\}$$ 

Soit $t\mapsto A(t)$ une courbe dans $U(n)$ avec $A(0)=I$. Alors $0=\partial_{t|t=0}(I)=\partial_{t|t=0}(A(t)A(t)^\dagger)$ donc $$A'(0)+A'(0)^\dagger=0$$

Par ailleurs, si $B\in\mathfrak{gl}_n(\C)$ vérifie $B+B^\dagger=0$, alors $Exp(B)(ExpB)^\dagger=Exp(B)Exp(B^\dagger)=I$ donc $Exp(B)\in U(n)$ et en prenant la dérivée en $0$ de $t\mapsto Exp(tB)$, on obtient $B\in \mathfrak{u}(n)$.\\

On a donc montré que \fbox{$\mathfrak{u}(n)=\{B\in\mathfrak{gl}(n)|B+B^\dagger=0\}$.}\\\\

L'algèbre de Lie $\mathfrak{su}(n)$ de $SU(n)$ est la sous-algèbre de $\mathfrak{u}(n)$ des matrices de trace nulle, puisque $\det(ExpB)=e^{\tr B}$. En effet, en posant $f(t)=\det(Exp(tB))$ on aboutit à l'équation différentielle $f'(t)=\tr(B)f(t)$ qui donne le résultat voulu.

\subsection{Actions de groupe}

\begin{defi}
	Soit $G$ un groupe et $X$ un ensemble. Une action de $G$ sur $X$ est une application $a:G\times X\rightarrow X$ telle que :
	\begin{enumerate}
		\item $\forall g,g'\in G, x\in X$ $a(g',a(g,x))=a(g'g,x)$
		\item $\forall x\in X$ $a(e,x)=x$
	\end{enumerate}
\end{defi}

On note souvent $a(g,x)=g\cdot x$ pour une action à gauche. On s'intéresse particulièrement aux cas où $G$ est un groupe de Lie, et où l'ensemble $X$ est muni d'une structure de variété.\\
Pour $x\in X$, l'ensemble $\{g\cdot x|g\in G\}$ est l'orbite de $x$ sous $G$.\\
Un action est dite \textbf{libre} si, $\forall x \in G, \forall g \in G, g\cdot x=x \Leftrightarrow g=e$.\\
Une action est dite \textbf{propre} si $\forall K \subset G$ compact, $\{g\in G|gK\cap K\neq\emptyset\}$ est compact. Notons que si $G$ est compact (ce qui est le cas pour les théories des interactions), l'action est nécessairement propre.\\\\
On a le théorème suivant que nous énonçons sans preuve. 
\begin{theorem}
	Soit $a$ une action libre et propre d'un groupe de Lie $G^{d_G}$ sur une variété $P^{d_P}$. Alors il existe une variété $M$ de dimension $d_P-d_G$ et une application lisse $\pi:P\rightarrow M$ dont la différentielle est une surjection en tout point, vérifiant : $\forall z_0 \in M$, $z_0$ il existe un voisinage U de $z_0$ et un difféomorphisme $\psi=(\pi,\theta):\pi^{-1}(U)\rightarrow U\times G$ vérifiant $\psi\circ a(g,x)=(\pi(x),g\theta(x))$.
\end{theorem}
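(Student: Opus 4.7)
Le plan est de construire $M$ comme l'espace quotient $P/G$ pour la relation d'équivalence des orbites, de le munir d'une structure de variété lisse de dimension $d_P - d_G$, puis d'exhiber les trivialisations locales équivariantes au moyen de tranches transverses aux orbites.

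Pour commencer, je considérerais la relation $\sim$ sur $P$ définie par $x \sim y$ s'il existe $g \in G$ tel que $y = a(g,x)$, et je poserais $M = P/\sim$ muni de la topologie quotient. La projection canonique $\pi : P \to M$ est ouverte car pour tout ouvert $U \subset P$, l'ensemble $\pi^{-1}(\pi(U)) = \bigcup_{g \in G} a(g,U)$ est réunion d'ouverts. La séparation de $M$ équivaut, $\pi$ étant ouverte, à la fermeture dans $P \times P$ de l'ensemble $R = \{(x,a(g,x)) : g \in G, x \in P\}$ : si $(x_n, a(g_n, x_n)) \to (x,y)$, alors la suite $(x_n, a(g_n, x_n))$ reste dans un compact, et la propreté force $(g_n)$ à rester dans un compact de $G$ ; on extrait $g_n \to g$ et on obtient $y = a(g,x)$ par continuité.

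Ensuite, pour la structure différentiable, je construirais des tranches. En un point $x_0 \in P$, l'action étant libre, l'application orbitale $g \mapsto a(g, x_0)$ est une immersion injective ; par propreté c'est un plongement, et l'orbite $G \cdot x_0$ est une sous-variété fermée de dimension $d_G$. Je choisirais alors un supplémentaire $T$ de $T_{x_0}(G \cdot x_0)$ dans $T_{x_0}P$, puis une sous-variété $S \ni x_0$ de dimension $d_P - d_G$ avec $T_{x_0}S = T$. L'application $\Phi : G \times S \to P$, $(g,s) \mapsto a(g,s)$, a alors une différentielle inversible en $(e, x_0)$, donc est un difféomorphisme local en ce point par inversion locale. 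Le point \emph{crucial} est d'établir, quitte à rétrécir $S$, que $\Phi$ est un difféomorphisme global sur un ouvert $G$-invariant $V$ de $P$ : l'injectivité s'obtient par l'absurde, en supposant l'existence de suites $(g_n, s_n), (g'_n, s'_n)$ avec $\Phi(g_n,s_n) = \Phi(g'_n, s'_n)$ et $s_n, s'_n \to x_0$ ; la propreté force $h_n = g'^{-1}_n g_n$ à rester dans un compact, puis $h_n \to h$ vérifiant $a(h, x_0) = x_0$, soit $h = e$ par liberté, ce qui contredit le caractère local de difféomorphisme de $\Phi$ près de $(e, x_0)$. C'est ici que se concentre la difficulté essentielle.

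Enfin, je poserais $U = \pi(S)$, qui est un ouvert puisque $\pi$ est ouverte et $\pi^{-1}(U) = V$ ; la restriction $\pi|_S : S \to U$ est alors une bijection continue dont on montre, grâce à l'équivariance de $\Phi$, que c'est un homéomorphisme, et on transporte sur $U$ la structure lisse de $S$. Les changements de cartes induits par deux tranches $S_1, S_2$ d'intersections projetées non vides s'expriment via l'action lisse de $G$, donc sont $\mathcal{C}^\infty$ ; ceci munit $M$ d'une structure de variété de dimension $d_P - d_G$ et fait de $\pi$ une submersion. Pour la trivialisation, tout $x \in \pi^{-1}(U)$ s'écrit uniquement $x = a(g, s)$ avec $s \in S$ et $g \in G$ par bijectivité de $\Phi$ : je définis $\theta(x) = g$ et $\psi(x) = (\pi(x), \theta(x))$, lisse car inverse d'un difféomorphisme. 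L'équivariance se vérifie immédiatement : pour tout $g' \in G$, on a $a(g', x) = a(g'g, s)$, donc $\theta(a(g', x)) = g' g = g' \theta(x)$ et $\pi(a(g',x)) = \pi(x)$, ce qui donne bien $\psi \circ a(g',x) = (\pi(x), g'\theta(x))$.
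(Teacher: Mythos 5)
Le mémoire énonce ce théorème explicitement \emph{sans preuve} (« que nous énonçons sans preuve »), il n'y a donc pas d'argument interne auquel comparer le vôtre. Votre démonstration est la construction standard du quotient d'une action libre et propre (théorème de la tranche, ou théorème de Godement) et elle est essentiellement correcte : quotient topologique avec $\pi$ ouverte, séparation de $M$ obtenue par fermeture du graphe de la relation d'orbite grâce à la propreté, tranches transverses donnant à la fois les cartes de $M$ et les trivialisations équivariantes $\psi=(\pi,\theta)$. Vous identifiez bien le point délicat, à savoir l'injectivité globale de $\Phi:G\times S\to P$ après rétrécissement de $S$, et l'argument par suites combinant propreté (pour confiner $h_n=g_n'^{-1}g_n$ dans un compact) et liberté (pour forcer $h=e$) est le bon.

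Un seul point mériterait d'être consolidé : vous affirmez que la liberté de l'action rend l'application orbitale $g\mapsto a(g,x_0)$ \emph{immersive}. La liberté donne directement l'injectivité de cette application, mais l'injectivité de sa différentielle en $e$ demande un petit argument supplémentaire : si $A\in\g$ vérifie $A^*_{x_0}=0$, la courbe $t\mapsto a(\exp(tA),x_0)$ est constante, donc $a(\exp(tA),x_0)=x_0$ pour tout $t$, et la liberté force $\exp(tA)=e$ pour tout $t$, d'où $A=0$. Sans ce passage par le sous-groupe à un paramètre, l'existence d'un supplémentaire $T$ de la bonne dimension $d_P-d_G$, et donc toute la construction de la tranche, n'est pas justifiée. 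Le reste (homéomorphisme $\pi|_S:S\to U$, régularité des changements de cartes via $\Phi^{-1}$, équivariance de $\theta$) est correctement esquissé.
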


Le triplet $(P,M,\pi)$ possède une structure de \textbf{fibré principal}, que nous allons maintenant étudier.

\section{Espace fibrés}
La "philosophie" des espaces fibrés est très générale, et permet de réinterpréter de nombreux objets d'une manière originale. Le vocabulaire est par contre assez spécifique.

Soit $\pi$ une application quelconque de $P$ (espace de départ) dans $M$ (image de $P$ par $\pi$). On dit que $P$ est \textbf{l'espace total} et que $M$ est \textbf{la base}, $\pi$ est \textbf{la projection} (et n'a a priori aucune raison d'être surjective pour l'instant), enfin, on appelle $\pi^{-1}(\{x\})$ \textbf{la fibre au dessus de $x$}.

\subsection{Généralités}
Toute application peut se réinterpréter en termes de fibration. On peut choisir un inverse de $\pi$ à droite $\sigma:M\rightarrow P$ en demandant, pour $x\in M$, que $\sigma(x)\in\pi^{-1}(x)$. $\sigma$ est alors appelée \textbf{section locale} de la projection $\pi$.\\\\

\begin{figure}[!h]
	\centering
	\caption{Une application réinterprétée en termes de fibration}
	\includegraphics{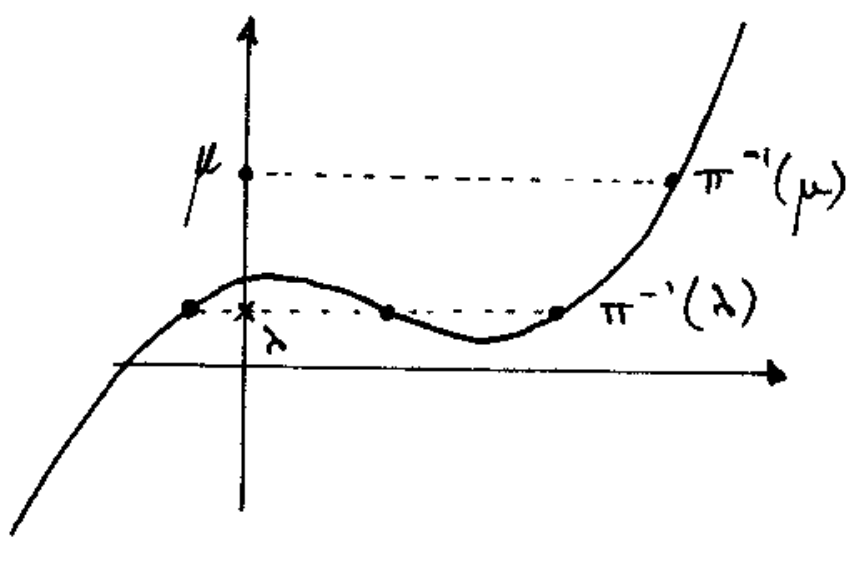}
\end{figure}

Dans la catégorie des variétés différentiables, toutes les applications ne vont pas être des fibrations au sens de la géométrie différentielle. Il faut se restreindre à des structures lisses pour la géométrie différentielle (c'est-à-dire qu'on veut des structures $\mathcal{C}^\infty$).

\begin{defi}
	On dit qu'une application lisse $\pi:P\rightarrow M$ où $P$ et $M$ sont des variétés, est une fibration, si toutes les fibres sont difféomorphes. On dit alors que la fibre type est $F$ où $F$ est difféomorphe à toutes les fibres.
\end{defi}

\begin{defi}
	On dit qu'une fibration $\pi:P\rightarrow M$ de fibre type $F$ est localement triviale si pour tout $x\in M$, il existe un voisinage de $U$ de $x$ dans $M$ tel que $\pi^{-1}(U)$ est difféomorphe à $U\times F$.
\end{defi}

Cette propriété de locale trivialité permet de se représenter localement l'espace fibré comme un 'cylindre' au dessus de la base :

\begin{figure}[!h]
	\centering
	\caption{Fibration localement triviale}
	\includegraphics{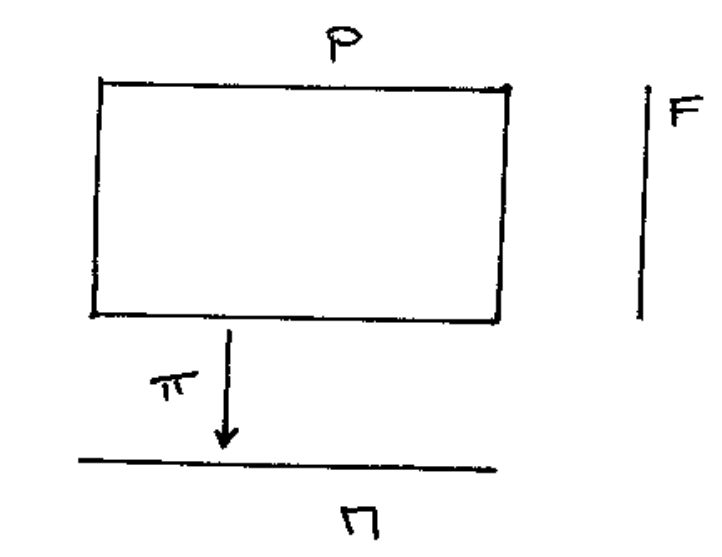}
\end{figure}

\begin{rmq}
	Le plus souvent, la fibration est qualifiée par un terme précisant sa structure ; par exemple on parlera de fibré vectoriel si les fibres sont des espaces vectoriels, dont l'un des cas particulier est le fibré en droite, ou de fibré principal, auquel nous allons maintenant nous intéresser en particulier.
\end{rmq}

\subsection{Espace fibrés principaux}
\paragraph{Généralités}
\begin{defi}
	Une fibration $(P,M,\pi)$ a une structure de fibré principal si les trois conditions suivantes sont vérifiées : 
	\begin{enumerate}
		\item $(P,M,\pi)$ est une fibration localement triviale
		\item Un groupe de Lie $G$ agit à droite sur $P$, de manière lisse, et libre et transitive sur chaque fibre.
		\item Toutes les fibres sont difféomorphes à $G$.
	\end{enumerate}
\end{defi}

Une géométrie adaptée à la description de l'électromagnétisme, que nous allons étudier, repose sur un fibré principal de groupe $U(1)$, qu'on peut localement représenter par :

\begin{figure}[!h]
	\centering
	\includegraphics{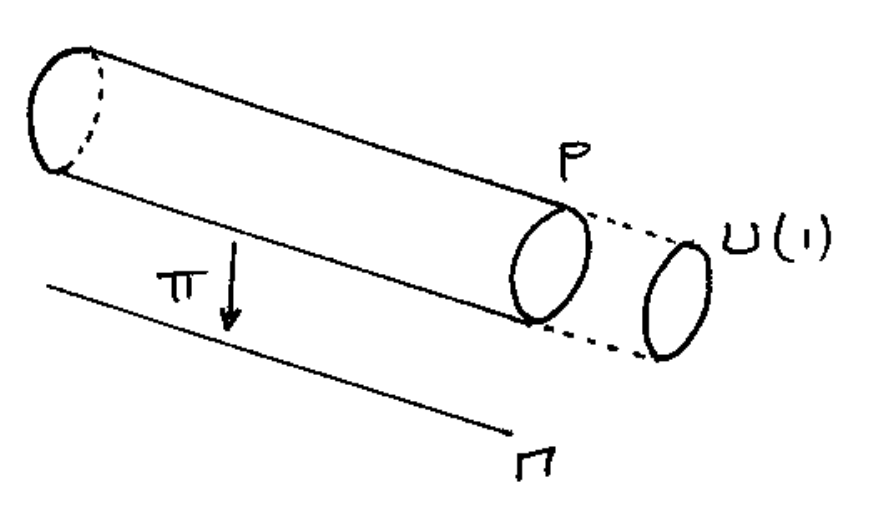}
	\caption{Trivialisation locale du fibré $U(1)$ de l'électromagnétisme sur un espace-temps de dimension $1$}
\end{figure}

\begin{rmq}
	Chacune des fibres est bien difféomorphe à $G$, mais pas de manière canonique ! Expliciter un difféomorphisme revient à faire un choix d'une \textbf{trivialisation locale} : Soit $x\in M$. Il existe un voisinage ouvert $U$ de $x$ dans $M$ et une application $T_U:\pi^{-1}(U)\rightarrow U\times G$ telle que $T_U(x)=(\pi(x),s_U(x))$ où $s_U:\pi^{-1}(U)\rightarrow G$ a la propriété $\forall p\in \pi^{-1}(U) \forall g\in G$ $s_U(pg)=s_U(p)g$. $T_U$ est appelé trivialisation locale, ou dans le langage historique de la physique, \textbf{choix de jauge}. Autrement dit, lorsqu'on choisit une trivialisation locale, on marque un point de la fibre qu'on peut ensuite identifier à l'élément unité, ce qui donne par le même biais une section locale.
\end{rmq}

On retrouve la structure obtenu à la fin de la sous-partie précédente ! Un fibré principal $\pi:P\rightarrow M$ de groupe $G$ est donc \textbf{localement} le produit cartésien de la base par le groupe. On définit les fonctions de transitions qui contiennent l'information nécessaire pour obtenir l'espace total en recollant les différents morceaux trivialisés.

\begin{defi}
	Soient $T_U:\pi^{-1}(U)\rightarrow U\times G$ et $T_V:\pi^{-1}(V)\rightarrow V\times G$ deux trivialisations locales d'un fibré principal avec groupe $G$. La fonction de transition de $T_U$ à $T_V$ est la carte $g_{UV}:U\cap V\rightarrow G$ définie, pour $x\in\pi^{-1}\in U\cap V$, par $g_{UV}(x)=s_U(p)s_v(p)^{-1}$.
\end{defi}
\begin{proprietes}
	$g_{UV}(x)$ est indépendant du choix de $p\in\pi^{-1}(x)$. De plus : 
	\begin{enumerate}
		\item $g_{UU}(y)=e$ pour tout $y\in U$
		\item $g_{VU}(y)=g_{UV}^{-1}(y)$ pour tout $y\in U\cap V$
		\item $g_{UV}(y)g_{VW}(y)g_{WU}(y)=e$ pour tout $y\in U\cap V \cap W$
	\end{enumerate}
\end{proprietes}

\begin{proof}[Preuve]
	 $s_U(pg)s_V(pg)^{-1}=s_U(p)g(s_V(p)g)^{-1}=s_U(p)gg^{-1}s_V(p)^{-1}=s_U(p)s_V(p)^{-1}$
\end{proof}
$P$ peut en fait être défini comme l'espace obtenu par l'union disjointe $(U\times G)\cup (V\times G)\cup ...$ en identifiant le point $(x,g)\in U\times G$ avec $(x,g')\in V\times G$ si $g=g_{UV}(x)g'$. Les propriétés précédentes montrent qu'il s'agit d'une relation d'équivalence.

\begin{defi}
	Une section locale d'un fibré principal $\pi:P\rightarrow M$ de groupe $G$ comme est une application lisse de $U$ dans $P$ où $U$ est un ouvert de $M$, telle que $\pi\circ\sigma=id_U$
\end{defi}

\begin{figure}[!h]
	\centering
	\includegraphics{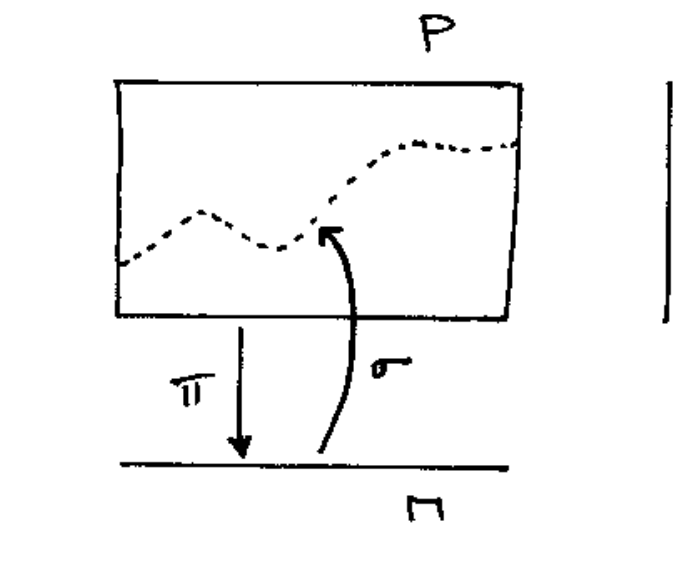}
	\caption{Section locale d'un fibré}
\end{figure}

Un fibré principal admet de telles sections. De plus, comme évoqué dans la remarque précédente :

\begin{theorem}
	Il y a un isomorphisme canonique entre les trivialisations locales et les sections locales.
\end{theorem}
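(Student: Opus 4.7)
Mon approche est de construire explicitement deux bijections r\'eciproques entre les trivialisations locales $T_U:\pi^{-1}(U)\rightarrow U\times G$ et les sections locales $\sigma:U\rightarrow P$. Dans un premier sens, je poserais $\sigma_{T_U}(x)=T_U^{-1}(x,e)$ o\`u $e$ est l'\'el\'ement neutre de $G$. La condition $T_U(p)=(\pi(p),s_U(p))$ entra\^ine imm\'ediatement $\pi\circ\sigma_{T_U}=\mathrm{id}_U$, et la lissit\'e de $\sigma_{T_U}$ d\'ecoule de celle de $T_U^{-1}$ compos\'ee avec l'inclusion lisse $x\mapsto(x,e)$.

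R\'eciproquement, soit $\sigma:U\rightarrow P$ une section locale. L'action libre et transitive de $G$ sur chaque fibre garantit, pour tout $p\in\pi^{-1}(U)$, l'existence d'un unique $g(p)\in G$ v\'erifiant $p=\sigma(\pi(p))\cdot g(p)$. Je d\'efinirais alors $T_\sigma(p)=(\pi(p),g(p))$. Cette application est bijective sur chaque fibre et $G$-\'equivariante par associativit\'e de l'action : $T_\sigma(p\cdot g')=(\pi(p),g(p)g')$, ce qui correspond bien \`a la propri\'et\'e $s_\sigma(pg')=s_\sigma(p)g'$ requise d'une trivialisation locale.

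Le principal obstacle sera d'\'etablir la lissit\'e de $p\mapsto g(p)$, car la d\'efinition utilise seulement une propri\'et\'e ensembliste de l'action. Pour cela, je m'appuierais sur une trivialisation locale $T_V$ d\'ej\`a fournie par la structure de fibr\'e principal : en \'ecrivant $\sigma(x)=T_V^{-1}(x,h(x))$ avec $h:U\cap V\rightarrow G$ lisse (c'est exactement l'expression de la section locale dans la carte), l'\'equivariance de $T_V$ donne la formule explicite $g(p)=h(\pi(p))^{-1}s_V(p)$, qui est lisse comme compos\'ee d'applications lisses (en utilisant la lissit\'e de l'inversion dans le groupe de Lie $G$).

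Il reste enfin \`a v\'erifier que ces deux constructions sont inverses l'une de l'autre. Partant de $T_U$, pour $p\in\pi^{-1}(U)$ tel que $T_U(p)=(\pi(p),h)$, l'\'equivariance de $T_U$ donne $p=T_U^{-1}(\pi(p),e)\cdot h=\sigma_{T_U}(\pi(p))\cdot h$, d'o\`u $T_{\sigma_{T_U}}(p)=(\pi(p),h)=T_U(p)$. R\'eciproquement, pour une section $\sigma$, on a $\sigma_{T_\sigma}(x)=T_\sigma^{-1}(x,e)=\sigma(x)\cdot e=\sigma(x)$. La correspondance est donc canonique, et aucun choix suppl\'ementaire (de r\'ef\'erence dans la fibre, etc.) n'intervient.
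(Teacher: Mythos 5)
Votre démonstration suit essentiellement la même approche que celle du mémoire : les deux constructions explicites $\sigma(x)=T_U^{-1}(x,e)$ et $T_U(\sigma(x)g)=(x,g)$ sont exactement celles du texte. Vous ajoutez en plus la vérification de la lissité de $p\mapsto g(p)$ et le contrôle que les deux constructions sont réciproques, détails que le mémoire omet ; c'est correct et complète utilement l'argument.
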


\begin{proof}[Preuve]
	Soit $\sigma:U\rightarrow P$ une section locale. On définit alors $T_U:\pi^{-1}(U)\rightarrow U\times G$ par $$T_U(\sigma(x)g)=(x,g)$$ Réciproquement, étant donnée une trivialisation locale $T_U:\pi^{-1}(U)\rightarrow U\times G$ on définit la section locale $\sigma:U\rightarrow P$ par $$\sigma(x)=T_U^{-1}(x,e)$$
\end{proof}

\begin{rmq}
	Si $T_U$ est une trivialisation locale avec $U=M$, $T_M$ est appelée trivialisation globale ; un fibré principal est dit trivial si une telle application existe. 
\end{rmq}

\paragraph{Le fibré des repères linéaires}
L'exemple suivant de fibré principal a en fait motivé la définition même de ces objets. Il donne une intuition au sujet de ces structures, et fournit un vocabulaire utile.\\\\
Soit $M$ une variété différentiable de dimension $n$. En chaque point $x\in M$ on a défini un espace vectoriel tangent, de dimension $n$. On note $G_x$ l'ensemble des bases de cet espace tangent. Choisir un point de $G_x$ revient à choisir n vecteurs indépendant de $T_xM$. 
La projection $\pi$ associe à une base de l'espace tangent en $x$ la point $x$ lui-même.\\ On définit l'action libre et propre, transitive sur chaque fibre, de $GL(n)$ sur l'espace total $P=\bigcup_{x} G_x$, par : $$\forall p\in P, g\in G,  p\cdot g=gp$$ où $gp$ est la multiplication matricielle 'formelle' de $g$ et de la base $p$, au sens suivant : supposons qu'on se soit fixé une base de $T_xP$. Alors on peut représenter matriciellement $g\in \pi^{-1}(x)$ dans cette base ; le repère $p\cdot g$ est alors la base de $T_xP$ dont la matrice \underline{dans la même base} s'écrit $gp$. Cette définition ne dépend pas de la base fixée, donc cela a un sens de considérer le produit $p\cdot g$ sans fixer une origine dans la fibre. On vient aussi de voir que toutes les fibres sont difféomorphes à $GL(n)$.\\\\
Montrons que l'espace total est naturellement muni d'une structure de variété:
Soit $d$ la dimension de l'espace de base $M$, et $n$ celle du groupe de Lie $G$. Soit $(U_i,\phi_i)$ un atlas de $M$ et $(V_j,\psi_j)$ un atlas de G. Soit $x\in P$. P étant localement trivialisable, il existe un voisinage ouvert $W$ de $x$ tel que $W$ est difféomorphe à $U\times G$, donc quitte à restreindre $W$ on peut le prendre difféomorphe à $U\times V$ où $(U,\phi)$ et $(V,\psi)$ sont des cartes respectives de $M$ et $G$ en $\pi(x)$ et en $e$. On définit alors $\zeta:U\times V\rightarrow \R^d\times\R^n$ par $\zeta(y)=(\phi(\pi(y)),\psi(y))$.\\\\
Une section du fibré principal correspond à une application lisse de $M$ dans $P$ qui à tout point de $M$ associe un repère de $T_xM$.\\\\
Par analogie, dans la suite, les éléments de $P$ seront souvent appelés de repères, et l'action de $G$ sur $P$ peut être interprétée comme un changement de base.

\paragraph{Vecteurs horizontaux}

\paragraph{Réduction dans les fibrés principaux}
Soit $P(M,G)$ un fibré principal de groupe $G$. On se demande s'il est possible de passer à une autre fibré principal, de base $M$ et de groupe structural $H<G$, qui serait une sous-variété de $P$. Ce n'est pas forcément le cas, si on prend $H=\{1\}<G$, on cherche alors une section globale de $P$, qui n'existe que si $P$ est trivial globalement.\\ 
On a le théorème suivant, dont la démonstration peut par exemple être trouvée  dans \cite{coq}.
\begin{theorem}
	Le choix d'une réduction du fibré principal à un sous-fibré n'est en général pas unique, et est caractérisé par le choix d'une section globale dans un fibré en espaces homogènes associé à $P$, en l'occurrence le fibré associé $P\times_{G}G/H$.
\end{theorem}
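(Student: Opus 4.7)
Mon plan consiste \`a \'etablir une bijection naturelle entre l'ensemble des r\'eductions de $P$ \`a un sous-fibr\'e principal de groupe $H$ et l'ensemble des sections globales du fibr\'e associ\'e $E = P\times_G G/H$, o\`u $G$ agit sur $G/H$ par translation \`a gauche. La non-unicit\'e annonc\'ee s'obtiendra alors comme cons\'equence de la multiplicit\'e g\'en\'erique des sections d'un tel fibr\'e.

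\`A toute r\'eduction $Q\subset P$ j'associerais la section $\sigma_Q : M\rightarrow E$ d\'efinie par $\sigma_Q(x) = [q,\overline{e}]$, o\`u $q$ d\'esigne un \'el\'ement quelconque de $Q\cap\pi^{-1}(x)$ et $\overline{e}$ la classe de l'identit\'e dans $G/H$. La bonne d\'efinition s'obtient gr\^ace \`a la transitivit\'e de l'action de $H$ sur les fibres de $Q$ : tout autre point de $Q\cap\pi^{-1}(x)$ s'\'ecrit $qh$ avec $h\in H$, et la relation d'\'equivalence d\'efinissant $E$ donne $[qh,\overline{e}] = [q,h\overline{e}] = [q,\overline{e}]$ puisque $h\overline{e} = \overline{e}$. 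La r\'egularit\'e de $\sigma_Q$ provient de l'existence de sections locales lisses du fibr\'e principal $Q(M,H)$.

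R\'eciproquement, \`a une section $\sigma : M\rightarrow E$ j'associerais le sous-ensemble $Q_\sigma = \{q\in P \,|\, [q,\overline{e}] = \sigma(\pi(q))\}$. La stabilit\'e de $Q_\sigma$ sous l'action de $H$ est imm\'ediate. Pour la transitivit\'e sur les fibres, si $q,q'\in\pi^{-1}(x)\cap Q_\sigma$ avec $q'=qg$ pour un unique $g\in G$ (par action libre et transitive sur les fibres de $P$), alors $[q,\overline{e}] = [qg,\overline{e}] = [q,g\overline{e}]$ impose $g\overline{e}=\overline{e}$, c'est-\`a-dire $g\in H$. Le point technique principal, que j'identifie comme l'obstacle central, sera d'\'etablir que $Q_\sigma$ est une sous-vari\'et\'e diff\'erentiable localement trivialisable de $P$. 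Pour cela, je me placerais localement au-dessus d'un ouvert $U$ de $M$ suffisamment petit pour admettre simultan\'ement une trivialisation $T_U : \pi^{-1}(U)\rightarrow U\times G$ de $P$ et un rel\`evement lisse $\tilde{\sigma} : U\rightarrow G$ de $\sigma_{|U}$ vu comme application $U\rightarrow G/H$ (ce rel\`evement existe car la projection $G\rightarrow G/H$ est un fibr\'e principal de groupe $H$, donc admet des sections locales). Je v\'erifierais alors que $Q_\sigma\cap\pi^{-1}(U)$ co\"incide avec l'image de $U\times H$ par l'application lisse $(x,h)\mapsto T_U^{-1}(x,\tilde{\sigma}(x)h)$, ce qui fournit la trivialisation cherch\'ee.

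Enfin, un calcul direct dans ces trivialisations locales montrera que $Q\mapsto\sigma_Q$ et $\sigma\mapsto Q_\sigma$ sont mutuellement inverses. La non-unicit\'e annonc\'ee r\'esultera alors du fait que le fibr\'e $E$ admet en g\'en\'eral plusieurs sections globales distinctes, comme on le voit d\'ej\`a dans le cas trivial $P = M\times G$, o\`u les sections de $E$ sont en bijection avec l'ensemble des applications lisses $M\rightarrow G/H$.
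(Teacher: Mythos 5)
Le texte ne donne pas de démonstration de ce théorème : il renvoie explicitement à l'ouvrage de Coquereaux. Votre proposition est donc la seule preuve disponible ici, et c'est l'argument standard de la correspondance entre réductions et sections de $P\times_G G/H$ ; il est correct dans ses grandes lignes. Les deux directions de la bijection sont bien construites, et vous identifiez correctement le point délicat, à savoir la structure de sous-fibré principal (lisse, localement trivial) de $Q_\sigma$, que vous traitez par relèvement local de $\sigma$ à travers la fibration $G\rightarrow G/H$. Deux précisions mineures méritent d'être explicitées : d'une part, l'implication $[q,\overline{e}]=[q,g\overline{e}]\Rightarrow g\overline{e}=\overline{e}$ repose sur la liberté de l'action de $G$ sur $P$ (pour $q$ fixé, l'application $f\mapsto[q,f]$ est injective sur la fibre précisément parce que le seul $g'$ vérifiant $qg'=q$ est l'identité) ; d'autre part, dans la direction $Q\mapsto\sigma_Q$, la régularité de $\sigma_Q$ utilise que $Q$ est déjà, par hypothèse de réduction, un sous-fibré principal admettant des sections locales lisses, ce qui n'est pas circulaire mais gagne à être dit. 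Enfin, l'exemple du fibré trivial pour justifier la non-unicité est pertinent, puisque les sections de $E$ s'identifient alors aux applications lisses $M\rightarrow G/H$, ce qui rejoint d'ailleurs l'usage qu'en fait le texte pour le cas $G=GL_n(\mathbb{R})$, $H=O_n(\mathbb{R})$ et le choix d'une métrique riemannienne.
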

Soit $P=P(M,G)$ le fibré des repères d'une variété $M$ qui est de groupe structural $GL(n,\R)$. \textbf{Choisissons} maintenant \textbf{une} réduction à un sous-fibré de groupe structural $SO(n)$, c'est-à-dire sélectionnons une classe de repères, dits orthonormés, tels que $SO(n)$ agit de manière libre et transitive sur cette classe. Par définition, une variété riemannienne est une variété différentiable de dimension n pour laquelle on a choisi une réduction du fibré $FM$ des repères linéaires à un sous-fibré de groupe structural $SO(n)$. On a construit le fibré des repères orthonormés. Le tenseur métrique s'identifie alors avec la section globale du fibré en espaces homogènes $GL(n,\R)/SO(n)$ qui définit la réduction. La dimension de cet espace est $n^2-n(n-1)/2=n(n+1)/2$, et ses éléments peuvent s'identifier à des tenseurs de rang $2$ totalement symétriques.\\\\
La décomposition polaire d'une matrice inversible est un difféomorphisme $$O_n(\R)\times S_n^{++}(\R)\rightarrow GL_n(\R)$$ qui montre que le quotient $GL_n(\R)/O_n(\R)$ est difféomorphe à $S_n^{++}(\R)$, lui-même difféomorphe à l'espace vectoriel des matrices symétriques réelles, qui est un espace vectoriel donc un espace topologique contractile (homotope à un point). La contractilité étant conservée par difféomorphisme, $GL_n(\R)/O_n(\R)$ est contractile, donc le fibré réduit à des fibres contractiles et admet des sections globales par un théorème de la théorie des fibrés. Toute variété "raisonnable" admet donc une métrique riemannienne.\\\\
Le choix d'une structure riemannienne sur une variété différentiable revient à choisir une "forme" pour la variété : selon la métrique, on obtiendra pour une variété difféomorphe à $\mathbb{S}^2$, un objet de la forme d'une sphère, de la forme d'un ballon de rugby ... et chaque réduction possible correspond à une et une seule métrique riemannienne possible pour cette variété différentiable.

\begin{figure}[!h]
	\centering
	\includegraphics{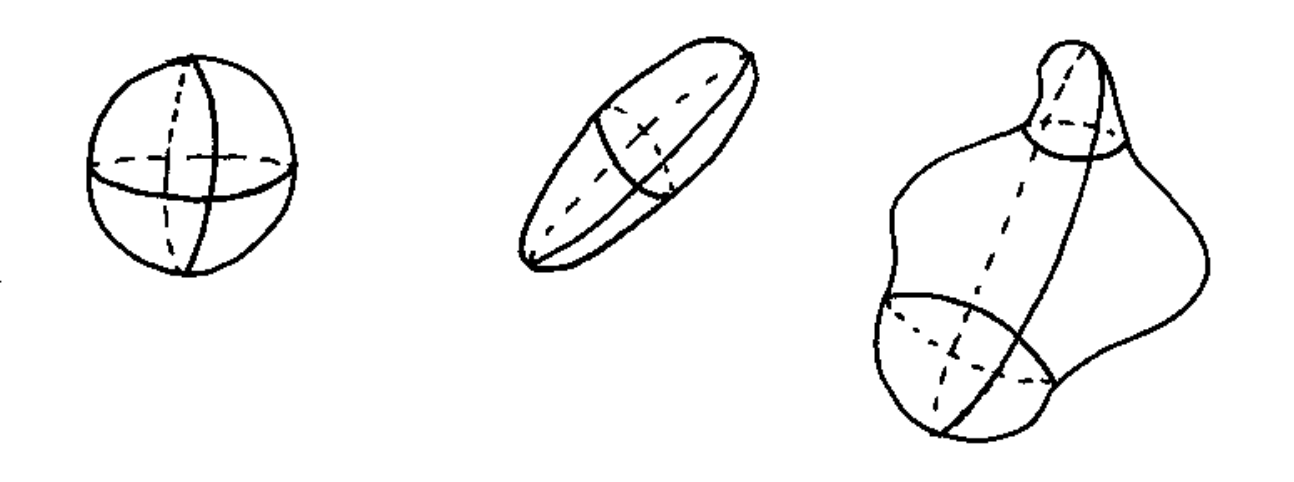}
	\caption{Choisir une réduction revient à choisir une "forme" (cela revient aussi à choisir une métrique)}
\end{figure}

\subsection{Espace fibrés associés}

\paragraph{Motivations} On a vu que on pouvait construire le fibré des repères d'une variété donnée. On peut également construire le fibré tangent, dont chaque fibre est difféomorphe à $\R^d$ où $d$ est la dimension de $M$ et munie d'une structure d'espace vectoriel. On peut, dans le même esprit, construire des fibrés tensoriels. Ces espaces fibrés sont associés au fibré des repères puisque l'expression de leurs éléments dépend du choix d'une base de l'espace tangent. Cette d'association à un fibré principal correspond à la notion de fibré associé.\\\\
Une autre considération qui motive la définition de fibré associé est la recherche d'une généralisation des espaces vectoriels dans lesquels vivent les composantes d'un tenseur : un tenseur est un objet \textbf{intrinsèque}, ce qui impose des conditions lors d'un changement de base. Dans les fibrés associés, on considère aussi des objets intrinsèques, dont les "coordonnées" dépendent de la manière dont on regarde cet objet, c'est-à-dire du choix de base.\\\\
Nous continuerons à utiliser le vocabulaire du fibré des repères, bien qu'en considérant des fibrés principaux a priori quelconques, ce vocabulaire ayant le mérite d'être particulièrement visuel.
\paragraph{Fibrés associés} Soit $P\rightarrow M$ un espace fibré principal, de groupe structural $G$, et soit $\rho$ une action (à gauche) de $G$ sur un ensemble $F$.\\\\
On obtient une relation d'équivalence sur $P\times F$ en disant que :
$(z,f)\in P\times F \Leftrightarrow (z',f')\in P\times F$ si et seulement si il existe un $g\in G$ tel que $z'=zg$ et $f'=\rho(g^{-1})f$. L'ensemble quotient $E=P\times_GF$ est alors appelé fibré associé à $P$ via l'action de $G$ sur $F$.

\begin{figure}[!h]
	\centering
	\includegraphics[scale=0.9]{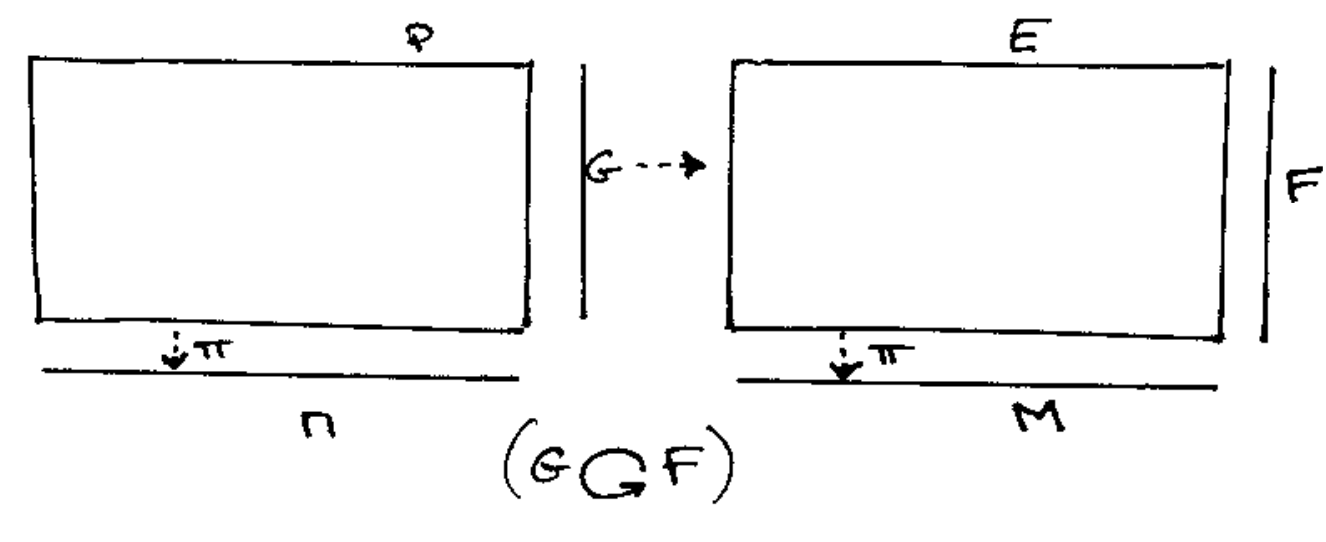}
	\caption{Passage du fibré principal à un fibré associé}
\end{figure}

On identifie donc les points $(z,f)$ et $(zg,\rho(g^{-1})f)$. \\\\f représente les composantes l'objet $(z,f)$ dans la base $z$. 
L'espace fibré associé à un fibré principal correspond donc aux objets 'vecteurs' dans le cas où $\rho$ est la représentation régulière de $GL(n,\R)$ comme l'ensemble des isomorphismes de $\R^{n}$, qu'on décompose sous la forme (repère, coordonnées dans ce repère). Comme évoqué dans cet exemple, le concept de fibré associé prend une signification forte si $g$ est une représentation linéaire de $G$.

\begin{figure}[!h]
	\centering
	\includegraphics{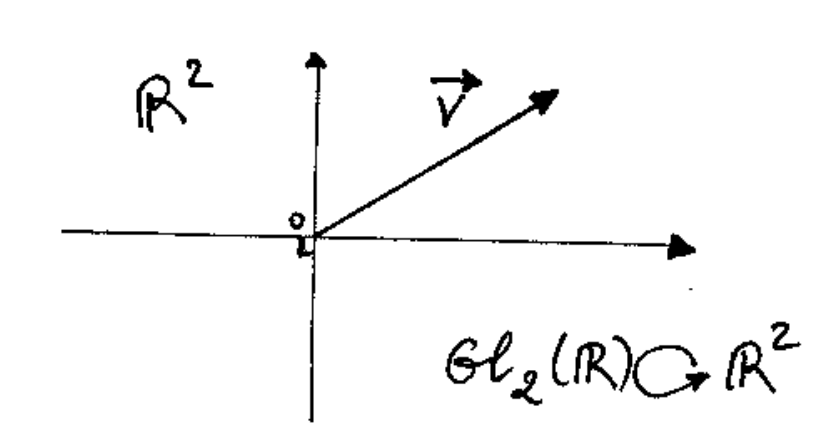}
	\caption{Un élément du fibré associé est "bien défini" pour la symétrie induite par la représentation du groupe structural du fibré principal}
\end{figure}

Les fibrés associés interviennent naturellement en relativité générale, mais aussi dans les théories des autres interactions entre particules !\\
Pour la force forte par exemple, on considère un fibré principal de groupe structural $SU(3)$ au dessus de l'espace-temps de Minkowski. On choisit la représentation standard de $SU(3)$ de dimension $3$ comme sous-groupe de $GL(3,\C)$, et on construit le fibré vectoriel associé à $P$ $P\times_{SU(3)}\C^{3}$. Le spineur qui décrit une particule soumise à l'interaction forte est une section de ce fibré associé.

\section{Connexions, courbure et équation de champ homogène}
\subsection{Connexions dans le fibré principal}
\paragraph{Définition intrinsèque sur le fibré principal}

Soit $P=P(M,G)$ un fibré principal de groupe structural $G$. Notons $d$ la dimension de $M$ et $n$ celle de $G$. Malgré le fait que l'unité (au sens du groupe) dans chaque fibre n'est pas désigné de manière canonique, et qu'il faut donc \textit{faire un choix} pour réaliser un difféomorphisme entre la fibre et le groupe structural, les directions dans le groupe sont définies par l'application exponentielle sans avoir besoin de marquer l'origine ! Par exemple, si $A\in\g$ et si $g\in G$, on pose $$A^{*}_{g}=\frac{d}{dt}_{|t=0}g\cdot exp(tA)$$ qui correspond à "la direction $A$ au point $g$". On peut faire la même chose sur le fibré principal. On parle de \textbf{champ de vecteurs fondamental}.
\begin{defi}
	Soit $A\in\g$. Le champ de vecteurs fondamental $A^{*}$ est défini sur $P$ par $$\forall p\in P, A^{*}_p=\frac{d}{dt}_{|t=0}p\cdot exp(tA)$$
\end{defi}
On peut donc transporter une direction tangente dans le groupe, grâce à la structure de groupe. Ces directions "de groupe" sont dites \textbf{directions verticales} par analogie avec l'exemple de fibré principal de groupe structural $U(1)$ où la variété est traditionnellement dessinée comme une courbe plus ou moins horizontale.\\\\
Plus rigoureusement, les champs de vecteurs fondamentaux engendrent une distribution $V$ de sous-espaces tangents en chaque point de $P$, de dimension $n$ (la dimension de $\g$). On parle de \textbf{distribution verticale} puisqu'elle ne contient que des vecteurs verticaux. Cette distribution est caractérisée par : $$V_p=\{V\in T_pP | \pi_{*}(V)=0\}$$ L'espace tangent en chaque point de $P$ étant de dimension $d+n$, on peut prendre une distribution H de sous-espaces tangents, dite \textbf{distribution horizontale}, de dimension $d$, telle que $T_pP=V_p\oplus H_p$. Ce choix n'est pas canonique. De la même manière qu'un choix de jauge permet de choisir une identité dans les fibres, l'objet connexion définit cette distribution horizontale de sous-espaces.

\begin{defi}
	Une connexion associe à chaque $p\in P$ un sous-espace $H_p\subset T_pP$ tel que $T_pP=V_p\oplus H_p$ où $V_p=\{V\in T_pP | \pi_{*}(V)=0\}$. On demande que la distribution obtenue soit lisse, et que $$(R_g)_*(H_p)=H_{pg}$$ afin que la définition de l'horizontalité sur le fibré principal soit invariante par action du groupe structural. On appelle $V_p$ le \textbf{sous-espace vertical} de $T_p P$ et $H_p$ le \textbf{sous-espace horizontal} de $T_p P$.
\end{defi}

Cette définition des connexions est visuelle, mais malheureusement peu pratique. Présentons d'abord les connexions d'une manière différente, plus opérationnelle, puis nous prouverons l'équivalence entre ces deux définitions.

\begin{defi}
	Une connexion est une $1$-forme $\omega$ sur $P$ (c'est-à-dire $\omega\in\O^1(P)\otimes\g$) à valeurs dans $\g$ (l'algèbre de Lie de $G$) telle que les deux propriétés suivantes sont vérifiées : 
	\begin{enumerate}
		\item Si $A^*$ est le champ fondamental associé à $A$, $$\omega(A^{*}_p)=A$$
		\item Si $g\in G$, $p\in P$, $X\in T_pP$ : $$\omega_{pg}(R_{g*}X)=\mathfrak{Ad}_{g^{-1}}\omega_p(X) \Leftrightarrow R_g^*\omega=\mathfrak{Ad}_{g^{-1}}\omega$$
	\end{enumerate}
\end{defi}
On appelle $\omega$ forme de connexion, et associe à un vecteur tangent sa partie verticale sous la forme d'un vecteur de l'algèbre de Lie $\g$. Dans le paradigme du fibré des repères, un vecteur tangent correspondant à un déplacement dans le fibré principal, la forme de connexion $\omega$ permet de définir la notion de "se déplacer en changeant de fibre sans tourner le repère".
\begin{theorem}
	Les deux définitions ci-dessus sont équivalentes.
\end{theorem}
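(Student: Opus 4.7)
Le plan est de construire explicitement une bijection entre les deux types d'objets (forme de connexion d'un c�t�, distribution horizontale de l'autre) et de v�rifier que les axiomes se correspondent. Le point-cl� pr�alable est que l'application $A\in\g\mapsto A^{*}_{p}\in V_p$ est un isomorphisme lin�aire : l'action �tant libre, $A^{*}_{p}=0$ entra\^\i ne $A=0$, et $\dim V_p=\dim\g$ par d�finition de $V_p=\ker(\pi_{*p})$. Cet isomorphisme canonique permet d'identifier la partie verticale d'un vecteur tangent � un �l�ment de $\g$.

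\textbf{De la forme vers la distribution.} Partant de $\omega$, je pose $H_p:=\ker\omega_p$. La condition $\omega(A^{*}_p)=A$ assure que $\omega_p|_{V_p}$ est un isomorphisme sur $\g$, donc $V_p\cap H_p=\{0\}$ et, par comptage de dimensions, $T_pP=V_p\oplus H_p$. La lissit� de la distribution d�coule de celle de $\omega$. Pour l'�quivariance, si $X\in H_p$, alors $\omega_{pg}((R_g)_{*}X)=\mathfrak{Ad}_{g^{-1}}(\omega_p(X))=0$, donc $(R_g)_{*}X\in H_{pg}$ ; l'�galit� dimensionnelle donne $(R_g)_{*}H_p=H_{pg}$.

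\textbf{De la distribution vers la forme.} R�ciproquement, j'utilise la d�composition unique $X=X^{v}+X^{h}$ selon $V_p\oplus H_p$, puis je d�finis $\omega_p(X)$ comme l'unique $A\in\g$ tel que $A^{*}_{p}=X^{v}$. La lissit� et la premi�re condition $\omega(A^{*}_p)=A$ sont imm�diates. \textbf{Le point d�licat} est la v�rification de l'�quivariance, qui repose sur le lemme
\[ (R_g)_{*}A^{*}_{p}=(\mathfrak{Ad}_{g^{-1}}A)^{*}_{pg}, \]
qu'on �tablit par le calcul direct
\[ (R_g)_{*}A^{*}_{p}=\frac{d}{dt}_{|t=0}(p\exp(tA))g=\frac{d}{dt}_{|t=0}pg\cdot\exp(t\,\mathfrak{Ad}_{g^{-1}}A)=(\mathfrak{Ad}_{g^{-1}}A)^{*}_{pg}, \]
en utilisant l'identit� $g^{-1}\exp(tA)g=\exp(t\,\mathfrak{Ad}_{g^{-1}}A)$ cons�quence imm�diate de la d�finition de $\mathfrak{Ad}$. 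Appliqu� � $X=A^{*}_{p}+X^{h}$, avec $(R_g)_{*}X^{h}\in H_{pg}$ par hypoth�se d'�quivariance de la distribution, cela donne
\[ \omega_{pg}((R_g)_{*}X)=\mathfrak{Ad}_{g^{-1}}A=\mathfrak{Ad}_{g^{-1}}\omega_p(X), \]
soit exactement $R_g^{*}\omega=\mathfrak{Ad}_{g^{-1}}\omega$.

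Il reste finalement � observer que les deux constructions sont inverses l'une de l'autre : partant d'une forme $\omega$, la distribution $\ker\omega$ redonne une forme dont la valeur sur $A^{*}_{p}$ est $A$ et qui est nulle sur $\ker\omega_p$, donc �gale � $\omega$ par lin�arit� ; l'autre sens est tout aussi imm�diat par construction. La principale difficult� aura �t� concentr�e dans le lemme de transport des champs fondamentaux sous l'action � droite, cheville ouvri�re liant l'�quivariance g�om�trique de la distribution � l'�quivariance alg�brique de la forme.
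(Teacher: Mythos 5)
Votre preuve suit exactement la même démarche que celle du mémoire : on pose $H_p=\ker\omega_p$ dans un sens, et dans l'autre on définit $\omega$ via la projection verticale identifiée à $\g$ grâce aux champs fondamentaux. Elle est même plus complète que celle du texte, qui omet la vérification de l'équivariance $R_g^*\omega=\mathfrak{Ad}_{g^{-1}}\omega$ dans le sens distribution $\rightarrow$ forme, point que vous établissez proprement via le lemme $(R_g)_{*}A^{*}_{p}=(\mathfrak{Ad}_{g^{-1}}A)^{*}_{pg}$.
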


\begin{proof}[Preuve]
	Soit une connexion définie par une $1$-forme de connexion $\omega$ (deuxième définition). On pose, pour tout $p\in P$, $H_p=Ker(\omega_p)$. $\omega_p$ étant une forme linéaire dont la restriction à $V_p$ est une bijection de $V_p$ dans $\g$, on a bien $H_p\oplus V_p=T_pP$. La distribution de sous-espaces obtenue est bien lisse, car $\omega$ l'est. Si pour $X\in T_pP$, on a $(R_g)_*X\in H_{pg}$, alors par bijectivité de $\mathfrak{Ad}_{g^{-1}}$ on a $X\in H_p$ et inversement, donc $R_{g*}(H_p)=H_{pg}$.\\
	Réciproquement, si $H_p$ est une distribution de sous-espaces tangents horizontaux au sens donné plus haut, définissant une connexion sur le fibré principal $P$, on sait que $X\in T_pP$ se décompose \textit{de manière unique} en parties horizontale et verticale $X=X^V+X^H$, avec $X^V=X^\alpha (A^{*}_{\alpha})_p$ où les $A^{*}_\alpha$ sont les champs de vecteurs fondamentaux que définit un base de l'algèbre de Lie $\g$ du groupe G. On définit $\omega$ par $\omega(v)=X^\alpha A_\alpha$. Il s'agit bien d'une $1$-forme à valeurs dans $\g$.
\end{proof}

\begin{rmq}
	On peut démontrer que tout fibré principal admet une connexion. La démonstration se trouve dans \cite{kobay}
\end{rmq}

\paragraph{Expression locale sur la base}
En physique, les fonctions rencontrées sont toutes des fonctions sur la base et non sur l'espace total : par exemple, en théorie quantique des champs, une fonction d'onde prend comme arguments les quatre variables d'espace-temps. On peut définir la connexion sur la base $M$ du fibré principal $P=P(M,G)$, mais \textit{a priori} seulement de manière locale, puisque seule la trivialité locale est imposée pour le fibré principal. De plus, il faut faire un \textit{choix de jauge}.
\begin{defi}
	Supposons une connexion définie sur l'espace total $P=P(M,G)$ par la $1$-forme de connexion $\omega$. Soit $x\in M$. Soit $T_U$ un choix de jauge, défini sur le voisinage $U$ de $x$ dans $M$, et $\sigma_U$ la section locale de P canoniquement associée à $T_U$. On définit le potentiel de jauge $A\in\O^1(U)\otimes\g$ par : $$\forall y\in U, \forall v\in T_yM, A(v)=\omega((\sigma_U)_*v)$$
	Soient $(X_\alpha)$ un base de $\g$ et $(\partial_\mu)$ une carte locale de $M$ en $x$, définie sur $U$ (quitte à le restreindre). On peut écrire : $$A=A^\alpha_\mu X_\alpha dx^\mu$$ qui est l'expression du potentiel de jauge $A$, localement, en $x$.
\end{defi}

\begin{rmq}
	Comme nous le verrons plus tard, le potentiel de jauge intervient de manière fondamentale en physique des particules : si $G=U(1)$, $A$ est le champ de photons, si $G=SU(2)\times U(1)$, $A$ est le champ électrofaible des bosons $W^+, W^-, Z$ et $\gamma$, enfin, si $G=SU(3)$, $A$ est alors le champ de gluons.
\end{rmq}

Cette définition est encore "très naturelle", puisqu'une fois qu'on a défini une transformation des vecteurs tangents à $M$ autour de $x$ en des vecteurs tangents à $P$, il suffit de prendre l'image du transformé par la $1$-forme de connexion sur $P$.
Notons enfin qu'à cause de ce choix de jauge, on perd un peu du caractère intrinsèque qu'avait la connexion sur l'espace total.

\subsection{Connexions dans les fibrés associés}
On considère toujours le fibré principal $P=P(M,G)$ de base $M$ (de dimension $d$) et de fibre type $G$, groupe de Lie de dimension $n$, avec la projection $\pi$ de $P$ sur $M$.\\\\ Soit $\rho$ une représentation du groupe structural $G$ sur un espace vectoriel $V$ de dimension $p$, et $P\times_\rho V$ l'espace fibré associé au fibré principal $P$ via la représentation $\rho$ de $G$ sur $V$.
\begin{rmq}
	On peut prolonger $\rho$ à $\g$ de la manière suivante : soit $X\in\g$ et $\gamma:[-1,1]\rightarrow G$ telle que $\frac{d}{dt}_{|t=0}(\gamma(t))=X$. On pose alors $\rho(X)=\frac{d}{dt}_{|t=0}(\rho(\gamma(t)))$.
\end{rmq}
Soit $(X_\alpha)$ une base de $\g$. $(\rho(X_\alpha))$ est un endomorphisme de $V$.
Dans toute la suite nous garderons la convention de désigner par :
\begin{enumerate}
	\item $\mu, \nu, \rho, ...$ les indices de base (variété $M$ de dimension $d$, espace-temps)
	\item $i, j, k, ...$ les indices de fibre (espace vectoriel $V$ de dimension $p$)
	\item $\alpha, \beta, \gamma, ...$ les indices d'algèbre de Lie ($\g$ de dimension $n$)
\end{enumerate}

\begin{defi}
	La représentation $\rho(A)$ du potentiel de jauge $A$ est appelée matrice de connexion et s'écrit localement : $$\rho(A)=A_\mu^\alpha \rho(X_\alpha) dx^\mu$$ C'est une matrice dont les coefficients sont des $1$-formes sur $M$.
\end{defi}
Dans la suite, sauf si le contraire est mentionné, on notera encore $A$ la matrice de connexion. En posant $T_\alpha=\rho(X_\alpha)$, on écrit : $$A^i_j=A^\alpha(T_\alpha)^i_j$$ où $A^\alpha=A^\alpha_\mu dx^\mu$. Alors $A^i_j=A^i_{j\mu}dx^\mu=A^\alpha_\mu(T_\alpha)^i_jdx^\mu$.\\\\
Les nombres $A^i_{j\mu}$ sont les \textit{coefficients de connexion relativement aux bases $(e_i)$ de $V$ et $(\partial_\mu)$ de $M$}.

\subsection{Courbure intrinsèque}
\paragraph{L'algèbre graduée $\O(P)\otimes\g$}

On a vu que $\omega\in\O^{1}(P)\otimes\g$. Il est en fait possible de considérer des $k$-formes à valeurs dans l'algèbre de Lie. L'ensemble des formes $\g$-valuées est naturellement muni d'une structure d'algèbre graduée grâce aux objets suivants.

\begin{defi}
	Soient $\phi\in\O^j(P)\otimes\g$, $\psi\in\O^i(P)\otimes\g$. On définit $[\phi,\psi]\in\O^{i+j}(P)\otimes\g$ par :
	$$[\phi,\psi](X_1, ..., X_{i+j})=\frac{1}{i!j!}\sum_{\sigma}(-1)^{\sigma}[\phi(X_{\sigma(1)}, ..., X_{\sigma(i)}), \psi(X_{\sigma(i+1)}, ..., X_{\sigma(i+j)})]$$
\end{defi}

Dérivons l'expression de ces objets en coordonnées. Soit $(E_\alpha)$ est une base de $\g$. Pour $\phi\in\O^{k}\otimes\g$, on écrit : 
$$\phi=\phi^\alpha \otimes E_\alpha$$
et le crochet de deux telles formes devient : 
\fbox{$[\phi,\psi]=c_{\alpha\beta}^{\gamma}(\phi^{\alpha}\wedge\psi^{\beta})\otimes E_{\gamma}$}.

Énonçons à présent deux théorèmes qui découlent de cette structure d'algèbre graduée de l'ensemble des formes sur $P$ $\g$-valuées $\bigoplus_k\O^{k}(P)\otimes\g$.
\begin{theorem}
	Soient $\phi\in\O^i(P)\otimes\g$, $\psi\in\O^j(P)\otimes\g$ et $\rho\in\O^k(P)\otimes\g$. On a : 
	\begin{enumerate}
		\item $[\psi,\phi]=(-1)^{ij}[\phi,\psi]$
		\item $(-1)^{ik}[[\phi, \psi], \rho]+(-1)^{kj}[[\rho,\phi],\psi]+(-1)^{ji}[[\psi,\rho],\psi]=0$
	\end{enumerate}
\end{theorem}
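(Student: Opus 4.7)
Le plan est de travailler en coordonn�es via l'expression $[\phi,\psi]=c_{\alpha\beta}^{\gamma}(\phi^{\alpha}\wedge\psi^{\beta})\otimes E_{\gamma}$ donn�e juste avant l'�nonc�, en s�parant syst�matiquement la contribution \emph{formes diff�rentielles} (o� intervient le produit ext�rieur) de la contribution \emph{alg�bre de Lie} (port�e par les constantes de structure $c_{\alpha\beta}^{\gamma}$). L'id�e directrice est que les deux propri�t�s du crochet $[\cdot,\cdot]$ sur $\bigoplus_{k}\Omega^{k}(P)\otimes\g$ se d�duisent m�caniquement de l'antisym�trie et de l'identit� de Jacobi pour les $c_{\alpha\beta}^{\gamma}$ (qui expriment ces propri�t�s pour le crochet de $\g$), combin�es � la commutativit� gradu�e et � l'associativit� du produit ext�rieur.

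Pour la premi�re identit�, j'�crirais $[\psi,\phi]=c_{\alpha\beta}^{\gamma}(\psi^{\alpha}\wedge\phi^{\beta})\otimes E_{\gamma}$ puis j'�changerais les indices muets $\alpha\leftrightarrow\beta$ pour obtenir $c_{\beta\alpha}^{\gamma}(\psi^{\beta}\wedge\phi^{\alpha})\otimes E_{\gamma}$. Les deux �galit�s $c_{\beta\alpha}^{\gamma}=-c_{\alpha\beta}^{\gamma}$ (antisym�trie du crochet sur $\g$) et $\psi^{\beta}\wedge\phi^{\alpha}=(-1)^{ij}\phi^{\alpha}\wedge\psi^{\beta}$ (commutativit� gradu�e de $\wedge$ pour une $j$-forme et une $i$-forme) donnent alors la relation de graduation voulue, � un signe global pr�s (la convention standard des alg�bres de Lie gradu�es conduit rigoureusement � $[\psi,\phi]=(-1)^{ij+1}[\phi,\psi]$, ce qui sugg�re que le signe indiqu� dans l'�nonc� doit �tre lu modulo cette convention).

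Pour l'identit� de Jacobi gradu�e, je d�velopperais chaque double crochet dans la base $(E_{\gamma})$ : typiquement, $[[\phi,\psi],\rho]=c_{\mu\delta}^{\gamma}c_{\alpha\beta}^{\mu}(\phi^{\alpha}\wedge\psi^{\beta})\wedge\rho^{\delta}\otimes E_{\gamma}$, et similairement pour les deux autres. L'objectif est ensuite, dans chacun des trois termes, de ramener le produit ext�rieur des trois formes dans l'ordre canonique $\phi^{\alpha}\wedge\psi^{\beta}\wedge\rho^{\delta}$ gr�ce � la commutativit� gradu�e~; les signes engendr�s par ces remises en ordre (combin�s aux pr�facteurs $(-1)^{ik}$, $(-1)^{kj}$, $(-1)^{ji}$) se compensent de fa�on � factoriser une m�me forme $\phi^{\alpha}\wedge\psi^{\beta}\wedge\rho^{\delta}\otimes E_{\gamma}$. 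Le coefficient restant est alors de la forme $c_{\mu\delta}^{\gamma}c_{\alpha\beta}^{\mu}+c_{\mu\beta}^{\gamma}c_{\delta\alpha}^{\mu}+c_{\mu\alpha}^{\gamma}c_{\beta\delta}^{\mu}$, qui s'annule par l'identit� de Jacobi sur $\g$ rappel�e plus haut.

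Le c\oe ur technique, et principal obstacle, r�side dans la comptabilit� minutieuse des signes de la seconde identit�~: il faut v�rifier que les transpositions n�cessaires pour aligner $\phi^{\alpha}$, $\psi^{\beta}$ et $\rho^{\delta}$ dans le m�me ordre fabriquent exactement les facteurs $(-1)^{ik}$, $(-1)^{kj}$, $(-1)^{ji}$ attendus, et non d'autres combinaisons. Un moyen s�r de s'en convaincre est de traiter d'abord le cas particulier $\phi,\psi,\rho$ d�composables (par exemple $\phi=\alpha\otimes E_{a}$, etc.) puis d'�tendre par bilin�arit� et multilin�arit�, ce qui r�duit la v�rification � une gymnastique �l�mentaire sur les permutations.
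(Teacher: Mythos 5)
Votre démarche est exactement celle du texte : partir de l'expression en coordonnées $[\phi,\psi]=c_{\alpha\beta}^{\gamma}(\phi^{\alpha}\wedge\psi^{\beta})\otimes E_{\gamma}$, utiliser la commutativité graduée du produit extérieur pour le point 1, et l'identité de Jacobi des constantes de structure pour le point 2 (que le texte expédie d'ailleurs en une ligne, là où vous détaillez la comptabilité des signes). Votre remarque sur le signe du point 1 est fondée : l'échange des formes \emph{et} le renommage des indices muets font intervenir $c_{\beta\alpha}^{\gamma}=-c_{\alpha\beta}^{\gamma}$, ce que le calcul du texte omet, et l'on obtient bien $[\psi,\phi]=(-1)^{ij+1}[\phi,\psi]$ ; c'est cohérent avec le fait que $\frac{1}{2}[\omega,\omega]$ est non nul dans l'équation de structure pour la $1$-forme de connexion, ce qui serait impossible avec le signe tel qu'énoncé. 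Votre preuve est donc correcte et plus soignée que celle du texte sur ce point.
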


\begin{proof}[Preuve]
	Pour le 1., on a $[\phi,\psi]=c_{\alpha\beta}^{\gamma}(\phi^{\alpha}\wedge\psi^{\beta})\otimes E_{\gamma}=(-1)^{ij}c_{\alpha\beta}^{\gamma}(\psi^{\alpha}\wedge\phi^{\beta})\otimes E_{\gamma}=(-1)^{ij}[\psi,\phi]$.\\
	Pour le 2. on obtient l'égalité grâce à l'identité de Jacobi pour $\g$.
\end{proof}

\begin{theorem}
	Soient $\phi\in\O^{i}(P)\otimes\g$ et $\psi\in\O^j(P)\otimes\g$. On a : $d[\phi,\psi]=[d\phi,\psi]+(-1)^i[\phi,d\psi]$.
\end{theorem}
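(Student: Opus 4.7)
Mon plan consiste � ramener le r�sultat � la r�gle de Leibniz gradu�e d�j� connue pour la d�riv�e ext�rieure des formes diff�rentielles ordinaires, en utilisant l'expression en coordonn�es du crochet qui vient d'�tre �tablie. Je fixerais d'abord une base $(E_\alpha)$ de $\g$ et d�composerais $\phi=\phi^\alpha\otimes E_\alpha$ et $\psi=\psi^\beta\otimes E_\beta$, avec $\phi^\alpha\in\O^i(P)$ et $\psi^\beta\in\O^j(P)$ des formes diff�rentielles ordinaires. La d�riv�e ext�rieure s'�tend naturellement aux formes $\g$-valu�es en posant $d(\omega\otimes X)=d\omega\otimes X$ pour $\omega\in\O^k(P)$ et $X\in\g$, puisque les vecteurs de base de l'alg�bre de Lie sont constants.

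Je partirais ensuite de l'identit� $[\phi,\psi]=c_{\alpha\beta}^{\gamma}\,(\phi^{\alpha}\wedge\psi^{\beta})\otimes E_\gamma$ �tablie pr�c�demment. Comme les constantes de structure $c_{\alpha\beta}^{\gamma}$ sont des scalaires r�els et les $E_\gamma$ ind�pendantes du point de $P$, on peut les sortir de l'op�rateur $d$ pour obtenir
$$d[\phi,\psi]=c_{\alpha\beta}^{\gamma}\, d(\phi^{\alpha}\wedge\psi^{\beta})\otimes E_\gamma.$$
L'�tape centrale consiste alors � invoquer la r�gle de Leibniz gradu�e d�j� �tablie dans la sous-section sur la d�riv�e ext�rieure, � savoir $d(\phi^\alpha\wedge\psi^\beta)=d\phi^\alpha\wedge\psi^\beta+(-1)^i\phi^\alpha\wedge d\psi^\beta$, puis � distribuer la somme et � reconna�tre dans chacun des deux morceaux l'expression en coordonn�es du crochet $\g$-valu�, cette fois appliqu�e � $(d\phi,\psi)$ et � $(\phi,d\psi)$ respectivement. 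La conclusion $d[\phi,\psi]=[d\phi,\psi]+(-1)^i[\phi,d\psi]$ en d�coule imm�diatement.

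Il n'y a pas de v�ritable difficult� dans cette preuve, la strat�gie �tant essentiellement \emph{d�composer, appliquer Leibniz scalaire, recoller}. Le seul point � surveiller est le signe $(-1)^i$ : il provient uniquement du passage de $d$ � travers une $i$-forme dans la r�gle de Leibniz pour le produit ext�rieur, et non d'un changement de parit� li� aux indices d'alg�bre de Lie, puisque le produit tensoriel $\otimes E_\gamma$ est de degr� nul pour la graduation consid�r�e. Une variante plus intrins�que consisterait � v�rifier l'identit� directement sur des formes d�composables $\omega\otimes X$ et $\eta\otimes Y$ avec $X,Y\in\g$, en utilisant que $[\omega\otimes X,\eta\otimes Y]=(\omega\wedge\eta)\otimes[X,Y]$, puis � �tendre par bilin�arit� ; le calcul est strictement identique et le signe appara�t au m�me endroit.
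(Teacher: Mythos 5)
Votre preuve est correcte et suit essentiellement la m�me d�marche que celle du texte : d�composer dans une base de $\g$ via $[\phi,\psi]=c_{\alpha\beta}^{\gamma}(\phi^{\alpha}\wedge\psi^{\beta})\otimes E_{\gamma}$, puis appliquer la r�gle de Leibniz gradu�e $d(\phi^\alpha\wedge\psi^\beta)=d\phi^\alpha\wedge\psi^\beta+(-1)^i\phi^\alpha\wedge d\psi^\beta$. Votre r�daction est m�me plus explicite sur la provenance du signe $(-1)^i$, ce qui est bienvenu.
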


\begin{proof}[Preuve]
	C'est une conséquence directe de $d(\phi^\alpha\wedge\psi^\beta)=d\phi^\alpha\wedge\psi^\beta+(-1)^j\phi^\alpha\wedge d\psi^\beta$
\end{proof}

\paragraph{Différentielle extérieure covariante}

\begin{defi}
	Soit $\omega$ une $1$-forme de connexion sur le fibré principal $P=P(M,G)$ et $\alpha\in\O^{k}(P)\otimes\g$. Définissons l'horizontalisée $\omega^H$ de cette $k$-forme $\g$-valuée par son action sur $k$ champs de vecteurs $X_1, ..., X_k$, respectivement de partie horizontale $X_1^H, ..., X_k^H$: 
	$$\omega^{H}(X_1, ..., X_k)=\omega(X_1^H, ..., X_k^H)$$
\end{defi}

Autrement dit, on ne s'intéresse qu'au déplacement horizontal, et pas au déplacement dans le fibre ! 

\begin{defi}
	Soit $\alpha\in\O^{k}(P)\otimes\g$. La différentielle extérieure covariante de la $k$-forme $\alpha$ est la $(k+1)$-forme $d^H\alpha$ définie par : 
	$$d^H\alpha(X_1, ..., X_k)=d\alpha(X_1^H, ..., X_k^H)$$
\end{defi}

\begin{rmq}
	La différentielle extérieure covariante dépend de la connexion $\omega$ : $d^H=d^H_\omega$. Cependant, gardant ceci en tête, nous omettrons l'indice de connexion pour ne pas alourdir les notations, jusqu'à l'étude des lagrangiens à la fin du chapitre. 
\end{rmq}

\begin{defi}
	La courbure $\O^\omega$ est la différentielle covariante de la $1$-forme de connexion. $$\O^\omega=d^H\omega$$
\end{defi}

Cette expression de la courbure n'est pas facile à manier, heureusement, elle s'exprime sous une autre forme beaucoup plus "pratique", donnée par l'équation de structure. 

\paragraph{Équation de structure}

L'équation de structure donne l'égalité : $$\O^\omega=d\omega+\frac{1}{2}[\omega,\omega]$$
Prouvons trois lemmes pour aboutir au résultat.

\begin{lemma}
	Soit $X$ un champ de vecteurs sur $M$. Il existe un unique champ de vecteur horizontal $\tilde{X}$ sur $P$ tel que $\pi^{*}\tilde{X}=X$. C'est le relèvement horizontal de $X$  (qui est nécessairement $G$-invariant).
\end{lemma}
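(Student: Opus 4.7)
L'id�e est d'exploiter le fait que la projection $\pi_{*p} : T_pP \to T_{\pi(p)}M$ a pour noyau exactement $V_p$, et que $H_p$ est un suppl�mentaire de $V_p$ dans $T_pP$. La restriction $\pi_{*p}|_{H_p}$ est donc une application lin�aire injective entre espaces de m�me dimension $d$, et c'est donc un isomorphisme. On \emph{d�finirait} alors $\tilde{X}_p$ comme l'unique ant�c�dent horizontal de $X_{\pi(p)}$ :
$$\tilde{X}_p = \bigl(\pi_{*p}|_{H_p}\bigr)^{-1}(X_{\pi(p)}).$$
La formule ci-dessus impose l'unicit� : si $Y$ est un autre champ horizontal avec $\pi_*Y=X$, alors $Y_p\in H_p$ et $\pi_{*p}Y_p = X_{\pi(p)}$, donc l'injectivit� de $\pi_{*p}|_{H_p}$ donne $Y_p=\tilde{X}_p$ en chaque point.

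Il reste deux v�rifications. D'abord le caract�re lisse de $\tilde{X}$ : c'est l'�tape o� l'on risque le plus de s'embourber. On se place localement au-dessus d'un ouvert trivialisant $U\subset M$ via une section lisse $\sigma_U:U\to P$ et l'on �crit tout point de $\pi^{-1}(U)$ sous la forme $\sigma_U(x)\cdot g$. Dans ce syst�me de coordonn�es, un champ tangent s'�crit comme somme d'un rel�vement de $X$ par $\sigma_U$ et d'un champ de vecteurs fondamental dont les coefficients sont d�termin�s (gr�ce � la forme de connexion $\omega$) par la condition d'annulation de la composante verticale. Plus pr�cis�ment, on part du champ non-horizontal $(\sigma_U)_*X$ d�fini sur l'image de $\sigma_U$, on l'�tend par $G$-invariance en $\hat{X}$, puis on retranche sa partie verticale $[\omega(\hat{X})]^*$ (le champ fondamental associ�). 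Les coefficients faisant intervenir $\omega$, $\sigma_U$ et $X$ de mani�re lisse, $\tilde X$ est lisse sur $\pi^{-1}(U)$, donc sur tout $P$.

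Enfin, la $G$-invariance d�coule imm�diatement de l'unicit� : pour $g\in G$, le vecteur $(R_g)_*\tilde X_p$ est horizontal puisque la distribution horizontale v�rifie $(R_g)_*(H_p)=H_{pg}$ par d�finition de la connexion, et il se projette par $\pi_*$ sur $X_{\pi(p)}=X_{\pi(pg)}$ en vertu de $\pi\circ R_g=\pi$ ; par l'unicit� de l'ant�c�dent horizontal en $pg$, on obtient $(R_g)_*\tilde X_p=\tilde X_{pg}$, ce qui ach�ve la preuve.
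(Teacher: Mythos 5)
Votre preuve est correcte et suit essentiellement la m\^eme d\'emarche que celle du texte : d\'efinition ponctuelle de $\tilde{X}_p$ via l'isomorphisme $\pi_{*p}|_{H_p}:H_p\rightarrow T_{\pi(p)}M$, puis $G$-invariance par unicit\'e de l'ant\'ec\'edent horizontal et $\pi\circ R_g=\pi$. Votre v\'erification du caract\`ere lisse (soustraction de la partie verticale $[\omega(\hat{X})]^*$ dans une trivialisation locale) est simplement plus d\'etaill\'ee que celle du texte, qui se contente d'invoquer la r\'egularit\'e de la distribution $H$.
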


\begin{proof}[Preuve]
	La connexion $\omega$ définit une distribution $H$ de sous-espaces tangents tels que pour tout $p\in P$, on a un isomorphisme entre $H_p$ et $T_{\pi(p)}M$. Cette distribution est lisse, donc le champ de vecteur sur $P$ défini par isomorphisme d'espaces vectoriels de manière ponctuelle, est bien lisse lui aussi.\\
	On a : $\pi_*(R_{g*}\tilde{X}_p)=(\pi\circ R_g)_*(\tilde{X}_p)=\pi_*(\tilde{X}_p)=X_{\pi(p)}$. Donc $R_{g*}\tilde{X}_p=\tilde{X}_{pg}$.
\end{proof}

Le lemme suivant montre la stabilité du crochet de Lie par passage au champ de vecteur fondamental invariant à gauche sur $P$ associé à un vecteur de $\g$.

\begin{lemma}
	Soient $A$ et $B$ dans $\g$. Alors $[A,B]^*=[A^*,B^*]$ en tant que champs de vecteurs sur $P$.
\end{lemma}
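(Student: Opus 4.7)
L'id�e centrale est d'identifier les champs de vecteurs fondamentaux $A^*$ comme des pouss�s en avant, via les \textit{applications d'orbite}, des champs invariants � gauche $\overline{A}$ sur $G$, puis d'appliquer le th�or�me de stabilit� du crochet de Lie par diff�omorphisme.

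Fixons $p\in P$ et consid�rons l'application lisse $\phi_p: G \to P$ d�finie par $\phi_p(g)=p\cdot g$. L'action �tant libre et propre, $\phi_p$ est un plongement dont l'image est l'orbite (la fibre) passant par $p$. L'observation cl� est la formule d'�quivariance $\phi_{pg}=\phi_p\circ L_g$ pour tout $g\in G$, qui d�coule directement de l'associativit� de l'action. En prenant l'image par $(\cdot)_{*e}$ appliqu�e � $A\in\g$, on obtient
$$A^*_{pg}=\frac{d}{dt}\Big|_{t=0}\phi_{pg}(\exp(tA))=(\phi_{pg})_{*e}(A)=(\phi_p)_{*g}\bigl((L_g)_{*e}A\bigr)=(\phi_p)_{*g}(\overline{A}_g).$$
Autrement dit, la restriction de $A^*$ � l'orbite de $p$ co�ncide avec $(\phi_p)_*\overline{A}$, et de m�me pour $B^*$ et $[A,B]^*$.

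Il suffit alors d'invoquer le th�or�me de stabilit� du crochet de Lie par pouss� en avant �tabli plus haut (th�or�me~3.9), qui donne $(\phi_p)_*[\overline{A},\overline{B}]=[(\phi_p)_*\overline{A},(\phi_p)_*\overline{B}]$. Par ailleurs, la d�finition m�me du crochet sur $\g$ comme restriction en $e$ du crochet des champs invariants � gauche donne $[\overline{A},\overline{B}]=\overline{[A,B]}$. En combinant ces deux �galit�s,
$$[A^*,B^*]\big|_{\text{orbite de }p}=(\phi_p)_*[\overline{A},\overline{B}]=(\phi_p)_*\overline{[A,B]}=[A,B]^*\big|_{\text{orbite de }p}.$$
Comme ceci est vrai pour tout $p\in P$ et que les orbites recouvrent $P$, on a bien $[A^*,B^*]=[A,B]^*$ globalement sur $P$.

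Le seul point d�licat est que $\phi_p$ n'est pas un diff�omorphisme de $G$ sur $P$ tout entier, mais seulement sur l'orbite. Cependant, les champs $A^*$, $B^*$ et $[A,B]^*$ �tant verticaux, donc tangents � la distribution (int�grable) des espaces verticaux, le calcul du crochet $[A^*,B^*]$ le long d'une fibre se confond avec le crochet des restrictions aux fibres, ce qui l�ve cette difficult�.
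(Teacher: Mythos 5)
Votre preuve est correcte, mais elle suit un chemin r\'eellement diff\'erent de celui du texte. Le texte calcule $[A^*,B^*]_p$ directement comme d\'eriv\'ee de Lie le long du flot $\phi_t(p)=p\exp(tA)$ de $A^*$ : en \'ecrivant $(\phi_{-t})_*(B^*_{\phi_t(p)})$ sous la forme $\frac{d}{ds}\big|_{s=0}\,p\exp\bigl(s\,\mathfrak{Ad}_{\exp(tA)}B\bigr)$ puis en d\'erivant en $t=0$, il se ram\`ene \`a $\mathfrak{ad}(A)(B)=[A,B]$, exactement comme dans la preuve ant\'erieure de cette identit\'e sur $G$. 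Vous, au contraire, transportez tout le probl\`eme sur le groupe : l'identit\'e $\phi_{pg}=\phi_p\circ L_g$ montre que $A^*$ restreint \`a une fibre est le pouss\'e en avant du champ invariant \`a gauche $\overline{A}$ par le plongement d'orbite $\phi_p$, et la fonctorialit\'e du crochet par diff\'eomorphisme (le th\'eor\`eme de stabilit\'e par pouss\'e en avant \'etabli au premier chapitre), combin\'ee \`a $[\overline{A},\overline{B}]=\overline{[A,B]}$, permet de conclure. Le point que vous signalez \`a la fin est bien celui qu'il faut justifier : $\phi_p$ n'est un diff\'eomorphisme que de $G$ sur la fibre, et il faut invoquer le fait que pour deux champs de vecteurs tangents \`a une sous-vari\'et\'e plong\'ee, le crochet calcul\'e dans $P$ co\"{\i}ncide le long de cette sous-vari\'et\'e avec le crochet de leurs restrictions (champs reli\'es par l'inclusion) ; ce fait est exact et standard, donc votre argument est complet. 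En r\'esum\'e, votre approche r\'eutilise un r\'esultat d\'ej\`a d\'emontr\'e et \'evite tout calcul explicite de flot, tandis que celle du texte est plus autonome et rend apparent le r\^ole de la repr\'esentation adjointe.
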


\begin{proof}[Preuve]
	Soit $\phi_t:P\rightarrow P$ l'application donnée, pour $t$ un réel défini sur un voisinage de $0$, donné par : $\phi_t(p)=pexp(tA)$. $\phi$ est le flot du champ de vecteur $A^*$. Toutes les dérivées étant évaluées en $0$, on a :
	$$ [A^*,B^*]_p=\frac{d}{dt}\phi_{t*}^{-1}(B_{\phi_t(p)}^*)=\frac{d}{dt}\frac{d}{ds}\phi_t(p)exp(sB)exp(tA)^{-1}$$ donc :
	$$[A^*,B^*]_p=\frac{d}{dt}\frac{d}{ds}pexp(tA)exp(sB)exp(tA)^{-1}=\frac{d}{dt}\frac{d}{ds}pexp(s\mathfrak{Ad}_{exp(tA)}B)$$
	ce qui donne :
	$$[A^*,B^*]_p=\frac{d}{ds}pexp(s\frac{d}{dt}[\mathfrak{Ad}_{exp(tA)}B])=\frac{d}{ds}pexp(s[A,B])=[A,B]_p^*$$
\end{proof}

\begin{lemma}
	Soient $A\in\g$ et $X$ un champ de vecteurs sur $M$. 
	Alors $[A^*,\tilde{X}]=0$ où $\tilde{X}$ est le relèvement horizontal de $X$.
\end{lemma}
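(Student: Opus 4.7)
Le plan est d'utiliser la caract�risation du crochet de Lie en termes de flot d�j� �tablie dans le chapitre pr�c�dent, � savoir
$$[A^*,\tilde{X}]_p = \frac{d}{dt}\Big|_{t=0}(\phi_{-t})_*\bigl(\tilde{X}_{\phi_t(p)}\bigr),$$
o� $\phi_t$ d�signe le flot de $A^*$. Je commencerais par identifier ce flot : puisque la d�finition m�me du champ fondamental donne $A^*_p = \frac{d}{dt}|_{t=0} p\cdot\exp(tA)$, son flot est simplement la multiplication � droite $\phi_t = R_{\exp(tA)}$, et donc $\phi_{-t} = R_{\exp(-tA)} = R_{\exp(tA)^{-1}}$.

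Ensuite, l'ingr�dient clef sera la $G$-invariance du rel�vement horizontal d�j� �tablie dans le premier lemme, c'est-�-dire l'identit�
$$(R_g)_*\tilde{X}_p = \tilde{X}_{pg}\qquad\forall g\in G,\ \forall p\in P.$$
Appliqu�e � $g = \exp(tA)^{-1}$ et au point $p\cdot\exp(tA)$, elle fournit
$$(R_{\exp(-tA)})_*\bigl(\tilde{X}_{p\cdot\exp(tA)}\bigr) = \tilde{X}_{p\cdot\exp(tA)\cdot\exp(-tA)} = \tilde{X}_p.$$
L'expression dont on prend la d�riv�e est donc constante en $t$, et sa d�riv�e en $0$ est nulle ; on conclut $[A^*,\tilde{X}]_p = 0$ pour tout $p\in P$.

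Il n'y a pas de v�ritable obstacle : la preuve est essentiellement une manipulation formelle o� toute la substance g�om�trique a �t� pouss�e dans les deux pr�-requis (formule du crochet via le flot, et $G$-invariance du rel�vement horizontal). Le seul point � v�rifier avec soin est le sens des signes et des conventions (action � droite, flot n�gatif dans la formule du crochet), mais une fois ceci align�, le t�lescopage $\exp(tA)\exp(-tA) = e$ donne imm�diatement le r�sultat. On peut noter intuitivement que cela exprime le fait que les directions verticales engendr�es par le groupe et les directions horizontales d�finies par la connexion sont, par construction, compatibles : transporter horizontalement puis d�placer dans la fibre commute avec l'op�ration inverse.
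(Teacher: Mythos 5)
Votre preuve est correcte et suit essentiellement la m�me d�marche que celle du m�moire : on identifie le flot de $A^*$ � la multiplication � droite par $\exp(tA)$, puis la $G$-invariance du rel�vement horizontal rend constante l'expression $(\phi_{-t})_*(\tilde{X}_{\phi_t(p)})$ dont le crochet est la d�riv�e en $0$. Vous explicitez simplement davantage le t�lescopage $p\cdot\exp(tA)\cdot\exp(-tA)=p$ que le texte laisse implicite.
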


\begin{proof}[Preuve]
	$\tilde{X}$ est $G$-invariant donc :
	$$[A^*,\tilde{X}]_p=\frac{d}{dt}\phi_{t*}^{-1}(\tilde{X}_{\phi_t(p)})=\frac{d}{dt}\tilde{X}_p=0$$
\end{proof}
 On peut maintenant décliner l'équation de structure. On va faire la preuve dans trois cas particulier, puis nous pourrons conclure par linéarité des différents objets impliqués dans l'équation.

\begin{enumerate}
	\item Soient $X$ et $Y$ deux champs de vecteurs horizontaux sur $P$. On a :
	$$d\omega(X^H,Y^H)=d\omega(X,Y)=d\omega(X,Y)+[\omega(X),\omega(Y)]$$ car $\omega(X)=\omega(Y)=0$.
	\item Soient $X$ et $Y$ deux champs de vecteurs verticaux sur $P$, $G$-invariants. On peut alors écrire $X=A^*$ et $Y=B^*$ où $A$ et $B$ sont deux vecteurs de $\g$. Alors d'après la formule de la remarque 5, $d\omega(A^*,B^*)=A^*[\omega(B^*)]-B^*[\omega(A^*)]-\omega([A^*,B^*])$ et comme $\omega(B^*)=B$ et $\omega(A^*)=A$ (ce sont des constantes) on a $d\omega(A^*,B^*)=-\omega([A^*,B^*])=-\omega([A,B]^*)=-[A,B]=-[\omega(A^*),\omega(B^*)]$. On a aussi prouvé : 
	$$d\omega(X^H,Y^H)=d\omega(X,Y)=d\omega(X,Y)+[\omega(X),\omega(Y)]$$
	\item On cherche a montrer la formule dans le dernier cas nécessaire, c'est-à-dire où l'un des champs de vecteurs est vertical (on peut le supposer champ fondamental d'un certain vecteur de l'algèbre de Lie) et ou l'autre est horizontal (on peut le supposer relèvement horizontal d'un champ de vecteurs sur la base). Supposons $X=A^*$ vertical ($A\in\g$) et $Y=\tilde{B}$ horizontal. \\
	$d\omega(A^*,\tilde{B})=A^*[\omega(\tilde{B})]-\tilde{B}[\omega(A^*)]-\omega([A^*,\tilde{X}])=0$ car $\omega(\tilde{B})=0$, $\omega(A^*)=A$, et $[A^*,\tilde{X}]=0$ d'après les lemmes précédents. On a donc la même égalité qu'avant puisque les deux membres de l'égalité s'annulent.
\end{enumerate}
Comme dit précédemment, le résultat s'étend donc à tous les champs de vecteurs $X$ et $Y$ sur $P$. On a montré : 
\begin{center}
	\fbox{La $2$-forme de courbure est donnée par $\O^\omega=d\omega+\frac{1}{2}[\omega,\omega]$}
\end{center}

Voici un théorème dont la preuve est facilitée par cette équation de structure.

\begin{theorem}
	Soit $g\in G$. On a $R_g^*\O^\omega=\mathfrak{Ad}_{g^{-1}}\O^\omega$.
\end{theorem}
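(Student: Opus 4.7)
The plan is to use the structure equation $\Omega^\omega = d\omega + \tfrac{1}{2}[\omega,\omega]$ established just above, apply $R_g^\ast$ to both sides, and move $\mathfrak{Ad}_{g^{-1}}$ out of each term separately. This reduces the claim to three compatibility properties that must be checked along the way.

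First, I would apply $R_g^\ast$ to the structure equation and use naturality of the exterior derivative with respect to pullbacks, which gives
\[
R_g^\ast \Omega^\omega = R_g^\ast(d\omega) + \tfrac{1}{2} R_g^\ast[\omega,\omega] = d(R_g^\ast\omega) + \tfrac{1}{2} R_g^\ast[\omega,\omega].
\]
Then the defining equivariance property of the connection form, $R_g^\ast \omega = \mathfrak{Ad}_{g^{-1}}\omega$, handles the first term immediately up to commuting $d$ with $\mathfrak{Ad}_{g^{-1}}$.

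Next, I would verify the two algebraic lemmas needed to conclude. (i) Since $\mathfrak{Ad}_{g^{-1}}:\g\to\g$ is a \emph{constant} linear endomorphism (it does not depend on the point of $P$), writing a $\g$-valued form in a basis $(E_\alpha)$ of $\g$ as $\alpha = \alpha^\alpha \otimes E_\alpha$ shows that $d$ commutes with $\mathfrak{Ad}_{g^{-1}}$ when applied componentwise. (ii) Since $\mathfrak{Ad}_{g^{-1}} = (Ad_{g^{-1}})_{\ast e}$ is the differential of a Lie group automorphism of $G$, the theorem on morphisms of Lie groups recalled earlier ensures that $\mathfrak{Ad}_{g^{-1}}$ is a Lie algebra morphism, and therefore preserves the bracket $[\,\cdot\,,\,\cdot\,]$ extended to $\g$-valued forms via the coordinate formula $[\phi,\psi] = c^\gamma_{\alpha\beta}(\phi^\alpha\wedge\psi^\beta)\otimes E_\gamma$. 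Combining (i) and (ii) with the equivariance of $\omega$ yields
\[
R_g^\ast \Omega^\omega = \mathfrak{Ad}_{g^{-1}} d\omega + \tfrac{1}{2}[\mathfrak{Ad}_{g^{-1}}\omega,\mathfrak{Ad}_{g^{-1}}\omega] = \mathfrak{Ad}_{g^{-1}}\bigl(d\omega + \tfrac{1}{2}[\omega,\omega]\bigr) = \mathfrak{Ad}_{g^{-1}}\Omega^\omega.
\]

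The computation itself is essentially routine once the structure equation is at our disposal; the only real subtlety is step (ii), namely being careful that the bracket of $\g$-valued forms is defined pointwise from the bracket on $\g$, so any Lie algebra automorphism applied componentwise commutes with it. An alternative route would be to argue directly from $\Omega^\omega = d^H\omega$ using the fact that $R_{g\ast}$ preserves the horizontal distribution, so $(R_{g\ast}X)^H = R_{g\ast}(X^H)$, and then invoke the equivariance of $\omega$; this avoids the bracket computation entirely but obscures the structural origin of the identity. I would prefer the structure-equation proof for its brevity and because it parallels the identity we will need in the next section (Bianchi).
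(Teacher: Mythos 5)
Votre preuve est correcte et suit essentiellement la même démarche que celle du texte : appliquer $R_g^*$ à l'équation de structure, utiliser l'équivariance $R_g^*\omega=\mathfrak{Ad}_{g^{-1}}\omega$ et la compatibilité du tiré en arrière avec $d$ et avec le crochet. Vous explicitez en plus les deux points que le texte laisse implicites (commutation de $d$ avec l'endomorphisme constant $\mathfrak{Ad}_{g^{-1}}$, et le fait que $\mathfrak{Ad}_{g^{-1}}$ est un morphisme d'algèbres de Lie donc préserve le crochet des formes $\g$-valuées), ce qui est un soin bienvenu mais ne change pas l'argument.
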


\begin{proof}[Preuve]
	$R_g^*\omega=\mathfrak{Ad}_{g^{-1}}\omega$ et $R_g^*[\phi,\psi]=[R_g^*\phi,R_g^*\psi]$ donc $$R_g^*\Omega^\omega=R_g^*d\omega+\frac{1}{2}[R_g^*\omega,R_g^*\omega]=d(\mathfrak{Ad}_{g^{-1}}\omega)+\frac{1}{2}[\mathfrak{Ad}_{g^{-1}}\omega,\mathfrak{Ad}_{g^{-1}}\omega]$$ donc $$R_g^*\Omega^\omega=\mathfrak{Ad}_{g^{-1}}(d\omega)+\frac{1}{2}\mathfrak{Ad}_{g^{-1}}[\omega,\omega]$$ ce qui permet de conclure.
\end{proof}

Le théorème suivant (identité de Bianchi) donne l'expression de la différentielle de la $2$-forme de courbure. C'est la partie sans champ des équations de Maxwell.

\paragraph{Équation de champ homogène}

\begin{theorem}[Identité de Bianchi ou équation de champ homogène]
	Si $\omega$ est une $1$-forme de connexion sur $P$, de courbure $\O^\omega$, alors 
	\begin{center}
		\fbox{$d^H\O^\omega=0$ et même $d\O^\omega=[\O^\omega, \omega]$}
	\end{center}
\end{theorem}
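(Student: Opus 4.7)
Mon plan est de d\'eriver les deux identit\'es directement \`a partir de l'\'equation de structure $\Omega^\omega = d\omega + \frac{1}{2}[\omega, \omega]$ prouv\'ee juste avant. Je commencerai par appliquer la diff\'erentielle ext\'erieure aux deux membres ; le terme $d(d\omega)$ dispara\^itra car $d^2 = 0$, ce qui r\'eduira le calcul \`a $d\Omega^\omega = \frac{1}{2} d[\omega, \omega]$.

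Ensuite, j'utiliserai la r\`egle de Leibniz gradu\'ee pour le crochet des formes $\g$-valu\'ees (Th\'eor\`eme pr\'ec\'edent) pour obtenir $d[\omega, \omega] = [d\omega, \omega] - [\omega, d\omega]$. En invoquant la propri\'et\'e d'antisym\'etrie gradu\'ee du crochet entre $\omega$ (degr\'e $1$) et $d\omega$ (degr\'e $2$), les deux termes s'identifient au signe pr\`es et on obtient $d[\omega, \omega] = 2[d\omega, \omega]$, d'o\`u $d\Omega^\omega = [d\omega, \omega]$.

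Il restera alors \`a faire appara\^itre $\Omega^\omega$ \`a la place de $d\omega$ dans ce crochet. Je substituerai $d\omega = \Omega^\omega - \frac{1}{2}[\omega, \omega]$ et utiliserai la bilin\'earit\'e du crochet pour obtenir $d\Omega^\omega = [\Omega^\omega, \omega] - \frac{1}{2}[[\omega, \omega], \omega]$. L'\'etape d\'elicate sera la v\'erification que $[[\omega, \omega], \omega] = 0$ ; elle devrait d\'ecouler de l'identit\'e de Jacobi gradu\'ee appliqu\'ee au triplet $(\omega, \omega, \omega)$, o\`u les trois termes de la somme cyclique co\"incident \`a un signe pr\`es et s'annulent collectivement par leurs signes gradu\'es. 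Ceci fournira la seconde identit\'e $d\Omega^\omega = [\Omega^\omega, \omega]$.

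Enfin, pour d\'eduire $d^H \Omega^\omega = 0$, je remarquerai qu'\'evaluer cette forme sur trois champs de vecteurs $X_1, X_2, X_3$ revient \`a \'evaluer $d\Omega^\omega = [\Omega^\omega, \omega]$ sur leurs rel\`evements horizontaux. Or dans le d\'eveloppement explicite du crochet comme somme altern\'ee de termes de la forme $[\Omega^\omega(X^H_{\sigma(1)}, X^H_{\sigma(2)}), \omega(X^H_{\sigma(3)})]$, chaque terme fait appara\^itre $\omega$ \'evalu\'e sur un vecteur horizontal, donc nul par d\'efinition m\^eme de la forme de connexion. La seule v\'eritable obstruction technique de cette preuve sera la gestion soigneuse des signes gradu\'es lors du calcul de $d[\omega, \omega]$ et de l'application de l'identit\'e de Jacobi ; tout le reste est purement formel.
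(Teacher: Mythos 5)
Votre d\'emonstration suit essentiellement la m\^eme voie que celle du texte : diff\'erentiation de l'\'equation de structure, annulation de $d^2\omega$, r\`egle de Leibniz gradu\'ee et antisym\'etrie gradu\'ee pour obtenir $d\Omega^\omega=[d\omega,\omega]$, substitution de $d\omega=\Omega^\omega-\frac{1}{2}[\omega,\omega]$ avec $[[\omega,\omega],\omega]=0$ (cons\'equence de l'identit\'e de Jacobi gradu\'ee), puis annulation de $[\Omega^\omega,\omega]$ sur les vecteurs horizontaux puisque $\omega$ s'y annule. Le raisonnement est correct et co\"incide avec la preuve du texte.
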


\begin{proof}[Preuve]
	$$d\O^\omega=d(d\omega+\frac{1}{2}[\omega,\omega])=d^2\omega+\frac{1}{2}[d\omega,\omega]-\frac{1}{2}[\omega,d\omega]$$
	Comme $d^2\omega=0$ et $[\omega,d\omega]=-[d\omega,\omega]$ : 
	$$d\O^\omega=[d\omega,\omega]$$
	et puisque $[[\omega,\omega],\omega]=0$ : 
	$$d\O^\omega=[d\omega+\frac{1}{2}[\omega,\omega],\omega]=[\O^\omega,\omega]$$
	Comme $\omega$ s'annule sur les vecteurs horizontaux, on a $d^H\O^\omega=0$.
\end{proof}

\subsection{Expression locale de la courbure sur la base}

Soit $A$ le potentiel de jauge associé à la connexion $\omega$ via la trivialisation locale $T_U$, donc $A\in\O^1(U)\otimes\g$. Si $\sigma_u$ est la section locale de $P$ associée à $T_U$, $A=\sigma_u^*\omega$. Définissons de même la courbure localement sur la base par $\O_u=\sigma_u^*\O^\omega$.

\begin{theorem}
	$\O_u=d\omega_u+\frac{1}{2}[\omega_u,\omega_u]$
\end{theorem}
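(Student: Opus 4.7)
Mon plan est d'appliquer simplement le tiré en arrière $\sigma_u^*$ à l'équation de structure intrinsèque $\Omega^\omega = d\omega + \frac{1}{2}[\omega,\omega]$ déjà établie sur l'espace total $P$. Par définition, on a $\omega_u = \sigma_u^*\omega$ et $\Omega_u = \sigma_u^*\Omega^\omega$, de sorte qu'il suffit de montrer que $\sigma_u^*$ commute d'une part avec la différentiation extérieure $d$, et d'autre part avec le crochet $[\cdot,\cdot]$ des formes à valeurs dans $\g$.

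D'abord, la compatibilité $\sigma_u^* d\omega = d\sigma_u^*\omega$ n'est qu'un cas particulier de la fonctorialité $f^*\circ d = d\circ f^*$ démontrée plus haut pour les $k$-formes différentielles scalaires : il suffit de fixer une base $(E_\alpha)$ de $\g$, d'écrire $\omega = \omega^\alpha\otimes E_\alpha$ avec $\omega^\alpha\in\Omega^1(P)$, puis d'appliquer la formule composante par composante, $\sigma_u^* d\omega = (\sigma_u^* d\omega^\alpha)\otimes E_\alpha = (d\sigma_u^*\omega^\alpha)\otimes E_\alpha = d\omega_u$.

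Ensuite, pour la compatibilité avec le crochet, j'utiliserais l'expression locale déjà établie, $[\phi,\psi] = c_{\alpha\beta}^\gamma(\phi^\alpha\wedge\psi^\beta)\otimes E_\gamma$, combinée à $\sigma_u^*(\phi^\alpha\wedge\psi^\beta) = \sigma_u^*\phi^\alpha\wedge\sigma_u^*\psi^\beta$. En appliquant ceci à $\phi = \psi = \omega$, on obtient directement $\sigma_u^*[\omega,\omega] = [\sigma_u^*\omega,\sigma_u^*\omega] = [\omega_u,\omega_u]$.

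La conclusion est alors immédiate par linéarité du tiré en arrière :
\[
\Omega_u = \sigma_u^*\Omega^\omega = \sigma_u^*\bigl(d\omega + \tfrac{1}{2}[\omega,\omega]\bigr) = d\omega_u + \tfrac{1}{2}[\omega_u,\omega_u].
\]
Je ne m'attends à aucun véritable obstacle technique : tout le travail non trivial (identification des relevés horizontaux, crochet des champs fondamentaux, stabilité par $G$-invariance) a déjà été absorbé dans la preuve de l'équation de structure sur $P$. La version locale sur la base est alors une conséquence purement formelle de la naturalité du tiré en arrière vis-à-vis des opérations algébriques sur les formes à valeurs dans $\g$.
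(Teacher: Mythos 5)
Votre démonstration est correcte et suit exactement la même stratégie que celle du texte : on tire en arrière l'équation de structure intrinsèque par la section locale $\sigma_u$ et on utilise la compatibilité du tiré en arrière avec $d$ et avec le crochet des formes à valeurs dans $\g$. Vous explicitez même davantage que le texte les deux lemmes de compatibilité (composante par composante dans une base $(E_\alpha)$ de $\g$), ce qui ne change rien au fond de l'argument.
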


\begin{proof}[Preuve]
	$\O_U=\sigma_u^*(\Omega^\omega)=\sigma_u^*(d\omega+\frac{1}{2}[\omega,\omega])=d(\sigma_u^*\omega)+\frac{1}{2}[\sigma_u^*,\sigma_u^*]=d\omega_u+\frac{1}{2}[\omega_u,\omega_u]$
\end{proof}

Pour commencer à faire la transition vers l'expression de ces objets dans les fibrés vectoriels associés, considérons une représentation linéaire $\rho$ du groupe $G$ sur un espace vectoriel de dimension finie. Notons encore $\rho$ le prolongement de la représentation à l'algèbre de Lie $\g$. On a alors la caractérisation suivante de $[.,.]$ :
\begin{theorem}
	Soient $\phi\in\O^i(M,\rho(\g))$ et $\psi\in\O^j(M,\rho(\g))$. On considère ici $\phi$ et $\psi$ comme des matrices de formes différentielles ($\R$-valuées) sur $M$. Alors : 
	$$[\phi,\psi]=\phi\wedge\psi-(-1)^{ij}\psi\wedge\phi$$
	où $\phi\wedge\psi$ est la multiplication matricielle de $\phi$ et $\psi$ avec les entrées multipliées par le "wedge" $\wedge$.
\end{theorem}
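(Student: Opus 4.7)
Mon plan est de travailler dans une base $(E_\alpha)$ de $\g$ et d'exploiter deux faits : d'une part, que $\rho:\g\rightarrow\mathfrak{gl}(V)$ est un morphisme d'alg�bres de Lie, donc $[\rho(E_\alpha),\rho(E_\beta)]=c_{\alpha\beta}^\gamma\rho(E_\gamma)$ o� le crochet de gauche est le commutateur matriciel ordinaire $AB-BA$ ; d'autre part, la formule explicite d�j� �tablie $[\phi,\psi]=c_{\alpha\beta}^\gamma(\phi^\alpha\wedge\psi^\beta)\otimes E_\gamma$ en d�composant $\phi=\phi^\alpha\otimes E_\alpha$ et $\psi=\psi^\beta\otimes E_\beta$ avec $\phi^\alpha\in\O^i(M)$ et $\psi^\beta\in\O^j(M)$.

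Ensuite, je d�velopperais les deux membres. Le membre de droite s'�crit, en utilisant la bilin�arit� du produit matriciel-wedge,
$$\phi\wedge\psi=\phi^\alpha\wedge\psi^\beta\,\rho(E_\alpha)\rho(E_\beta),\qquad \psi\wedge\phi=\psi^\beta\wedge\phi^\alpha\,\rho(E_\beta)\rho(E_\alpha).$$
L'�tape cl� est d'utiliser l'anticommutativit� gradu�e du produit ext�rieur sur les formes $\R$-valu�es : puisque $\phi^\alpha$ est de degr� $i$ et $\psi^\beta$ est de degr� $j$, on a $\psi^\beta\wedge\phi^\alpha=(-1)^{ij}\phi^\alpha\wedge\psi^\beta$. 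En substituant, il vient
$$\phi\wedge\psi-(-1)^{ij}\psi\wedge\phi=\phi^\alpha\wedge\psi^\beta\bigl(\rho(E_\alpha)\rho(E_\beta)-\rho(E_\beta)\rho(E_\alpha)\bigr)=\phi^\alpha\wedge\psi^\beta\,[\rho(E_\alpha),\rho(E_\beta)].$$

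Enfin, j'identifierais ce dernier commutateur matriciel gr�ce au fait que $\rho$ pr�serve le crochet, ce qui donne $[\rho(E_\alpha),\rho(E_\beta)]=c_{\alpha\beta}^\gamma\rho(E_\gamma)$, et je retrouverais ainsi exactement l'expression $c_{\alpha\beta}^\gamma(\phi^\alpha\wedge\psi^\beta)\rho(E_\gamma)$, qui co�ncide avec $\rho([\phi,\psi])$, i.e.\ avec $[\phi,\psi]$ une fois identifi�e � travers $\rho$. Le seul point d�licat est la comptabilit� des signes lors de la permutation $\psi^\beta\wedge\phi^\alpha\leftrightarrow\phi^\alpha\wedge\psi^\beta$ : c'est pr�cis�ment le facteur $(-1)^{ij}$ qui appara�t dans l'�nonc� et qui assure que le terme sym�trique en $(\alpha,\beta)$ dans le produit tensoriel des formes se combine avec le terme antisym�trique du commutateur matriciel. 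Une fois ce signe correctement g�r�, le reste n'est que calcul formel ; aucune difficult� conceptuelle majeure n'est attendue.
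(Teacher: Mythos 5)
Votre preuve est correcte, mais elle ne suit pas le chemin du texte. Le papier d�montre l'identit� en �valuant directement les deux membres sur des champs de vecteurs $X_1,\dots,X_{i+j}$ : partant de la d�finition $[\phi,\psi](X_1,\dots,X_{i+j})=\frac{1}{i!j!}\sum_\sigma(-1)^\sigma[\phi(X_{\sigma(1)},\dots,X_{\sigma(i)}),\psi(X_{\sigma(i+1)},\dots,X_{\sigma(i+j)})]$, il remplace le crochet par le commutateur matriciel $AB-BA$ et reconna�t dans les deux sommes obtenues les �valuations de $\phi\wedge\psi$ et de $(-1)^{ij}\psi\wedge\phi$ (le signe provenant du r�arrangement des permutations). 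Vous, au contraire, passez par la d�composition $\phi=\phi^\alpha\otimes E_\alpha$ dans une base de $\g$, la formule encadr�e $[\phi,\psi]=c_{\alpha\beta}^{\gamma}(\phi^\alpha\wedge\psi^\beta)\otimes E_\gamma$ d�j� �tablie dans le chapitre, la commutativit� gradu�e du wedge sur les formes scalaires, et le fait que $\rho$ pr�serve le crochet. Les deux arguments sont valides et de longueur comparable : l'approche du papier est plus intrins�que (aucune base de $\g$ n'est choisie, et elle ne d�pend pas de la formule en coordonn�es), tandis que la v�tre isole proprement l'origine du signe $(-1)^{ij}$ dans la permutation de deux formes scalaires et r�duit tout le reste � de l'alg�bre lin�aire ; elle pr�suppose en revanche la formule en constantes de structure et, pour �tre tout � fait compl�te, demanderait un mot sur le cas o� $\rho$ n'est pas fid�le (on travaille alors directement avec le commutateur dans $\rho(\g)\subset\mathfrak{gl}(V)$, ce qui ne change rien au calcul).
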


\begin{proof}[Preuve]
	Pour $A,B\in\rho(\g)$, on a $[A,B]=AB-BA$. Donc :
	$$[\phi,\psi](X_1, ..., X_{i+j})=\frac{1}{i!j!}\sum_{\sigma}(-1)^\sigma[\phi(X_{\sigma(1)}, ..., X_{\sigma(i)}),\psi(X_{\sigma(i+1)}, ..., X_{\sigma(i+j)})]$$
	$$[\phi,\psi](X_1, ..., X_{i+j})=(\phi\wedge\psi-(-1)^{ij}\psi\wedge\phi)(X_1, ..., X_{i+j})$$
\end{proof}

Sous les mêmes hypothèses, on a alors que $\frac{1}{2}[\omega,\omega]=\omega\wedge\omega$ donc : 
$$\O^\omega=d\omega+\omega\wedge\omega$$ et $$\O_U=d\omega_U+\omega_U\wedge\omega_U$$

Il est important de connaitre l'expression du changement du potentiel de jauge et de la courbure lors d'un changement de choix de jauge.

\begin{theorem}
	Soient $T_U:\pi^{-1}(U)\rightarrow U\times G$ et $T_V:\pi^{-1}(V)\rightarrow V\times G$ deux trivialisations locales, et $g_{uv}:U\bigcap V\rightarrow G$ la fonction de transition de l'une à l'autre définie par $\sigma_v(x)=\sigma_u(x)g_{uv}(x)$ où $\sigma_u$ et $\sigma_v$ sont les sections locales canoniquement associées, respectivement, à $T_U$ et $T_V$. Alors en notant $A_u$ et $A_v$ les potentiels de jauge correspondants, on a, pour $Y_x\in T_xM$ : 
	$$A_v(Y_x)=L^{-1}_{g_{uv}(x)*}(g_{uv*}(Y_x))+\mathfrak{Ad}_{g_{uv}(x)^{-1}}(A_u(Y_x))$$
\end{theorem}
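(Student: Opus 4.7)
The strategy is to unpack the definition $A_v = \sigma_v^*\omega$ and compute $\sigma_{v*}Y_x$ directly, exploiting the factorisation $\sigma_v(x) = \sigma_u(x)\,g_{uv}(x)$, then apply the two defining properties of the connection 1-form $\omega$ (vertical component and $\mathfrak{Ad}$-equivariance).

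First I would introduce the multiplication map $\mu:P\times G\to P$, $\mu(p,g)=pg$, so that $\sigma_v=\mu\circ(\sigma_u,g_{uv})$. Writing $p_0=\sigma_u(x)$ and $h_0=g_{uv}(x)$, the chain rule splits $\sigma_{v*}Y_x$ into two pieces: a "horizontal variation" coming from moving $\sigma_u$ with $g_{uv}$ frozen, and a "vertical variation" coming from moving $g_{uv}$ with $\sigma_u$ frozen. The first piece is $R_{h_0*}(\sigma_{u*}Y_x)$ because fixing $g$ the map $\mu(\cdot,g)$ is exactly the right action $R_g$. For the second piece I fix $p_0$ and differentiate $h\mapsto p_0 h$ at $h_0$ in the direction $W:=g_{uv*}Y_x$. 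Factoring $p_0h = (p_0h_0)(h_0^{-1}h)$ turns this into differentiating the curve $k(t)=h_0^{-1}h(t)$ issued from $e$ with tangent $L_{h_0^{-1}*}W$, and by definition of the fundamental vector field this derivative is exactly $A^*_{p_0h_0}$ with $A=L_{g_{uv}(x)^{-1}*}(g_{uv*}Y_x)$.

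Now I apply $\omega_{\sigma_v(x)}$ to each piece. For the vertical piece, the first defining property of a connection gives $\omega(A^*) = A = L_{g_{uv}(x)^{-1}*}(g_{uv*}Y_x)$, producing the first summand of the formula. For the other piece, the $G$-equivariance $R_g^*\omega=\mathfrak{Ad}_{g^{-1}}\omega$ stated in the definition of a connection yields
\[
\omega_{\sigma_v(x)}\bigl(R_{g_{uv}(x)*}(\sigma_{u*}Y_x)\bigr) = \mathfrak{Ad}_{g_{uv}(x)^{-1}}\bigl(\omega_{\sigma_u(x)}(\sigma_{u*}Y_x)\bigr) = \mathfrak{Ad}_{g_{uv}(x)^{-1}}\bigl(A_u(Y_x)\bigr),
\]
which is the second summand. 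Adding the two contributions gives precisely the announced formula.

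The only delicate step is the derivative of the product $(x\mapsto\sigma_u(x)g_{uv}(x))$: one must be careful that neither $P$ nor $G$ carries an abelian structure, so the Leibniz splitting has to be justified via the product map $\mu$ and its partial differentials, and the "second factor" contribution has to be rewritten with a left translation in order to land in $\mathfrak{g}=T_eG$ before being converted into a fundamental vector field. Once this bookkeeping is done, the two axioms of $\omega$ close the computation immediately.
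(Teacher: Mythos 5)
Your proposal is correct and follows essentially the same route as the paper's own proof: the paper also differentiates the product $\sigma_u(\gamma(t))g_{uv}(\gamma(t))$ along a curve representing $Y_x$, splits it into the piece $R_{g_{uv}(x)*}\sigma_{u*}(Y_x)$ and the fundamental vector field $[L^{-1}_{g_{uv}(x)*}g_{uv*}(Y_x)]^*_{\sigma_v(x)}$, and then applies the two defining axioms of $\omega$. Your explicit use of the multiplication map $\mu$ and its partial differentials merely makes rigorous the Leibniz splitting that the paper performs informally.
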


\begin{proof}[Preuve]
	Soit $Y$ un vecteur tangent à $M$ en $x$ et $\gamma$ une courbe représentant $Y$ c'est-à-dire telle que $\gamma'(0)=Y$. Alors (les dérivées étant évaluées en $0$) :
	$$\sigma_{v*}(Y)=\frac{d}{dt}\sigma_v(\gamma(t))=\frac{d}{dt}[\sigma_u(\gamma(t))g_{uv}(\gamma(t))]$$
	qui donne : $$\sigma_{v*}(Y)=\sigma_u(x)\frac{d}{dt}[g_{uv}(\gamma(t))] + \frac{d}{dt}[\sigma_u(\gamma(t))]g_{uv}(x) $$
	d'où : 
	$$\sigma_{v*}(Y)=\frac{d}{dt}[\sigma_v(x)g_{uv}(x)^{-1}g_{uv}(\gamma(t))]+R_{g_{uv}(x)*}\sigma_{u*}(Y)=[L^{-1}_{g_{uv}(x)*}g_{uv*}(Y)]^*_{\sigma_v(x)}+R_{g_{uv}(x)*}\sigma_{u*}(Y)$$
	ce qui donne :
	$$\omega_v(Y)=\omega(\sigma_{v*}Y)=L^{-1}_{g_{uv}(x)*}g_{uv*}(Y)+\mathfrak{Ad}_{g_{uv}(x)^{-1}}\omega_u(Y)$$
\end{proof}

La courbure varie de façon beaucoup plus simple lors d'un changement de section locale : 
\begin{theorem}
	Sur $U\bigcap V$, on a \fbox{$\O_v=\mathfrak{Ad}_{g_{uv}^{-1}}\O_u$}. Si le groupe est représenté de manière linéaire sur un espace vectoriel de dimension finie, on a \fbox{$\O_v=g_{uv}^{-1}\O_ug_{uv}$} (groupe de matrices).
\end{theorem}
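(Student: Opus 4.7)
Le plan est d'exploiter deux propri\'et\'es essentielles de la $2$-forme de courbure $\O^\omega$ d\'ej\`a \'etablies plus haut : d'une part, son horizontalit\'e (puisque $\O^\omega = d^H\omega$, elle s'annule d\`es qu'un de ses arguments est vertical), et d'autre part, son \'equivariance $R_g^{*}\O^\omega = \mathfrak{Ad}_{g^{-1}}\O^\omega$. L'intuition est qu'en comparant $\sigma_u^{*}\O^\omega$ et $\sigma_v^{*}\O^\omega$, seule la partie horizontale de la diff\'erence entre $\sigma_{u*}$ et $\sigma_{v*}$ doit jouer un r\^ole, et cette partie n'est autre que la translation \`a droite par $g_{uv}(x)$.

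Concr\`etement, je r\'eutiliserais la d\'ecomposition de $\sigma_{v*}(Y_x)$ d\'ej\`a obtenue dans la preuve du th\'eor\`eme pr\'ec\'edent portant sur la transformation du potentiel de jauge :
$$\sigma_{v*}(Y_x) = \bigl[L^{-1}_{g_{uv}(x)*}g_{uv*}(Y_x)\bigr]^{*}_{\sigma_v(x)} + R_{g_{uv}(x)*}\sigma_{u*}(Y_x).$$
Le premier terme, valeur en $\sigma_v(x)$ d'un champ fondamental, est vertical. En \'evaluant alors $\O_v(Y_x,Z_x) = \O^\omega(\sigma_{v*}Y_x, \sigma_{v*}Z_x)$, l'horizontalit\'e permet de supprimer les parties verticales de chaque argument, ce qui r\'eduit l'expression \`a
$$\O_v(Y_x,Z_x) = \O^\omega\bigl(R_{g_{uv}(x)*}\sigma_{u*}Y_x,\; R_{g_{uv}(x)*}\sigma_{u*}Z_x\bigr) = \bigl(R_{g_{uv}(x)}^{*}\O^\omega\bigr)(\sigma_{u*}Y_x,\sigma_{u*}Z_x).$$
L'\'equivariance donne alors imm\'ediatement l'\'egalit\'e voulue $\O_v = \mathfrak{Ad}_{g_{uv}^{-1}}\O_u$. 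Pour le second point, si $\rho$ est une repr\'esentation lin\'eaire sur un espace vectoriel de dimension finie, on utilise la formule classique $\mathfrak{Ad}_g X = \rho(g)X\rho(g)^{-1}$ (obtenue en d\'erivant $\rho(g\exp(tX)g^{-1})$ en $t=0$), et l'identification usuelle entre $g\in G$ et la matrice $\rho(g)$ fournit la r\'e\'ecriture $\O_v = g_{uv}^{-1}\,\O_u\,g_{uv}$.

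Le seul point un peu d\'elicat est de justifier rigoureusement la d\'ecomposition de $\sigma_{v*}$ utilis\'ee, mais cette \'etape a d\'ej\`a \'et\'e effectu\'ee dans la preuve du th\'eor\`eme pr\'ec\'edent et peut \^etre r\'eutilis\'ee telle quelle. Une approche alternative, purement calculatoire, consisterait \`a partir de l'\'equation de structure $\O_v = d\omega_v + \tfrac{1}{2}[\omega_v,\omega_v]$ et \`a y substituer la formule de transformation de $\omega_v$ en fonction de $\omega_u$ et de la forme de Maurer-Cartan tir\'ee en arri\`ere par $g_{uv}$~; les termes issus de $d(L^{-1}_{g_{uv}*}dg_{uv})$ se compensent alors miraculeusement avec les crochets crois\'es, mais ce chemin est plus long et occulte le r\^ole conceptuel jou\'e par l'horizontalit\'e de la courbure.
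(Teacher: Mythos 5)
Votre démonstration suit essentiellement le même chemin que celle du texte : elle réutilise la décomposition de $\sigma_{v*}(Y_x)$ établie dans le théorème précédent, élimine la partie verticale grâce à l'horizontalité de $\O^\omega$, puis conclut par l'équivariance $R_g^{*}\O^\omega=\mathfrak{Ad}_{g^{-1}}\O^\omega$. L'argument est correct et complet ; votre remarque sur le cas matriciel via $\mathfrak{Ad}_g X=\rho(g)X\rho(g)^{-1}$ explicite même un point que le texte laisse implicite.
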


\begin{proof}[Preuve]
	On a vu que : 
	$$\sigma_{v*}(Y)=[L^{-1}_{g_{uv}(x)*}g_{uv*}(Y)]^*_{\sigma_v(x)}+R_{g_{uv}(x)*}\sigma_{u*}(Y)$$
	Par définition, la $2$-forme de courbure est bilinéaire et s'annule si l'un des vecteurs qu'elle prend en argument est vertical. Donc :
	$$\O_v(X,Y)=\O^\omega(\sigma_{v*}(X),\sigma_{v*}(Y))=\O^\omega(R_{g_{uv}(x)*}\sigma_{u*}(X), R_{g_{uv}(x)*}\sigma_{u*}(Y))$$
	donc :
	$$\O_v(X,Y)=\mathfrak{Ad}_{g_{uv}(x)^{-1}}\O_u(X,Y)$$
\end{proof}

Enfin, l'identité de Bianchi a pour expression locale, moyennant une trivialisation $T_U:\pi^{-1}(U)\rightarrow U\times G$ : 
$$d\O_u=[\O_u,\omega_u]$$ 
et pour un groupe 'représenté matriciellement' : $$d\O_u=\O_u\wedge\omega_u-\omega_u\wedge\O_u$$
Cela provient directement du fait que le crochet de deux formes $\g$-valuées est préservé par tiré en arrière.

\begin{rmq}
	En électromagnétisme, de groupe de jauge abélien $U(1)$, le théorème précédent montre que l'expression de la courbure (c'est-à-dire le tenseur électromagnétique) ne dépend pas du choix de jauge. De plus, comme $[\O_u,\omega_u]=0$, l'équation de champ homogène s'écrit juste $d\O_u=0$.
	Pour des théories de jauge non abéliennes, l'expression de la courbure dépend a priori de la trivialisation locale, et c'est pour cela qu'on considèrera plutôt la courbure comme la $2$-forme bien définie sur $P$, plutôt que son pull-back par une section locale.
	Enfin, \textbf{La courbure n'est plus une fonction linéaire de la connexion} puisque intervient le terme $\frac{1}{2}[\omega,\omega]$.
	C'est ce terme qui est à l'origine de l'interaction des bosons de jauge avec eux-même dans les théories quantiques des champs non abéliennes.
\end{rmq}

\subsection{Expression de la courbure dans les fibrés vectoriels associés}
\paragraph{Dérivée covariante d'une fonction scalaire sur le fibré principal}
Soit $f$ une fonction lisse de $P=P(M,G)$ dans $\R$. Sa différentielle extérieure covariante est : $$d^Hf(X)=df(X^H)$$ On peut, grâce à la linéarité de l'application linéaire tangente $df$, réécrire son expression en : $$d^Hf(X)=df(X)-d^Vf(X), X\in\Gamma(P)$$ où $$d^Vf(X)=df(X^V)$$ 
Soit $Y$ un champ de vecteurs local sur $M$ en $x\in M$.\\
Moyennant le choix d'une section locale $s$, on peut transporter la différentielle verticale $d^V$ sur $M$ ; en effet, pour $x\in M$: $$(s^*(d^Vf))_x(Y_x)=(d^Vf)_{s(x)}((s_*Y)_{s(x)})=df_{s(x)}((s_*Y)_{s(x)}^V)=df_{s(x)}((s_*y)^\alpha (\tilde{X}_\alpha)_{s(x)})$$ où $(s_*y)^\alpha$ est la fonction de $\mathcal{C}^\infty(P)$ coordonnée de $s_*Y$ selon le champ de vecteur fondamental $\tilde{X}_\alpha$ défini par $(\tilde{X}_{\alpha})_{p}=R_{g*}X_\alpha$, pour $p=(s(x),g)$ $\{X_\alpha\}_{\alpha\in[|1,n|]}$, telle que $s_*Y=(s_*y)^\alpha\tilde{X}_\alpha$. Donc en prenant $x\in M$, $p=s(x)=(x,e)\in P$ pour la trivialisation locale canoniquement associée à $s$, on a :
$$(s^*(d^Vf))_x(Y_x)=df_{s(x)}(\omega((s_*Y)_{s(x)}))=df_{s(x)}((s^*\omega)_x(Y_x))=df_{s(x)}(A_x(Y_x))$$ où $A$ est le potentiel de jauge local : $A=s^*\omega$. Ainsi, en écrivant localement $Y=y^\mu\partial_\mu$ avec $\partial_\mu$ les d champs de vecteurs coordonnées sur $M$ : $$(s^*(d^Vf))_x(Y_x)=df_{s(x)}(A_\mu^\alpha(x)y^\mu X_\alpha)=A_\mu^\alpha(x) y^\mu(x) df_{s(x)}(X_\alpha)$$ donc finalement : \fbox{$(d^Hf)_{s(x)}((s_*Y)_{s(x)})=df_{s(x)}((s_*Y)_{s(x)})-A_\mu^\alpha(x) y^\mu(x) df_{s(x)}(X_\alpha)$}

\vspace{15mm}

Passons à présent à une section $v$ d'un fibré $P\times_\rho V$ vectoriel associé à $P=P(M,G)$ via la représentation $\rho$ de $G$, dont nous omettrons dès maintenant la notation. On peut relier de manière canonique cette section $v$ de ce fibré associé à une fonction de $P$ dans $V$, à condition d'avoir choisi une section locale $s$ du fibré principal (c'est-à-dire à condition d'avoir fait un choix de jauge). En effet en écrivant, pour $x\in M$, $$v(x)=[(s(x),w(x))]$$ avec $s(x)\in P$ et $w(x)\in V$ et $[(s(x),w(x))]=\{(s(x)g,g^{-1}w(x)), g\in G\}$, il suffit de regarder $h:P\rightarrow V$ définie par $h(p)=g(s(x),g)=g^{-1}w(x)$ si $p=s(x)g$. On peut écrire : $$h(pg)=g^{-1}h(p)$$ et en explicitant les indices d'espace vectoriel : $$e_ih^i(pg)=e_i(g^{-1})^i_jh^j(p)$$ où $\{e_i\}_{i\in[|1,n|]}$ forme une base de $V$.
Ces fonction $h^i$, $i\in[|1,p|]$ sont donc définies sur $P$ à valeurs dans $\R$ et d'après le calcul précédent, pour un champ de vecteurs $Y$ sur $M$ défini localement en $x\in M$: $$(d^Hh^i)(s_*Y)=dh^i(s_*Y)-A_\mu^\alpha y^\mu dh^i_p(X_\alpha)$$
Soit maintenant un chemin $\gamma=[-1,1]\rightarrow G$ tel que $\gamma(0)=e$ et $\frac{d\gamma}{dt}(0)=X_\alpha$. 
$$dh^i_e(X_\alpha)=\frac{d}{dt}_{|t=0}h^i(s(x)g(t))=(\frac{d}{dt}_{|t=0}g^{-1}(t))h^i(s(x))$$
Or : $$\frac{d}{dt}_{|t=0}(g(t)g^{-1}(t))=\frac{d}{dt}_{|t=0}(R_{g^{-1}(t)}(g(t)))=\frac{d}{dt}_{|t=0}(R_{g^{-1}(t)}(e))+R_{g^{-1}(e)*}\frac{d}{dt}_{|t=0}(g(t))$$ ce qui donne $$\frac{d}{dt}_{|t=0}(g(t)g^{-1}(t))=\frac{d}{dt}_{|t=0}(g^{-1}(t))+\frac{d}{dt}_{|t=0}(g(t))=0$$ ce qui montre que \fbox{$\frac{d}{dt}_{|t=0}(g^{-1}(t))=-X_\alpha$}. Si on substitue ce résultat dans l'expression de $dh^i_e(X_\alpha)$, on obtient : $dh^i_e(X_\alpha)=-(X_\alpha)^i_j h^j(s(x))$, ce qui permet de conclure : $$(d^Hh^i)(s_*Y)=dh^i(s_*Y)+A_\mu^\alpha y^\mu(X_\alpha)^i_j h^j(s(x))$$ ou encore, puisque la base $(e_i)_{i\in[|1,n|]}$ est choisie indépendamment de $x$ : $$(d^H(e_ih^i))(s_*Y)=e_idh^i(s_*Y)+e_iA_{j\mu}^i h^j(s(x))dx^\mu(y^\nu\partial_\nu)$$ 
Notons : \fbox{$d^H(e_ih^i)=e_id(h^i)+e_iA^i_jh^i$}

\begin{rmq}
	Considérons par exemple un spineur de Dirac en électrodynamique quantique. Ce spineur est une section d'un fibré vectoriel $P\times_\rho V$ réel de dimension $4$, associé au fibré principal $P=P(M,U(1))$ où $M$ est la variété d'espace-temps, via la représentation de $U(1)$ donnée par : $$\rho:e^{i\alpha}\in U(1)\rightarrow e^{i\alpha}id$$ où $id$ est la matrice identité $4\times 4$. Ca permet de justifier le fait que ce spineur est un objet \textbf{qui est bien défini} au sens de la jauge (il \textbf{existe},comme un vecteur dans un espace vectoriel). Ayant choisi une jauge locale en $x_0$ $s:M\rightarrow G$, on peut donc écrire les spineurs, pour $y\in M$ dans un bon voisinage de $x_0$, comme : $$\psi:x\in M\rightarrow \psi(x)\in V$$ puisque : $$\tilde{\psi}(x)=[s(x),\psi(x)]$$ puisqu'on connait la "référence" $s$ et qu'on sait que pour un changement de jauge $g$ local en $x$ : $g:x\in M\rightarrow e^{i\alpha(x)}\in U(1)$ telle que la nouvelle jauge locale est $s'(x)=s(x)e^{i\alpha}(x)$, $\psi$ varie comme : $$\psi'(x)= e^{-i\alpha}\psi(x)$$
	Lorsqu'on a choisi une jauge, la relation entre les sections $\tilde{\psi}$ du fibré associé $P\times_\rho V$ et les fonctions $f$ sur la base $M$, à valeurs dans $V$, et se transformant comme $g^{-1}\cdot f$ lors du changement de jauge $g$ est biunivoque.
\end{rmq}

On pose, pour se rapprocher des notations utilisées en physique : $\psi^i(x)=h^i(s(x))$, ce qui donne : $$d^H(e_i\psi^i)=e_id\psi^i+e_iA^i_j\psi^j$$

\begin{rmq}
	Cette formule donne : \fbox{$d^H(e_i)=e_jA^j_i$}
\end{rmq}

Soit $\psi$ une section quelconque de $E$ au voisinage de $x\in M$. Localement, $\psi$ s'écrit $\psi(x)=e_i(x)\psi^i(x)$. Alors : $$d^H(v)=e_jA^j_i\psi^i+e_id\psi^i=(e_jA^j_{i\mu}\psi^i+e_i\partial_\mu \psi^i)dx^\mu$$
et, en ré-indiçant : $$d^H(v)=e_i(A^i_{j\mu}\psi^j+\partial_\mu \psi^i)dx^\mu$$
On commence à voir apparaître la 'substitution minimale de jauge' qui apparait en théories quantiques des champs pour assurer l'invariance de jauge du Lagrangien.\\
On définit la différentielle covariante dans la direction $\xi=\xi^\nu\partial_\nu$ par : $$\partial^H_\xi \psi = e_i(A^i_{j\mu}\psi^j+\partial_\mu \psi^i)\xi^\mu$$ et on note $\partial^H_\mu \psi$ la différentielle covariante de $\psi$ dans la direction $\partial_\mu$. On a : $$\partial^H_\mu \psi= e_i(A^i_{j\mu}\psi^j+\partial_\mu \psi^i)$$ 
Pour prendre des notations plus proches de celles employées dans les théories physiques, on pose : $$\psi^i_{;\mu}=\psi^i_{,\mu}+A^i_{j\mu}\psi^j$$ ce qui permet d'écrire $\partial^H_\mu \psi=e_i\psi^i_{;\mu}$ et où on a adopté la notation $\psi^i_{,\mu}=\partial_\mu \psi^i$.\\

Remarquons enfin que :
$d^H(e_i\psi^i)=d^H(e_i)+e_id\psi^i$ ce qui montre que $d^H$ est une dérivation pour les sections du fibré associé.
\\

Un champ de particules est une section d'un fibré associé, c'est-à-dire qu'un spineur varie d'une "bonne manière" lors d'un changement de jauge. Cette invariance est au cœur des théories des interactions. La différentielle de ce spineur doit avoir du sens elle aussi, et par conséquent doit varier de la même façon par changement de jauge. \\\\
Lorsqu'on change de jauge avec $g:M\rightarrow G$, le champ varie selon $\psi'(x)=g(x)^{-1}\psi(x)$. On sait que les coefficients de la connexion varient selon : $$A'(x)=g(x)^{-1}A(x)g(x)+g(x)^{-1}dg(x)$$ Pour alléger, nous écrirons $\psi'=g^{-1}\psi$ et $A'=g^{-1}Ag+g^{-1}dg$ mais toutes ces quantités sont bien, comme toujours, définies comme fonctions sur $M$, a priori non constantes.
Ainsi :
\begin{theorem}
	La différentielle covariante d'un spineur est une $1$-forme à valeurs dans le fibré associé, autrement dit, on conserve l'équivariance de jauge.
\end{theorem}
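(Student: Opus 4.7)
Le plan consiste � effectuer le calcul direct suivant : partant de l'expression locale de la diff�rentielle covariante $d^H\psi = d\psi + A\psi$ (en omettant les indices d'espace vectoriel pour all�ger), il s'agit de v�rifier que $(d^H\psi)' := d\psi' + A'\psi'$ co�ncide avec $g^{-1}(d^H\psi)$, ce qui assure que $d^H\psi$ se transforme sous changement de jauge exactement comme le fait une section du fibr� associ� $P\times_\rho V$, et qu'il s'agit donc bien d'une $1$-forme sur $M$ � valeurs dans ce fibr� associ�.

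Je d�composerais tout d'abord $d\psi' = d(g^{-1}\psi) = d(g^{-1})\psi + g^{-1}d\psi$ gr�ce � la r�gle de Leibniz pour la d�riv�e ext�rieure. Ensuite, je d�velopperais $A'\psi' = (g^{-1}Ag + g^{-1}dg)(g^{-1}\psi) = g^{-1}A\psi + g^{-1}(dg)g^{-1}\psi$. L'ingr�dient crucial est l'identit� $d(g^{-1}) = -g^{-1}(dg)g^{-1}$, obtenue en diff�rentiant la relation $g^{-1}g = e$ vue comme �galit� de fonctions sur $M$ � valeurs dans le groupe, ce qui donne $d(g^{-1})\cdot g + g^{-1}\cdot dg = 0$ puis le r�sultat annonc� en multipliant � droite par $g^{-1}$.

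Une fois cette identit� report�e dans $d\psi'$, on obtient $d\psi' = -g^{-1}(dg)g^{-1}\psi + g^{-1}d\psi$. Les deux contributions en $g^{-1}(dg)g^{-1}\psi$ issues respectivement de $d\psi'$ et de $A'\psi'$ se compensent alors exactement, et il reste $(d^H\psi)' = g^{-1}d\psi + g^{-1}A\psi = g^{-1}(d^H\psi)$, ce qui est pr�cis�ment la loi de transformation d'une section du fibr� associ�. Le calcul lui-m�me ne pr�sente pas de v�ritable obstacle technique ; le fait conceptuellement important, et qu'il convient de souligner dans la r�daction, est que la formule de transformation inhomog�ne $A' = g^{-1}Ag + g^{-1}dg$ �tablie dans les th�or�mes pr�c�dents est pr�cis�ment taill�e pour annuler le terme parasite $d(g^{-1})\psi$ qui appara�t lors de la d�rivation du spineur. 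C'est cette compatibilit� qui fait de $d^H$ un op�rateur bien d�fini au niveau des sections de fibr�s associ�s, et qui justifie \emph{a posteriori} le choix de la loi de transformation du potentiel de jauge.
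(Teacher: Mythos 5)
Votre preuve est correcte et suit exactement la m\^eme d\'emarche que celle du texte : m\^eme d\'ecomposition de $d(g^{-1}\psi)$ par la r\`egle de Leibniz, m\^eme usage de l'identit\'e $d(g^{-1})=-g^{-1}(dg)g^{-1}$ obtenue en diff\'erentiant $g^{-1}g=e$, et m\^eme compensation des deux termes en $g^{-1}(dg)g^{-1}\psi$. Rien \`a redire ; votre remarque finale sur le fait que la loi de transformation inhomog\`ene de $A$ est pr\'ecis\'ement con\c{c}ue pour annuler le terme parasite est un compl\'ement conceptuel bienvenu.
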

\begin{proof}[Preuve]
	$$d^{H}(\psi')=d(g^{-1}\psi)+(g^{-1}Ag+g^{-1}dg)(g^{-1}\psi)=(dg^{-1})\psi+g^{-1}d\psi+g^{-1}A\psi+g^{-1}(dg)g^{-1}\psi$$
	Or $$d(g^{-1}g)=g^{-1}dg+(dg^{-1})g=0$$ donc $$d(g^{-1})=-g^{-1}(dg)g^{-1}$$
	Par conséquent : $$d^{H}(g^{-1}\psi)=g^{-1}d^H\psi$$
\end{proof}

\begin{rmq}
	En électrodynamique quantique, on fait la 'substitution minimale' $\partial_\mu\rightarrow (\partial_\mu+\frac{iq}{\hbar c}A_\mu)$ si bien que le potentiel vecteur est exactement, à un facteur de proportionnalité près, la connexion dans un fibré vectoriel associé à $\R^4\times U(1)$ via la représentation standard de $U(1)$. Quitte à changer d'unités, on peut écrire : $$\partial_\mu\rightarrow (\partial_\mu+iA_\mu)$$
	or, pour la représentation déjà évoquée de $U(1)$ sur l'espace des spineurs, la connexion est dans $\mathfrak{u}(1)\equiv i\R\otimes id$. Les coefficients du potentiel vecteur sont donc à un facteur $i$ près, les coefficients de connexion. 
\end{rmq}

\paragraph{Loi de transformation des coefficients de connexion}
Soit $x\in M$. On a vu qu'on pouvait exprimer localement la matrice de connexion dans une base donnée par une famille libre de $p$ sections locales en $x$. Soient $(e_i)$ et $(e'_i)$ deux tels choix de sections locales en $x$. Le passage de l'un à l'autre est donné par une fonction $\Lambda:(M,x)\rightarrow Aut(V)$ où $\Lambda$ est définie sur un bon voisinage de $x$. Ainsi on aura : $$e'_i=\Lambda_i^je_j$$ On note $A_j^i$ et $A_j^{'i}$ les coefficients de la matrice de connexion par rapport à ces deux bases.
D'une part : $$d^H (e'_i)=e'_jA_i^{'j}=e_k\Lambda^k_jA^{'j}_i$$ et d'autre part : $$d^H(e_j\Lambda^j_i)=e_kA^k_j\Lambda^j_i+e_kd\Lambda^k_i$$ ce qui donne, sous forme matricielle : $$A'=\Lambda^{-1}A\Lambda + \Lambda^{-1}d\Lambda$$
Auparavant, on a vu que $$A_v(Y_x)=L^{-1}_{g_{uv}(x)*}(g_{uv*}(Y_x))+\mathfrak{Ad}_{g_{uv}(x)^{-1}}(A_u(Y_x))$$ et ces formules coïncident parfaitement.

\paragraph{Différentielle covariante des sections-k-formes}
On peut montrer que la différentielle covariante est une dérivation et vérifie : 
$$d^H(e_iv^i_{\mu ... \nu} dx^\mu\wedge...\wedge dx^\nu)=d^H(e_iv^i_{\mu...\nu})\wedge dx^\mu\wedge...\wedge dx^\nu$$ C'est l'unique dérivation de l'algèbre graduée $$\bigoplus \O^k(M)\otimes\Sigma(P\times_\rho V)$$ où $\Sigma(P\times_\rho V)$ est l'ensemble des sections du fibré associé $P\times_\rho V$.
L'avantage de ce formalisme est de permettre des calculs plus efficaces ; par exemple, soit $d^H\psi\in\O^1(M)\otimes\Sigma(E)$ où $\psi$ est une section d'un fibré vectoriel associé : $$d^H(e_i\psi^i_{;\mu} dx^\mu)=d^H(e_i\psi^i_{;\mu})\wedge dx^\mu=d^H(e_i)\psi^i_{;\mu}\wedge dx^\mu+e_id(\psi^i_{;\mu})\wedge dx^\mu$$ donc $$d^H(e_i\psi^i_{;\mu} dx^\mu)=e_jA_{i\nu}^j\psi^i_{;\mu}dx^\nu\wedge dx^\mu+e_i\partial_\nu(\psi^i_{;\mu})dx^\nu\wedge dx^\mu$$
$$d^H(e_i\psi^i_{;\mu} dx^\mu)=e_i\psi^i_{;\mu;\nu}dx^\nu\wedge dx^\mu$$

\paragraph{Définition de la courbure dans les fibrés associés}
On définit l'opérateur de courbure comme $F:\Sigma(E)\rightarrow\O^2(M)\otimes\Sigma(E)$ agissant sur les sections de E par : \fbox{$F=(d^H)^2$}. 
\begin{theorem}
	L'opérateur de courbure est $\mathcal{C}^\infty(M)$-linéaire
\end{theorem}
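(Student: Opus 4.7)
Le plan est d'exploiter directement la structure de dérivation de $d^H$ sur l'algèbre graduée $\bigoplus_k \Omega^k(M)\otimes\Sigma(E)$ mentionnée dans le paragraphe précédent. Il suffit de vérifier que pour tout $f\in\mathcal{C}^\infty(M)$ et toute section $\psi\in\Sigma(E)$, on a $F(f\psi)=fF(\psi)$. La preuve se ramène donc à un calcul direct avec la règle de Leibniz graduée, appliquée deux fois de suite.

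D'abord, j'appliquerais la règle de Leibniz à $d^H(f\psi)$ : comme $f$ est une $0$-forme à valeurs scalaires, on obtient
$$d^H(f\psi) = df\wedge\psi + f\, d^H\psi,$$
où le premier terme vit dans $\Omega^1(M)\otimes\Sigma(E)$. Ensuite j'appliquerais $d^H$ à chacun des deux termes. Pour le premier, en voyant $df\wedge\psi$ comme le produit d'une $1$-forme scalaire $df$ et de la section $\psi$, la dérivation graduée donne
$$d^H(df\wedge\psi) = d(df)\wedge\psi - df\wedge d^H\psi = -df\wedge d^H\psi,$$
puisque $d^2=0$ sur les formes scalaires. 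Pour le second, en traitant $f$ comme une $0$-forme scalaire et $d^H\psi\in\Omega^1(M)\otimes\Sigma(E)$, la même règle donne
$$d^H(f\, d^H\psi) = df\wedge d^H\psi + f\, d^H(d^H\psi) = df\wedge d^H\psi + f\, F(\psi).$$

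Il ne reste plus qu'à sommer : les deux termes croisés $\pm df\wedge d^H\psi$ se compensent exactement, et il vient $F(f\psi)=f\,F(\psi)$, ce qui établit la $\mathcal{C}^\infty(M)$-linéarité. Aucune étape n'est vraiment délicate ; la seule précaution est de bien gérer le signe $(-1)^p$ de la règle de Leibniz graduée en fonction du degré de la forme à laquelle on applique $d^H$. C'est précisément cette compensation qui rend $F=(d^H)^2$ tensoriel, contrairement à $d^H$ lui-même : l'opérateur de courbure se factorise alors à travers le fibré $\Omega^2(M)\otimes\mathrm{End}(E)$ et s'identifie, via une section locale, à la $2$-forme de courbure $\Omega_U$ agissant fibre à fibre.
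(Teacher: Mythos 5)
Votre preuve est correcte et suit essentiellement la même démarche que celle du texte : on développe $F(f\psi)=d^H(d^H(f\psi))$ en appliquant deux fois la règle de Leibniz graduée, on utilise $d^2f=0$, et les deux termes croisés $\pm\,df\wedge d^H\psi$ se compensent. Votre rédaction est même un peu plus soigneuse sur le suivi du signe $(-1)^k$, mais l'argument est identique.
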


\begin{proof}[Preuve]
	En effet pour $\psi$ une section du fibré associé et $f\in\mathcal{C}^\infty(M)$ : $$F(f\psi)=d^H(d^H(f\psi))=d^H(d^H(\psi)f+\psi df)$$ $$F(f\psi)=F(\psi)f-d^H(\psi)d(f)+d^H(\psi)df+\psi d^2f$$ puisque $d^2f=0$ et car pour $\phi\in\O^k(M)\otimes\Sigma(P\times_\rho V)$ et $\psi\in\O(M)\otimes\Sigma(P\times_\rho V)$ on a $$d^H(\phi\wedge\psi)=d^H(\phi)\wedge\psi+(-1)^{k}\phi\wedge d^H(\psi)$$ donc finalement $$F(f\psi)=fF(\psi)$$
\end{proof}

Calculons les coefficients de l'opérateur de courbure qui est, comme on vient de le montrer, linéaire.
$$Fe_i=d^H(d^He_i)=d^H(e_jA_{i\mu}^jdx^\mu)=d^H(e_j)\wedge A_{i\mu}^jdx^\mu+e_jd(A_{i\mu}^jdx^\mu)$$ $$ Fe_i=e_kA_{j\nu}^kA_{i\mu}^jdx^\nu\wedge dx^\mu+e_jA^j_{i\mu,\nu}dx^\nu\wedge dx^\mu$$
Finalement, en notant $Fe_i=F_i^je_j$ on obtient \fbox{$F_i^j=d(A_i^j)+A^j_k\wedge A^k_i$}

\paragraph{Équation de structure pour la courbure} 
Soit $\sigma=e_i\sigma^i_\mu e^\mu$ un élément quelconque de $\O^1(M)\otimes\Sigma(P\times_\rho V)$. Ici, $(e^\mu)_{\mu\in[|1,n|]}$ est une base locale en $x$ du fibré cotangent, avec $\frac{1}{2}f^\mu_{\nu\rho}e^\nu\wedge e^\rho=de^\mu$ Alors : $$d^H\sigma=e_j(\sigma^j_{\mu;\nu}-\frac{1}{2}\sigma^j_\rho f^\rho_{\nu\mu})e^\nu\wedge e^\mu$$ (il faut rajouter le terme $e_i\psi^i_{;\mu}de^\mu$ au calcul semblable précédent) et $$d^H\sigma(e_\tau,e_u)=e_j(\sigma^j_{\mu;\nu}-\frac{1}{2}\sigma^j_\rho f^\rho_{\nu\mu})(\delta^\nu_\tau\delta^\mu_u-\delta^\nu_u\delta^\mu_\tau)=e_j(\sigma_{u;\tau}^j-\sigma_{\tau;u}^j-\sigma_\rho^jf_{\tau u}^\rho)$$ ce qui donne $$d^H\sigma(e_\tau,e_u)=d^H_{e_\tau}\sigma(e_u)-d^H_{e_u}\sigma(e_\tau)-\sigma([e_\tau,e_u])$$
Si comme avant, on a $\sigma=d^H\psi$, alors : $$F_{\tau u}\psi=F\psi(e_\tau,e_u)=d^H_{e_\tau}d^H_{e_u}\psi-d^H_{e_u}d^H_{e_\tau}\psi-d^H_{[e_\tau,e_u]}\psi$$ donc : \fbox{$F_{\mu\nu}=[d^H_\mu,d^H_\nu]-d^H_{[\mu,\nu]}$}

\paragraph{Identité de Bianchi pour la courbure}
De la même façon que dans les fibrés principaux, l'identité de Bianchi s'obtient à partir de l'équation de structure : $$dF=d(dA+A\wedge A)$$ donne $$dF=d^2(A)+dA\wedge A-A\wedge dA$$ et puisque $F=dA+A\wedge A$ : 
\begin{center}
	\fbox{$dF=F\wedge A-A\wedge F$}
\end{center}

Si le groupe est abélien, on arrive juste à $dF=0$ c'est-à-dire la première équation de Maxwell.

\paragraph{Expression plus familière (pour les théories physiques) des coefficients de courbure}

L'équation de structure de la courbure donne : $$F_{\mu\nu}=[d^H_\mu,d^H_\nu]-d^H_{[\mu, \nu]}$$
Si $(e_\mu)$ est un repère local en $x$, on a donc : 
$$F_{\mu\nu}=[d^H_\mu,d^H_\nu]$$
Soit $v\in V$, alors : $$F_{\mu\nu}v=d^H_\mu(d^H_\nu v) - d^H_\nu(d^H_\mu v)$$
Or : $$(d^H_\mu(d^H_\nu v))^i=\partial_\mu(d^H_\nu v)^i+A^i_{j\mu}(d^H_\nu v)^j$$ et en développant : 
$$(d^H_\mu(d^H_\nu v))^i=\partial_\mu\partial_\nu v^i+\partial_\mu(A^i_{j\mu}v^j)+A^i_{j\mu}\partial_\nu v^j + A^i_{j\mu}A^j_{\tau\nu}v^{\tau}$$
Le calcul donne : $$(F_{\mu\nu}v)^i=(\partial_\mu A^i_{j\nu})v^j-(\partial_\nu A^i_{j\mu})v^j+(A^i_{k\mu}A^k_{j\nu}-A^i_{k\nu}A^i_{j\mu})v^j$$
Rappelons que : $$A=A^\alpha_\mu X_\alpha dx^\mu=A^i_{k\mu}e_i\otimes e^j\otimes dx^\mu$$
avec $A^i_{k\mu}=A^\alpha_\mu (X_\alpha)^i_k$. On peut écrire : 
$$A^i_{k\mu}A^k_{j\nu}-A^i_{k\nu}A^i_{j\mu}=A^\alpha_{\mu}(T_\alpha)^i_kA^\beta_{\nu}(T_\beta)^k_j-A^\gamma_{\nu}(T_\gamma)^i_kA^\delta_{\mu}(T_\delta)^k_{j}$$
d'où : 
$$A^i_{k\mu}A^k_{j\nu}-A^i_{k\nu}A^i_{j\mu}=A^\alpha_{\mu}A^\beta_{\nu}(T_\alpha T_\beta)^i_j-A^\beta_{\nu}A^\alpha_{\mu}(T_\beta T\alpha)^i_j$$
et :
$$A^i_{k\mu}A^k_{j\nu}-A^i_{k\nu}A^i_{j\mu}=A^\alpha_{\mu}A^\beta_{\nu}[T_\alpha, T_\beta]^i_j=A^\alpha_{\mu}A^\beta_{\nu}(f_{\alpha\beta}^\gamma T_\gamma)^i_j$$
donc finalement : 
$$(F_{i}^{j})_{kl}=\partial_\mu A^j_{i\mu}-\partial_\nu A^j_{i\mu}+A_\mu^\alpha A_\nu^\beta f_{\alpha \beta}^{\gamma}(T_\gamma)^j_i$$
ou encore : 
$$F^{\gamma}_{\mu\nu}=\partial_\mu A_\nu^\gamma-\partial_\nu A_\mu^\gamma+A_\mu^\alpha A_\nu^\beta f_{\alpha\beta}^\gamma$$

\subsection{Lagrangiens et invariance de jauge}
Dans cette partie, nous allons définir ce qu'est un lagrangien, en termes de fibrés associés, puis nous donnerons une version "géométrique" des idées qui sont en fait à l'origine de l'introduction des connexions en physique. Considérons toujours avec le fibré principal $P=P(M,G)$ à qui est associé un fibré vectoriel $P\times_\rho V$ via la représentation $\rho$ de $G$ sur un espace vectoriel $V$. Au lieu de regarder exactement des sections du fibré associé, nous utiliserons des fonctions lisses $f$ définies sur $P$ et à valeur dans $V$, telles que : 
$$\forall g\in G,\ f(pg)=g^{-1}f(p)$$
Notons $\mathcal{C}(P,V)$ l'espace de ces fonctions. Il est clair qu'il y a un isomorphisme (non canonique puisqu'il faut choisir une jauge) entre $\mathcal{C}(P,V)$ et l'ensemble des sections du fibré associé, ce qui rend cette approche équivalente à celle par des fibrés associés. Posons également la notation $\overline{\O}^k(P,V)$ pour les $k$-formes sur $P$ à valeurs dans $V$ qui vérifient une telle propriété d'équivariance.
\paragraph{Définition du lagrangien}
\begin{defi}
	L'espace de $1$-jets des applications de $P$ dans $V$ est $$J(P,V)=\{(p,v,\theta), p\in P, v\in V, \theta\in\mathcal{L}(T_pP,V)\}$$ $J(P,V)$ est muni d'une structure canonique de variété.
\end{defi}

\begin{defi}
	Un lagrangien est une application $$L:J(P,V)\rightarrow\R$$ tel que pour tous $(p,v,\theta)\in J(P,V)$ et pour $g\in G$, on ait $$L(p,v,\theta)=L(pg,g^{-1}v,g^{-1}\theta\circ R_{g*})$$
\end{defi}

\begin{theorem}
	Soit un lagrangien $L:J(P,V)\rightarrow \R$. Il existe une fonction $$\mathfrak{L}_0:\mathcal{C}(P,V)\rightarrow\mathcal{C}^{\infty}(M)$$ définie, pour $x\in M$, $\psi\in\mathcal{C}(P,V)$ et $p\in P$ avec $\pi(p)=x$, par :
	$$\mathfrak{L}_0(\psi)(x)=L(p,\psi(p),d\psi_p)$$
\end{theorem}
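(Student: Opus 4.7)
Le plan se d�compose en deux points : prouver que la d�finition propos�e est ind�pendante du choix de $p$ dans la fibre $\pi^{-1}(x)$, puis prouver que la fonction obtenue est lisse sur $M$.

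Pour l'ind�pendance, je pars de deux points $p, p' \in \pi^{-1}(x)$. L'action de $G$ �tant libre et transitive sur chaque fibre, il existe un unique $g\in G$ tel que $p' = pg$. Je souhaite calculer $\psi(pg)$ et $d\psi_{pg}$ en termes de $\psi(p)$ et $d\psi_p$. Le premier r�sulte imm�diatement de la d�finition de $\mathcal{C}(P,V)$ : $\psi(pg) = g^{-1}\psi(p)$. Pour le second, j'�cris la relation $\psi\circ R_g = g^{-1}\cdot\psi$ comme une �galit� d'applications de $P$ dans $V$, et je la diff�rentie au point $p$ : pour tout $X\in T_pP$, on a $d\psi_{pg}(R_{g*}X) = g^{-1}d\psi_p(X)$, soit $d\psi_{pg}\circ R_{g*} = g^{-1}\circ d\psi_p$. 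Il suffit alors d'invoquer l'�quivariance du lagrangien $L$ sous la forme $L(p,v,\theta) = L(pg, g^{-1}v, g^{-1}\theta\circ R_{g*})$ appliqu�e � $v=\psi(p)$ et $\theta = d\psi_p$ pour obtenir directement $L(p,\psi(p),d\psi_p) = L(pg,\psi(pg),d\psi_{pg})$, ce qui �tablit l'ind�pendance du choix de $p$.

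Pour la lissit�, je veux exhiber, au voisinage de chaque $x_0\in M$, une expression locale lisse de $\mathfrak{L}_0(\psi)$. Je choisis un voisinage ouvert $U$ de $x_0$ sur lequel le fibr� principal admet une section locale $\sigma_U:U\rightarrow P$ ; l'existence de telles sections a �t� �tablie plus haut comme cons�quence de la trivialit� locale. Alors, pour $y\in U$, on peut �crire $\mathfrak{L}_0(\psi)(y) = L(\sigma_U(y),\psi(\sigma_U(y)),d\psi_{\sigma_U(y)})$, qui est une composition d'applications lisses (le lagrangien $L$ �tant, comme il est naturel en physique, suppos� lisse sur $J(P,V)$, la section $\sigma_U$ lisse par d�finition, et $\psi$ lisse par hypoth�se). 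La lissit� de $\mathfrak{L}_0(\psi)$ sur $M$ en r�sulte par recollement sur un recouvrement de $M$ par de tels ouverts.

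Le seul point d�licat est la v�rification du calcul de $d\psi_{pg}$ : il faut proprement invoquer la r�gle de d�rivation compos�e sur la vari�t� $P$ et interpr�ter l'�galit� fonctionnelle $\psi(p\cdot g)=\rho(g^{-1})\psi(p)$ (pour $g$ fix�) comme une �galit� d'applications lisses de $P$ dans $V$ que l'on peut diff�rentier en chaque $p$. Une fois cela acquis, la preuve est quasi tautologique : l'axiome d'�quivariance impos� sur $L$ est pr�cis�ment con�u pour que les deux formules de transformation (de $\psi$ et de sa diff�rentielle) s'annulent mutuellement. La d�finition donn�e du lagrangien est donc la plus faible qui puisse garantir que $\mathfrak{L}_0$ descende sur la base $M$.
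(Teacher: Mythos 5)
Votre preuve est correcte et suit essentiellement la d\'emarche du texte : l'identit\'e d'\'equivariance $d\psi_{pg}\circ R_{g*}=g^{-1}\circ d\psi_p$, combin\'ee \`a l'axiome de transformation impos\'e \`a $L$, est exactement le calcul que le m\'emoire effectue un peu plus loin pour v\'erifier que $\mathfrak{L}(\psi,\omega)$ est bien d\'efini. Le m\'emoire omet la d\'emonstration pour $\mathfrak{L}_0$ lui-m\^eme ; votre r\'edaction, qui explicite en outre l'hypoth\`ese implicite que $L$ est lisse (n\'ecessaire pour que l'image soit dans $\mathcal{C}^{\infty}(M)$) et l'argument de lissit\'e via une section locale, la compl\`ete fid\`element.
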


\begin{rmq}
	La définition du lagrangien montre que moyennant un choix de jauge, le lagrangien est une fonction bien définie sur la base et sur des sections du fibré associé. C'est simplement une approche moins générale, mais équivalente dès que le choix a été fait.
\end{rmq}

On dira qu'un lagrangien $L:J(P,V)\rightarrow \R$ est $G$-invariant si : $$L(p,g\cdot v,g\cdot\theta)=L(p,v,\theta)$$

Presque tous les lagrangiens rencontrés dans des théories physiques sont $G$-invariants.

\paragraph{Invariance de jauge du lagrangien et dérivée covariante}
Cette section fait écho à l'introduction de ce chapitre, dans laquelle nous avons commencé à faire émerger l'idée de dérivée covariante pour l'établissement de l'électrodynamique quantique. Nous allons voir que la fonction :
$$\mathfrak{L}_0:\mathcal{C}(P,V)\rightarrow\mathcal{C}^{\infty}(M)$$
n'est pas forcément bien définie au sens de la jauge, et comment l'utilisation de dérivées covariantes résout ce problème.\\\\
Soit $f$ un automorphisme de jauge, c'est-à-dire une fonction lisse de $P$ dans $P$ telle que : $$\forall p\in P,\ \forall g\in G,\ f(pg)=f(p)g$$
f est canoniquement reliée à une fonction $\tau$ de $P$ dans $G$ : 
$$\tau(pg)=g^{-1}\tau(p)g$$ par la relation $f(p):p\tau(p)$ puisqu'alors : 
$$f(pg)=pg\tau(pg)=pgg^{-1}\tau(p)g=p\tau(p)g=f(p)g$$
Remarquons que $$f^*\psi(p)=\psi(f(p))=\psi(p\tau(p))=\tau(p)^{-1}\psi(p)$$ autrement dit $f^*\psi=\tau^{-1}$. On veut calculer $d(\tau^{-1}\psi)_p$. Soit $X\in T_pP$ et $\gamma:\R\rightarrow P$ tel que $\gamma'(0)=X$. Dans le calcul les dérivées sont toujours évaluées en 0.
$$d(\tau^{-1}\psi)(X)=\frac{d}{dt}\tau^{-1}(\gamma(t))\psi(\gamma(t))=\frac{d}{dt}\tau^{-1}(p)\psi(\gamma(t))+\frac{d}{dt}\tau^{-1}(\gamma(t))\psi(p)$$
$$d(\tau^{-1}\psi)(X)=\tau^{-1}(p)d\psi(X)+\frac{d}{dt}\tau^{-1}(\gamma(t))\tau(p)\tau^{-1}(p)\psi(p)$$
$$d(\tau^{-1}\psi)(X)=\tau^{-1}(p)d\psi(X)+R_{\tau(p)*}(\tau^{-1})_{*p}(X)\tau(p)^{-1}\psi(p)$$
donc finalement : 
$$d(\tau^{-1}\psi)_p=\tau^{-1}(p)d\psi(p)+R_{\tau(p)*}(\tau^{-1})_{*p}\tau(p)^{-1}\psi(p)$$
Par conséquent : $\mathfrak{L}_0(f^*\psi)(x)=L(p,(f^*\psi)(p),d(f*\psi)_p)$ donne : 
$$\mathfrak{L}_0(f^*\psi)(x)=L(p,\tau^{-1}(p)\psi(p),\tau^{-1}(p)d\psi(p)+R_{\tau(p)*}(\tau^{-1})_{*p}\tau(p)^{-1}\psi(p))$$
Autrement dit, on n'a pas forcément $\mathfrak{L}_0(\psi)=\mathfrak{L}_0(f^{-1*}\psi)$ !
\begin{defi}
	Soit $L:J(P,V)\rightarrow\R$ un lagrangien. Soit $\mathcal{C}$ l'espace des connexions sur P. Définissons : $$\mathfrak{L}:C(P,V)\times\mathcal{C}\rightarrow\mathcal{C}^\infty(M)$$ par : 
	$$\forall x\in M,\ p\in\pi^{-1}(x),\ \psi\in C(P,V),\ \omega\in\mathcal{C},\ \mathfrak{L}(\psi,\omega)(x)=L(p,\psi(p),d^H_\omega\psi_p)$$
\end{defi}

Vérifions que $\mathfrak{L}$ est bien défini. Par définition, $$d^H_\omega\psi\in\O^{1}(P,V)$$ et comme on l'a déjà vu sous une forme légèrement différente, $$R_{g}^*(d^H_\omega\psi_{pg})=g^{-1}d^H_\omega\psi_p$$ puisque 
$$R_{g}^*(d\psi)^H=(R_{g}^*d\psi)^H=(d(R_{g}^*\psi))^H=(g^{-1}d\psi)^H=g^{-1}d^H_\omega\psi$$
donc 
$$L(pg,\psi(pg),d^H_\omega\psi_{pg})=L(pg,g^{-1}\psi,g^{-1}d^H_\omega\psi_p\circ R_{g^{-1}*})$$
ce qui donne bien le résultat voulu.

\begin{lemma}
	Soit $f$ un automorphisme de jauge $(f(pg)=f(p)g)$ et si $\tau:P\rightarrow G,\ \tau(pg)=g^{-1}\tau(p)g$ est canoniquement associé à $f$ par la relation $f(p)=\tau(p)$, pour $X\in T_pP$, on a $$f_*(X)=(L^{-1}_{\tau(p)*}\tau_*(X))^*_{f(p)}+R_{\tau(p)*}(X)$$
\end{lemma}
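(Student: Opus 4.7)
L'id�e est d'�crire $f$ comme la composition $f = a \circ (\mathrm{id}, \tau)$ o� $a : P \times G \to P$ d�signe l'action � droite de $G$ sur $P$, donn�e par $a(p,g) = pg$, puis d'appliquer la r�gle de d�rivation des fonctions compos�es en calculant s�par�ment la diff�rentielle partielle selon chacun des deux facteurs.

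Plus pr�cis�ment, fixons une courbe $\gamma : (-\epsilon, \epsilon) \to P$ telle que $\gamma(0) = p$ et $\gamma'(0) = X$. Alors $f(\gamma(t)) = \gamma(t) \cdot \tau(\gamma(t))$, et en d�rivant en $t = 0$, on doit comprendre la diff�rentielle de l'application $(q,g) \mapsto qg$ en $(p, \tau(p))$. Pour un d�placement $(X, 0)$, on obtient simplement $R_{\tau(p) *}(X)$ par d�finition de l'action � droite. Pour un d�placement vertical $(0, Y)$ avec $Y \in T_{\tau(p)}G$, on �crit $Y = L_{\tau(p) *} A$ o� $A = L^{-1}_{\tau(p) *}(Y) \in \g$, et on constate que la courbe $s \mapsto p \cdot (\tau(p) \exp(sA))$ (apr�s identification via l'exponentielle) est pr�cis�ment celle dont le vecteur tangent en $0$ est $A^*_{p\tau(p)} = A^*_{f(p)}$. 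La diff�rentielle totale en $(p, \tau(p))$ selon $(X, \tau_*(X))$ est donc la somme de ces deux contributions.

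En appliquant cela avec $Y = \tau_*(X) \in T_{\tau(p)}G$, on arrive directement � la formule souhait�e :
$$
f_*(X) = R_{\tau(p) *}(X) + \bigl(L^{-1}_{\tau(p) *}\tau_*(X)\bigr)^*_{f(p)}.
$$

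Le point d�licat est la description soigneuse de la diff�rentielle de l'action $a$ ; c'est en fait une propri�t� tr�s g�n�rale de toute action de groupe de Lie qu'on peut aussi d�river en �crivant un d�veloppement local. Une fois cette formule �tablie, le reste du calcul est m�canique. On notera qu'aucune hypoth�se sur $f$ autre que sa $G$-�quivariance n'est utilis�e, et que la d�composition obtenue refl�te exactement la d�composition verticale/horizontale au point $f(p)$ (avant m�me le choix d'une connexion) : le terme fondamental est vertical, et $R_{\tau(p) *}(X)$ est le "transport" naturel de $X$ par l'action de $\tau(p)$.
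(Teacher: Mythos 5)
Votre preuve est correcte et suit essentiellement la m\^eme d\'emarche que celle du texte : on d\'erive $t\mapsto\gamma(t)\tau(\gamma(t))$ par la r\`egle de Leibniz appliqu\'ee \`a l'action \`a droite, le terme o\`u l'on fige le point de base donnant le champ fondamental $(L^{-1}_{\tau(p)*}\tau_*(X))^*_{f(p)}$ et le terme o\`u l'on fige l'\'el\'ement du groupe donnant $R_{\tau(p)*}(X)$. Votre formulation via la diff\'erentielle de l'application d'action $a:P\times G\to P$ est simplement une mani\`ere plus structur\'ee d'\'ecrire le m\^eme calcul.
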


\begin{proof}[Preuve]
	Soit $\gamma$ une courbe représentant $X$. Les dérivées étant évaluées en $0$ :
	$$f_*(X)=\frac{d}{dt}f(\gamma(t))=\frac{d}{dt}\gamma(t)\tau(\gamma(t))=\frac{d}{dt}p\tau(p)\tau(p)^{-1}\tau(\gamma(t))+\frac{d}{dt}R_{\tau(p)}(\gamma(t))$$
	$$f_*(X)=\frac{d}{dt}f(p)\tau(p)^{-1}\tau(\gamma(t))+R_{\tau(p)*}(\gamma(t))$$
\end{proof}

Une conséquence directe de ce lemme est que si $\phi$ est une k-forme sur $P$ $V$-valuée, et si f est un automorphisme de jauge associé à $\tau$ comme dans le lemme, on a $$f^*\phi=\tau^{-1}\phi$$ 

\begin{theorem}
	Si $L$ est $G$-invariant, $\mathfrak{L}$ est invariant de jauge, au sens suivant : Si $f$ est un automorphisme de jauge, alors : 
	$$\mathfrak{L}(f^*\psi,f^*\omega)=\mathfrak{L}(\psi,\omega)$$
\end{theorem}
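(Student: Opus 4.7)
Le plan est de r�duire l'assertion � l'invariance $G$-invariance du lagrangien $L$, une fois que l'on a compris comment la d�riv�e covariante est transport�e par l'automorphisme de jauge $f$. Par d�finition, $\mathfrak{L}(f^*\psi, f^*\omega)(x) = L(p, (f^*\psi)(p), d^H_{f^*\omega}(f^*\psi)_p)$ pour un $p \in \pi^{-1}(x)$ quelconque, et on sait d�j� que $(f^*\psi)(p) = \tau(p)^{-1}\psi(p)$. Le c\oe ur du travail sera donc d'identifier $d^H_{f^*\omega}(f^*\psi)_p$ � $\tau(p)^{-1}(d^H_\omega\psi)_p$.

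Premi�rement, je d�montrerais la naturalit� $d^H_{f^*\omega}(f^*\psi) = f^*(d^H_\omega\psi)$. L'observation cl� est que $f$ envoie la distribution horizontale pour $f^*\omega$ sur celle pour $\omega$ : si $v \in T_pP$ v�rifie $(f^*\omega)_p(v) = \omega_{f(p)}(f_*v) = 0$, alors $f_*v$ est horizontal pour $\omega$. Pour un tel $v$ on a donc la cha�ne d'�galit�s
$$(d^H_{f^*\omega}(f^*\psi))_p(v) = d(f^*\psi)_p(v) = d\psi_{f(p)}(f_*v) = (d^H_\omega\psi)_{f(p)}(f_*v) = (f^*(d^H_\omega\psi))_p(v).$$
Pour un vecteur g�n�ral $v \in T_pP$, il suffit de le d�composer selon la somme directe $V_p \oplus H_p^{f^*\omega}$ : le membre de gauche s'annule sur la partie verticale par d�finition de $d^H_{f^*\omega}$, tandis que pour le membre de droite on utilise que $f$ �tant un automorphisme de jauge, $f_*$ envoie verticaux sur verticaux (car $\pi \circ f = \pi$), et que $d^H_\omega\psi$ s'annule elle aussi sur les verticaux.

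Deuxi�mement, j'invoquerais la cons�quence du lemme qui pr�c�de l'�nonc� : pour une $k$-forme $V$-valu�e horizontale et $G$-�quivariante $\phi$, on a $f^*\phi = \tau^{-1}\phi$. Comme $d^H_\omega\psi$ est � la fois horizontale et �quivariante (propri�t� d�j� v�rifi�e plus haut pour assurer le bon sens de $\mathfrak{L}$), on obtient $f^*(d^H_\omega\psi)_p = \tau(p)^{-1}(d^H_\omega\psi)_p$. Il ne reste plus qu'� appliquer l'invariance $L(p, g \cdot v, g \cdot \theta) = L(p, v, \theta)$ avec $g = \tau(p)^{-1}$ pour conclure
$$L(p, \tau(p)^{-1}\psi(p), \tau(p)^{-1}(d^H_\omega\psi)_p) = L(p, \psi(p), (d^H_\omega\psi)_p),$$
ce qui est exactement $\mathfrak{L}(f^*\psi, f^*\omega)(x) = \mathfrak{L}(\psi, \omega)(x)$. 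L'obstacle principal est l'�tape de naturalit� : tout repose sur le fait que tirer en arri�re simultan�ment la section \emph{et} la connexion restaure la compatibilit� avec les horizontaux ; c'est pr�cis�ment ce qui justifie, \emph{a posteriori}, d'avoir remplac� $\mathfrak{L}_0(\psi)$ (qui ne d�pendait que de $\psi$) par $\mathfrak{L}(\psi,\omega)$ en rajoutant la connexion comme argument. Une fois cette naturalit� acquise, le reste est purement m�canique.
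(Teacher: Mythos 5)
Votre preuve est correcte et suit essentiellement la m\^eme route que celle du texte : identification de $d^H_{f^*\omega}(f^*\psi)$ avec $f^*(d^H_\omega\psi)$, puis application de $f^*\phi=\tau^{-1}\phi$ et enfin de la $G$-invariance de $L$. La seule diff\'erence est que vous justifiez soigneusement l'\'etape de naturalit\'e (via la compatibilit\'e de $f_*$ avec les distributions horizontales de $\omega$ et $f^*\omega$ et l'annulation sur les verticaux), l\`a o\`u le texte se contente de l'affirmer en \'ecrivant directement le troisi\`eme argument sous la forme $f^*(d\psi_p+\omega_p\psi(p))$.
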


\begin{proof}[Preuve]
	$$\mathfrak{L}(f^*\psi,f^*\omega)(x)=L(p,(f^*\psi)(p),f^*(d\psi_p+\omega_p\psi(p)))$$ d'où :
	$$\mathfrak{L}(f^*\psi,f^*\omega)(x)=L(p,(f^*\psi)(p),\tau(p)^{-1}d^H_\omega\psi_p)$$
	et par G-invariance de L :
	$$\mathfrak{L}(f^*\psi,f^*\omega)(x)=\mathfrak{L}(\psi,\omega)(x)$$
\end{proof}

\subsection{Le principe de moindre action}
Soient toujours le fibré principal $P=P(M,G)$ de groupe structural $G$, et $\rho:G\rightarrow GL(V)$ une représentation de $G$. Soit $h$ une métrique sur $M$ variété d'espace-temps, qu'on suppose orientée pour qu'il y ait une forme volume $\mu$ bien définie associée à $h$. Soit $L$ un lagrangien $G$-invariant, et $\omega$ une connexion fixée.
\begin{defi}
	Soit $U$ un ouvert de $M$ d'adhérence compacte. Soit $\psi\in C(P,V)$. L'action de $\psi$ sur $U$ est : $$\mathfrak{S}_U^\omega(\psi)=\int_{U}\mathfrak{L}^\omega(\psi)\mu\ \in\R$$
\end{defi}

\begin{defi}[Principe de moindre action]
	Soit $\psi\in\mathcal{C}(P,V)$. On dit que $\psi$ est stationnaire pour $\mathfrak{L}^\omega$ si pour tout ouvert $U$ de $M$ à support compact, et pour tout $\sigma\in\mathcal{C}(P,V)$ dont l'image du support par la projection canonique est dans $U$, on a : 
	$$\frac{d}{dt}\mathfrak{S}_U^\omega(\psi+t\sigma)_{|t=0}$$
\end{defi}

\begin{rmq}
	Il est possible de montrer, à condition d'introduire de la structure sur l'espace $V$, en particulier une métrique, l'équivalence entre : $\psi$ vérifie le principe de moindre action et : $\psi$ vérifie l'équation d'Euler-Lagrange : 
	$$\delta^H(\frac{\partial L}{\partial(d^H\psi)})+\frac{\partial L}{\partial \psi}=0$$
	où on définit $\frac{\partial L}{\partial(d^H\psi)}$ comme une 1-forme sur $P$ à valeurs dans $V$ telle que $\frac{\partial L}{\partial(d^H\psi)}\circ R_{g*}=g^{-1}\frac{\partial L}{\partial(d^H\psi)}$, où $\delta^H$, définie grâce à l'étoile de Hodge généralisée, est la codifférentielle covariante qui transforme les $k$-formes sur $P$ en $(k-1)$-formes sur $P$, et où $\frac{\partial L}{\partial \psi}$ est dans $\mathcal{C}(P,V)$. 
\end{rmq}

On arrive après quelques manipulations, aux équations de champ inhomogènes : définissons 
$$\mathfrak{L}_{self}:\mathcal{C}\rightarrow\mathcal{C}^\infty(M)$$ la self-action d'une connexion, par $$\mathfrak{L}_{self}(\omega)=-\frac{1}{2}(\overline{h}k)(\O^\omega,\O^\omega)$$ à l'aide d'une métrique $\overline{h}$ sur $M$ et d'une métrique $k$ sur $\g$.
Si $L$ est un lagrangien, l'action combinée de $\psi$ et $\omega$ est définie par : $$(\mathfrak{L}+\mathfrak{L}_{self})(\psi,\omega)=\mathfrak{L}(\psi,\omega)+\mathfrak{L}_{self}(\omega)$$
Si on généralise le principe de moindre action en disant que la paire $(\psi,\omega)$ est stationnaire pour $(\mathfrak{L}+\mathfrak{L}_{self})$ si, pour tout ouvert $U$ de $M$ à support compact, pour tout $\sigma\in \mathcal{C}(P,V)$ et pour toute forme $\tau\in\overline{\O}^1(P,\g)$, on a :
$$\frac{d}{dt}_{|t=0}\int_U(\mathfrak{L}+\mathfrak{L}_{self})(\psi+t\sigma,\omega+t\tau)\mu=0$$
ce qui permet en passant de dire que la structure de l'espace de connexions $\mathcal{C}$ sur P est un espace affine, pour lequel l'espace vectoriel sous-jacent est l'espace des 1-formes sur P à valeurs dans $\g$ équivariantes pour la jauge.
On arrive alors au théorème suivant :
\begin{theorem}
	La paire $(\psi,\omega)$ est stationnaire pour $(\mathfrak{L}+\mathfrak{L}_{self})$ si et seulement si les deux conditions suivantes sont vérifiées :
	\begin{enumerate}
		\item $\delta^H(\frac{\partial L}{\partial(d^H\psi)})+\frac{\partial L}{\partial \psi}=0$
		\item $\delta^H\O^\omega=J^\omega(\psi)$
	\end{enumerate}
\end{theorem}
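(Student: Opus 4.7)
Le plan consiste � exploiter l'ind�pendance des variations $\sigma \in \mathcal{C}(P,V)$ et $\tau \in \overline{\O}^1(P,\g)$ : les deux �quations se d�coupleront puisque les conditions de stationnarit� doivent �tre v�rifi�es s�par�ment pour $(\sigma, 0)$ et pour $(0, \tau)$.

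Tout d'abord, je prendrais $\tau = 0$ et je ferais varier uniquement $\psi$. Comme $\mathfrak{L}_{self}(\omega)$ ne d�pend pas de $\psi$, sa contribution s'annule, et la condition de stationnarit� se r�duit � :
$$\frac{d}{dt}_{|t=0}\int_U \mathfrak{L}(\psi + t\sigma, \omega)\mu = 0.$$
Ceci est exactement l'hypoth�se du principe de moindre action pour $\mathfrak{L}^\omega$ � connexion fix�e ; la remarque pr�c�dant le th�or�me donne alors l'�quation d'Euler-Lagrange $\delta^H(\partial L/\partial(d^H\psi)) + \partial L/\partial\psi = 0$, c'est-�-dire la premi�re condition.

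Ensuite, je prendrais $\sigma = 0$ et je ferais varier uniquement $\omega$ par $\tau \in \overline{\O}^1(P,\g)$. Il faut calculer s�par�ment les d�riv�es en $t=0$ de $\mathfrak{L}(\psi, \omega + t\tau)$ et de $\mathfrak{L}_{self}(\omega + t\tau)$. Pour la partie mati�re, on a $d^H_{\omega + t\tau}\psi = d\psi + \rho(\omega + t\tau)\psi$, donc $\frac{d}{dt}_{|t=0} d^H_{\omega+t\tau}\psi = \rho(\tau)\psi$, et le couplage de $\partial L / \partial(d^H\psi)$ avec cette variation produit un objet qui, par dualit�, d�finit le courant $J^\omega(\psi)$ : c'est pr�cis�ment par cette identification que $J^\omega(\psi) \in \overline{\O}^{\dim M - 1}(P, \g^*)$ prend son sens physique comme source. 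Pour la self-action, l'�quation de structure $\O^\omega = d\omega + \frac{1}{2}[\omega, \omega]$ donne
$$\O^{\omega+t\tau} = \O^\omega + t(d\tau + [\omega, \tau]) + \tfrac{t^2}{2}[\tau, \tau],$$
soit $\frac{d}{dt}_{|t=0} \O^{\omega+t\tau} = d^H_\omega \tau$, en reconnaissant la d�riv�e covariante ext�rieure appliqu�e � une $1$-forme $\g$-valu�e �quivariante. La bilin�arit� de $\overline{h}k$ donne alors :
$$\frac{d}{dt}_{|t=0}\mathfrak{L}_{self}(\omega+t\tau) = -(\overline{h}k)(\O^\omega, d^H\tau).$$

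Enfin, j'int�grerais par parties � l'aide de la codiff�rentielle covariante $\delta^H$, d�finie pr�cis�ment comme l'adjoint formel de $d^H$ relativement au produit int�rieur $(\overline{h}k)$ sur les formes $\g$-valu�es �quivariantes, � support compact. Comme $U$ est � support compact et $\tau$ aussi, les termes de bord dans la formule de Stokes s'annulent, et :
$$\int_U -(\overline{h}k)(\O^\omega, d^H\tau)\mu = \int_U -(\overline{h}k)(\delta^H\O^\omega, \tau)\mu.$$
En rassemblant les contributions des deux termes et en �galant � z�ro pour tout $\tau$ � support compact, le lemme fondamental du calcul des variations donne $\delta^H\O^\omega = J^\omega(\psi)$, c'est-�-dire la deuxi�me condition.

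L'obstacle principal sera d'�tablir rigoureusement l'int�gration par parties covariante, notamment la d�finition coh�rente de $\delta^H$ sur les formes $\g$-valu�es �quivariantes et la v�rification que l'action descend bien de $P$ � $M$ pour que l'int�gration sur $U \subset M$ ait un sens. Un point technique connexe est l'identification pr�cise du courant $J^\omega(\psi)$ comme la "transform�e de Legendre" de $\partial L/\partial(d^H\psi)$ par rapport � la variable de connexion via $\rho$, ce qui requiert d'utiliser la structure d'espace affine de $\mathcal{C}$ et la compatibilit� entre $\rho$ et les m�triques.
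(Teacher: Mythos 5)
Le texte n'offre aucune d\'emonstration de ce th\'eor\`eme : il est \'enonc\'e directement apr\`es la phrase \og on arrive apr\`es quelques manipulations aux \'equations de champ inhomog\`enes \fg, et le courant $J^\omega(\psi)$ n'y est m\^eme pas d\'efini explicitement. Votre plan reconstitue donc l'argument standard (celui de Bleecker, que le m\'emoire suit), et il est correct dans ses grandes lignes : le d\'ecouplage des deux \'equations par lin\'earit\'e de la d\'eriv\'ee de Gateaux en $(\sigma,\tau)$ est l\'egitime ; la variation en $\psi$ seule redonne bien l'\'equation d'Euler--Lagrange d\'ej\`a admise dans la remarque pr\'ec\'edente ; le d\'eveloppement $\O^{\omega+t\tau}=\O^\omega+t(d\tau+[\omega,\tau])+\frac{t^2}{2}[\tau,\tau]$ est exact, et l'identification $d\tau+[\omega,\tau]=d^H_\omega\tau$ est valable pr\'ecis\'ement parce que $\tau\in\overline{\O}^1(P,\g)$ est horizontale et \'equivariante (c'est la diff\'erence de deux connexions). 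Deux points m\'eritent toutefois d'\^etre verrouill\'es : d'une part l'annulation des termes de bord dans l'int\'egration par parties exige que le support de $\tau$ se projette dans $U$ (la d\'efinition de stationnarit\'e donn\'ee dans le texte ne l'impose que pour $\sigma$, il faut l'ajouter pour $\tau$) ; d'autre part, puisque $J^\omega(\psi)$ n'est pas d\'efini ailleurs, votre lecture \og le courant est l'objet obtenu par dualit\'e \`a partir du couplage de $\frac{\partial L}{\partial(d^H\psi)}$ avec $\rho(\tau)\psi$ \fg{} est la bonne, mais elle fixe aussi la convention de signe de la seconde \'equation : il faut donc la poser comme d\'efinition et v\'erifier que l'objet obtenu est bien une $(n-1)$-forme \'equivariante \`a valeurs dans $\g$ (via la m\'etrique $k$), faute de quoi l'\'egalit\'e $\delta^H\O^\omega=J^\omega(\psi)$ n'aurait pas de sens. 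Ce sont exactement les obstacles que vous signalez vous-m\^eme en fin de plan ; le sch\'ema de preuve est donc recevable.
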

où $J^\omega(\psi)$ est la courant associé à la paire $(\psi,\omega)$

\chapter{Théorie électrofaible et boson BEH}
Après avoir présenté la structure des théories de jauge de manière générale, nous allons en prendre un cas particulier afin de dériver le modèle standard électrofaible (ou modèle de Glashow - Salam - Weinberg). De la même manière que l'histoire de l'apparition des concepts permet de mieux saisir les idées profondes des théories de jauge, le modèle standard électrofaible est indissociable des développements historiques qui y ont mené. Nous présentons tout d'abord les points les plus importants qui ont marqué la découverte puis l'étude de cette interaction fondamentale, puis nous dériverons la théorie à partir des idées du chapitre précédent - c'est-à-dire que nous prendrons comme postulat de base que \textbf{la théorie électrofaible a une structure de jauge, de groupe structural $SU(2)\times U(1)$}. Enfin, nous présenterons les idées actuelles qui visent à dépasser ce modèle standard de la physique des particules : les théories "Beyond the Standard Model" (BSM) et les conséquences qu'elles ont sur les quantités mesurables.\\\\
\textit{"Or, pour l'esprit scientifique, tracer nettement une frontière, c'est déjà la dépasser."} (G. Bachelard)

\section{Contexte historique et construction de la théorie physique}
\subsection{La découverte de l'interaction faible et le postulat du neutrino}
\textbf{Henri Becquerel découvre la radioactivité en 1896.} La formulation de ses résultats dans les comptes rendus de l'Académie des Sciences du 24 février 1896 est restée célèbre. Voici comment il décrit ses travaux : \\
"On enveloppe une plaque photographique Lumière, au gélatinobromure, avec deux feuilles de papier noir très épais, tel que la plaque ne se voile pas par une exposition au Soleil, durant une journée. On pose sur la feuille de papier, à l'extérieur, une plaque de la substance phosphorescente, et l'on expose le tout au Soleil, pendant plusieurs heures. Lorsqu'on développe ensuite la plaque photographique, on reconnaît que la silhouette de la substance phosphorescente apparaît en noir sur le cliché. Si l'on interpose entre la substance phosphorescente et le papier une pièce de monnaie, ou un écran métallique percé d'un dessin à jour, on voit l'image dè ces objets apparaître sur le cliché. On peut répéter les mêmes expériences en interposant entre la substance phosphorescente et le papier une mince lame de verre, ce qui exclut la possibilité d'une action chimique due à des vapeurs qui pourraient émaner de la substance échauffée par les rayons solaires. On doit donc conclure de ces expériences que la substance phosphorescente
en question émet des radiations qui traversent le papier opaque à la lumière et réduisent les sels d'argent."\\
La radioactivité a ensuite été observée par \textbf{Pierre et Marie Curie} dans le thorium, et dans des "nouveaux" éléments : le polonium et le radium.

En 1899, Rutherford classe la radioactivité en deux catégories différentes, en se basant sur la pénétrabilité des émissions : la \textbf{radioactivité alpha}, stoppée par de minces feuilles de papier ou d'aluminium, et la \textbf{radioactivité bêta}, pouvant pénétrer plusieurs millimètres d'aluminium.

En 1900, Paul Villard découvre des radiations encore plus pénétrantes que Rutherford identifie en 1903 à un nouveau type d'émissions, appelées \textbf{rayons gamma}.

Toujours en 1900, Becquerel prouve que les particules émises par radioactivité bêta sont des électrons, en mesurant le quotient de leur masse par leur charge. Un an plus tard, Rutherford et Frédéric Soddy montrent que la radioactivité bêta implique la transmutation des atomes impliqués, en atomes d'un autre élément chimique. Après plus d'analyses, en 1913, Soddy et Kazimierz Fajans proposent indépendamment leur \textbf{loi de déplacement par radioactivité}, qui relie la radioactivité bêta à un déplacement d'une case à droite dans le tableau périodique, et la radioactivité alpha à un déplacement de deux cases à gauche.\\\\

En 1914, James Chadwick étudie le spectre énergétique de l'électron émis par radioactivité bêta. Compte tenu de la récente théorie de la relativité restreinte, il attend un pic correspondant à l'énergie de différence de masse entre le noyau avant désintégration et le noyau après désintégration. Or voilà qu'il mesure un spectre continu, qui s'étend d'environ zéro jusqu'à cette valeur de différence de masse, comme si de l'énergie était perdue durant la désintégration. Un autre problème est la conservation du moment angulaire : par exemple, le spin du noyau $^{14}_7N$ est un entier, contrairement à ce que prévoit la désintégration 
$$^{14}_6C \rightarrow ^{14}_7N + e^{-}$$
Ces résultats alors inexplicables sont confirmés par de nombreuses expériences entre 1920 et 1927.\\\\
La révolution vient d'une lettre écrite par Pauli en 1930, pleine d'humour, où il suggère, presque en s'excusant d'une telle hypothèse, qu'une petite particule neutre, de petite masse (moins de 1\% de la masse du proton), et contenue dans le noyau tout comme les protons et les électrons, pourrait être émise durant la désintégration. Il la baptise neutron.

\begin{figure}[!h]
	\centering
	\includegraphics[scale=0.85]{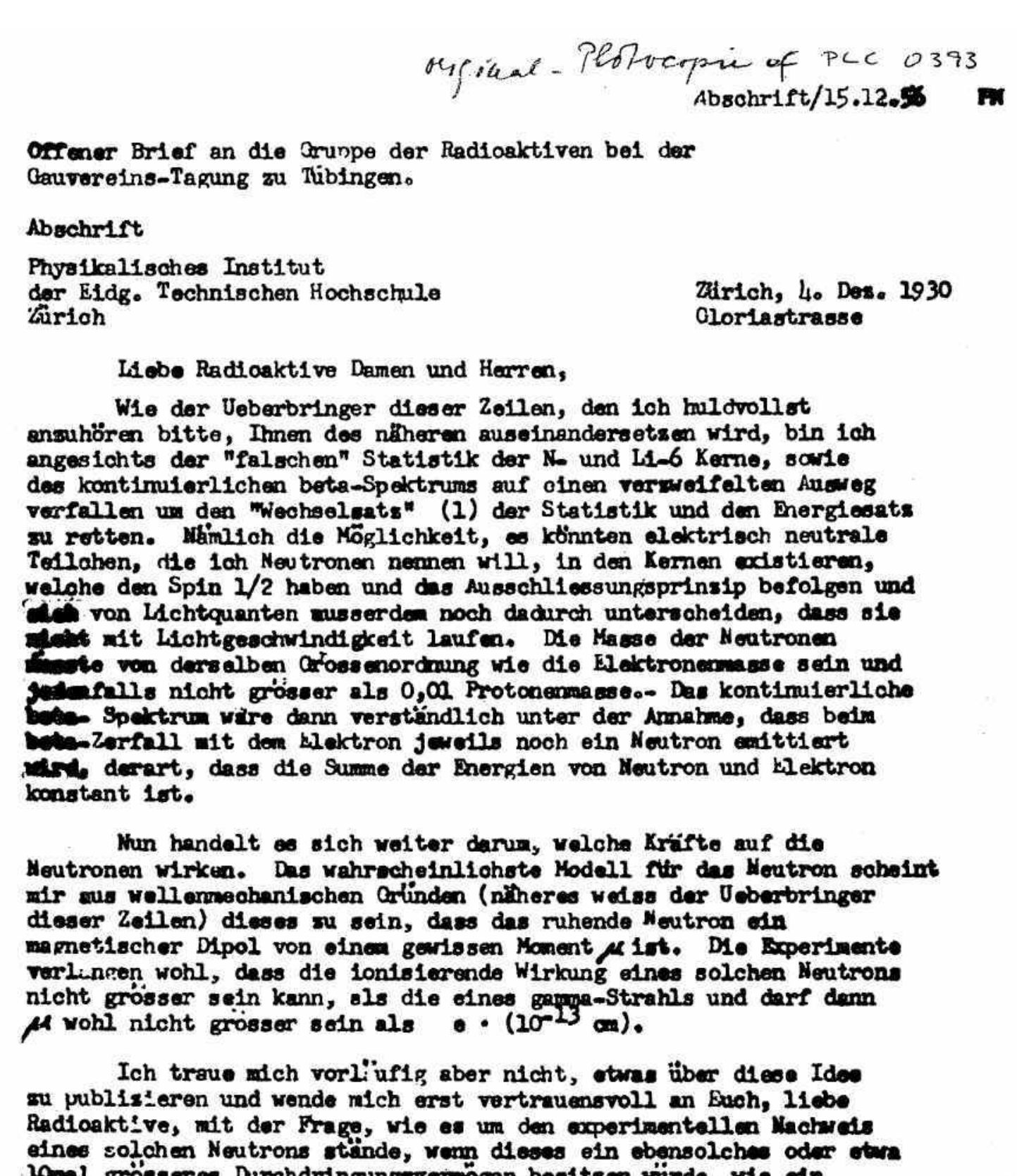}
	\caption{Début de la lettre de Pauli}
\end{figure}

En 1932, après la découverte de ce qui est appelé aujourd'hui \textbf{neutron} par J. Chadwick en étudiant les travaux des Joliot-Curie, Fermi renomme (en accord avec Pauli) la particule de Pauli \textit{neutrino}, ou petit neutron\cite{wikbeta}.

\subsection{La théorie de Fermi}
En 1933, Fermi propose une théorie de contact à quatre fermions de l'interaction faible, qui explique la radioactivité bêta. Son article est refusé par le journal \textit{Nature} ; il publie donc en italien, et son article n'est traduit en anglais qu'en 1968 par Fred Wilson\cite{bosonchapeau}.
Cette théorie est révolutionnaire, dans la mesure où il s'agit d'une théorie quantique des champs : ni l'électron ni le neutrino ne préexistent dans le noyau, il faut donc pouvoir faire apparaître et disparaitre des particules.

\begin{figure}[!h]
	\centering
	\includegraphics[scale=0.7]{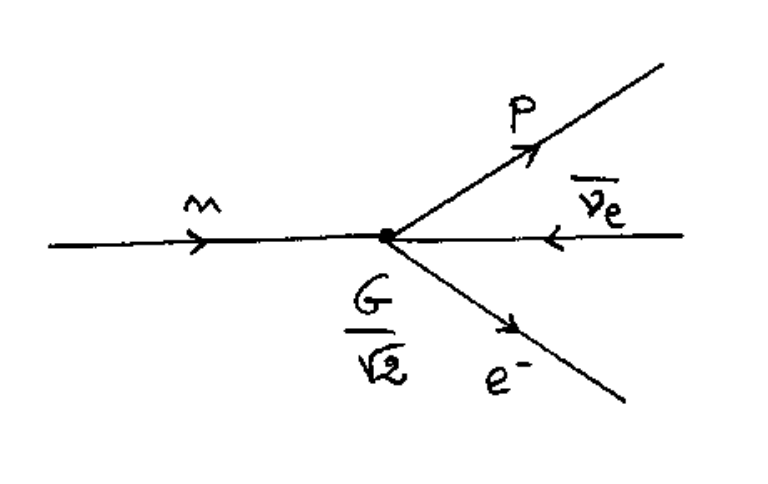}
	\caption{Diagramme de Feynman de la désintégration du neutron dans la théorie de Fermi}
\end{figure}

La théorie de Fermi est en fait très générale, on peut suivre le même état d'esprit et dériver une théorie de Fermi des interactions faibles entre leptons (plus proche de l'approche que nous allons suivre tout à l'heure). Dans cette théorie, le hamiltonien d'interaction s'écrit (avec quelques anachronismes) : 
$$ \mathcal{H}_I^{(F)}=\frac{G}{\sqrt{2}}J^\alpha(x)J_\alpha^\dagger(x) $$
avec $$J_\alpha(x)=\sum_{l}\overline{\psi}_l(x)\gamma_\alpha(1-\gamma_5)\psi_{\nu_l}(x)$$
et
$$J_\alpha^\dagger(x)=\sum_{l}\overline{\psi}_{\nu_l}(x)\gamma_\alpha(1-\gamma_5)\psi_l(x)$$
où la sommation porte sur les différentes familles de leptons.

En effet, les données expérimentales semblent montrer que seules ces combinaisons des champs entrent dans l'expression de l'interaction. Pour la désintégration du muon par exemple, la théorie de Fermi donne un processus du premier ordre qu'on peut représenter par le diagramme de Feynman suivant :
\begin{figure}[!h]
	\centering
	\includegraphics[scale=0.65]{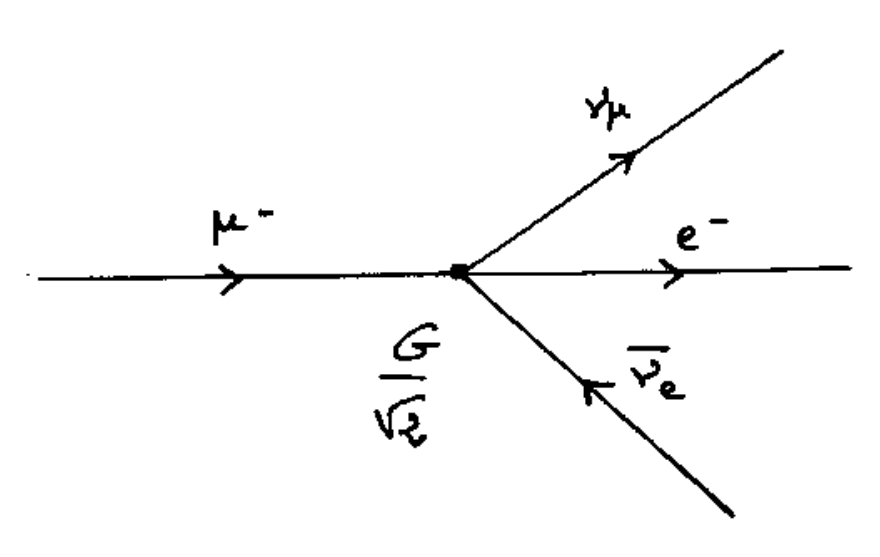}
	\caption{Diagramme de Feynman de la désintégration du muon, dans la théorie de Fermi}
\end{figure}

Cependant, la théorie de Fermi est une approximation de basse énergie d'une théorie plus large, encore à découvrir. Aujourd'hui, on sait que l'interaction est transmise par des particules massives ; il est possible de montrer que la théorie de Fermi correspond à la limite de la théorie IVB (voir ci-dessous) où on considère que les bosons vecteurs sont infiniment lourds. Aux échelles d'énergies accessibles à l'époque, cette approximation est excellente (le projet Manhattan a reposé en entier sur cette théorie, et le résultat montre bien que les idées de Fermi permettaient de décrire les interactions faibles avec une grande précision, du moins à ces échelles d'énergie).

\subsection{L'idée de bosons vecteurs, la découverte du muon et du pion}

En 1935, Yukawa, en étudiant l'interaction forte, prédit l'existence d'une particule médiatrice ; il estime sa masse à partir de la portée de cette interaction : 100 MeV. Il l'appelle méson (du grec mésos qui signifie intermédiaire), puisque sa masse est comprise entre celle de l'électron (0,51 MeV) et celle du nucléon (0,94 GeV).

En 1936, alors qu' Anderson et Neddermeyer étudient les rayons cosmiques afin d'y trouver le méson $\pi$ prédit par le modèle de Yukawa, ils observent une particule de masse 106 MeV qu'ils appellent \textit{mésotron}. L'existence de cette particule est confirmée par Street et Stevenson dans une chambre à bulle, un an plus tard.

Cependant, elle ne semble pas participer à l'interaction forte ; par exemple, elle a une très grande pénétrabilité dans la matière.

Le véritable méson $\pi$ est découvert en 1947 par Cecil F. Powell, toujours dans les rayons cosmiques, dans une chambre à bulles installée sur les sommets de la Cordillère des Andes. Il a quant'à lui les propriétés prédites par Yukawa.

Le mésotron est alors renommé en \textbf{méson $\mu$}. Lors de l'élaboration du modèle standard dans les années 70, le terme méson a été assigné aux particules formées d'un quark et d'un anti-quark, comme le $\pi$, qu'on appelle désormais méson $\pi$, ou pion, et qui forme un triplet d'isospin fort $\pi^-$, $\pi^+$ et $\pi^0$. Ainsi, le méson $\mu$, qui est un \textit{lepton} (du grec leptos : faible) puisqu'il n'est pas constitué de quarks, et a été renommé \textbf{muon}.

L'idée que les interactions faibles sont dues à l'échange de bosons massifs semble avoir été proposée par Klein en 1938 \cite{Klein}.

\subsection{L'énigme $\Theta-\tau$}
En 1949, Powell identifie deux nouvelles particules cosmiques, l'une qu'il appelle méson $\tau^+$ (ce n'est pas le lepton $\tau$) et qui se désintègre en trois pions selon : 
$$\tau^+\rightarrow \pi^+\pi^+\pi^-$$
et l'autre qu'il nomme $\Theta^+$, qui se désintègre en deux pions selon : 
$$\Theta^+\rightarrow\pi^+\pi^0$$
Le calcul des propriétés de ces particules montrent qu'elles sont indistinguables si ce n'est par leur mode de désintégration. Cela ne pose pas de problème en soi puisqu'il pourrait s'agir de deux désintégrations différentes de la même particule. Cependant, la conservation de la parité semble alors violée : en effet la parité du $\pi$ étant $(-1)$, le système de deux pions a une parité paire tandis que la système de trois pions une parité impaire. Par conséquent, il ne peut pas s'agir de la même particule.\\\\
Lors d'un séminaire à Rochester en avril 1956, la question est évoquée et discutée par (entre autres) Feynman, Block, Lee et Yang. Lee et Yang proposent notamment leur idée de "parity doubling", selon laquelle certaines particules peuvent exister dans deux états de parité différente. Les différentes interventions mènent à une sorte de tremblement de terre : et si la conservation de la parité n'était pas une loi fondamentale ? Cette idée, vraiment révolutionnaire puisque la conservation de la parité avait été observée avec un très grand degré de précision dans les interactions électromagnétiques et fortes, et n'avait encore jamais été mise en défaut, a mené au célèbre papier de Lee et Yang \cite{LeeYang} : \textit{Question of Parity Conservation in Weak Interactions}, dans lequel ils proposent une expérience permettant de tirer la question au clair, en étudiant la désintégration du $^{60}Co$. Madame Wu, à l'université de Berkeley, réalise cette expérience en 1957 pourtant difficile sur le plan technique, et observe bien la violation de la symétrie de parité comme cela avait été suggéré \cite{Wu}. L'interaction faible, pour laquelle on commence à peine à obtenir des données expérimentales, semble bien contre intuitive...

\subsection{Des avancées majeures en physique des neutrinos}
De 1953 à 1956, Cowan et Reines tentent un expérience (qui aboutit en 1956) qui vise à montrer que \textbf{le neutrino est bien une particule libre}.
En utilisant le flux (alors hypothétique) d'antineutrinos produits par un réacteur nucléaire ($5.10^{13}$ antineutrinos par seconde), le but était de faire interagir les particules (s'il en était) avec des protons de l'eau d'une "piscine" selon la réaction : $$\overline{\nu}_e+p\rightarrow n + e^+$$
décrite par :
\begin{figure}[!h]
	\centering
	\includegraphics[scale=0.8]{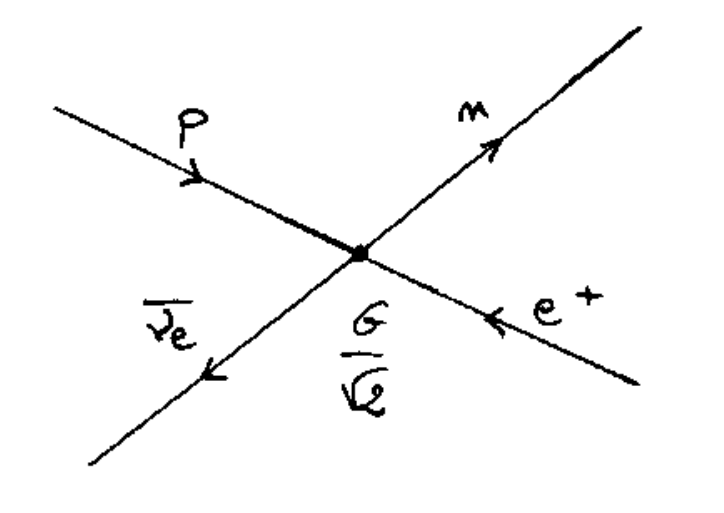}
	\caption{Diagramme de Feynman de l'interaction anti-neutrino/proton dans la théorie de Fermi}
\end{figure}

suivie peu après de $$e^+ + e^- \rightarrow 2\gamma$$ calculé grâce au diagramme :
\begin{figure}[!h]
	\centering
	\includegraphics[scale=0.8]{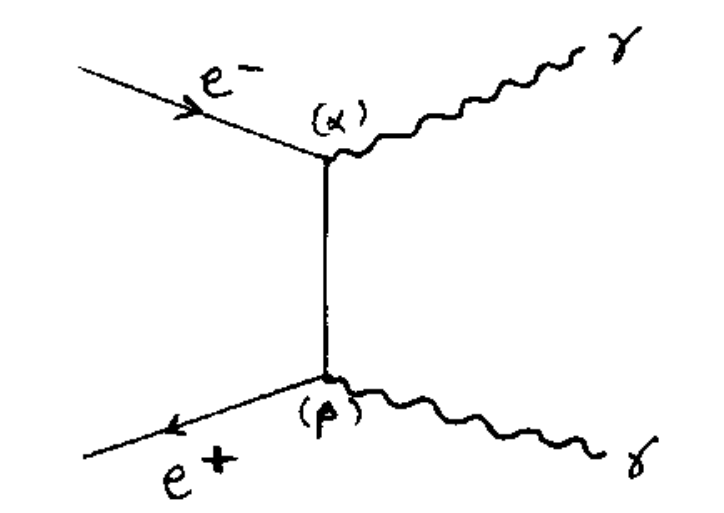}
	\caption{Diagramme de Feynman à l'ordre le plus bas de l'annihilation positron-électron en électrodynamique quantique}
\end{figure}

Les photons ainsi produits ayant une énergie considérable, il est possible de les détecter en utilisant des scintillateurs, qui produisent des douches de photons moins énergétiques dont on mesure l'énergie totale à l'aide de photomultiplicateurs. En fait cette expérience n'était pas assez concluante, et du cadmium (qui est un très bon absorbeur de neutrons) a à terme été utilisé pour obtenir davantage d'informations mais le principe de l'expérience reste identique : 
$$n+^{108}Cd\rightarrow ^{109}Cd^*\rightarrow ^{109}Cd+\gamma$$
Les résultats ont été publiés en 1956 \cite{neutrino}, et Reines a reçu en 1995 le prix Nobel de physique en leurs noms (Cowan a disparu en 1974).

\subsection{La théorie IVB leptonique}
\paragraph{L'apparition de la théorie IVB}
L'idée de cette théorie est de décrire les interactions faibles comme un \textbf{échange de bosons vecteurs massifs}. Schwinger achève la structure de cette théorie en 1957. Nous allons la présenter de manière un peu différente, puisqu'en 1957 manquaient de très nombreuses particules (comme le neutrino muonique par exemple) avec lesquelles nous sommes aujourd'hui familiers, et parce qu'il est simple est satisfaisant d'utiliser et de mettre en valeur la symétrie existant entre les trois familles de leptons $e$, $\mu$ et $\tau$. On suppose donc connus le neutrino muonique, découvert en 1962 grâce à l'étude de la désintégration du muon $$ \mu^-\rightarrow e^- + \overline{\nu}_e + \nu_\mu$$ ainsi que le lepton $\tau$ découvert par Perl (qui d'ailleurs a reçu la deuxième moitié du prix Nobel de 1995 pour cette contribution) grâce à des collisions $e^-e^+$ à l'accélérateur SLAC de Stanford \cite{tau}. Voici l'abstract de cet article : "We have found events of the form $e^++e^-\rightarrow e+\mu$+missing energy, in which no other charged particles or photons are detected. Most of these events are detected at or above a center-of-mass energy of 4 GeV. The missing-energy and missing-momentum spectra require that at least two additional particles be produced in each event. We have no conventional explanation for these events." L'explication est alors apparue comme : 
$$e^++e^-\rightarrow \tau^+ + \tau^- \rightarrow e + \mu + 4\nu $$ 
la masse du $\tau$, maintenant mesurée précisément vers 1777 GeV, explique l'échelle d'énergie de la résonance. Le neutrino tauique n'a été véritablement observé qu'en 2000 par la collaboration DONUT de Fermilab, cependant l'expérience LEP du CERN a pu montrer dès 1990 qu'il n'existait qu'au plus trois familles de neutrinos légers.\\\\
Rappelons qu'en électrodynamique quantique, le hamiltonien d'interaction s'écrit : $$\mathcal{H}_{QED}(x)=-e\overline{\psi}(x)\gamma^\alpha\psi(x)A_\alpha(x)$$qu'on peut réécrire en : $$\mathcal{H}_{QED}(x)=(-e)J^\alpha A_\alpha$$ où $$J^\alpha=\overline{\psi}(x)\gamma^\alpha\psi(x)$$ est le courant spinoriel conservé de l'équation de Dirac. Autrement dit, le hamiltonien d'interaction est le produit du courant par le champ de jauge, celui qui transporte la force, moyennant une constante de couplage $(-e)$ adimensionnée (on est dans un système d'unité adaptée ou $c=1$ et $\hbar=1$ mais la constante de structure fine, proportionnelle à $e^2$, est bien adimensionnée).
En 1957, les quelques expériences déjà réalisées, et étudiées, tendent à montrer que les courants mis en jeu dans les interactions faibles sont ceux données par la théorie de Fermi.

\paragraph{Construction de la théorie IVB}
Par analogie avec la QED a donc été proposé le hamiltonien d'interaction suivant :
$$ \mathcal{H}_I(x)=g_WJ^{\alpha\dagger}W_\alpha(x)+g_WJ^\alpha(x)W_\alpha^\dagger(x)$$
puisque interviennent dans la théorie de Fermi leptonique deux courants spinoriels faibles (et où $g_W$ est une constante de couplage adimensionnée). Cette interaction conserve les nombres leptoniques définis par : 
$$
\left( \begin{array}{c}
	N(e)=N(e^-)-N(e^+)+N(\nu_e)-N(\overline{\nu}_e) \\
	N(\mu)=N(\mu^-)-N(\mu^+)+N(\nu_\mu)-N(\overline{\nu}_\mu) \\
	N(\tau)=N(\tau^-)-N(\tau^+)+N(\nu_\tau)-N(\overline{\nu}_\tau)\\
\end{array} \right.
$$
Cette théorie est également satisfaisante car elle n'est pas invariante par symétrie de parité. En effet, considérons le courant : $$J_\alpha(x)=\sum_{l}\overline{\psi}_l(x)\gamma_\alpha(1-\gamma_5)\psi_{\nu_l}(x)$$ Sous la transformation $P:(t,\vec{x})\rightarrow(t,-\vec{x})$, $J_\alpha(x)$ n'est pas invariant car $\gamma_5$ est un pseudo-scalaire. Il en est de même pour $J_\alpha(x)^\dagger$. 

La théorie IVB a des conséquences frappantes. Supposons tout d'abord que les neutrinos ont une masse nulle. $P_L=\frac{1-\gamma_5}{2}$ est le projecteur sur les états d'hélicité négative, et pour toute particule de masse nulle, ces états sont des états propres de l'opérateur d'hélicité. Par conséquent, l'opérateur : $$\psi_{\nu_l}^L=\frac{1}{2}(1-\gamma_5)\psi_{\nu_l}$$ ne peut annihiler que des neutrinos d'hélicité négative et créer que des antineutrinos d'hélicité positive. 

Pour des particules massives, ce sont "presque" des états propres à condition qu'elles soient ultra-relativistes, donc qu'elles se comportent presque comme des particules non massives. En posant : $$\psi^L_{l}(x)=P_L\psi_l(x)$$ on peut alors réécrire : $$J_\alpha(x)=4\sum_l \overline{\psi}_l^L(x)\gamma_\alpha\psi_{\nu_l}^L(x)$$
Le spin des leptons massifs est aussi contraint par l'interaction faible.\\

Les bosons vecteurs massifs peuvent être décrits par l'équation de Proca : $$\Box W^\alpha(x)+m_W^2 W^\alpha(x)=0$$ 
La quantification du champ est menée de manière "canonique" : 
$$W^{\alpha +}(x)=\sum_k\sum_r \sqrt{\frac{1}{2V\omega_k}}\epsilon_r^\alpha(\vec{k})a_k(\vec{k})e^{-ikx}$$
$$W^{\alpha -}(x)=\sum_k\sum_r \sqrt{\frac{1}{2V\omega_k}}\epsilon_r^\alpha(\vec{k})b_k^\dagger(\vec{k})e^{ikx}$$
et conduit à un propagateur de la forme : $$iD_F^{\alpha\beta}(k,m_W)=\frac{i(-g^{\alpha\beta}+\frac{k^\alpha k^\beta}{m_W^2})}{k^2-m_W^2+i\epsilon}$$

\paragraph{Applications de la théorie IVB}
La théorie IVB permet de calculer \textbf{au premier ordre} des sections efficaces, des temps de demi-vie, ou des grandeurs reliées à ces quantités. Pour la désintégration d'une particule : $$P\rightarrow P_1' + ... + P'_N$$ l'élément de matrice s'écrit, en assignant la sommation sur les états finaux à la variable f et sur les différents leptons à la variable l (la lettre i rapporte à l'état initial) :
$$S_{fi}=\delta_{fi}+(2\pi)^4\delta^{(4)}(\sum p'_f -p)\sqrt{\frac{1}{2VE}}\prod_f\sqrt{\frac{1}{2VE'_f}}\prod_l\sqrt{2m_l}\mathcal{M}$$ 
On obtient l'expression du taux de désintégration différentiel pour le processus dont il est question où la particule $P'_1$ à un moment $\vec{p'_1}$ à $d\vec{p'_1}$ près :
$$d\Gamma=(2\pi)^4\delta^{(4)}(\sum p'_f -p)\frac{1}{2E}\prod_l\sqrt{2m_l}\prod_f\frac{d\vec{p_f'}}{(2\pi)^32E'_f}|\mathcal{M}|^2$$ 
et le taux de désintégration $\Gamma$ en intégrant sur les variables finales. Si il y a plusieurs modes de désintégration possibles, on défini le branching ratio d'un mode par : $$B=\frac{\Gamma}{\sum \Gamma}$$ Le temps de vie d'une particule est : $$\tau=\frac{1}{\sum\Gamma}=\frac{B}{\Gamma}$$
La désintégration du muon $$\mu^-(p,r)\rightarrow e^-(p',r')+\overline{\nu_e}(q_1,r_1)+\nu_\mu(q_2,r_2)$$ est décrite au premier ordre par le graphe de Feynman :

\begin{figure}[!h]
	\centering
	\includegraphics[scale=0.9]{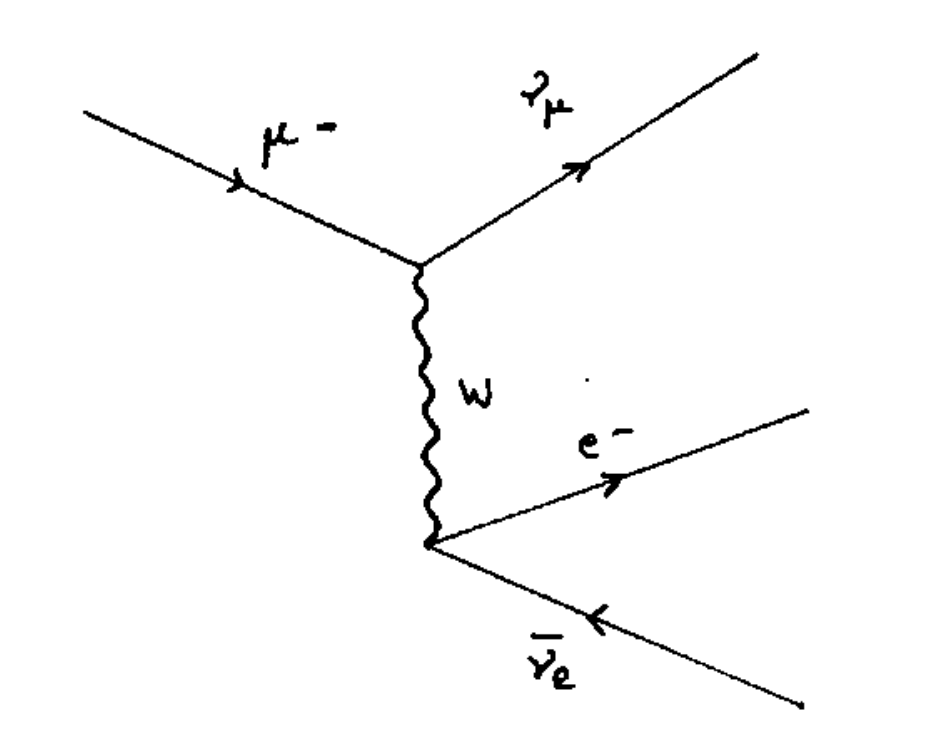}
	\caption{Diagramme de Feynman au premier ordre de la désintégration du muon dans la théorie IVB}
\end{figure}

correspondant à l'amplitude de Feynman
$$\mathcal{M}=-g_W^2[\overline{u}(\vec{p'})\gamma^\alpha(1-\gamma_5)v(\vec{q_1})]\frac{i(-g^{\alpha\beta}+\frac{k^\alpha k^\beta}{m_W^2})}{k^2-m_W^2+i\epsilon}[\overline{u}(\vec{q_2})\gamma^\beta(1-\gamma_5)u(\vec{p})]$$
Dans la limite où $m_W\rightarrow\infty$ (prendre une masse finie entraîne des corrections en $10^{-6}$), on obtient l'expression de l'amplitude de Feynman suivante (la même amplitude aurait été dérivée de la théorie de contact de Fermi) : 
$$\mathcal{M}=-\frac{iG}{\sqrt{2}}[\overline{u}(\vec{p'})\gamma^\alpha(1-\gamma_5)v(\vec{q_1})][\overline{u}(\vec{q_2})\gamma^\beta(1-\gamma_5)u(\vec{p})]$$ avec $$\frac{G}{\sqrt{2}}=(\frac{g_W}{m_W})^2$$
Le calcul donne : $$\Gamma=\frac{G^2m_\mu^5}{192\pi^3}$$
Le branching ratio de cette désintégration étant de 98,6\%, la mesure du temps de demi-vie du muon permet de déduire : $$G=(1,16637±0.00002)\times10^{-5}\ GeV^{-2}$$

C'est en fait la mesure du temps de demi-vie du muon qui a permis (historiquement) la mesure de la constante de couplage $g_W$. Après avoir fixé cette constante, la théorie IVB devient prédictive, c'est-à-dire qu'elle permet de calculer la section efficace de divers processus, comme : 

\begin{figure}[!h]
	\centering
	\includegraphics{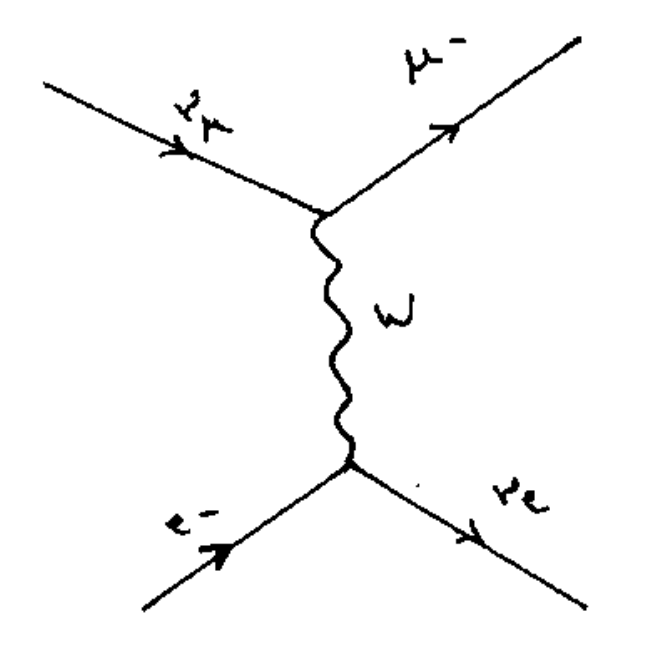}
	\caption{"Inverse muon decay" au premier ordre}
\end{figure}

dit "inverse muon decay". L'expérience est en très bon accord avec les prévisions.

\paragraph{Problèmes soulevés par la théorie IVB}
Malgré les immenses apports de la théorie IVB, certains problèmes subsistent. Par exemple, la théorie ne peut pas décrire des procédés comme $$\nu_\mu + e^- \rightarrow \nu_\mu + e^-$$ qui sont pourtant permis par les lois de conservation. En effet, les courants de la théorie IVB couplent un lepton chargé avec un lepton neutre. La contribution d'ordre le plus faible (en théorie IVB) au procédé $\nu_\mu + e^- \rightarrow \nu_\mu + e^-$ est donnée par les graphes de Feynman :

\begin{figure}[!h]
	\centering
	\includegraphics[scale=0.85]{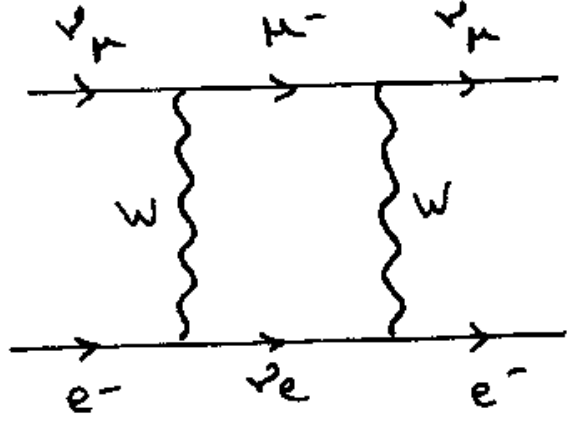}
	\caption{Première contribution au premier ordre}
\end{figure}
\begin{figure}[!h]
	\centering
	\includegraphics[scale=0.85]{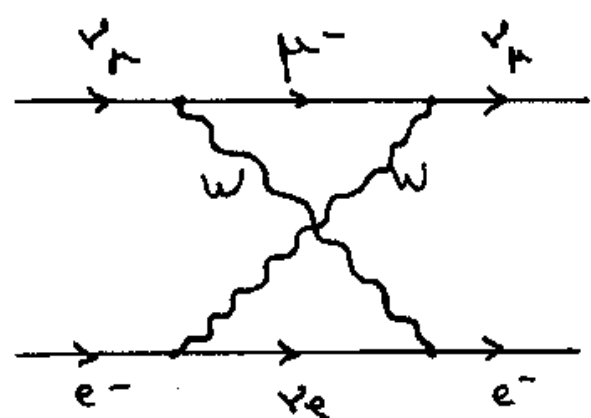}
	\caption{Deuxième contribution au premier ordre}
\end{figure}

qui impliquent le calcul d'intégrales divergentes lorsqu'on les évalue. La théorie IVB n'est pas renormalisable, et il n'est pas possible de s'affranchir de ces infinis.

De plus, les sections efficaces de la diffusion d'un neutrino électronique par des électrons et de la diffusion d'un neutrino muonique par un électron doivent être du même ordre de grandeur, ce qui n'est pas le cas dans le cadre IVB : $\nu_\mu + e^- \rightarrow \nu_\mu + e^-$ correspond à du deuxième ordre, comme s'il manquait un courant couplant des leptons neutres à des leptons neutres et des leptons chargés à des leptons de charge opposée ...
Jusqu'en 1973 cependant, toutes les expériences concordaient avec le fait que l'interaction faible soit transmise par les bosons $W^{±}$.

\subsection{La théorie de jauge électrofaible et les découvertes postérieures}
Compte tenu des courants faibles qui interviennent dans le lagrangien d'interaction de la théorie de Fermi, et grâce aux avancées de la théorie au sujet des interactions fortes, l'idée de Yang et Mills d'identifier proton et neutron à deux états du nucléon est reprise : on rassemble les leptons non neutrinoïques et leur neutrino correspondant dans un \textbf{doublet d'isospin} (symétrie $SU(2)$) si ils sont de bonne chiralité. Dans un premier temps, on suppose les leptons sans masse ce qui permet d'écrire le lagrangien de Dirac comme : 
$$\mathcal{L}=\overline{\psi}_l\slashed{\partial}\psi_l+\overline{\psi}_{\nu_l}\slashed{\partial}\psi_{\nu_l}$$
où la sommation sur les différents leptons est sous-entendue.
Puisque les leptons ne sont pas massifs, on peut projeter sur les différents états de chiralité (qui sont bien définis): 
$$\mathcal{L}=\overline{\psi}_l^L\slashed{\partial}\psi_l^L+\overline{\psi}_l^R\slashed{\partial}\psi_l^R+\overline{\psi}_{\nu_l}^L\slashed{\partial}\psi_{\nu_l}^L+\overline{\psi}_{\nu_l}^R\slashed{\partial}\psi_{\nu_l}^R$$
où les exposants $L$ et $R$ décrivent l'état de chiralité, gauche ou droit, des différents fermions.
On écrit donc : 
$$\mathcal{L}=\overline{\Psi}_l^L\slashed{\partial}\Psi_l^L+\overline{\psi}_l^R\slashed{\partial}\psi_l^R+\overline{\psi}_{\nu_l}^R\slashed{\partial}\psi_{\nu_l}^R$$
où $$\Psi_l^L(x)=\left( \begin{array}{c}
\psi_{\nu_l}^L(x) \\
\psi_l^L(x) \\
\end{array} \right) \ \ \ \ \slashed{\partial}\Psi_l^L(x)=\left( \begin{array}{c}
\slashed{\partial}\psi_{\nu_l}^L(x) \\
\slashed{\partial}\psi_l^L(x) \\
\end{array} \right)$$
Autrement dit, avant la brisure de cette symétrie $SU(2)$, \textbf{un lepton et son neutrino leptonique associé sont deux facettes d'une et une seule meme particule !}.

\begin{rmq}
	Il est nécessaire de construire la théorie de jauge en partant de fermions et de bosons non massifs. Dans le cas contraire, si par exemple on rajoute les termes correspondants à des fermions massifs : $$-m_l\overline{\psi}_l(x)\psi_l(x)$$ on perd l'invariance de jauge. On pourrait alors essayer de remplacer ce terme par quelque chose de plus compliqué, invariant de jauge mais contenant le terme précédent ; seulement on aboutit alors en général à une théorie non renormalisable. Ce ne sont donc pas des solutions acceptables. Le mécanisme de brisure spontanée de symétrie consiste alors à rajouter des champs de particule qui brisent la symétrie de jauge, au sens que le lagrangien reste invariant mais que les solutions de plus basse énergie ne sont pas $SU(2)\times U(1)$-symétriques. En réécrivant le lagrangien dans un système de coordonnées particulier, on aboutit alors à une forme qu'on peut réinterpréter comme : les $e$, $\mu$, $\tau$, $W^+$, $W^-$ et $Z^0$ ont acquis de la masse, le photon $\gamma$ reste non massif et une nouvelle particule scalaire est apparue, qu'on appelle boson BEH.
\end{rmq}

En suivant les idées déjà présentées qui lient l'invariance de jauge et l'existence de courants conservés, on aboutit à deux courants conservés à partir desquels on peut obtenir les deux courants faibles chargés de la théorie IVB, et de manière plus surprenante, à un troisième courant faible conservé, $J_W^3$, qui est la somme de deux termes, dont l'un est proportionnel au courant électromagnétique ! C'est un premier signe de l'unification des interactions faibles et électromagnétique.

L'élaboration théorique de la théorie de jauge électrofaible se termine dans les années 1968-1969, sous l'impulsion de physiciens tels que Glashow, Weinberg, Salam (qui partagent le prix Nobel en 1979), Iliopoulos ou Maiani, notamment en incluant le phénomène de brisure spontanée de la symétrie de jauge SU(2), aboutissement des travaux réalisés par Goldstone, puis Brout, Englert et Higgs. Comme cela a été montré en 1971 par Veltman et 't Hooft, la théorie que l'on obtient ainsi est par contre, et de manière satisfaisante, renormalisable.

\paragraph{Découverte des bosons de jauge électrofaibles}
La théorie prévoit donc, en plus du photon connu depuis plus d'un siècle, et des deux bosons massifs $W^+$ et $W^-$ expliquant par exemple la désintégration $\beta$ ou celle du muon, \textbf{un quatrième boson de jauge}, massif lui-aussi, et électriquement neutre, le $Z^0$.
La masse des bosons $W^+$ et $W^-$ est environ de 80 GeV, celle du $Z^0$ 91 GeV.

Dans un séminaire au CERN le 3 septembre 1973, Paul Musset de la collaboration Gargamelle, une chambre à bulle construite au CERN pour détecter les neutrinos, de 4,8 mètres de long et 2 mètres de diamètre, pesant 1000 tonnes et contenant 12 $m^3$ de fréon $CF_3Br$, présente la première preuve directe de l'existence de courants neutres, avec entre autres un événement leptonique, montrant la trajectoire d'un électron diffusé par un antineutrino arrivant de la gauche \cite{z0lepton}.

\begin{figure}[!h]
	\centering
	\includegraphics[scale=0.51]{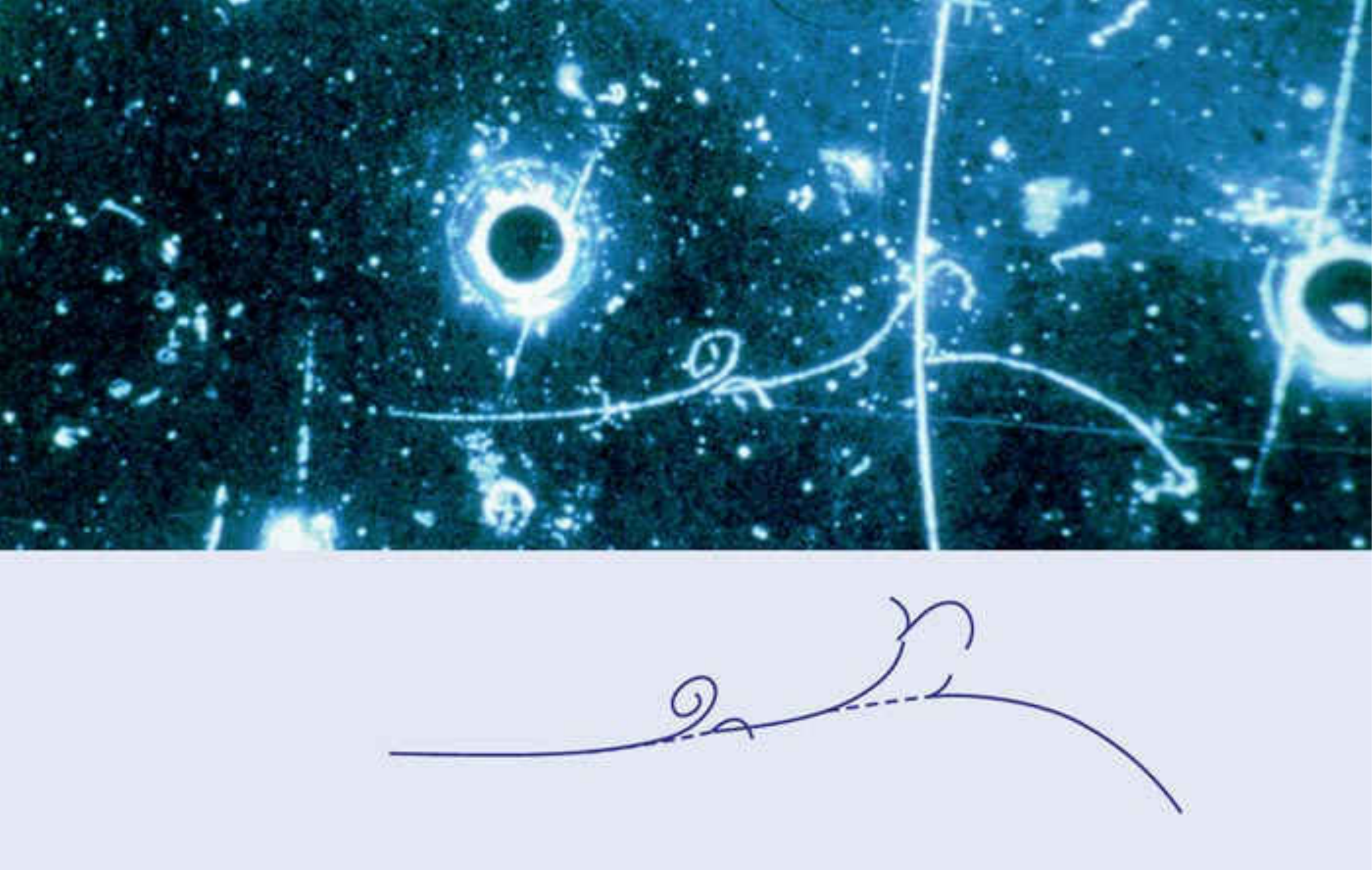}
	\caption{Photographie de la chambre à bulle Gargamelle juste après une collision ayant entrainé un courant neutre : on peut voir l'électron partir de la gauche, perdre de l'énergie (d'où son rayon de courbure plus petit) en émettant un photon (pointillés) très énergétique, qui donne une paire électron-positron (distinguables par leur sens de rotation) ; l'électron ré-émet un photon (deuxième pointillés) qui se scinde en un positron, et un électron de moindre énergie ... }
	\vspace{7mm}
	\centering
	\includegraphics[scale=0.5]{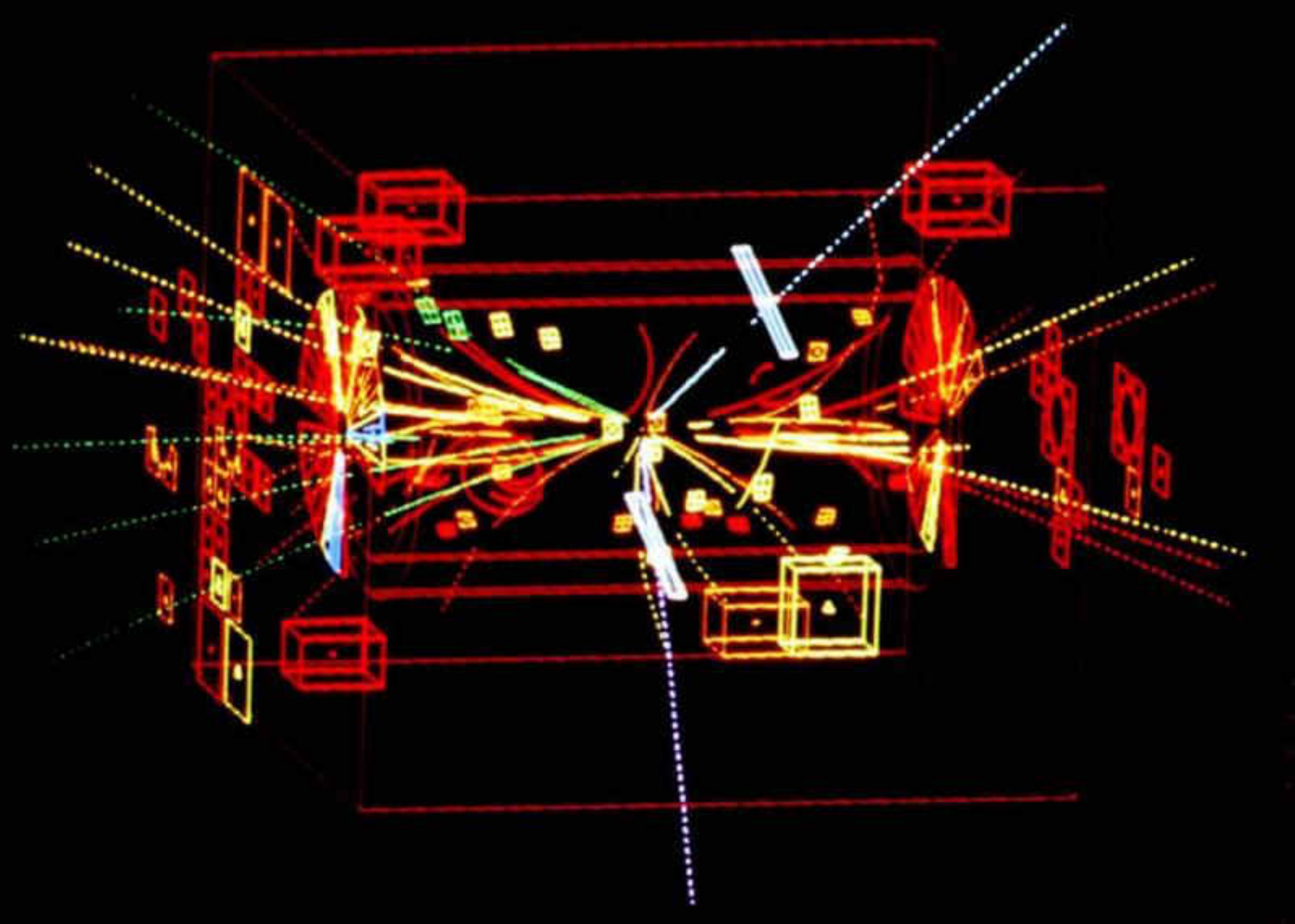}
	\caption{Reconstitution tridimensionnelle du premier évènement Z observé par UA1 le 30 avril 1983. La trajectoire des électrons est représentée par une ligne blanche}
\end{figure}

Dix ans plus tard, le 25 janvier 1983, lors d'une conférence de presse au CERN est annoncée la découverte des bosons W. Le 24 février 1983, la collaboration UA1, l'une des deux expériences sur l'anneau du SPS, super proton-synchrotron, publie un article qui décrit leurs travaux \cite{decwua1}. Le 17 mars 1983, c'est au tour de la collaboration UA2, la deuxième expérience du SPS, de publier un tel papier \cite{decwua2}.

Enfin, la même année (très fructueuse décidément), le $1^{er}$ juin 1983, les physiciens de CERN annoncent avoir observé des bosons $Z^0$, ce qui confirme le modèle de Glashow - Weinberg - Salam. L'événement ci-dessous montre la désintégration du boson $Z^0$ en un électron et un positron, tel qu'elle a été observée le 30 avril 1983. Il s'agit du premier évènement de l'Histoire d'observation directe du $Z^0$ dans un accélérateur.

\paragraph{La découverte du boson BEH}
La dernière pièce manquante du puzzle a pendant longtemps été le boson BEH, prédit par le modèle de Brout, Englert et Higgs qui permet de conférer une masse, tout d'abord aux bosons $W^+$, $W^-$ et $Z^0$, mais aussi également aux leptons, $e$, $\mu$, $\tau$ tout en gardant la caractère renormalisable de la théorie. 
Le boson BEH a finalement été observé dans l'accélérateur le plus récent du CERN, le LHC, qui a commencé à fonctionner en 2008. Cela achève en quelque sorte la période de quête des "particules fondamentales au cœur du modèle standard".\\\\

\begin{figure}[!h]
	\centering
	\includegraphics[scale=0.30]{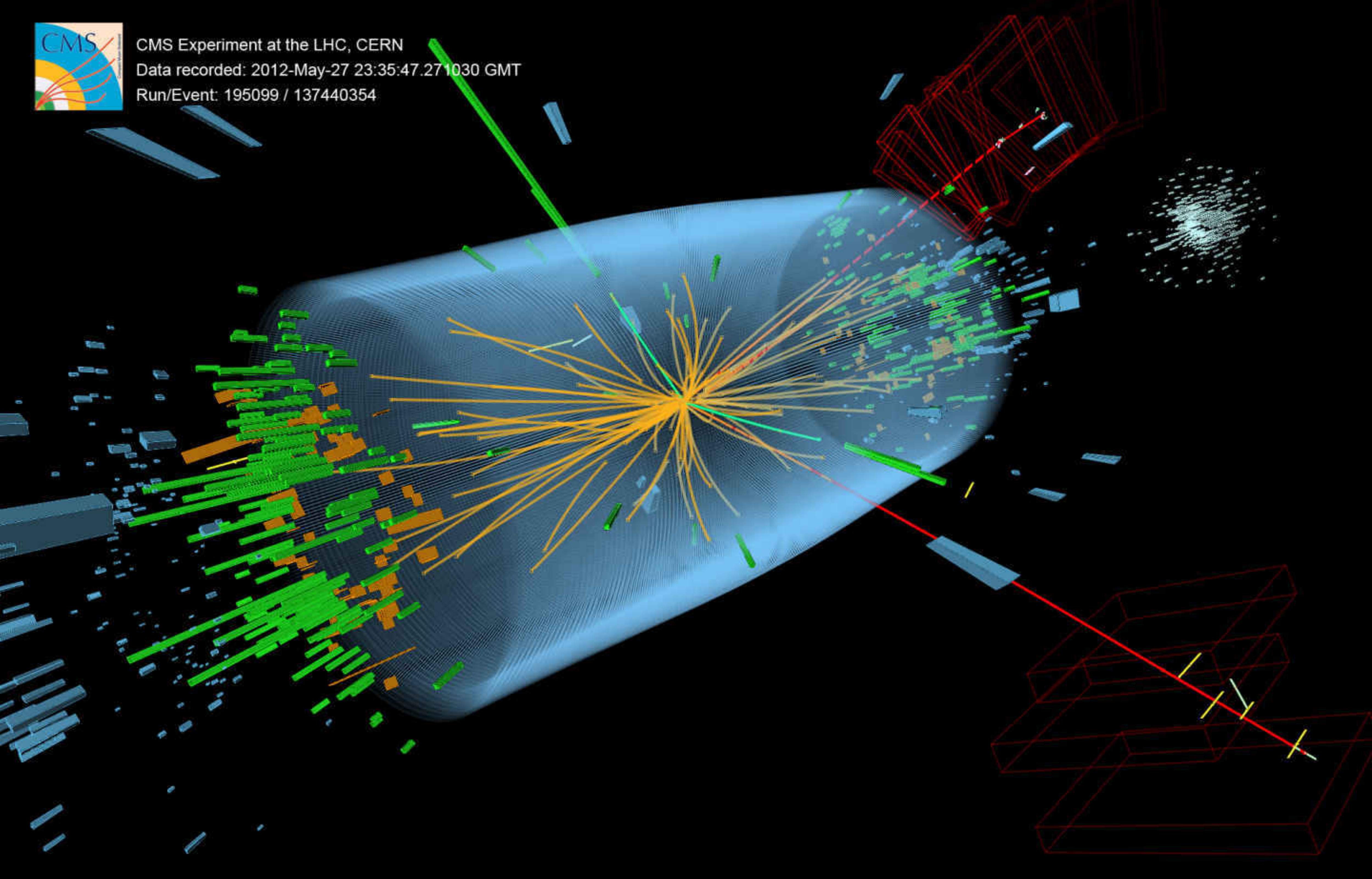}
	\caption{Reconstitution tridimensionnelle d'un évènement candidat Higgs $\rightarrow 4$ leptons, observé en 2012 dans le détecteur CMS. Les deux lignes rouges symbolisent les trajectoires des muons, les deux lignes vertes celles des électrons. Chacune des paires de leptons provient d'un boson $Z^0$, en réalité l'évènement candidat est : $H\rightarrow Z^0+Z^0\rightarrow e^+e^-+\mu^+\mu^-$. Les parallélépipèdes représentent des zones des différents calorimètres, et leur longueur est proportionnelle a la quantité d'énergie qui s'y est déposée.}
\end{figure}

\section{Un approche mathématique de la théorie de jauge électrofaible}
\subsection{Structure de fibré principal et représentations}
Nous allons construire la théorie des interactions faibles pour les leptons est les quarks, mais sans considérer la dynamique forte de ces derniers.\\\\

Les observations expérimentales ont montré qu'on pouvait classer les particules en familles caractérisées par une quantité positive entière ou demi-entière $J$ qui a été appelée \textbf{isospin faible total}, chacune des particules de cette famille possédant de plus une charge supplémentaire, appelée encore \textbf{isospin faible} et notée $I_3^W$. Ces deux invariants ne sont pas sans relation : si pour une particule le $J$ est fixé, $I_3^W$ ne peut prendre que les valeurs comprises entre $J$ et $-J$ en "sautant" de un en un.

\begin{rmq}
	Par exemple, si une particule appartient à une famille d'isospin total $\frac{1}{2}$, alors l'isospin de cette particule ne peut valoir que $\frac{1}{2}$ ou $-\frac{1}{2}$. Dans le cas des leptons électroniques, les neutrinos électroniques et les électrons forment une famille de particules d'isospin faible total $\frac{1}{2}$ ; pour les neutrinos, $(I_3^W)_{\nu_e}=\frac{1}{2}$, et pour les électrons, $(I_3^W)_{e}=-\frac{1}{2}$.
\end{rmq}

En plus de l'isospin, une autre quantité $Y$ apparait comme conservée par les interactions faibles : \textbf{l'hypercharge faible}, pour laquelle on n'observe pas de comportement grégaire. Il existe une relation entre la charge électrique $Q$ (en unités de charge élémentaire) et ces deux charges faibles $I_3^W$ et $Y$ d'une particule : $$Q=I_3^W+Y$$ 
C'est un premier signe que dans la théorie unifiée que nous construisons, les interactions faibles et électromagnétiques sont liées.

L'existence de ces charges conservées donne "l'indice" de la structure de jauge sous-jacente, comme discuté au chapitre précédent. 

Soit donc un fibré principal $P=P(M,G)$ avec $G=SU(2)\times U(1)$ et $M$ la variété d'espace-temps (L'algèbre de Lie $\g$ est de dimension $4$). Pour des raisons que nous expliquerons plus tard, on doit supposer que les leptons et les quarks non massifs pour que la théorie reste invariante de jauge et renormalisable.
Le groupe $SU(2)$ correspond à la symétrie d'isospin faible (on retrouve une structure, dans la classification des états vérifiant cette symétrie, proche de celle qu'on obtient en classifiant les différents états de spin en mécanique quantique, ce qui explique aussi le vocabulaire, bien qu'en réalité ce soit le concept d'isospin fort qui a été proposé antérieurement), et $U(1)$ est le groupe de symétrie d'hypercharge faible (qui ressemble beaucoup à la charge électrique).

\paragraph{Représentations de $SU(2)\times U(1)$}
Expérimentalement encore, on a observé depuis la théorie de Fermi de la désintégration $\beta$ que l'hélicité des particules n'est pas tout à fait indépendante de la manière dont elles interagissent. C'est une autre raison pour supposer les champs de particules sont initialement de masse nulle, pour que leur hélicité soit bien définie.\\
Si $\psi(x)$ est un champ de Dirac non massif (et vérifie donc le principe de moindre action pour le lagrangien de Dirac $\mathfrak{L}=\overline{\psi}\slashed{\partial}\psi$), la partie gauche du champ est $$\psi^L=\frac{1}{2}(1-\gamma_5)\psi$$ et la partie droite $$\psi^R=\frac{1}{2}(1+\gamma_5)\psi$$

Rassemblons les champs de Dirac gauches en doublets d'isospin : 
$$x\in M\rightarrow \begin{pmatrix}
\psi_{\nu_l}^L(x)\\
\psi_l^L(x)
\end{pmatrix}
$$ où $l\in\{e,\mu,\tau\}$ pour les leptons, et 
$$x\in M\rightarrow\begin{pmatrix}
\psi_u^L(x)\\
\psi_{d'}^L(x)
\end{pmatrix}\ ;\ 
x\in M\rightarrow\begin{pmatrix}
\psi_c^L(x)\\
\psi_{s'}^L(x)
\end{pmatrix}\ ;\ 
x\in M\rightarrow\begin{pmatrix}
\psi_t^L(x)\\
\psi_{b'}^L(x)
\end{pmatrix}
$$ pour les quarks (les ' sont des superpositions des états down, strange et bottom).
Ces fonctions sont donc à valeurs dans $\C^4\oplus\C^4$.\\ Construisons une représentation de $SU(2)\times U(1)$ pour que ces fonctions deviennent des sections du fibré associé par cette représentation à $P=P(M,SU(2)\times U(1))$.\\\\
Considérons tout d'abord les doublets d'isospin décrivant les leptons gauches. Soit $(g,h)\in SU(2)\times U(1)$. On prend comme représentation : 
$$\rho:(g,h)\mapsto\rho_{standard}(h)\cdot\rho_{standard}(g)\otimes\begin{pmatrix}
1 & 0 & 0 & 0 \\
0 & 1 & 0 & 0 \\
0 & 0 & 1 & 0 \\
0 & 0 & 0 & 1 
\end{pmatrix}=\rho_{standard}(h)\cdot\rho_{standard}(g)\otimes id_4
$$
(On choisit la représentation de $U(1)$ sous la forme $e^{i\alpha}$ comme en électromagnétisme et $SU(2)$ sous la forme standard de matrices complexes $2\times2$.) Dans la suite nous omettrons de rajouter le $\otimes id_4$ quand il n'y a pas d'ambiguïté.
On prolonge cette représentation à l'algèbre de Lie $\g$, dont une base est alors donnée par : $\frac{i}{2}\tau_1\ \ \frac{i}{2}\tau_2\ \ \frac{i}{2}\tau_3\ \ \frac{i}{2}id_2$ où les $\tau_i$ sont les matrices de Pauli.
$$\tau_1=\begin{pmatrix}
0 & 1\\
1 & 0
\end{pmatrix}\ ;\ \tau_2=\begin{pmatrix}
0 & -i\\
i & 0
\end{pmatrix}\ ;\ \tau_3=\begin{pmatrix}
1 & 0 \\
0 & -1
\end{pmatrix}
$$
Le fibré principal admet une connexion $\omega$, et l'expression des ses coefficients dans ce fibré associé est, dans le cas le plus général : 
$$A^i_{j\mu}\otimes id_4=i\frac{g'}{2}(id_2)^i_jB_\mu\otimes id_4+i\frac{g}{2}(\tau_k)_j^iW^k_\mu\otimes id_4$$ avec les coefficients $B_\mu$ et $W^k_\mu$ réels, pour $\mu\in[|0,3|]$ et $k\in[|1,3|]$, et où $g$ et $g'$ sont des constantes de couplage qui ne font que \textbf{fixer l'échelle relative des interactions faibles et électromagnétiques}. Remarquons que \textbf{la seule quantité vraiment physique est le rapport entre ces deux constantes}.\\

La dérivée covariante s'écrit : 
$$\partial^H_{L\mu}=\partial_\mu\otimes id_8 +i\frac{g'}{2}(id_2)B_\mu\otimes id_4+i\frac{g}{2}(\tau_k)W^k_\mu\otimes id_4$$
Par conséquent, le lagrangien des spineurs $\psi_l^L$ où $l$ est une étiquette de lepton est : 
$$\mathfrak{L}^L=\overline{\Psi_l^L}\slashed{\partial}_L^H\Psi_l^L$$ où la sommation sur $l$ est sous-entendue, où : 
$$\slashed{\partial}_L^H=\partial_{L\mu}^H\gamma^\mu\otimes id_2$$ est la notation slash de Feynman, et où on a noté $$\Psi_l^L=\begin{pmatrix}
\psi_{\nu_l}^L(x)\\
\psi_l^L(x)
\end{pmatrix}$$ les bi-spineurs doublets d'isospin évoqués plus haut.\\\\ 

Pour les spineurs droits qui décrivent les $e,\mu,\tau$, $\psi_{l}^R:M\rightarrow \C^4$ avec $l\in\{e,\mu,\tau\}$, d'isospin nul, on représente $SU(2)\times U(1)$ par : $$\rho :(g,e^{i\alpha})\in SU(2)\times U(1)\mapsto e^{2i\alpha}\otimes id_4$$
donc l'expression des coefficients de connexion est donnée par :
$$A_{\mu}\otimes id_4=ig'B_\mu\otimes id_4$$
si bien que la dérivée covariante s'écrit : $$\partial_{l\mu}^H=\partial_\mu\otimes id_4 + ig'B_\mu\otimes id_4$$
et la partie du lagrangien relative aux spineurs droits est : 
$$\mathfrak{L}^R_l=\overline{\psi_l^R}\slashed{\partial}_l^H\psi_l^R$$

Pour les spineurs droits qui décrivent les $\nu_e,\nu_\mu,\nu_\tau$, $\psi_{\nu_l}^R:M\rightarrow \C^4$ avec $l\in\{e,\mu,\tau\}$, d'isospin nul, on représente $SU(2)\times U(1)$ par la représentation triviale : $$\rho :(g,h)\in SU(2)\times U(1)\mapsto id_4$$
si bien que la dérivée covariante s'écrit : $$\partial_{\nu_l\mu}^H=\partial_\mu\otimes id_4$$
et la partie du lagrangien relative aux spineurs droits est : 
$$\mathfrak{L}^R_{\nu_l}=\overline{\psi_{\nu_l}^R}\slashed{\partial}_{\nu_l}^H\psi_{\nu_l}^R$$

On rassemble les quarks gauches dans les doublets d'isospin évoqués plus haut, notés : $$\Psi_q^L=\begin{pmatrix}
\psi_{q1}^L(x)\\
\psi_{q2}^L(x)
\end{pmatrix}$$
où le $q$ indice les différentes familles de quarks. De la même façon que pour les leptons, on choisit des représentations de $SU(2)\times U(1)$ de telle manière que la dérivée covariante s'écrive  : $$\partial_{q\mu}^H=\partial_\mu\otimes id_4 + \frac{ig}{2}(\tau_k)W_\mu\otimes id_4-\frac{ig'}{6}B_\mu\otimes id_4$$
et on note $$\slashed{\partial}^H_{q}\Psi_q^L=\begin{pmatrix}
\slashed{\partial}^H_{q}\psi_{q1}^L(x)\\
\slashed{\partial}^H_{q}\psi_{q2}^L(x)
\end{pmatrix}$$

Pour les quarks droits : $$\partial_{1\mu}^H=\partial_\mu\otimes id_4-\frac{2ig'}{3}B_\mu\otimes id_4$$ et $$\partial_{2\mu}^H=\partial_\mu\otimes id_4+\frac{ig'}{3}B_\mu\otimes id_4$$

Finalement, le lagrangien total s'écrit : 
$$\mathfrak{L}=\overline{\Psi_q^L}\slashed{\partial}^H_L\Psi_l^L+\overline{\Psi_l^L}\slashed{\partial}^H_q\Psi_q^L+\overline{\psi_l^R}\slashed{\partial}_l^R\psi_l^R+\overline{\psi_{\nu_l}^R}\slashed{\partial}_l^R\psi_{\nu_l}^R+\overline{\psi_{q1}^R}\slashed{\partial}_1^R\psi_{q1}^R+\overline{\psi_{q2}^R}\slashed{\partial}_2^R\psi_{q2}^R$$

Cependant pour que les équations des champs de jauge soient également vérifiées, il faut rajouter à cette densité d'action les termes de self-action des champs de jauge.

\subsection{Quantités conservées}
Le lagrangien $\mathfrak{L}$ est une fonction des champs et de leur dérivées covariantes : 
$$\mathfrak{L}=\mathfrak{L}(\Psi_i^L,\Psi^L_{i;\mu},\psi_i^R, \psi^R_{i;\mu})$$
Pour ne pas alourdir les notations, nous avons écrit de la même manière les dérivées covariantes des spineurs gauches et droits, mais il ne faut pas perdre de vue qu'elles ne sont pas identiques.
$$\delta\mathfrak{L}=\frac{\partial \mathfrak{L}}{\partial \psi^R_i}\delta\psi^R_i+\frac{\partial \mathfrak{L}}{\partial \Psi^L_i}\delta\Psi^L_i+\frac{\partial \mathfrak{L}}{\partial \psi^R_{i;\mu}}\delta\psi^R_{i;\mu}+\frac{\partial \mathfrak{L}}{\partial \Psi^L_{i;\mu}}\delta\Psi^L_{i;\mu}$$
Appliquons au système une transformation d'isospin "pure" telle que les champs sont transformés selon : 
\[
\left\{
\begin{array}{r c l}
\Psi_l^{'L}(x) &=& U(\vec{\xi(x)})\Psi_l^L = exp(i\frac{\xi_\alpha(x)}{2}\tau_\alpha)\Psi_l^L\\
\overline{\Psi_l^{'L}}(x) &=& \overline{\Psi_l^L}(x)U^\dagger(\vec{\xi(x)})=\overline{\Psi_l^L}(x)exp(-i\frac{\xi_\alpha(x)}{2}\tau_\alpha)\\
\Psi_q^{'L}(x) &=& U(\vec{\xi(x)})\Psi_q^L = exp(i\frac{\xi_\alpha(x)}{2}\tau_\alpha)\Psi_q^L\\
\overline{\Psi_q^{'L}}(x) &=& \overline{\Psi_q^L}(x)U^\dagger(\vec{\xi(x)})=\overline{\Psi_q^L}(x)exp(-i\frac{\xi_\alpha(x)}{2}\tau_\alpha)\\
\psi_l^{'R} &=& \psi_l^R\\
\overline{\psi_l^{'R}} &=& \overline{\psi_l^R}\\
\psi_{\nu_l}^{'R} &=& \psi_{\nu_l}^R\\
\overline{\psi_{\nu_l}^{'R}} &=& \overline{\psi_{\nu_l}^R}\\
\psi_{q1}^{'R} &=& \psi_{q1}^R\\
\overline{\psi_{q1}^{'R}} &=& \overline{\psi_{q1}^R}\\
\psi_{q2}^{'R} &=& \psi_{q2}^R\\
\overline{\psi_{q2}^{'R}} &=& \overline{\psi_{q2}^R}\\
\end{array}
\right.
\]
qui correspond à un changement de section locale, donné par : 
$$g:x\in M\mapsto exp(-i\frac{\xi_\alpha(x)}{2}\tau_\alpha)$$
lors duquel la matrice de connexion varie de : 
$$A'=exp(i\frac{\xi_\alpha}{2}\tau_\alpha)Aexp(-i\frac{\xi_\alpha}{2}\tau_\alpha)+exp(i\frac{\xi_\alpha}{2}\tau_\alpha)d(exp(-i\frac{\xi_\alpha}{2}\tau_\alpha))$$
Presque par définition de la dérivée covariante, on a que :
\[
\left\{
\begin{array}{r c l}
\Psi_{i;\mu}^{'L}(x) &=& U(\vec{\xi})\Psi_{i;\mu}^L = exp(i\frac{\xi_\alpha}{2}\tau_\alpha)\Psi_{i;\mu}^L\\
\overline{\Psi_{i;\mu}^{'L}}(x) &=& \overline{\Psi_{i;\mu}^L}(x)U^\dagger(\vec{\xi})=\overline{\Psi_{i;\mu}^L}(x)exp(-i\frac{\xi_\alpha}{2}\tau_\alpha)\\
\psi_{i;\mu}^{'R} &=& \psi_{i;\mu}^R\\
\overline{\psi_{i;\mu}^{'R}} &=& \overline{\psi_{i;\mu}^R}\\
\end{array}
\right.
\]
ce qui entraîne que le lagrangien est invariant ($\delta\mathfrak{L}=0$). Or : 
$$\delta\mathfrak{L}=\frac{\partial \mathfrak{L}}{\partial \Psi^L_i}\delta\Psi^L_i+\frac{\partial \mathfrak{L}}{\partial \Psi^L_{i;\mu}}\delta\Psi^L_{i;\mu}$$
D'après l'équation d'Euler-Lagrange dont on requiert que les champs $\Psi_l^L$ sont des solutions, 
$$\forall i,\ \ \frac{\partial \mathfrak{L}}{\partial \Psi^L_i}-\partial_\mu^H(\frac{\partial \mathfrak{L}}{\partial \Psi^L_{i;\mu}})=0$$ l'équation précédente se réécrit en : 
$$\partial_\mu^H(\frac{\partial \mathfrak{L}}{\partial \Psi^L_{i;\mu}}\delta\Psi_i^L)=0$$ où la sommation sur $i$ est comme toujours, sous-entendue.
Pour la transformation considérée, en développant les exponentielles au voisinage de $\vec{\xi}=0$ : 
$$U(\vec{\xi})\Psi_{i}^L = \Psi_{i}^L+i\frac{\xi_\alpha}{2}\tau_\alpha\Psi_{i}^L+o(|\vec{\xi}|^2)$$ donc : 
$$\partial_\mu^H(\overline{\Psi_i^L}\gamma^\mu\frac{\xi_\alpha}{2}\tau_\alpha\Psi_{i}^L)=0$$
d'où trois courants conservés : 
$$\forall j\in[|1,3|],\ \ \ J_j^\mu=\frac{1}{2}\overline{\Psi_i^L}\gamma^\mu\tau_j\Psi_i^L$$
correspondant à des charges : 
$$\forall j\in[|1,3|],\ \ \ I_j^W=\int d^3\vec{x}J_j^0(x)=\frac{1}{2}\int d^3\vec{x}\Psi_i^{L\dagger}(x)\tau_j\Psi_i^L(x)$$
\begin{rmq}
	On retrouve les courants leptoniques $J^j$ et $J^{\dagger j}$ de la théorie IVB en faisant les transformations : 
	\[
	\left\{
	\begin{array}{r c l}
	J^j &=& 2(J_1^j-iJ_2^j) = \overline{\psi_l}\gamma^j(1-\gamma_5)\psi_{\nu_l}(x)\\
	J^{\dagger j} &=& 2(J_1^j+iJ_2^j) = \overline{\psi_{\nu_l}}\gamma^j(1-\gamma_5)\psi_{l}(x)
	\end{array}
	\right.
	\]
	et pour les quarks : 
	\[
	\left\{
	\begin{array}{r c l}
	J^j &=& 2(J_1^j-iJ_2^j) = \overline{\psi_2}\gamma^j(1-\gamma_5)\psi_{1}(x)\\
	J^{\dagger j} &=& 2(J_1^j+iJ_2^j) = \overline{\psi_{1}}\gamma^j(1-\gamma_5)\psi_{2}(x)
	\end{array}
	\right.
	\]
	Ce modèle prévoit même l'existence du courant 
	$$J_3^i=\frac{1}{2}\overline{\Psi_l^L}\gamma^i\tau_3\Psi_l^L=\frac{1}{2}[\overline{\psi_{1}^L}(x)\gamma^i\psi_{1}(x)-\overline{\psi^L_2}(x)\gamma^i\psi_2^L(x)]$$ qui est le \textbf{courant d'isospin faible}. C'est un courant neutre puisqu'il couple des particules de même charge (électrique) entre elles, au même titre que le courant électromagnétique : 
	$$s^i=-e\overline{\psi_l}(x)\gamma^i\psi_l(x)$$
	Le terme de droite est à un facteur de proportionnalité près (il faut changer l'échelle), le courant électromagnétique. On voit donc se profiler le fait que dans cette théorie, les interactions faibles et électromagnétiques seront reliées.
	\end{rmq}
On définit le courant d'hypercharge faible, pour les leptons :
$$J^i_Y(x)=\frac{s^i(x)}{e}-J^i_3(x)=-\frac{1}{2}\overline{\Psi_l^L}(x)\gamma^i\Psi_l^L(x)-\overline{\psi_l^R}(x)\gamma^i\psi_l^R(x)$$ ainsi que pour les quarks, 
$$J^i_Y(x)=\frac{s^i(x)}{e}-J^i_3(x)=\frac{1}{6}\overline{\psi_u^L}(x)\gamma^i\psi_u^L(x)+\frac{2}{3}\overline{\psi_u^R}(x)\gamma^i\psi_u^R(x)+\frac{1}{6}\overline{\psi_{d'}^L}(x)\gamma^i\psi_{d'}^L(x)-\frac{1}{3}\overline{\psi_{d'}^R}(x)\gamma^i\psi_{d'}^R(x)$$ et l'hypercharge faible : 
$$Y=\int d^3\vec{x}J^i_Y(x)$$ ce qui permet d'écrire la relation : 
$$Y=\frac{Q}{e}-I_3^W$$
On a les valeurs suivantes des charges électrofaibles pour les particules considérées, obtenues en utilisant la définition générale des charges et le fait que les états sont normalisés.
\begin{center}
	\begin{tabular}{|c|c|c|c|}
		\hline
		Particule & Charge électrique & $I_3^W$ & Y \\
		\hline
		\hline
		$lepton^{L,-}$ & -1 & $-\frac{1}{2}$ & $-\frac{1}{2}$\\
		$neutrino^L$ & 0 & $+\frac{1}{2}$ & $-\frac{1}{2}$\\
		$lepton^{R,-}$ & -1 & 0 & $-1$\\
		$neutrino^R$ & 0 & 0 & 0\\
		\hline
		$quark, up-type^L$ & $\frac{2}{3}$ & $\frac{1}{2}$ & $\frac{1}{6}$\\
		$quark, down-type^L$ & -$\frac{1}{3}$ & $-\frac{1}{2}$ & $\frac{1}{6}$\\
		$quark, up-type^R$ & $\frac{2}{3}$ & 0 & $\frac{2}{3}$\\
		$quark, down-type^R$ & -$\frac{1}{3}$ & 0 & $-\frac{1}{3}$\\
		\hline
	\end{tabular}
\end{center}

Lorsqu'on change de jauge pour le groupe $U(1)$, on a les transformations suivantes : 
\[
\left\{
\begin{array}{r c l}
\Psi_{l}^{L}(x) &\rightarrow& \Psi_{l}^{'L}(x)=e^{-i\kappa/2}\Psi_{l}^{L}(x)\\
\psi_{l}^{R} &\rightarrow& e^{-i\kappa}\psi_{l}^{R}\\
\psi_{\nu_l}^{R} &\rightarrow& \psi_{\nu_l}^{R}\\
\end{array}
\right.
\]
et de manière générale, si les quantons annihilés par $\psi$ sont d'hypercharge $Y$, on doit avoir : 
$$\psi(x)\rightarrow \psi'(x)=e^{i\kappa Y}\psi(x)$$

\begin{rmq}
	Les champs de particules appartenant à une famille de même isospin total et de même hypercharge sont des sections d'un même fibré associé puisque ils varient de la même manière lors d'un changement de jauge, c'est-à-dire qu'il sont sous la même représentation du groupe $G$.
\end{rmq}

\subsection{Le lagrangien des champs de jauge}
D'après le chapitre précédent, et le fait que la "bonne" densité de self-action des champs de jauge est donnée par : 
$$-\frac{1}{4}F_{\mu\nu}F^{\mu\nu}$$ où F est la courbure associée aux connexions que sont les potentiels de jauge : 
$$F_{\mu\nu}^\gamma=\partial_\mu A_\nu^\gamma -\partial_\nu A_\mu^\gamma +A_\mu^\alpha A_\nu^\beta f_{\alpha\beta}^\gamma$$
Compte tenu des différentes définitions des champs de jauge (à un coefficient près notamment), l'expression de la courbure pour les trois degrés de liberté de la connexion sur $SU(2)$ devient : 
$$F_{\mu\nu}^\gamma=\partial_\mu W_\nu^\gamma -\partial_\nu W_\mu^\gamma +g\epsilon _{\alpha\beta}^\gamma W_\mu^\alpha W_\nu^\beta$$ (pour $\gamma\in[|1,3|]$)
et celle de la courbure pour la connexion sur $U(1)$ : 
$$C_{\mu\nu}=\partial_\mu B_\nu -\partial_\nu B_\mu $$ comme en électromagnétisme.
Ainsi, le lagrangien total de la théorie s'écrit pour l'instant : 
$$\mathfrak{L}=-\frac{1}{4}C_{\mu\nu}C^{\mu\nu}-\frac{1}{4}F_{\mu\nu}^\gamma F^{\gamma\mu\nu}+\overline{\Psi_l^L}(x)\slashed{\partial}_L^H\Psi_l^L+\overline{\Psi_q^L}(x)\slashed{\partial}_q^H\Psi_q^L$$ $$+\overline{\psi_{\nu_l}^R}(x)\slashed{\partial}_{\nu_l}^H\psi_{\nu_l}^R(x)+\overline{\psi_l^R}(x)\slashed{\partial}_l^H\psi_l^R(x)+\overline{\psi_{q1}^R}(x)\slashed{\partial}_{q1}^H\psi_{q1}^R(x)+\overline{\psi_{q2}^R}(x)\slashed{\partial}_{q2}^H\psi_{q2}^R(x)$$ avec $$\Psi_l^L=\begin{pmatrix}
\psi_{\nu_l}^L(x)\\
\psi_l^L(x)
\end{pmatrix}\ \ \ \Psi_q^L=\begin{pmatrix}
\psi_{q1}^L(x)\\
\psi_{q2}^L(x)
\end{pmatrix}
$$

\subsection{Le problème de la masse}
Jusqu'à maintenant, on a supposé que tous les champs de particules étaient non massifs. Lors de la quantification, les champs de jauge que nous avons construits donnent aussi des quantas non massifs. Cependant, cette situation n'est pas satisfaisante, d'une part parce qu'une explication de la courte portée des interactions faibles est que les bosons vecteurs sont massifs, et d'autre part parce que les leptons et en particulier l'électron et le muon, avaient déjà été observés au moment de l'élaboration de la théorie et on avait pu \textbf{mesurer leur masse}, une masse non nulle.\\\\
Observons les conséquences qu'ont les transformations naïves et sans vraie explication (mathématiquement, vis-à-vis de la théorie établie précédemment) du lagrangien qui visent à donner artificiellement de la masse aux particules évoquées.

\paragraph{La masse des bosons}
Si on rajoute dans le lagrangien un terme en 
$$m_W^2W_\mu^\dagger(x)W^\mu(x)$$ on obtient la densité d'action de la théorie IVB, qui pose essentiellement deux problèmes : ce terme n'est pas invariant par symétrie de jauge $SU(2)\times U(1)$, \textbf{il est mal défini au sens de la jauge} ce qui conceptuellement est difficilement acceptable (aujourd'hui avec du recul et cette approche de la théorie). De plus, on retrouve les problèmes de non-renormalisabilité de la théorie IVB...

\paragraph{La masse des fermions}
Pour les fermions, les problème est le même. Si on rajoute dans le lagrangien un terme de la forme $$-m_l\overline{\psi_l}(x)\psi_l(x)$$ on perd l'invariance de jauge $SU(2)\times U(1)$, car : 
$$-m_l\overline{\psi_l}(x)\psi_l(x)=-m_l\overline{\psi_l}(x)[\frac{1}{2}(1+\gamma_5)+\frac{1}{2}(1-\gamma_5)]\psi_l(x)=-m_l[\overline{\psi_l}(x)^L\psi_l^R(x)+\overline{\psi_l}(x)^R\psi_l(x)^L]$$
et les champs gauches sont des isospineurs tandis que les champs droits sont des isoscalaires.\\\\

Le modèle de Higgs permet de résoudre ce problème : on suppose qu'un autre champ existe, de valeur moyenne dans le vide non nulle (c'est donc forcément un champ scalaire) : le champ de Higgs. Ce champ est une section d'un fibré associé de fibre $\C\oplus\C$ puisqu'il doit être un doublet d'isospin. Quoiqu'il en soit, l'introduction de cet objet rajoute des termes dans le lagrangien, qui sont les termes ci-dessus nécessaires pour pouvoir interpréter les particules comme massives, plus d'autres termes qui assurent la conservation globale des symétries de jauge, et la renormalisabilité de la théorie.

\section{Le mécanisme de Higgs et l'apparition de la masse}
\subsection{Brisure spontanée de symétrie}
Considérons un système de lagrangien $\mathcal{L}$ possédant une certaine symétrie. Si il n'y a qu'un seul état d'énergie minimale, alors on peut montrer qu'il est nécessairement invariant par la même symétrie. Cependant, si il y a plusieurs états d'énergie minimale, alors en choisissant \textbf{un état} d'énergie minimale, on brise la symétrie du système.\\
Pour fixer les idées, considérons une particule ponctuelle se déplaçant dans le plan complexe, et soumise au potentiel :
$$V(z)=\mu^2|z|^2+\lambda|z|^4$$
où $\lambda>0$ (parce que sinon le potentiel tend vers $-\infty$ en l'infini ou alors il est juste quadratique).
Dans ce potentiel, concernant les états d'énergie minimale, deux cas sont envisageables : 
\begin{enumerate}
	\item Si $\mu^2>0$, le potentiel admet seulement 0 comme point d'énergie minimale, et cet état respecte la symétrie du système par rotation autour de l'axe vertical passant par l'origine du plan.
	\item Si $\mu^2>0$, les états d'énergie extrémale sont donnés par : $$2\mu^2|z|+4\lambda|z|^3=0 \Leftrightarrow z=0\ ;\ |z|=\sqrt{-\frac{\mu^2}{2\lambda}}$$ si bien que les états d'énergie minimale forment un cercle centré en 0. C'est le potentiel célèbre en forme de "chapeau mexicain", de brisure spontanée de symétrie.
\end{enumerate}
Cette idée provient de l'étude des matériaux ferromagnétiques dont l'aimantation, à température suffisamment basse, brise la symétrie du système (aimantation spontanée, non nulle sans champ).\\
C'est Yoichiro Nambu (18/01/1921-05/07/2015) qui a introduit ces idées en physique des particules. Il a reçu le prix Nobel de physique en 2008 pour cette contribution majeure.

\subsection{Le mécanisme de Higgs}
Considérons une théorie de jauge, de groupe de jauge U(1), symétrie que nous allons briser. Pour pouvoir briser la symétrie, il faut qu'on soit dans les cas 2. du paragraphe précédent. Considérons donc un fibré principal $P=P(M,U(1))$ où $M$ est la variété d'espace-temps. Supposons que $U(1)$ agit sur $\C\equiv\R^2$ via la représentation : $$e^{i\alpha}\rightarrow e^{i\alpha}
\begin{pmatrix}
1 & 0 \\
0 & 1
\end{pmatrix}
$$
Soit $\phi$ un champ de particules complexe, de spin nul, qui est donc une section du fibré associé $P\times_\rho \R^2$, et vérifie l'équation de Klein-Gordon. Le lagrangien total du système s'écrit donc :
$$\mathcal{L}=-\frac{1}{4}F_{\mu\nu}F^{\mu\nu}+((\partial^H)_\mu\phi)^*((\partial^{H})^\mu\phi)-V(\phi)$$
où $\partial^H_\mu=\partial_\mu+iA_\mu$ est la dérivée covariante, $iF_{\mu\nu}=\partial_\mu A_\nu-\partial_\nu A_\mu$ est la courbure associée à la connexion $iA_\mu$, et où :
$$V(\phi)=m^2\phi^*\phi+\lambda(\phi^*\phi)^2$$
Si $m^2<0$ et $\lambda>0$, $V$ est minimale pour $$|\phi|=\sqrt{\frac{-m^2}{\lambda}}=\frac{v}{\sqrt{2}}$$
c'est le potentiel en forme de chapeau mexicain.

\begin{figure}[!h]
	\centering
	\includegraphics[scale=0.6]{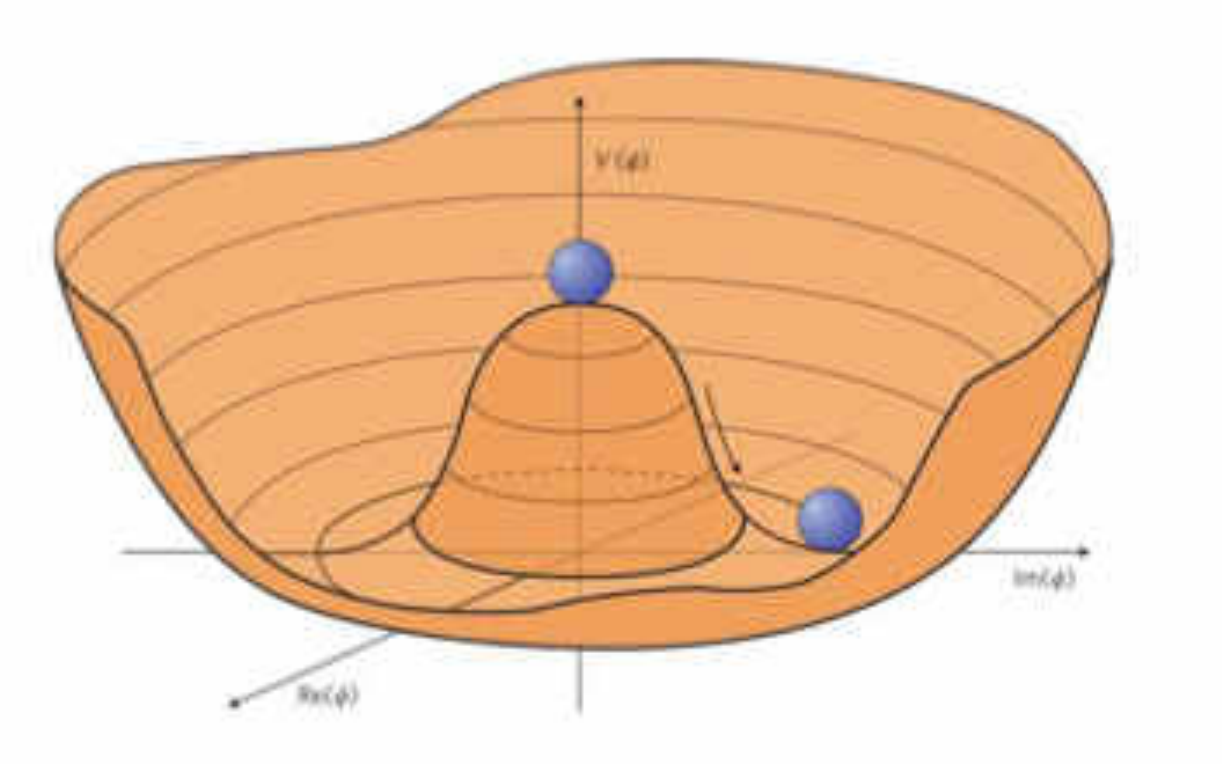}
	\caption{La forme du potentiel employé dans le mécanisme de Higgs}
\end{figure}

Quitte à travailler avec une jauge adaptée, on peut choisir le minimum comme le champ constant réel $$\phi=\frac{v}{\sqrt{2}}$$
Dans la théorie quantifiée, on doit avoir : $$<\phi>=\frac{v}{\sqrt{2}}$$ 
On a ici brisé la symétrie de jauge $U(1)$. On peut réécrire $\phi$ en : $$\phi(x)=\frac{1}{\sqrt{2}}(v+\sigma(x))e^{i\frac{\eta(x)}{v}}$$
mais quitte à changer de jauge on peut faire la transformation : 
$$\phi(x)\rightarrow \phi'(x)=e^{-i\frac{\eta(x)}{v}}\phi(x)=\frac{1}{\sqrt{2}}(v+\sigma(x))$$
qui implique également : 
$$A_\mu(x)\rightarrow A'_\mu(x)=A_\mu(x)+\frac{1}{v}\partial_\mu\eta(x)$$
\begin{rmq}
	De tels choix de jauge sont appelés jauges unitaires.
\end{rmq}
On peut désormais réécrire le lagrangien comme :
$$\mathcal{L}=\frac{1}{2}(\partial_\mu\sigma)^2-\frac{2\lambda v^2}{2}\sigma^2-\lambda v\sigma^3-\frac{\lambda}{4}\sigma^4$$
$$-\frac{1}{4}F_{\mu\nu}F^{\mu\nu}+\frac{1}{2}v^2A_\mu A^\mu$$
$$+v\sigma A_\mu A^\mu+\frac{1}{2}\sigma^2A_\mu A^\mu$$

ce qui permet d'interpréter les deux premiers termes de la première ligne comme la partie propre du lagrangien reliée au champ $\sigma$ qui apparait comme un champ scalaire (lagrangien de Klein-Gordon) massif, de masse $\sqrt{2\lambda v^2}$, la deuxième ligne comme le lagrangien d'un champ de bosons (Klein-Gordon) vecteurs \textbf{massifs}, de masse $v$, et les autres termes comme des termes d'interaction.
Le terme en $\sigma^3$ génère par exemple des vertex d'interaction du champ $\sigma$ avec lui même, le terme en $\sigma A_\mu A^\mu$ un vertex d'interaction du champ $\sigma$ avec deux lignes du champ de jauge ...

\subsection{La brisure de la symétrie $SU(2)$ dans la théorie de jauge électrofaible}
\paragraph{Le champ de Higgs}

Pour briser la symétrie de jauge SU(2) est introduit un doublet d'isospin $\frac{1}{2}$ et d'hypercharge $1$, noté :
$$\phi=\begin{pmatrix}
\phi^+ \\
\phi^0
\end{pmatrix}
= \frac{1}{\sqrt{2}}
\begin{pmatrix}
\phi_1+i\phi_2 \\
\phi_3 + i\phi_4 \\
\end{pmatrix}
$$
Les notations employées s'expliquent par la relation $$Q=I_3^W+Y$$ qui donne matriciellement : 
$$Q=\frac{1}{2}(\begin{pmatrix}
1 & 0 \\
0 & -1 
\end{pmatrix}+
\begin{pmatrix}
1 & 0 \\
0 & 1 \\
\end{pmatrix})=
\begin{pmatrix}
1 & 0 \\
0 & 0 
\end{pmatrix}
$$
donc la première composante du doublet de Higgs a une charge positive tandis que la deuxième composante est électriquement neutre.
La dérivée covariante du champ de Higgs s'écrit : 
$$ \partial^H_\mu\phi=(\begin{pmatrix}
\partial_\mu & 0 \\
0 & \partial_\mu
\end{pmatrix}+i\frac{g'}{2}\begin{pmatrix}
B_\mu & 0 \\
0 & B_\mu
\end{pmatrix}+i\frac{g}{2} W^\alpha_\mu\tau_\alpha)\phi$$
c'est-à-dire :
$$\partial_\mu^H=\begin{pmatrix}
\partial_\mu+\frac{ig'}{2}B_\mu+\frac{ig}{2}W_\mu^3 & \frac{ig}{2}(W_\mu^1-iW_\mu^2)\\
\frac{ig}{2}(W_\mu^1+iW_\mu^2) & \partial_\mu +\frac{ig'}{2}B_\mu -\frac{ig}{2}W_\mu^3
\end{pmatrix}$$
et en notant :

\[
\left\{
\begin{array}{r c l}
i\sqrt{g^2+g^{'2}}A_\mu &=& \frac{ig'}{2}B_\mu+\frac{ig}{2}W_\mu^3\\
i\sqrt{g^2+g^{'2}}Z_\mu &=& \frac{ig'}{2}B_\mu -\frac{ig}{2}W_\mu^3\\
\sqrt{2}W^+ &=& W_\mu^1-iW_\mu^2\\
\sqrt{2}W^- &=& W_\mu^1+iW_\mu^2\\
\end{array}
\right.
\]
 
on peut réécrire :
$$\partial_\mu^H=\begin{pmatrix}
\partial_\mu+i\sqrt{g^2+g^{'2}}A_\mu & \frac{ig}{\sqrt{2}}W^+\\
\frac{ig}{\sqrt{2}}W^- & \partial_\mu +i\sqrt{g^2+g^{'2}}Z_\mu
\end{pmatrix}$$

La partie du lagrangien relatif au champ de Higgs est :
$$\mathcal{L}^{Higgs}=((\partial^H)_\nu\phi)^\dagger((\partial^H)^\nu\phi)-\mu^2\phi^\dagger\phi-\lambda(\phi^\dagger\phi)^2$$
et la partie potentielle a un minimum pour $\phi^\dagger\phi=\frac{-\mu^2}{2\lambda}$. Sans perte de généralité (moyennant le choix d'une jauge adaptée), on peut prendre pour minimum :
$$\phi_{min}=\frac{1}{\sqrt{2}}\begin{pmatrix}
0\\
v
\end{pmatrix}$$ à condition d'avoir $$v^2=-\frac{\mu^2}{\lambda}$$
\begin{rmq}
	Le minimum est choisi de charge électrique nulle, pour que la symétrie $U(1)$ \textbf{ne soit pas brisée} par le mécanisme (et donc pour que le photon reste sans masse), comme nous allons le voir.
\end{rmq}

On peut choisir une jauge unitaire comme dans le cas de la théorie de jauge $U(1)$ étudiée ci-dessus, et alors : 
$$\phi(x)\rightarrow\frac{1}{\sqrt{2}}\begin{pmatrix}
0\\
v+\sigma(x)\\
\end{pmatrix}
$$
Le lagrangien $\mathcal{L}^{Higgs}$ devient alors :
$$\mathcal{L}^{Higgs}=\frac{1}{2}(\frac{ig}{2}W_\nu^-(v+\sigma)+(\partial_\nu-i\sqrt{g^2+g^{'2}}Z_\nu)(v+\sigma))\times(\frac{ig}{2}W^{+\nu}(v+\sigma)+(\partial^\nu+i\sqrt{g^2+g^{'2}}Z^\nu)(v+\sigma))$$
$$ - \frac{\mu^2}{2}(v+\sigma)^2-\frac{\lambda}{4}(v+\sigma)^4$$

\paragraph{La génération de la masse et l'apparition des bosons de jauge électrofaibles}

Réécrivons le lagrangien auquel nous sommes arrivés :
$$\mathcal{L}^{Higgs}=\mathcal{L}^{Higgs}_1 + \mathcal{L}^{Higgs}_2 + \mathcal{L}^{Higgs}_3$$
avec 
$$\mathcal{L}^{Higgs}_1=\frac{1}{2}(\partial_\nu\sigma)(\partial^\nu\sigma)-\frac{\mu^2}{2}\sigma^2-\lambda v\sigma^3 -\frac{\lambda}{4}\sigma^4$$ $$-\frac{\mu^2}{2}v^2 -\mu^2 v\sigma -\frac{\lambda}{4}v^4-\lambda v^3\sigma -\frac{3\lambda}{2}v^2\sigma^2$$
On peut tout d'abord s'affranchir des termes constants $-\frac{\mu^2}{2}v^2$ et $\frac{\lambda}{4}v^4$. De plus :
$$(-\frac{\mu^2}{2}-\frac{3\lambda}{2}v^2)=(\frac{\lambda v^2}{2}-\frac{3\lambda}{2}v^2)=-\lambda v^2$$
et 
$$\mu^2 v \sigma -\lambda v^3 \sigma = -v(\mu^2 - \lambda v^2)\sigma=0$$
donc finalement : 
$$\mathcal{L}^{Higgs}_1=\frac{1}{2}(\partial_\nu\sigma)(\partial^\nu\sigma)-\frac{1}{2}(2\lambda v^2)\sigma^2-\lambda v\sigma^3-\frac{\lambda}{4}\sigma^4$$
où les deux premiers termes sont compris comme le lagrangien du champ scalaire $\sigma$ (équation de Klein-Gordon) sans interaction, de masse $m_h^2=2\lambda v^2$, et les deux derniers comme des termes d'interaction du champs $\sigma$ avec lui-même.
Par ailleurs :
$$\mathcal{L}^{Higgs}_2=\frac{g^2v^2}{8}(W^-_\nu W^{+\nu})+\frac{v^2}{8}(g^2+g^{'2})Z_\nu Z^\nu$$

Posons $$m_W=\frac{gv}{2} \ \ m_Z=\frac{v\sqrt{g^2+g^{'2}}}{2}$$ ce terme du lagrangien devient : $$\mathcal{L}^{Higgs}_2=m_W^2(W^-_\nu W^{+\nu})+\frac{1}{2}m_Z^2 Z_\nu Z^\nu$$

Les champs $B$ et $W^3$ ont été couplés entre eux par :
$$\begin{pmatrix}
Z_\nu\\ A_\nu
\end{pmatrix}
=\begin{pmatrix}
cos\theta & -sin\theta\\
sin\theta & cos\theta
\end{pmatrix}
\begin{pmatrix}
W_\nu^3\\
B_\nu
\end{pmatrix}$$
avec $$cos\theta=\frac{g}{\sqrt{g+g'}} \ \ sin\theta=\frac{g'}{\sqrt{g+g'}}$$
où $\theta$ est le "weak mixing angle" ou angle de Weinberg, que l'expérience fixe à 
$$sin^2\theta=0.23122 \pm 0.00015$$
Cette manipulation permet de faire apparaître dans le lagrangien un champ de jauge non massif, dont on a besoin pour décrire les photons.
Il reste donc un champ de jauge complexe $W^+$ qu'on décrit également à l'aide de son adjoint $W^-=(W^+)^\dagger$, qui après quantification donne des quanta bosoniques de masse $m_W$, et un champ de jauge réel dont les quanta sont des bosons de masse $m_Z$. Remarquons que :
$$m_Z=\frac{m_W}{sin\theta}$$
mais que la masse exacte n'est pas prédite. Cependant, les masses mesurées des bosons W et Z, valant respectivement 80.40 GeV et 91.19 GeV, permettent de vérifier la cohérence de la théorie. De plus, si la mesure - par exemple - du temps de vie du muon permet d'accéder à la valeur de $v$ : $$v=\frac{2m_W}{g}=\frac{1}{\sqrt{\sqrt{2}G}}\approx246.22\ GeV$$ \textbf{il ne reste plus qu'un seul paramètre libre pour le champ de Higgs, par exemple, le paramètre $\lambda$, ou, de manière équivalent, la masse du boson de Higgs (le quanta du champ de Higgs)}.\\\\
Le modèle standard ne prévoit pas la valeur de cette constante, d'où le challenge pour observer le boson de Higgs : la seule information, avant la découverte, était que sa masse est comprise entre quelques dizaines de GeV et quelques TeV, soit une plage énorme à balayer. Cependant, une fois la masse fixée, il est possible de calculer les différents branching ratio de la désintégration du Higgs.

\begin{figure}[!h]
	\centering
	\includegraphics[scale=1.4]{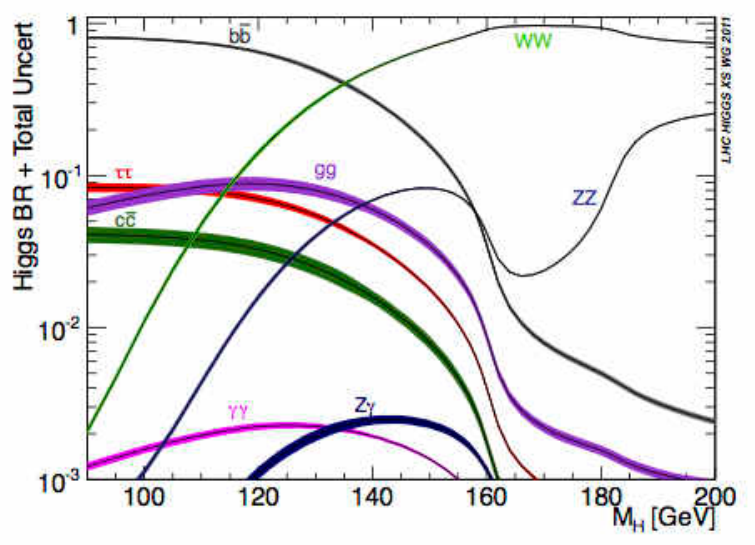}
	\caption{Diagramme montrant les différentes fractions de désintégration du boson de Higgs (en 2011, les recherches du LEP et les deux premières années de fonctionnement du LHC permettent de restreindre l'étude à la recherche d'un particle de masse comprise entre 115 et 200 GeV)}
\end{figure}

Le boson de Higgs a finalement été observé, et sa masse mesurée à 125 GeV. S'il s'agit bien du Higgs du modèle standard, on contraint le dernier paramètre libre de la théorie de Higgs, puisque : 
$$m_H^2=2\lambda v^2\Rightarrow\lambda=\frac{m_H^2}{2v^2}\approx 0.13$$

\paragraph{De la masse pour les fermions}
Nous avons vu pourquoi il était nécessaire de partir de fermions de masse nulle pour construire les interactions faibles. Cependant, il est des fermions (comme l'électron, le muon ...) dont on connait la masse, et qu'il serait absurde de décrire comme des particules non massives. Ce problème est résolu en couplant les fermions au champ de Higgs par des "couplages de Yukawa". Soit 
$$\Psi=\begin{pmatrix}
\psi_1\\
\psi_2
\end{pmatrix}
$$ un doublet d'isospin de fermions. On peut rendre le champ de fermions $\psi_2$ massif en rajoutant un terme au lagrangien de la forme :  
$$\mathcal{L}^{HL}=-G_l(\overline{\Psi}^L\begin{pmatrix}
\phi^+ \\
\phi^0
\end{pmatrix}\psi_2^R+\overline{\psi_2}\begin{pmatrix}
\phi^{+*} & \phi^{0*}
\end{pmatrix}\Psi^L)$$
où $$\overline{\Psi}^L\begin{pmatrix}
\phi^+ \\
\phi^0
\end{pmatrix}=\begin{pmatrix}
\overline{\psi_{1}^L} & \overline{\psi_{2}^L} 
\end{pmatrix}
\begin{pmatrix}
\phi^+ \\
\phi^0
\end{pmatrix}=\phi^+\cdot\overline{\psi_{1}^L}+\phi^0\cdot\overline{\psi_{2}^L} $$
et de même par la partie conjuguée hermitienne.
Il est clair que le terme ainsi rajouté est bien défini au sens de la jauge (en effet $\Psi^L$ est d'isospin total $\frac{1}{2}$, tout comme le doublet Higgs, alors que $\psi_2^R$ est d'isospin total nul ; pour l'hypercharge, $Y(\Psi^L)=1$, $Y(\phi)=1$ et $Y(\psi_2^R)=-2$), ce qui n'était pas le cas des termes "naïvement" rajoutés au lagrangien pour rendre les fermions massifs, à cause de la non invariance par symétrie de parité.\\
En jauge unitaire, 
$$\begin{pmatrix}
\phi^+ \\
\phi^0
\end{pmatrix}=\frac{1}{\sqrt{2}}\begin{pmatrix}
0 \\
v+\sigma(x)
\end{pmatrix}$$ ce qui fait apparaître les termes : 
$$\mathcal{L}^{HL}=-\frac{G_e}{\sqrt{2}}(\overline{\psi_2^L}(v+\sigma)\psi_2^L+\overline{\psi_2^R}(v+\sigma)\psi_2^R)$$
or $$\overline{\psi_2^L}\psi_2^R+\overline{\psi_2^L}\psi_2^R=\frac{1}{4}\overline{\psi_2}(1+\gamma_5)(1+\gamma_5)\psi_2+\frac{1}{4}\overline{\psi_2}(1-\gamma_5)(1-\gamma_5)\psi_2=\overline{\psi_2}\psi_2$$
d'où : 
$$\mathcal{L}^{HL}=-\frac{G_ev}{\sqrt{2}}\overline{\psi_2}\psi_2+-\frac{G_e}{\sqrt{2}}\overline{\psi_2}\psi_2\sigma$$
et la masse du fermion est donnée par $$m_2=\frac{G_ev}{2}$$ Pour pouvoir vérifier la véracité de cette égalité, il faut mesurer la constante adimensionnée $G_e$ grâce à l'autre terme qu'on rajoute et qui couple le champ de fermions au champ de Higgs.\\\\
Pour donner de la masse au champ $\psi_1$, on rajoute au lagrangien un terme de la forme : 
$$\mathcal{L}^{HL}=-G'_l(\overline{\Psi}^L\begin{pmatrix}
\phi^{0*} \\
-\phi^{+*}
\end{pmatrix}\psi_2^R+\overline{\psi_2}\begin{pmatrix}
\phi^0 & \phi^+
\end{pmatrix}\Psi^L)$$
où on peut réécrire : 
$$
\begin{pmatrix}
	\phi^{0*} \\
	-\phi^{+*}
\end{pmatrix}=
-i[\begin{pmatrix}
\phi^+ & \phi^0
\end{pmatrix}\begin{pmatrix}
0 &-i\\
i & 0
\end{pmatrix}]=-i[\phi^\dagger\tau_2]
$$

\paragraph{Cas des quarks}
Les quark gauches forment, comme les leptons, des doublets d'isospin : 
$$\begin{pmatrix}
\psi_u \\
\psi_{d'}
\end{pmatrix}_L
\ \ \begin{pmatrix}
\psi_c \\
\psi_{s'}
\end{pmatrix}_L
\ \ \begin{pmatrix}
\psi_t \\
\psi_{b'}
\end{pmatrix}_L
$$
d'isospin total $\frac{1}{2}$ et d'hypercharge $\frac{1}{3}$, et les quarks droits des singulets d'isospin : 
$$ \psi_u^R \ \ \psi_c^R \ \ \psi_t^R $$
d'isospin total 0 et d'hypercharge $\frac{4}{3}$, et
$$ \psi_{d'}^R \ \ \psi_{s'}^R \ \ \psi_{b'}^R $$
d'isospin total 0 et d'hypercharge $\frac{-2}{3}$.\\\\
Les couplages de Yukawa génèrent des termes de masse dans le lagrangien, de la forme :
$$\begin{pmatrix}
\overline{\psi_u} & \overline{\psi_c} & \overline{\psi_t} 
\end{pmatrix}
M_{up-type}
\begin{pmatrix}
\psi_u \\ \psi_c \\ \psi_t 
\end{pmatrix}$$ et 
$$\begin{pmatrix}
\overline{\psi_{d'}} & \overline{\psi_{s'}} & \overline{\psi_{b'}} 
\end{pmatrix}
M_{down-type}
\begin{pmatrix}
\psi_{d'} \\ \psi_{s'} \\ \psi_{b'} 
\end{pmatrix}$$
mais les deux matrices ne peuvent être diagonalisées simultanément, si bien que si on choisit de diagonaliser $M_{up-type}$, on obtient :
$$M_{down-type}=V\begin{pmatrix}
m_d & 0 & 0 \\
0 & m_s & 0 \\
0 & 0 & m_b \\
\end{pmatrix}V^\dagger
$$
et on a :
$$\begin{pmatrix}
\psi_{d'}\\
\psi_{s'} \\
\psi_{b'}
\end{pmatrix} = V \begin{pmatrix}
\psi_{d}\\
\psi_{s} \\
\psi_{b}
\end{pmatrix}
$$
et $V$ est la matrice unitaire de Cabbibo-Kobayashi-Maskawa, qui donne les probabilités de changement de saveurs des quarks par interaction faible, qui peuvent être mesurées. En 2014, le particle data group donne pour la matrice CKM : 
$$V=\begin{pmatrix}
0.97427\pm0.00014 & 0.22536 \pm 0.00061 & 0.00355 \pm 0.00015 \\
0.22522\pm0.00061 & 0.97343 \pm 0.00015 & 0.0414 \pm 0.0012 \\
0.00886\pm0.00033 & 0.0405 \pm 0.0012 & 0.99914 \pm 0.00005 
\end{pmatrix}
$$

\section{Au delà du modèle standard ...}

Le modèle standard est une théorie qui encore aujourd'hui, est difficile à dépasser. Des signes de "nouvelle physique" apparaissent depuis des années, mais c'est plutôt de l'inexplicable que des observations non concordantes avec le modèle Standard. Tout d'abord, le modèle standard n'explique pas la gravitation. En plus de cela, les problèmes liés à la matière noire, à la "baryon asymmetry", l'inflation de l'univers, la valeur des constantes cosmologiques, la hiérarchie de masse des fermions, pour ne citer qu'eux, ne trouvent pas de réponse dans le modèle standard, ce qui pousse à chercher une théorie englobant le modèle standard, et permettant de résoudre au moins certains des problèmes évoqués ci-dessus. On distingue plusieurs types de modèles étendant le modèle standard : 
\begin{enumerate}
	\item Le "$\kappa$ framework" ("kappa framework"), qui paramètre juste les déviations des constantes du modèle standard
	\item Les "Effective Field Theories" (EFT), qui sont à proprement parler des théories perturbatives du modèle standard
	\item Les modèles simplifiés, où on introduit seulement des corrections locales du modèle standard, qui peuvent être des prédictions de théories plus générales (supersymétriques par exemple)
	\item Les théories générales (UV pour ultraviolet) qui sont une extension du modèle standard également valables à haute énergie, au delà le l'énergie de brisure électrofaible, que nous n'évoquerons pas ici compte tenu de la complexité des idées introduites. Remarquons seulement que les extensions du domaine scalaire sont soit des théories supersymétriques (MSSM, Split SUSY, ...) soit des modèles introduisant une nouvelle dynamique forte qui rend le boson de Higgs composite.
\end{enumerate}

\subsection{Kappas frameworks}
Notons $g_{i,SM}$ les différents couplages du modèle standard, et $g_i$ les couplages vrais, mesurables. On définit : 
$$\kappa_i=\frac{g_i}{g_{i,SM}}$$
Si tous les $\kappa_i$ valent 1, alors il n'y a aucune déviation par rapport au modèle standard.
Ces modèles sont les plus simples extensions du SM, dans la mesure où la mesure des couplages se base sur les taux intégrés de désintégrations, ou les sections efficaces intégrées : il n'y a pas de nouvelles dynamiques. Par conséquent, les applications pour de la nouvelle physique sont fort limitées.
Pour l'instant, les couplages, s'ils diffèrent de ceux du modèle standard, en sont très proches et c'est pour cela qu'il fallait attendre le Run 2 ou 3 du LHC pour espérer mesurer des déviations.

\begin{figure}[!h]
	\centering
	\includegraphics[scale=0.55]{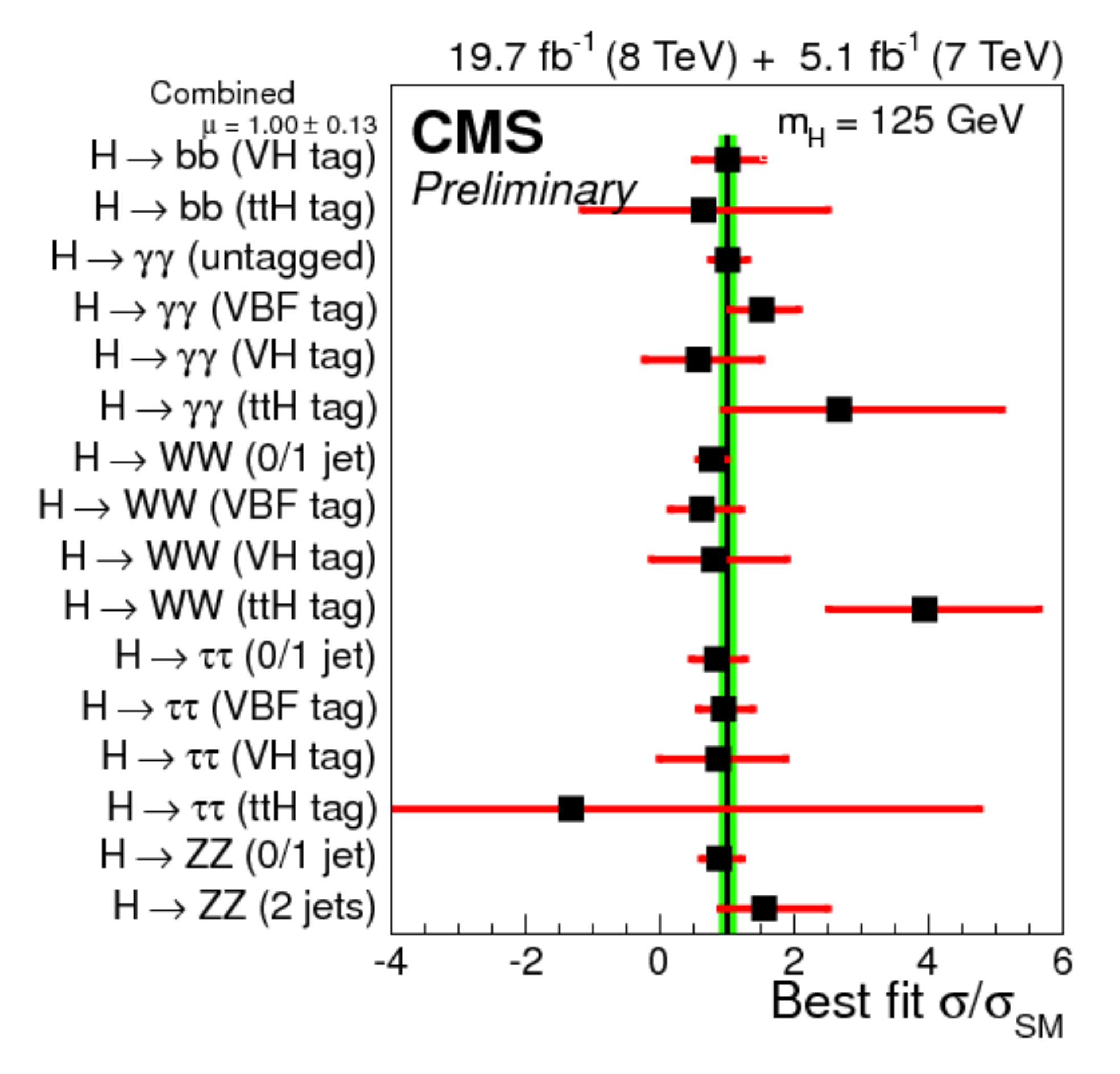}
	\caption{Best fit $\sigma/\sigma_{SM}$ pour la combinaison des principaux modes de désintégration du Higgs (à 8 TeV dans le centre de masse). Les barres horizontales donnent les $\pm1$ déviations standard incluant les erreurs statistiques et systématiques.}
\end{figure}

\subsection{Effective Field Theories}
Pour reprendre la définition de R. Ambrosio au First Annual Meeting of ITN HiggsTools le 17 Avril 2015, "An Effective field theory (EFT) is a field theory, designed to reproduce the
behavour of some underlying (in general, unknown) physical theory in some
limited regime. It focuses on the degrees of freedom relevant to that regime,
simplifying the problem though letting aside some important physics."

Par exemple, la mécanique newtonienne est une théorie effective, approximation de basse énergie de la relativité générale, qui est sans doute elle-même théorie effective d'une autre théorie plus générale, peut être la théorie des cordes. La théorie des interactions faible de Fermi est aussi effective, approximation de basse énergie de la théorie électrofaible.

\paragraph{Deux approches différentes des théories des champs effectives}

On distingue essentiellement deux approches, nommées respectivement \textbf{Top-Down approach} et \textbf{Bottom-Up approach}. Dans la première, on part d'une théorie complète, définie pour des énergies élevées, et on étudie son comportement à basse énergie. Les calculs sont en général simplifiés par le fait qu'il y a des découplages à basse énergie. Dans le deuxième cas, on part d'une théorie connue à basse énergie (par exemple le modèle standard), on étudie son comportement à haute énergie, on rajoute des opérateurs cohérents avec les symétries observées (par exemple de jauge), et on essaie de déterminer les inconnues grâce à l'expérience.\\\\
Supposons qu'il existe de nouvelles particules au delà d'une énergie $\Lambda$. En dessous de cette échelle d'énergie, la dynamique est décrite par le lagrangien : 
$$\mathcal{L}_{eff}=\mathcal{L}_{SM}+\sum_{d\geq5}\frac{c_n^{(d)}}{\Lambda^2}O_n^{(d)}(\phi_{SM})$$
où $d$ est la dimension (de masse, en unités naturelles) des nouveaux opérateurs introduits (le lagrangien standard est de dimension 4)
Weinberg a montré en 1959\cite{weinberg} qu'il n'y avait qu'un seul opérateur effectif de dimension 5, et qu'il violait la conservation du nombre leptonique. Dans le même article, il montre que le nombre baryonique n'est pas conservé par certains opérateurs de dimension 6.\\ 

On peut construire en fait \textbf{80 opérateurs de dimension 6} compatibles avec la symétrie de jauge du modèle standard et qui conservent le nombre leptonique et le nombre baryonique. Les équations du mouvement réduisent l'espace de ces opérateurs à un \textbf{espace de dimension 76}, et éventuellement 59 si on ne considère que des opérateurs CP-even. Cependant, c'est sans considérer les différentes saveurs de leptons et de quarks ! Sinon ce nombre s'élève à 2499.\\
\begin{center}
\color{red}Le lagrangien effectif de dimension 6 est donc le développement au premier ordre d'une théorie plus générale dont le modèle standard est le développement à l'ordre nul.\color{black}
\end{center}
Nous en verrons une partie lors de l'étude phénoménologique du couplage $\lambda$ du champ de Higgs.\\\\
Concernant le couplage lambda du champ de Higgs, de nouveaux termes peuvent apparaître, de la forme $$\frac{\rho}{\Lambda}(\phi_{SM}^\dagger\phi_{SM})^3$$ avec $\mu<<\Lambda$ où $\mu$ est le paramètre du potentiel de Higgs.
L'auto-couplage du champ de Higgs est modifié par cet ajout de la manière suivant : 
$$\delta\lambda_{hhh}=\frac{2\rho v^4}{m_h^2\Lambda^2}$$
Cette modification est intéressante car elle n'induit pas de décalage dans le couplage du champ de Higgs avec les fermions et les bosons faibles (qui ont déjà été mesurés et coïncident avec les prévisions du modèle standard). Le forme du potentiel de Higgs introduit pour pouvoir spontanément briser la symétrie de jauge $SU(2)$ n'a a priori aucune raison d'être un polynôme de degré 4, c'est cette idée qui se développe en fait dans ce modèle.

\subsection{Modèles simplifiés}

Le but est de complexifier de manière raisonnable le modèle standard pour arranger certaines de ses prédictions, comme présenté dans cette étude phénoménologique \cite{pheno}.

\paragraph{Doublet-singlet mixing}
L'extension la plus simple du secteur scalaire (du modèle de Higgs) est le doublet-singlet mixing. Introduisons un singulet d'isospin faible et de couleur, réel $\Phi_S$, qui mixe avec le doublet de Higgs $\Phi_{SM}$. Moyennant un bon choix de jauge : 
$$\Phi_{SM}=\begin{pmatrix}
0\\
\frac{1}{\sqrt{2}}(v+\phi_{SM}) 
\end{pmatrix} \ \ \ \Phi_S=\frac{1}{\sqrt{2}}(V+\phi_S)$$
On en déduit l'existence de deux bosons scalaires, un léger noté h et un lourd noté H, combinaisons linéaires :
$$\begin{pmatrix}
h=cos(\alpha)\phi_{SM}+sin(\alpha)\phi_S\\
H=-sin(\alpha)\phi_{SM}+cos(\alpha)\phi_S
\end{pmatrix}
$$
où $\alpha$ est le mixing angle. Pour que h ait des propriétés proches du Higgs du modèle standard, il faut que H soit très lourd (on est par conséquent dans la limite $v<<V$).\\
Dans cette limite, il est possible de montrer que : 
$$\delta\lambda_{hhh}=\frac{\lambda_{hhh}}{\lambda_{hhh}^{SM}}-1\approx -\frac{3}{2}s_\alpha^2$$ et 
$$\delta\lambda_{hff}=\frac{\lambda_{hff}}{\lambda_{hff}^{SM}}-1\approx -\frac{1}{2}s_\alpha^2$$
où $\lambda_{hhh}$ est le couplage trilinéaire du champ de Higgs et $\lambda_{hff}$ le couplage du champ de Higgs avec les champs de Dirac.

\paragraph{Doublet-doublet mixing}
Dans ce modèle, on introduit un deuxième doublet scalaire d'isospin, ce qui entraîne deux valeurs  moyennes $v_{1,2}$ des champs scalaires dans le vide. Posons $\beta:\frac{v_2}{v_1}$. On obtient deux états neutres h et H, pairs par symétrie CP (et on note $\alpha$ le mixing angle entre les deux), un état neutre A CP-odd, et un champ scalaire chargé $H^\pm$. Si $H$ est très lourd, $h$ se comporte comme le Higgs du modèle Standard comme dans le cas précédent. Dans cette limite, les scalaires restants $H$, $A$ et $H^\pm$ sont en dégénérescence de masse $m_A$. \\\\
Il y a principalement deux types de modèles 2 Higgs Doublet Model (2HDM), appelés type I et type II, qui diffèrent dans les couplages avec les quarks : dans le type I, tous les quarks couplent avec un des doublets, dans le type II les quarks de type up couplent avec l'un des doublets, et des quarks de type down couplent avec l'autre des doublets \cite{doublet}.\\
Dans les modèles de type II, les couplages sont modifiés selon : 
$$\delta\lambda_{hhh}\approx -\frac{2m_A^2}{m_h^2}cos^2(\beta-\alpha)$$ et 
$$\delta\lambda_{htt}\approx sin(\beta-\alpha) +\frac{cos(\beta-\alpha)}{tan\beta}-1$$

\subsection{Discussion}

De nombreux modèles sont donc proposés pour étendre le modèle standard. En effet, l'observation et l'expérience prouve par beaucoup d'aspects que le modèle standard n'est pas complet, il y a de nombreux phénomènes qui ne sont pas décrits. La difficulté vient du fait que pour l'instant, l'expérience ne nous dit pas où chercher la "nouvelle physique", seulement qu'il y en a une. Le LHC commence à arriver à des énergies suffisantes pour pouvoir infirmer certaines théories, et éventuellement faire une grande découverte pour en confirmer une, ou au moins, mettre la communauté scientifique sur une piste à suivre. Pour l'instant, maintenant que le boson de Higgs a été observé, il s'agit de mesurer ses propriétés avec une grande précision puisqu'elles dépendent beaucoup, comme on l'a vu, des extension de modèle standard. Cependant, jusqu'à aujourd'hui aucune grande déviation n'a été repérée. Soit les extensions proposées ne sont donc pas valables, soit les déviations sont trop faibles pour qu'on les observe avec les LHC. Heureusement, il reste à chercher : le couplage $\lambda$ du Higgs, \textbf{prévu par le modèle standard}, n'a pas encore été mesuré (c'est le sujet du prochain chapitre), et si on en croit certaines théories supersymétriques, la masse du superpartenaire du top, le stop, ne devrait pas être hors de portée ... C'est à voir pour les prochains mois ou années à venir.

\chapter{Étude du couplage lambda du champ BEH}

L'étude réalisée au laboratoire Leprince-Ringuet, dans le cadre de la collaboration CMS, du couplage trilinéaire du boson de Higgs se divise en quatre parties : tout d'abord, une étude au niveau générateur en considérant les leptons $\tau$ comme stables permet de sélectionner les variables cinétiques d'intérêt pour remonter jusqu'à la valeur de cette constante de couplage. Ensuite, une nouvelle étude est réalisée, cette fois en prenant en compte la désintégration des $\tau$, pour étudier l'impact de la perte d'information (due notamment aux neutrinos) sur ces variables cinétiques. La troisième étape consiste à utiliser non plus les informations générées, mais les informations reconstruites afin de s'approcher le plus possible de ce que le détecteur peut effectivement observer. Enfin, nous avons tenté de paramétrer les courbes intéressantes, dans le cas de l'étude sans la désintégration des leptons, pour obtenir une expression, au moins locale, de la surface ainsi obtenue.
Tout d'abord, présentons l'accélérateur LHC et l'expérience CMS de manière sommaire, pour bien réaliser comment sont effectuées les mesures, et quelles sont les informations auxquelles nous avons accès.
\section{Le LHC et le détecteur CMS}
\subsection{Le Large Hadron Collider}

Le LHC est un collisionneur de hadrons situé sous la frontière franco-suisse à proximité de Genève. C'est l'accélérateur le plus grand et le plus puissant construit à ce jour ; il est installé dans un tunnel circulaire de 3 mètres de diamètre et de 26.7 km de long, situé à des profondeurs comprises entre 50m et 170m, qui contenait auparavant le collisionneur LEP. Le LHC a été conçu pour faire des collisions de protons ou d'ions lourds (plomb) ; bien que les collisions soient plus compliquées que les collisions électron-positron puisque les protons sont des particules composites, il est possible d'atteindre des énergies bien plus importantes car les protons étant plus lourds, ils perdent moins d'énergie par rayonnement que les électrons.

L'accélérateur comprend des cavités accélératrices à haute fréquence, des cavités magnétiques pour la collimation, et des dipôles magnétiques pour courber le faisceau. Aux énergies atteintes il faut que les dipôles fournissent un champ magnétique de 8,3 T. Ce sont des supraconducteurs refroidis à l'hélium superfluide à 1,9 K (plus froid que la température de fond cosmologique). L'accélérateur comporte 1232 tels dipôles.

Deux faisceaux différents sont accélérés dans des directions opposées et se collisionnent en quatre points, autour desquels ont été bâties quatre expériences : les détecteurs polyvalents ATLAS (A Toroidal LHC Apparatus) et CMS (Compact Muon Solenoid) pour lesquels les objectifs principaux incluaient la recherche et l'étude des propriétés du champ de Higgs et la prospection de physique derrière le modèle standard ; l'expérience LHCb conçue pour étudier la physique du quark bottom et la violation de symétrie CP (non hermétique avec ses détecteurs sont placés de manière à observer des particules émises à très petit angle par rapport au faisceau, et l'appareil ALICE (A Large Ion Collider Experiment) pour étudier les interactions entre ions lourds et le plasma quarks-gluons. ATLAS et CMS sont situés aux antipodes de l'anneau, aux points de plus haute luminosité.

\begin{figure}[!h]
	\centering
	\includegraphics[scale=0.25]{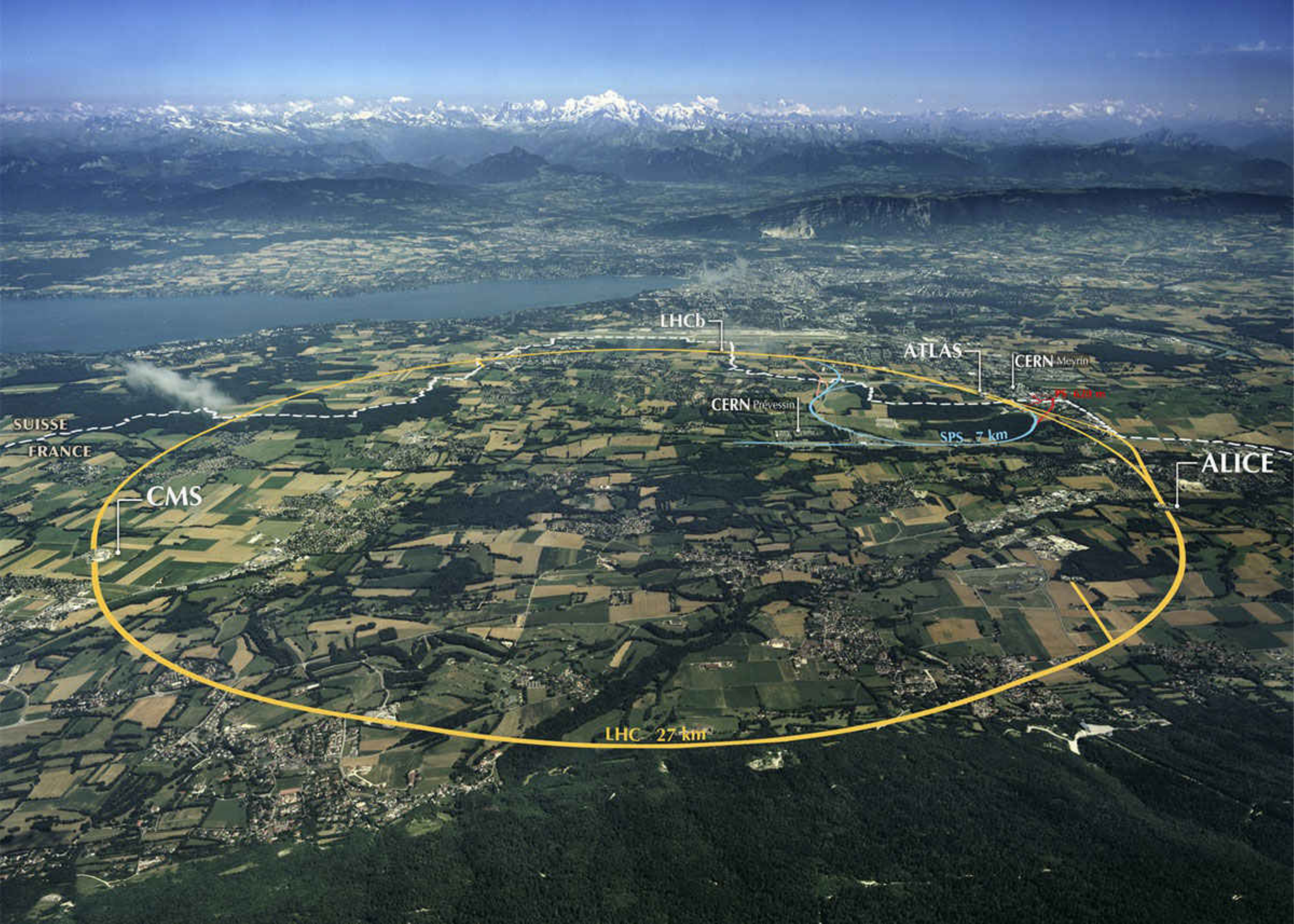}
	\caption{Vue aérienne du complexe du CERN (site de CERN)}
\end{figure}

Chacun des faisceaux est pulsé ; chaque paquet contient environ $10^{11}$ protons, et l'intervalle de temps entre deux paquets successifs est de 25 ns. La fréquence de collision est de 40 MHz, et la luminosité nominale instantanée du faisceau de $10^{34}$ $cm^{-2}s^{-1}$ (il y a potentiellement $10^{34}$ évènements par seconde pour une surface d'un $cm^2$ au sein des détecteurs du LHC).

Avant d'arriver jusqu'au LHC, les protons doivent être progressivement accélérés. Après ionisation des atomes d'hydrogènes, les protons parcourent le Linac 2, un accélérateur linéaire à la sortie duquel ils ont une énergie de 50 Mev (et ont gagné 5 \% en masse). Ils parcourent ensuite le Synchrotron à Protons (PS) qui leur confère une énergie de 26 GeV, puis le Super Synchrotron à Protons (SPS) à l'issue duquel ils sont injectés avec une énergie de 450 GeV dans le Large Hadron Collider (LHC), où ils atteignent une énergie de 6,5 TeV (et bientôt 7 TeV) avant de rentrer en collision avec les protons d'un autre paquet circulant en sens inverse.

\paragraph{Run 1} Le lancement opérationnel du LHC était prévu pour septembre 2008 mais suite à un problème électrique, 100 aimants de courbure ont subi un "quench", qui a causé l'endommagement de 53 aimants et libéré 6 tonnes d'hélium liquide dans le tunnel, brisant le vide poussé qui y régnait. Le LHC a repris les collisions le 20 novembre 2009, montant en moins de 10 jours à une énergie de 1,18 TeV par faisceau, battant ainsi le record du Tevatron de 0,98 TeV par faisceau détenu depuis plus de 8 ans. 
L'énergie des faisceaux a alors été progressivement augmentée pour atteindre 7 TeV en mars 2010. Le premier 'run' de collisions protons-protons s'est arrêté au début du mois de novembre 2010. Ont suivi des collisions d'ions lourds pendant deux mois, puis, après un arrêt technique, les collisions de proton ont été reprises en mars 2011, atteignant le 21 avril 2011 une luminosité record, de $4,67.10^{32}$ $cm^{-2}s^{-1}$, battant ainsi le record lui-encore détenu par le Tevatron.

\begin{figure}[!h]
	\centering
	\includegraphics[scale=0.13]{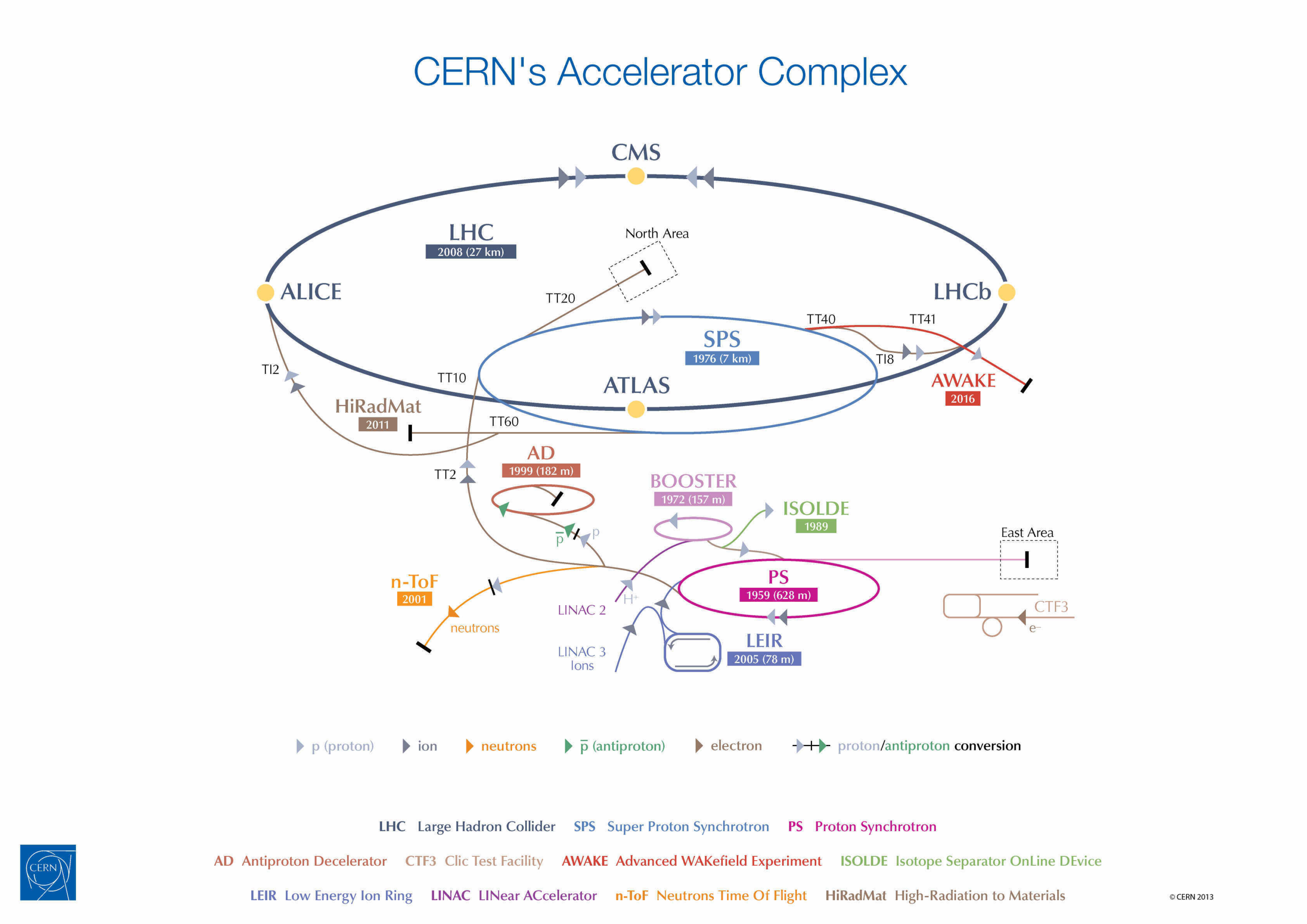}
	\caption{Schéma des accélérateurs du CERN et de leurs raccords}
\end{figure}

La première observation expérimentale de plasma quarks-gluons a été annoncée en mai 2011,  la découverte d'une nouvelle particule résonance de $b\overline{b}$ appelée $\chi_b(3P)$ en décembre 2011\cite{decouvchi}. Durant l'hiver, des modifications ont été apportées aux aimants du tunnel pour pouvoir élever l'énergie par proton, au moment de la collision, à 4 TeV (soit 8 TeV par couple de protons). Le 4 juillet 2012 la découverte d'une nouvelle particule élémentaire est annoncée par les deux détecteurs polyvalents du LHC : CMS rend compte d'un boson d'une masse de $125,3 ± 0,6$ $GeV$\cite{CMSHiggs} et ATLAS d'un boson d'une masse de $126,0±0,6$ $GeV$\cite{ATLASHiggs} (le 14 mars 2013, le CERN précise que ce boson est très vraisemblablement une particule scalaire, de parité paire ; qu'il s'agit donc d'un "boson de Higgs", en se basant sur l'étude des données recueillies l'année d'avant). Le 8 novembre 2012 la désintégration rare du boson $B^0_S$ en deux muons est observée pour la première fois. Elle constituait un test important des théories supersymétriques. Ces résultats sont compatibles à 3,5 sigmas avec le modèle standard, et imposent des contraintes fortes sur ses extensions\cite{Bdecay}.

Enfin, en février 2013, le LHC est arrêté pour le première longue pause technique, afin de modifier l'accélérateur pour des énergies et luminosités plus importantes.
 
\paragraph{Run 2}
Le LHC a repris les collisions en mai 2015 à une énergie de 13 TeV. La collaboration LHCb a annoncé le 14 juillet 2015 l'observation d'états pentaquarks en étudiant les données du run 1\cite{pentaq}. Au bout de ce run, fin 2018, le LHC devrait faire des collisions avec une énergie totale de 14 TeV dans le système du centre de masse (sa puissance nominale) et une luminosité d'environ $1,7*10^{34}$ $cm^{-2}s^{-1}$. L'une des principales lignes directrices des expériences ATLAS et CMS sera d'étudier les caractéristiques du boson de Higgs découvert depuis peu, et de tester les différentes extensions du modèle standard. Un deuxième arrêt long est prévu de 2019 à 2020.

\paragraph{Run 3}
Le Run 3 a d'ores et déjà été approuvé par la direction du CERN et devrait permettre d'atteindre une luminosité de $2,0*10^{34}$ $cm^{-2}s^{-1}$ à la fin de l'année 2023.

\paragraph{Et après ?}
Le fonctionnement du LHC est prévu jusqu'aux environs de 2030. Des modifications supplémentaires seront sans doute apportées à l'accélérateur pour avoir une meilleure résolution et fonctionner à une plus grande énergie.

\subsection{Le détecteur CMS}
\paragraph{Présentation générale}
Le détecteur CMS est un détecteur polyvalent installé dans la caverne souterraine de Cessy (en France). Il a été conçu pour étudier la physique à l'échelle de TeV, mieux comprendre le modèle standard, et  discriminer ses extensions, notamment les théories supersymétriques. CMS a d'ores et déjà été un acteur central de grandes découvertes, comme nous l'avons vu. Il a été conçu pour permettre une bonne reconstruction des états finaux di-jets et de l'énergie transverse manquante, et est hermétique jusqu'à une pseudo-rapidité de $\eta=5$. La pseudorapidité étant définie par :
$$\eta = \frac{1}{2}ln(\frac{|\vec{p}|+p_z}{|\vec{p}|-p_z})$$ 
cela correspond à un angle maximal de déviation par rapport au faisceau de $0,7°$ environ. Le diamètre du détecteur est 15 mètres, sa longueur 27,8 mètres.Ce volume est occupé par différents détecteurs pour une masse totale de 14000 tonnes.

\begin{figure}[!h]
	\centering
	\includegraphics[scale=0.6]{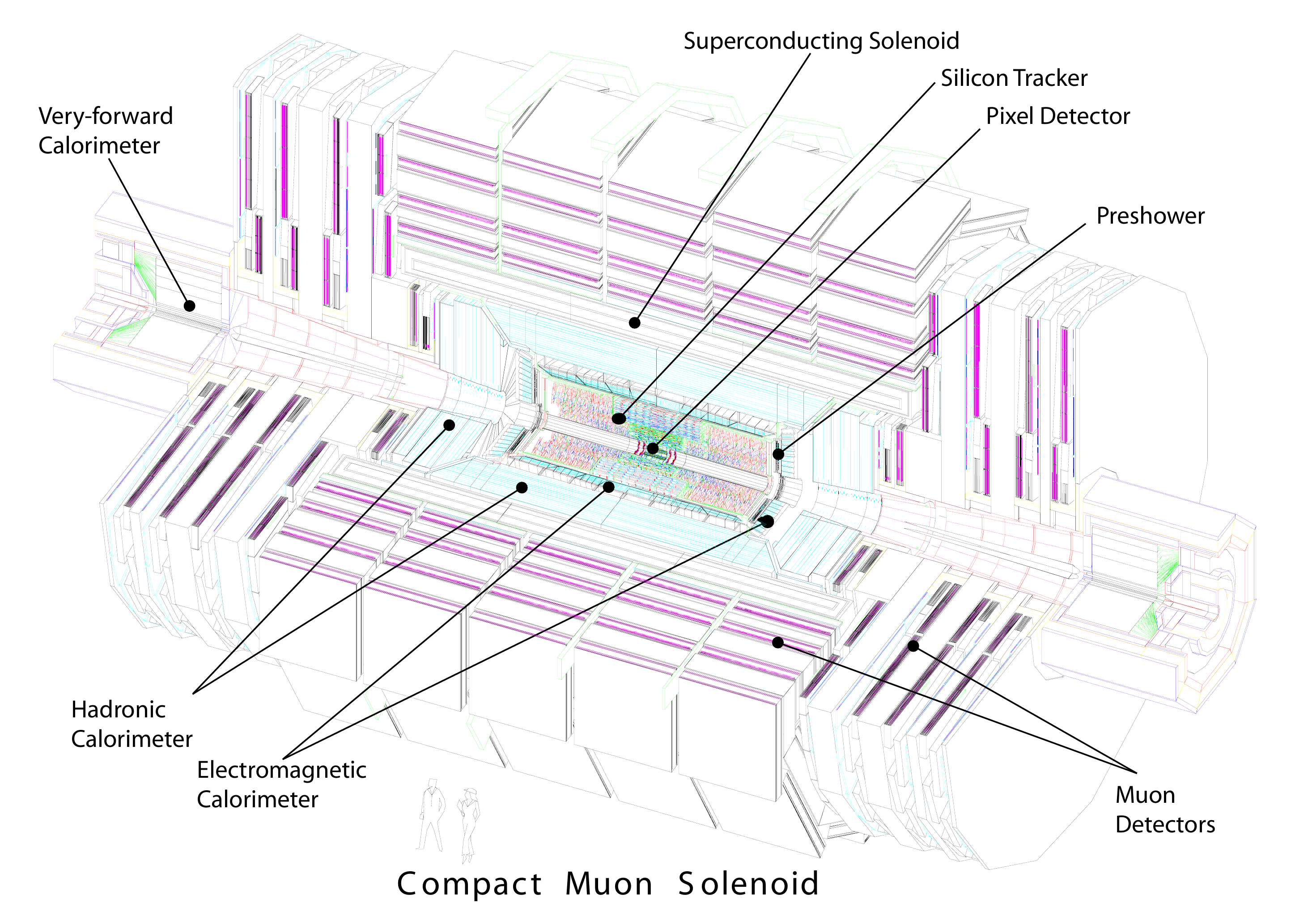}
	\caption{Coupe du détecteur CMS}
\end{figure}

Le volume central du détecteur est le "barrel", les deux éléments extrémaux sont dénommés "endcaps". La collaboration CMS utilise un repère orthonormé direct $(x,y,z)$ basé au point de collision, l'axe $x$ pointe vers le centre de l'anneau LHC, l'axe $y$ est dirigé vers le haut, et l'axe $z$ est porté par une tangente à la trajectoire des protons au point de collision, et pointe dans le sens anti-horaire. 
L'angle de longitude $\phi$ est défini dans le plan (x,y) comme l'angle par rapport à l'axe x ; la distance à l'axe du faisceau est notée r. L'angle de colatitude est l'angle $\theta$, dans le plan $(r,z)$ (de manière équivalente, comme on l'a vu, la pseudorapidité $\eta=-ln(tan(\theta/2))$ peut être employée). La norme de l'impulsion de la particule dans le plan transverse est notée $p_t$. 

Différents sous-détecteurs sont disposés en couches tout autour de la trajectoire du faisceau. 
Le cœur de CMS est le solénoïde niobium-titane supraconducteur produisant un champ de 3.8 T, à une température de 4.5 K. Le système de tracking et les calorimètres sont situés dans le cylindre qui supporte le solénoïde, tandis qu'un détecteur à muons situé à l'extérieur de celui-ci, utilise le retour de 2 T du champ du solénoïde dans la structure en acier qui entoure l'électroaimant.

Chacun des détecteurs permet de faire des mesures sur une classe de particules : le tracker mesure l'impulsion des particules chargées, le calorimètre électromagnétique permet de mesurer l'énergie des électrons et des photons, le calorimètre hadronique, l'énergie des hadrons chargés et neutres, et les détecteurs à muons servent à identifier ces derniers et à mesurer leur impulsion.

Les informations fournies par l'ensemble des détecteurs sont souvent redondantes mais cela permet d'augmenter la précision des mesures. 

\begin{figure}[!h]
	\centering
	\includegraphics[scale=0.45]{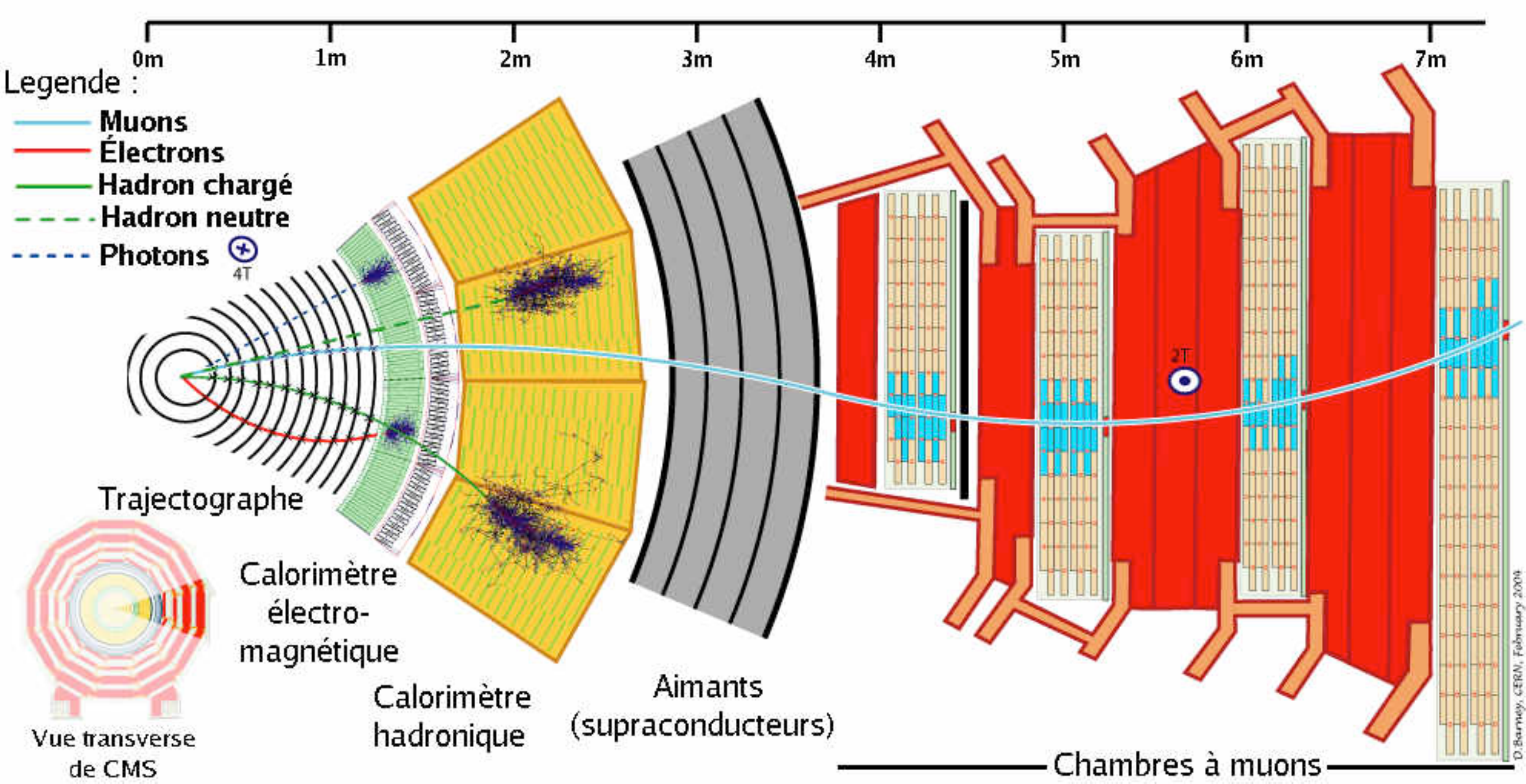}
	\caption{Disposition des différents sous-détecteurs dans CMS}
\end{figure}

\paragraph{Tracker}
Le système de traçage (ou tracker) des particules est situé juste autour du point d'interaction, le but étant de pouvoir observer des particules dont le temps de demi-vie est trop court pour qu'elles soient véritablement mesurées, mais suffisamment long pour qu'on observe un vol de quelques millimètres. C'est le cas notamment des quarks bottom $b$ et des leptons tau $\tau$. Par exemple, pour le lepton $\tau$ dont le temps de demi-vie est de $T=(290.3±0.5)\times 10^{-15} s$ (d'après le "Particle Physics Booklet" de Chinese Physics C), la particule parcourt une distance de vol $$l=\gamma cT\approx 1,5 mm$$ si son énergie est l'énergie de seuil de $30 Gev$ que nous prendrons plus tard pour notre étude. Bien sur, le tracker permet aussi de suivre les particules plus stables. \\\\
Le système est constitué de trois régions selon la distance au centre de collision. Tout d'abord, on trouve des pixels de silicone de $100\times150 \mu m$ (soit 66 millions par mètre carré), de résolution $10 \mu m$ dans le plan $(r,\phi)$ et $20 \mu m$ dans le plan $(r,z)$. Les deux autres "couches" sont constituées de bandes de silicone (c'est la taille des bandes qui les différentient). La résolution varie de $20$ à $50 \mu m$ dans le plan transverse, et de $200$ à $500 \mu m$ dans la direction $(r,z)$. Il y a en tout $9,6$ millions bandelettes.\\\\
Le matériau a été choisi pour que l'interaction perturbe le moins possible les particules, qui doivent ne laisser derrière elle qu'une toute petite fraction de leur énergie dans le tracker pour ne pas biaiser les mesures des calorimètres. Des algorithmes recueillent ensuite ces données puis les reconstituent en trajectoires de particules. Le tracker est localisé dans le solénoïde de CMS donc les trajectoires des particules chargées sont courbées par la force de Lorentz, et cela permet d'obtenir le rapport de leur charge avec leur masse effective.

\paragraph{Le calorimètre électromagnétique}
Autour du tracker est situé le calorimètre électromagnétique (ECAL). Il mesure l'énergie des électrons et des photons en provoquant des douches électromagnétiques dans les cellules du détecteur, ce qui entraîne l'absorption totale de l'énergie des particules incidentes et sa transformation en un signal, dans ce cas, un scintillement.\\\\
Dans CMS, le ECAL est composé de 75000 cristaux de tungstate de plomb ($PbWO_4$). Le choix du matériau est motivé pour l'optimisation de la reconstruction des particules. Chaque cristal fait environ $23$ cm de long. Le calorimètre a, comme CMS, une forme cylindrique. La résolution est très bonne ; pour des électrons de $45 GeV$, elle est d'environ $2 \%$ si les électrons sont détectés dans le barrel, et comprise entre $2 \%$ à $5 \%$ si les électrons sont détectés dans les endcaps. \\
Pour un système de deux photons ayant la topologie correspondant à deux photons issus de la désintégration d'un boson de Higgs, la résolution varie entre $1,1 \%$ et $2,6 \%$ dans le barrel, et entre $2,2 \%$ et $5 \%$ dans les endcaps.

\paragraph{Le calorimètre hadronique}
Les mesures de ECAL sont complémentées par les mesures du calorimètre hadronique (HCAL), qui reposent sur la transformation des hadrons en douches hadroniques. Ces mesures sont forcément moins précises que celles du ECAL, à cause de la structure des interactions fortes. Cependant, elle est indispensable puisque c'est la seule manière de faire des mesures sur les hadrons neutres, et aussi d'identifier et reconstruire les jets.
Le HCAL est constitué de cuivre, puisque le champ magnétique intense impose l'utilisation de matériaux non magnétiques, et d'un plastique actif pouvant scintiller (ce qui permet les mesures. Il y a un deuxième type de HCAL sur les bases du cylindre, composé d'absorbeurs en acier). Les mesures passent par la détection par des tubes photo-multiplicateurs de lumière Cherenkov produite par des fibres de quartz incrustées dans l'absorbeur.\\ 
Le HCAL est situé entre le ECAL et le solénoïde de CMS, cependant, cela ne laisse pas suffisamment de place pour pouvoir contenir entièrement la douche hadronique, si bien qu'un calorimètre complémentaire est rajouté autour de l'aimant.

\paragraph{Les détecteurs à muons}
Ce sont les détecteurs les plus en périphérie de CMS. $200$ fois plus lourds que les électrons, n'interagissant pas non plus par interaction forte, et avec un temps de vie relativement long, les muons pénètrent beaucoup plus loin dans le détecteur (parce qu'ils perdent moins d'énergie par rayonnement) que les électrons ou hadrons, et atteignent ces chambres à muons. L'impulsion des muons est mesurée par la courbure de leur trajectoire qui est suffisamment particulière pour être devenu un symbole de la collaboration CMS, présent sur son logo. En effet, le détecteur à muons est contenu dans une structure ferromagnétique jouant le rôle de réflecteur pour le champ du solénoïde. Ainsi, les trajectoire des muons est courbée d'abord dans un sens, puis dans l'autre ce qui permet une détection beaucoup plus efficace. Le détecteur comporte trois détecteurs à gaz différents.

\paragraph{Le trigger}
Le système de sélection des évènements, ou trigger and data acquisition system (TriDAQ), permet de sélectionner sur l'ensemble des collisions qui se produisent dans le détecteur, celles qui sont potentiellement les plus intéressantes. Le fréquence de collision est d'environ $40 MHz$, ce qui est beaucoup trop pour espérer stocker toute l'information, en vue d'analyses postérieures. Deux niveaux sont distingués dans le procédé. \\\\
Le premier niveau constitue le level-1 trigger, un système hardware rapide qui permet de ne garder que quelques milliers ou dizaines de milliers d'évènements par seconde. L'algorithme repose en majeure partie sur la conservation des évènements comportant des objets de haute impulsion transverse. Il ne bénéficie que d'informations dites de "basse granularité" et de basse résolution. La décision prend un temps caractéristique de $1 \mu s$, durant lequel les informations plus précises peuvent être collectées. Si l'évènement est sélectionné par le level-1, toutes les informations sont transmises au High Level Trigger (HLT). Le HLT a plus de temps pour "prendre une décision", et repose donc sur des algorithmes plus complexes ; certains permettent même de reconstituer des quarks $b$ ou des $\tau$. Si l'évènement en question est lui-aussi jugé digne d'intérêt par le HLT (il n'en reste que quelques dizaines par seconde en moyenne), toutes les informations nécessaires sont stockées pour une analyse off-line.

\section{Production di-Higgs non résonnante par fusion de gluons}

\subsection{Motivations de l'étude}

La production double Higgs est actuellement l'une des méthodes sur lesquelles se concentrent les espoirs pour la mesure du couplage trilinéaire $\lambda_{HHH}$ du champ de Higgs. Cette étude est considérée à faire durant la phase de haute luminosité de LHC (Run 3) mais il est possible qu'il y ait déjà des informations utiles à la fin du Run 2.\\

En particulier, il devrait être possible de tester les grandes déviations de $\lambda$ dans trois ans, grâce à une sensibilité accrue à des effets "Beyond the Standard Model" (BSM), modifiant de manière sensible le potentiel de Higgs. Entre autres :
\begin{enumerate}
	\item des couplages de Yukawa ne correspondant pas à ceux de Modèle Standard (SM)
	\item des interactions non linéaires $ttHH$ via le paramètre '$c_2$' (voir après)
	\item des opérateurs Higgs-gluons de dimension 6
	\item des champs scalaires légers interagissant par interaction forte
	\item des résonances extra-dimensionnelles
	\item des partenaires supersymétriques
\end{enumerate}

Les modes de production double-Higgs les plus probables sont au nombre de quatre. La fusion de gluons : 

\begin{figure}[!h]
	\centering
	\includegraphics{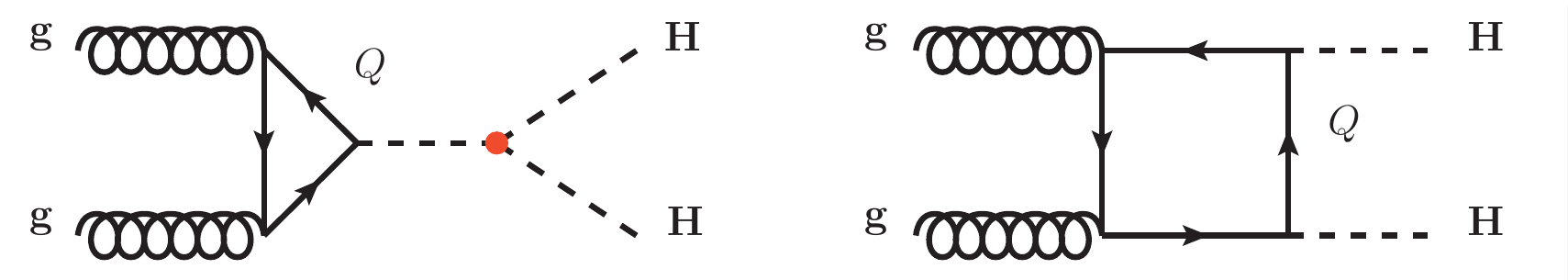}
	\caption{Production di-Higgs par fusion de gluons}
\end{figure}

est sensible, en plus de l'être au couplage lambda, à des particules lourdes 'colorées', et à des couplages top-Higgs anormaux.\\\\
La production par fusion de bosons vecteurs est quant'à elle sensible aux couplages anormaux entre boson de Higgs et bosons vecteurs :

\begin{figure}[!h]
	\centering
	\includegraphics{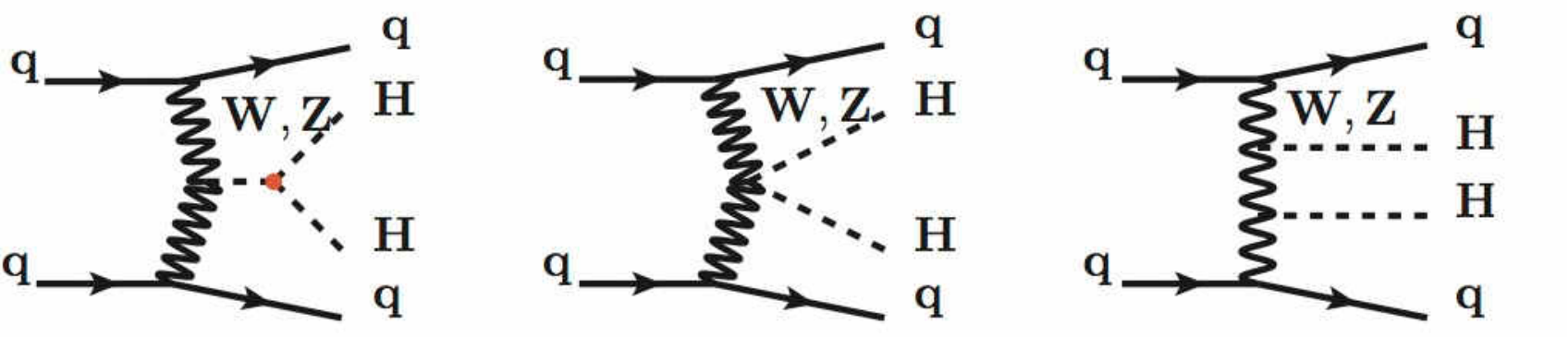}
	\caption{Production di-Higgs par fusion de bosons vecteurs}
\end{figure}

Les processus du type $gg\rightarrow ttHH$ : 

\begin{figure}[!h]
	\centering
	\includegraphics{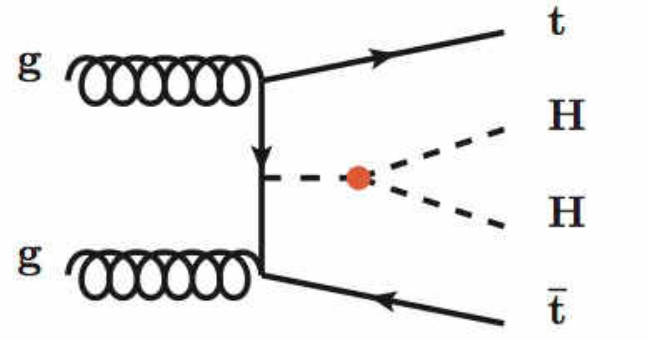}
	\caption{Production di-Higgs et $t\overline{t}$}
\end{figure}

dépendent également du couplage top-Higgs. Enfin la production 'directe' à partir de bosons vecteurs est sensible aux couplages non linéaires de type $VVHH$ (présent sur le graphe de Feynman le plus à droite) : 

\begin{figure}[!h]
	\centering
	\includegraphics[scale=0.75]{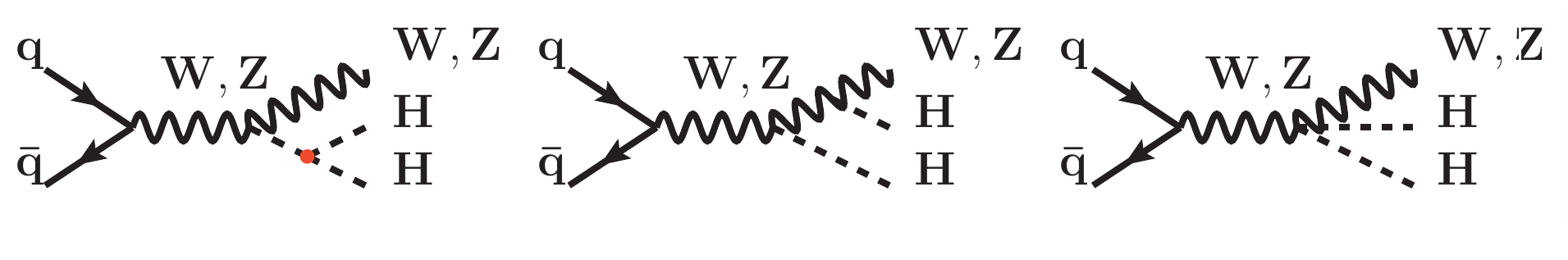}
	\caption{Production di-Higgs "directe" à partir de bosons vecteurs}
\end{figure}

Dans la suite, nous nous concentrerons sur les procédés par fusion de gluons.

\subsection{Lagrangien effectif}
Pour la production double-Higgs par fusion de gluons, les termes relevants dans le lagrangien des théories effectives en dimension 6 sont \cite{lagran}:
	$$\mathcal{L}_{hh}=-\frac{m_h^2}{2v}(1-\frac{3}{2}c_H+c_6)h^3+\frac{\alpha_sc_g}{4\pi}(\frac{h}{v}+\frac{h^2}{2v^2})G_{\mu\nu}^aG^{\mu\nu}_a$$ 
$$-[\frac{m_t}{v}(1-\frac{c_H}{2}+c_t)\overline{t}_Lt_Rh+...]-[\frac{m_t}{v^2}(\frac{3c_t}{2}-\frac{c_H}{2})\overline{t}_Lt_Rh^2 + ...]$$
avec dans ce somme de quatre termes, le premier qui correspond au couplage trilinéaire du Higgs, le deuxième aux interactions de contact Higgs-gluons, le troisième aux couplages Higgs-tops, et le quatrième à l'interaction non linéaire $ttHH$. On peut choisir une autre paramétrisation : 
$$\color{blue}\lambda=1-\frac{3}{2}c_H+c_6$$
$$\color{green}y_t=1-\frac{c_H}{2}+c_t$$
$$\color{red}c_2=\frac{3c_t}{2}-\frac{c_H}{2}$$
Ces paramètres interviennent de la façon suivante dans les différents procédés : 

\begin{figure}[!h]
	\centering
	\includegraphics[scale=0.6]{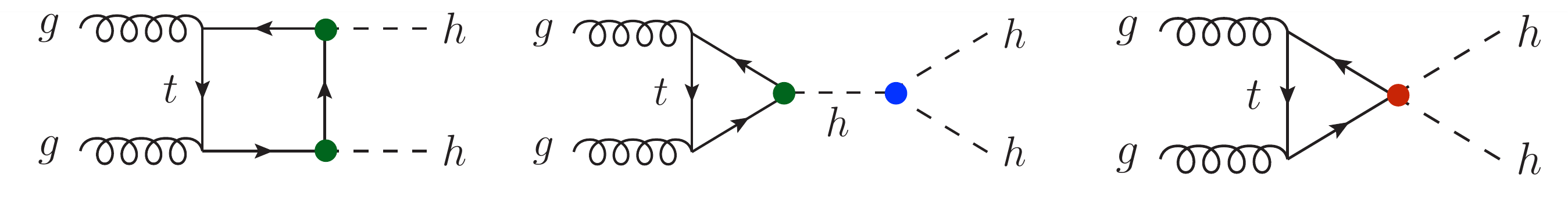}
	\caption{Modification des sommets colorés dans le cadre de la théorie effective}
\end{figure}

\paragraph{Cas du modèle standard}
Si on se place dans le cadre du modèle standard, les paramètres $\lambda$, $y_t$ et $c_2$ prennent leur valeur standard, ce qui implique en particulier que $c_2=0$. Le calcul de la section efficace des différents procédés est possible au troisième voire quatrième (séminaire Higgs Hunting 2015) ordre, ce qui permet d'obtenir le tableau suivant, issu des recommandations de "Higgs Higgs cross section working group" :
\begin{center}
\begin{tabular}{|c|c|}
	\hline
	Énergie dans le centre de masse (TeV) & $\sigma$ (femtobarn fb) \\
	\hline
	7 & 6.85 \\
	8 & 9.96 \\
	13 & 34.3 \\
	14 & 40.7 \\
	\hline
\end{tabular}
\end{center}
En comparaison, à 13 TeV, la section efficace de la contribution par fusion de bosons vecteurs à la production di-Higgs est de 6.8 fb, celle de ttHH de 0.7 fb. 

\paragraph{Cas du couplage $\lambda$ anormal}
Si on laisse le couplage $\lambda$ varier, tout en conservant $y_t^{SM}$ et $c_2=0$, la section efficace est modifiée de la manière suivante : 
\begin{center}
	\begin{tabular}{|c|c|}
		\hline
		$\lambda/\lambda^{SM}$ & $\sigma/\sigma^{SM}$ \\
		\hline
		-4 & 12 \\
		0 & 2.2 \\
		1 & 1 \\
		2.46 & 0.42 \\
		20 & 105 \\
		\hline
	\end{tabular}
\end{center}
et la contribution des différents modes de production peut être estimée par des méthodes Monte-Carlo (en utilisant MadGraph5) :

\begin{figure}[!h]
	\centering
	\includegraphics[scale=0.52]{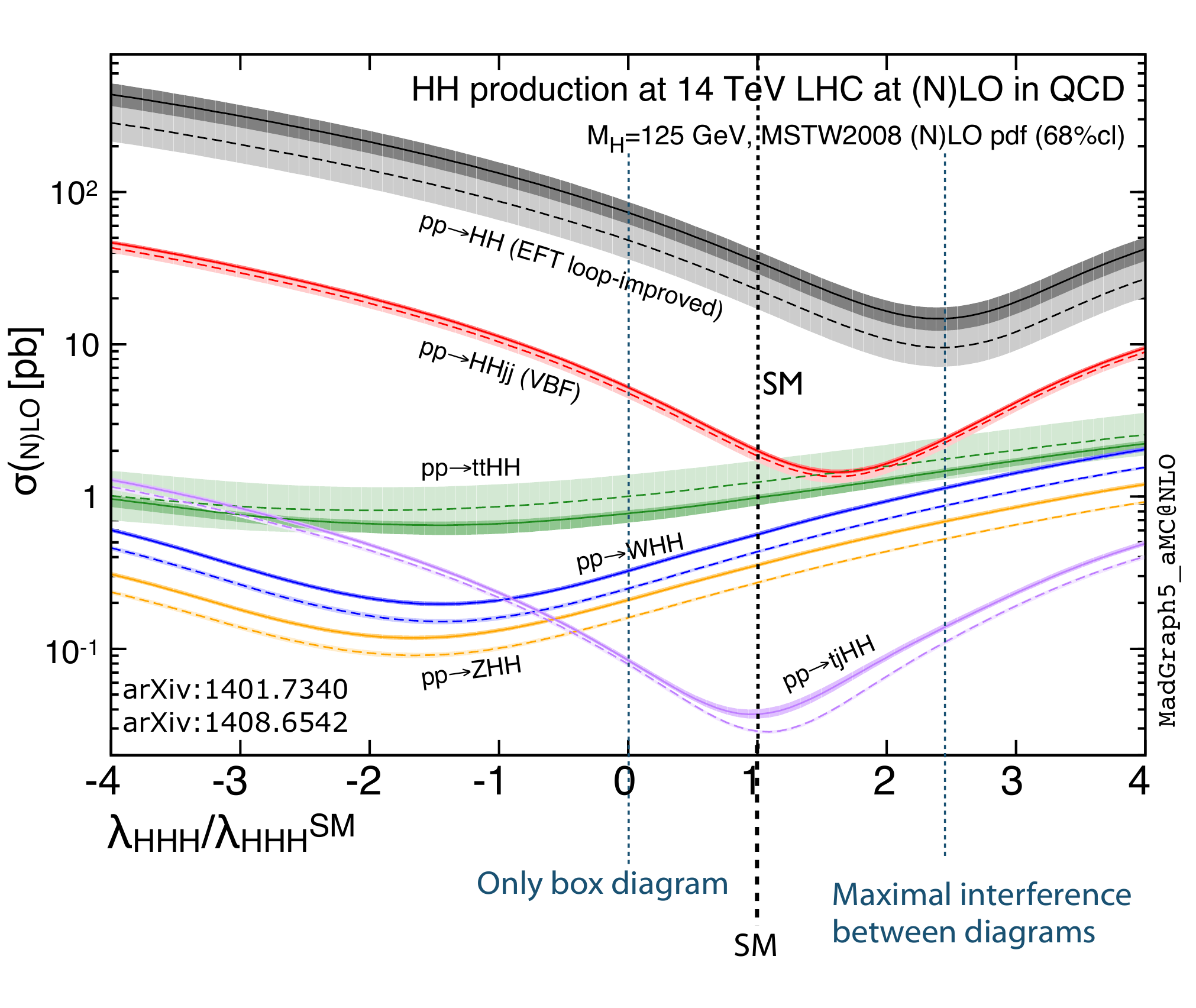}
	\caption{Évolution de la section efficace de différentes production di-Higgs en fonction de $\frac{\lambda}{\lambda^{SM}}$}
\end{figure}

\subsection{Sélection des évènements}
Il faut choisir à quels modes de désintégrations du système di-Higgs on s'intéresse, pour déterminer ce qui doit être mesuré dans les collisions de CMS pour pouvoir remonter au couplage $\lambda$. Ce choix dépend essentiellement de deux facteurs : tout d'abord, il faut que le branching ratio du canal ne soit pas trop faible, autrement dit il faut que la probabilité que le système se désintègre de manière voulue ne soit pas trop faible si on veut pourvoir l'observer. De plus, il faut que le fond d'évènements divers dont l'état final est proche de l'état final choisi soit relativement peut intense, en particulier pour le "irreductible background", c'est-à-dire les évènements pour lesquels l'état final est rigoureusement identique (d'un point de vue détecteur). Ces problèmes ont été posés et étudiés dans des études phénoménologiques \cite{Spira} \cite{Spannow}.

Dans la suite, les valeurs des sections efficaces et les branching ratios sont calculées en utilisant les prévisions du modèle standard. Cela se justifie par l'excellente concordance (actuelle) entre les mesures des expériences du LHC des propriétés du Higgs et les prévisions du modèle standard.

La section efficace de la production di-Higgs étant très faible, de l'ordre de quelques dizaines de femtobarns, il est important de choisir un canal de désintégration assez probable pour pouvoir recueillir suffisamment de données. Le "Higgs cross-section working group" donne le tableau suivant pour les ratios de désintégration les plus probables du boson de Higgs  :
\begin{center}
	\begin{tabular}{c|c}
		canal & BR (\%) \\
		\hline
		$b\overline{b}$ & 57.7 \\
		WW & 21.5 \\
		$\tau\tau$ & 6.32 \\
		ZZ & 2.64 \\
		$\gamma\gamma$ & 0.228 \\
	\end{tabular}
\end{center}

Le meilleur canal du point de vue du branching ratio est la désintégration en $b\overline{b}$. Cependant, la désintégration d'une paire de Higgs en quatre quarks bottoms est recouverte, au LHC, par du fond QCD extrêmement intense, ce qui exclut toute étude sous cette hypothèse. Cependant, pour maximiser le branching ratio tout en gardant une bonne maîtrise du fond QCD, on regarde en général des états finaux contenant deux quarks bottoms. On peut aussi également regarder des canaux contenant deux $\tau$ et deux bosons.
\begin{enumerate}
	\item Le canal $bb\tau\tau$ a un bon branching ratio, et le fond irréductible le plus important est le processus $t\overline{t}\rightarrow b\overline{b}\tau\nu_\tau\tau^-\overline{\nu}_\tau$ qui a une petite section efficace comparée à la production énorme de paires $t\overline{t}$ compte tenu du petit branching ratio de $W\rightarrow\tau\nu_\tau$ (qui est de $11,25±0,20\%$).
	\begin{figure}[!h]
		\centering
		\includegraphics{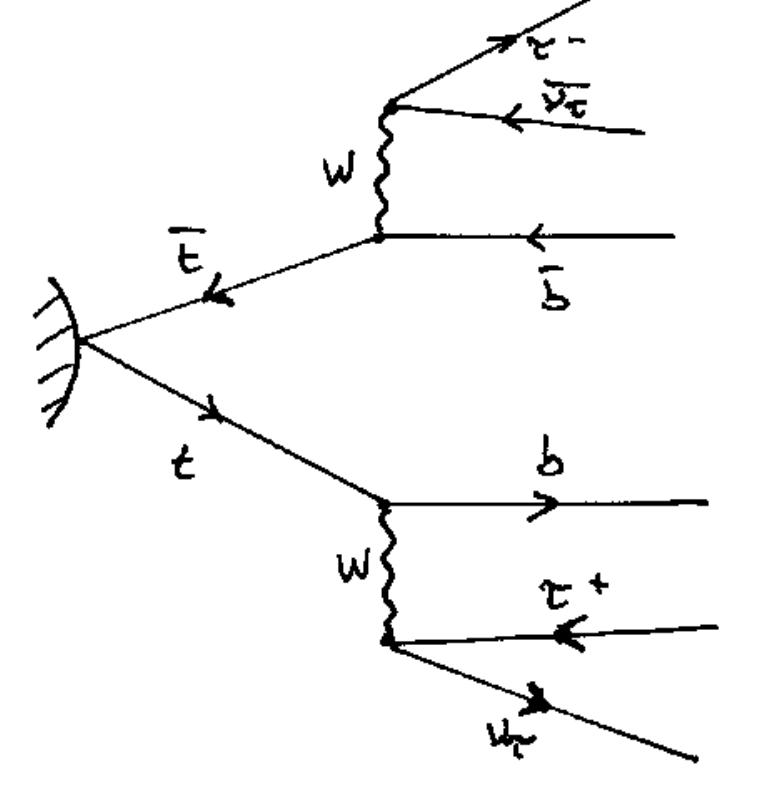}
		\caption{Diagramme de Feynman au  premier ordre pour $t\overline{t}\rightarrow b\overline{b}\tau\nu_\tau\tau^-\overline{\nu}_\tau$}
	\end{figure}
	\item Le canal $bb\gamma\gamma$ est très peu probable, mais le fond très peu intense et essentiellement du à des processus de QCD et à de la production d'un boson de Higgs avec une paire de top. Ainsi même si on s'attend à avoir très peu d'évènements, la sensibilité de cette topologie d'évènements reste intéressante.
	\item Le canal $bbWW$ est peu intéressant parce que très contaminé par le background $t\overline{t}$
	\item Il doit être possible de rejeter de manière très efficace le background pour le canal $bbZZ$, surtout pour un état final de quatre leptons. Cependant, le branching ration reste faible.
	\item $ZZ\tau\tau$ et $\gamma\gamma\tau\tau$ sont des canaux intéressants du point de vue du background mais ils n'ont pas pu être utilisés durant le Run 1 tellement leur branching ration est petit. Ils pourraient devenir très intéressant pour le HL-LHC.
	\item Le canal $WW\tau\tau$ a un bon branching ratio, mais la présence de bosons W entraîne une mauvaise reconstruction de la masse des Higgs.
\end{enumerate}

Le canal $bb\tau\tau$ semble donc le plus intéressant pour mesurer le couplage trilinéaire du champ de Higgs, comme cela a été souligné dans les études phénoménologiques référencées, pour les prochaines années d'expériences au LHC. L'étude qui suit se focalise donc sur ce canal.

\begin{figure}[!h]
	\centering
	\includegraphics[scale=0.8]{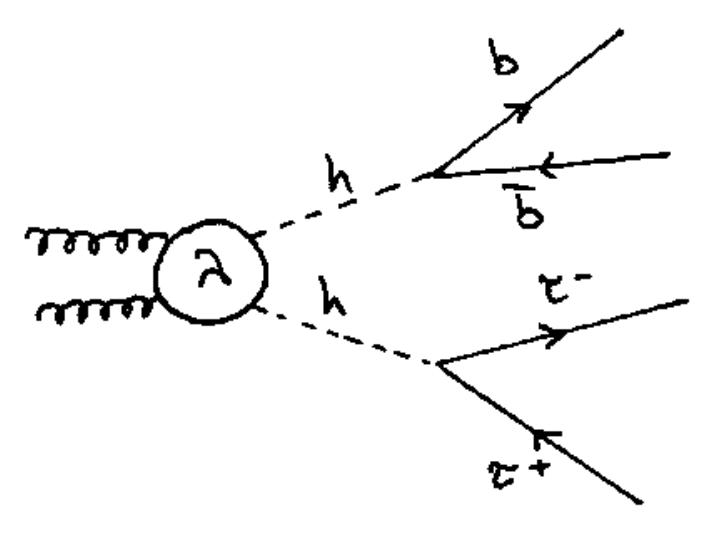}
	\caption{Le cadre de l'étude est le suivant : on fait varier le paramètre $\lambda$ qui modifie ce qui se passe dans la "boîte noire" qu'est le rond portant le $\lambda$. Le reste est fixé, c'est-à-dire que les générateurs Monte-Carlo ne simulent que des évènements qui débutent par fusion de gluons et s'achèvent avec la désintégration de système di-Higgs en $b\overline{b}\tau\tau$, suivie éventuellement de la désintégration des $\tau$.}
\end{figure}

\subsection{Simulation et génération des évènements}
Les fichiers utilisées sont des échantillons générés par des méthodes Monte-Carlo. Les processus durs de QCD sont générés par MadGraph \cite{madgraph} pour la production di-Higgs, et par powheg \cite{powheg} pour le fond irréductible $t\overline{t}$. Les parties de showering, d'hadronisation ou encore les corrections sont simulées par PYTHIA 8 \cite{pythia}. Pour la reconstruction des particules dans le détecteur CMS, GEANT4 \cite{geant} donne les résultats de l'interaction des particules avec le détecteur (ce qu'observe le tracker, combien d'énergie est déposée dans chaque calorimètre...), et enfin, l'environnement CMSSW \cite{cmssw} permet de simuler la réponse du détecteur et la reconstruction des particules.\\\\
Les informations sont stockées dans différents "Branch" d'un "Tree" (fichier ROOT \cite{root}), et traitées en utilisant l'environnement ROOT et l'interpréteur PyROOT \cite{pyroot}.

\section{Étude de la cinétique sans la désintégration des $\tau$}
\subsection{Reconstruction des évènements}
Le but de cette première étude à un niveau générateur, simpliste puisque nous ne prenons pas en compte la désintégration des leptons $\tau$, est de discriminer les différents invariants cinétiques du procédé et de déterminer ceux dont la distribution dépend visiblement de $\lambda$. Pour chaque particule, on connait les variables $p_t$ (impulsion transverse) qui est la projection de l'impulsion dans le plan transverse, et $\eta=\frac{1}{2}ln(\frac{|\vec{p}|+p_z}{|\vec{p}|-p_z})$ (pseudo-rapidité).\\\\
Dans les évènements candidats à de la production di-Higgs, une fraction non négligeable ne peut pas être enregistrée à cause des caractéristiques techniques du détecteur CMS : l'acceptance géométrique n'est pas parfaite dans la direction du faisceau, et il y a un seuil d'énergie au dessus duquel les particules doivent se trouver pour déclencher le trigger et être reconstituées de manière correcte.
Pour se faire une idée de cette fraction d'évènements perdus, on applique les sélections suivantes aux quarks bottom $b$ et $\overline{b}$ et aux leptons $\tau^-$ et $\tau^+$ :
$$
\left( \begin{array}{c}
p_t > 30 GeV \\
|\eta|<2.5 \\
\end{array} \right.
$$
Ces valeurs de seuil sont comparables à celles utilisées dans les études phénoménologiques (cf le papier de Luca pour les références). Le seuil sur $\eta$ provient directement de la géométrie de CMS, le seuil pour $p_t$ correspond au l'énergie minimale de trigger et de reconstruction des jets. Pour les désintégrations leptoniques des $\tau$, les seuils sont plus bas ; cependant, deux neutrinos sont émis durant cette désintégration ce qui entraine une perte significative d'énergie. On peut donc dire que la valeur de seuil choisie prend approximativement en compte ces particularités.\\

\subsection{Comparaison pour les différentes valeurs de $\lambda$}
Définissons l'efficacité $\epsilon$ de reconstruction, pour les données statistiques, comme le rapport entre le nombre d'évènements passant les coupures statistiques sur le nombre total d'évènements.

Les efficacités pour les différentes valeurs de $\lambda$, et pour le fond $t\overline{t}$ sont données ci-dessous :
\begin{center}
	\begin{tabular}{|c|c|c|c|c|c|}
		\hline
		& $\lambda/\lambda_{SM}=-4$ & $\lambda/\lambda_{SM}=1$ & $\lambda/\lambda_{SM}=2.46$ & $\lambda/\lambda_{SM}=20$ & $t\overline{t}$\\
		\hline
		Efficacité $\epsilon$ & 0.424 & 0.433 & 0.454 & 0.424 & 0.272\\
		\hline
	\end{tabular}
\end{center}

Les distributions en impulsion transverse et en rapidité du quarks et des leptons sont assez semblables pour les différentes valeurs de $\lambda$. Il faut s'intéresser aux invariants cinétiques des systèmes de particules, $h0=[b,\overline{b}]$, $h1=[\tau^-,\tau^+]$ et le système total $h=[b,\overline{b},\tau^-,\tau^+]$. 
On reconstitue les 4-vecteurs des higgs intermédiaires et du système total avec les relations :
$$
\left( \begin{array}{c}
p(h0)=p(b)+p(\overline{b}) \\
p(h1)=p(\tau^-)+p(\tau^+) \\
p(h) =p(h0)+p(h1) \\
\end{array} \right.
$$
Les diagrammes suivants sont la superposition des histogrammes pour chaque valeur étudiée de $\lambda$, mais aussi des histogrammes relatifs au fond $[t,\overline{t}]$. Le but ultime étant de remonter à la valeur de $\lambda$, il est important de sélectionner des grandeurs qui permettent bien de discriminer les évènements provenant de deux bosons de Higgs de ceux qui proviennent d'un quark top et d'un anti quark top. Pour pouvoir comparer uniquement la "forme" des distributions, les histogrammes sont normalisés.

Remarquons tout d'abord que les différentes distributions en rapidité ne sont pas vraiment adaptées à l'extraction du couplage $\lambda$, puisqu'elles n'en dépendent pas fortement. Par exemple, voici ces courbes pour le système $h1=[\tau^-,\tau^+]$ sans les coupures cinétiques, et pour l'ensemble des particules, avec les coupures cinétiques.

\begin{center}
	\includegraphics[scale=0.4]{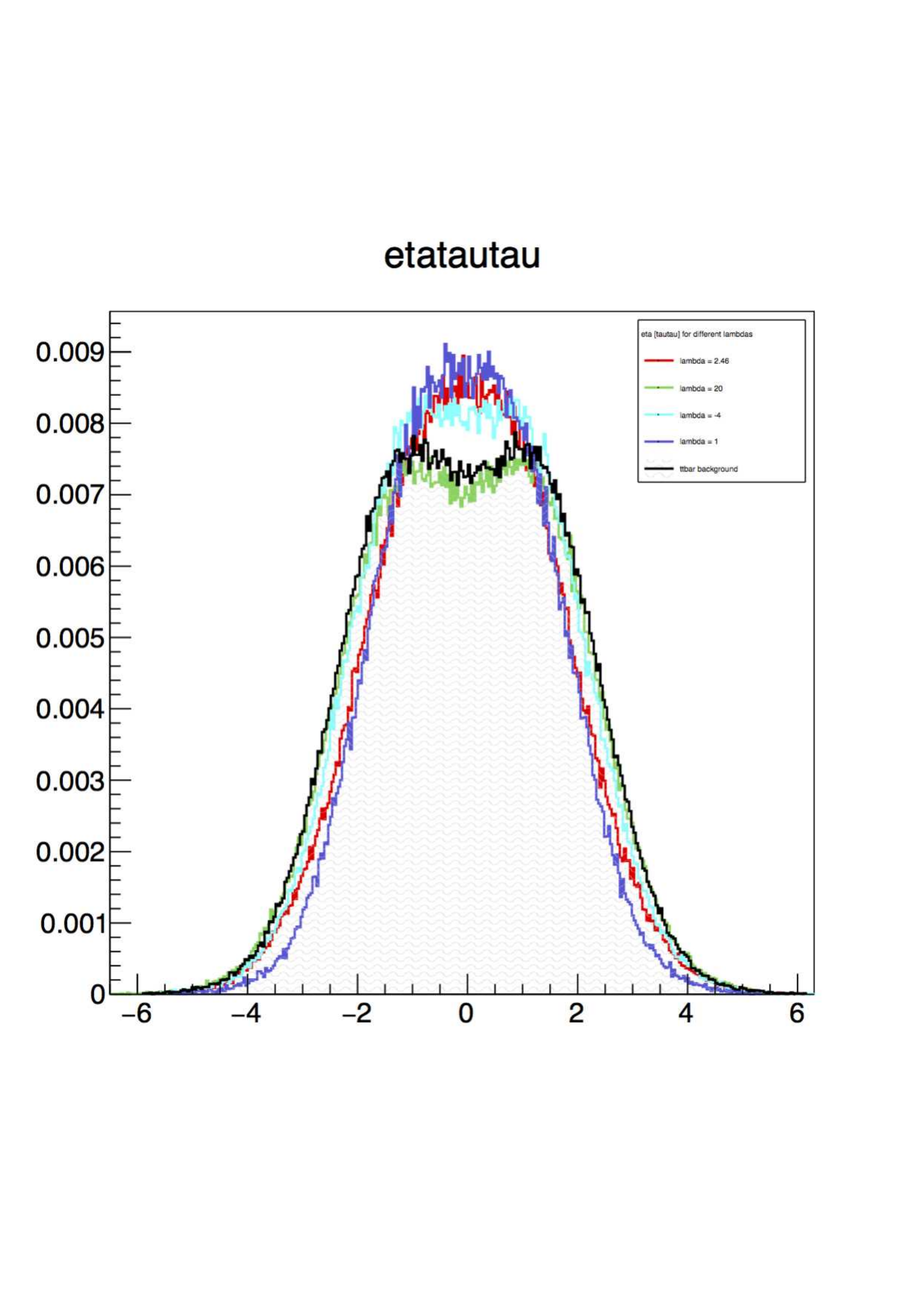}
	\includegraphics[scale=0.4]{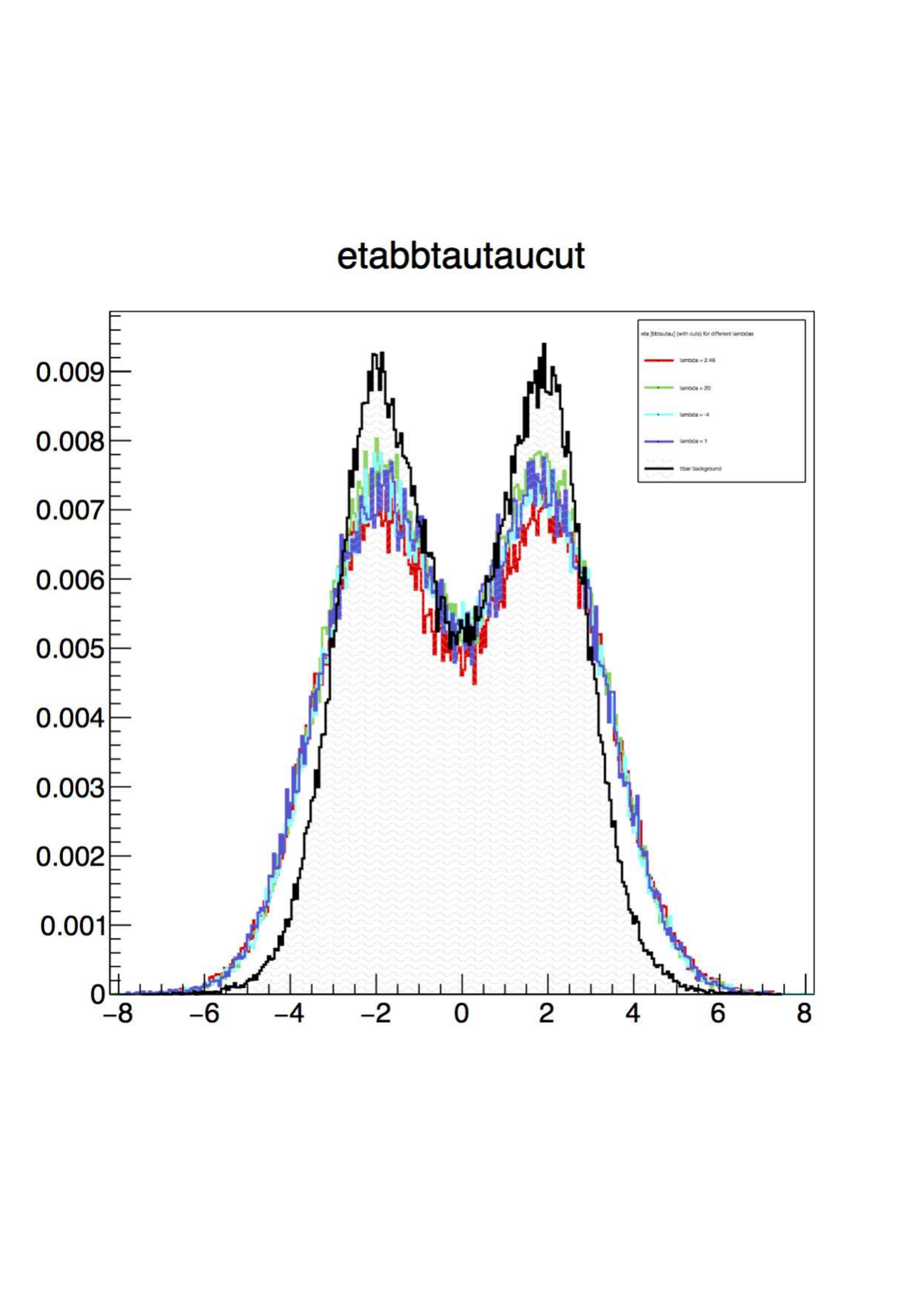}
\end{center}

Pour les autres invariants, nous avons adopté la disposition suivante : le diagramme de gauche est le résultat de la simulation Monte-Carlo, sans les coupures, comme ce que nous observerions en ayant un point de vue omniscient dans les collisions. A droite, les coupures cinétiques déjà évoquées ont été appliquées. Les quatre premiers diagrammes concernent le système $h1=[\tau^-,\tau^+]$, les quatre suivants le système total.\\\\

\begin{center}
	\includegraphics[scale=0.38]{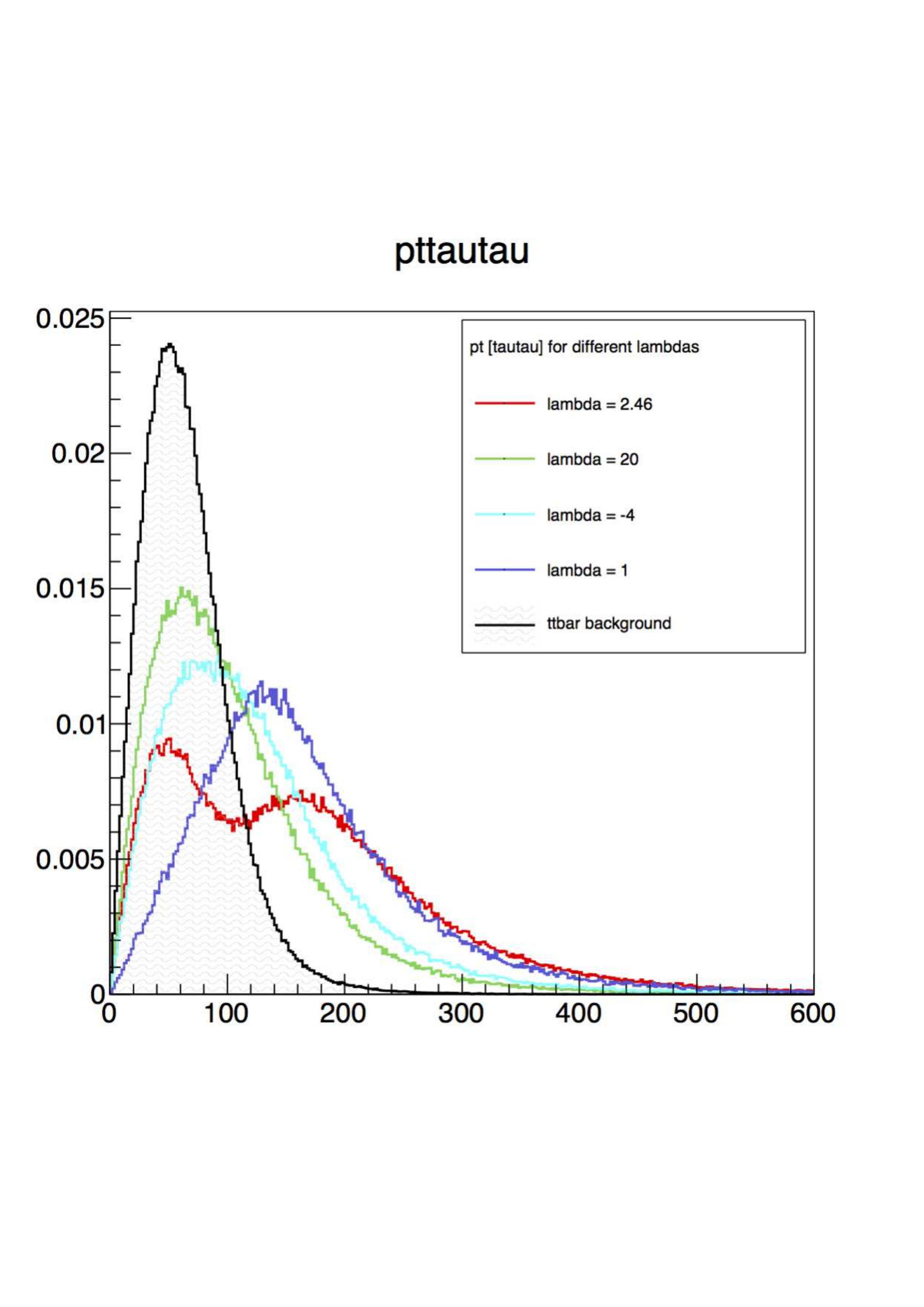}
	\includegraphics[scale=0.38]{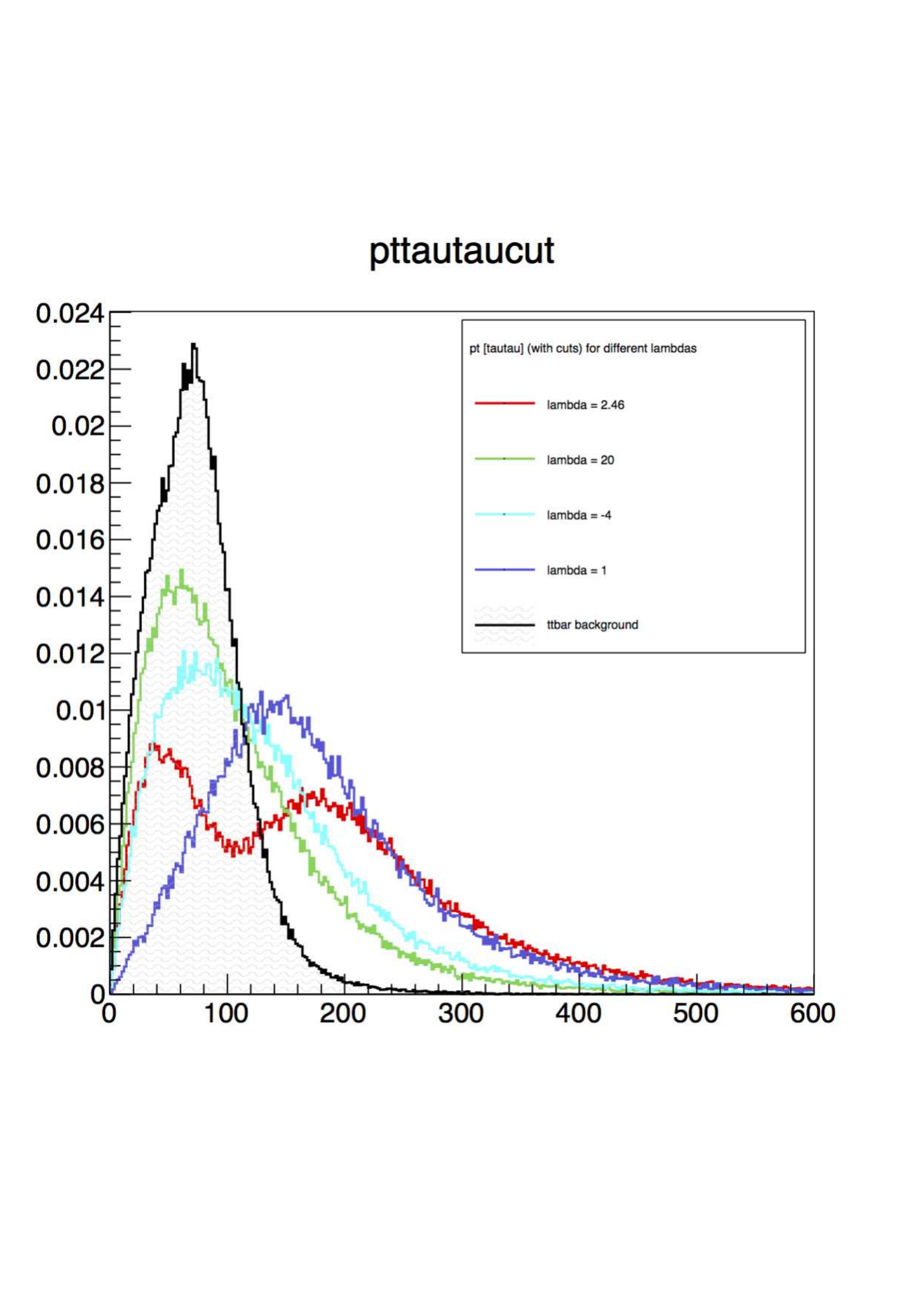}
\end{center}
\begin{center}
	\includegraphics[scale=0.38]{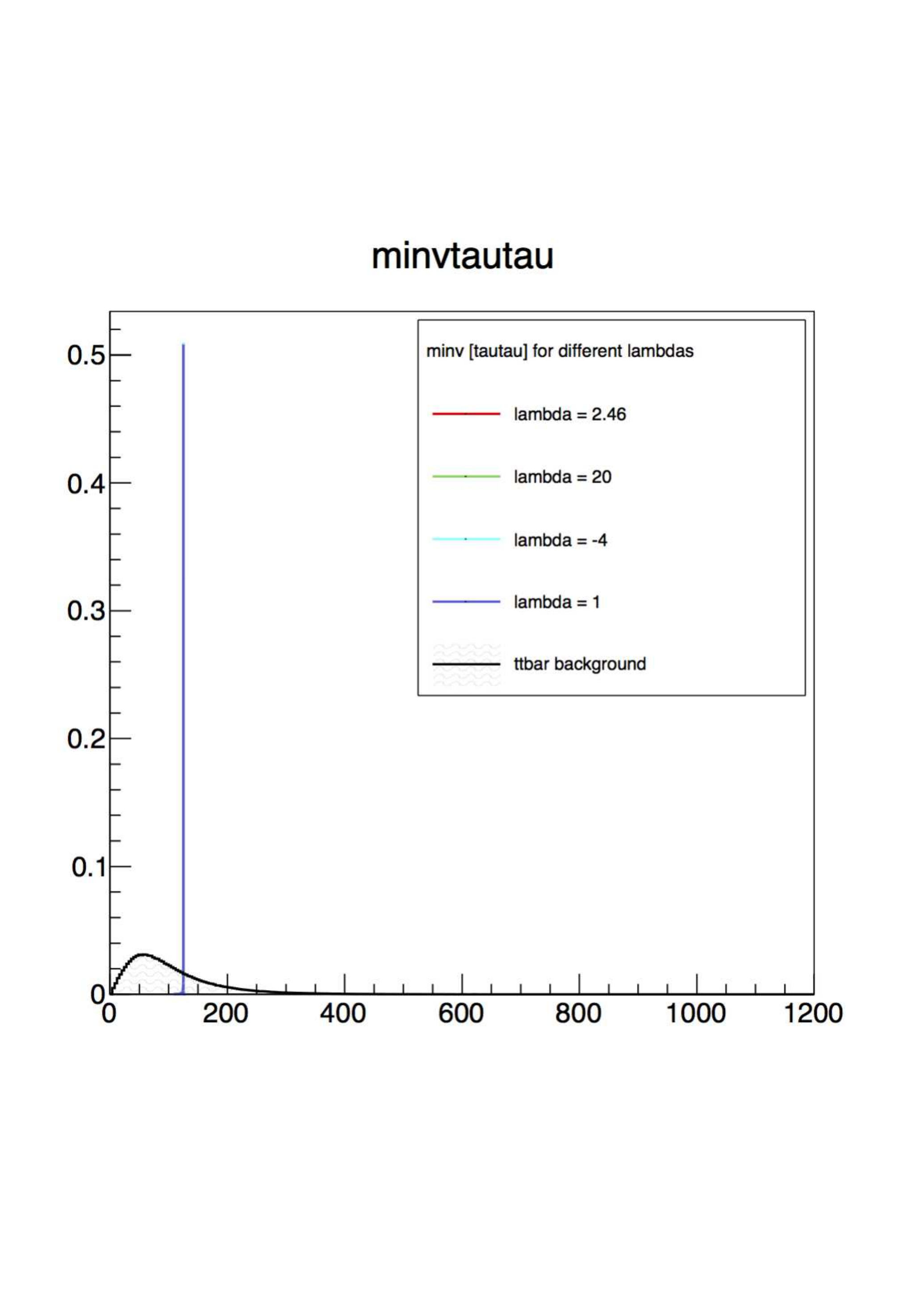}
	\includegraphics[scale=0.38]{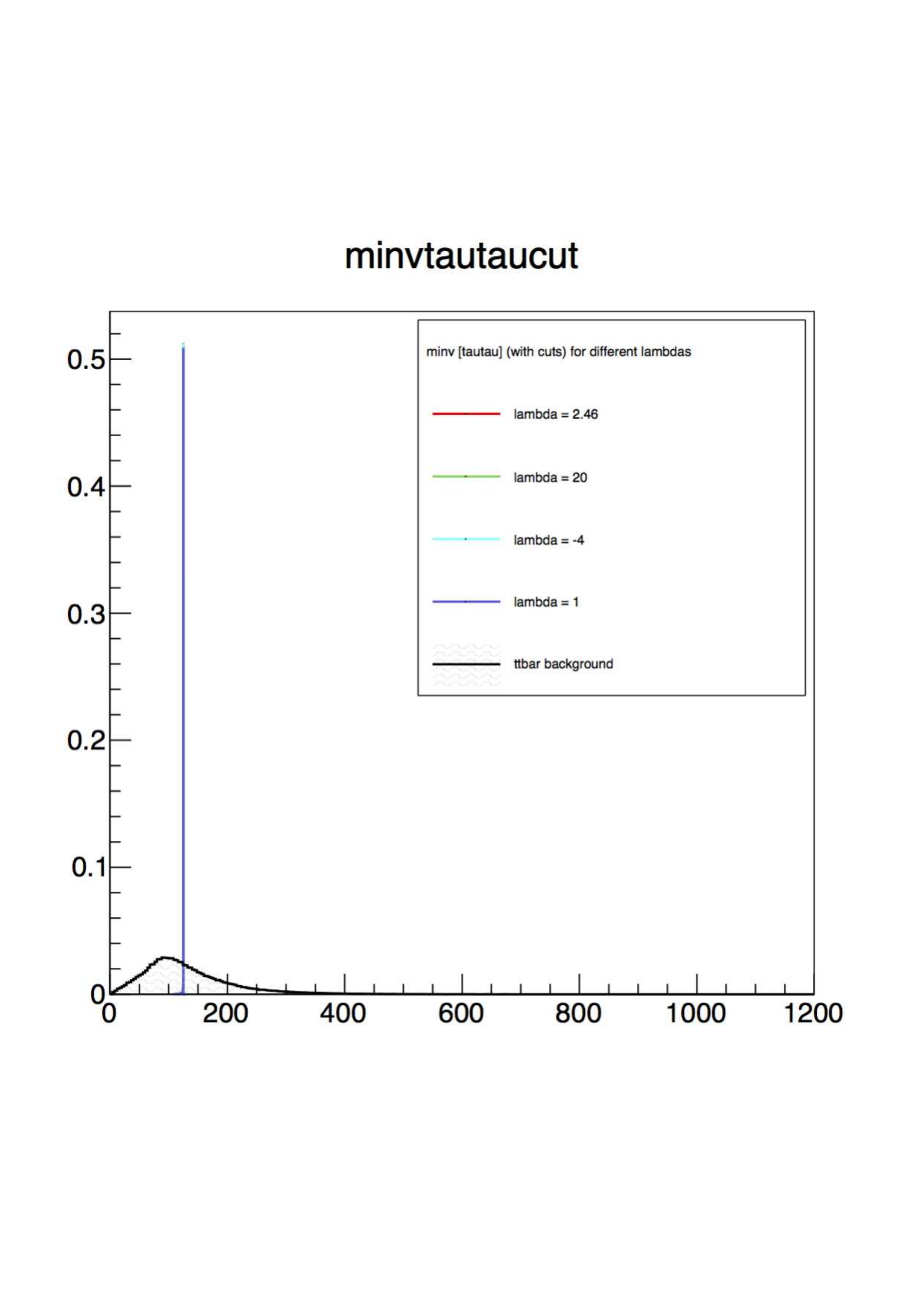}
\end{center}
\begin{center}
	\includegraphics[scale=0.38]{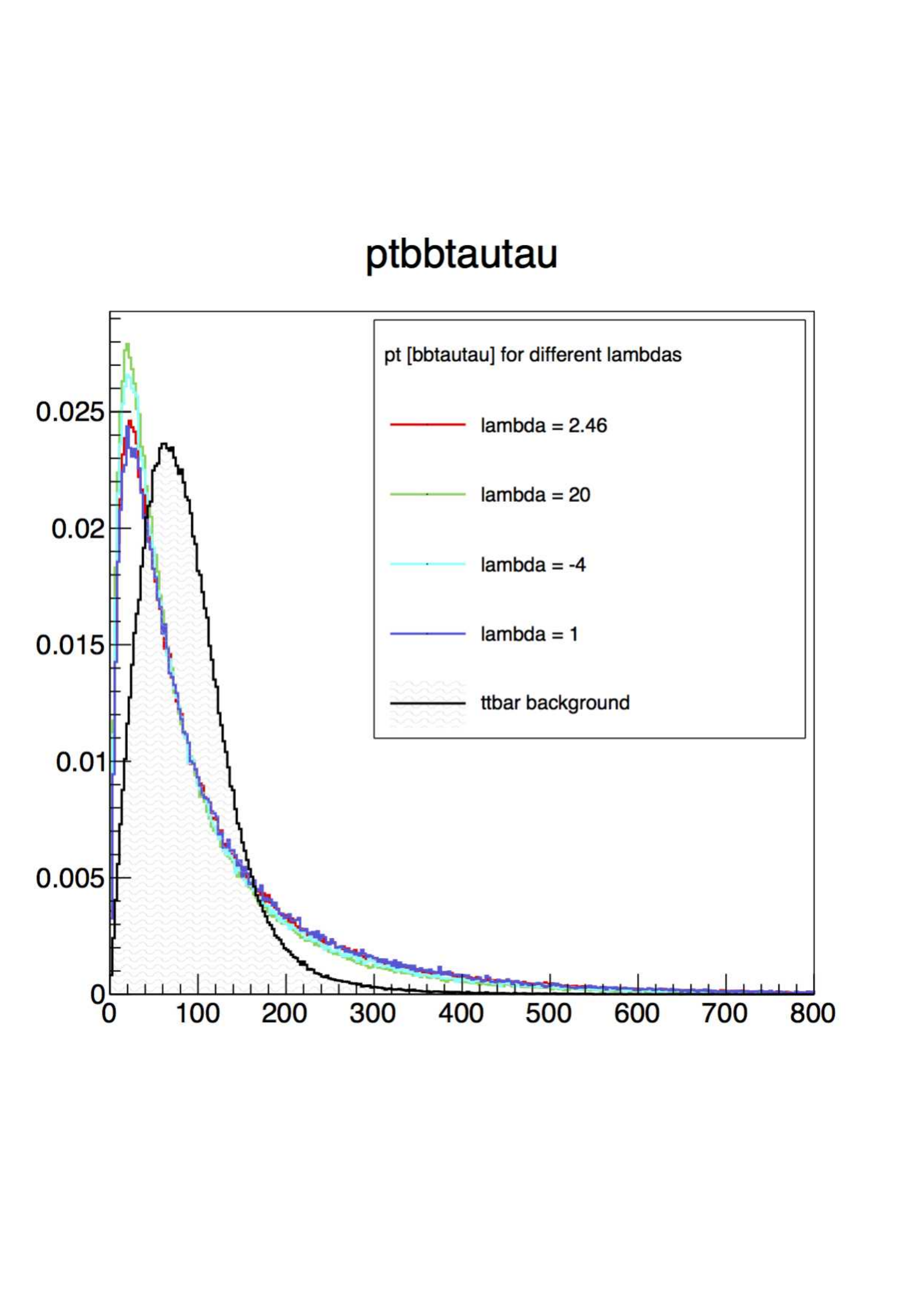}
	\includegraphics[scale=0.38]{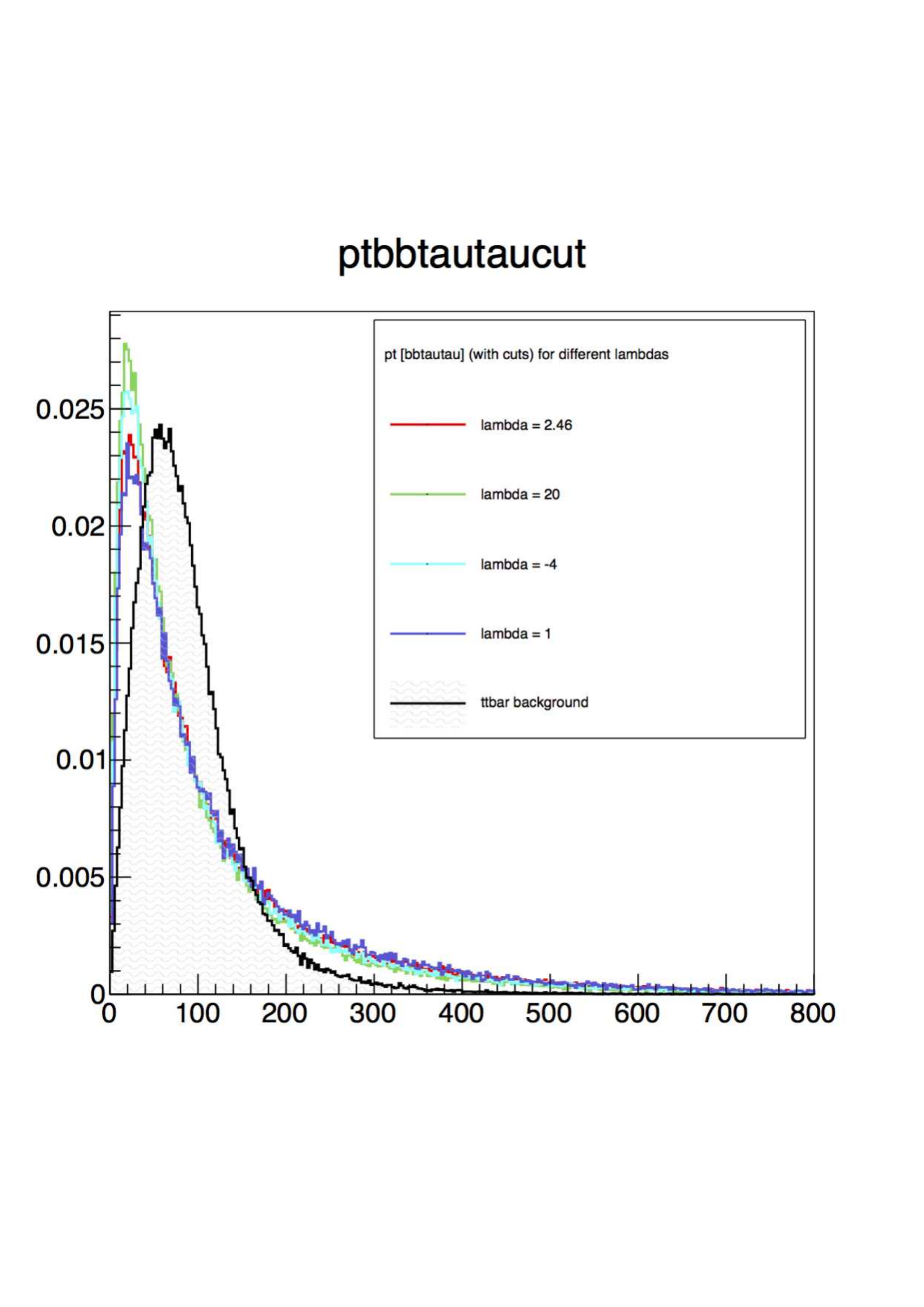}
\end{center}
\begin{center}
	\includegraphics[scale=0.38]{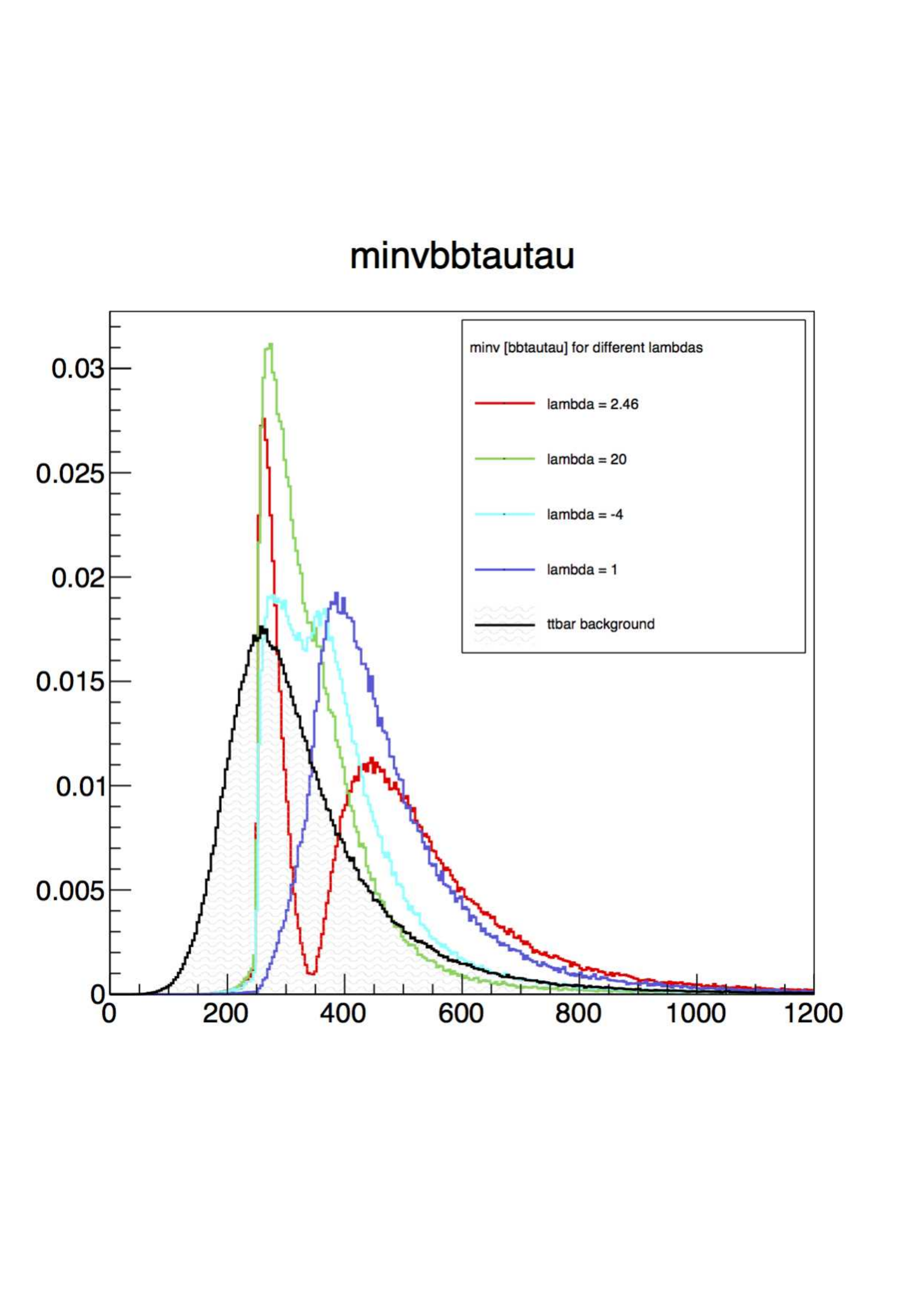}
	\includegraphics[scale=0.38]{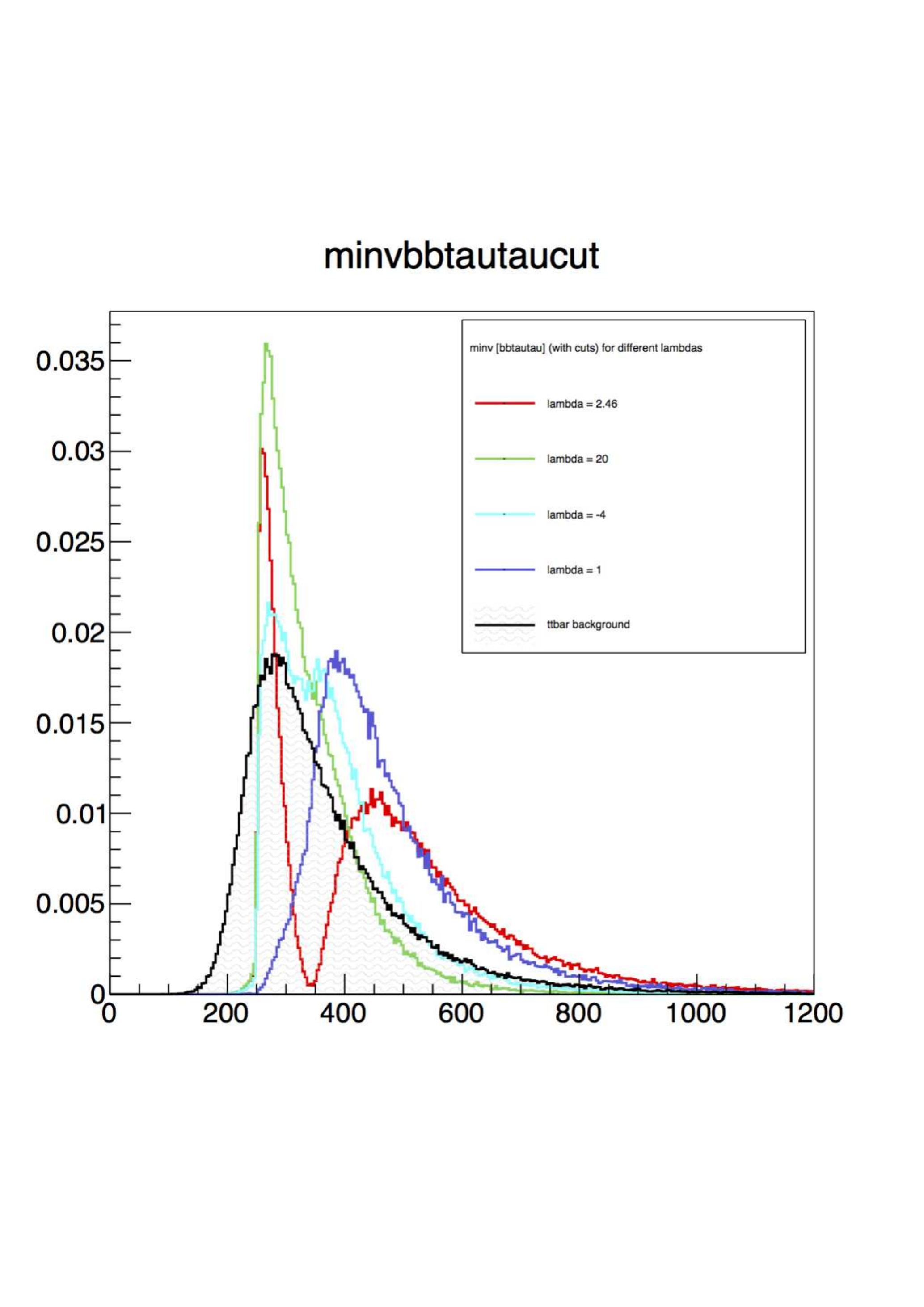}
\end{center}

Les diagrammes de l'impulsion transverse pour le système $[\tau^-,\tau^+]$ et de la masse invariante du système total montrent des différences significatives entre les histogrammes pour différentes valeurs du couplage $\lambda$. Nous nous y intéresserons presque essentiellement pour les études suivantes ou nous prendrons en compte la désintégration des deux $\tau$ finaux.\\

Les deux autres diagrammes pourraient quant'à eux être utiles pour discriminer le signal du bruit de fond, à condition que la différence de forme des distributions ne soit pas réduite par la prise en compte de la désintégration des leptons et l'utilisation de variables reconstruites.

\section{Étude de la cinétique avec la désintégration des $\tau$}
\subsection{Introduction et présentation des approximations}

Comme nous l'avons vu au dessus, les $\tau$ ont une distance de vol moyenne dans le détecteur de quelques millimètres si bien qu'ils se désintègrent avant d'arriver aux calorimètres. Le tableau suivant résume les possibilités de désintégration les plus probables pour les leptons $\tau$.
\begin{center}
	\begin{tabular}{|c|c|}
		\hline
		Mode de désintégration & Probabilité \\
		\hline
		$\tau^+\rightarrow e^++\overline{\nu_\tau}+\nu_e$ & 17,4 \% \\
		$\tau^+\rightarrow \mu^++\overline{\nu_\tau}+\nu_\mu$ & 17,8 \% \\
		$\tau^+\rightarrow$ hadrons & 64,8 \% \\
		\hline
	\end{tabular}
\end{center}
avec dans le mode $\tau\rightarrow$ hadrons, les désintégrations les plus probables :
\begin{center}
	\begin{tabular}{|c|c|}
		\hline
		Mode de désintégration & Probabilité \\
		\hline
		$\tau^+\rightarrow \pi^+\pi^0$ & 25,52 \% \\
		$\tau^+\rightarrow \pi^+$ & 10,83 \% \\
		$\tau^+\rightarrow \pi^+\pi^0\pi^0$ & 9,30 \% \\
		$\tau^+\rightarrow \pi^+\pi^-\pi^+$ & 8,99 \% \\
		$\tau^+\rightarrow \pi^+\pi^-\pi^+\pi^0$ & 2,70 \% \\
		$\tau^+\rightarrow \pi^+\pi^0\pi^0\pi^0$ & 1,05 \% \\
		\hline
	\end{tabular}
\end{center}

Ce qui donne le tableau suivant, pour un système de deux $[\tau^-,\tau^+]$, en notant $\tau_h$ les désintégrations hadroniques (jets), $\tau_e$ les désintégrations électroniques et $\tau_\mu$ les désintégrations muoniques. La section efficace du processus total est donné dans le meme tableau, pour des énergies différentes dans le système du centre de masse. Elle est calculée en utilisant : $$\sigma(gg\rightarrow b\overline{b}\tau_i\tau_j)=\sigma(gg\rightarrow hh)\times 2 \times BR(h\rightarrow b\overline{b})\times BR(h\rightarrow \tau\tau)\times BR(\tau\tau\rightarrow \tau_i\tau_j)$$
le facteur "2" étant purement combinatoire.
\begin{center}
	\begin{tabular}{|c|c|c|c|c|}
		\hline
		Canal & Fraction des évènements & $\sigma$ (8 Tev) [fb] & $\sigma$ (13 Tev) [fb] & $\sigma$ (14 Tev) [fb]\\
		\hline
		$\tau_\mu\tau_h$ & 23,1 \% & 0,13 & 0,578 & 0,686\\
		$\tau_e\tau_h$ & 22,6 \% & 0,13 & 0,565 & 0,671\\
		$\tau_h\tau_h$ & 42,0 \% & 0,24 & 1,05 & 1,25\\
		$\tau_e\tau_\mu$ & 6,2 \% & 0,036 & 0,155 & 0,184\\
		$\tau_e\tau_e$ & 3,0 \% & 0,018 & 0,075 & 0,089\\
		$\tau_\mu\tau_\mu$ & 3,2 \% & 0,018 & 0,080 & 0,095\\
		\hline
	\end{tabular}
\end{center}

Les quarks $b$, détectés comme jets, sont supposés connus parfaitement : on récupère le 4-vecteur généré par les méthodes Monte-Carlo.
A ce niveau d'étude, la philosophie est la suivante : on a un point de vue omniscient sur les particules produits de désintégration des leptons $\tau$, ce qui permet de remonter à des invariants cinétiques du système ou des sous-systèmes.
Comme des neutrinos sont émis durant la désintégration, la seule grandeur accessible expérimentalement est l'énergie manquante. Toujours au niveau générateur, on utilise en fait les 4-vecteurs des neutrinos pour calculer directement l'énergie manquante. Plusieurs invariants sont envisageables pour caractériser le système, qu'on peut obtenir à partir des 4-vecteurs des produits finaux et de l'énergie transverse manquante. On utilise donc, pour calculer les invariants du système, toutes les informations voulues sur les produits observables de la désintégration du système di-tau ($e$, $\mu$, pions) et l'énergie manquante calculée à l'aide des 4-vecteurs des neutrinos.

\begin{figure}[!h]
	\centering
	\includegraphics[scale=0.6]{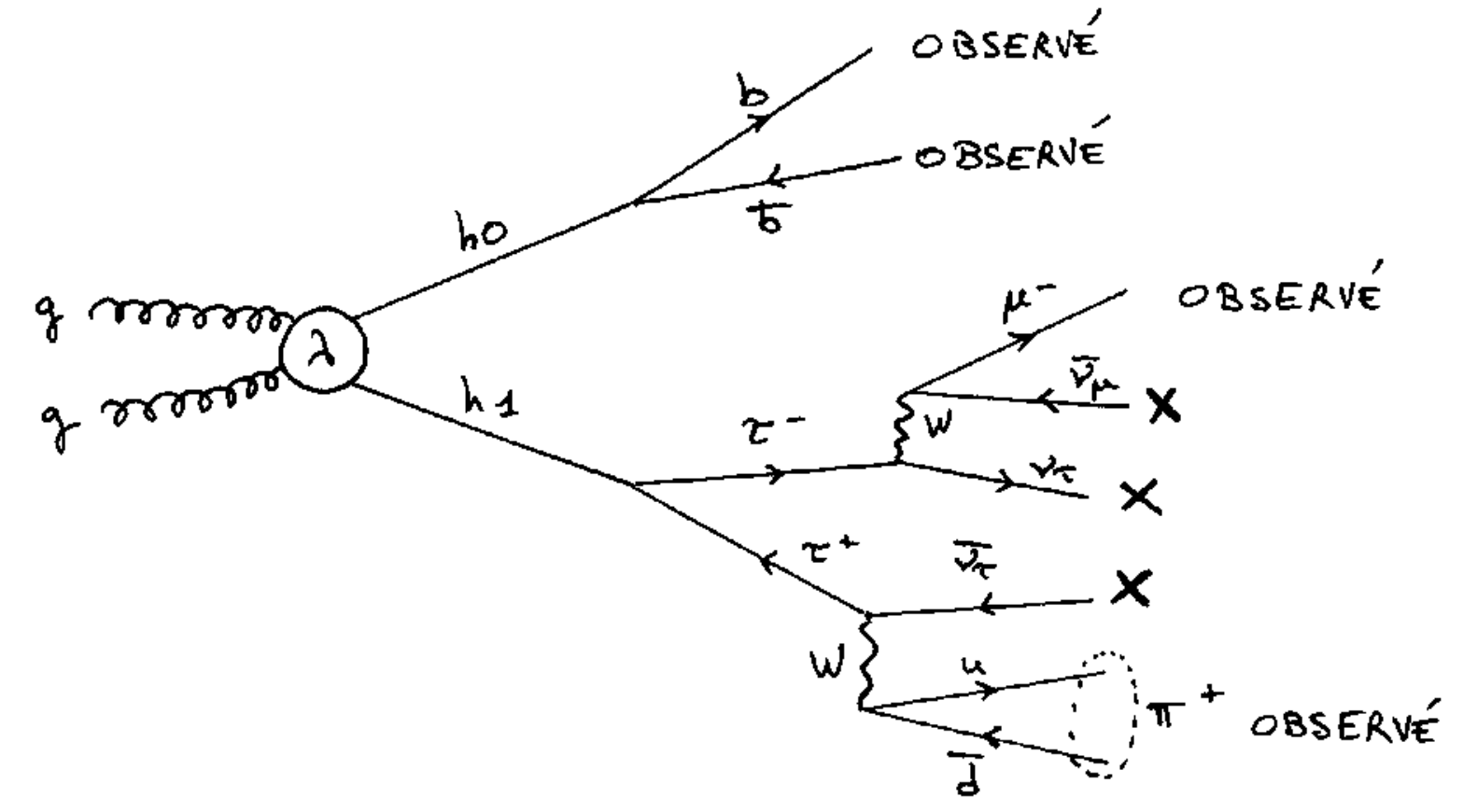}
	\caption{Exemple de processus considéré. La désintégration du système di-tau est ici semi-hadronique, l'un se désintègre en muon et l'autre en un pion $\pi^+$. Les croix correspondent à de l'information perdue, puisque les neutrinos ne sont pas détectés. Les particules ayant la mention "OBSERVÉ" sont supposées parfaitement détectées : on se sert des informations générateur  }
\end{figure}
\paragraph{Méthode des quantités visibles}
Cette première méthode est la plus simple, et de plus indépendante de toute mesure de l'énergie transverse manquante, ce qui la rend particulièrement intéressante. Il faut juste faire abstraction des neutrinos, comme s'ils n'étaient pas produits. On écrira : 
$$
\left( \begin{array}{c}
p(h0)=p(b)+p(\overline{b}) \\
p(h1)=p(\tau_i)+p(\tau_j) \\
p(h) =p(h0)+p(h1) \\
\end{array} \right.
$$
où $\tau_i$ et $\tau_j$ sont les produits de désintégration du $\tau^-$ et du $\tau^+$.\\
Il n'y a pas d'approximation à faire du type "collinear approximation", cependant, en raison de l'énergie emportée par les neutrinos, l'échelle d'énergie est décalée vers la gauche.

\paragraph{Méthode des quantités effectives}
Cela consiste juste à "re-scaler" l'échelle d'énergie en utilisant l'énergie transverse manquante : on "mesure" le 4-vecteur $$p^{miss}=(E^{miss}_x,E^{miss}_y,0,\sqrt{(E^{miss}_x)^2+(E^{miss}_y)^2})$$ et on pose :
$$
\left( \begin{array}{c}
p(h0)=p(b)+p(\overline{b}) \\
p(h1)=p(\tau_i)+p(\tau_j) + p^{miss}\\
p(h) =p(h0)+p(h1) \\
\end{array} \right.
$$
En raison de la nature de l'approximation qui est faite, les distributions ne sont pas améliorées par une telle manipulation, cependant, on peut au moins retomber sur la bonne échelle de masse.

\paragraph{Méthode de l'approximation colinéaire}
Pour des raisons d'hélicité, les produits de désintégration des $\tau$ ont de grandes chances d'être émis avec une impulsion proche de l'impulsion de $\tau$ dont ils proviennent. L'approximation colinéaire consiste à supposer que toutes les particules sont émises dans le même direction. Il faut donc résoudre le système linéaire qui permet d'associer aux deux système de neutrinos une fraction de l'énergie transverse manquante sous la forme d'un 4-vecteur dont la partie spatiale est alignée avec le produit de désintégration observé ($e$, $\mu$, jet), et telle que la somme de ces deux 4-vecteurs donne l'énergie transverse manquante (pour les composantes transverses).\\
Il faut inverser le système : 
$$
\begin{pmatrix}
	E^{miss}_x \\
	E^{miss}_y
\end{pmatrix}
=
\begin{pmatrix}
sin\theta cos\phi & sin\theta' cos\phi' \\
sin\theta sin\phi & sin\theta' sin\phi'
\end{pmatrix}
\begin{pmatrix}
p \\
p' 
\end{pmatrix}
$$
où p et p' sont la norme des 4-vecteurs des neutrinos, associés à la désintégration des deux $\tau$ respectifs. Pour cela, il faut que : 
$$sin\theta cos\phi sin\theta' sin\phi'-sin\theta sin\phi sin\theta' cos\phi' \neq 0$$
c'est-à-dire, si $sin\theta sin\theta' \neq 0$ (ce qui est forcément le cas si les particules sont détectées) : 
$$tan(\phi)\neq\tan(\phi')$$
Autrement dit, il ne faut pas que les particules observées, produits de désintégration des $\tau$, soient émises dans la même direction (même en sens opposé). En effet, si les neutrinos sont émis dans la même direction, en sens opposé, ils peuvent emporter une quantité quelconque d'énergie sans qu'on détecte une quelconque énergie transverse manquante.
Cela introduit une indétermination dans la sélection des évènements : ne faut-il que demander à ce que le déterminant de la matrice soit non nul ? Mais alors cela dépend de la précision avec lesquelles les variables "float" sont stockées ... Cependant, comment choisir la limite inférieure pour ce déterminant ? 
Dans notre étude, nous avons seulement demandé à ce que le déterminant soit non nul.\\\\
Enfin, un dernier problème que pose cette approximation colinéaire est que si on veut vraiment reconstruire le 4-vecteur des neutrinos manquants, il faut choisir le signe de la troisième coordonnée spatiale, et, à moins d'introduire de l'aléatoire qui n'a rien de "physique", on perd par exemple l'information de la rapidité par les $\tau$ et leurs particules parentes. \\
De toutes façons, nous avons vu que les distributions en rapidité ne permettaient pas de bien discriminer $\lambda$, nous nous intéressons seulement aux distributions jugées d'intérêt par l'étude précédente.

\subsection{Coupures cinétiques et efficacités}
Puisqu'on prend en compte les désintégrations des leptons $\tau$, nous pouvons facilement fixer des coupures cinétiques plus proches de celles qui sont en réalité imposées par le déclenchement du trigger et les algorithmes de reconstruction de CMS.
Voici donc les coupures utilisées pour les différents produits de désintégration du système $[\tau,\tau]$ :

\begin{center}
	\begin{tabular}{|c||c|c|}
		\hline
		$\tau_\mu\tau_h$ & $|\eta(\mu)|<2.1$, $|\eta(\tau_h)|<2.4$ & $p_T(\mu)>20$ GeV, $p_T(\tau_h)>30$ GeV \\
		$\tau_e\tau_h$ & $|\eta(e)|<2.1$, $|\eta(\tau_h)|<2.4$ & $p_T(e)>24$ GeV, $p_T(\tau_h)>30$ GeV \\
		$\tau_h\tau_h$ & $|\eta(\tau_h)|<2.1$ & $p_T(\tau_h)>45$ GeV \\
		$\tau_e\tau_\mu$ & $|\eta(\mu)|<2.1$, $|\eta(e)|<2.3$ & $p_T(l_1)>20$ GeV, $p_T(l_2)>10$ GeV \\
		$\tau_e\tau_e$ & $|\eta(e)|<2.3$ & $p_T(e_1)>20$ GeV, $p_T(e_2)>20$ GeV \\
		$\tau_\mu\tau_\mu$ & $|\eta(\mu_1)|<2.1$, $|\eta(\mu_2)|<2.4$ & $p_T(\mu_1)>20$ GeV, $p_T(\mu_2)>10$ GeV \\
		\hline
	\end{tabular}
\end{center}
où les indices 1 et 2 ordonnent les systèmes de deux leptons en appelant 1 le lepton d'impulsion transverse maximale.\\

Pour les quarks, détectés comme jets, on garde les valeurs de coupures cinétiques utilisées dans la première étude : 
\[
\left\{
\begin{array}{r c l}
p_T &>& 30 GeV\\
|\eta| &<& 2.5
\end{array}
\right.
\]
Les efficacité relevées pour les différentes valeurs de $\lambda$ sont les suivantes: 
\begin{center}
	\begin{tabular}{|c||c|c|c|c|c|}
		\hline
		& $\lambda=-4$ & $\lambda=-1$ & $\lambda=2.46$ & $\lambda=20$ & $t\overline{t}$ \\
		\hline
		globale & 0.143 & 0.187 & 0.190 & 0.128 & 0.073 \\
		$\tau_\mu\tau_h$ & 0.246 & 0.296 & 0.298 & 0.230 & 0.122 \\
		$\tau_e\tau_h$ & 0.207 & 0.253 & 0.258 & 0.188 & 0.121 \\
		$\tau_h\tau_h$ & 0.131 & 0.195 & 0.194 & 0.107 & 0.042 \\
		$\tau_e\tau_\mu$ & 0.364 & 0.411 & 0.401 & 0.343 & 0.262 \\
		$\tau_e\tau_e$ & 0.387 & 0.349 & 0.433 & 0.361 & 0.282 \\
		$\tau_\mu\tau_\mu$ & 0.369 & 0.425 & 0.418 & 0.365 & 0.271 \\
		\hline
	\end{tabular}
\end{center}
Dans ce tableau, l'efficacité globale est le ratio du nombre d'évènements dont tous les objets sont au dessus des coupures cinétiques avec le nombre total d'évènements, et chaque désintégration, par exemple $\tau_h\tau_h$, l'efficacité est la proportion d'évènements, parmi ceux aboutissant à une désintégration doublement hadronique, dont tous les produits de désintégration des $\tau$ (seulement !) sont au dessus des coupures. C'est pour cela que l'efficacité totale n'est pas le barycentre des efficacités individuelles.\\\\
Donnons maintenant les différents histogrammes obtenus à partir des informations provenant des quarks $b$ et des produits de désintégration des leptons $\tau$ seulement. Les diagrammes de gauche sont ceux qu'on aurait avec un détecteur parfait (sans coupures) tandis qu'à droite, les coupures cinétiques ont été appliquées. Les résultats des trois approximations sont présentées successivement, avec d'abord les quantités visibles, puis les quantités effectives, et enfin les résultats de l'approximation colinéaire. Il est intéressant de noter les différentes conséquences que ces approximations ont sur les distributions, et l'utilisation de plusieurs de ces approximations permet de lever certaines indéterminations (par exemple, dans le cadre de l'approximation colinéaire, les distributions de masse invariante du système total pour $\frac{\lambda}{\lambda^{SM}}$ et pour le fond $t\overline{t}$ sont très semblables, et il est plus intéressant de regarder les quantités visibles ou effectives, mais pour l'impulsion transverse totale, l'approximation colinéaire est plus intéressante).

\subsection{Comparaison des courbes pour les différentes valeurs de $\lambda$}

\paragraph{Méthode des quantités visibles}
\begin{center}
	\includegraphics[scale=0.37]{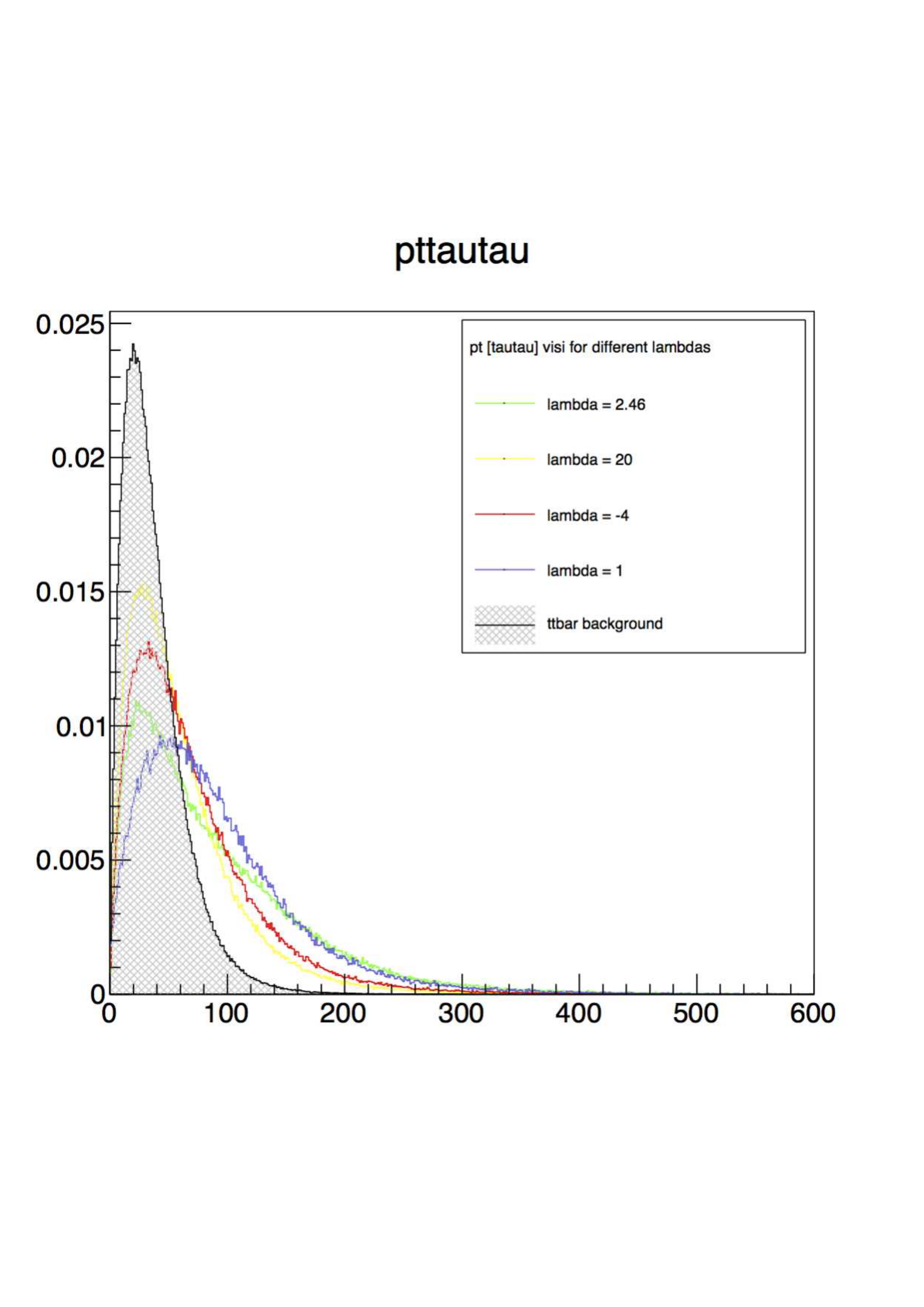}
	\includegraphics[scale=0.37]{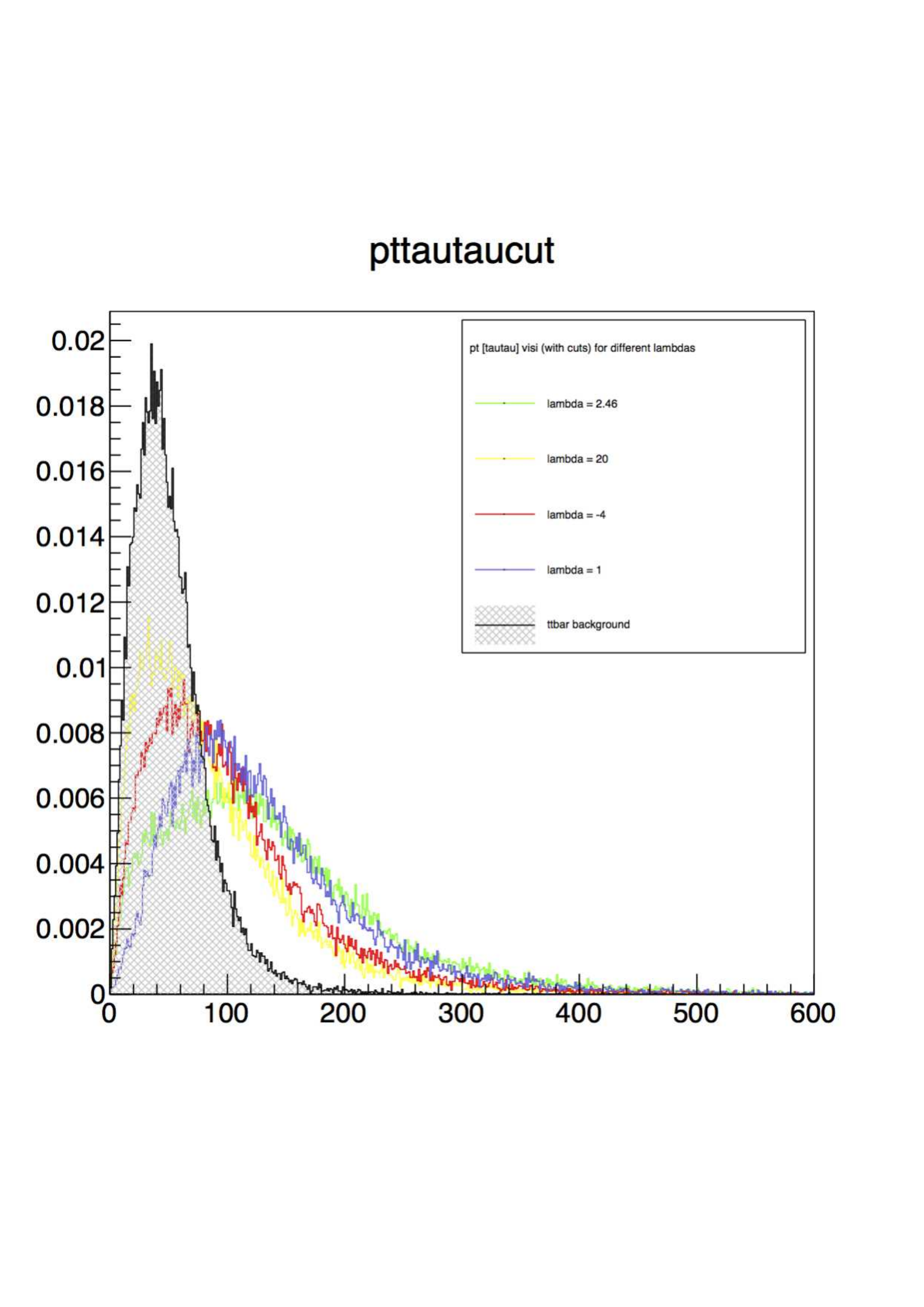}
\end{center}
\begin{center}
	\includegraphics[scale=0.37]{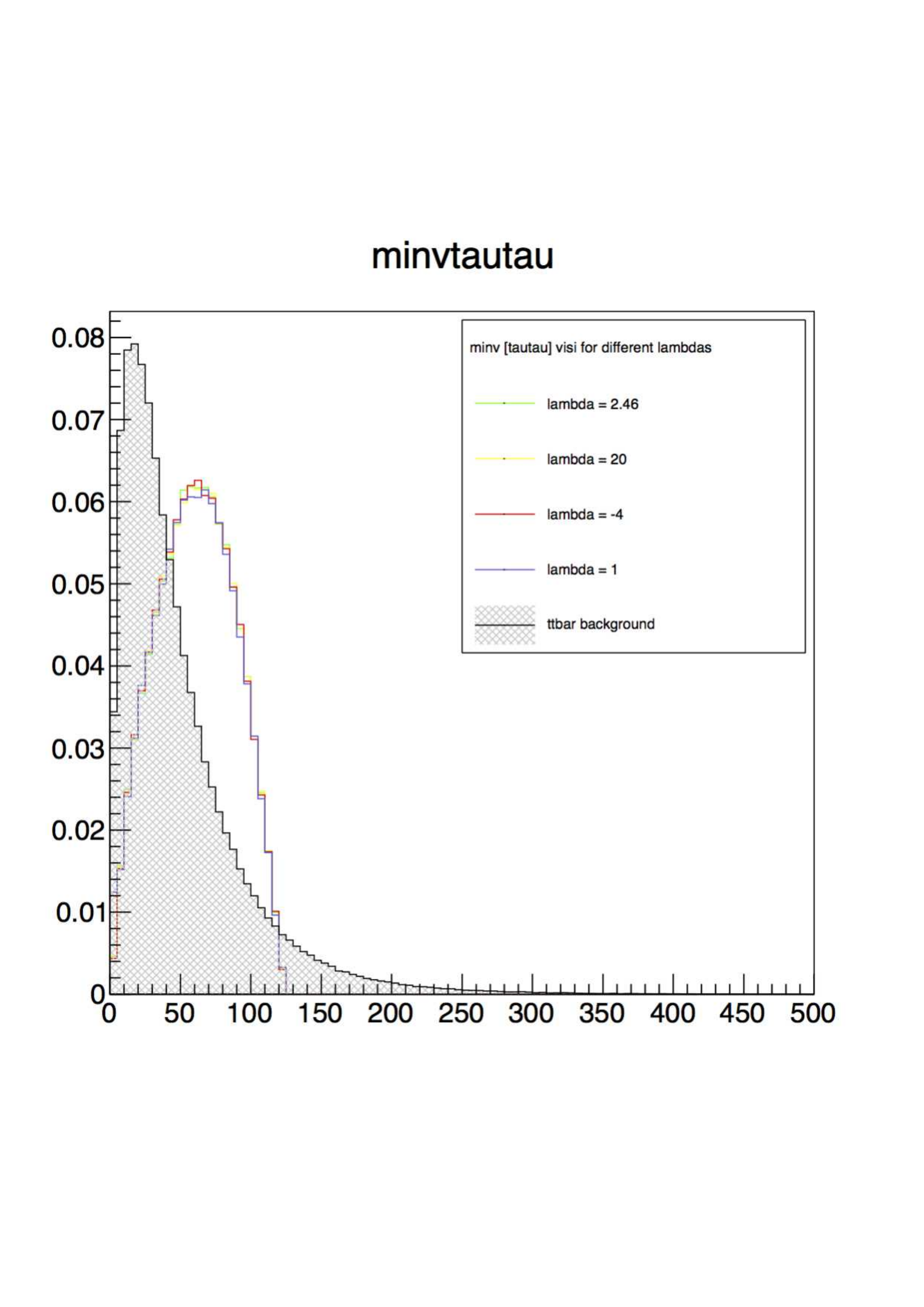}
	\includegraphics[scale=0.37]{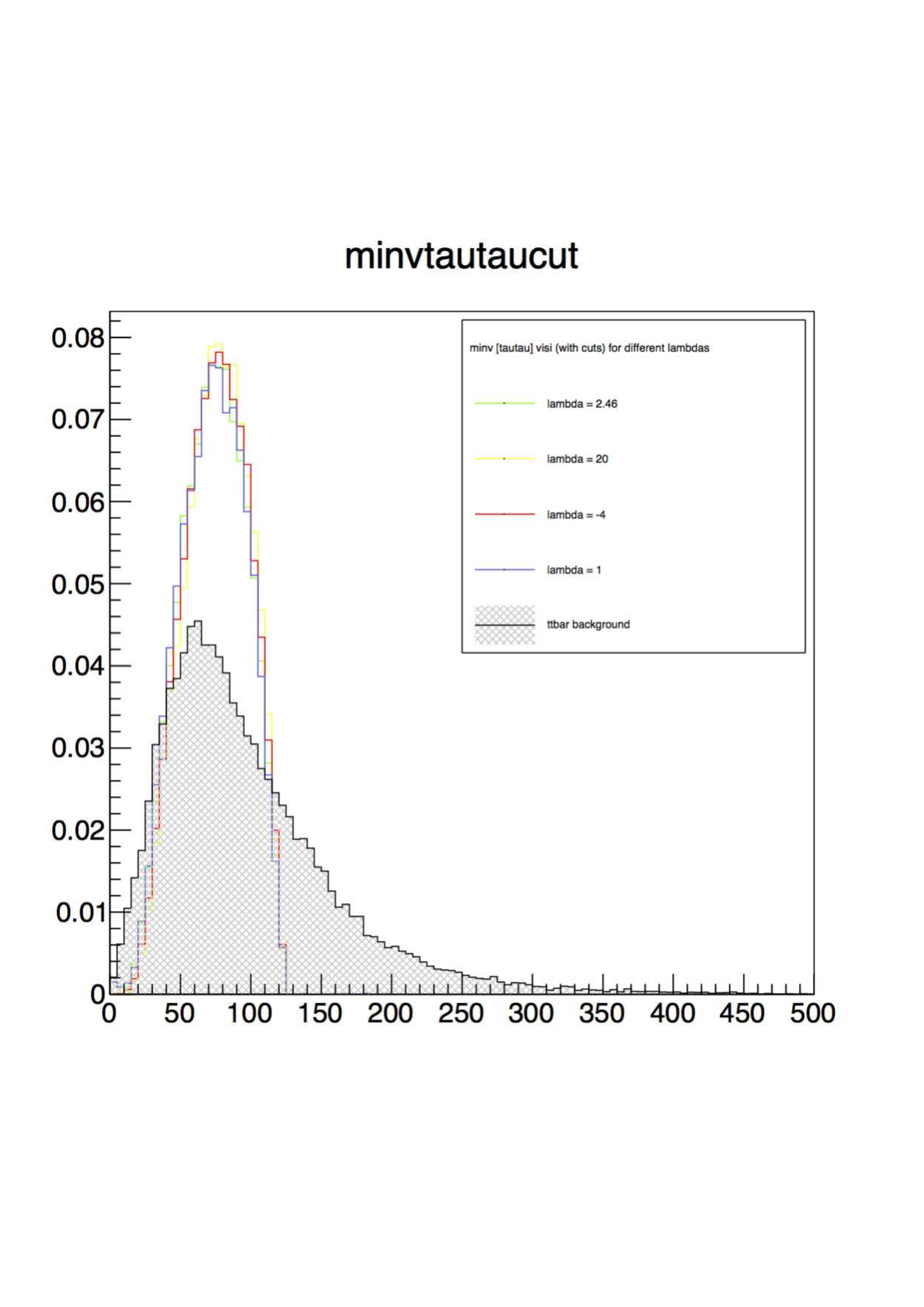}
\end{center}
\begin{center}
	\includegraphics[scale=0.37]{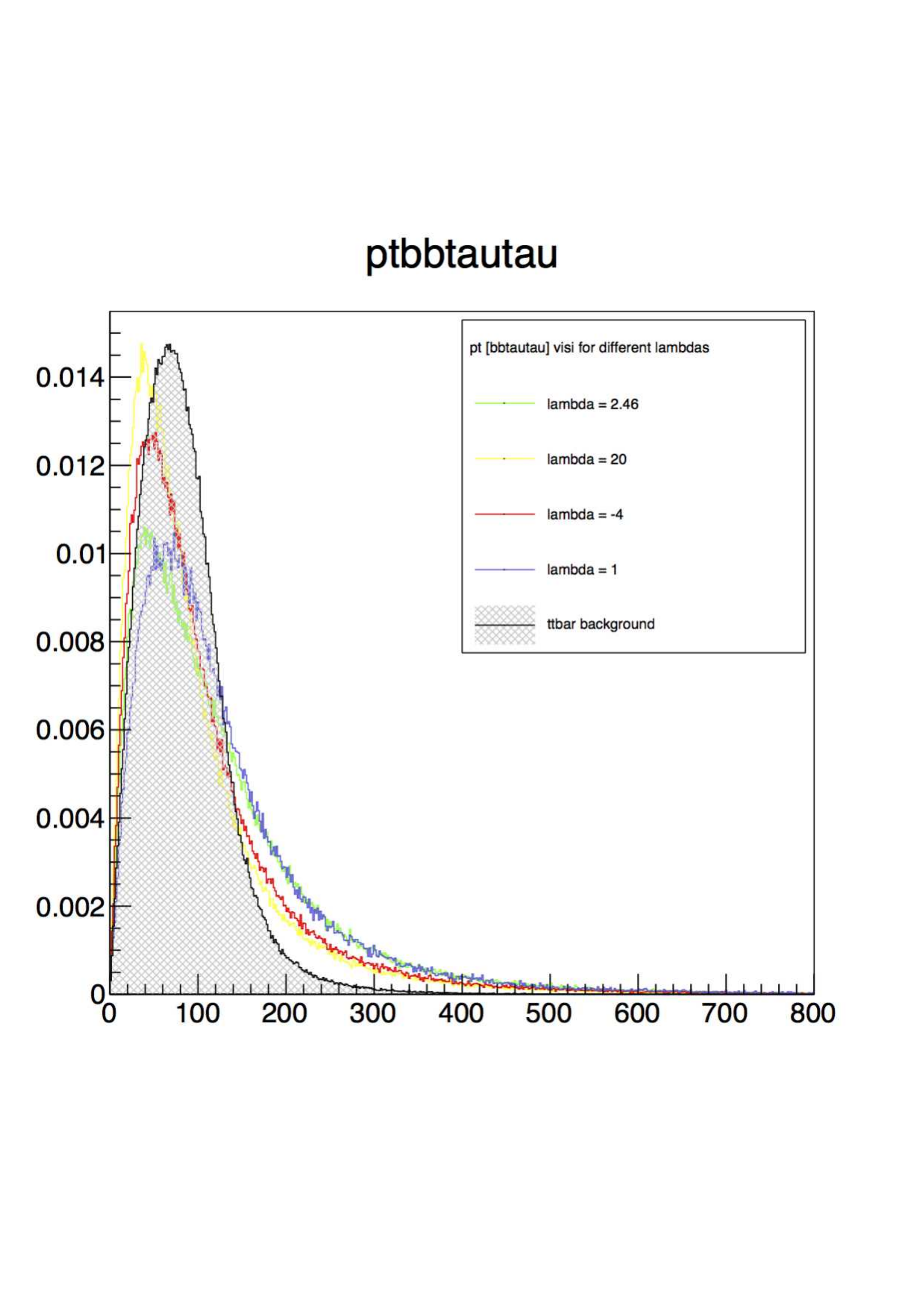}
	\includegraphics[scale=0.37]{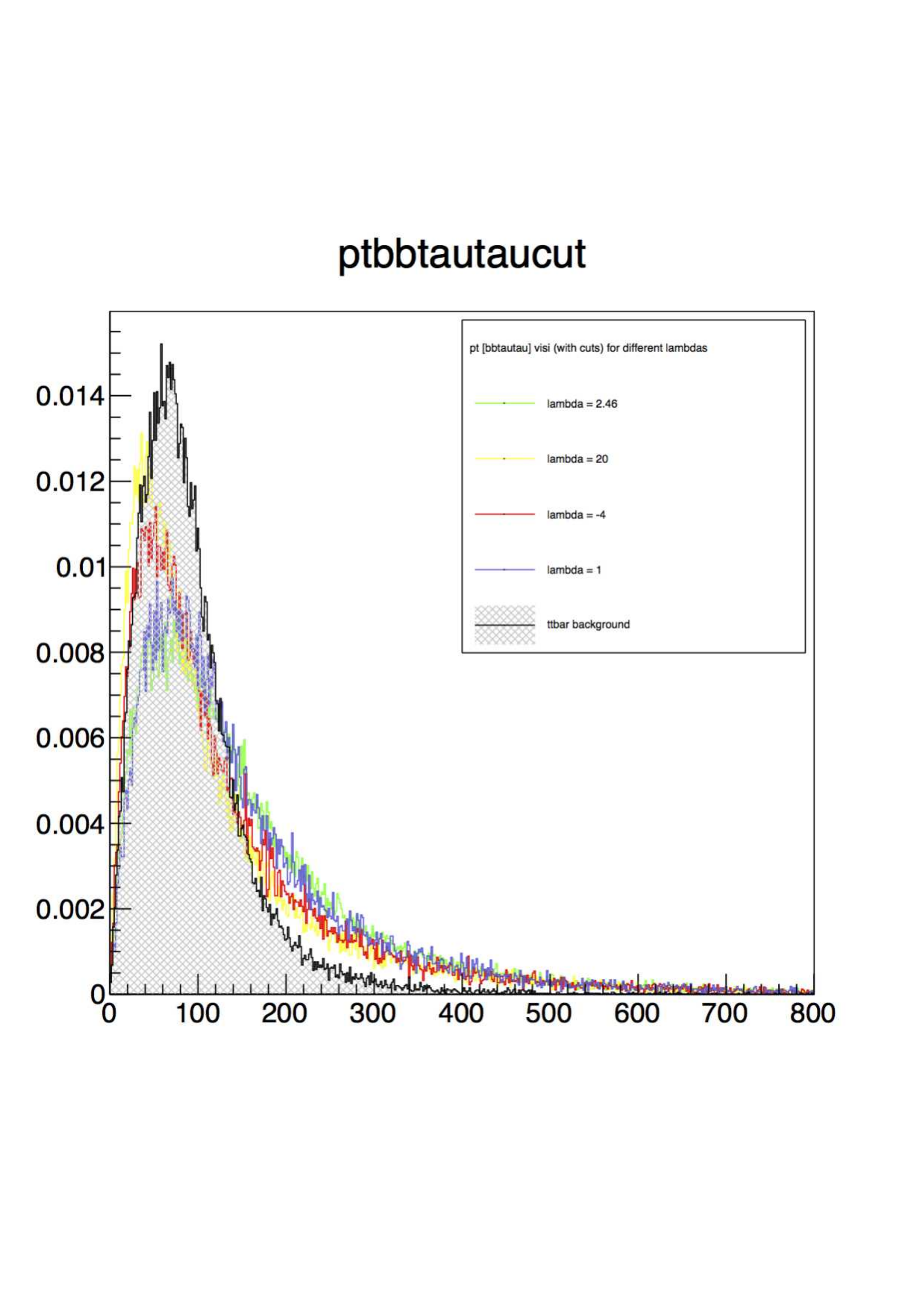}
\end{center}
\begin{center}
	\includegraphics[scale=0.37]{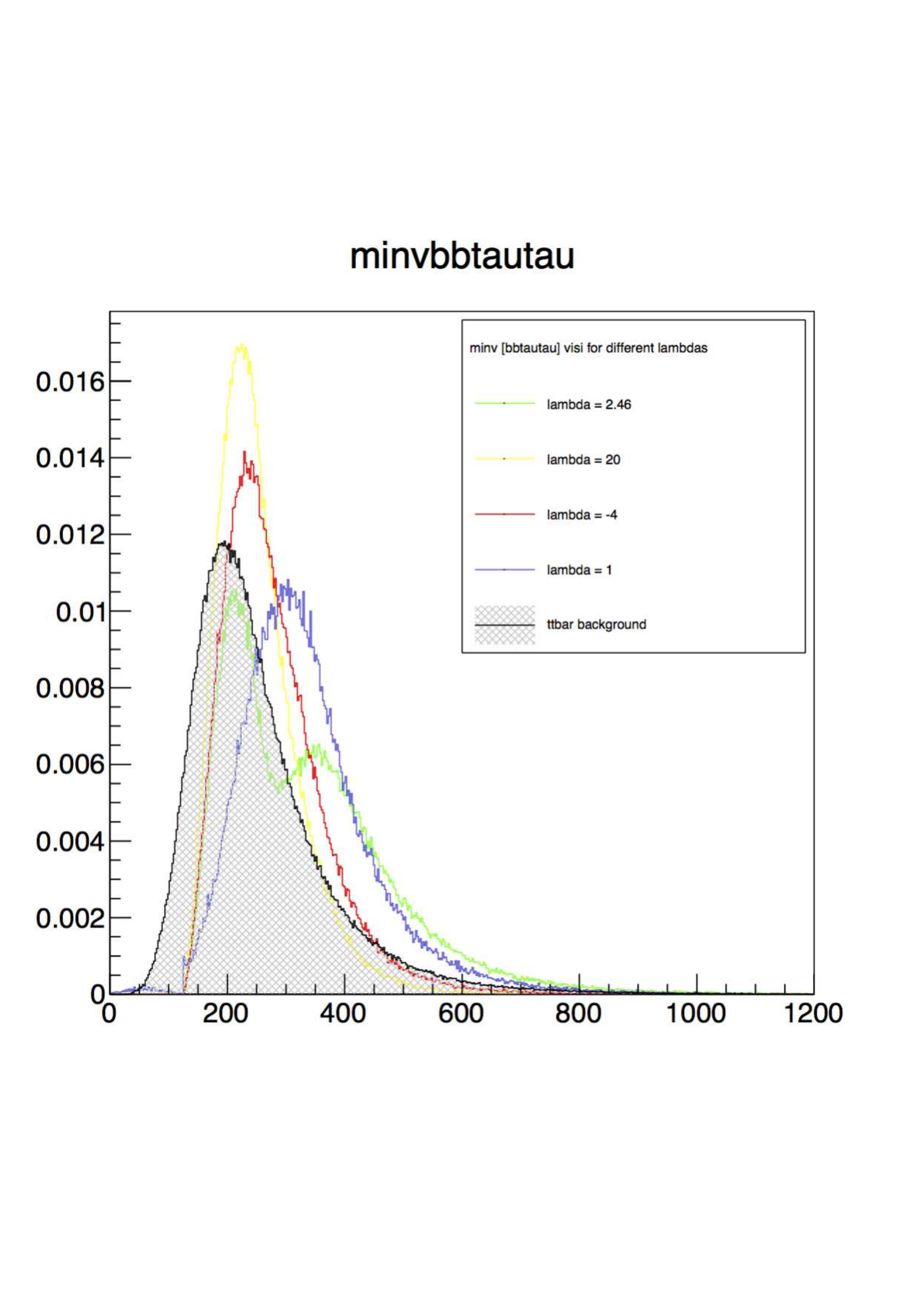}
	\includegraphics[scale=0.37]{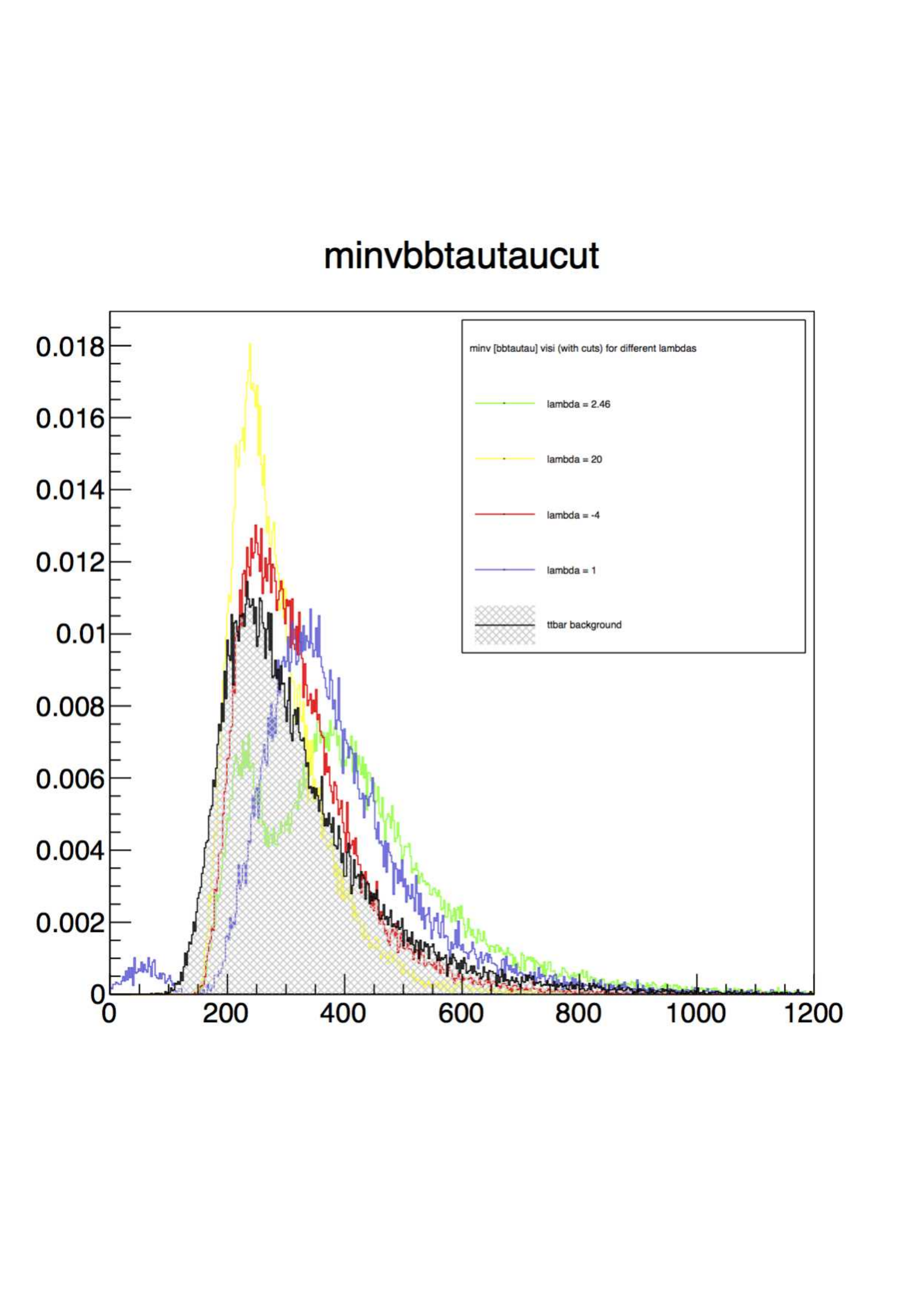}
\end{center}

\paragraph{Méthode des quantités effectives}
\begin{center}
	\includegraphics[scale=0.37]{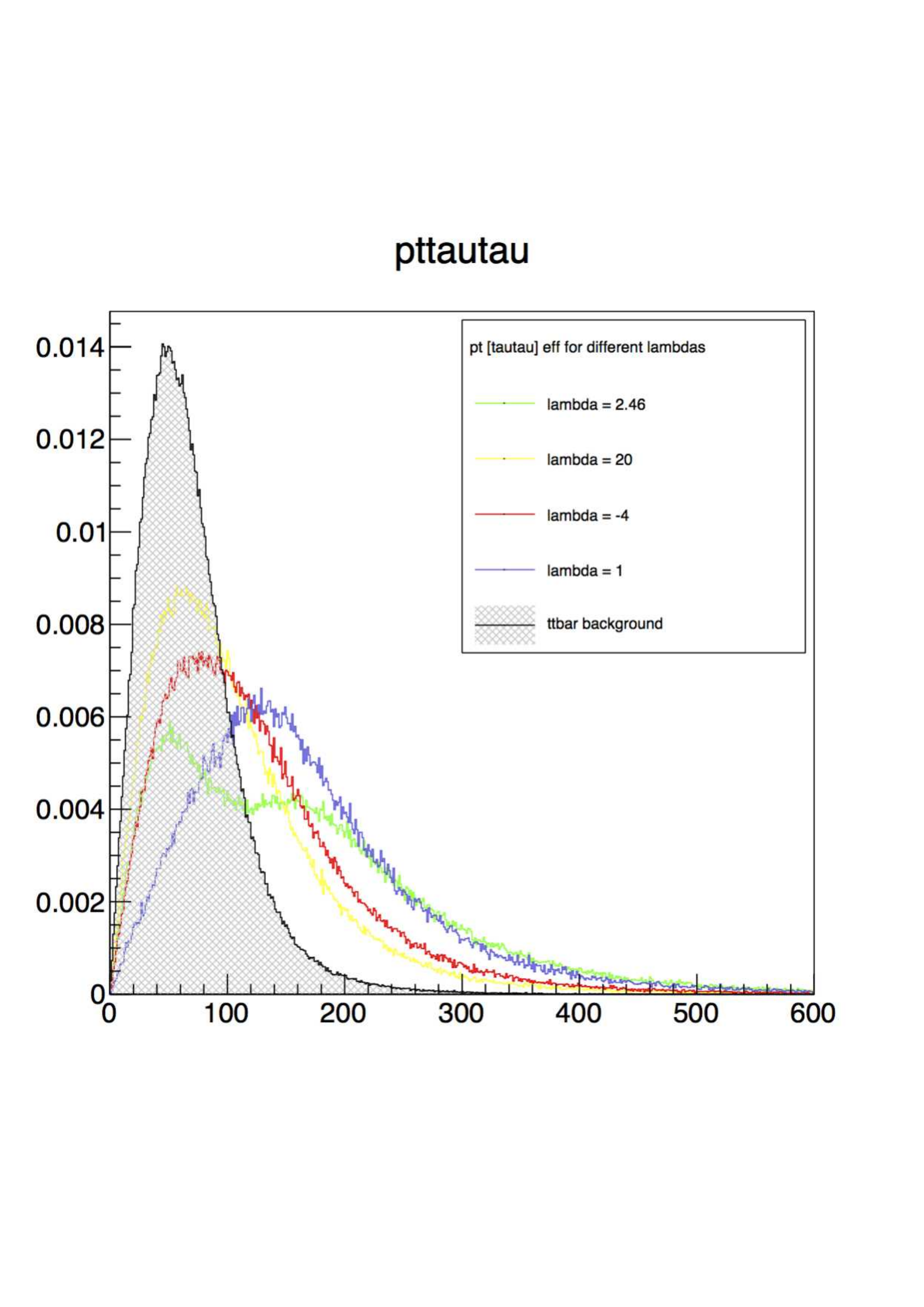}
	\includegraphics[scale=0.37]{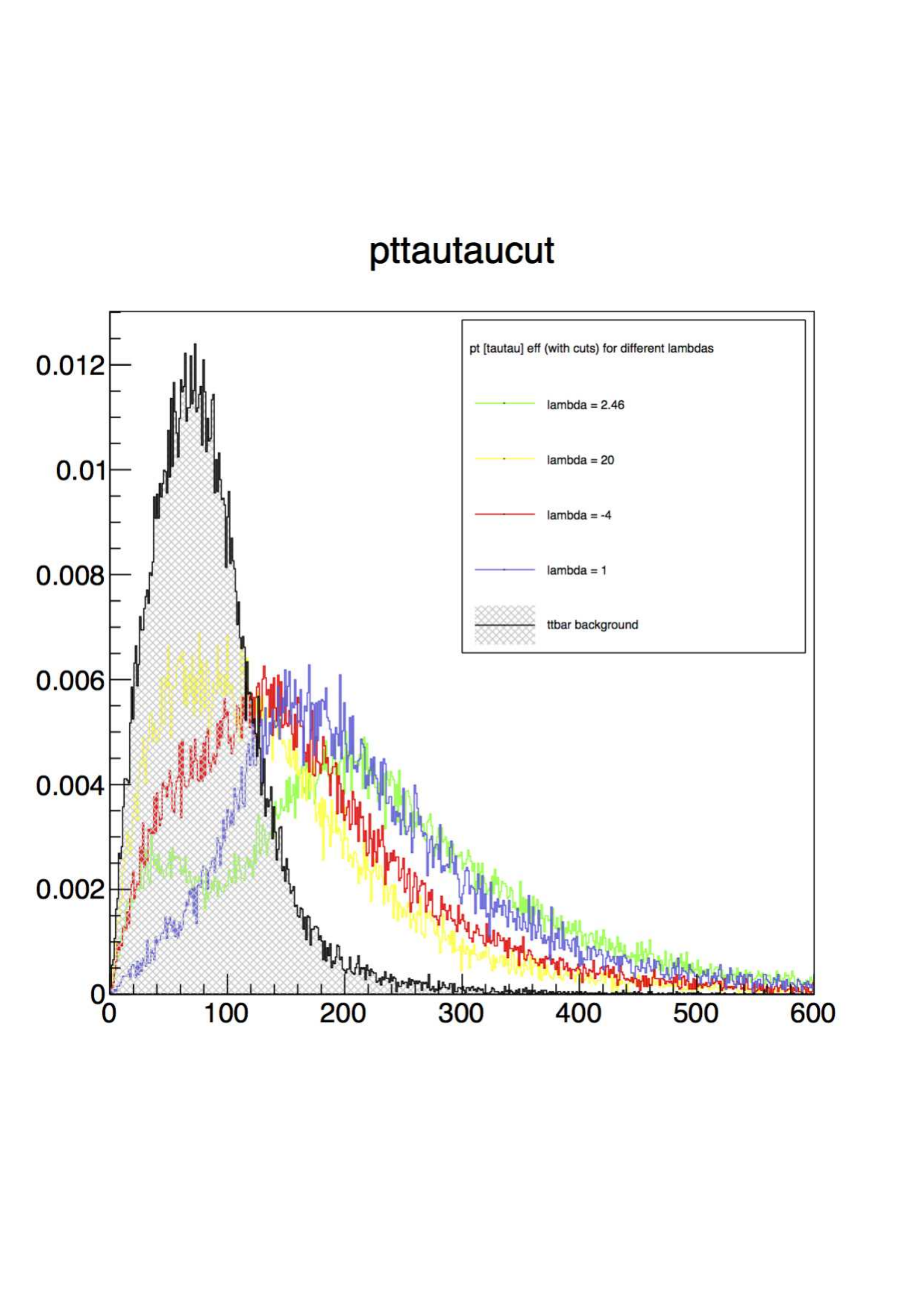}
\end{center}
\begin{center}
	\includegraphics[scale=0.37]{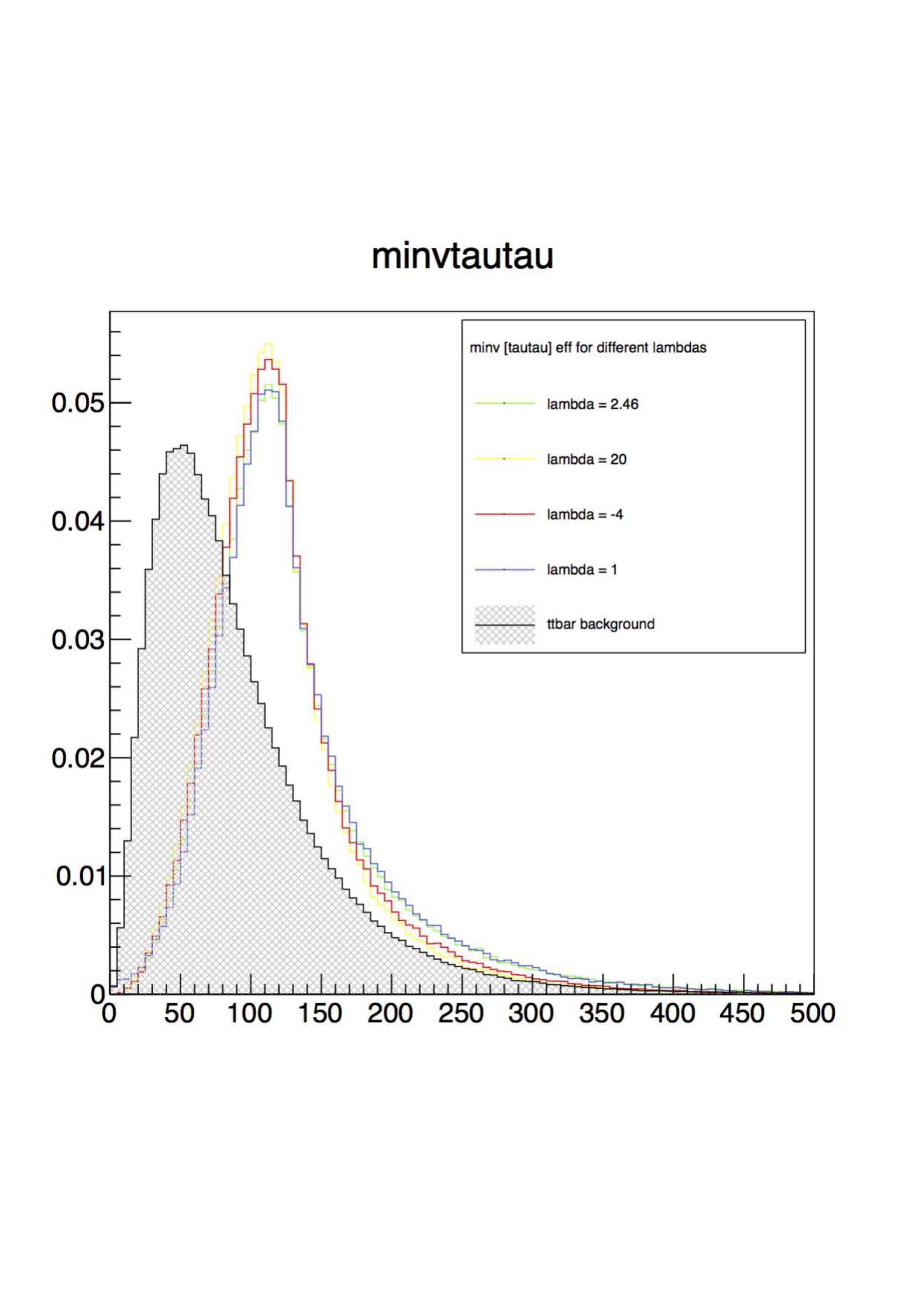}
	\includegraphics[scale=0.37]{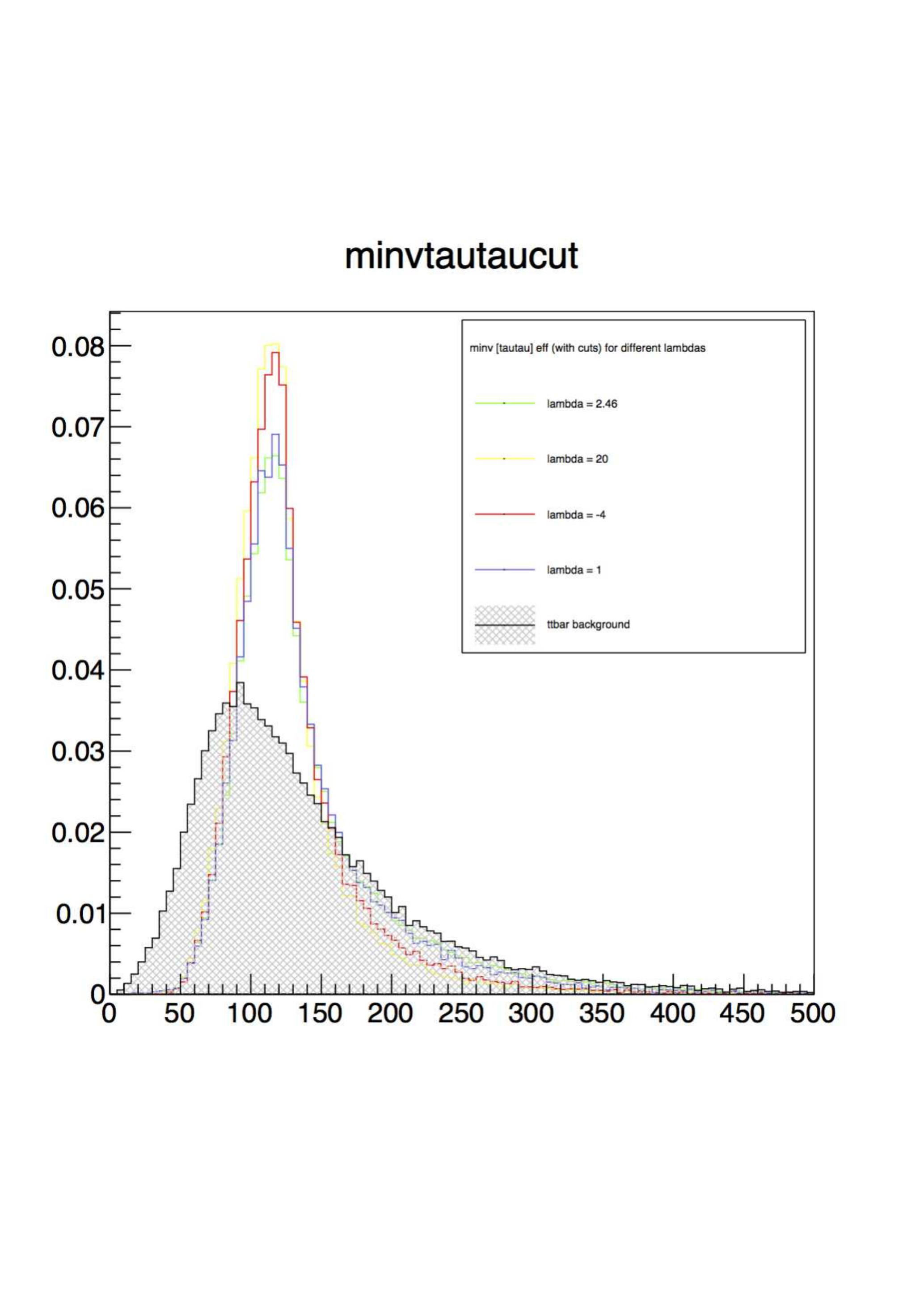}
\end{center}
\begin{center}
	\includegraphics[scale=0.37]{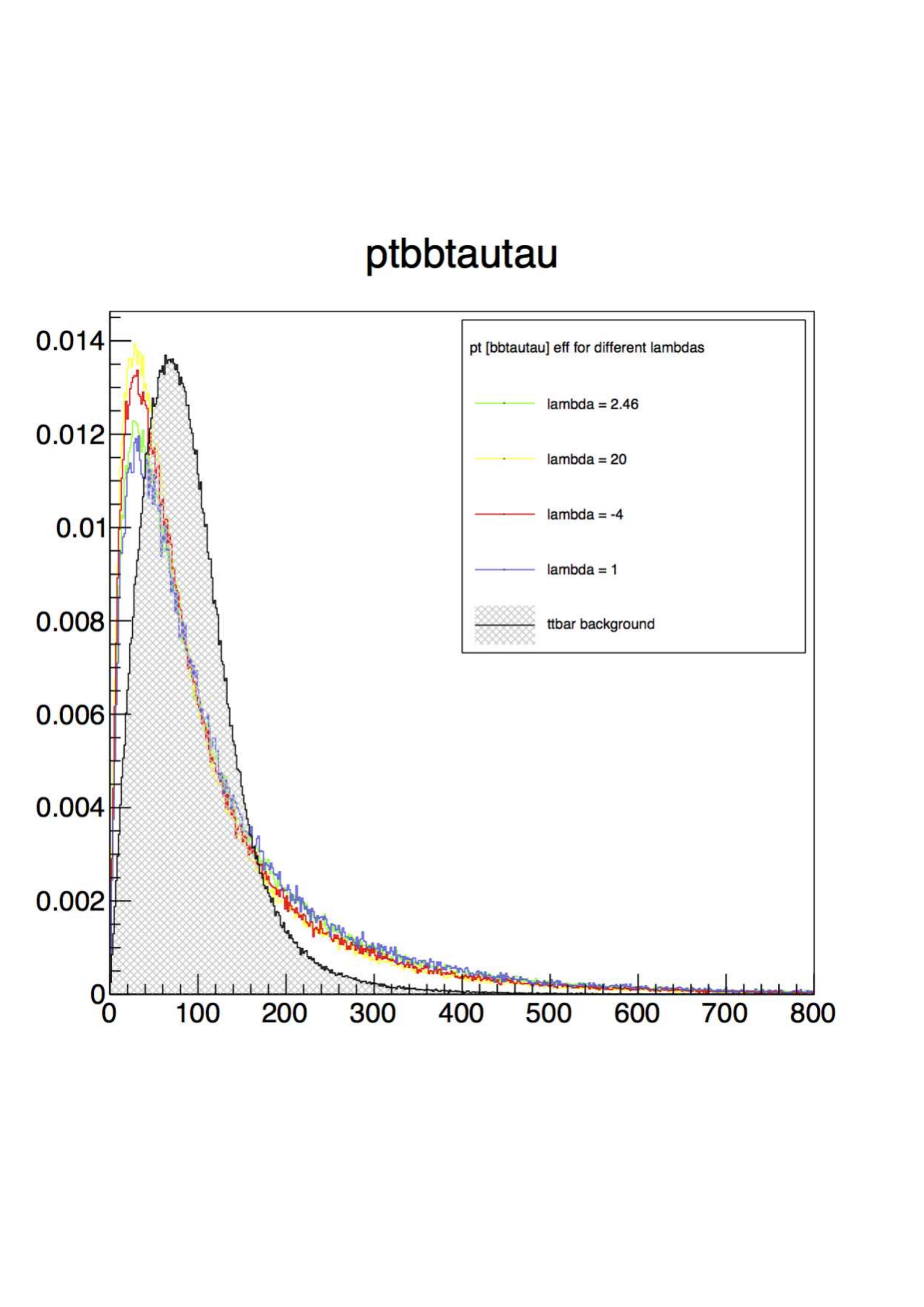}
	\includegraphics[scale=0.37]{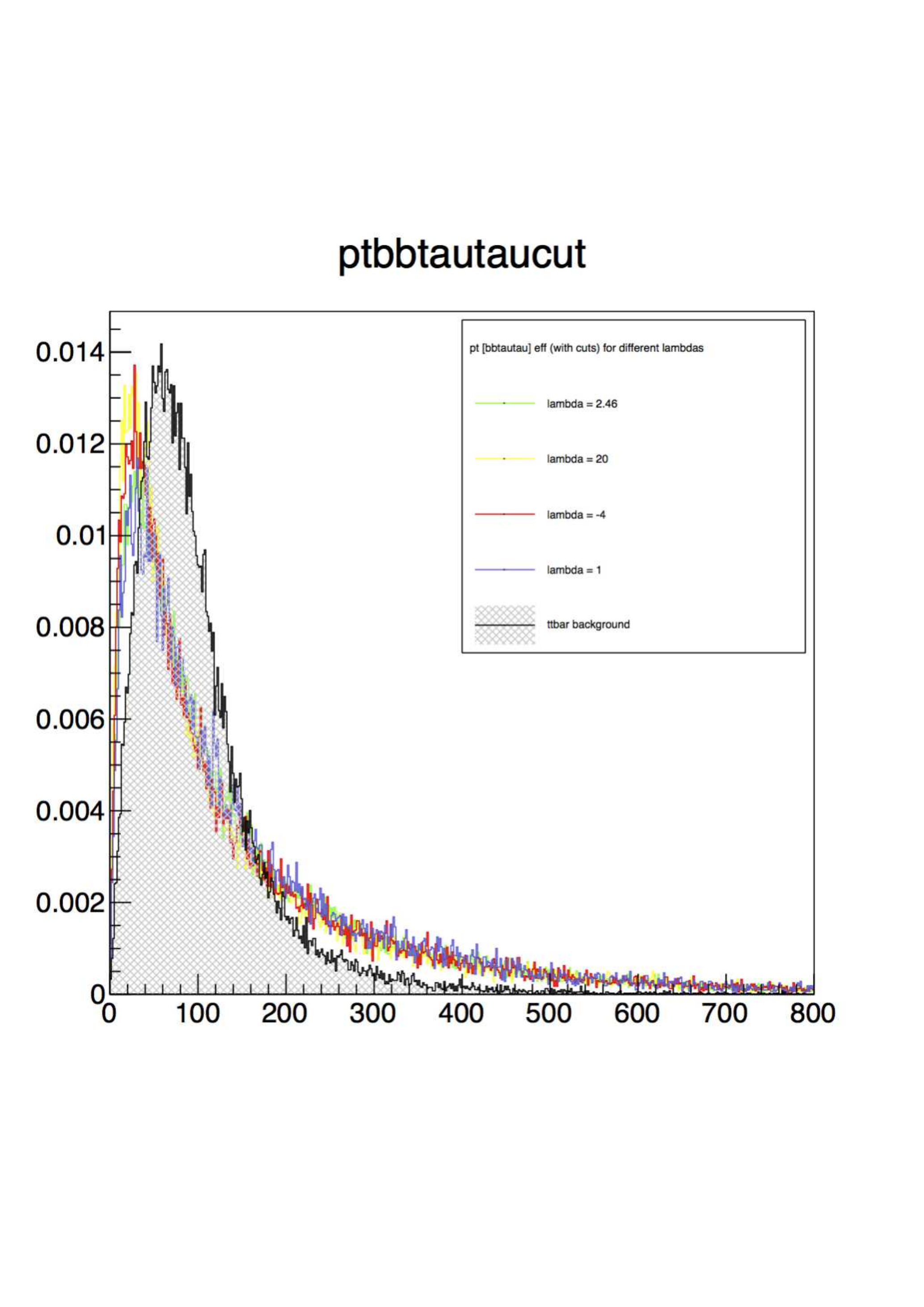}
\end{center}
\begin{center}
	\includegraphics[scale=0.37]{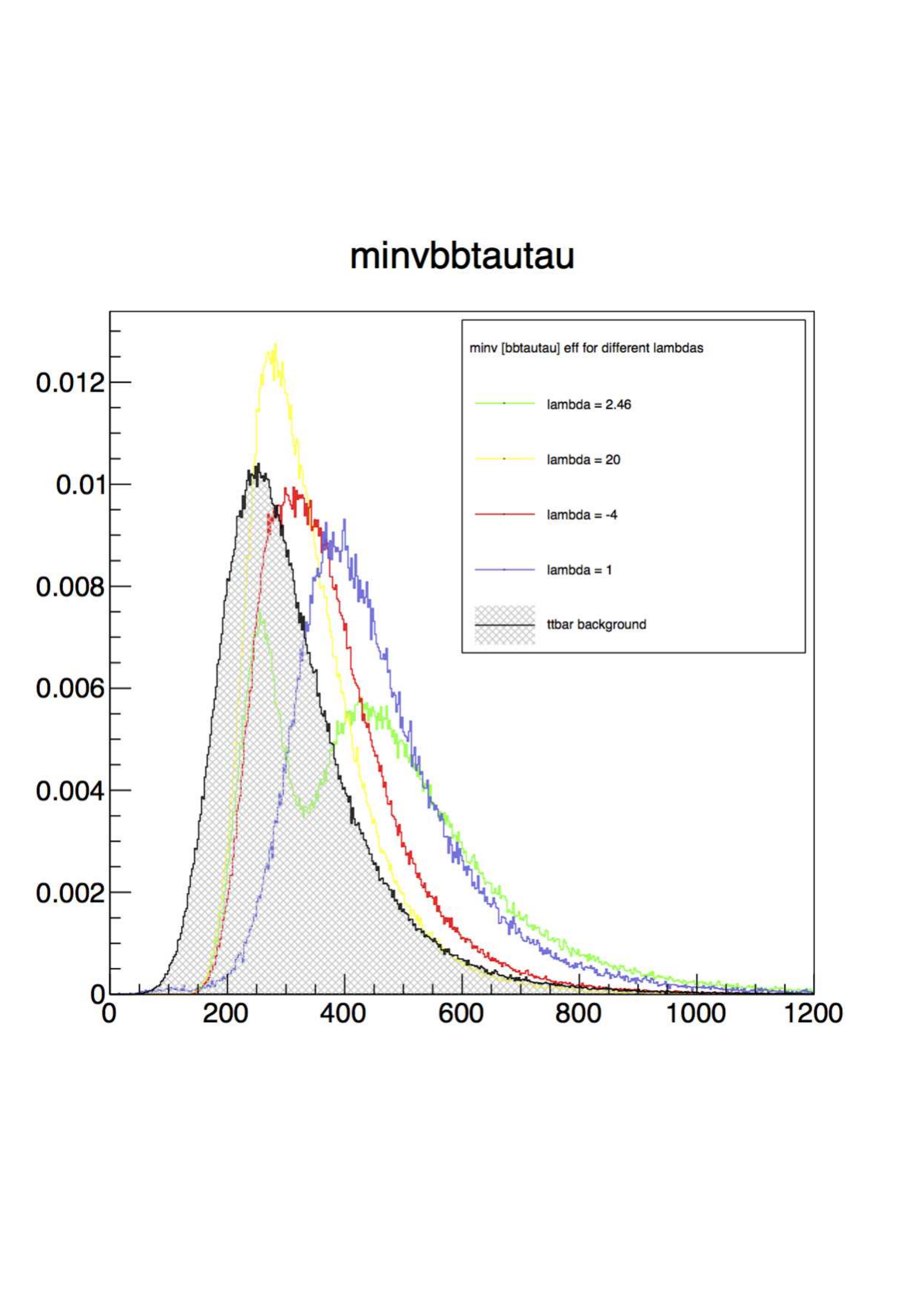}
	\includegraphics[scale=0.37]{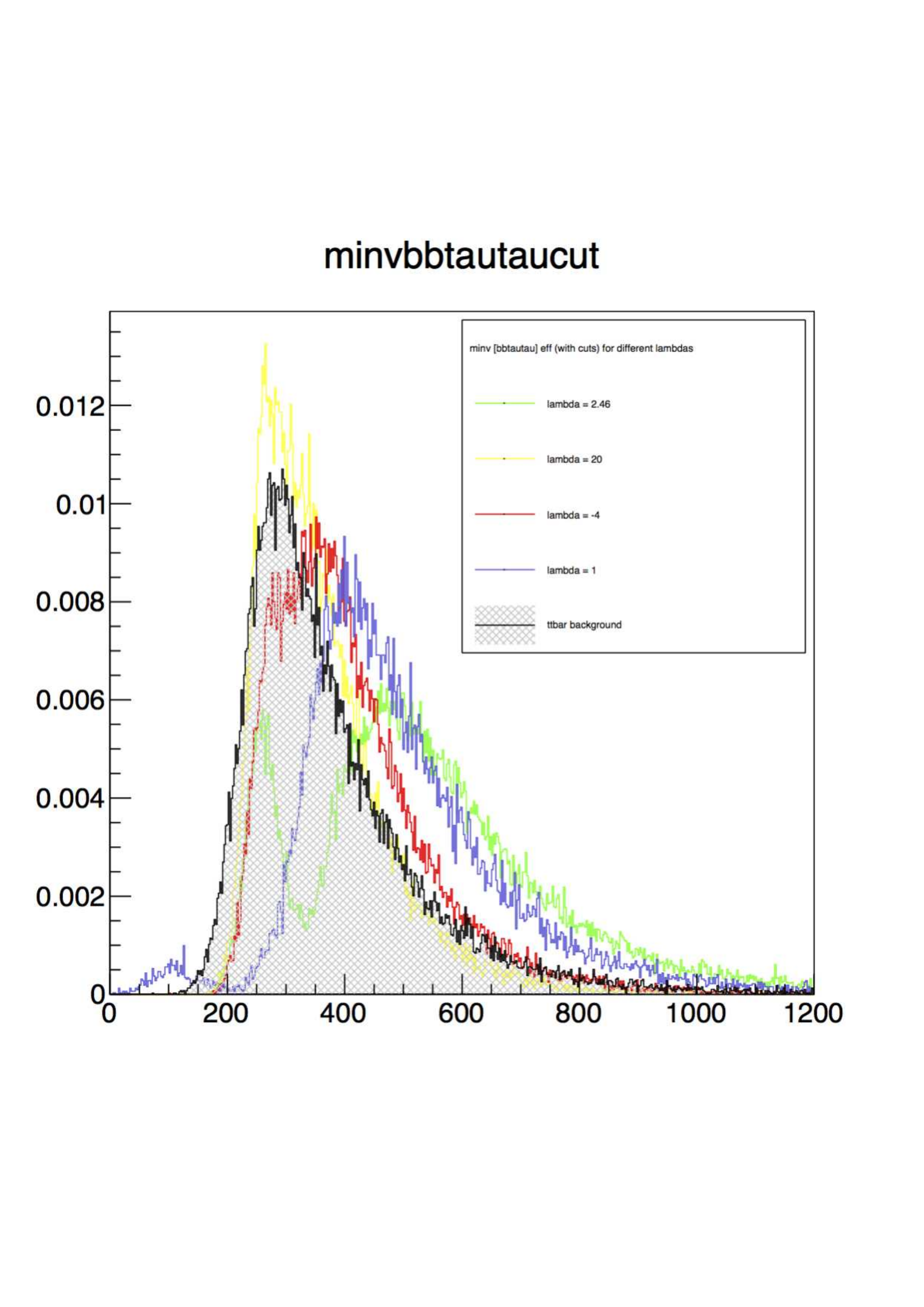}
\end{center}

\paragraph{Approximation colinéaire}
\begin{center}
	\includegraphics[scale=0.37]{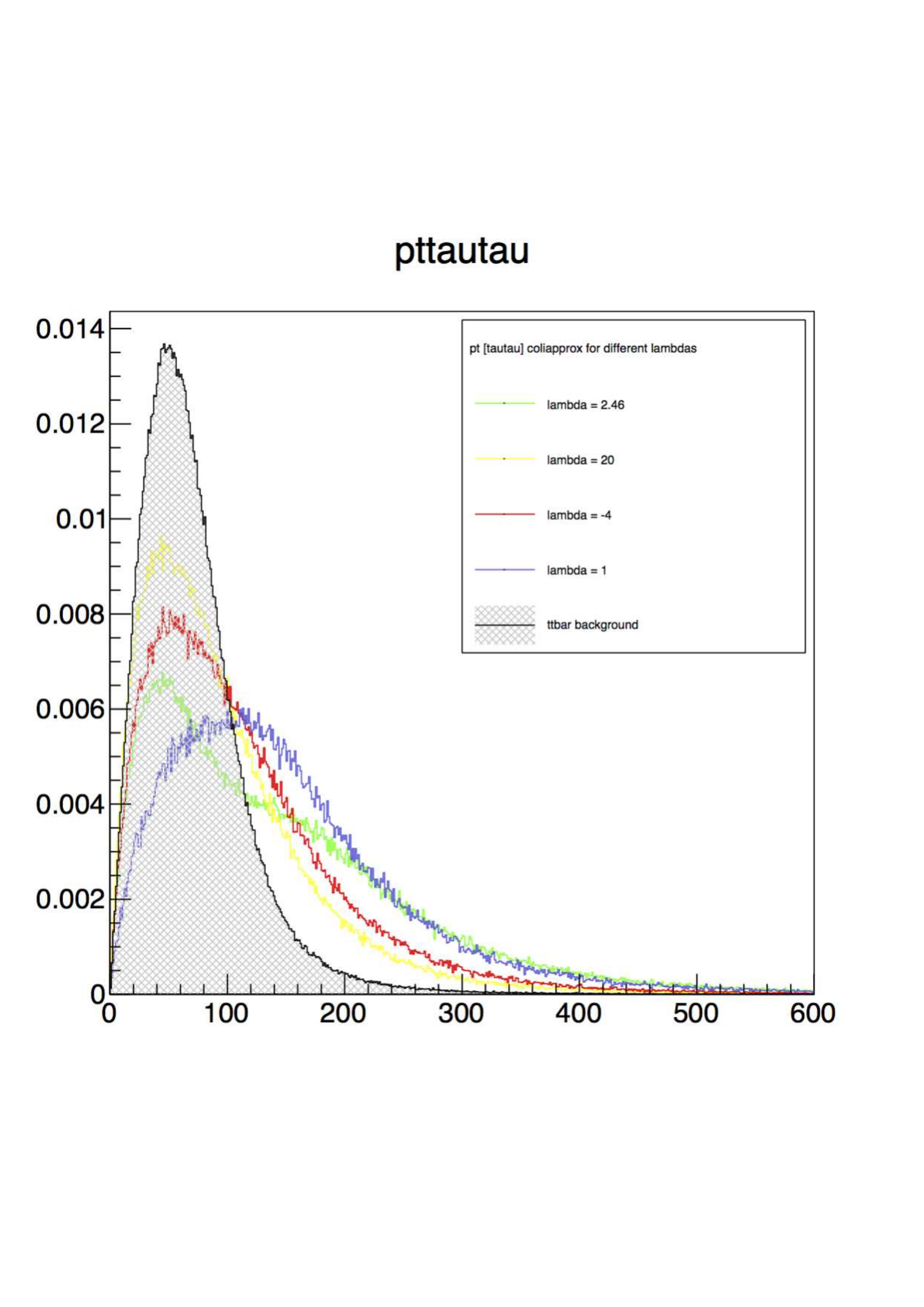}
	\includegraphics[scale=0.37]{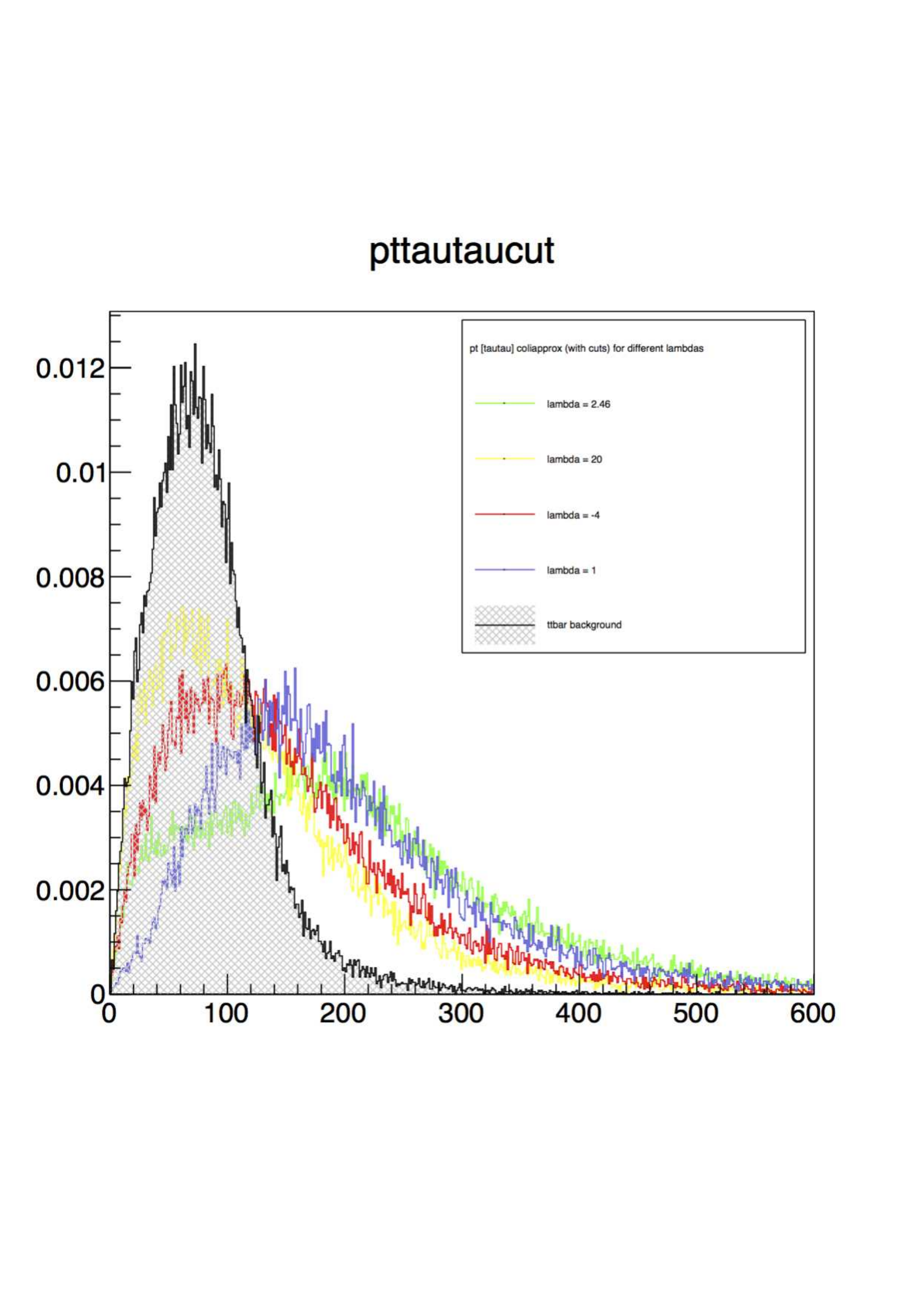}
\end{center}
\begin{center}
	\includegraphics[scale=0.37]{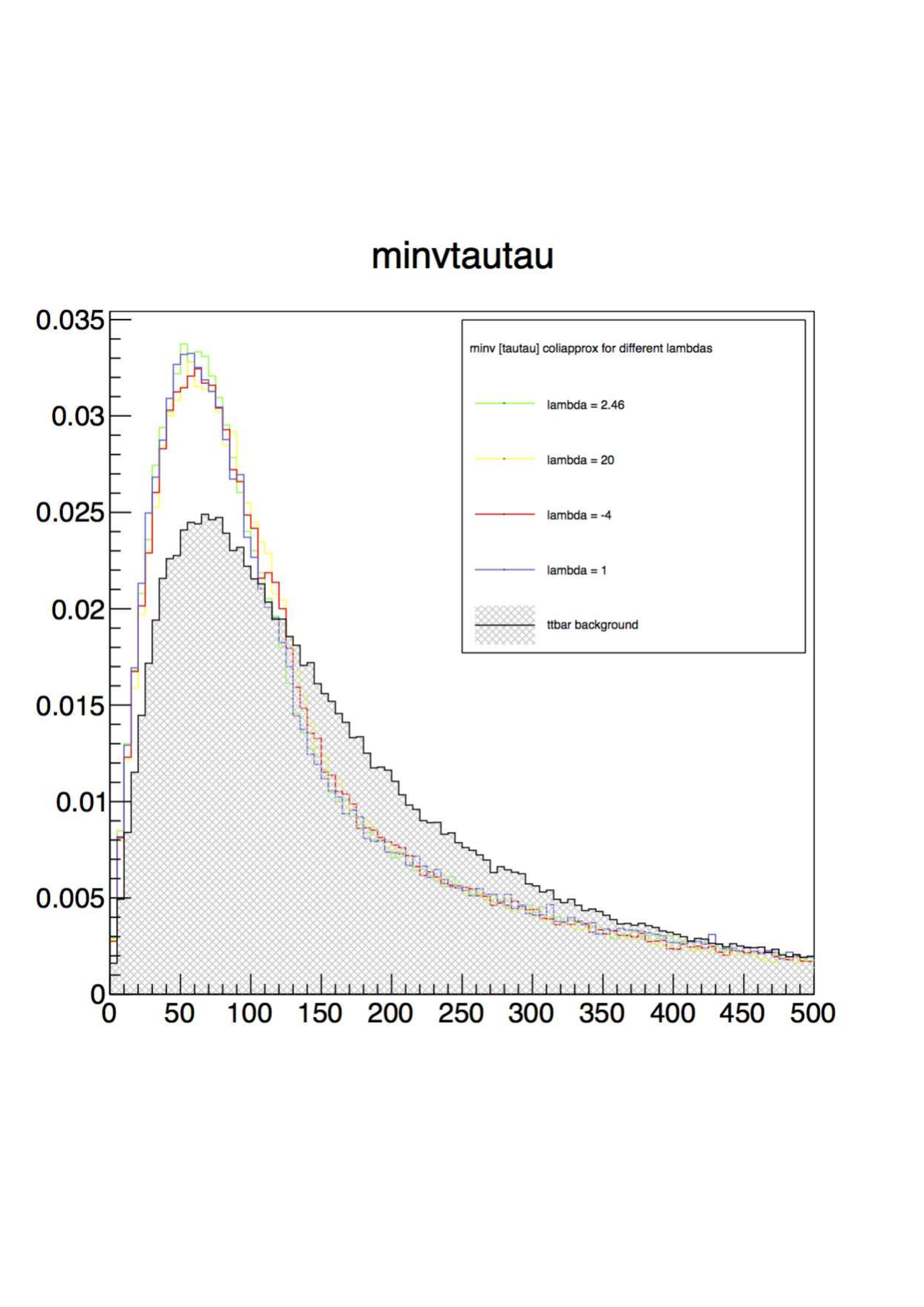}
	\includegraphics[scale=0.37]{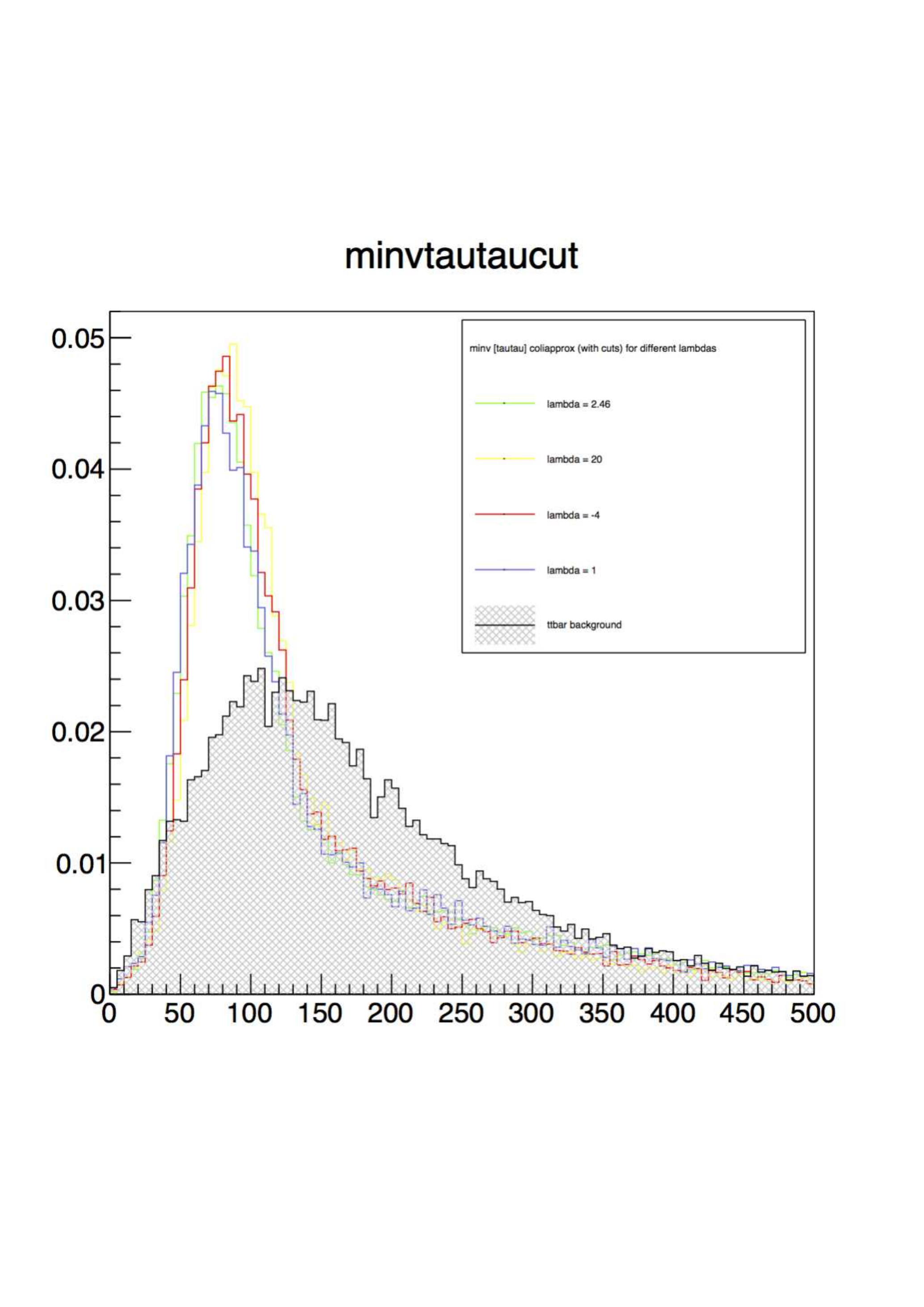}
\end{center}
\begin{center}
	\includegraphics[scale=0.37]{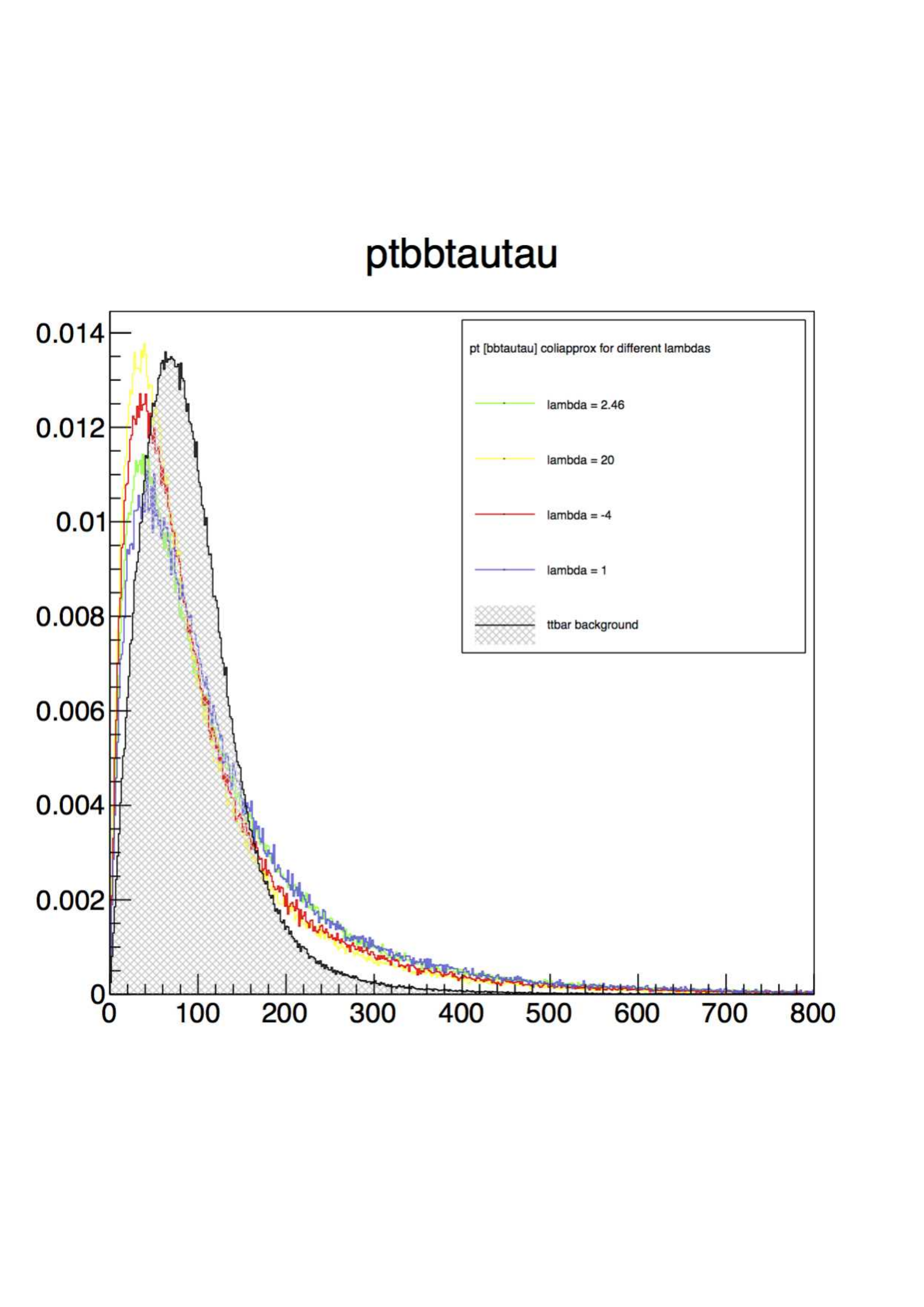}
	\includegraphics[scale=0.37]{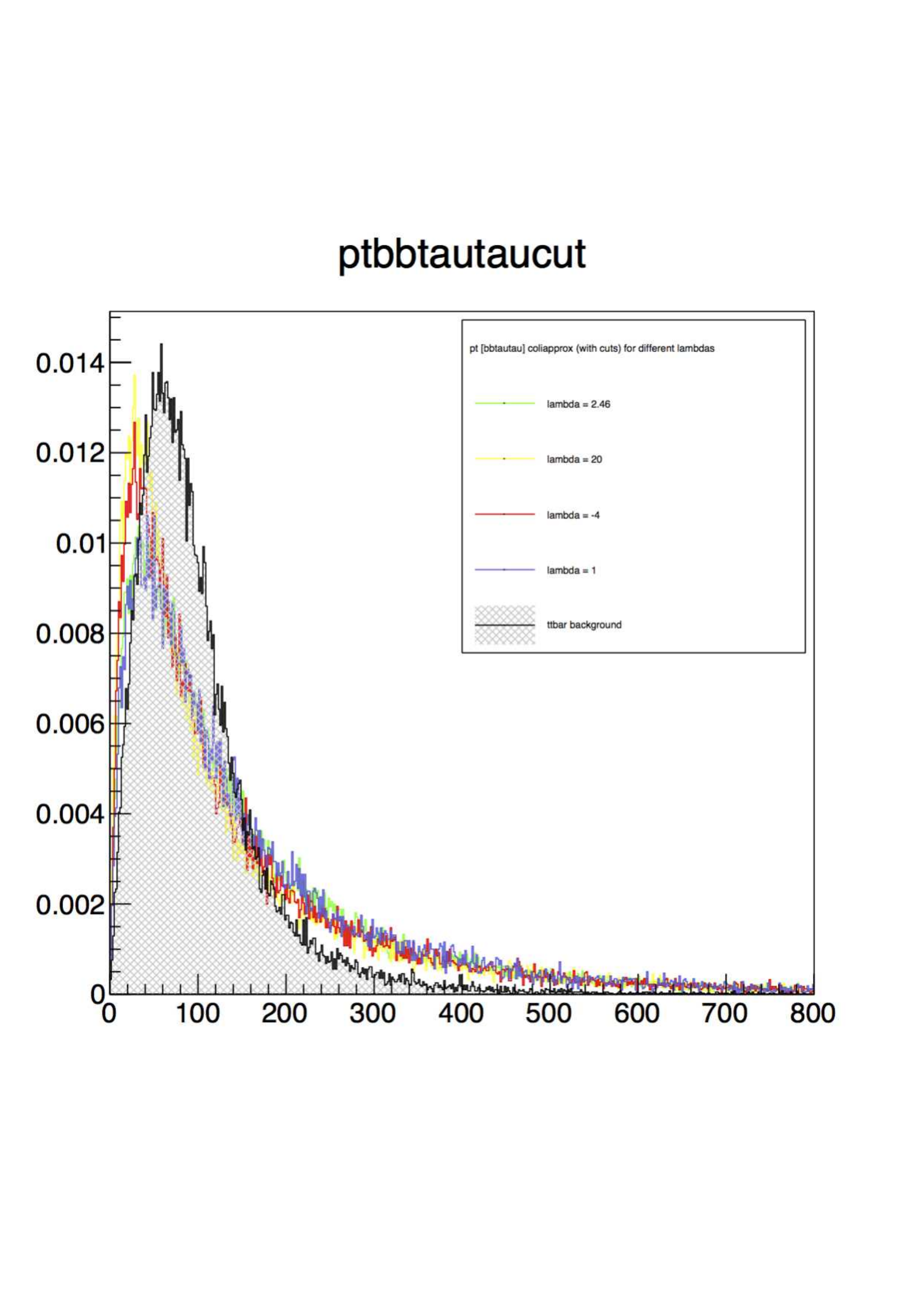}
\end{center}
\begin{center}
	\includegraphics[scale=0.37]{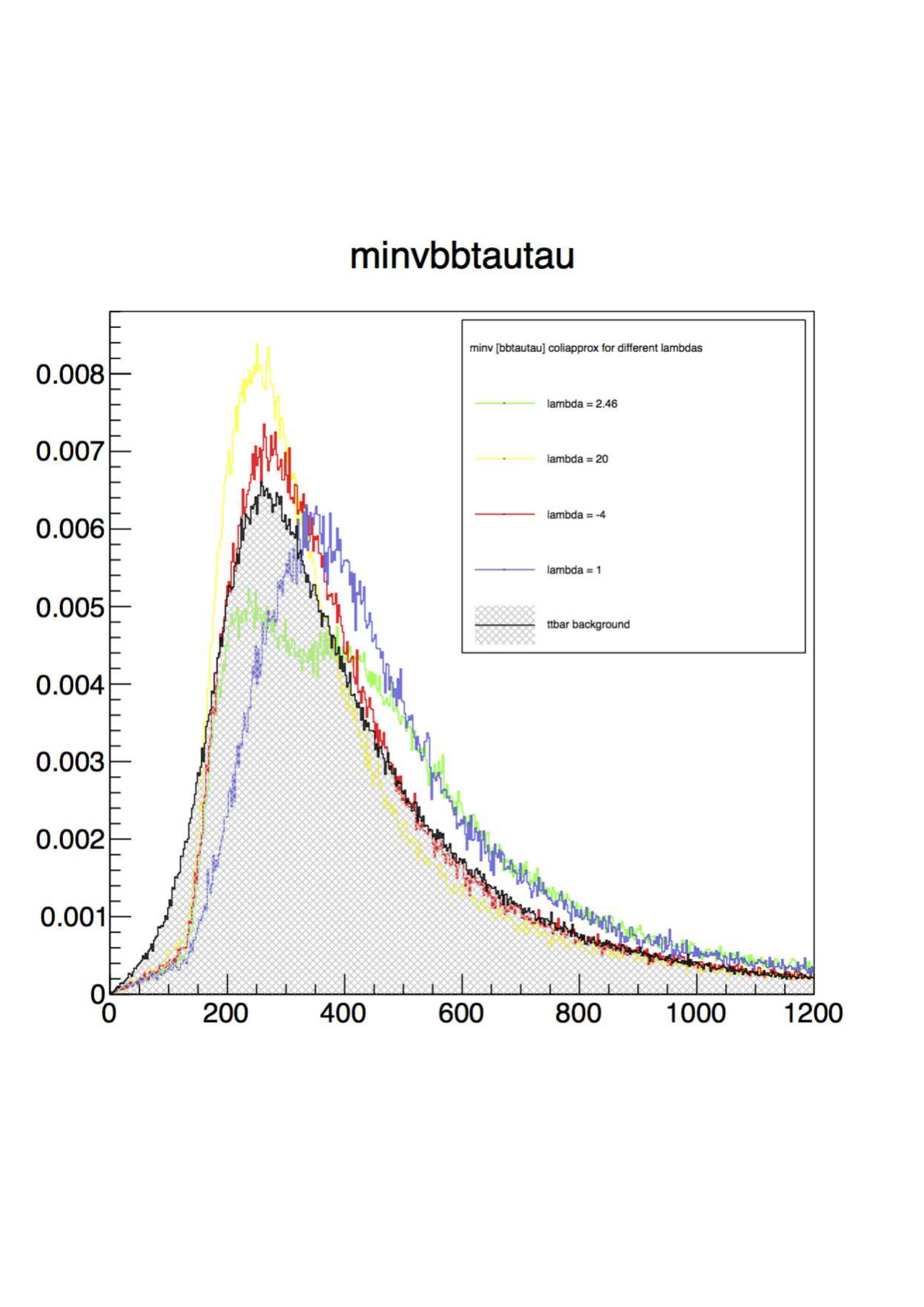}
	\includegraphics[scale=0.37]{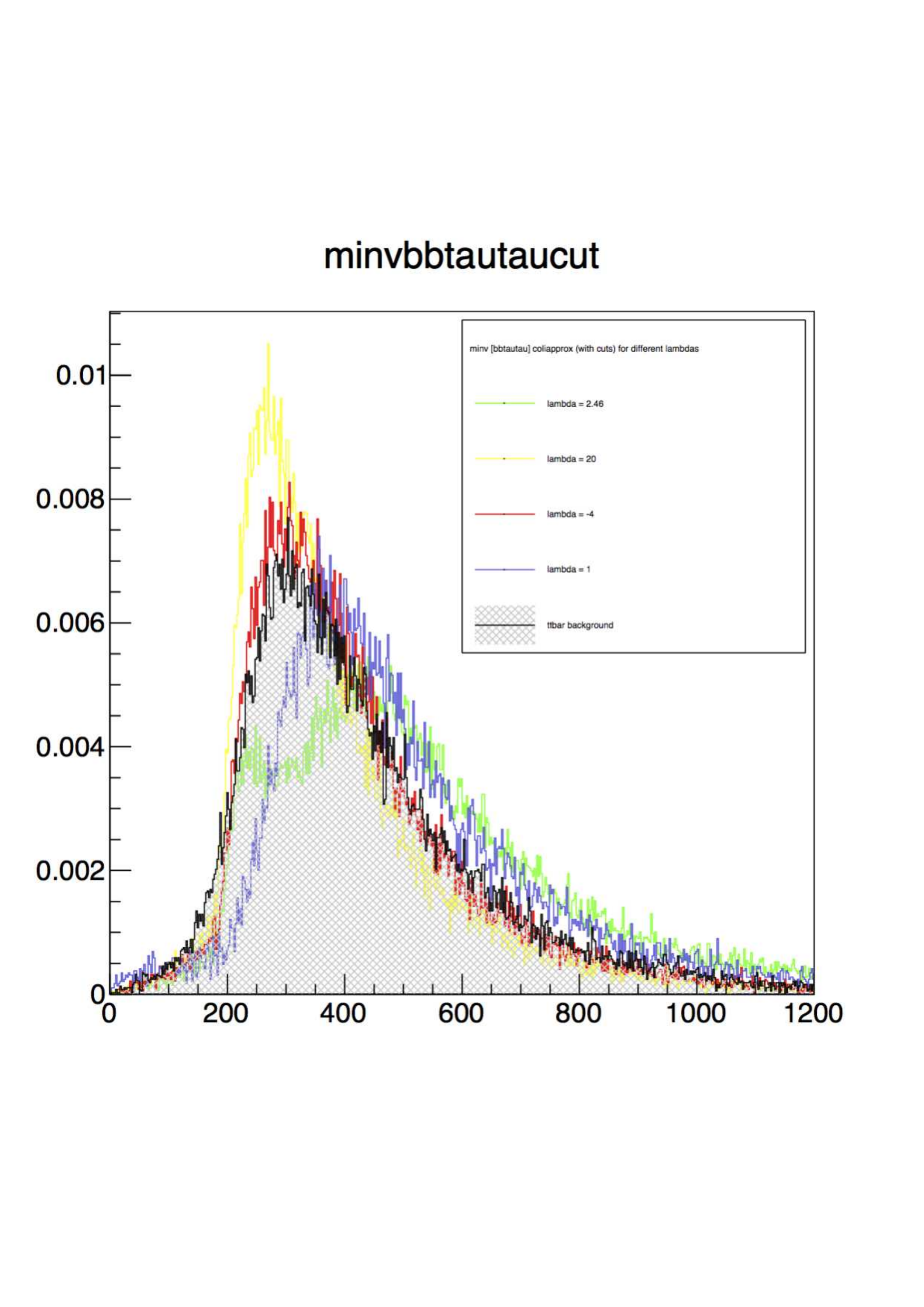}
\end{center}

\section{Utilisation des particules reconstruites}
\subsection{Reconstruction des particules}
L'étude étant faite à un niveau générateur, le fichier ROOT utilisées possède à la fois les informations vraies générées par les méthodes statistiques et les informations reconstruites. Le but est donc d'associer à chaque particule vraie celle reconstruite dans le détecteur, pour avoir une idée encore plus précise de l'impact de l'observation imparfaite et de la reconstruction sur les distributions étudiées.\\

La méthode employée est la suivante : on sait grâce aux informations générateur que dans telle collision, les système de deux $\tau$ se désintègre - par exemple - en un muon et un jet hadronique. On cherche donc une particule reconstruite en muon et une particule reconstruite comme un jet de désintégration de tau, et dont la direction de vol n'est pas trop différente du muon qu'on devrait effectivement observer si le détecteur était parfait, ainsi que la reconstruction. Le critère de sélection (l'impulsion étant paramétrée par les angles $\theta$ et $\phi$ est défini de la manière suivante. Posons : 
$$\Delta R=\sqrt{\Delta\theta^2+\Delta\phi^2}$$
Alors on cherche des particules reconstruites de la bonne nature dans un cône $\Delta R=0.5$ autour de l'impulsion de la particule "vraie".
S'il y en a plusieurs, il est probable que la reconstruction se fasse mal à cause de la proximité des traces laissées par les particules, on ne tient donc pas compte de la collision. S'il y en a aucune, on ne peux pas non plus tenir compte de la collision. \\\\
Une fois cette association réalisée, on prend pour calculer les invariants cinétiques de la collision qui ont été déterminés comme intéressant pas les études précédentes, non pas les quadrivecteurs des particules vraies mais les quadrivecteurs des particules telles qu'elles sont détectées. \\\\
Il faut redéfinir les efficacités. L'efficacité totale est le ratio d'évènements bien reconstruits sur le nombre total d'évènements. Pour chacune des désintégrations du système $[\tau,\tau]$, l'efficacité est le ratio du nombre de tels évènements bien reconstruits sur le nombre total d'évènements de ce type. On obtient : 
\begin{center}
	\begin{tabular}{|c||c|c|c|c|c|}
		\hline
		& $\lambda=-4$ & $\lambda=-1$ & $\lambda=2.46$ & $\lambda=20$ & $t\overline{t}$ \\
		\hline
		globale &0.160  & 0.183 & 0.177 & 0.152 & 0.135 \\
		$\tau_\mu\tau_h$ & 0.179 & 0.205 & 0.120 & 0.171 & 0.153 \\
		$\tau_e\tau_h$ & 0.156 & 0.179 & 0.175 & 0.149 & 0.131 \\
		$\tau_h\tau_h$ & 0.122 & 0.139 & 0.133 & 0.113 & 0.096 \\
		$\tau_e\tau_\mu$ & 0.273 & 0.303 & 0.292 & 0.251 & 0.239 \\
		$\tau_e\tau_e$ & 0.223 & 0.256 & 0.255 & 0.216 & 0.208 \\
		$\tau_\mu\tau_\mu$ & 0.301 & 0.354 & 0.330 & 0.294 & 0.289 \\
		\hline
	\end{tabular}
\end{center}

Si on compare ce tableau à celui obtenu dans l'étude générateur en prenant en compte la désintégration des $\tau$, on voit que les efficacités individuelles de chacun des modes de désintégration sont plus basses dans le cas de la reconstruction, pour les raisons de non-idéalité de cette étape. Cependant, l'efficacité globale et pour certaines valeurs du fond irréductible est plus grande avec la reconstruction. Cela s'explique par le fait qu'en réalité, les coupures cinétiques correspondant le mieux à la reconstruction des quarks bottom comprennent une coupure à 20 GeV sur l'impulsion transverse, et non à 30 GeV comme ce nous avons utilisé dans l'étude précédente. Il y aurait également une modification à apporter concernant la reconstruction des muons, qui est meilleure que ce qui était simulé par les coupures, et celle des électrons, moins bonne que celle attendue (on s'en rend compte à l'aide de la dernière colonne de ce tableau).\\\\
Dans un premier temps, nous allons mettre en exergue la modification de l'allure de la distribution des quantités cinétiques sélectionnées à cause des processus physiques (désintégrations avec émission de neutrinos et reconstruction) et matériels (reconstruction imparfaite), en juxtaposant les distributions "idéales" obtenues lors de la première étude (sans coupures) (à gauche) et les distributions obtenues avec la reconstruction (à droite).\\
Dans un deuxième temps, nous montrerons les effets de la reconstruction seulement : comme auparavant ou nous juxtaposions histogrammes obtenus avec et sans coupures, nous mettrons côte-à-côte les diagrammes obtenus, sans coupure, au niveau générateur et en considérant les désintégrations des $\tau$ (à gauche), et les histogrammes obtenus en considérant la reconstruction (à droite).

\subsection{Étude de l'efficacité totale de l'observation}

\paragraph{Méthode des quantités visibles}
\begin{center}
	\includegraphics[scale=0.35]{figure4o.pdf}
	\includegraphics[scale=0.35]{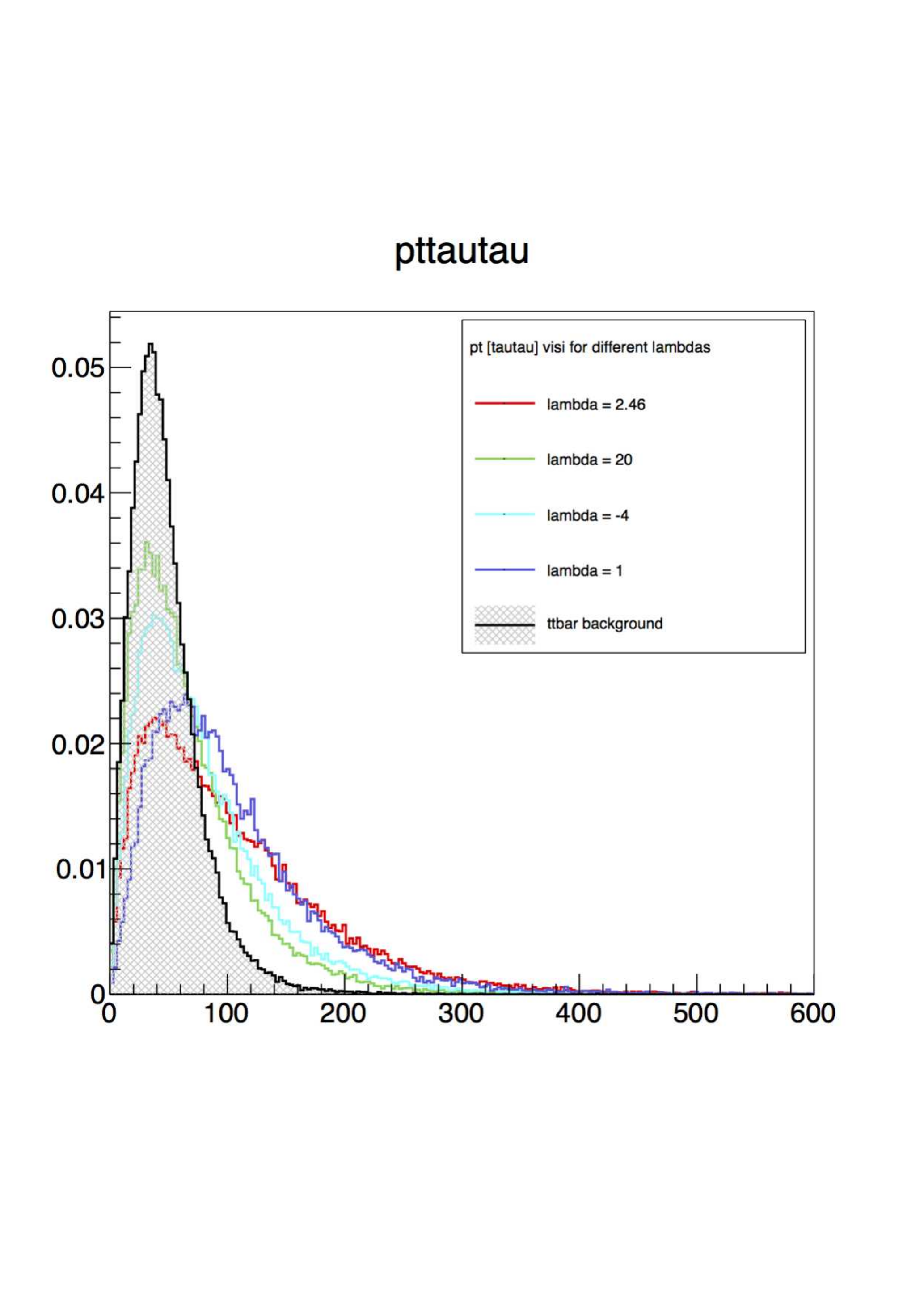}
\end{center}
\begin{center}
	\includegraphics[scale=0.35]{figure4u.pdf}
	\includegraphics[scale=0.35]{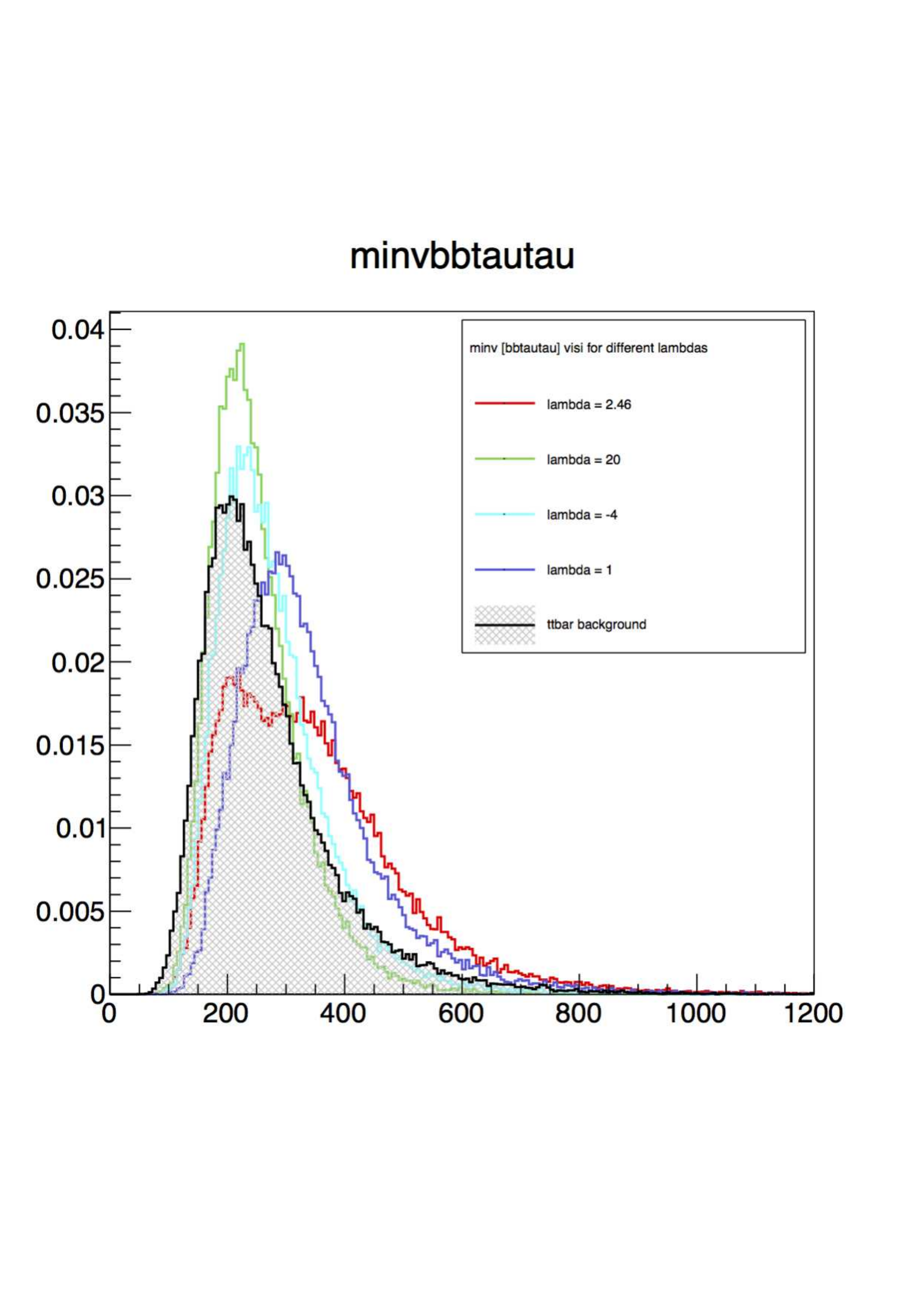}
\end{center}
\paragraph{Méthode des quantités effectives}
\begin{center}
	\includegraphics[scale=0.35]{figure4o.pdf}
	\includegraphics[scale=0.35]{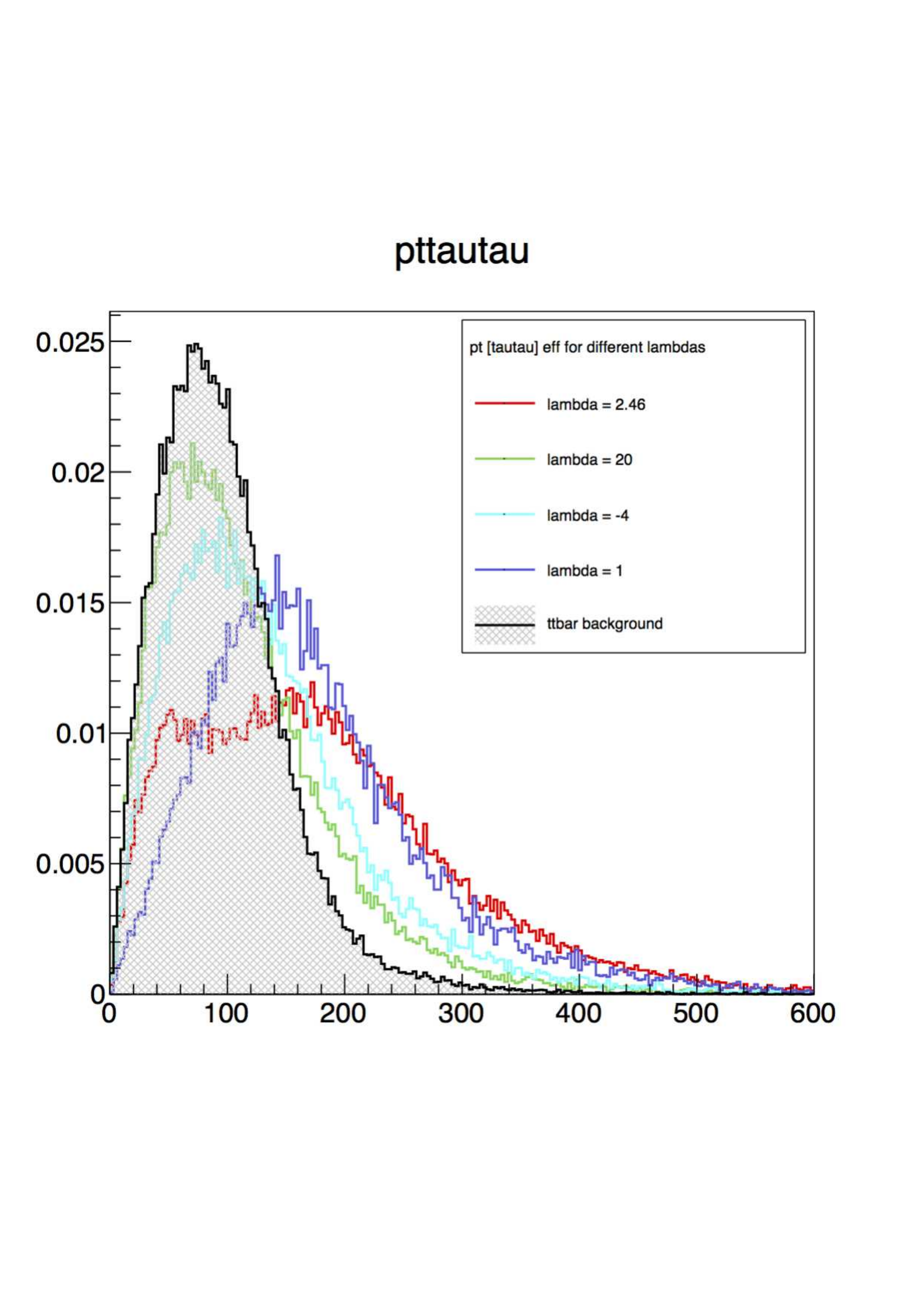}
\end{center}
\begin{center}
	\includegraphics[scale=0.35]{figure4u.pdf}
	\includegraphics[scale=0.35]{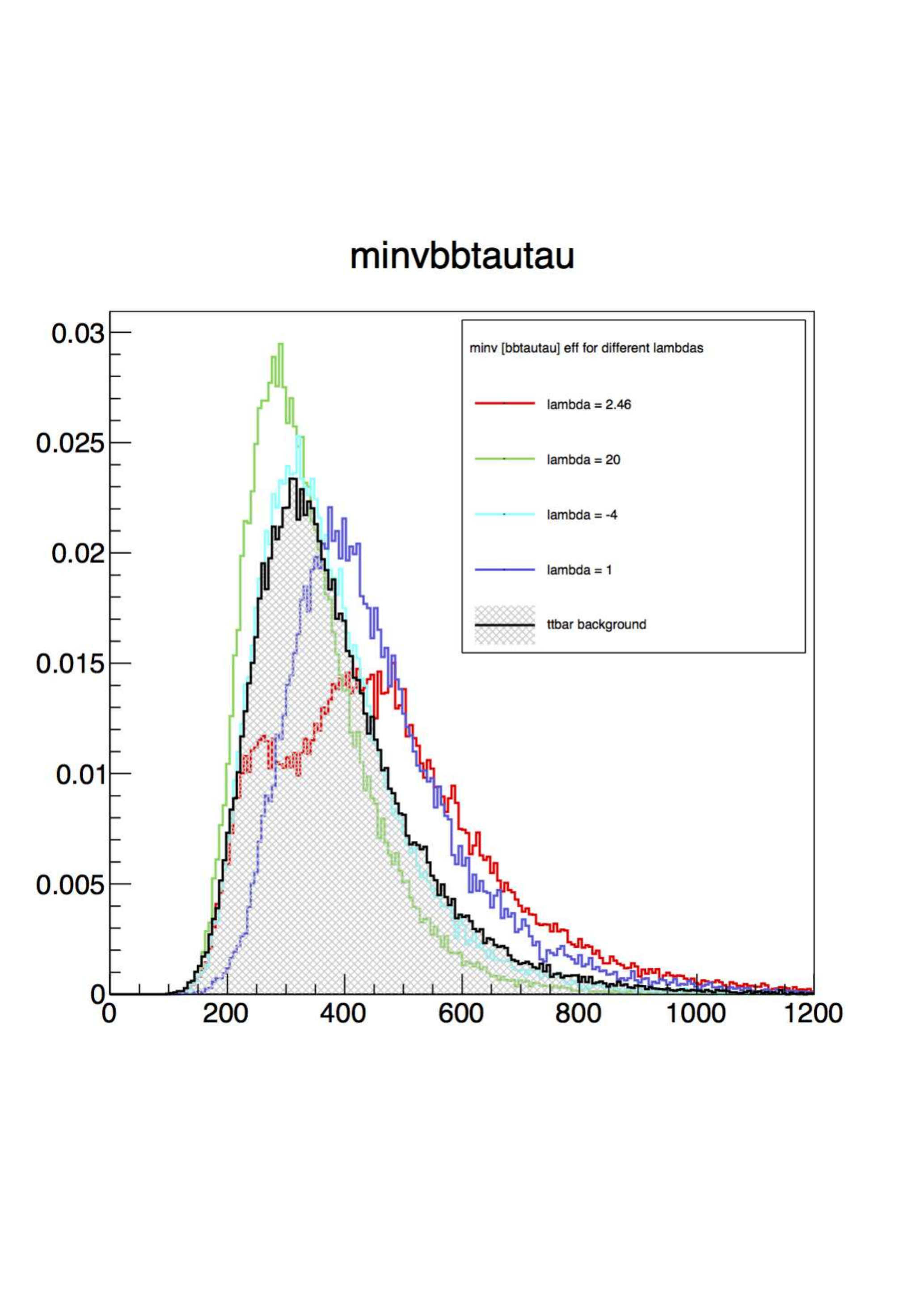}
\end{center}
\paragraph{Approximation colinéaire}
\begin{center}
	\includegraphics[scale=0.35]{figure4o.pdf}
	\includegraphics[scale=0.35]{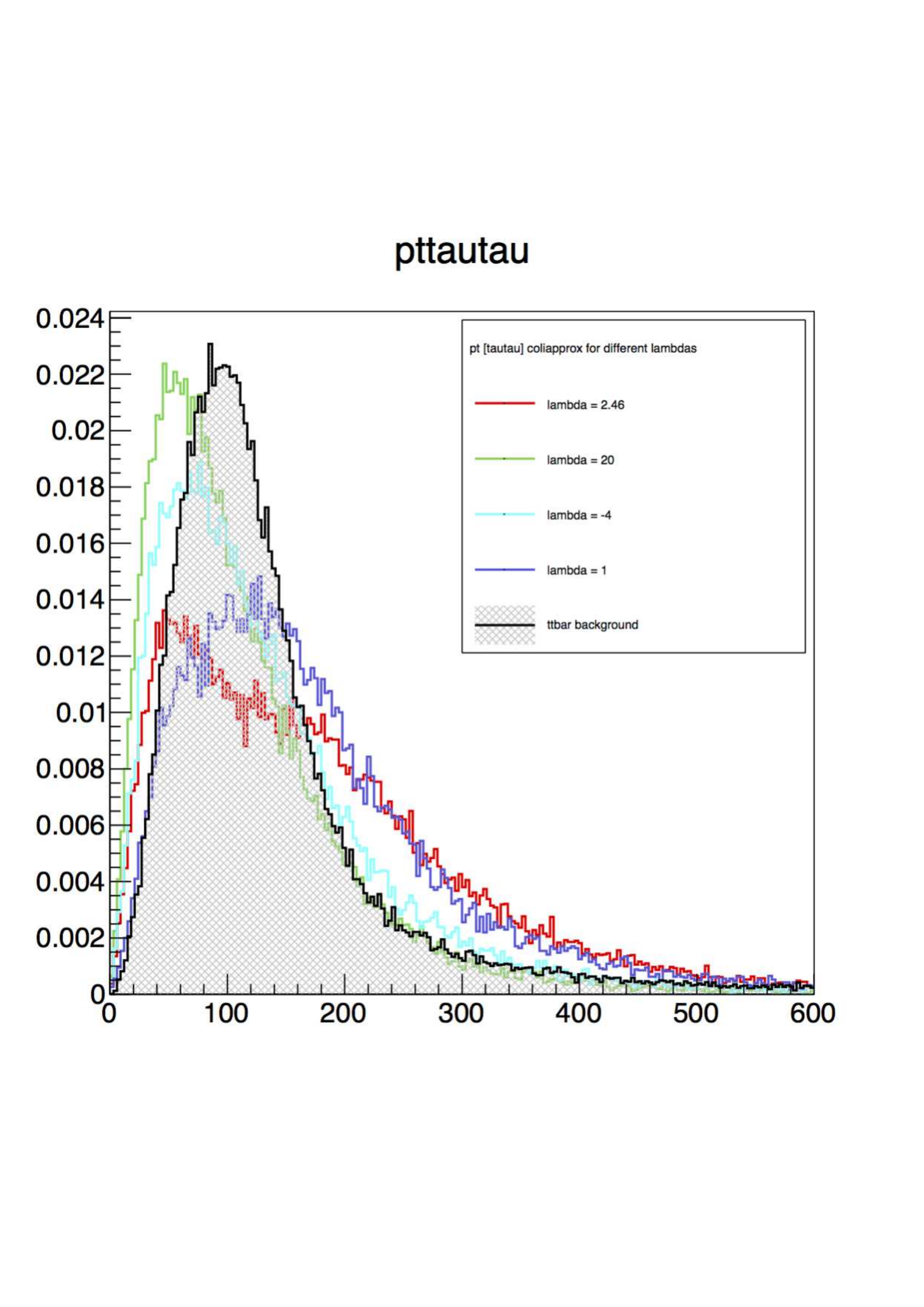}
\end{center}
\begin{center}
	\includegraphics[scale=0.35]{figure4u.pdf}
	\includegraphics[scale=0.35]{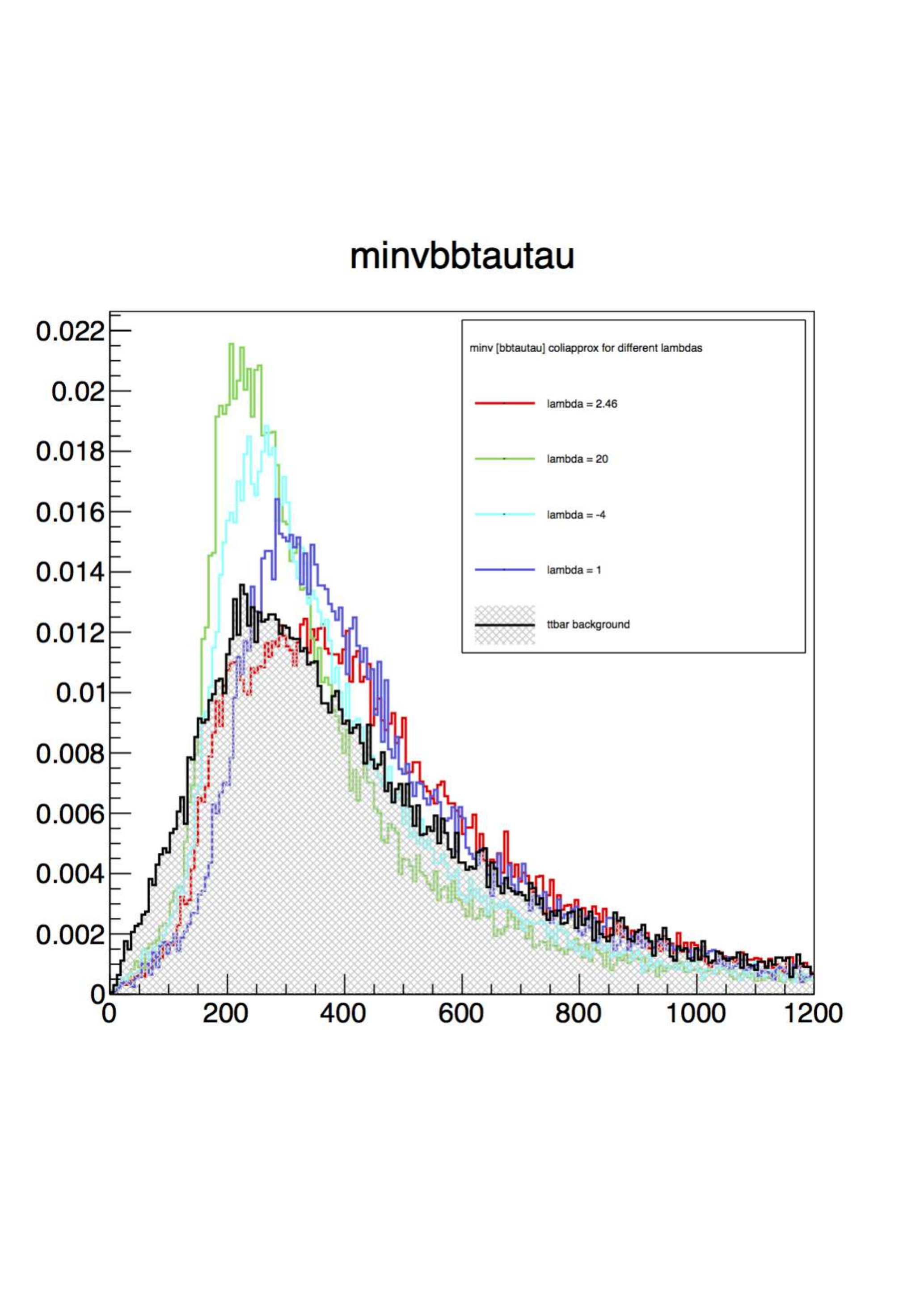}
\end{center}

\subsection{Étude spécifique de l'efficacité de reconstruction}
\paragraph{Méthode des quantités visibles}
\begin{center}
	\includegraphics[scale=0.35]{figure5a.pdf}
	\includegraphics[scale=0.35]{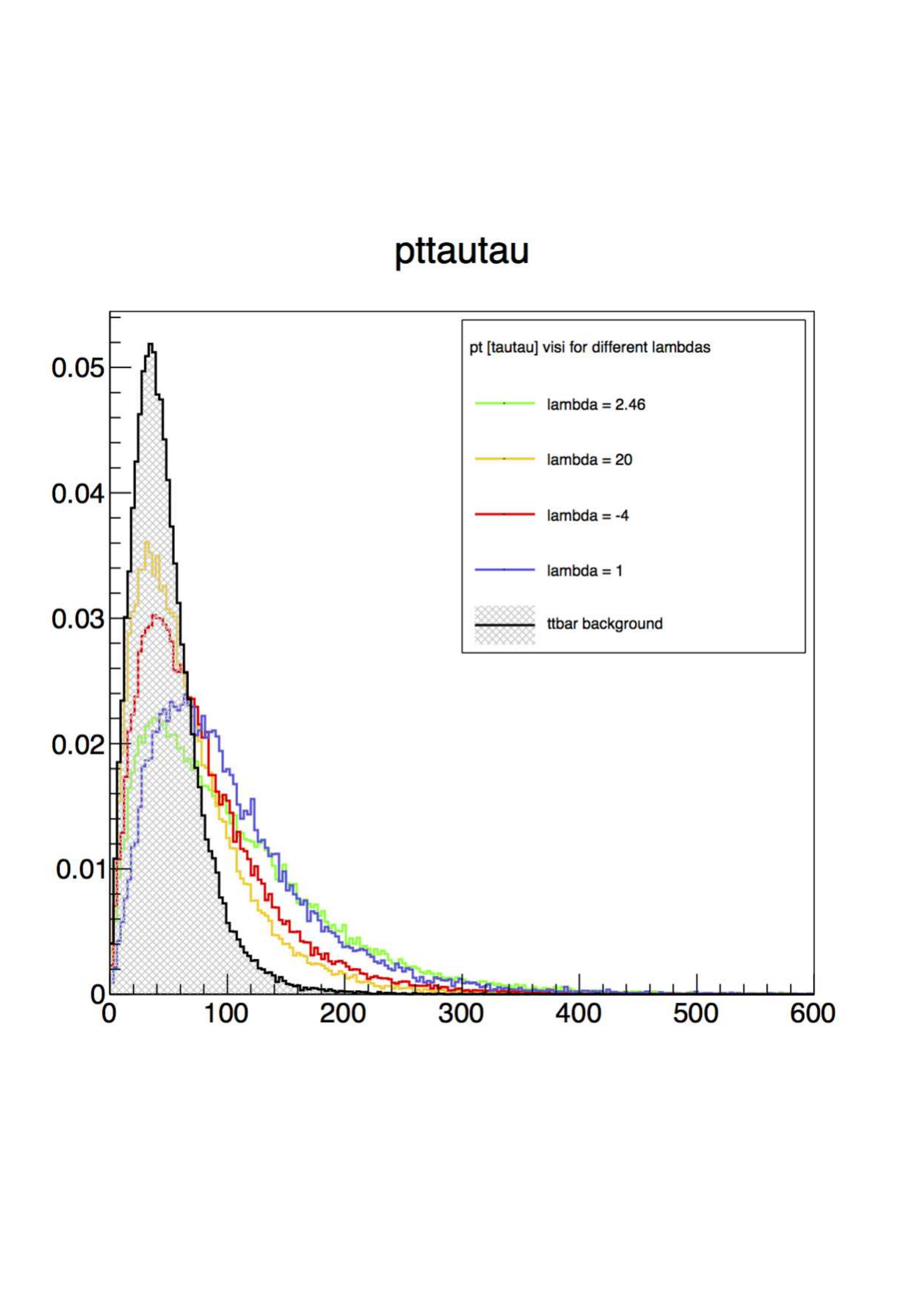}
\end{center}
\begin{center}
	\includegraphics[scale=0.35]{figure5g.pdf}
	\includegraphics[scale=0.35]{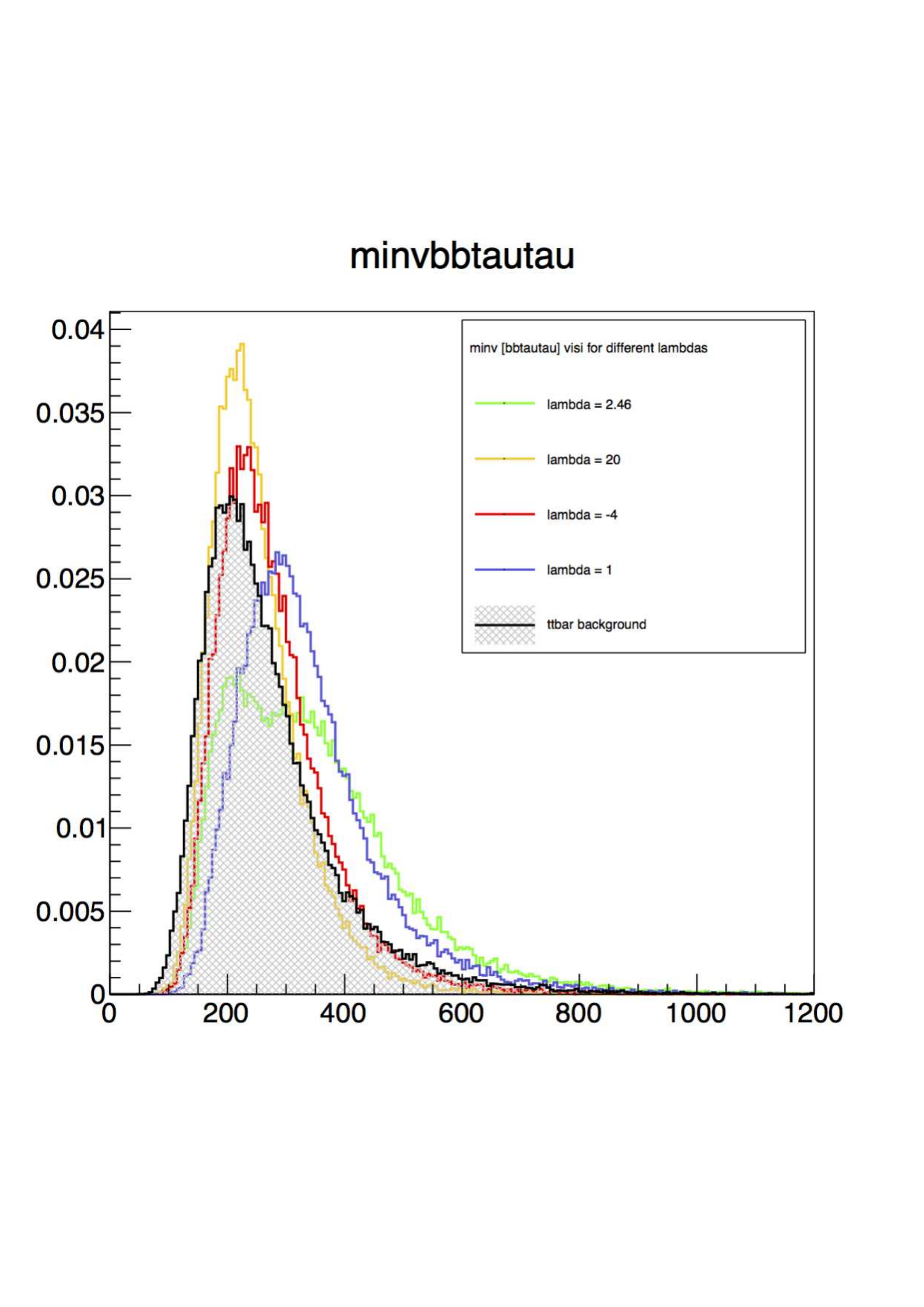}
\end{center}
\paragraph{Méthode des quantités effectives}
\begin{center}
	\includegraphics[scale=0.35]{figure5i.pdf}
	\includegraphics[scale=0.35]{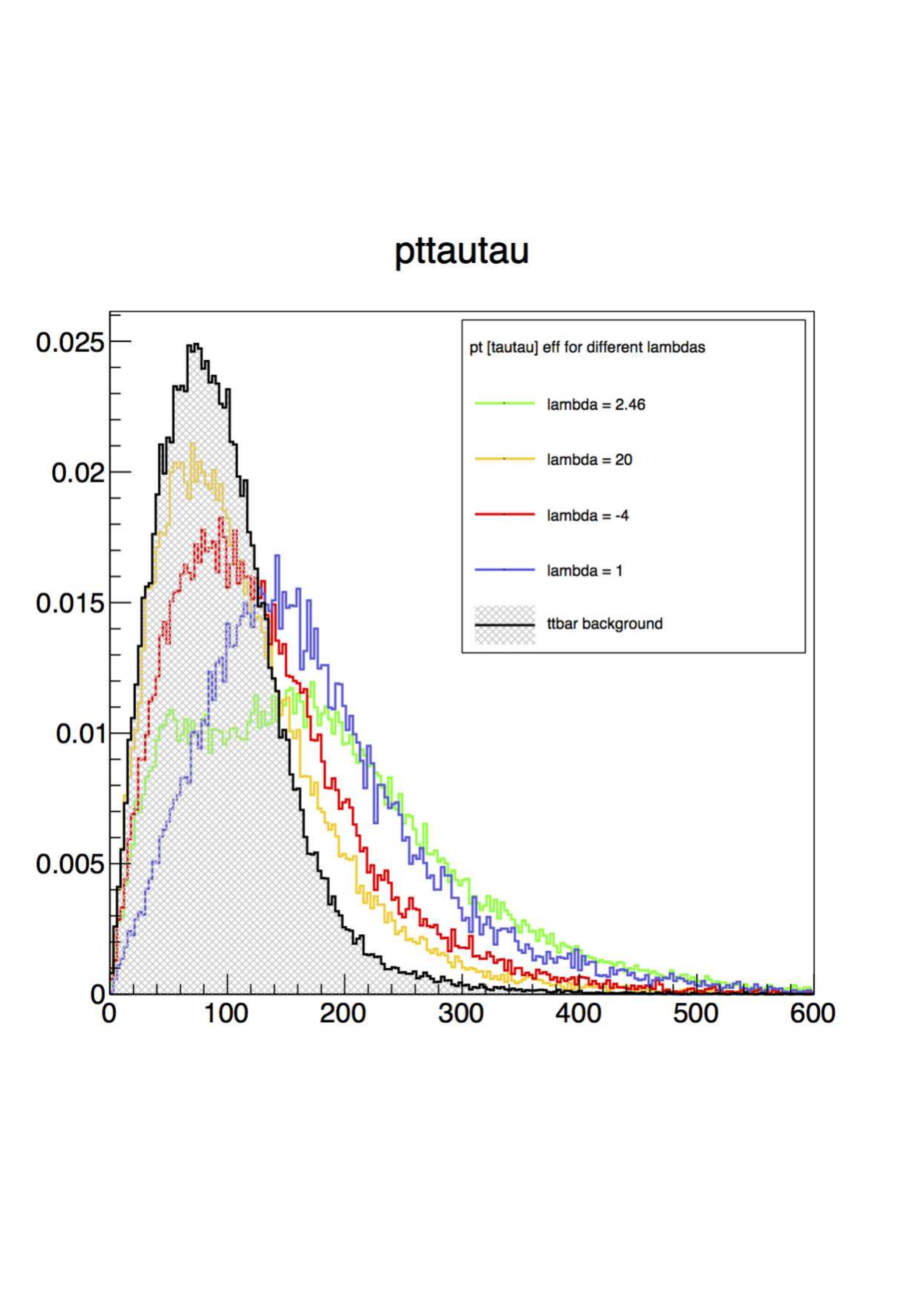}
\end{center}
\begin{center}
	\includegraphics[scale=0.35]{figure5o.pdf}
	\includegraphics[scale=0.35]{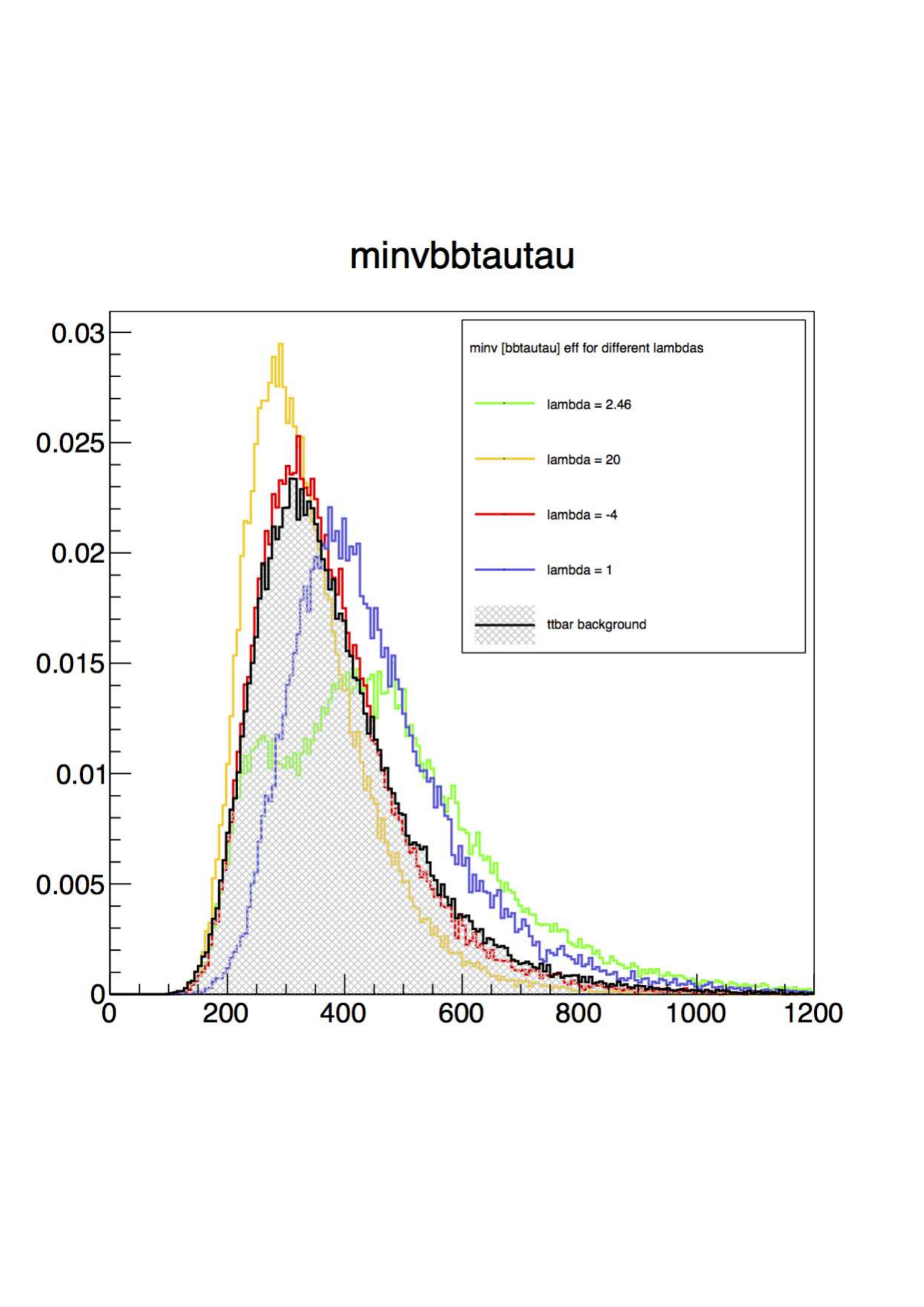}
\end{center}
\paragraph{Approximation colinéaire}
\begin{center}
	\includegraphics[scale=0.35]{figure5q.pdf}
	\includegraphics[scale=0.35]{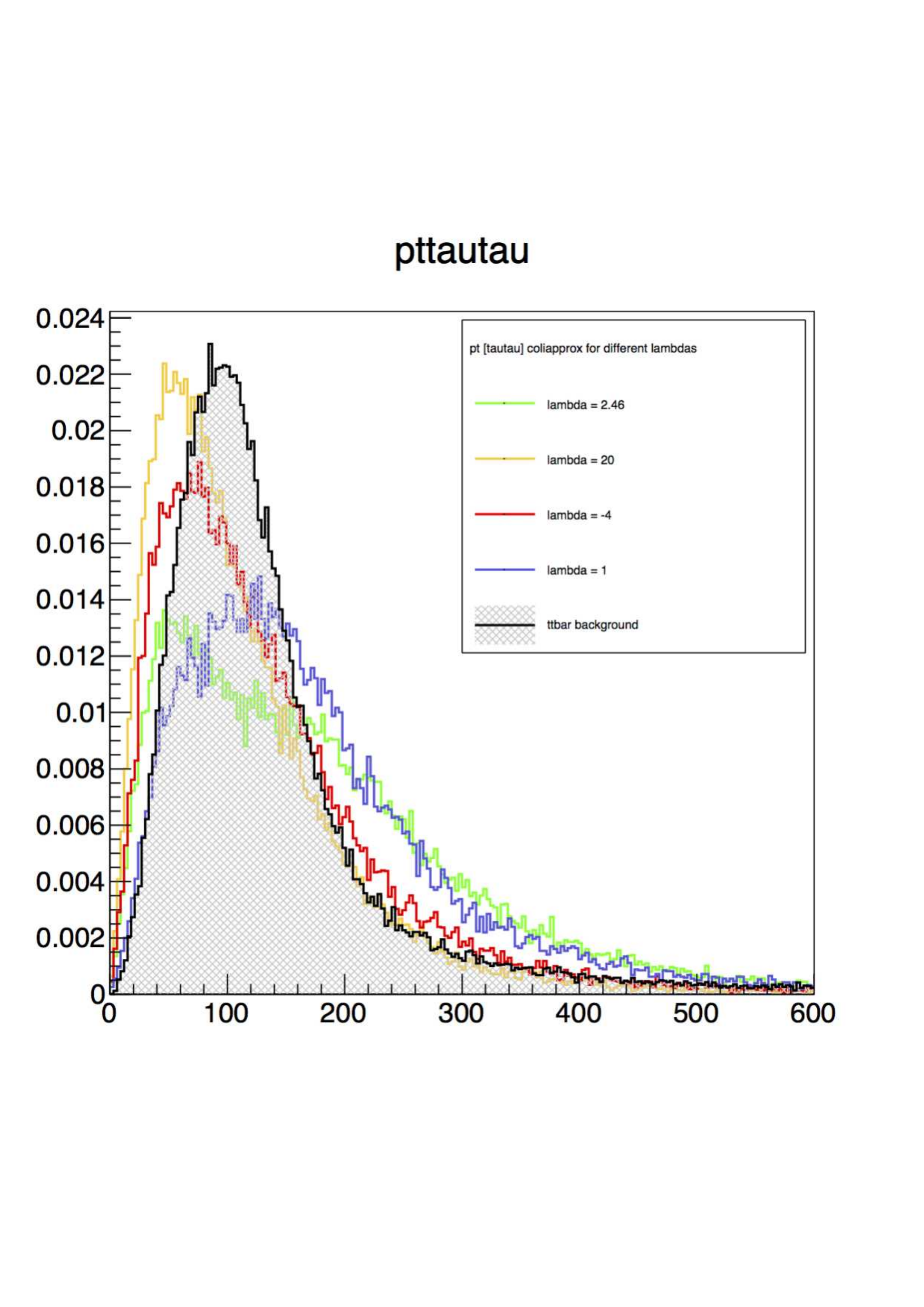}
\end{center}
\begin{center}
	\includegraphics[scale=0.35]{figure5w.pdf}
	\includegraphics[scale=0.35]{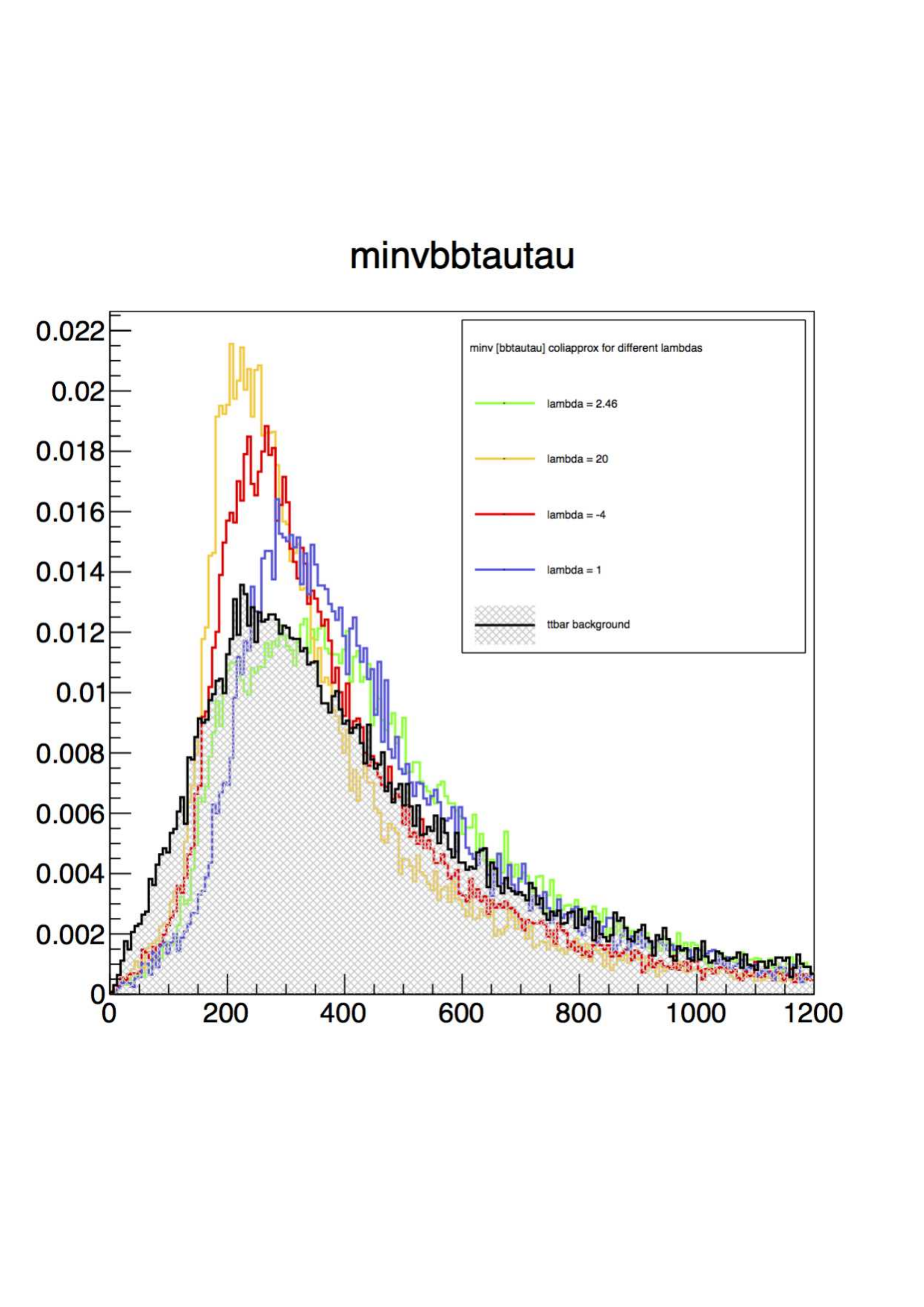}
\end{center}

\section{Paramétrisation des grandeurs cinétiques}
Nous avons ensuite essayé de paramétrer les distributions obtenues, afin d'obtenir une expression analytique au moins locale de la surface (locale en $\lambda$).
Pour pouvoir générer des fichiers pour davantage de valeurs de $\lambda$, il a fallu choisir une méthode rapide. La solution a été de produire des fichiers d'environ $100000$ collisions avec MadGraph, sans tenir compte des effets radiatifs, de la désintégration des $\tau$, etc ... Pour le reste, l'étude est menée exactement comme dans la première partie.
Dans un premier temps, les fichiers générés correspondaient aux valeurs de $\lambda$ suivantes : $\lambda=-15, -4, 0, 1, 2.46, 4, 10$.
\begin{figure}[!h]
	\centering
	\includegraphics[scale=0.37]{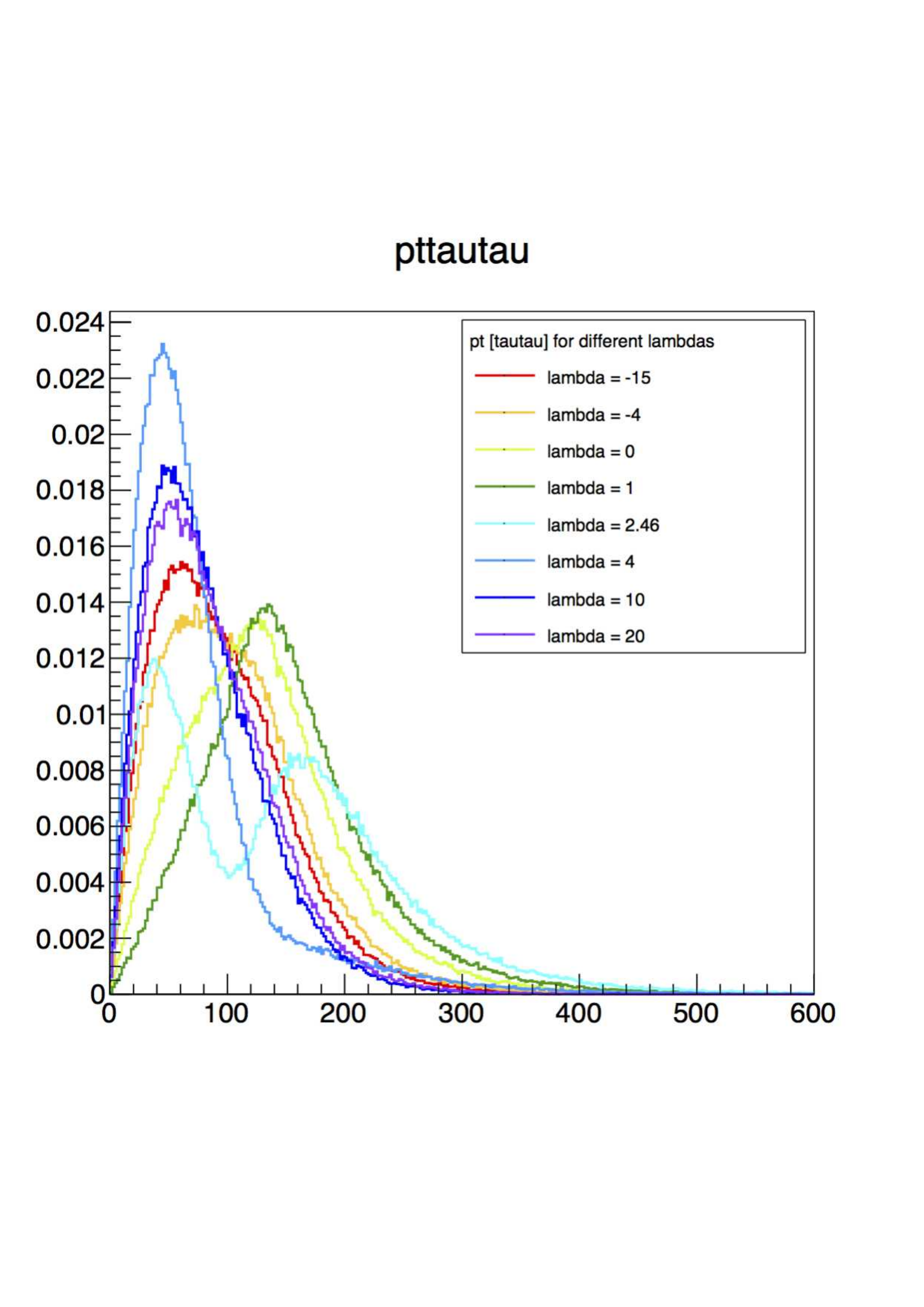}
	\includegraphics[scale=0.37]{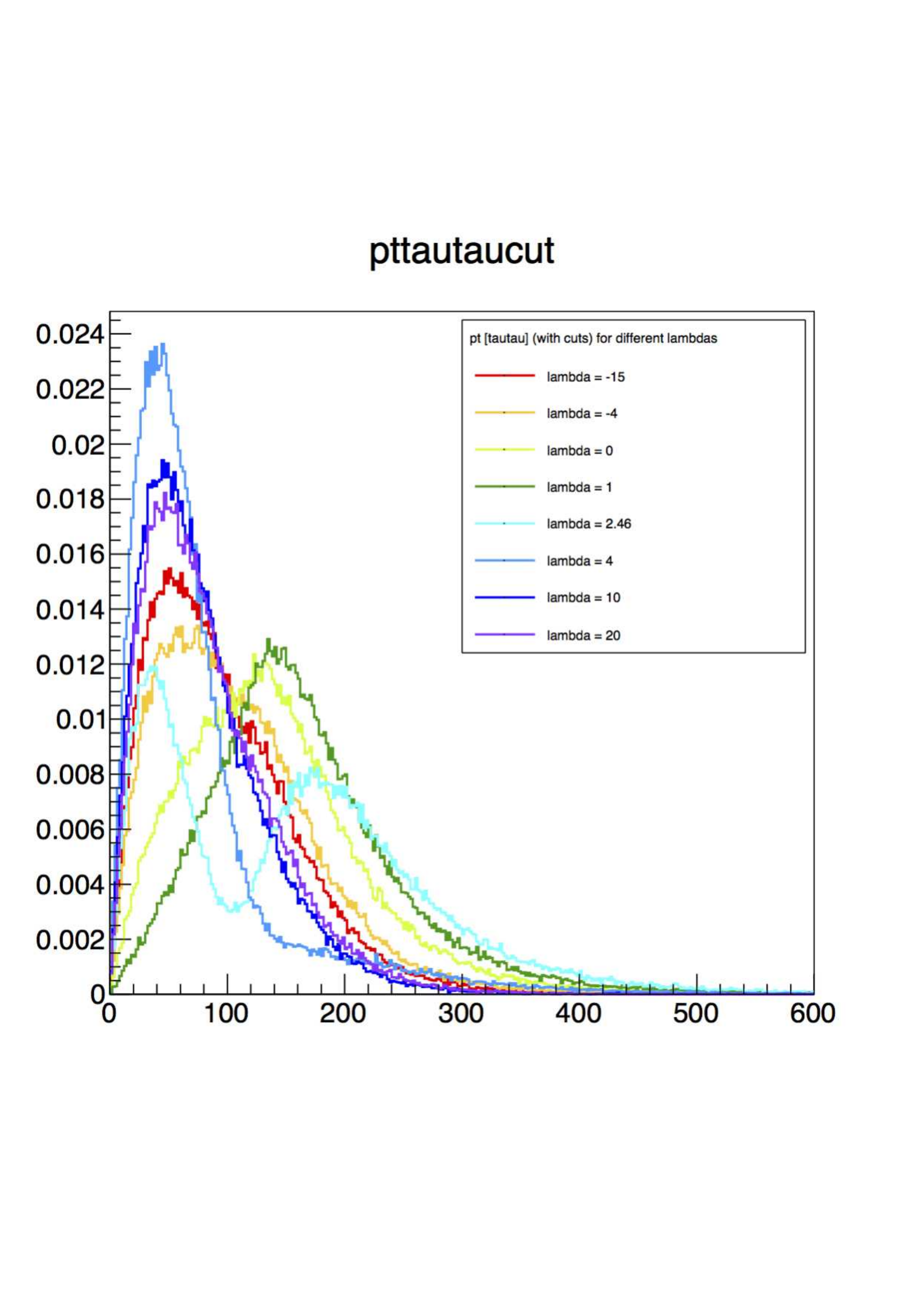}
	\caption{Superposition des histogrammes en impulsion transverse du système di-tau, pour différentes valeurs de $\lambda$. A gauche, tous les évènements ont été pris en compte, à droite, les coupures cinétiques de la première étude ont été appliquées.}
\end{figure}
\begin{figure}[!h]
	\centering
	\includegraphics[scale=0.37]{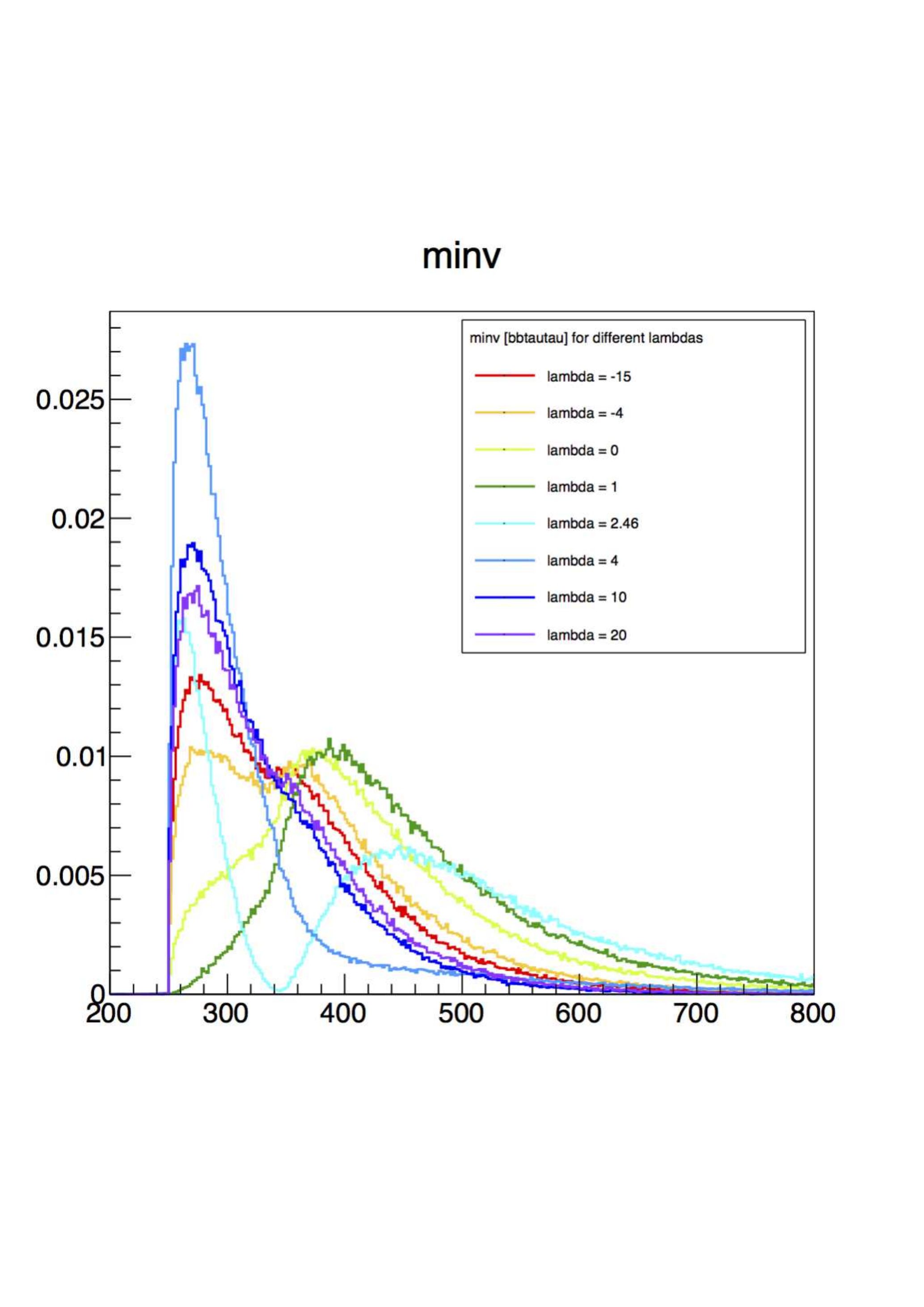}
	\includegraphics[scale=0.37]{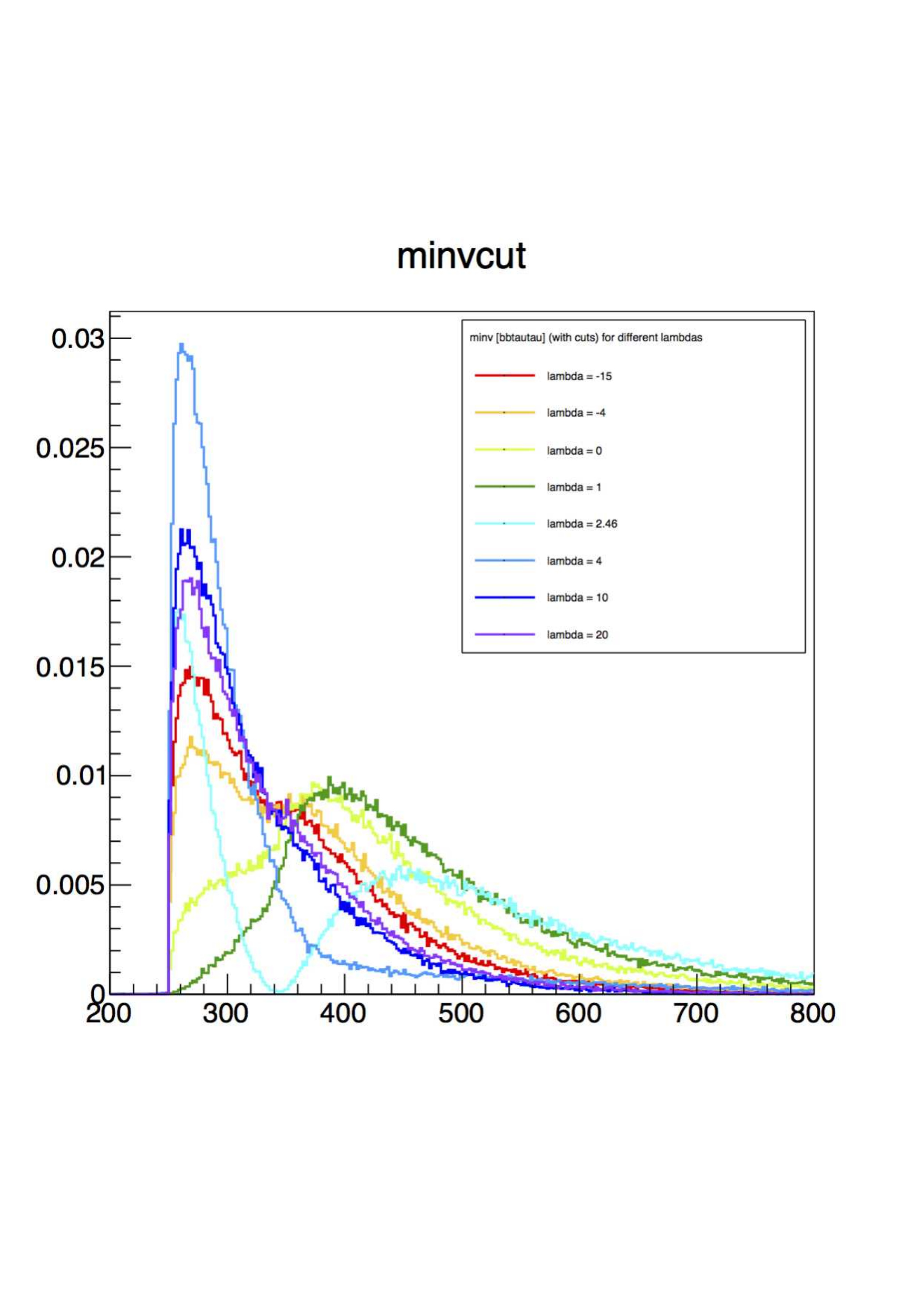}
	\caption{Superposition des histogrammes de masse invariante du système entier, pour différentes valeurs de $\lambda$. A gauche, tous les évènements ont été pris en compte, à droite, les coupures cinétiques de la première étude ont été appliquées.}
\end{figure}

Ces diagrammes semblent indiquer que les distributions que nous avons jugées dignes d'intérêt varient peu lorsque $\lambda$ s'éloigne de la région $[0;4]$. Physiquement, cela signifie que la contribution du diagramme de Feynman comportant la boucle top triangulaire devient très importante devant la contribution du diagramme avec la boucle carrée. Ce même argument fait que l'on s'attend également à trouver des distributions a peu près identiques pour des très grandes valeurs de $\lambda$ (plus grand que 10) ou des très petites valeurs (en dessous de $\frac{\lambda}{\lambda^{SM}}=-4$). Cela semble être le cas au vu des courbes tracées. \\\\
Par ailleurs, la forme des distributions varie énormément dans la région $[0;4]$, ce qui rend la paramétrisation difficile dans cette zone de valeurs pour $\lambda$. Nous reviendrons ensuite à l'étude de la surface "$Energie\times\lambda$" dans cette zone, concentrons nous pour l'instant sur les régions périphériques.

\subsection{Distribution de l'impulsion du système $[\tau,\tau]$}
Pour la distribution en impulsion, nous avons choisi une fonction à 7 paramètres donnée par : 
$$f:x\mapsto\frac{x*y2}{1+exp(\frac{x-y0}{y1})}+\frac{x*y5+y6}{1+exp(\frac{x-y3}{y4})}$$ pour laquelle nous avons obtenu les valeurs suivantes avec ROOT fit : 
\begin{center}
	\begin{tabular}{|c|c|c|c|c|c|c|c|c|}
	\hline
	$\lambda$ & -15 & -4 & 0 & 1 & 2.46 & 4 & 10 & 20 \\
	\hline
	\hline
	y0 & $-21\pm4$ & $-51\pm6$ & $83 \pm 3$ & $93\pm3$ & $-693\pm4$ & $40.7\pm0.7$ & $-4\pm3$ & $-7\pm2$ \\
	y1 & $43.5\pm0.2$ & $64\pm1$ & $49.1\pm0.2$ & $52.8\pm0.2$ & $62.7\pm0.1$ & $20.5\pm0.2$ & $36.8\pm0.2$ & $38.2\pm0.2$ \\
	y2 & $240\pm20$ & $220\pm20$ & $46\pm3$ & $45\pm2$ & $(92\pm7)\cdot10^{5}$ & $166\pm3$ & $280\pm10$ & $250\pm10$ \\
	y3 & $170\pm3$ & $223\pm5$ & $97\pm2$ & $85\pm1$ & $115.3\pm0.5$ & $200\pm6$ & $330\pm30$ & $170\pm3$ \\
	y4 & $18\pm2$ & $62\pm1$ & $143\pm1$ & $21.8\pm0.6$ & $22.3\pm0.3$ & $65\pm1$ & $50\pm6$ & $14\pm 2$ \\
	y5 & $2.3\pm0.2$ & $-3.1\pm0.4$ & $-11\pm1$ & $-26\pm1$ & $-38.0\pm0.7$ & $1.4\pm0.1$ & $0.08\pm0.02$ & $1.0\pm0.1$ \\
	y6 & $-27\pm7$ & $3\pm7$ & $1\pm4$ & $-10\pm1$ & $-9\pm8$ & $40\pm10$ & $-30\pm7$ & $-33\pm8$ \\
	\hline	
	\end{tabular}
\end{center}
on peut alors tracer l'évolution des différents paramètres en fonction de $\lambda$.
\begin{figure}[!h]
	\centering
	\includegraphics[scale=0.6]{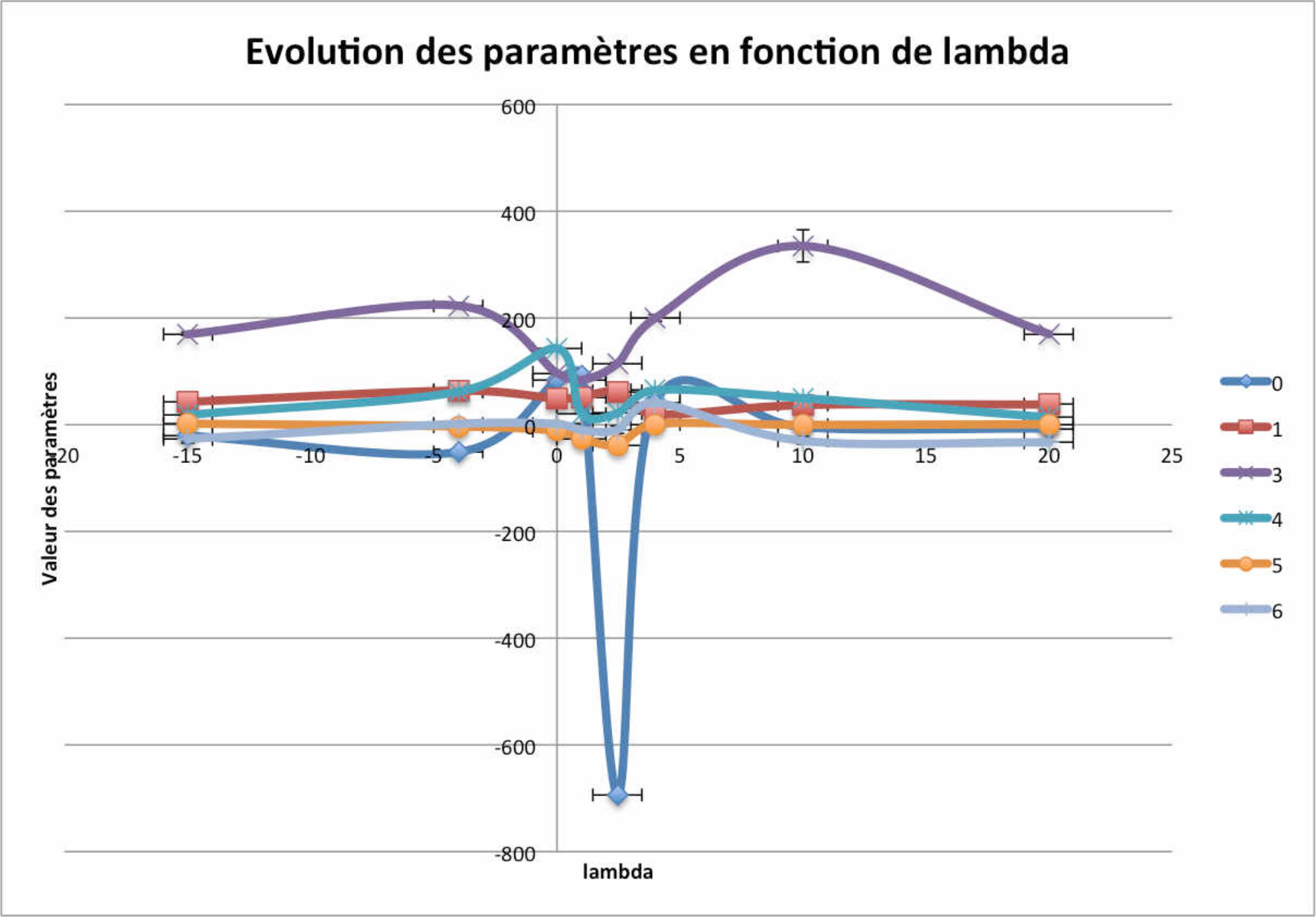}
	\caption{Sur ce graphique nous avons représenté seulement six des sept paramètres, avec les marges d'erreur, qui sont la plupart du temps "cachées" par l'épaisseur de la courbe. On voit que le comportement varie peu aux extrémités du domaine d'étude, et beaucoup au voisinage de l'unité, ce qui signifie que la famille de fonctions utilisées pour approcher les courbes n'est pas très adaptée à la description lorsque $\lambda\approx1$.}
\end{figure}

\paragraph{Étude pour $\lambda\in[-15,-4]$ :}
	Les différents paramètres se comportent régulièrement dans cette plage de valeurs, on peut donc avoir une approximation de la distribution en impulsion transverse du système di-tau en réalisant une homotopie linéaire de l'une des courbes à l'autre (on suit, pour chaque paramètre, la droite qui relie la valeur en $\lambda=-15$ à la valeur en $\lambda=-4$). On a ainsi l'équation locale (en $\lambda$) d'un surface qui approxime la "vraie" surface.
\begin{figure}[!h]
	\centering
	\includegraphics[scale=0.6]{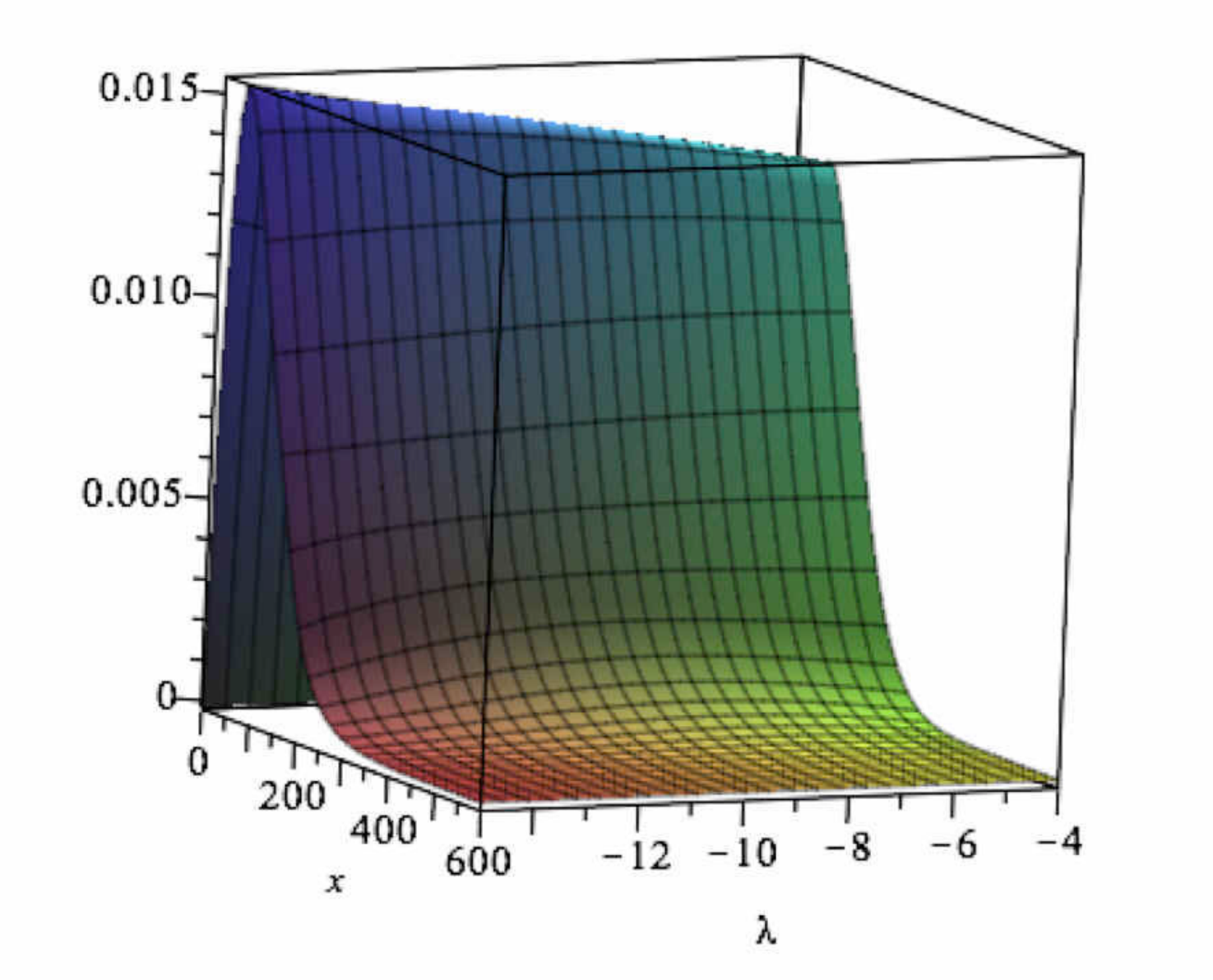}
	\caption{Surface obtenue grâce à l'approximation linéaire pour la distribution en impulsion transverse du système di-tau et pour $\lambda\in[-15,-4]$}
\end{figure}

\paragraph{Étude pour $\lambda\in[10,20]$ :}
De la même manière, la forme de fonctions choisie permet d'avoir une surface qui approche la surface réelle "distribution en impulsion / lambda" pour $\lambda\in[10,20]$.
\begin{figure}[!h]
	\centering
	\includegraphics[scale=0.6]{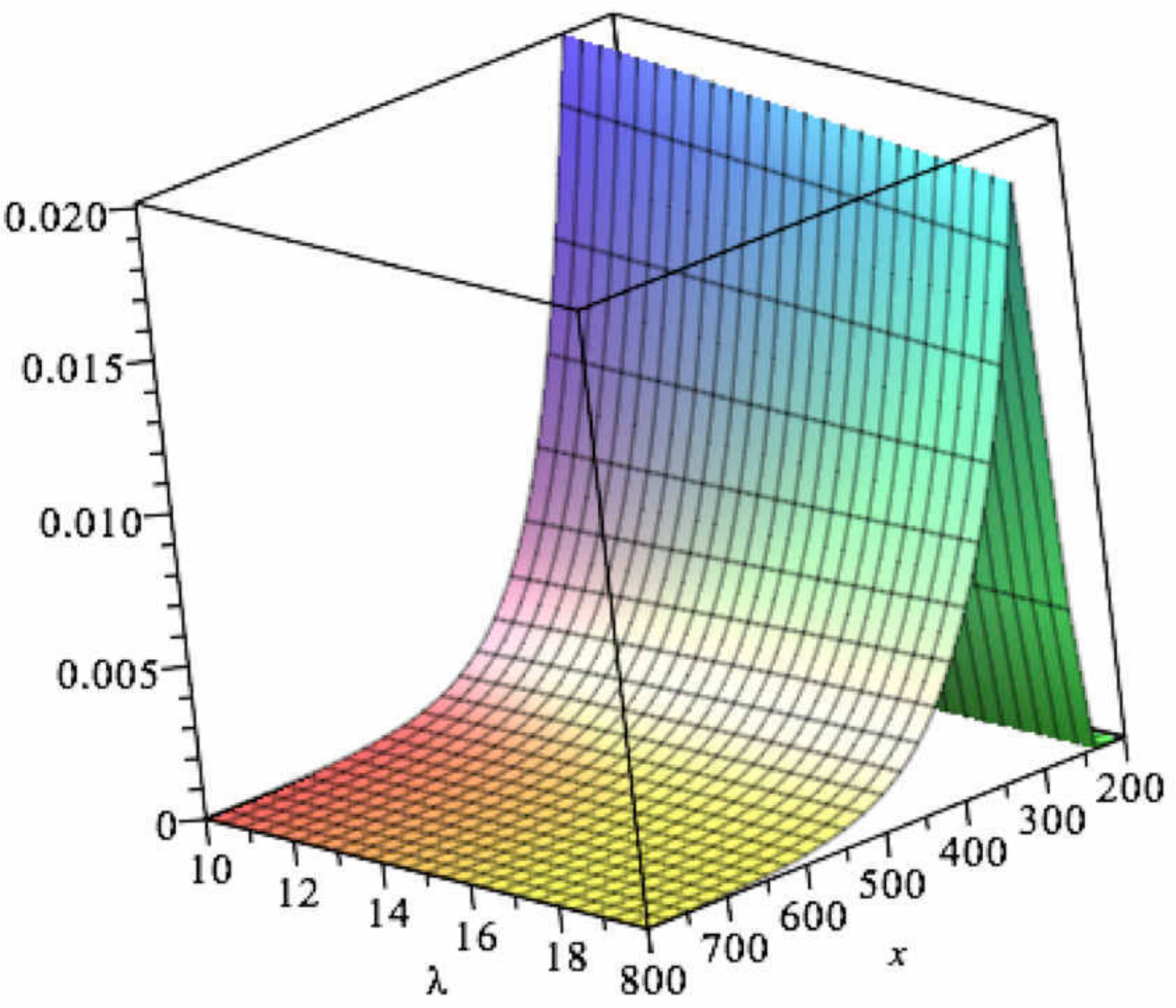}
	\caption{Surface obtenue grâce à l'approximation linéaire pour la distribution en impulsion transverse du système di-tau et pour $\lambda\in[10,20]$}
\end{figure}

\subsection{Distribution de masse invariante du système total}
Pour la distribution de masse invariante, nous avons choisi une fonction à 8 paramètres de la forme :
$$f:x\mapsto(1+erf(\frac{x-y0}{y1}))\frac{y2}{1+exp(\frac{x-y0}{y3})}+(1+erf(\frac{x-y4}{y5}))\frac{y6}{1+exp(\frac{x-y4}{y7})}$$
pour laquelle nous avons obtenu les valeurs suivantes par approximation avec ROOT Fit :
\begin{center}
	\begin{tabular}{|c|c|c|c|c|}
		\hline
		lambda & -15 & -4 & 0 & 1\\
		\hline
		\hline
		y0 & $252.7\pm0.1$ & $253.8\pm0.1$ & $254.9\pm0.3$ & $291.2\pm0.8$ \\
		y1 & $3.8\pm0.1$ & $3.9\pm0.1$ & $5.8\pm0.4$ & $30.0\pm0.1$ \\
		y2 & $4200\pm20$ & $3160\pm20$ & $1200\pm20$ & $730\pm30$ \\
		y3 & $87.1\pm0.3$ & $103.3\pm0.4$ & $154\pm1$ & $180\pm2$ \\
		y4 & $403\pm4$ & $373\pm3$ & $351\pm1$ & $358.1\pm0.6$ \\
		y5 & $78\pm3$ & $62\pm2$ & $49.6\pm0.1$ & $35\pm1$\\
		y6 & $620\pm30$ & $1300\pm30$ & $2540\pm30$ & $2670\pm40$ \\
		y7 & $21\pm2$ & $38\pm2$ & $75.9\pm0.1$ & $95.8\pm0.8$ \\
		\hline 
	\end{tabular}
\end{center}
et 
\begin{center}
	\begin{tabular}{|c|c|c|c|c|}
		\hline
		lambda & 2.46 & 4 & 10 & 20 \\
		\hline
		\hline
		y0 & $250\pm160$ & $252.98\pm0.06$ & $252.45\pm0.07$ & $252.66\pm0.08$\\
		y1 & $4\pm10$ & $4.09\pm0.07$ & $3.63\pm0.08$ & $3.72\pm0.09$\\
		y2 & $6000\pm3000$ & $10630\pm50$ & $6300\pm40$ & $5660\pm50$\\
		y3 & $20\pm40$ & $34.4\pm0.2$ & $60.6\pm0.7$ & $62\pm2$\\
		y4 & $390\pm80$ & $419\pm3$ & $350\pm2$ & $342\pm3$\\
		y5 & $40\pm10$ & $93\pm3$ & $75\pm4$ & $55\pm2$ \\
		y6 & $2000\pm3000$ & $314\pm 4$ & $120\pm30$ & $491\pm90$\\
		y7 & $150\pm100$ & $159\pm2$  & $122\pm7$ & $91\pm3$\\
		\hline 
	\end{tabular}
\end{center}
\begin{figure}[!h]
	\centering
	\includegraphics[scale=0.65]{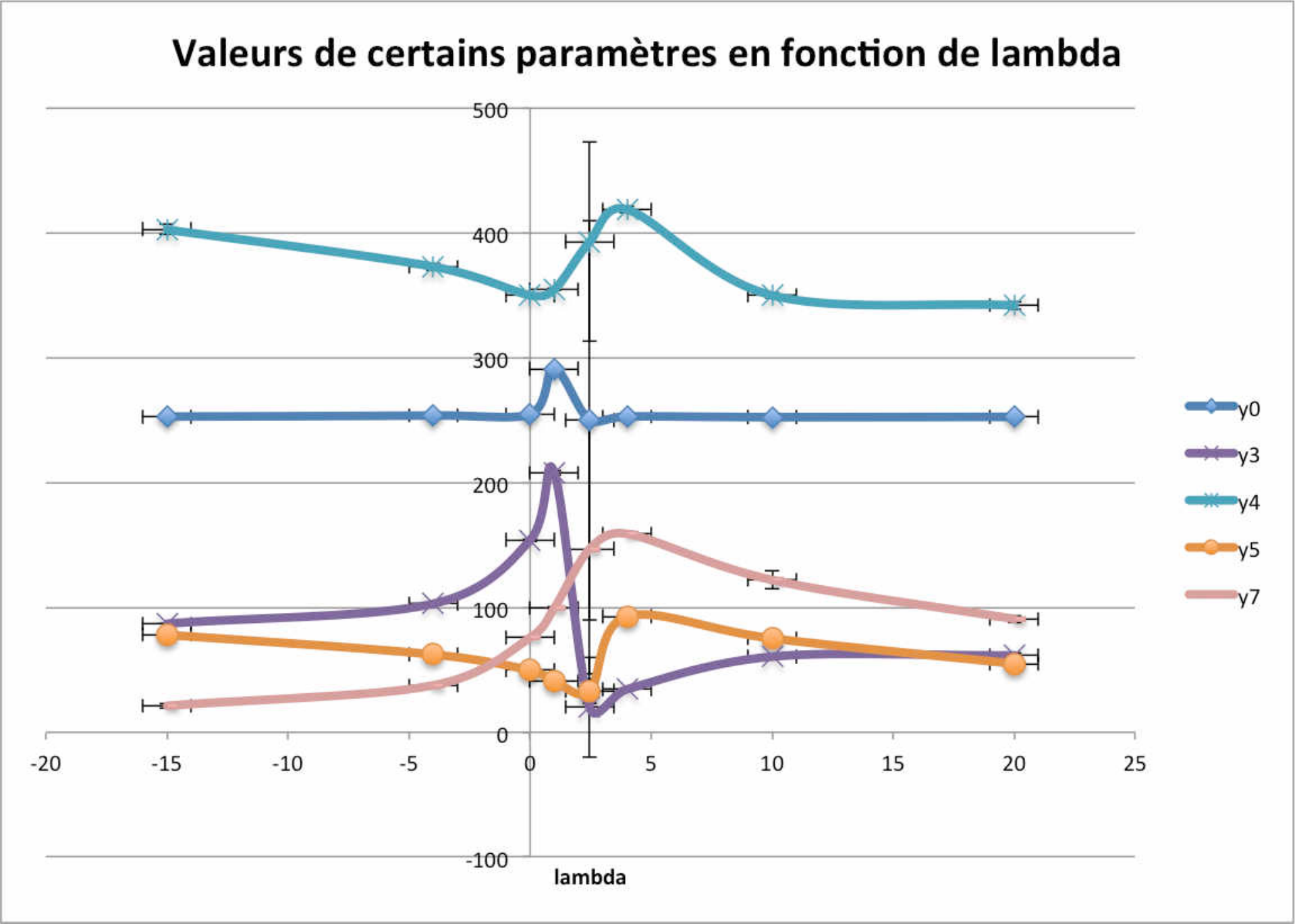}
	\caption{Valeurs des paramètres y0, y3, y4, y5 et y7 pour les différentes valeurs de $\lambda$ étudiées. On a placé les barres d'erreurs.}
\end{figure}
Les remarques formulées pour l'étude des distributions de l'impulsion transverse tiennent encore ici. Donnons seulement le "résultat", c'est-à-dire la forme de la surface obtenue dans chacun des cas.
\begin{figure}[!h]
	\centering
	\includegraphics[scale=0.6]{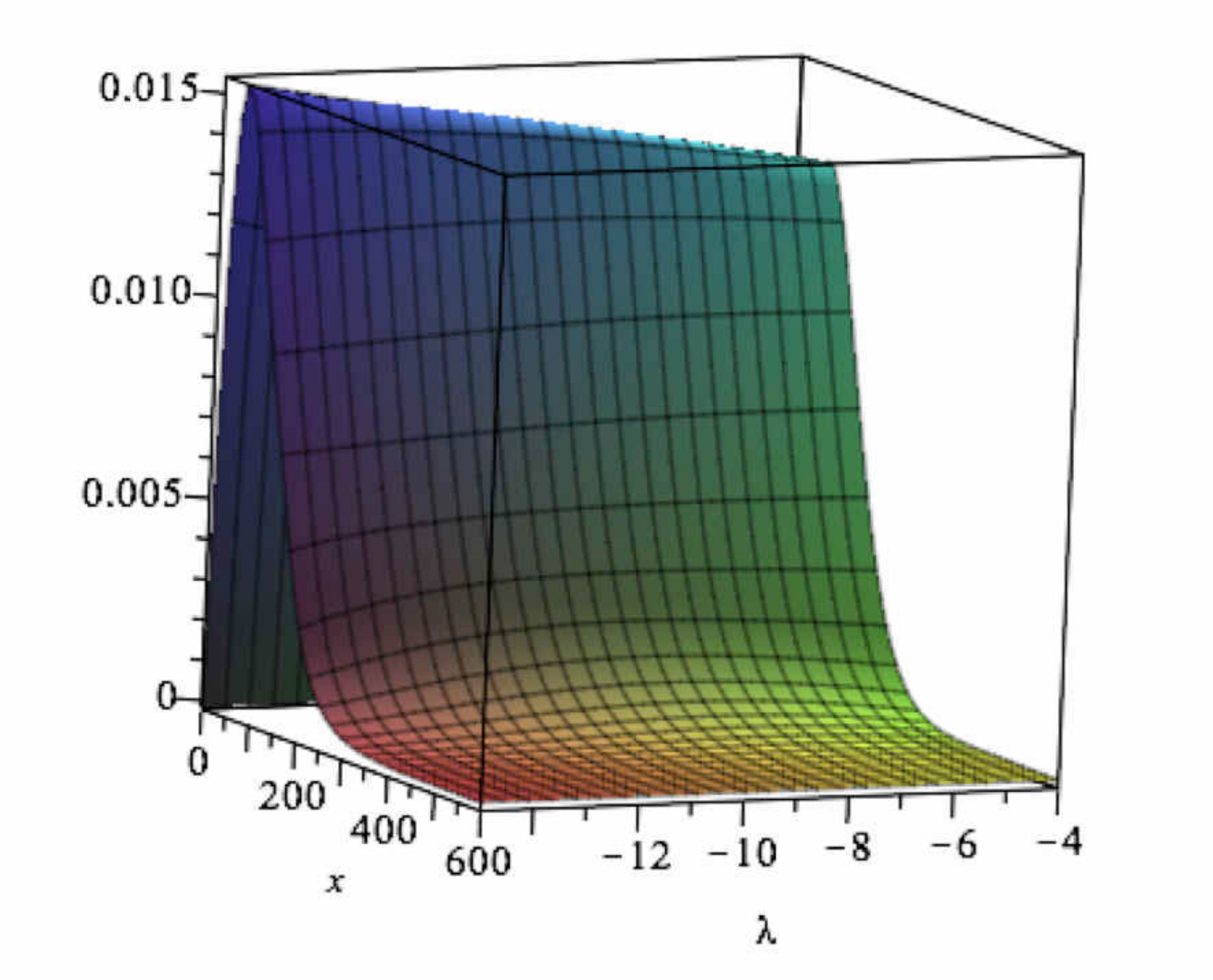}
	\includegraphics[scale=0.6]{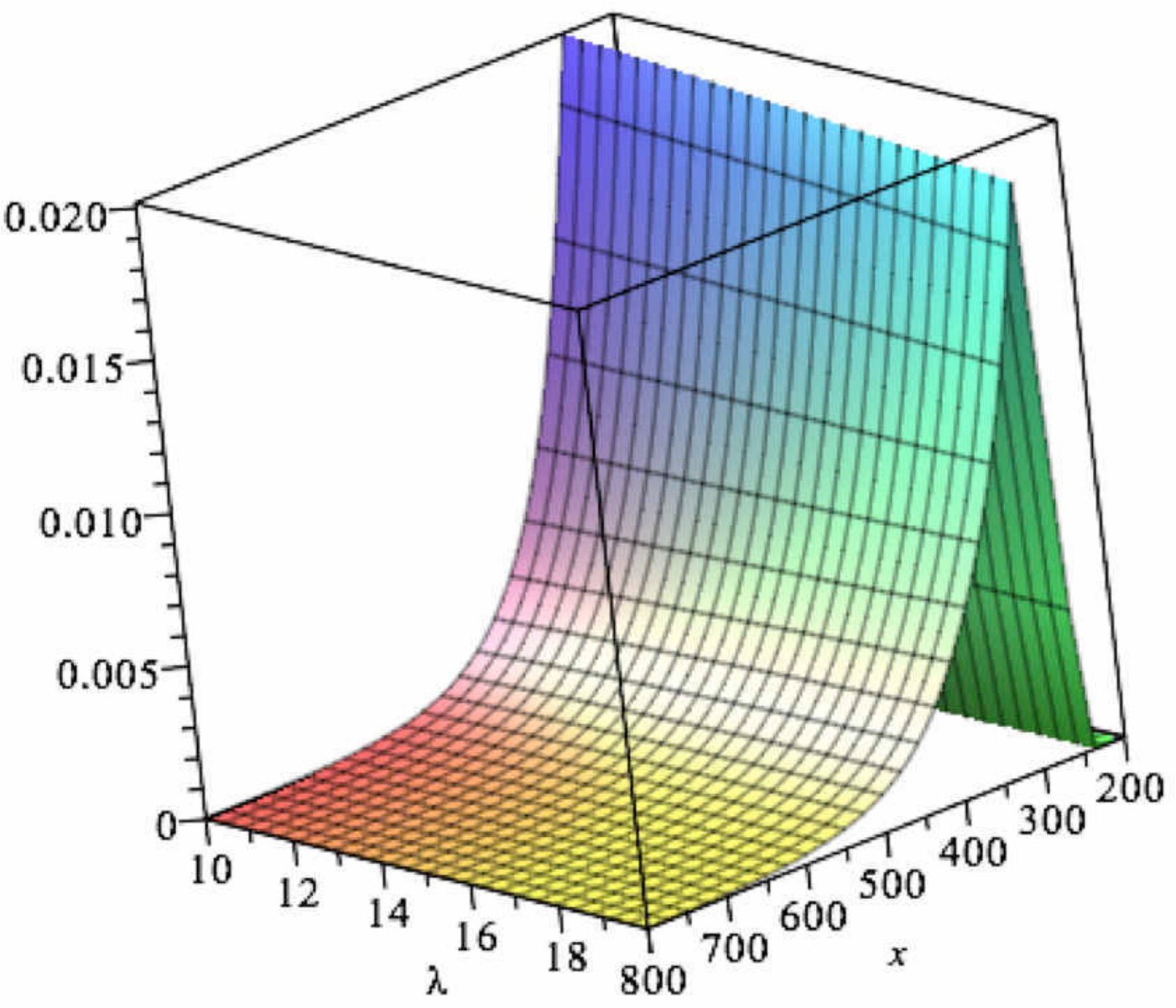}
	\caption{Surfaces obtenues grâce à l'approximation linéaire pour la distribution en masse invariante du système entier et pour $\lambda\in[-15,-4]$ ainsi que $\lambda\in[10,20]$}
\end{figure}

\subsection{Distributions au voisinage de $\frac{\lambda}{\lambda^{SM}}=1$}
Dans le cadre de la théorie effective en dimension 6 que nous étudions, il est naturel de vouloir s'intéresser au comportement des constantes fondamentales pour des valeurs proches de celles du modèle standard. C'est pourquoi on s'intéresse au distributions au voisinage de $\frac{\lambda}{\lambda^{SM}}=1$. Cependant, comme on l'a vu dans les paragraphes précédents, c'est dans cette région que le comportement est difficile a prévoir, en raison de l'interférence des deux diagrammes de Feynman intervenant au premier ordre. De nouveaux fichiers de collisions ont été produits, pour les valeurs $\lambda= 1.5,\ 2,\ 3,\ 3.5$. La superposition des tous les histogrammes donne les diagrammes suivants.
\begin{figure}[!h]
	\centering
	\includegraphics[scale=0.37]{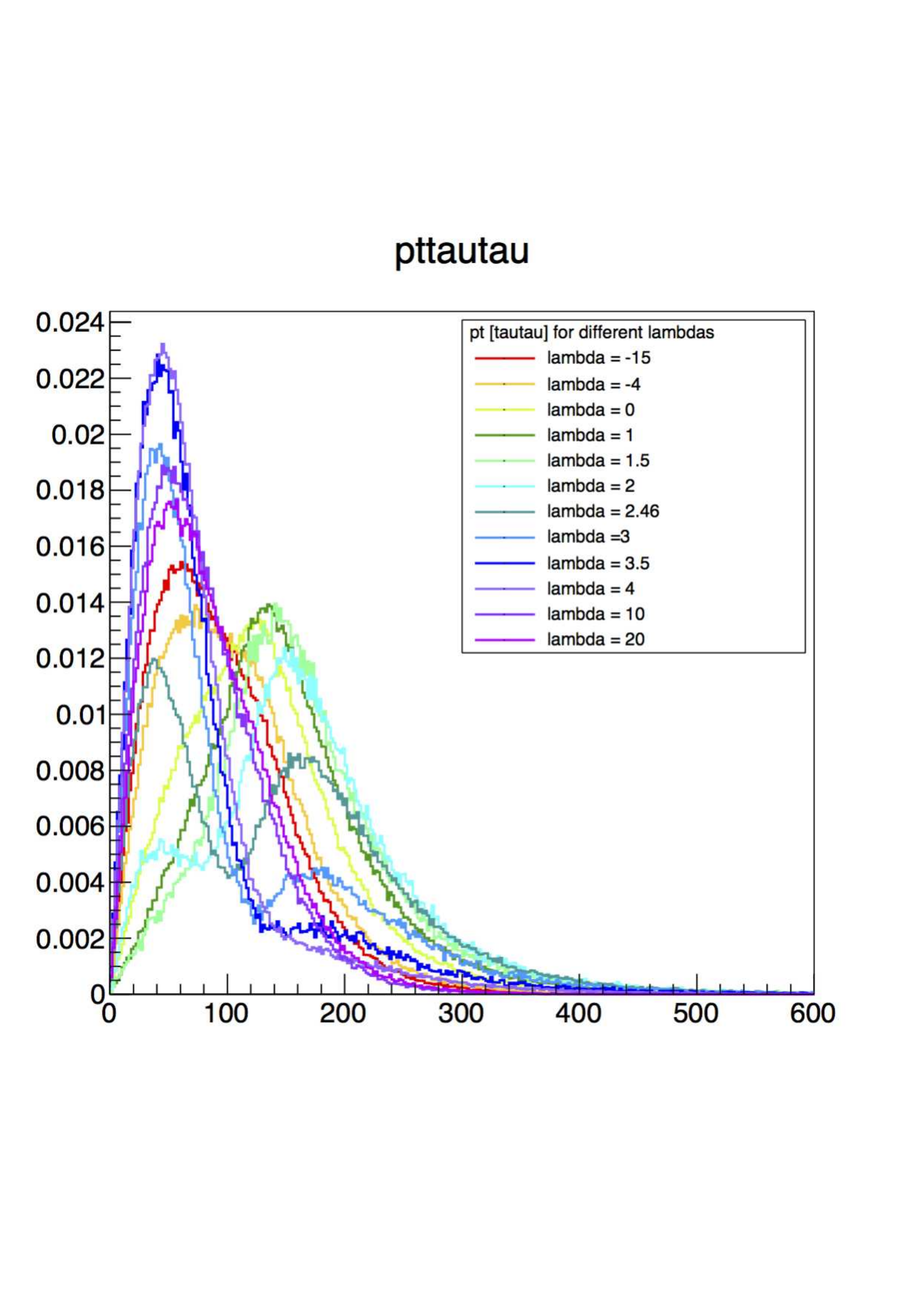}
	\includegraphics[scale=0.37]{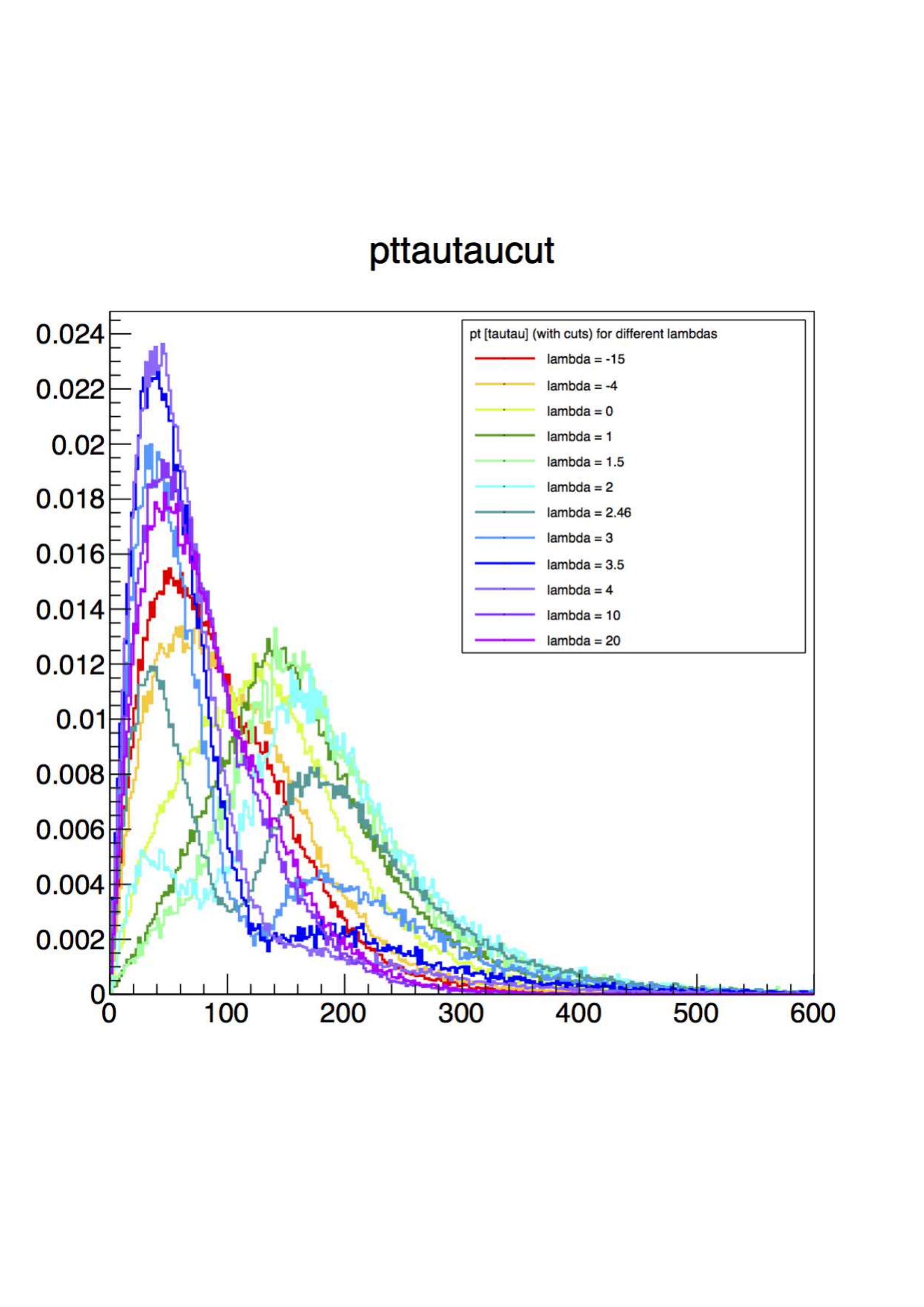}
	\includegraphics[scale=0.37]{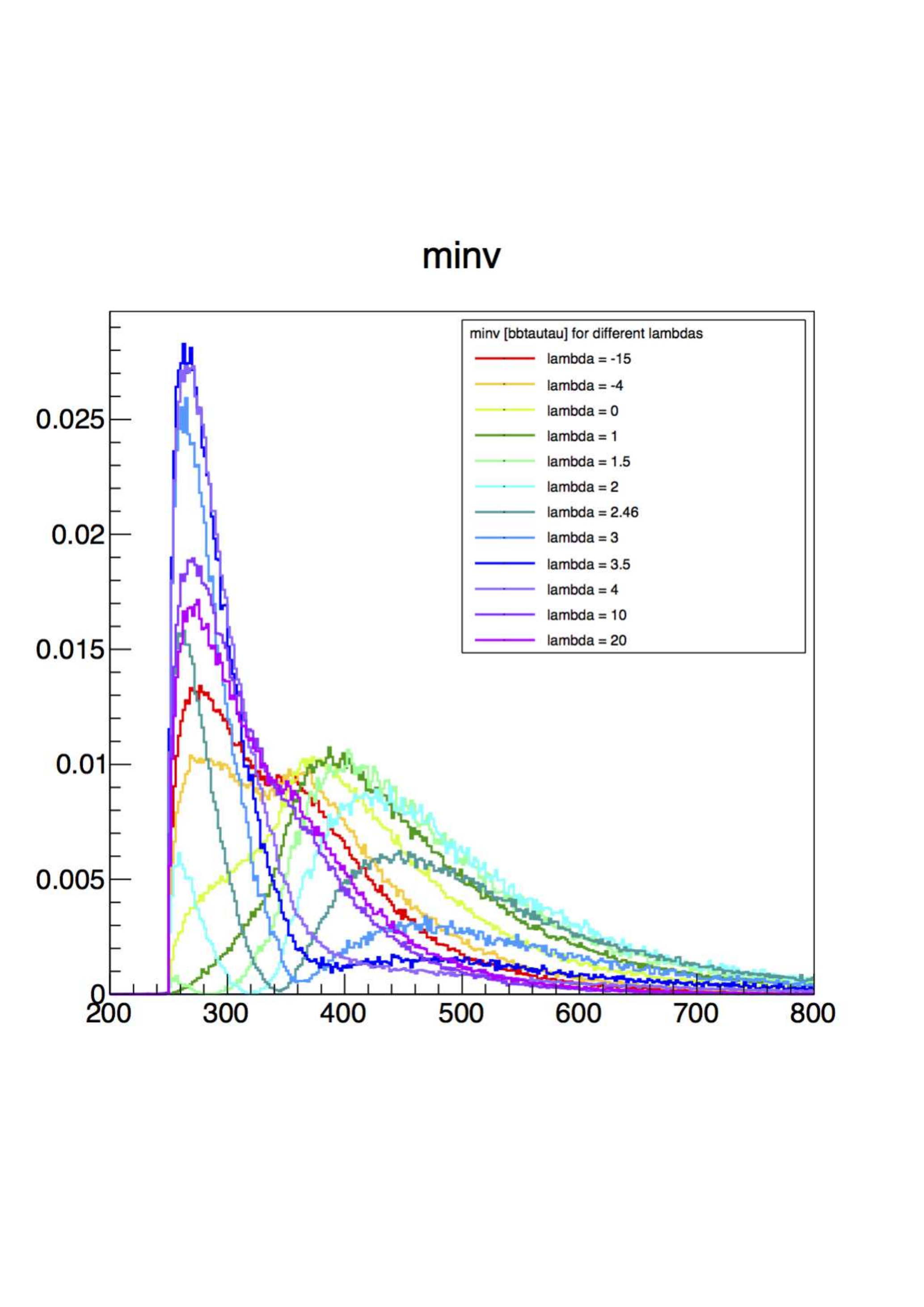}
	\includegraphics[scale=0.37]{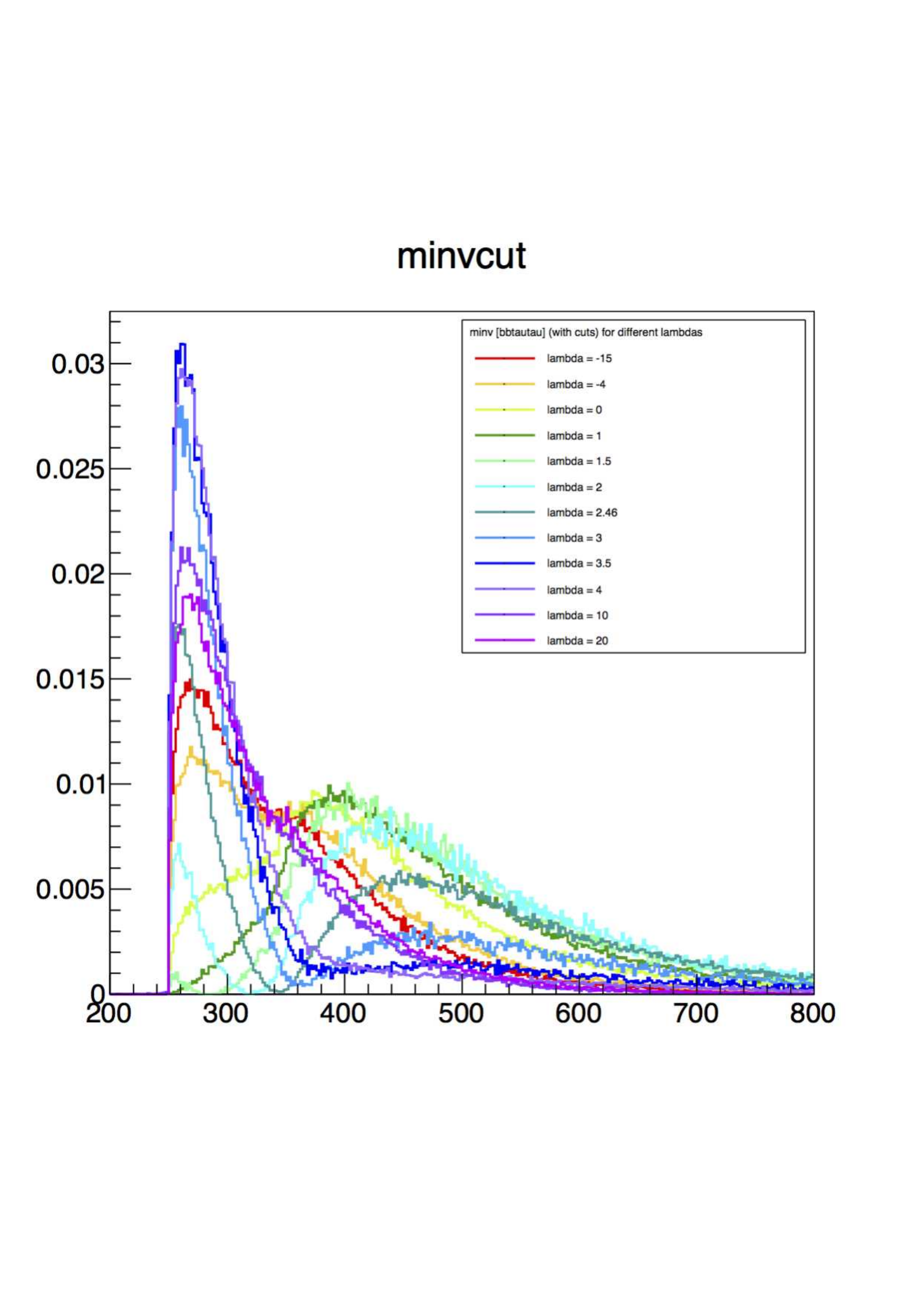}
	\caption{Superposition des distributions en impulsion transverse du système di-tau et en masse invariante du système entier, pour l'ensemble des valeurs de $\lambda$. Ces courbes donnent une meilleure intuition au sujet du comportement de ces invariantes cinétiques pour des valeurs de $\frac{\lambda}{\lambda^{SM}}$ proches de l'unité.}
\end{figure}
	Afin de mieux percevoir ce qui se passe au voisinage de $\frac{\lambda}{\lambda^{SM}}=1$, les fichiers correspondant aux valeurs $\frac{\lambda}{\lambda^{SM}}=0.5,\ 1.25$ ont été générés et la superposition des histogrammes obtenus pour les valeurs $\lambda=0,\ 0.5,\ 1,\ 1.25,\ 1.5\ et\ 2$ est donnée ci-dessous. Le fichier correspondant à $\frac{\lambda}{\lambda^{SM}}=0.75$ a en réalité été produit pour $\frac{\lambda}{\lambda^{SM}}=0.5$ d'où la superposition exacte entre les deux courbes (aux fluctuations statistiques près). C'est également pour cette raison que nous n'y prêtons pas attention. Les courbes correspondant au rapport $\frac{\lambda}{\lambda^{SM}}=0.5$ est donc bien présente sur les diagrammes, derrière celle intitulée $\lambda=0.75$ même si en réalité il s'agit bien de $\lambda=0.5$. Par ailleurs, malgré ce qui est indiqué dans la légende, les histogrammes pour $\lambda=0$ ne sont pas sur les diagrammes suivants.
\begin{figure}[!h]
	\centering
	\includegraphics[scale=0.37]{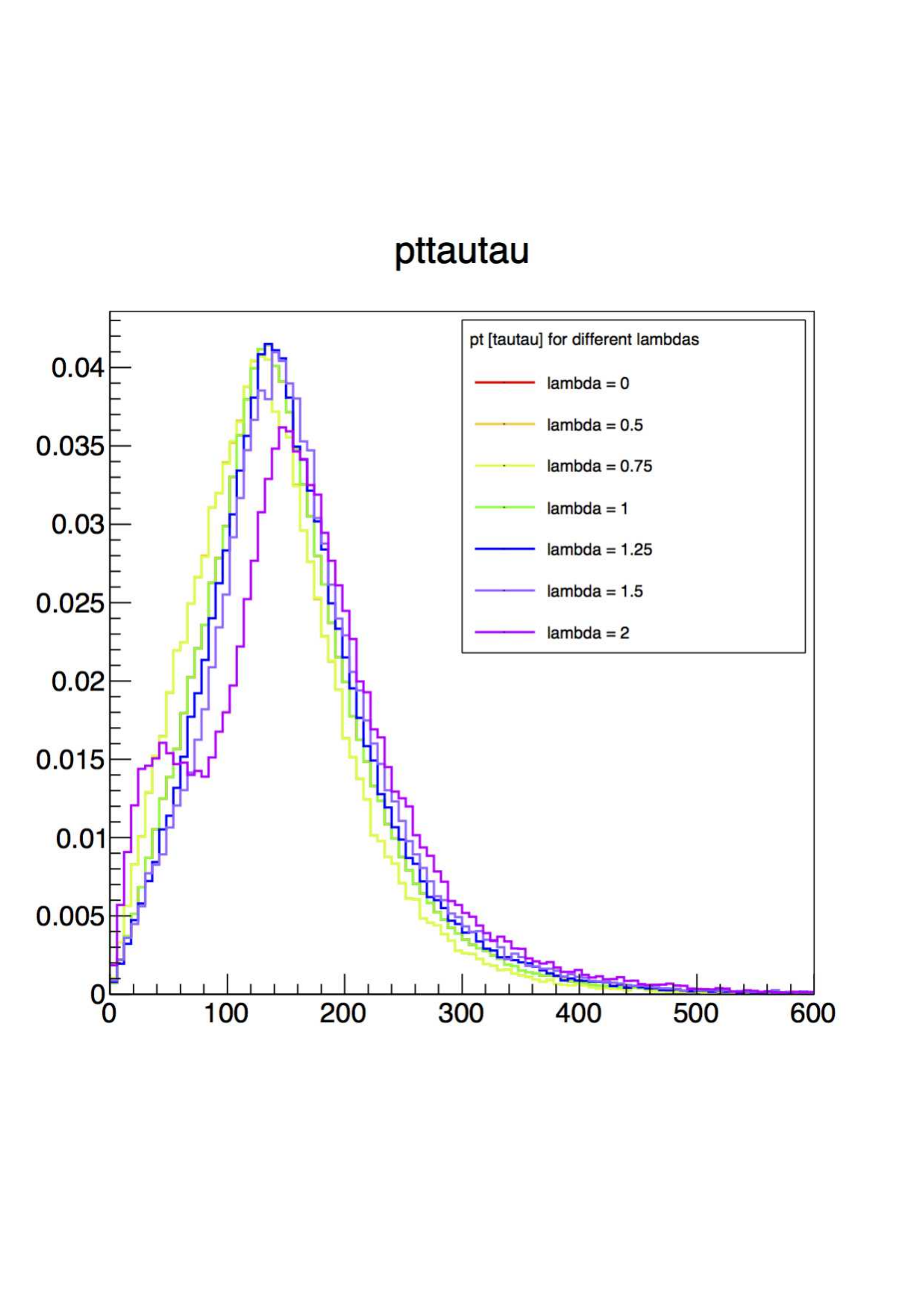}
	\includegraphics[scale=0.37]{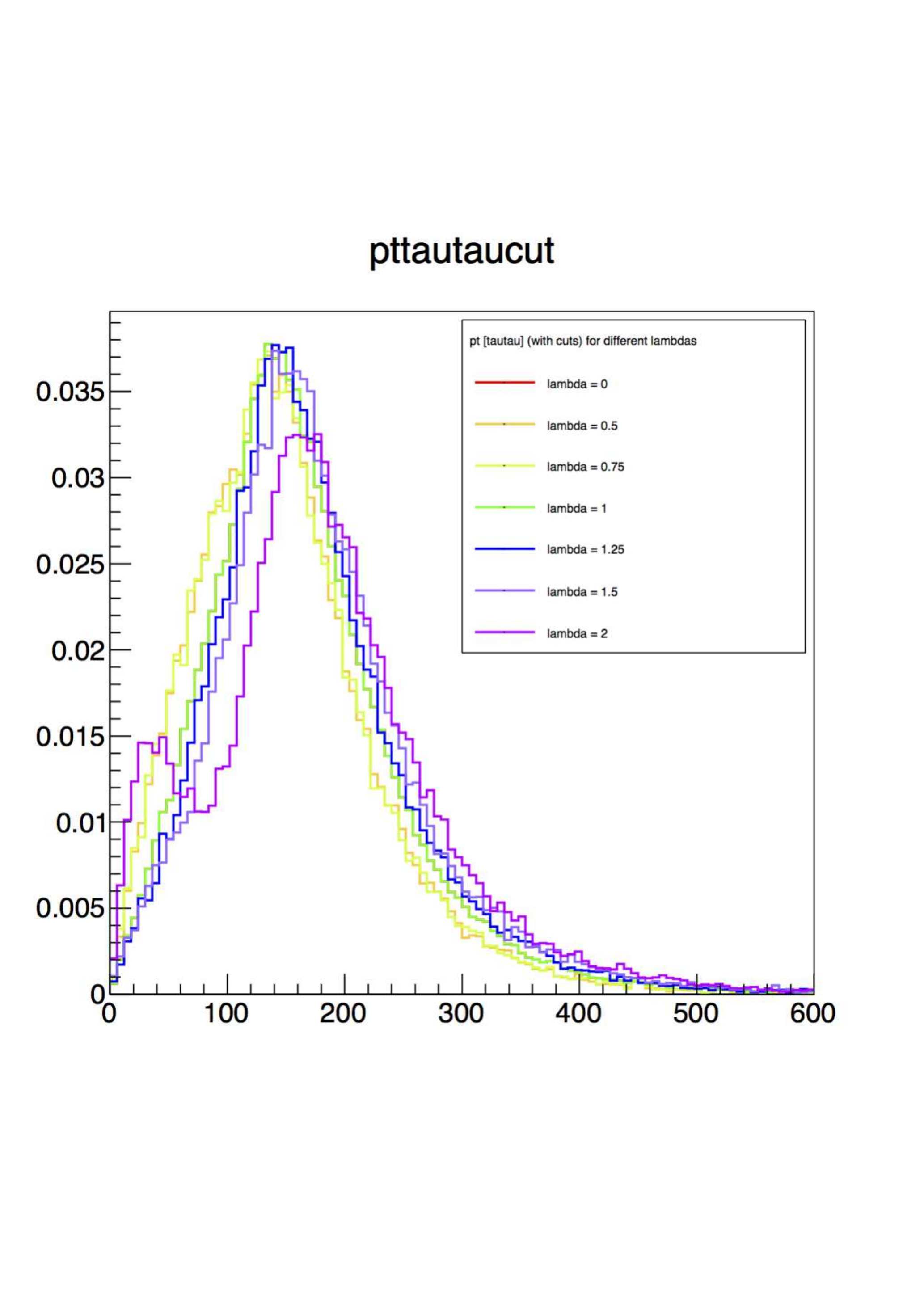}
	\includegraphics[scale=0.37]{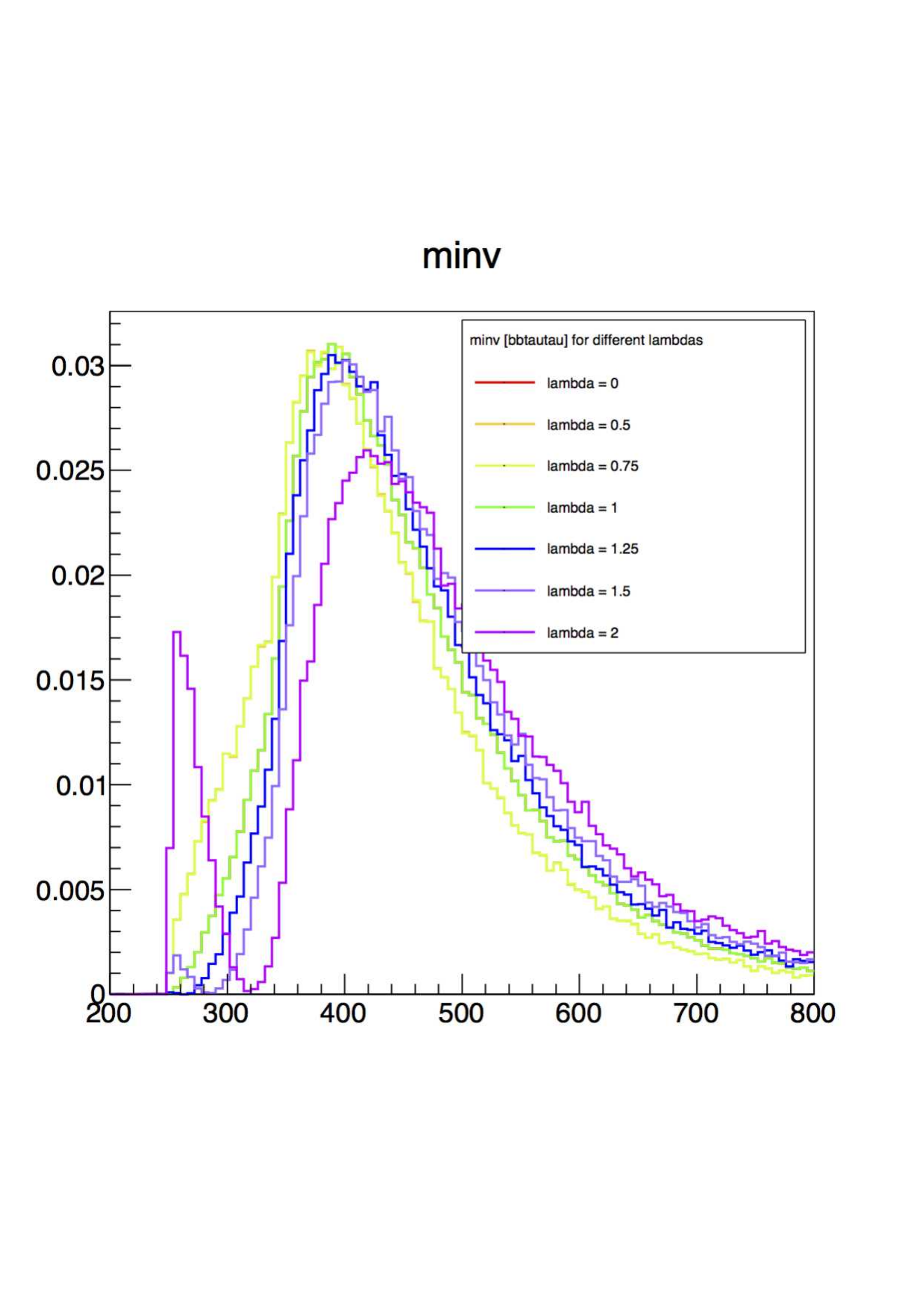}
	\includegraphics[scale=0.37]{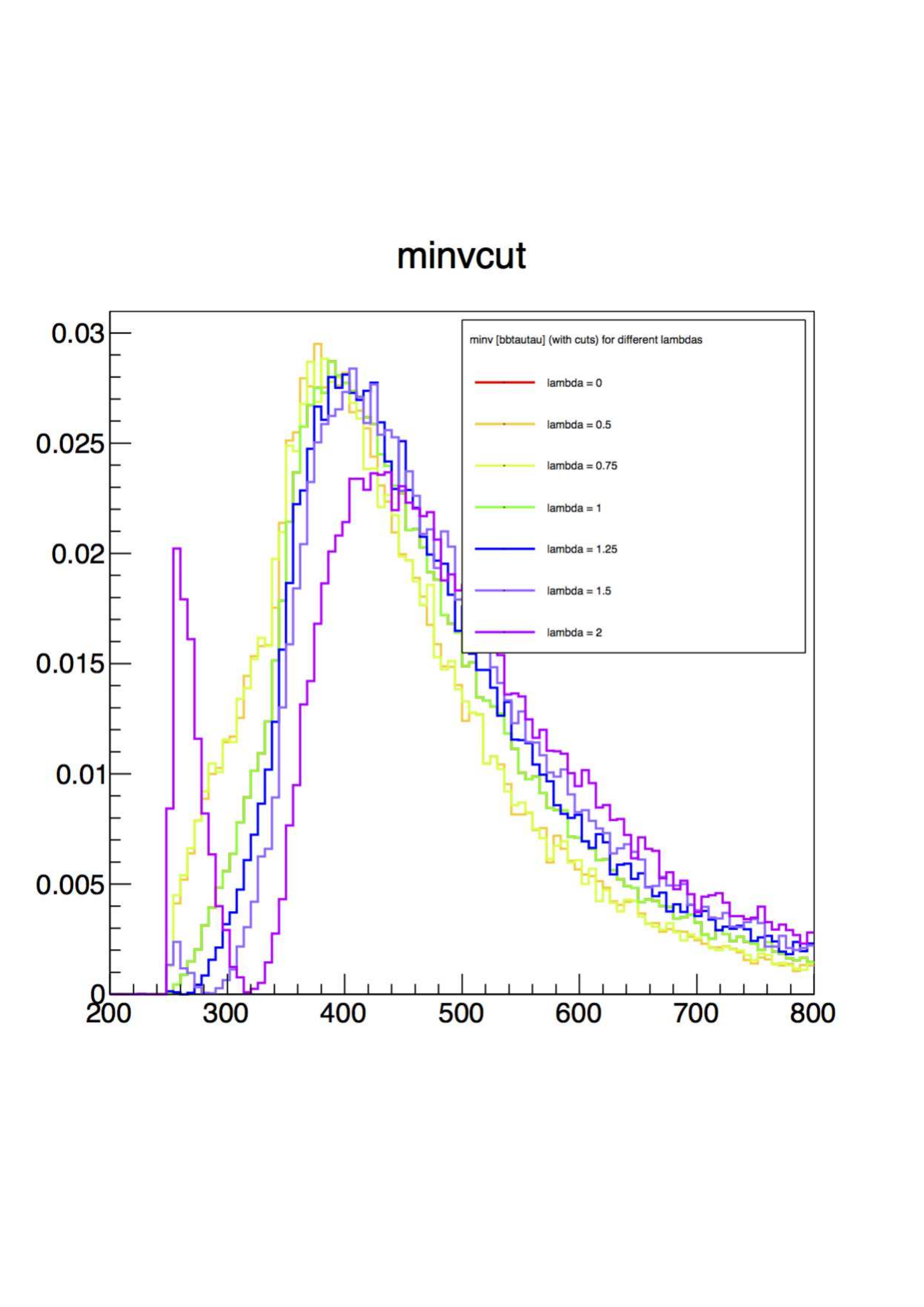}
	\caption{Superposition des distributions en impulsion transverse du système di-tau et en masse invariante du système entier, pour les valeurs de $\lambda$ les plus proches de celle du SM. Grâce à ces tracés, on voit plus distinctement ce que fait varier une petite modification de $\lambda$ au voisinage de $\lambda^{SM}$.}
\end{figure}
Il n'a pas été possible de réaliser, dans le temps imparti, les "fit" de ces courbes. La section précédente a cependant montré que les fonctions utilisées précédemment ne convenaient pas pour décrire les histogrammes dans cette région de $\lambda$ (certains paramètres divergent). Il faut donc essayer de trouver de nouvelles distributions qui décrivent mieux les variations observées.

\paragraph{Paramétrisation de pttautau} Pour les distributions de l'impulsion transverse du système di-tau, le courbe peut être décomposée en deux parties, une perturbation d'un bout de droite, pour $p_T\in[0,150]$ environ, et une décroissance typiquement exponentielle. La perturbation du segment reliant $(0,0)$ à $(150,0.041)$ peut être paramétrée par un sinus de période environ 300, si bien que l'équation de la première partie de courbe est : 
$$y(p_T)=\frac{0.041}{150}p_T+a(\lambda)*sin(\frac{\pi p_T}{150})$$
avec le paramètre $a$ variant typiquement dans $[-0.002,0.002]$, avec $a(0.5)\approx0.002$ et $a(1.5)\approx-0.002$. Dans le cadre de notre approximation linéaire, on considèrera pour l'homotopie l'application $$a:\lambda\rightarrow 0.004(1-\lambda)$$
Pour la deuxième partie de courbe, on prend une portion de gaussienne centrée en $150$ et de variance proportionnelle à $p$, donc : $$p_T\rightarrow 0.041e^{-(p_T-150)^2/(10000p)}$$ ce qui permet de choisir p de l'ordre de l'unité.\\\\
Il faut ensuite raccorder les deux portions, mais ces formes sont susceptibles de bien décrire les variations des distributions en impulsion transverse du système di-tau au voisinage de $\lambda^{SM}$.

\paragraph{Paramétrisation de la masse invariante du système}
Les distributions de masse invariante du système total présentent les variations suivantes au voisinage de $\lambda^{SM}$ : de minv=250 GeV à minv=350 GeV environ, on observe une diminution rapide lorsque $\frac{\lambda}{\lambda^{SM}}$ augmente. De plus, le maximum de la courbe se déplace légèrement vers les grandes masses invariantes, et la décroissance exponentielle est d'autant plus rapide que le rapport des $\lambda$ est petit.
On peut imaginer, comme pour les distributions en impulsion transverse, paramétrer les variations au voisinage de $\lambda^{SM}$ en rajoutant un sinus de période 200 environ, et en jouant sur le facteur multiplicatif pour la décroissance exponentielle.

\subsection{Discussion}
Cette dernière étude a permis d'obtenir des formes analytiques approchées pour les distributions intéressantes pour la mesure du couplage $\lambda$ du champ BEH. Bien qu'il ne s'agisse sans doute pas de la véritable expression de ces distributions, cela permet au moins de se faire une idée de l'influence du couplage $\lambda$ sur les différents invariants cinétiques de la production di-Higgs au sein de collisionneurs protons-protons comme le LHC. Même si les caractéristiques actuelles du LHC ne permettent pas de telles mesures, de grands espoirs sont portés dans la phase de haute luminosité HL-LHC (Run 3). Le paramètre $\lambda$ est la dernière information manquante au sujet de Higgs du modèle standard, et sa mesure un test crucial pour le modèle standard ainsi que ses extensions. Compte tenu des informations expérimentales actuelles (le Higgs est très proche de celui prévu par le modèle standard), l'étude prend tout son sens dans les extensions du modèle standard comme le kappa-framework et surtout, les théories des champs effectives. Les écarts par rapport au modèle standard sont peut être trop faible pour qu'ils puissent être mesurés au HL-LHC, ce sera alors le rôle de futurs collisionneurs circulaires, fonctionnant à une centaine de TeV dans le centre de masse, d'étudier ces déviations.

\end{document}